\providecommand{\U}[1]{\protect\rule{.1in}{.1in}}
\newcommand{\UU}{\mathrm U}
\newsavebox{\mysavebox}
\numberwithin{equation}{section}
\newcommand{\R}{\mathbb R}
\newcommand{\RP}{\mathbb{RP}}
\newcommand{\CP}{\mathbb{CP}}
\newcommand{\HP}{\mathbb{HP}}
\newcommand{\ba}{\begin{eqnarray}}
\newcommand{\ea}{\end{eqnarray}}
\newcommand{\cA}{\mathcal{A}}
\newcommand{\cE}{\mathcal{E}}
\newcommand{\be}{\begin{equation}}
\newcommand{\ee}{\end{equation}}
\tikzstyle{startstop} = [rectangle, rounded corners, minimum width=3cm, minimum height=1cm,text centered, draw=black, fill=blue!10]
\tikzstyle{startstop} = [rectangle, rounded corners, minimum width=3cm, minimum height=1cm,text centered, draw=black, fill=blue!10]
\tikzstyle{io} = [trapezium, trapezium left angle=70, trapezium right angle=110, minimum width=3cm, minimum height=1cm, text centered, draw=black, fill=blue!30]
\tikzstyle{process} = [rectangle, minimum width=3cm, minimum height=1cm, text centered, draw=black, fill=orange!30]
\tikzstyle{decision} = [diamond, minimum width=3cm, minimum height=1cm, text centered, draw=black, fill=green!30]
\tikzstyle{arrow} = [thick,->,>=stealth]
\tikzset{->-/.style={decoration={
  markings,
  mark=at position #1 with {\arrow[scale=2.4]{>}}},postaction={decorate}}}
\makeatletter \@addtoreset{equation}{section} \makeatother
\newcommand{\Spin}{\mathrm{Spin}}
\newcommand{\Sp}{\mathrm{Sp}}
\newcommand{\SSO}{\mathrm{SO}}
\newcommand{\SL}{\mathrm{SL}}
\newcommand{\Mp}{\mathrm{Mp}}
\newcommand{\GL}{\mathrm{GL}}
\newcommand{\Z}{\mathbb Z}
\newcommand{\MTSpin}{\mathit{MTSpin}}
\newcommand{\inj}{\hookrightarrow}
\newcommand{\surj}{\twoheadrightarrow}
\newcommand{\MTSO}{\mathit{MTSO}}
\newcommand{\pt}{\mathrm{pt}}
\newcommand{\term}{\emph}
\newcommand{\BP}{\mathit{BP}}
\newcommand{\ko}{\mathit{ko}}
\newcommand{\ku}{\mathit{ku}}
\newcommand{\Sq}{\mathrm{Sq}}
\newcommand{\abs}[1]{\left\lvert #1 \right\rvert}
\newcommand{\ang}[1]{\left\langle #1 \right\rangle}
\newcommand{\set}[1]{\left\{ #1 \right\}}
\newcommand{\bl}{\text{--}}
\newcommand{\Q}{\mathbb Q}
\newcommand{\C}{\mathbb C}
\newcommand{\Sph}{\mathbb S}
\newcommand{\Ext}{\mathrm{Ext}}
\newcommand{\Hom}{\mathrm{Hom}}
\newcommand{\Det}{\mathrm{Det}}
\newcommand{\tH}{\widetilde H}
\newcommand{\KO}{\mathit{KO}}
\newcommand{\KU}{\mathit{KU}}
\newcommand{\paren}[1]{\left( #1\right)}
\newcommand{\tOmega}{\widetilde\Omega}
\newcommand{\id}{\mathrm{id}}
\newcommand{\halfQseven}{W^7_{1}} 
\newcommand{\halfBott}{W_{1,8}} 
\newcommand{\ninedimgen}{W^9_{1}} 
\newcommand{\orangeseven}{W^7_{2}} 
\newcommand{\tikzmarkinside}[2]{\tikz[overlay,remember picture,baseline=-0.5ex] \node (#1) at (#2) {};}
\newcommand{\Pin}{\mathrm{Pin}}
\renewcommand{\O}{\mathrm O}
\DeclareDocumentCommand{\shortexact}{s O{} O{} mmmm}{
\IfBooleanTF{#1}{
\begin{tikzcd}[ampersand replacement=\&]
	1 \& {#4}
	\&  {#5}
	\& {#6}
	\& 1#7
	\arrow[from=1-1, to=1-2]
	\arrow["#2", from=1-2, to=1-3]
	\arrow["#3", from=1-3, to=1-4]
	\arrow[from=1-4, to=1-5]
\end{tikzcd}
}{ 
\begin{tikzcd}[ampersand replacement=\&]
	0 \& {#4}
	\&  {#5}
	\& {#6}
	\& 0#7
	\arrow[from=1-1, to=1-2]
	\arrow["#2", from=1-2, to=1-3]
	\arrow["#3", from=1-3, to=1-4]
	\arrow[from=1-4, to=1-5]
\end{tikzcd}
}}
\newcommand{\AdamsTower}[1]{\DoUntilOutOfBounds{
	\class[#1](\lastx, \lasty+1)
	\structline[#1]
}}
\DeclareRobustCommand*{\RaiseBoxByDepth}{%
    \raisebox{\depth}%
}
\newcommand{\uQ}{\RaiseBoxByDepth{\protect\rotatebox{180}{$Q$}}}
\newtheorem{thm}[equation]{Theorem}
\newtheorem{lem}[equation]{Lemma}
\newtheorem{prop}[equation]{Proposition}
\newtheorem{cor}[equation]{Corollary}
\theoremstyle{definition}
\newtheorem{defn}[equation]{Definition}
\newtheorem{exm}[equation]{Example}
\theoremstyle{remark}
\newtheorem{rem}[equation]{Remark}
\crefname{thm}{Theorem}{Theorems}
\crefname{lem}{Lemma}{Lemmas}
\crefname{prop}{Proposition}{Propositions}
\crefname{defn}{Definition}{Definitions}
\crefname{exm}{Example}{Examples}
\crefname{rem}{Remark}{Remarks}
\crefname{cor}{Corollary}{Corollaries}
\begin{document}

\preprint{LMU-ASC 06/23\\ IFT-UAM/CSIC-23-7}

\date{February 2023}

\title{The Chronicles of IIBordia: \\[4mm] Dualities, Bordisms, and the Swampland}

\institution{PURDUE}{\centerline{$^1$Department of Mathematics, Purdue University, 150 N University St, West Lafayette, IN 47907, USA}}
\institution{MUNICH}{\centerline{$^2$Arnold Sommerfeld Center for Theoretical Physics, LMU, Munich, 80333, Germany}}
\institution{PENN}{\centerline{$^3$Department of Physics and Astronomy, University of Pennsylvania, Philadelphia, PA 19104, USA}}
\institution{PENNMATH}{\centerline{$^4$Department of Mathematics, University of Pennsylvania, Philadelphia, PA 19104, USA}}
\institution{HARVARD}{\centerline{$^5$Jefferson Physical Laboratory, Harvard University, Cambridge, MA 02138, USA}}
\institution{MADRID}{\centerline{$^6$Instituto de F\'{i}sica Te\'{o}rica IFT-UAM/CSIC, C/ Nicol\'{a}s Cabrera 13-15, 28049 Madrid, Spain}}

\authors{Arun Debray\worksat{\PURDUE}\footnote{e-mail: {\tt adebray@purdue.edu}},
Markus Dierigl\worksat{\MUNICH}\footnote{e-mail: {\tt m.dierigl@physik.uni-muenchen.de}},\\[4mm]
Jonathan J.\ Heckman\worksat{\PENN,\PENNMATH}\footnote{e-mail: {\tt jheckman@sas.upenn.edu}}, and
Miguel Montero\worksat{\HARVARD,}\worksat{\MADRID}\footnote{e-mail: {\tt miguel.montero@uam.es}}}

\longabstract{In this work we investigate the Swampland Cobordism Conjecture in the context of type IIB string
theory geometries with non-trivial duality bundle. Quite remarkably, we find that many non-trivial bordism classes
with duality bundles in $\Mp (2,\Z)$, a double cover of $\SL (2,\Z)$ related to fermions, correspond to asymptotic
boundaries of well-known supersymmetric F-theory backgrounds. These include $[p,q]$-7-branes, non-Higgsable
clusters, S-folds, as well as various lower-dimensional generalizations. These string theoretic objects break
the global symmetries associated to the non-trivial bordism groups, providing a strong test of the Cobordism Conjecture. Further including worldsheet orientation reversal promotes the duality group to the Pin$^+$ cover of $\GL(2,\Z)$. The corresponding bordism groups require a new non-supersymmetric ``reflection 7-brane'' and its compactifications to ensure the absence of global symmetries, thus providing an interesting prediction of the Cobordism Conjecture for non-supersymmetric type IIB backgrounds.

A major component of the present work is the explicit derivation of the involved bordism groups as well as their generators, which correspond to asymptotic boundaries of explicit string theory backgrounds. The main tool is the Adams spectral sequence, to which we provide a detailed introduction. We anticipate that the same techniques can be applied in a wide variety of settings.}

\maketitle

\thispagestyle{empty}
\begin{figure}[H]
\vspace*{-2cm}
\makebox[\linewidth]{\includegraphics[width = 1.15 \linewidth]{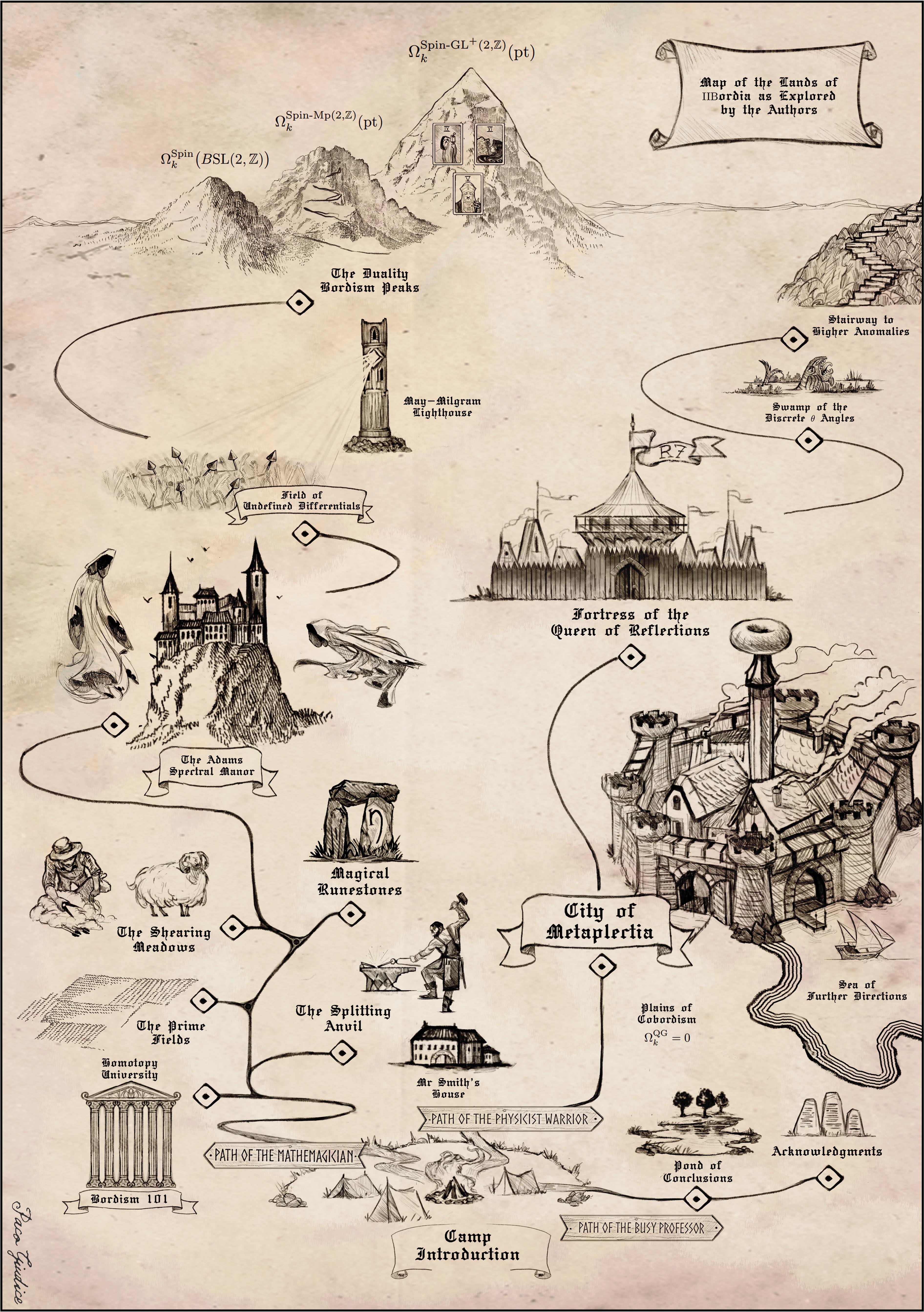}}
\end{figure}

\setcounter{tocdepth}{2}

\enlargethispage{\baselineskip}

\newpage
\tableofcontents

\newpage

\part{Introduction and summary}

\section{Introduction} \label{sec:intro}

One of the big challenges in the study of quantum gravity is that while various long distance / weak coupling limits are by
now well understood, the development of a UV complete microscopic formulation still remains elusive.
In the absence of a full theory of quantum gravity, it is natural to develop constraints from various complementary ``bottom up'' methods,
including the general principles of dualities and symmetries.

Broadly speaking, the Swampland program \cite{Vafa:2005ui, Ooguri:2006in}
(see \cite{Palti:2019pca, vanBeest:2021lhn, Agmon:2022thq} for reviews)
is an ambitious ``bottom up'' approach aimed at addressing precisely these issues.
This has led to an interconnected web of conjectural results, and has
also provided a new perspective on many previously considered string theory backgrounds.
With this in mind, it is natural to ask whether Swampland considerations can provide additional
insight into the structure of non-perturbative dualities. Conversely, we can use well-known string theoretic dualities to
test various Swampland conjectures.

The interplay between symmetries and quantum gravity has a long history, including the ``folk theorem''
that there are no global symmetries (see also \cite{Banks:1988yz, Banks:2010zn, Harlow:2018tng}).
In the context of the Swampland conjectures, this was significantly sharpened
as the Cobordism Conjecture (CC) of McNamara and Vafa \cite{McNamara:2019rup}
which asserts that the bordism group of quantum gravity is trivial:
\begin{align}
\Omega^{\text{QG}}_k = 0 \,.
\end{align}
In physical terms, bordism theory
arises because we are considering field configurations in the gravitational path integral which tend towards asymptotic boundary
conditions.\footnote{An important point to stress here is that a priori, one does not even need to specify a smooth interior geometry.
All that is required is that we have a notion of
asymptotically defined field configurations, as in the S-matrix of asymptotic flat space, or the AdS/CFT correspondence.}
The statement of the Cobordism Conjecture then is that in quantum gravity one can \textit{always} connect two asymptotic field configurations.
Of course, the precise definition of $\Omega^{\text{QG}}_k = 0$ would in some sense require one to first specify a definition of quantum gravity
and all its long distance limits. That being said, the elements of $\Omega^{\text{QG}}_k$ should be viewed as
the equivalence classes of $k$-dimensional compactifications of quantum gravity
that can be connected to each other via a domain wall. The statement above implies that the
Landscape of all quantum gravities is in fact connected via these domain walls.
The reason is that if the bordism group $\Omega^{\text{QG}}_k$ does not vanish,
the non-trivial bordism groups would act as conserved global charges.

Now, given a putative symmetry $\mathcal{G}$ of our theory,
we can equip our spacetime with a $\mathcal{G}$-bundle\footnote{Here, $\mathcal{G}$ in general contains information about internal symmetries, the tangential structure of the manifold, as well as their mixtures to twisted tangential structures.}, and in practice,
it often happens that $\Omega_{k}^{\mathcal{G}} \neq 0$. In other words, there are field configurations in the
gravitational path integral which cannot be topologically deformed into one another.
In such situations the CC makes a sharp prediction: In order for $\Omega^{\text{QG}}_k$ to be trivial, there must
exist new defects (i.e., some source for the symmetry $\mathcal{G}$) which supplement the spectrum of objects in the theory.
At this point, the conjecture has passed a number of non-trivial checks (see e.g.,
\cite{McNamara:2019rup, Ooguri:2020sua, Montero:2020icj, Dierigl:2020lai, Buratti:2021yia, Buratti:2021fiv, Blumenhagen:2021nmi, Blumenhagen:2022mqw, Andriot:2022mri, Angius:2022aeq, Blumenhagen:2022bvh, Angius:2022mgh, Velazquez:2022eco, McNamara:2022lrw,Dierigl:2022reg}),
providing a ``bottom up'' approach to the construction of
defects such as $p$-branes and particular gravitational instantons (see e.g., \cite{McNamara:2021cuo}).
In all known physical applications, the relevant bordism groups is simply a finite sum of free and torsional pieces:
\begin{equation}
\Omega_{k}^{\mathcal{G}} \simeq \Z^{n} \oplus \underset{i}{\bigoplus} (\Z / m_i \Z)^{n_i},
\end{equation}
in the obvious notation.\footnote{Compared with much of the physics literature, we have opted to write $\Z / m \Z$ to denote
the cyclic group with $m$ elements. This is to avoid any confusions with other notation
used in the text, including $\Z_{p}$ for the p-adic integers, for example.}

Now, bordism groups specify a generalized homology theory and as such are intrinsically Abelian objects. It is natural
to ask whether these considerations are compatible with the existence of non-Abelian duality groups, such as those which frequently arise in
string theory and various quantum field theories. In particular, reference \cite{Dierigl:2020lai} showed that the Cobordism Conjecture, when combined
with the famous $\SL(2,\Z)$ duality group of type IIB strings / F-theory successfully predicts the existence of $[p,q]$ 7-branes, one of the
core ingredients in non-perturbative F-theory vacua. In more technical terms, these are associated with $\Omega^{\text{QG}}_{1}$, namely
asymptotic one-dimensional boundary geometries.

Given this, we can ask whether the CC correctly predicts the known spectrum of objects of IIB / F-theory,
and conversely, whether we can use the CC to \textit{predict} new non-perturbative objects which cannot be accessed via other
(usually supersymmetric methods). The aim of the present paper will be to produce
a systematic classification of possible objects compatible with the
basic duality of type IIB / F-theory. The main tool we use to accomplish this is the calculation
of the corresponding bordism groups associated to twisted Spin structures involving the duality bundle.

Of course, this begs the question as to the precise notion of ``duality bundle'' which we will need to consider in this paper. To begin,
even specifying the duality group of IIB turns out to be remarkably subtle.
It is well-known that the type IIB supergravity action is invariant under an $\SL(2,\R)$ group, and taking into account quantization
of fluxes, this is reduced to $\SL(2,\Z)$, which is also the group of large diffeomorphisms of a torus, the starting point
for the geometric formulation of type IIB strings via F-theory \cite{Vafa:1996xn, Morrison:1996na, Morrison:1996pp}.
As noted in \cite{Pantev:2016nze,Tachikawa:2018njr}, the duality group must also be extended to an appropriate action
on the fermions of IIB supergravity, and this leads to $\Mp(2,\Z)$,
the metaplectic cover of $\SL(2,\Z)$.\footnote{It is analogous to the extension of SO$(d)$ to Spin$(d)$.}
Even this is not quite the full duality group, because one must also allow for the IIB worldsheet
orientation reversal and $(-1)^{F_L}$ left-moving spacetime fermion parity transformations.
Geometrically, these act as reflections on the F-theory torus. Indeed, for M- / F-theory duality to hold,
one must allow for such reflections, since M-theory implicitly makes sense on Pin$^+$ backgrounds.
Taking this into account, it follows (see \cite{Tachikawa:2018njr}) that the IIB duality group is most accurately
specified as the Pin$^+$ cover of $\GL(2,\Z)$. See Figure \ref{Fig:DualityGroups} for a depiction.
\begin{figure}
\centering
\includegraphics[width = 0.8 \textwidth]{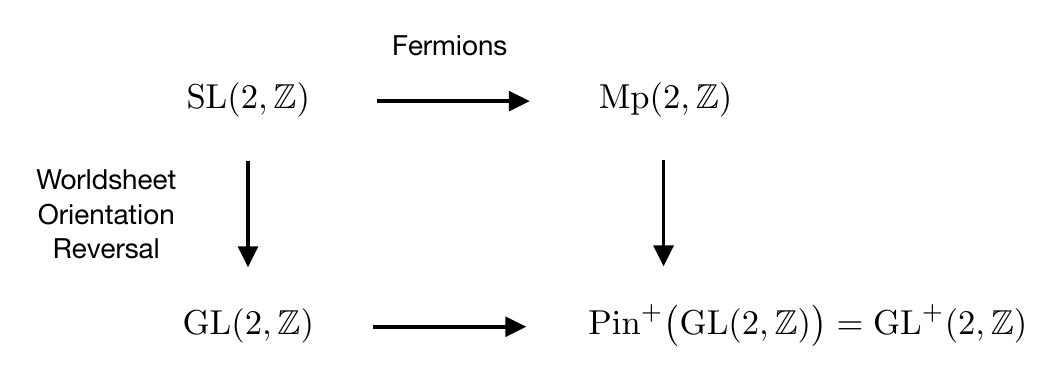}
\caption{Various extensions of the duality group of type IIB string theory.}
\label{Fig:DualityGroups}
\end{figure}

In physical terms, the duality group $\SL(2,\Z)$ tells us about the bosonic / non-chiral sector of IIB strings / F-theory. Taking into account
fermions we arrive at $\Mp(2,\Z)$. For supersymmetric F-theory backgrounds specified by genus-one fibered Calabi-Yau spaces,
the appearance of fermions can always be viewed as ``implicit,'' and so one expects the relevant objects predicted by the CC for $\Mp(2,\Z)$
to typically be supersymmetric. On the other hand, $\GL^{+}(2,\Z)$ includes reflections which, among other things can send a D3-brane to an anti-D3-brane.
As such, IIB field configurations which include such symmetries provides us access to \textit{non-supersymmetric} objects. Said differently,
we can use $\Mp(2,\Z)$-duality bordisms to test the CC, and $\GL^{+}(2,\Z)$-duality bordisms to make predictions for new, possibly
non-supersymmetric objects!

Given all of this, there are clearly three different notions of ``duality bundle''-bordisms
which will be of interest to us in the present work. First of all,
we can consider the bordism groups associated with SL$(2,\Z)$ dualities, and in which
the interpolating geometries also have a Spin structure. Formally these are captured by the bordism groups:
\begin{align}
\Omega^{\Spin}_k \big( B\SL(2,\Z) \big) \,,
\label{eq:approx1}
\end{align}
(see also \cite{Seiberg:2018ntt}), where the ``$B$'' refers to the classifying space of $\SL(2,\Z)$.

Accounting for the appearance of fermions, this naturally enlarges to $\Mp(2,\Z)$ duality transformations.
Here, an important subtlety is that ---as is well-known in many F-theory constructions--- there is no need for the
spacetime to retain a Spin structure. Rather, we can instead have a twisted Spin structure which is correlated
with duality group transformations,\footnote{This not an issue for $\SL(2,\Z)$ transformations since it is only
sensitive to the bosonic sector of the theory.} namely we consider manifolds with
$\big(\Spin(k) \times \Mp(2,\Z)\big) / (\Z / 2 \Z)$ structure. In this case, the relevant bordism groups
are given by:
\begin{align}
\Omega^{\text{Spin-Mp}(2,\Z)}_k (\pt) \,,
\label{eq:approx2}
\end{align}
see also \cite{Hsieh:2019iba, Hsieh:2020jpj}. Since $\Mp(2,\Z)$ is in some sense the ``minimal'' extension of $\SL(2,\Z)$ which
accommodates supersymmetry, it is natural to expect that for each such bordism generator, there is a corresponding
supersymmetric F-theory background which can be viewed as filling in the bulk of these bordism generators. We indeed find
that this is the case, namely at least for $k$ odd (so that the bulk space is even-dimensional), each generator of
$\Omega^{\text{Spin-Mp}(2,\Z)}_k (\pt)$ matches to a genus-one fibered Calabi-Yau threefold $X \rightarrow \mathcal{B}$,
where $\partial \mathcal{B}$ can be viewed as specify a generator of $\Omega^{\text{Spin-Mp}(2,\Z)}_k (\pt)$ with prescribed duality bundle.

The outcome of this analysis is that at least for Spin-$\Mp(2,\Z)$ bordisms,
the main building blocks of many F-theory compactifications, including $[p,q]$ 7-branes,
non-Higgsable clusters, S-folds, as well as lower-dimensional variants are all successfully predicted by the CC.
This amounts to an extremely non-trivial test of the CC, and it is remarkable that such minimal topological inputs
reconstruct these intricate structures.

Turning finally to the full IIB duality group $\GL^{+}(2,\Z)$, we also determine the corresponding
Spin-$\GL^{+}(2,\Z)$ bordism groups:
\begin{align}
\Omega^{\text{Spin-GL}^{+}(2,\Z)}_k (\pt) \,.
\label{eq:approx3}
\end{align}
In this case, the appearance of reflections means that we should not expect
supersymmetry to be retained for these generators, and in general we do not expect
there to be any genus-one fibered Calabi-Yau space which fills in the interior of such boundary geometries.
In this case, the CC amounts to a set of \textit{predictions} for objects which need not preserve supersymmetry.
Quite remarkably, we find that the generators are, in this case, primarily inherited from Spin-$\Mp(2,\Z)$
(namely they are supersymmetric objects), but that there is one additional non-supersymmetric ``reflection 7-brane'' \cite{Dierigl:2022reg},
as detected by $\Omega^{\text{Spin-GL}^{+}(2,\Z)}_1 (\pt)$. Starting from these ingredients,
we find that nearly all of the other generators are in some sense inherited from further compactification of these higher-dimensional
objects.\footnote{There are a few outlier cases at low dimensions which do not appear to be related in an obvious way to any of these other
objects, but this is the exception rather than the rule.}

In a ten-dimensional spacetime, the CC tells us about the spectrum of defects associated with $\Omega_{k}^{\mathcal{G}}$
for $1 \leq k \leq 9$. The bordism groups for $k = 10$ and $k = 11$ also specify important
physical data. For $k = 10$, $\Omega_{10}^{\mathcal{G}}$ detects parameters such as discrete theta angles,
and for $k = 11$, $\Omega_{11}^{\mathcal{G}}$ classifies candidate contributions to anomalies of the ten-dimensional theory. A detailed analysis of
type IIB duality anomalies was presented in \cite{Debray:2021vob}, where it was shown that a subtle duality anomaly in IIB backgrounds can
be cancelled by adding an additional topological term to the standard type IIB action.

The main computational tool we use to determine these bordism groups is the Adams spectral sequence.
Another aim of this work will be to provide an introduction to this tool. The Adams spectral sequence
has an (undeserved) reputation for being somewhat unwieldy, but in the case at hand it turns out to be rather tractable, once a number of intermediate
simplifications are implemented. In particular, one of the key ideas undergirding our results is the fact
that the non-Abelian duality groups of interest can be decomposed into corresponding amalgamated products,
namely they can be decomposed into some simpler generators, each of which has a
physical interpretation.\footnote{For example, 7-branes with a fixed value of the axio-dilaton.}
The procedure of simplifying the bundle structure is accomplished through
a procedure known as ``shearing'', and with these
elements in place, the calculation of the Adams spectral sequence vastly simplifies. Finding explicit bordism generators
is also somewhat challenging, but in many cases we can verify that we have found a primitive generator by determining
a suitable combination of $\eta$-invariants of the manifold (which in the cases of interest is a bordism invariant).

While we primarily emphasize the application of these bordism groups in the study of IIB strings / F-theory and the Swampland
program, it is clear that these results have broader applications both within string theory and quantum field theory, as well as
in pure math. In the context of string compactification / quantum gravity, one can reinterpret the different $\Omega_{k}^{\mathcal{G}}$'s
as classifying anomalies of a $k-1$ dimensional gravitational theory. Similarly, in the context of quantum field theory, these same
$\Omega_{k}^{\mathcal{G}}$'s specify mixed gravitational-duality anomalies, a topic which has recently been studied for example in
\cite{Seiberg:2018ntt, Hsieh:2019iba, Cordova:2019jnf, Cordova:2019uob}. In the context of pure math, many of the calculations we present are new,
and can be viewed as the starting point for developing a more systematic treatment of $\Omega_{k}^{\mathcal{G}}$ for
all $k$ (not just ``small values'').

This paper is huge because we provide all details behind the calculations.
It is our hope that this reference can be efficiently used by others to learn
the techniques behind the Adams spectral sequence and the calculation of bordism groups in general.

To make the paper manageable, we have organized it into three main parts, as well as a few Appendices:
\begin{itemize}

\item In the remainder of this Introductory part, we first give a brief introduction to bordism theory, since it
forms a core component of all of what follows. We then give a brief summary of the results and their physical
interpretation. We then turn to our conclusions and future directions. We also include a map to
the contents of the rest of the paper.

\item  In Part \ref{p:physics},  we give a physical interpretation of
each of the generators of $\Omega_{k}^{\mathcal{G}}$; we defer the derivation
of these groups to Part \ref{p:maths}. The CC asserts that there is a corresponding
spacetime defect associated with each such generator, and this leads us to the
(re)discovery of various interesting backgrounds of type IIB string theory.

\item Part \ref{p:maths} introduces the necessary mathematical tools to calculate
the various bordism groups discussed above. One of the main tools in this computation
is the Adams spectral sequence which is introduced in great detail. These techniques
are consecutively used in order to derive the bordism groups
$\Omega_{k}^{\mathrm{Spin}} \big(B\SL(2,\Z) \big)$, $\Omega_{k}^{\Spin\text{-}\Mp(2,\Z)}(\pt)$ and
$\Omega_{k}^{\Spin\text{-}\GL^{+}(2,\Z)} (\pt)$ for $k \leq 11$. Moreover, we describe the generating manifolds
for each of the non-trivial group factors. Wherever possible we also give a physical interpretation
of these mathematical manipulations.

\item In the Appendices we give some additional technical details on the structure of the duality groups,
$\eta$-invariants, and the May-Milgram theorem.
Additionally, we also comment on the use of the Smith homomorphism
which connects different duality bundle structures separated by
different dimensionality; the physical interpretation of this
can be viewed as duality defects inherited from a higher-dimensional bulk.
\end{itemize}

\section{Bordism: What is it?}
\label{sec:what}

We now proceed to a brief introduction to bordism theory. This forms the core mathematical
structure which we use in what follows. Bordism is an equivalence relation defined on the
set of $k$-dimensional manifolds equipped with a given tangential structure,
e.g., Spin structure, orientation, or similar notions (see e.g., \cite{stong2015notes} for a mathematical introduction and \cite{Garcia-Etxebarria:2018ajm,McNamara:2019rup} for a description in a physical context). Two $k$-dimensional manifolds $X_1$ and $X_2$ are in the same equivalence class,
i.e., define the same element in $\Omega_k^{\mathcal{G}}$ if $X_1 \sqcup (- X_2)$ arises as the boundary of a
$(k+1)$-dimensional manifold over which the structure extends, see Figure \ref{fig:BordST}. Here, the minus sign indicates
an orientation reversal for $X_2$.
\begin{figure}[t]
\centering
\includegraphics[width = 0.8 \textwidth]{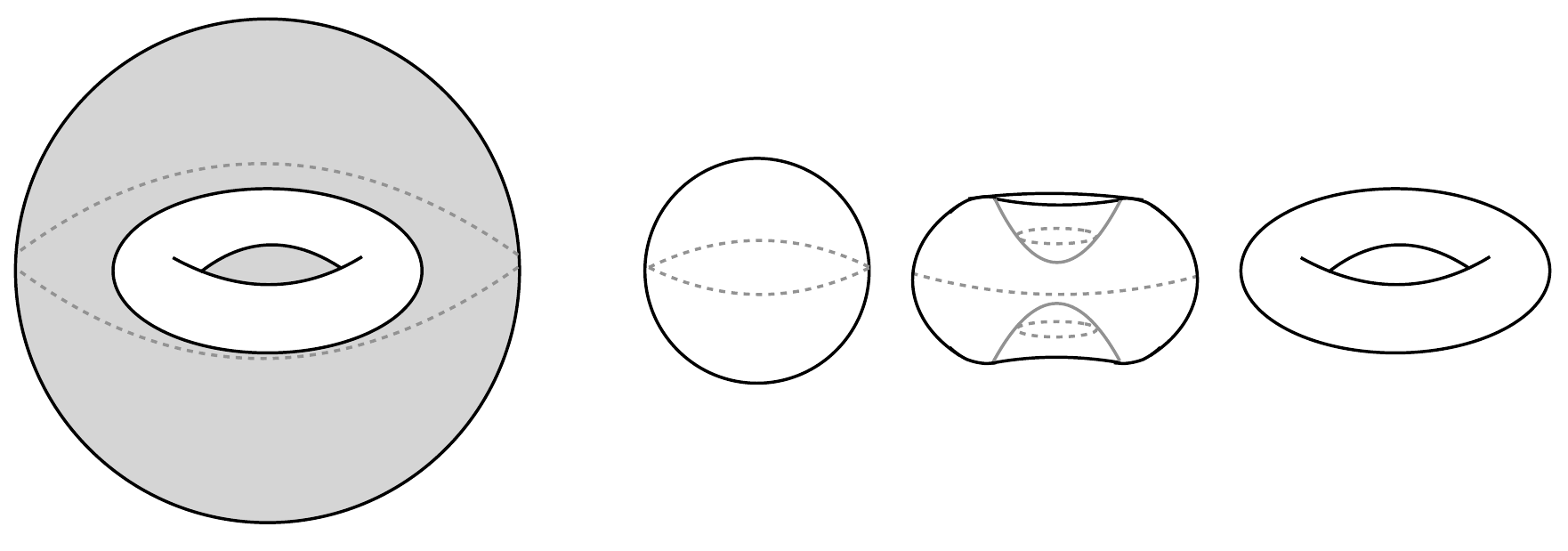}
\caption{Oriented bordism between a 2-sphere and a 2-torus (can be understood as a $3$-ball with an interior torus cut out).}
\label{fig:BordST}
\end{figure}

The disjoint union of manifolds (or equivalently, the connected sum since the two are bordant) naturally defines an Abelian additive operation turning $\Omega_k^{\mathcal{G}}$ into an Abelian group. Elements on the trivial class correspond to those manifolds that
are the boundary of a manifold in one higher dimension, see Figure \ref{fig:Bord2} for examples.
If some bordism class generates a finite subgroup $\Z/n\Z$, this implies that while
a single copy of a manifold representing a generating bordism class is not a boundary,
the disjoint union of $n$ copies of that manifold is, see Figure \ref{fig:Bord2}.
\begin{figure}
\centering
\includegraphics[width = 0.5 \textwidth]{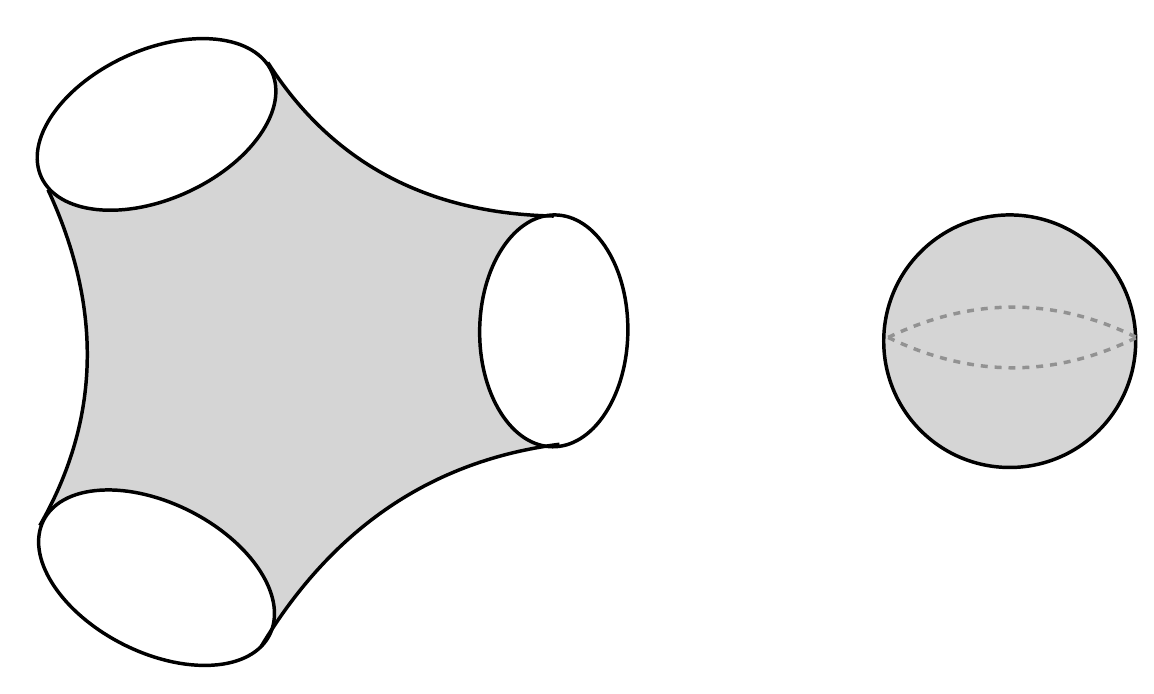}
\caption{Left: Depiction of a bordism group factor $\Z/3\Z$, where three copies of a generator bound a manifold. Right: Null-bordism of the 2-sphere, i.e., a 3-ball.}
\label{fig:Bord2}
\end{figure}

If $\mathcal{G}$ in addition to tangential structures $\xi$ also includes information about an internal symmetry $G$ in the theory, the defining data of the associated bordism group contains a map from the spacetime manifold $X$ into the classifying space of the symmetry group $BG$ specifying the associated gauge field backgrounds. The continuous deformation of the spacetime manifold with a given structure is then complemented with a continuous deformation of this classifying map and the associated bordism group is denoted as
\begin{align}
\Omega^{\xi}_k (BG) \,.
\end{align}
In general the tangential structure, e.g., the Spin structure, can mix with the internal symmetry group, defined by quotients of the form
\begin{align}
\frac{\Spin(k) \times G}{\Z / 2 \Z}
\end{align}
leading to twisted Spin structures, which we denote as
\begin{align}
\Omega^{\text{Spin-}G}_k (\pt) \,,
\end{align}
which do appear in our analysis once one includes fermions.

Importantly, bordism groups can be understood as a generalized homology theory \cite{atiyah_1961}.
In practice this means that the bordism groups are described as homotopy groups of a smash product (a homotopical
analogue of a tensor product) of the manifold $X$ under investigation with a spectrum $E$:
\begin{align}
h_{*} (X) = \pi_* \big(X \wedge E \big) \,.
\end{align}
For Spin bordism groups $\Omega^{\Spin}_k (\pt)$ the relevant spectrum is the Thom spectrum denoted by $\MTSpin$,
which in the presence of internal symmetries or twisted tangential structure needs to be generalized accordingly.
In order to derive various bordism groups one hence needs a good understanding of $\MTSpin$ and its generalizations.
For that it is often useful to decompose the spectrum into easier parts, all of which contain portions of the complete
information. The main simplification is that we use {\it $p$-local equivalences}
of $\MTSpin$ with other spectra, which probes the $p$-torsion part of the Spin bordism group.
In the present analysis we are faced with the analysis of bordism groups of the form
\begin{align}
\Omega^{\SSO}_k (B\Z/3\Z) \,, \quad \Omega_k^{\SSO}(BD_6)\,,\quad
\Omega^\Spin(B\Z/4\Z)\,,\quad
\Omega^{\text{Spin-}\Z/8\Z}_k (\pt) \,, \quad \Omega^{\text{Spin-}D_{16}}_k (\pt) \,,
\end{align}
where $D_{16}$ denotes the dihedral group with 16 elements.\footnote{Here, we really refer to the dihedral group and
not the binary dihedral group. It contains an element associated to reflections $R$ of order two and an element associated
to rotations $U$ of order $n$. They obey $R U R^{-1} = U^{-1}$. In particular, the reflection does not appear as a
certain power of the rotation such as for the binary dihedral case. See also Appendix \ref{app:groups}.}
In several cases we can extract the needed information from isomorphisms to other generalized (co)homology theories,
such as $BP$-homology and $ku$-homology groups. In many other cases, however, we will need to derive the results
using spectral sequence techniques, most prominently the Adams spectral sequence. Especially for large $k$ this
becomes increasingly complicated and we will present the details of these calculations explicitly in Part \ref{p:maths}.

Once the bordism groups are derived, we are faced with another challenge;
the determination of the associated generators, i.e., a set of $k$-manifolds
with the given structure that serve as representatives of all the bordism
classes we have found. To find these manifolds, we are aided by \emph{bordism invariants}:
Group homomorphisms from the bordism group to $\Z$ or $\Z/n\Z$,
which can help with the identification of generators. For instance, $\Omega_4^{\Spin} (\pt) =\Z$
has a bordism invariant given by integrating the first Pontrjagin class over $M_4$, a given four-manifold
\begin{align}
\int_{M_4} p_1 \,,
\end{align}
which is often referred to as the first Pontrjagin number. This means that we can take as a
generator of $\Omega_4^{\Spin} (\pt)$ any Spin 4-manifold with minimal Pontrjagin number. Many bordism invariants are like the Pontrjagin number, integrals of characteristic classes of the tangent bundle (as well as the gauge bundle, if present). However, this does not apply to all of them; some are detected by more exotic quantities, such as $\eta$-invariants of Dirac operators (see e.g.\cite{Witten:2015aba,Garcia-Etxebarria:2018ajm}), and it is not always easy to identify the representative. Thus, the detection as well as the construction of the relevant manifolds is an art more than a science,  but we will also describe it in great detail. It is noteworthy that certain types of manifolds naturally appear as generators of Spin-$D_{16}$ bordisms which can be described as fiber bundles of real projective spaces or lens spaces over a base real projective base. One certain class of these manifolds already appeared in \cite{Debray:2021vob}
and was referred to as `Arcanum XI' and will appear here in certain lower-dimensional variants
that we also refer to as Arcana.\footnote{Following the book of Thoth (see the summary in \url{https://en.wikipedia.org/wiki/Major_Arcana}),
the relevant arcana for us will be:
Arcanum V (The Hierophant),
Arcanum IX (The Hermit),
Arcanum XI (Lust).}

\section{Summary of our results}
\label{sec:summary}

In this Section we briefly summarize the computational results of Part \ref{p:maths} concerning the bordism groups
\begin{align}
\Omega^{\Spin}_k \big(B \SL (2,\Z) \big) \,, \quad \Omega^{\text{Spin-Mp}(2,\Z)}_k (\pt), \quad \Omega^{\text{Spin-GL}^+(2,\Z)}_k (\pt) \,.
\end{align}
For $1 \leq k \leq 9$, the CC predicts a corresponding defect which ``fills in the bulk'' associated with
the corresponding bounding manifold equipped with an appropriate duality bundle. In the case of $k = 10$, the bordism groups are associated
with discrete theta angles, and in the case of $k = 11$ these provide a coarse classification of possible contributions
to duality anomalies. Our primary focus will be on Spin-$\Mp(2,\Z)$ and Spin-$\GL^{+}(2,\Z)$
bordisms since these are the cases where we also account for the contributions from fermionic degrees of freedom.
Finally, for $1 \leq k \leq 9$ we give a physical interpretation of defects that allow these non-trivial bordism
backgrounds to be bounded.

The bordism groups are given in Table \ref{the_big_table}.
\renewcommand{\arraystretch}{1.5}
\begin{table}[h!]
\begin{tabular}{c c c c}
\toprule
$k$ & $\Omega^{\Spin}_k \big(B\SL(2,\Z)\big)$ & $\Omega^{\Spin\text{-}\Mp(2,\Z)}_k (\pt)$ &
	$\Omega_k^{\Spin\text{-}\GL^+(2, \Z)} (\pt)$ \\
	& Section \ref{sl2_spin} & Section \ref{mp_spin} & Section \ref{gl_spin} \\
\midrule
$0$ & {\footnotesize $\Z$} & {\footnotesize $\Z$} & {\footnotesize $\Z$} \\ 
$1$ & {\footnotesize $(\Z/2\Z) \oplus (\Z/4\Z) \oplus (\Z/3\Z)$} & {\footnotesize $(\Z/8\Z) \oplus (\Z/3\Z)$} & {\footnotesize $(\Z/2\Z)^{\oplus 2}$} \\ 
$2$ & {\footnotesize $(\Z/2\Z)^{\oplus 2}$} & {\footnotesize $0$} & {\footnotesize $\Z/2\Z$} \\ 
$3$ & {\footnotesize $(\Z/2\Z) \oplus (\Z/8\Z) \oplus (\Z/3\Z)$} & {\footnotesize $(\Z/2\Z) \oplus (\Z/3\Z)$} & {\footnotesize $(\Z/2\Z)^{\oplus 3} \oplus (\Z/3\Z)$} \\ 
$4$ & {\footnotesize $\Z$} & {\footnotesize $\Z$} & {\footnotesize $\Z$} \\ 
$5$ & {\footnotesize $(\Z/4\Z) \oplus (\Z/9\Z)$} & {\footnotesize $(\Z/2\Z) \oplus (\Z/{32}\Z) \oplus (\Z/9\Z)$} & {\footnotesize $(\Z/2\Z)^{\oplus 2}$} \\ 
$6$ & {\footnotesize $0$} & {\footnotesize $0$} & {\footnotesize $0$}\\ 
$7$ & {\footnotesize $(\Z/2\Z) \oplus (\Z/{32}\Z) \oplus (\Z/9\Z)$} & {\footnotesize $(\Z/4\Z) \oplus (\Z/9\Z)$} & {\footnotesize $ (\Z/2\Z)^{\oplus 3} \oplus (\Z/4\Z) \oplus (\Z/9\Z)$} \\ 
$8$ & {\footnotesize $\Z \oplus \Z$} & {\footnotesize $\Z \oplus \Z$} & {\footnotesize $\Z\oplus\Z\oplus (\Z/2\Z)$} \\ 
$9$ & {\footnotesize $(\Z/2\Z)^{\oplus 3} \oplus (\Z/4\Z) \oplus (\Z/8\Z)$} & {\footnotesize $(\Z/4\Z) \oplus (\Z/8\Z) \oplus (\Z/128\Z)$} & {\footnotesize $(\Z/2\Z)^{\oplus 8}$} \\ 
 & {\footnotesize $\oplus (\Z/3\Z) \oplus (\Z/27\Z)$ } & {\footnotesize $\oplus (\Z/3\Z) \oplus (\Z/27\Z)$ } & \\ 	
$10$ & {\footnotesize $(\Z/2\Z)^{\oplus 5}$} & {\footnotesize $\Z/2\Z$} & {\footnotesize $(\Z/2\Z)^{\oplus 4}$} \\ 
$11$ & {\footnotesize $(\Z/2\Z)^{\oplus 2} \oplus (\Z/8\Z)^{\oplus 2} \oplus (\Z/128\Z)$} & {\footnotesize $(\Z/2\Z)^{\oplus 2} \oplus (\Z/8\Z)$} & {\footnotesize $(\Z/2\Z)^{\oplus 9} \oplus (\Z/8\Z) $} \\ 
 & {\footnotesize $\oplus (\Z/3\Z) \oplus (\Z/27\Z)$ } & {\footnotesize $\oplus (\Z/3\Z) \oplus (\Z/27\Z)$ } & {\footnotesize $\oplus (\Z/3\Z) \oplus (\Z/27\Z)$ } \\
\bottomrule
\end{tabular}
\caption{Bordism groups of the three approximations to the structure of IIB quantum gravity.}
\label{the_big_table}
\end{table}
It is apparent that they are far from trivial and thus demand the existence of various
defects to ensure that $\Omega^{\text{QG}}_k$ vanishes. To find these defects, it is useful to
analyze representatives of the generators. For the Spin-$\Mp(2,\Z)$
classes they are given in Table \ref{Mp_gens_table}.
\begin{table}[h!]
\centering
\begin{tabular}{c c c}
\toprule
$k$ & $\Omega_k^{\Spin\text{-}\Mp(2, \Z)} (\pt)$ & Generators\\
\midrule
$0$ & {\footnotesize $\Z$} & {\footnotesize $ \pt_+$} \\
$1$ & {\footnotesize $(\Z/8\Z) \oplus (\Z/3\Z)$} & {\footnotesize $( L^1_4 \,, L^1_3)$}\\
$2$ & $0$\\
$3$ & {\footnotesize $(\Z/2\Z) \oplus (\Z/3\Z)$} & {\footnotesize $(L^3_4 \,, L^3_3)$}\\
$4$ & {\footnotesize $\Z$} & {\footnotesize $E$}\\
$5$ & {\footnotesize $(\Z/2\Z) \oplus (\Z/{32}\Z) \oplus (\Z/9\Z)$} & {\footnotesize $(\widetilde{L}^5_4 \,, L^5_4 \,, L^5_3)$}\\
$6$ & {\footnotesize $0$} \\
$7$ & {\footnotesize $(\Z/4\Z) \oplus (\Z/9\Z)$} & {\footnotesize $(Q^7_4 \,, L^7_3)$}\\
$8$ & {\footnotesize $\Z\oplus\Z$} & {\footnotesize $(B \,, \HP^2)$}\\
$9$ & {\footnotesize $(\Z/4\Z) \oplus (\Z/8\Z) \oplus (\Z/128\Z) \oplus (\Z/3\Z) \oplus (\Z/27\Z)$} & {\footnotesize $( \widetilde{L}^9_4 \,,  \HP^2 \times L^1_4 \,, L^9_4 \,,  \HP^2 \times L^1_3 \,, L^9_3 )$} \\
$10$ & {\footnotesize $\Z/2\Z$} & {\footnotesize $X_{10}$}\\
$11$ & {\footnotesize $(\Z/2\Z)^{\oplus 2} \oplus (\Z/8\Z) \oplus (\Z/3\Z) \oplus (\Z/27\Z)$} & {\footnotesize $( \HP^2\times L^3_4 \,, X_{10} \times L^1_4 \,, Q_4^{11} \,, \HP^2 \times L^3_3 \,, L^{11}_3)$}\\
\bottomrule
\end{tabular}
\caption{Generators of Spin-$\Mp(2, \Z)$ bordism groups. Generators are listed in the same order as the
decomposition of the group.}
\label{Mp_gens_table}
\end{table}
The generators are given by
\begin{itemize}
	\item{Lens spaces: $L^{2n-1}_k = S^{2n-1} / (\Z/k\Z)$ \\ They naturally come equipped with a $\Z/k\Z$ bundle by the fibration
\begin{align}
\begin{split}
\Z/k\Z \hookrightarrow S^{2n-1} & \\
\downarrow \hspace{0.75cm} & \\
L^{2n-1}_k &
\end{split}
\end{align}
The duality bundle can be characterized in terms of this bundle by embedding $\Z/k\Z$ into the duality group.
Moreover, some of the lens spaces allow for various choices for tangential structures,
such as Spin-$\Z/8\Z$ structure for the $L^5_4$. The different choices are indicated via a tilde.
When these different tangential structures cannot be deformed into each other we expect several generating
manifolds with a similar geometric structure. This can be verified explicitly by calculating associated bordism invariants,
such as $\eta$-invariants for Dirac and Rarita-Schwinger fields.
}
\item{Lens space bundles: $Q^{2n-1}_4$ \\The spaces denoted as $Q^{2n-1}_4$ (not to be confused with quaternionic lens spaces) are lens space bundles
in the sense that they are the total spaces of the fibrations
\begin{align}
\begin{split}
L^{2n-3}_4 \hookrightarrow Q^{2n-1}_4 & \\
\downarrow \hspace{0.7cm} & \\
\mathbb{CP}^1 \quad &
\end{split}
\end{align}
The non-trivial fibration structure is obtained by embedding the covering $(2n-3)$-sphere of $L^{2n-3}_4$ in the fiber of a sum of line bundles $\mathcal{L}_1 \oplus \dots \oplus \mathcal{L}_{n-1}$ over $\mathbb{CP}^1$. For our purposes it will be enough to consider the line bundles
\begin{align}
\mathcal{L}_1 \oplus \mathcal{L}_2 \oplus \dots \oplus \mathcal{L}_{n-1} = H^{\pm 2} \oplus \underline{\mathbb{C}} \oplus \dots \oplus \underline{\mathbb{C}} \,,
\end{align}
where $H \cong \mathcal{O}(1)$ denotes the hyperplane bundle of $\CP^1$ and $\underline{\mathbb{C}} \cong \mathcal{O} \cong \mathcal{O}(0)$ the trivial line bundle. Again, these spaces are equipped with a Spin-$\Z/8\Z$ structure inherited from the lens space and demand a non-trivial duality bundle.
}
\item{Enriques surface: $E$ \\ It can be obtained by a fixed-point free $\Z / 2 \Z$ action on a K3 surface.}
\item{Bott manifold: $B$ \\ An arbitrary class in $\Omega_8^{\Spin} (\pt)$
is specified by the value of the index of the Dirac operator and its signature.
We shall specify a ``Bott manifold'' as any Spin 8-manifold with Dirac index equal
to $1$ (see also \cite{FH21}).
An example is an 8-manifold with a Spin$(7)$ structure.\footnote{As a brief aside, one can often produce Spin$(7)$ spaces by taking a
$\Z / 2 \Z$ quotient of a suitable Calabi-Yau fourfold \cite{Joyce:1999nk},
and the Dirac index on the Calabi-Yau fourfold is $2$.} }
\item{Quaternionic projective space: $\HP^2$ \\ This 8-manifold is the analog
of the more familiar $\mathbb{CP}^2$, replacing the complex numbers $\mathbb{C}$
by the quaternions $\mathbb{H}$ in the construction. It has Dirac index 0, and signature 1.}

\item{Milnor surface: $X_{10}$ \\ This is any closed 10-dimensional Spin manifold having non-trivial
$w_4 w_6$, where $w_i$ denotes the $i$th Stiefel-Whitney class of the tangent bundle.
It also appears as one of the generators of $\Omega^{\Spin}_{10}(\pt)$. For an explicit description of a Milnor
surface, see~\cite[\S 3]{Mi65}.}
\end{itemize}
As we will see below, for the construction of many of the above backgrounds
it is useful to geometrize the duality bundle using an auxiliary torus. This directly leads us to
F-theory \cite{Vafa:1996xn, Morrison:1996na, Morrison:1996pp} (see \cite{Heckman:2010bq, Taylor:2011wt, Weigand:2018rez} for reviews).
In particular the Spin-$\Mp(2,\Z)$ manifolds in  odd dimensions described above can often be realized as the boundary $\partial{\mathcal{B}}$
of the base manifold of a genus-one fibration $X \rightarrow \mathcal{B}$,
whose total space is given by a local, singular Calabi-Yau manifold.
\begin{figure}
\centering
\includegraphics[width = 0.6 \textwidth]{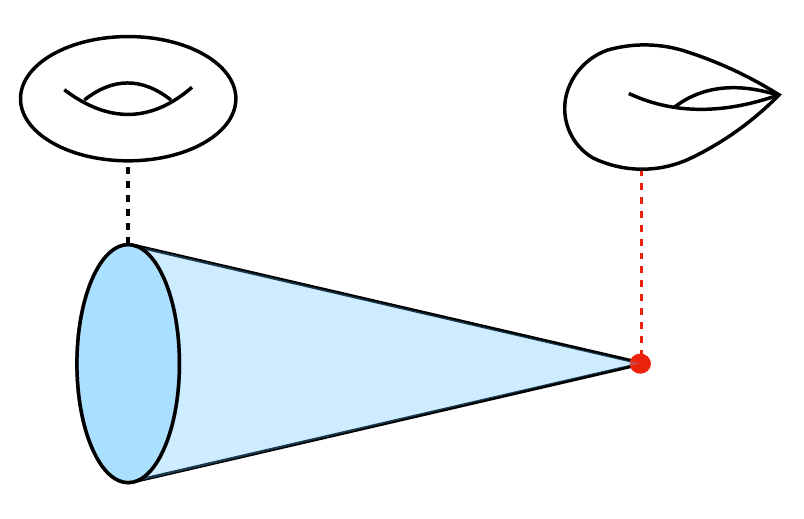}
\caption{The odd-dimensional Spin-$\Mp(2,\Z)$ manifolds often appear as the asymptotic boundary of the base manifold of a genus-one fibered, non-compact, singular Calabi-Yau manifold.}
\label{fig:conicalgeo}
\end{figure}
Thus, one naturally generates conical geometries with a central singularity indicating the presence
of various defects, see Figure \ref{fig:conicalgeo}. From this it is also apparent why we expect
these backgrounds to preserve at least part of the supersymmetry.

All of the manifolds discussed so far actually only contain duality bundles in $\Mp(2,\Z)$ $\subset$ $\GL^+(2,\Z)$, i.e.,  they correspond to bordism classes not involving reflections. Generators for those classes involving  $\GL^+(2,\Z)$ in their duality bundle are given in Table \ref{gl_gens_table}.
\begin{table}[h!]
\begin{tabular}{c c c}
\toprule
$k$ & $\Omega_k^{\Spin\text{-}\GL^+(2, \Z)} (\pt)$ & Generators\\
\midrule
$0$ & {\footnotesize $\Z$} & {\footnotesize $\pt_+$} \\
$1$ & {\footnotesize $(\Z/2\Z)^{\oplus 2}$} & {\footnotesize $(L^1_4 \,, S^1_R)$}\\
$2$ & {\footnotesize $\Z/2\Z$} & {\footnotesize $S^1_p \times S^1_R$} \\
$3$ & {\footnotesize $(\Z/2\Z)^{\oplus 3} \oplus (\Z/3\Z)$} & {\footnotesize $(L^3_4 \,, \RP^3 \,, \widetilde{\RP}^3 \,, L^3_3)$} \\
$4$ & {\footnotesize $\Z$} & {\footnotesize $E$}\\
$5$ & {\footnotesize $(\Z/2\Z)^{\oplus 2}$} & {\footnotesize $(L^5_4 \,, X_5)$}\\
$6$ & {\footnotesize $0$}\\
$7$ & {\footnotesize $(\Z/2\Z)^{\oplus 3} \oplus (\Z/4\Z)\oplus (\Z/9\Z)$} & {\footnotesize $(\RP^7 \,, \widetilde{\RP}^7 \,, \orangeseven  \,, \halfQseven \,, L^7_3)$}\\
$8$ & {\footnotesize $\Z\oplus \Z\oplus (\Z/2\Z)$} & {\footnotesize $(\halfBott \,, \HP^2 \,, \halfQseven \times S_p^1)$}\\
$9$ & {\footnotesize $(\Z/2\Z)^{\oplus 8}$} & {\footnotesize $(\halfQseven \times S_p^1 \times S_p^1 \,, X_9 \,,
	\widetilde X_9 \,, L^9_4 \,, W^9_1 \,,$} \\
& &	 {\footnotesize $\phantom{(} B\times L_4^1 \,, \HP^2 \times L^1_4 \,, \HP^2\times S_R^1 )$} \\
$10$ & {\footnotesize $(\Z/2\Z)^{\oplus 4}$} & {\footnotesize $(B\times L_4^1\times S_p^1 \,, W^9_1 \times S_p^1 \,, \HP^2 \times
	L_4^1 \times S_p^1\,, X_{10})$}\\
$11$ & {\footnotesize $(\Z/2\Z)^{\oplus 9} \oplus (\Z/8\Z) \oplus (\Z/3\Z) \oplus (\Z/27\Z)$} & {\footnotesize $(\RP^{11} \,, \widetilde{\RP}^{11} \,, X_{11} \,, \widetilde{X}_{11} \,, \HP^2 \times L^3_4 \,, \HP^2 \times \RP^3 \,, $} \\
& &	 {\footnotesize $\phantom{(} \HP^2\times \widetilde{\RP}^3 \,, X_{10} \times L_4^1 \,, X_{10}\times S_R^1 \,,  Q_4^{11} \,,   \HP^2 \times L^3_3 \,, L^{11}_3 )$} \\
\bottomrule
\end{tabular}
\caption{Generators of Spin-$\GL^+(2, \Z)$ bordism groups. Generators are listed in the same order as the
decomposition of the group, just as in \cref{Mp_gens_table}.}
\label{gl_gens_table}
\end{table}

In addition to the generators relevant for Spin-$\Mp(2,\Z)$ bordism, we also find:
\begin{itemize}
	\item{Circle with reflection bundle: $S^1_R$ \\ This describes a circle with a non-trivial duality
bundle implemented by a transition function with the reflection in $\GL^+ (2,\Z)$. The two classes
that we find at degree one tell us that there are two distinct compactifications of type IIB on a circle
with a duality bundle; these were recently described in \cite{Montero:2022vva}, and are related to each other by
turning on a discrete theta angle in the RR sector.}
	\item{Circle with periodic boundary conditions: $S^1_p$ \\
	Described by a circle without duality bundle but with periodic boundary conditions for fermions. This is the generator of $\Omega^{\Spin}_1 (\pt) = \Z/2\Z$.}
	\item{Real projective spaces: $\RP^{2n-1}$ and $\widetilde{\RP}^{2n-1}$ \\ These are the usual real projective space derived as $S^{2n-1} / (\Z/2\Z)$. They come equipped with natural $\Z/2\Z$ bundles via the fibration
\begin{align}
\begin{split}
\Z/2\Z \hookrightarrow S^{2n-1} & \\
\downarrow \hspace{0.8cm} & \\
\RP^{2n-1} \hspace{-0.2cm}&
\end{split}
\end{align}
For $\RP^{2n-1}$ the $\Z/2\Z$-bundle is associated to reflections in $\GL^+(2,\Z)$ whereas for
$\widetilde{\RP}^{2n-1}$ it is a combination of rotation and reflection. These are described by two different
embeddings of $\Z/2\Z$ into $\GL^+(2,\Z)$ as described below in Section \ref{Z2_to_D16} and Appendix
\ref{app:embdihedral}. A convenient way to implement this is an identification with the $\Z/2\Z$ action given by
worldsheet reflections $\hat{\Omega}$ in for $\RP^{2n-1}$ and an identification with the $\Z/2\Z$ action given by
worldsheet reflection in combination with an S-duality transformation $\hat{S}$ for $\widetilde{\RP}^{2n-1}$.}
	\item{Arcana: $X_i$ and $\widetilde{X}_i$ ($i \neq 10$) \\ The Arcana are spaces that can be defined as
	\begin{align}
	X_i = \big( S^{2k-1} \times S^{i + 1 - {2k}} \big) / D_4\,. 
	\end{align}
	Because $D_4\cong\Z/2\Z\times\Z/2\Z$, the $D_4$-action on $S^{i+1-2k}$ is equivalent to a pair of
	$\Z/2\Z$-actions, which make $X_i$ into a fiber bundle of a real projective space over another real projective space:
	\begin{align}
	\begin{split}
	\RP^{2k-1} \hookrightarrow X_i & \\
	\downarrow \hspace{0.1cm} & \\
	\RP & ^{i + 1 - 2k}
	\end{split}
	\end{align}
	Again, the non-trivial duality bundle described by the natural $D_4$-bundle of $X_i$ can be implemented by
	either an action of worldsheet reflections $\hat{\Omega}$ for $X_i$ or $\hat{S} \, \hat{\Omega}$ for
	$\widetilde{X}_i$ implemented by two different embeddings of $D_4$ into $D_{16}$, see Section \ref{arcana} and
	Appendix \ref{app:embdihedral}. For $i \in \{ 5, 9\}$ one has $k=2$ and for $i = 11$ one has $k = 3$.}
	\item{Lens space bundles over real projective space: $W^{2n-1}_1$ \\ These spaces can be obtained from the lens
	space bundles over spheres, including $Q^{2n-1}_4$ and the trivial bundle, by a further $\Z/2\Z$ quotient. In this way $W^7_1$ can be identified as ``half $Q^7_4$''. Since $\Z/2\Z$ acts as sign reversal in the base the resulting spaces are given by
	\begin{align}
	\begin{split}
	L^{k}_4 \hookrightarrow W^{n}_1 & \\
	\downarrow \hspace{0.35cm} & \\
	\RP^{m} \hspace{-0.1cm} &
	\end{split}
	\end{align}
	The associated duality bundles for these spaces contain both reflections as well as rotations. We use two
	examples: $W_1^7$, for which $k = 5$ and $m = 2$; and $W_1^9$, for which $k = 5$ and $m = 4$.}
	\item Prism space bundle over $(T^2\times S^2)/(\mathbb{Z}/2)$: $W_2^7$\\
In dimension 7, we have an additional generator, which is described as a bundle of the prism manifold $S^3/\Gamma$, where $\Gamma$ is the Dyciclic group of order 16:
\begin{align}
	\begin{split}
	S^3/\Gamma \hookrightarrow W^{7}_2 & \\
	\downarrow \hspace{0.35cm} & \\
	\frac{T^2\times S^2}{\mathbb{Z}/2} \hspace{-0.1cm} &
	\end{split}
	\end{align}
Here, the base is the quotient of $T^2\times S^2$ by the combined involution of the involution that leads to the Klein Bottle on $T^2$ and antipodal mapping on $S^2$. 	
The associated duality bundle also includes reflections, just as in the previous case.
	\item{Half-Bott manifold: $W_{1,8}$ \\ The half Bott manifold can be obtained by a $(\Z/2\Z) \oplus (\Z/2\Z)$ quotient of a complete intersection manifold in $\CP^7$.
We refer to this as a ``half-Bott'' manifold because quotienting by just one of the two $\Z / 2 \Z$'s
would produce a Bott manifold. The half-Bott is obtained after a further $\Z / 2 \Z$ quotient,
leading to a non-Spin manifold on which $\int [\hat{A}]_8$ (the class that measures the Dirac index on a Spin manifold)
is equal to $1/2$. Although the manifold is not Spin, it carries a Spin-${D_{16}}$ structure.}
\end{itemize}
For more details of the generators and their duality bundles see the explicit constructions in Sections \ref{mp_spin} and \ref{gl_spin}.

With all the bordism groups and their generators determined we can try to relate the specific backgrounds to
(extended) string theory objects. The CC
demands that all these classes are actually boundaries of some IIB backgrounds,
which necessarily involve singular geometries and duality bundles.
The generators can be viewed as the asymptotic geometry around the respective defects,
see Figure \ref{fig:exampleII} for a depiction.
\begin{figure}
\centering
\includegraphics[width = 0.5\textwidth]{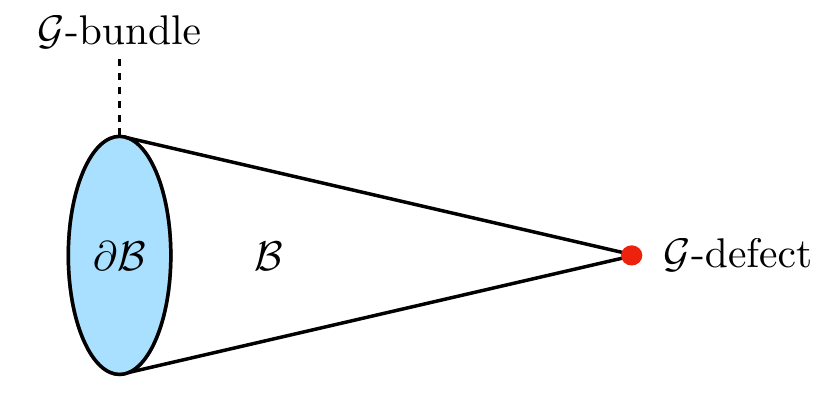}
\caption{Bulk / Boundary characterization of bordism generators and the corresponding
defects predicted by the CC. Viewing the bordism generator for manifolds
equipped with a structure defined by $\mathcal{G}$ as the boundary of some bulk space $\mathcal{B}$ equipped with a $\mathcal{G}$-bundle,
the CC predicts the existence of a $\mathcal{G}$-defect which allows the class to be trivialized in $\Omega^{\mathrm{QG}}_{k}$.}
\label{fig:exampleII}
\end{figure}
In the 10-dimensional string theory we therefore find that non-trivial bordism
classes $\Omega^{\mathcal{G}}_k$ are related to defects of codimension $(k+1)$.
Since type IIB string theory contains many objects of codimension
$2\ell$ with $\ell \in \{1, \dots, 4\}$ it is tempting to expect that some of these
objects are related to the defects predicted by the Cobordism Conjecture.
This is indeed correct as we will carefully analyze in Part \ref{p:physics}, with one potential caveat: while the backgrounds we construct are certainly consistent at the level of supergravity, it may be possible that type IIB string theory does not make sense on some of them. This will happen if, for instance, the worldvolume of a probe brane has an anomaly in the backgrounds, similarly to the situation in \cite{Witten:1996md}. In the language of \cite{McNamara:2019rup}, we would say that the bordism class is ``gauged'', and remove it from the discussion. We do not know if any of the classes we construct is ``gauged'' in this way, as a systematic analysis has not been carried out, unlike in the M-theory case \cite{FH21}. In this paper, we will assume that all the classes we construct uplift to IIB string theory backgrounds, and find the objects that would become their boundary if this is the case. It is worth noticing that in known examples, whenever a bordism class is gauged in the sense of \cite{McNamara:2019rup} (killing the class), it is often the case that a closely related ``cousin'' exists, having the same geometry but e.g., different flux, and which remains alive. An example is the class of $\RP^4$ in M-theory, which is itself inconsistent, but the inconsistency is removed if one includes a half-integer $G_4$ flux \cite{Witten:1996md}. So while the reader should keep in mind that we have not checked all anomalies of probe branes in the backgrounds we discuss, we believe these are unlikely to do much more than force the introduction of some fluxes.

For Spin-$\Mp(2,\Z)$ classes and odd $k \leq 9$, we find the identifications
in Table \ref{tab:Mpdefects}. Many of the Spin-$\Mp(2,\Z)$ configurations
can be associated with supersymmetric backgrounds for F-theory containing interesting objects,
such as non-Higgsable clusters \cite{Morrison:2012np} (see also \cite{Heckman:2013pva})
and S-folds \cite{Garcia-Etxebarria:2015wns, Aharony:2016kai} (see also \cite{Apruzzi:2020pmv, Heckman:2020svr, Giacomelli:2020gee}),
as well as generalizations thereof.

\begin{table}[h!]
\centering
\begin{tabular}{c c c c}
\toprule
\toprule
$k$ & $\Omega_k^{\Spin\text{-}\Mp(2, \Z)} (\pt)$ & Generators & Defects\\
\midrule
$1$ & {\footnotesize $(\Z/8\Z) \oplus (\Z/3\Z)$} & {\footnotesize $(L^1_4 \,, L^1_3)$} & {\footnotesize $[p,q]$-7-branes} \\
$3$ & {\footnotesize $(\Z/2\Z) \oplus(\Z/3\Z)$} & {\footnotesize $(L^3_4 \,, L^3_3)$} & {\footnotesize non-Higgsable clusters} \\
$5$ & {\footnotesize $(\Z/2\Z) \oplus (\Z/32\Z) \oplus (\Z/9\Z)$} & {\footnotesize $(\widetilde{L}^5_4 \,, L^5_4 \,, L^5_3)$} & {\footnotesize S-folds} \\
$7$  & {\footnotesize $\Z/4\Z$} & {\footnotesize $Q^7_4$} & {\footnotesize twisted compactification of S-folds} \\
 & {\footnotesize $\Z/9\Z$} & {\footnotesize $L^7_3$} & {\footnotesize S-string} \\
$9$ & {\footnotesize $(\Z/4\Z) \oplus (\Z/{128}\Z) \oplus (\Z/27\Z)$} & {\footnotesize $( \widetilde{L}^9_4 \,, L^9_4 \,, L^9_3)$} & {\footnotesize S-instantons} \\
 & {\footnotesize $(\Z/8\Z) \oplus (\Z/3\Z)$} & {\footnotesize $(\HP^2\times L^1_4 \,, \HP^2\times L^1_3)$} & {\footnotesize compactifications of $[p,q]$-7-branes }\\
\bottomrule
\end{tabular}
\caption{Defects which source asymptotic Spin-$\Mp(2,\Z)$ bordism classes.}
\label{tab:Mpdefects}
\end{table}

For Spin-$\GL^+ (2, \Z)$ bordism classes, not all the defects can be associated to
generalizations of known string theory objects. This is simply because any putative bulk geometry
would necessarily break supersymmetry, and most objects understood in string theory typically preserve some supersymmetry.
Given this, we find it very remarkable that the inclusion of essentially
one new physical object suffices to generate the remaining bordism classes. This
new object is a 7-brane (codimension two in the 10d spacetime) that
implements reflections in the $\GL^+ (2,\Z)$ duality group, and we refer to it as a reflection 7-brane or ``R7-brane''
for short \cite{Dierigl:2022reg}. The R7-brane trivializes the class of a circle with a reflection bundle,
and compactifications involving wrapped R7-branes are enough to trivialize all
the other classes at lower degrees.  The full results are summarized in Table \ref{tab:GLdefects}.
\begin{table}[h!]
\centering
\begin{tabular}{c c c c}
\toprule
\toprule
$k$ & $\Omega_k^{\Spin\text{-GL}^+(2, \Z)} (\pt)$ & Generators & Defects\\
\midrule
$1$ & {\footnotesize $\Z/2\Z$} & {\footnotesize $L^1_4$} & {\footnotesize $[p,q]$-7-branes} \\
 & {\footnotesize $\Z/2\Z$} & {\footnotesize $S^1_R$} & {\footnotesize R7-brane} \\
$3$ & {\footnotesize $(\Z/2\Z) \oplus (\Z/3\Z)$} & {\footnotesize $(L^3_4, L^3_3)$} & {\footnotesize non-Higgsable clusters} \\
 & {\footnotesize $(\Z/2\Z)^{\oplus 2}$} & {\footnotesize $(\RP^3, \widetilde{\RP}^3)$} & {\footnotesize O5-plane, compactified R7-brane} \\
$5$ & {\footnotesize $\Z/2\Z$} & {\footnotesize $L^5_4$} & {\footnotesize S-folds} \\
 & {\footnotesize $\Z/2\Z$} & {\footnotesize $X_5$} & {\footnotesize twisted compactification of O5 and R7}\\
$7$ & {\footnotesize $\Z/9\Z$} & {\footnotesize $L^7_3$} & {\footnotesize S-string}\\
 & {\footnotesize $\Z/4\Z$} & {\footnotesize $W^7_1$} & {\footnotesize double-twisted compactification of S-fold} \\
 & {\footnotesize $(\Z/2\Z)^{\oplus 2}$} & {\footnotesize $(\RP^7, \widetilde{\RP}^7)$} & {\footnotesize O1-plane, compactified R7-brane} \\
 & {\footnotesize $\Z/2\Z$} & {\footnotesize $\orangeseven$} & {\footnotesize twisted compactification of a D-type Du Val singularity} \\
$9$ & {\footnotesize $\Z/2\Z$} & {\footnotesize $L^9_4$} & {\footnotesize S-instantons} \\
 & {\footnotesize $(\Z/2\Z)^{\oplus 4}$} & {\footnotesize $(W^7_1 \times S^1_p \times S^1_p \,, B \times L_4^1 \,,$} &
{\footnotesize  compactifications of 7-branes} \\
 & & {\footnotesize $\phantom{(} \HP^2 \times L_4^1 \,, \HP^2 \times S_R^1)$} & \\
 & {\footnotesize $(\Z/2\Z)^{\oplus 2}$} & {\footnotesize $(X_9, \widetilde{X}_9)$} & {\footnotesize twisted compactifications of O5  and R7} \\
 & {\footnotesize $\Z/2\Z $} & {\footnotesize $W^9_1$} & {\footnotesize double-twisted compactifications of S-folds}\\
\bottomrule
\end{tabular}
\caption{Defects which source Spin-$\GL^+(2,\Z)$ bordism classes. Here ``R7-brane'' refers to a codimension two defect (i.e., a 7-brane)
in which monodromy around it leads to a reflection on the corresponding F-theory torus.}
\label{tab:GLdefects}
\end{table}

The even-dimensional generators are conceptually different, since they usually descend directly
from the generators of Spin manifolds and thus, with a few exceptions, do not carry a non-trivial duality bundle.
While the particular string backgrounds with these even-dimensional manifolds as a boundary still need to be identified,
they are closely related to the cases which appeared in the analysis of \cite{McNamara:2019rup}.

\section{Conclusions and future directions}
\label{sec:otherapp}

In this section we present our conclusions and future directions. Readers interested in a more complete account
will find it in the subsequent parts of this paper.

The main theme in much of this paper is the interplay between the non-Abelian duality symmetry of
type IIB strings / F-theory and the Cobordism Conjecture. We have seen that for Spin-$\Mp(2,\Z)$
bordisms, the CC passes many non-trivial checks. Conversely, the CC predicts the existence of some genuinely new non-supersymmetric
backgrounds, some of which will be explored in future work. It is important to stress that at no point did our analysis of bordism groups
actually rely on supersymmetry. In this sense it is a rather robust tool which is complementary to many
existing methods. A central component of our analysis has been the computation of all of the
relevant bordism groups via the Adams spectral sequence. The results we have obtained are more broadly applicable in the study of quantum field
and strings, and in the remainder of this section we discuss some further potential avenues for investigation.

We have seen that the CC predicts various new backgrounds and objects for type IIB string theory.
In the case of Spin-$\Mp(2,\Z)$ bordism, these have a lift to F-theory geometries that preserve part of the supersymmetry.
These singular configurations usually contain localized interacting degrees of freedom.
For example, we have found non-Higgsable clusters that realize a simple class of six-dimensional superconformal field theories (SCFTs)
as well as S-folds. It would therefore be interesting to analyze the localized degrees of freedom and their dynamics for the
other backgrounds as well, e.g., the S-strings. The same is true for the more challenging Spin-$\GL^+ (2,\Z)$ setups,
which do not have a conventional F-theory interpretation.

Turning the discussion around, we can also use the Spin-$\GL^+ (2,\Z)$ bordisms to generalize our notion of F-theory itself.
Instead of an $\SL(2,\Z)$ action on the fiber, it is rather clear that we should also permit reflection symmetries generating
$\GL(2,\Z)$ actions on the geometry of the total space. It would be very interesting to see whether one can find
configurations that preserve part of the supersymmetry while still explicitly containing duality transition functions
with determinant $(-1)$. These likely do not contain R7-branes since these objects appear to be
non-supersymmetric \cite{Dierigl:2022reg}.

From this perspective it is natural to ask whether the appearance of reflections
could also be used to construct metastable non-supersymmetric
vacua, a topic of clear phenomenological relevance. From this perspective it is natural
to ask whether R7-branes, in tandem with other string theoretic ingredients, could be used to construct such backgrounds.

Type IIB string theory is of course not isolated and can be related to other string theories as well as M-theory
via various dualities. This motivates a further investigation into the fate of the predicted new backgrounds
under these dualities. Using the usual dictionary, a circle compactification of an F-theory background
is related to M-theory of the same torus-fibered geometry
(without the extra circle). In this way we can track the Spin-$\GL^+ (2,\Z)$ structure to a Pin$^+$ structure of
the M-theory spacetime \cite{Tachikawa:2018njr}. Since the M-theory geometry does not have to preserve the torus fiber, it might be that
several classes can be trivialized without resorting to singular objects. However, at least a subset of the novel IIB
configurations must survive and will thus have a direct M-theory counterpart.

A particularly fascinating implication of our results is that we have uncovered four discrete $\theta$-angles
for IIB string theory in ten dimensions. Some of them were already discussed in the literature long ago;
for instance, the $\theta$-angles that do not vanish on the product of a Bott manifold with a two-torus do not vanish on exotic 10-spheres,
and so were implicitly discussed in \cite{Witten:1985xe}. It remains an important open question to elucidate whether
these and other $\theta$-angles, such as the ones proposed in \cite{Debray:2021vob}
(see also \cite{FH21, Diaconescu:2000wy, Diaconescu:2000wz} as well as \cite{Montero:2022vva}),
are part of the Landscape or the Swampland.

The computation of bordism groups with a prescribed duality bundle structure has many
other potential applications. As an example, in the context of a $(k-1)$-dimensional theory of quantum gravity,
the bordism group $\Omega_{k}^{\mathcal{G}}$ characterizes possible anomalies of a given theory
with non-trivial dualities. In particular, we can apply this for all of the different refinements of IIB duality
considered in this paper, namely $\SL(2,\Z)$, $\Mp(2,\Z)$ and $\GL^{+}(2,\Z)$.
Moreover, this also provides non-trivial constraints on theories
with possibly large duality symmetries. It would be interesting to investigate other supergravity theories
and their U-duality symmetries, perhaps along the lines of \cite{Debray:2021vob, Debray:2022wcd}.

Conversely, it is also natural to consider situations in which the actual duality group is a proper subgroup of the
ones considered in this paper. This arises in various supergravity and supersymmetric quantum field theories (see e.g., \cite{Dierigl:2020wen} for a
recent discussion), and so it is natural to ask about the spectrum of objects predicted by the CC in such cases.

One can also consider limits in which the effects of gravity are switched off, namely
in the study of quantum field theories with a known duality symmetry. In that context, one can also consider
mixed gravitational-duality anomalies, much as in \cite{Seiberg:2018ntt, Hsieh:2019iba, Cordova:2019jnf, Cordova:2019uob}.
Again, the bordism group calculations provided here amount to a characterization of possible anomalies which can arise in a given theory.

Aside from these physical applications, there are also a number of interesting directions to pursue on purely mathematical grounds.
To give one example, here we have focused on the case of bordism groups $\Omega_{k}^{\mathcal{G}}$ for $k \leq 11$. It would be quite
interesting to develop methods which also work at ``large'' values of $k$.

A fortuitous feature of our calculations is that all of the relevant data is concentrated at low primes, in particular primes $p = 2 ,3$. Turning
the question around, one could also ask whether there is a natural sense in which higher primes could arise in this context. A potentially
promising route in this vein would likely to be consider similar calculations for the
congruence subgroups of $\SL(2,\Z)$ such as $\Gamma(p)$, $\Gamma_{0}(p)$ and $\Gamma_{1}(p)$, as well as their metaplectic and Pin$^+$
covers.

Another intriguing direction to develop would be the relation between the bordism theories considered here and
related constructions in mathematics. For example, many of the generators we produced for Spin-$\Mp(2, \Z)$ bordism
can be thought of as the boundary geometries of genus-one fibered Calabi-Yau spaces. Since there is also a purely
algebro-geometric approach to bordism developed by Levine-Morel in \cite{LEVINE2001723, LEVINE2001815, LM02, Lev02,
levine2007algebraic} (see also \cite{levinepandharipande}), it is natural to ask whether there is a more direct connection between Levine-Morel's bordism theory and Spin-$\Mp(2, \Z)$ bordism. In a different direction, there could also be connections with chromatic
homotopy theory: the duality bundles we consider can often be ``geometrized'' in terms of a genus-one curve /
elliptic curve, suggesting connections to elliptic cohomology theories (see, e.g., \cite{morava2007complex});
indeed, there are notions of $\SL(2, \Z)$-equivariant elliptic cohomology or related objects~\cite{HL16} believed
to be related to string theory~\cite{KS05b, KS05a, Sat10}. More directly, work of Tachikawa, Yamashita, and
Yonekura~\cite{TY21, Tac22, Yon22} relates the theory of topological modular forms, a sort of universal elliptic
cohomology theory, to anomaly cancellation in heterotic string theory, and it would be interesting to pursue
analogous questions in type IIB string theory.  We believe that these questions are all ripe for further
exploration, and look forward to exploring some of them in the near future.

\paragraph*{Acknowledgements}

We thank P.\ Giudice for generously donating his time and artistic abilities in crafting the map of IIBordia.

We thank D.S.\ Berman, D.S.\ Freed, I.\ Garc\'{i}a Etxebarria, S.\ Goette, T.\ Johnson-Freyd, C.\ Krulewski, Y.L.\ Liu,
D.\ L\"ust, J.\ McNamara, L.\ Stehouwer, E.\ Torres, A.P.\ Turner, C.\ Vafa, I. Valenzuela, E.\ Witten, M.\ Yu, and H.Y.\ Zhang for helpful discussions. AD
and MM thank the organizers of the conference ``Geometric Aspects of the Swampland'' for kind hospitality during
part of this work. JJH thanks the organizers of the conference ``Geometrization of (S)QFTs in $D \leq 6$'' held at
the Aspen Center for Physics (support from NSF grant PHY-1607611) for kind hospitality during part of this work.
JJH also thanks the organizers of the workshop ``Geometry, Topology and Singular Special Holonomy Spaces'' held at
Freiburg University for kind hospitality. JJH and MM thank the 2021 and 2022 Simons Summer Workshop on Mathematics
and Physics for kind hospitality during part of this work. MM and MD thank the ``Engineering in the Landscape:
Geometry, Symmetries and Anomalies'' workshop held at Uppsala University.  The work of MD is supported by the
German-Israeli Project Cooperation (DIP) on ``Holography and the Swampland''. The work of JJH is supported by DOE
(HEP) Award DE-SC0013528. The work of MM was supported by a grant from the Simons Foundation (602883, CV) for most of the duration of this project, and by a Mar\'{i}a Zambrano Scholarship (UAM) CA3/RSUE/2021-00493 during the last stages. MM acknowledges the support of the Spanish Agencia Estatal de Investigacion through the grant ``IFT Centro de Excelencia Severo Ochoa SEV-2016-0597''.

\newpage\section{Final words of encouragement}

We hope that this brief introduction and summary of our results is sufficient
for those only interested in the final results of our work and their physical implications.

For those brave enough to undertake the journey to understand how these results were obtained:
The road is long and arduous, yet the rewards are more than worth it. You will learn bordism theory,
F-theory, topology, index theory, the intricacies of IIB string theory, and above all, the Dark Art of Spectral Sequences.
This arcane tool will allow you to tackle questions and problems which would be beyond reach otherwise, in String Theory and beyond.
The Map of the Lands of IIBordia, conjured forth through the sorcerous arts of Paco Giudice, is depicted at the beginning of the paper to aid you in your quest. Any road taken from Introduction Camp will lead somewhere interesting, and they can all be taken
largely independently from each other.\footnote{The paths of the physicist warrior and mathemagician are closer than they might first appear; we do not specify a metric on IIBordia (the metric is still unknown).} The legend goes that only the traveler who has explored
all corners of the Land of IIBordia will achieve full enlightenment.
We hope it is of some help should you lose your way. Godspeed.

\pagebreak
\part{Type IIB backgrounds and defects}
\label{p:physics}

Having given an overview to the main results of this paper, we now turn to an analysis of the defects predicted by the
Cobordism Conjecture \cite{McNamara:2019rup}. As mentioned in the introduction, the Cobordism Conjecture
states that for a consistent theory of quantum gravity in $D$ dimensions, the lower bordism groups must be trivial
\begin{align}
\Omega_{k}^{\text{QG}} = 0 \,, \quad \text{for} \enspace 0 < k < D \,,
\end{align}
in order to avoid associated global symmetries. Since the bordism groups under investigation are far from trivial (see Table \ref{the_big_table}), we are left with two options:
\begin{itemize}
	\item{There are more ingredients in the full theory of quantum gravity than is captured by the duality bundle and Spin structure. These can often be identified with extended objects in the corresponding string theory. After the inclusion of these objects the spacetime can be deformed to a point and thus becomes null-bordant. This process can be regarded as a breaking of the global symmetry induced by non-trivial bordism classes.}
	\item{The backgrounds associated to non-trivial bordism classes are forbidden in the full theory of quantum gravity. Constraints of this kind might be imposed by subtle tadpole cancellation conditions. This can be interpreted as a gauging of the associated global symmetry.}
\end{itemize}
In the following we will assume that the first of the above possibilities is realized and describe the necessary objects whose inclusion trivializes the bordism groups under consideration. For small $k$ this will lead to the (re)discovery of various type IIB configurations and objects. Moreover, since type IIB contains spacetime fermions we focus on the two variants Spin-$\Mp(2,\Z)$ and Spin-$\GL^+ (2,\Z)$.

Before we start our analysis, we point out that the bordism groups under consideration are only sensitive to the spacetime topology and the discrete duality bundle. In particular our analysis does not include the various higher-form Ramond-Ramond (RR) and Neveu-Schwarz (NS) fluxes of type IIB. A consequence of this is that there can in principle be multiple string backgrounds which have the same asymptotic duality bundle structure but which correspond to physically distinct flux backgrounds. Wherever possible we are interested in the supersymmetric variants of the objects breaking the global symmetry induced by the non-trivial bordism classes which can be related to known string theory objects. Therefore, we allow for induced fluxes, as necessary.


\section{Spin-Mp defects}
\label{sec:Mpdefects}

In this section we focus on the non-trivial bordism classes in $\Omega^{\text{Spin-Mp}(2,\Z)}_k (\pt)$. We will find that these classes are closely related to certain F-theory backgrounds. For these, the value of the axio-dilaton
\begin{align}
\tau = C_0 + i e^{- \phi} \,,
\end{align}
with $C_0$ the RR 0-form and dilaton $\phi$, is encoded in an auxiliary torus $T^2 \simeq \mathbb{C}/\Lambda_{\tau}$ defined by the lattice $\Lambda_{\tau} = \langle 1, \tau \rangle$ fibered over the physical spacetime. Thus, F-theory geometries are described by torus fibrations \cite{Vafa:1996xn, Morrison:1996na, Morrison:1996pp} (see \cite{Heckman:2010bq, Weigand:2018rez} for reviews). If the total space of the fibration is Calabi-Yau, some of the supersymmetry will be preserved and the background is guaranteed to solve the Einstein equations. The $\Mp(2, \Z)$ bundle of the type IIB background is directly encoded in the information associated to this fibration. Moreover, the total space of the elliptic fibrations, as long as they are smooth, allow for a Spin structure in the case of the Spin-$\Mp(2,\Z)$ bordism classes, which already hints at their relevance in the supersymmetric setups.

The specific strategy for odd dimensional Spin-$\Mp(2,\Z)$ bordism generators is as follows. If the generator is given by a lens space it can be described as the asymptotic boundary of the quotient $\mathbb{C}^n / (\Z / k \Z)$. The duality bundle is then specified as the holonomy generated by traversing the non-trivial 1-cycle of the lens space, i.e., to interpolate between two points identified by the $\Z/ k \Z$ action. Using the geometrization of the duality bundle described by F-theory, this can be realized by specifying the action of $\Z /k \Z$ on the $T^2$ fiber, thus obtaining a space of the form
\begin{align}
(\mathbb{C}^n \times T^2) / (\Z / k \Z) \,.
\end{align}
For an appropriate $\Z / k \Z$ action on the fiber and base coordinate the total space is described by a non-compact, singular Calabi-Yau $(n+1)$-fold, indicating that part of the supersymmetry is preserved, with the generator of the bordism group as the asymptotic boundary of the base manifold, see Figure \ref{fig:conicalgeo}. The constructed F-theory geometries with singular central fiber then provide a natural class of backgrounds which have an asymptotic boundary corresponding to a candidate generator in the associated bordism group. For the $\Z / k \Z$ action on the central fiber to be well-defined, the complex structure of the torus might be restricted to special values leading to particular fiber degenerations summarized in Table \ref{tab:singfiber}, see also Appendix \ref{app:groups}.
\begin{table}[h!]
\centering
\begin{tabular}{c c c c}
\toprule
\toprule
Monodromy & Fixed value of $\tau$ & Kodaira type & Gauge algebra \\ \midrule
$S$ & $\tau = i$ & $III^*$ & $\mathfrak{e}_7$ \\
$S^2$ & $\tau$ arbitrary & $I_0^*$ & $\mathfrak{so}(8)$ \\
$S^3$ & $\tau = i$ & $III$ & $\mathfrak{su}(2)$ \\
$U^2$ & $\tau = e^{2 \pi i /3}$ & $IV^*$ & $\mathfrak{e}_6$ \\
$U^4$ & $\tau = e^{2 \pi i /3}$ & $IV$ & $\mathfrak{su}(3)$ \\
\bottomrule
\end{tabular}
\caption{Fixed value of the complex structure $\tau$ of the fiber torus depending on $\SL(2,\Z)$ bundle and associated Kodaira type and gauge algebra of the central singularity.}
\label{tab:singfiber}
\end{table}
At this level we only specified the bosonic version of the duality bundle, i.e., the $\SL(2,\Z)$ bundle. The full Spin-$\Mp(2,\Z)$ structure is specified by choosing a Spin structure on the total space.

We summarize the involved objects associated to $\Omega^{\text{Spin-Mp}(2,\Z)}_k (\pt)$ for odd $k \leq 9$ in Table \ref{tab:Mpdefects} above and will now investigate each dimension individually. As mentioned previously, the even-dimensional asymptotic backgrounds are closely related to backgrounds that appeared in \cite{McNamara:2019rup}.

\subsection{Codimension-two defects}
\label{subsec:Mpcodim2}

Let us start with
\begin{align}
\Omega^{\text{Spin-Mp}(2,\Z)}_1(\pt) = (\Z/8\Z) \oplus (\Z/3\Z) \simeq \Z/{24}\Z \,,
\label{eq:Mp1bord}
\end{align}
generated by circles with non-trivial $\Mp(2,\Z)$ duality bundle. These are classified by the transition functions and thus the non-trivial bordism classes are in one-to-one correspondence with the Abelianization of the duality group
\begin{align}
\text{Ab}\big( \Mp (2,\Z) \big) = \Z/{24}\Z \,.
\end{align}
In terms of the duality group generators presented in Appendix \ref{app:groups} the transition function generating $\Omega^{\text{Spin-Mp}(2,\Z)}(\pt)$ is given by $\hat{T}$, the Spin lift of the $T$ generator in $\SL(2,\Z)$. To break this bordism group we therefore need to introduce a defect in real codimension two, which induces a non-trivial duality monodromy given by $\hat{T}$. Luckily such a defect is well-known in type IIB string theory and F-theory and is given by the D7-brane.\footnote{Note that one can also choose different combinations of 7-branes that break the full bordism group. These can be interpreted as stacks of  more general $[p,q]$-7-branes, see below.} Hence, we see that the requirement of vanishing bordism classes forces us to include defects that can be associated to known stringy objects, namely $[p,q]$-7-branes (see e.g., the review in \cite{Weigand:2018rez}).\footnote{In order to recover the non-Abelian braiding of general $[p,q]$-7-branes \cite{Dierigl:2020lai} suggested an alternative formulation involving (in the supersymmetric case) the moduli spaces of genus-one curves.}

The associated F-theory geometry can be constructed as follows. The non-trivial duality bundle on the circle translates to a monodromy of the F-theory fiber torus around the spacetime circle, leading to a geometry of the type
\begin{align}
( S^1 \times T^2 ) / (\Z/k\Z) \,,
\label{eq:codim2geo}
\end{align}
where the $\Z/k\Z$ acts by a rotation by $\tfrac{2 \pi}{k}$ and as the associated Mp$(2,\Z)$ transformation on the fiber.
Now, filling the base $S^1$ to obtain a disc, one encounters at least one singular fiber in the interior which indicates the presence of $[p,q]$-7-branes. This can be seen by interpreting \eqref{eq:codim2geo} as the boundary of
\begin{align}
( \mathbb{C} \times T^2 ) / (\Z/k\Z) \,,
\end{align}
with the base coordinate $z \mapsto e^{2 \pi i /k} z$, having a singular central fiber at $z = 0$. In this way all the relevant defects can be described as local patches of elliptically-fibered K3 manifolds, demonstrating that half of the supersymmetry is preserved by the associated defects and only BPS-objects are needed to break the global symmetry associated to the bordism group.

Decomposing $\Z/24\Z$ in \eqref{eq:Mp1bord} into the parts $(\Z/8\Z) \oplus (\Z/3\Z)$ we see that instead of a
D7-brane one can use brane stacks generating the monodromies $\hat{S}$ and $\hat{U}^2$, see Appendix
\ref{app:groups}, respectively, to break the individual summands. The brane stack associated to the monodromy
$\hat{S}$ is given in terms of a type $III^*$ F-theory fiber corresponding to a $\mathfrak{e}_7$ stack; similarly, $\hat{U}^2$ corresponds to a type $IV^*$ fiber, i.e., an $\mathfrak{e}_6$ stack, see Table \ref{tab:singfiber} and e.g., \cite{Dasgupta:1996ij,Weigand:2018rez}.

As an aside we now want to give an intuitive explanation for why the monodromy of 24 D7-branes leads to the trivial class of the bordism group. For that we use F-theory intuition and encode the duality bundle in the geometry of an elliptic fibration. The 24 7-branes translate to 24 $I_1$ fibers of the elliptic fibration. Since we know that a K3 surface can be constructed as an elliptic fibration over $\mathbb{CP}^1$ with 24 $I_1$ fibers, we have found the bounding manifold in an F-theory setting. Forgetting about the Spin structure, 12 $I_1$ fibers suffice to obtain a trivial $\SL(2,\Z)$ monodromy and the periodicity of the duality part of $\Omega^{\Spin}_1 \big( B \SL(2,\Z) \big)$, i.e., the $\Z/{12}\Z$, factor can be explained by the existence of an elliptic fibration for $dP_9$ or half-K3, see also \cite{GarciaEtxebarria:2020xsr}.

\subsection{Codimension-three defects}
\label{subsec:Mpcodim3}

We have that $\Omega_2^{\text{Spin-Mp}(2,\Z)}(\pt) = 0$, and so every two-dimensional manifold with a
Spin-Mp$(2,\Z)$ structure is trivial in bordism.  This is to be contrasted with $\Omega_2^{\Spin} (\pt) \cong \Z/2\Z$, but
it is similar to $\Omega^{\SSO}_2 (\pt)=0$ for oriented bordism. One way to understand this vanishing is to look
ahead to \cref{mp2_at_3,mp2_at_2}, where we learn that the $2$-torsion subgroup of $\Omega_2^{\Spin\text{-}\Mp(2,
\Z)}$ is isomorphic to the $2$-torsion subgroup of $\Omega_2^{\Spin\text{-}\Z/8\Z}$, and that the $3$-torsion
subgroup is isomorphic to the $3$-torsion subgroup of $\Omega_2^\Spin(B\Z/3\Z)$. Then, since
$\widetilde{\Omega}_2^{\Spin}(B\Z/3\Z)=0$, $\Omega_2^{\Spin\text{-}\Mp(2,\Z)}(\pt)$ is essentially equivalent to
$\Omega_2^{\Spin\text{-}\Z/8\Z} (\pt)$, corresponding to manifolds with $\text{Spin-}\Z/8\Z$ structure, and this
bordism group also vanishes. We will put this seemingly trivial result to interesting physical use in Section
\ref{subsec:FonK3to}.

\subsection{Codimension-four defects}
\label{subsec:Mpcodim4}

Next, let us discuss the necessary real codimension-four defects associated to
\begin{align}
\Omega_3^{\text{Spin-Mp}(2,\Z)}(\pt) = (\Z/ 2\Z) \oplus (\Z/3\Z) \,,
\end{align}
with generators given by the two lens spaces $L^3_4$ and $L^3_3$ equipped with non-trivial duality bundles. As
above these can be conveniently lifted to F-theory geometries described as the boundaries of
\begin{align}
( \mathbb{C}^2 \times T^2 ) / (\Z/k\Z) \,.
\label{eq:nHcgeo}
\end{align}
The $\Z/k\Z$ action on the base coordinates is given by
\begin{align}
(z_1, z_2) \mapsto (\omega z_1, \omega z_2) \,, \text{ with } \omega = e^{2 \pi i/ k} \,.
\end{align}
The action on the fiber is the natural $\Mp(2,\Z)$ action with an element of order $k$.

For $k = 4$ the $\Mp(2,\Z)$ element is given by $\hat{S}^2$. In $\SL(2,\Z)$ this is associated to the element
\renewcommand{\arraystretch}{1}
\begin{align}
S^2 = \begin{pmatrix} -1 & 0 \\ 0 & -1 \end{pmatrix} \,.
\end{align}
Denoting the complex coordinate on the torus $T^2 \simeq \mathbb{C} / \Lambda_{\tau}$ by $\lambda$ this action can be described as
\begin{align}
\lambda \mapsto - \lambda = \omega^2 \lambda \,.
\end{align}
This geometry
\begin{align}
(\mathbb{C}^2 \times T^2) / (\Z/4\Z): \quad (z_1, z_2, \lambda) \mapsto (\omega z_1, \omega z_2, \omega^2 \lambda) \,,
\label{eq:nHc4}
\end{align}
is also well-known in the F-theory literature, see e.g.,
\cite{Witten:1996qb, Morrison:2012np, Heckman:2013pva, Morrison:2016nrt, DelZotto:2017pti}. It is the superconformal limit of the non-Higgsable cluster over a curve of self-intersection $(-4)$, which has been shrunk to zero size in the geometry \eqref{eq:nHc4}. On the tensor branch, i.e., at finite size of the $(-4)$-curve, one has a gauge theory with gauge algebra $\mathfrak{so}(8)$ localized on a 7-brane stack wrapping this compact curve. This background preserves $\mathcal{N}= (1,0)$ supersymmetry in six dimensions. Observe that the local $(3,0)$-form
\begin{align}
\Omega_{(3,0)} = dz_1 \wedge dz_2 \wedge d \lambda \,,
\label{eq:local30}
\end{align}
is invariant under the $\Z/4\Z$ action, so the quotient is a singular Calabi-Yau.

A very similar approach works for the second generator, whose associated $\Mp(2,\Z)$ element is given by $\hat{U}^4$ with
\begin{align}
U^4 = \begin{pmatrix} 0 & 1 \\ -1 & -1 \end{pmatrix} \in \SL(2,\Z) \,.
\end{align}
As opposed to the background above a well-defined action on the lattice $\Lambda_{\tau}$ fixes the axio-dilaton to
\begin{align}
\tau = e^{2 \pi i / 3} \,,
\end{align}
for the central fiber and the geometry is given by
\begin{align}
(\mathbb{C}^2 \times T^2) / (\Z/3\Z): \quad (z_1, z_2, \lambda) \mapsto (\omega z_1, \omega z_2, \omega \lambda) \,, \text{ with } \omega = e^{2 \pi i / 3} \,.
\end{align}
Again, this is the singular limit of a non-Higgsable cluster, this time the collapsed curve is given by a curve of self-intersection $(-3)$. The induced gauge algebra hosted by 7-branes wrapping the curve is given by $\mathfrak{su}(3)$ associated to a type $IV$ singularity, see Table \ref{tab:singfiber}. The configuration preserves the same amount of supersymmetry as above. Correspondingly, the local $(3,0)$-form of equation \eqref{eq:local30} is invariant under the $\Z/3\Z$ action.

We see that while no new stringy objects need to be introduced in order to break the bordism group, the codimension-four defects are associated to very rich and interesting F-theory backgrounds related to six-dimensional superconformal field theories, see \cite{Heckman:2018jxk, Argyres:2022mnu} for recent reviews.

\subsection{Codimension-five defects}
\label{subsec:Mpcodim5}

The group $\Omega_4^{\text{Spin-Mp}(2,\Z)} (\pt)$ is free and generated by a single generator $E$, the Enriques surface. This can be understood as a refinement of $\Omega^{\Spin}_4 (\pt) = \Z$ generated by K3 to manifolds with Spin-$\Z/8\Z$ structure; the corresponding class is generated, in the F-theory literature, by the Enriques Calabi-Yau (see e.g. \cite{Grimm:2007xm}). Thus we need a type IIB background that allows the collapse of $E$ to a point. Such a background will necessarily be non-supersymmetric. We also expect it will be strongly coupled since the spectrum of IIB on K3 as well as $E$ is chiral. A detailed study of this defect is outside of the scope of this paper.

\subsection{Codimension-six defects}
\label{subsec:Mpcodim6}

Defects of codimension-six become necessary due to the non-vanishing bordism group
\begin{align}
\Omega^{\text{Spin-Mp}(2,\Z)}_5 (\pt) = (\Z/2\Z) \oplus (\Z/{32}\Z) \oplus (\Z/9\Z) \,,
\end{align}
generated by lens spaces. As in codimension four we can lift the configurations to an F-theory background by considering\footnote{The complex structure of the singular central fiber is fixed to be $\tau = e^{2 \pi i/3}$ for $k = 3$ and $\tau = i$ for $k = 4$.}
\begin{align}
(\mathbb{C}^3 \times T^2) / (\Z/k\Z) \,,
\end{align}
which describes so-called S-fold backgrounds \cite{Garcia-Etxebarria:2015wns, Aharony:2016kai}, see also \cite{Ferrara:2018iko, Heckman:2020svr, Giacomelli:2020gee}. The fact that there are two different S-folds with $k = 4$ can be traced back to the fact that there exist two different Spin-$\Z/8\Z$ structures on $L^5_4$. This leads to interesting consequences for the 4d SCFTs localized on the singular point of these backgrounds, which will be discussed in \cite{Spin-Off}. Here, we will focus on their general properties. For that, note that $\eta$-invariants in 5d are bordism invariants and thus can be used to analyze bordism relations between different $\Z/k\Z$ action of the lens spaces.

For $k = 3$ the lens space $L^5_3$ is Spin and can be interpreted as the asymptotic boundary of $\mathbb{C}^3 / (\Z/3\Z)$, where the discrete group action on the complex coordinates of $\mathbb{C}^3$ takes the form
\begin{align}
z_i \mapsto e^{2 \pi i j_i / k} z_i  \text{ with } j_i \in \{ 1, 2\} \,,
\end{align}
see also Appendix \ref{subapp:etalens}. This group action introduces an isolated singularity at $z_i = 0$.
Using the analysis in \cite{10.2307/2044659},
this singularity is terminal if the action on the base $z_i$ takes the form
\begin{equation}
(z_1, z_2, z_3) \mapsto (\zeta z_1, \zeta^a z_2, \zeta^{-a} z_3)
\end{equation}
with $\zeta$ a primitive 3rd root of unity and $a$ co-prime with respect to 3.\footnote{Note also that this interior geometry is not a Calabi-Yau threefold since the determinant is $\zeta$ rather than $1$.}

To get a IIB background, we equip this with a
duality bundle so that the total space in the F-theory model is a Calabi-Yau four-fold. This implicitly involves picking a choice of complex structure. In the local patch near the origin, we thus get a singularity of the for
$\mathbb{C}^4 / (\mathbb{Z} / k \mathbb{Z})$ where the action on the local coordinates is
\begin{equation}
(z_1,z_2,z_3,\lambda) \mapsto (\zeta z_1, \zeta^{a} z_2, \zeta^{-a} z_3, \zeta^{-1} \lambda),
\end{equation}
and it follows from references \cite{10.2307/2044659, AnnoTerminal} that this is an isolated terminal singularity of the local four-fold.\footnote{ Note that this bulk geometry is a Calabi-Yau fourfold since the determinant is $1$.}

In fact, this is just the celebrated $\Z/3\Z$ S-fold in \cite{Garcia-Etxebarria:2015wns}.
Calculating the $\eta$-invariant of a Dirac fermion with charge $q = 1$ in this background (see Appendix \ref{subapp:etalens}), we find
\begin{align}
\eta^{\text{D}}_{\frac{1}{2}} \big( L^5_3 (1,-1,1) \big) = \tfrac{1}{9} \,,
\end{align}
where we explicitly denoted the $\Z/3\Z$ action on the geometry by $L^5_3 (j_1, j_2, j_3)$, in the obvious notation.
Thus, the asymptotic boundary of this
S-fold is a generator for the $\Z/9\Z$  bordism group factor. One can also consider more general lens spaces. Taking
the group action to be uniform on all the $z_i$, we observe that
\begin{align}
\eta^D_{\frac{1}{2}}  \big( L^5_3 (1,1,1) \big) = - \tfrac{1}{9} \,,
\end{align}
and so is related to the generator via:
\begin{align}
8 \, \big[ L^5_3 (1, -1, 1) \big] \sim \big[ L^5_3 (1,1,1) \big] \,,
\end{align}
with $\sim$ indicating that the two sides are bordant.

In the corresponding F-theory description, both the asymptotic lens space $L^5_3 (1, -1, 1)$ and the lens space $L^5_3 (1, 1, 1)$ can preserve supsersymmetry, but this is accomplished in different ways. In the case of the S-fold with asymptotic boundary $L^5_3 (1, -1, 1)$, we require a non-trivial elliptic fibration to retain supersymmetry. On the other hand, $L^5_3 (1, 1, 1)$ is the asymptotic boundary of the local Calabi-Yau threefold $\mathcal{O}(-3)$ over $\mathbb{CP}^2$, and so is compatible with a trivial elliptic fibration.

To see this in more detail, introduce the local $(4,0)$-form:
\begin{align}
\Omega_{(4,0)} = dz_1 \wedge dz_2 \wedge dz_3 \wedge d\lambda \,,
\end{align}
where $\lambda$ is the complex local coordinate on the F-theory torus. For $L^5_3 (1,-1,1)$ we see that $\lambda$ needs to transform as
\begin{align}
\lambda \rightarrow \zeta^{-1} \lambda \,,
\end{align}
in order for $\Omega_{(4,0)}$ to be invariant. This precisely indicates the non-trivial duality bundle, which is translated to a non-trivial torus bundle in the F-theory geometry. Since, the supercharges also transform under the duality, one finds that this background preserves part of the supersymmetry, famously leading to D3-brane probe theories with $\mathcal{N} = 3$ supersymmetry. Summarizing, for the supersymmetric F-theory background the central singularity corresponds to a $\Z/3\Z$ S-fold.

A general comment here is that the above analysis has favored a specific complex structure on the bulk geometry. Observe that we could have complex conjugated, for example $z_2 $ to $\overline{z}_2$. Doing so, it might at first appear that the roles of the lens spaces have switched roles. It is important to note, however, that we have also implicitly fixed a choice of duality bundle. The $\eta$-invariants detect topological data and are insensitive to these holomorphic geometry issues.

A very similar discussion can be performed for $k = 4$ in which case there are two inequivalent Spin-$\Z/8\Z$ structures as we discuss in more detail in \cite{Spin-Off}. For our purposes, it will suffice to consider the lens spaces $L^5_4 (1,1,1)$. Since there are two classes, we also need two bordism invariants in order to establish relations between manifolds. We can choose these to be the $\eta$-invariant for a Dirac fermion as well as a Rarita-Schwinger field, respectively. Distinguishing between the two different Spin-$\Z/8\Z$ structures by a tilde, we find (see also \cite{Hsieh:2020jpj}),
\begin{align}
\begin{tabular}{c | c | c }
 & $\eta^{\text{D}}_{1/2}$ & $\eta^{\text{RS}}_{1/2}$ \\ \hline
 $L^5_4 (1,1,1)$ & $- \tfrac{5}{32}$ & $\tfrac{11}{32}$ \\
 $\widetilde{L}^5_4 (1,1,1)$ & $- \tfrac{3}{32}$ & $- \tfrac{3}{32}$ \\
\end{tabular}
\label{tab:5detas}
\end{align}
As pointed out in \cite{Hsieh:2020jpj}, the following two combinations generate the individual factors in the bordism group
\begin{align}
\Z/{32}\Z: \enspace \big[ L^5_4 (1,1,1) \big] \,, \quad \Z/2\Z: \enspace \big[ \widetilde{L}^5_4 (1,1,1) \big] + 9 \, \big[ L^5_4 (1,1,1) \big] \,.
\end{align}

With respect to this choice of complex structure, we observe that the corresponding bulk geometries do not lead to terminal singularities in $\mathbb{C}^3 / (\Z/4\Z)$, whereas variants of the form $L^5_4 (1,-1,1)$ and $\widetilde{L}^5_4 (1,-1,1)$ would.\footnote{Comparing \cite{Garcia-Etxebarria:2015wns} with \cite{Aharony:2016kai, Hsieh:2020jpj}, one is considering a different choice of complex structure to present the geometry, which in turn influences the presentation of the cyclic group action on the ambient $\mathbb{R}^6 \times T^{2}$. For example, the $\mathbb{Z} / 4 \mathbb{Z}$ S-fold defined by the local group action $(z_1,z_2,z_3,\lambda) \mapsto (\zeta z_1, \zeta^{-1}z_2, \zeta z_3,\zeta^{-1}\lambda)$ of \cite{Garcia-Etxebarria:2015wns} is instead presented as the local group action $(z_1,z_2,z_3,\lambda) \mapsto (\zeta z_1, \zeta z_2, \zeta z_3,\zeta \lambda)$ in \cite{Aharony:2016kai, Hsieh:2020jpj}. The physical content, however, is fixed by the choice of a prescribed duality bundle structure, and this is what is detected by the $\eta$-invariants.}
However, for $\Z/4\Z$ both of the variants can preserve some supersymmetry since
the covering space holomorphic $(4,0)$-form is invariant under both group actions. We give an extensive analysis of these type of backgrounds in \cite{Spin-Off}. In order to make sure that this leads to the known $\Z/4\Z$ S-fold backgrounds, we determine the set of bordism invariants in Table \eqref{tab:5detas} for $(j_1, j_2, j_3) = (1, -1, 1)$
\begin{align}
\begin{tabular}{c | c | c }
 & $\eta^{\text{D}}_{1/2}$ & $\eta^{\text{RS}}_{1/2}$ \\ \hline
 $L^5_4 (1,-1,1)$ & $\tfrac{5}{32}$ & $ - \tfrac{11}{32}$ \\
 $\widetilde{L}^5_4 (1,-1,1)$ & $\tfrac{3}{32}$ & $\tfrac{3}{32}$ \\
\end{tabular}
\label{tab:5detas2}
\end{align}
We see that only the sign of the $\eta$-invariants flip, so indeed the full bordism group is taken care of by including the associated S-fold backgrounds in type IIB.

\subsection{Codimension-seven defects}
\label{subsec:Mpcodim7}

Consider next codimension-seven defects. By inspection of Table \ref{the_big_table}, we see that
the bordism group with $k = 6$ vanishes
for $\SL(2,\Z)$, Spin-Mp$(2,\Z)$ and Spin-$\text{GL}^+(2,\Z)$ structures. Therefore, there
are no singular codimension-seven defects predicted by the Cobordism Conjecture in these cases. To give an explanation for this,
we can compare to the similar group with Spin structure alone, which also
vanishes, $\Omega_6^{\text{Spin}} (\pt) =0$. For instance, for any Calabi-Yau manifold, if the SYZ conjecture is true
\cite{Strominger:1996it}, one expects to have a $T^3$ fibration, and since $T^3$ is itself a boundary, a boundary
for the whole Calabi-Yau can be constructed fiberwise.\footnote{To realize this argument carefully, it would be necessary to take into account degeneracies of the fibration.} In the cases under consideration, this does not work directly, because the corresponding three-dimensional bordism groups do not vanish. But it may be that a more detailed analysis shows that the non-trivial bordism classes can never appear as fibers, recovering the result that the bordism groups vanish. At any rate, it would be interesting to develop a detailed understanding of the smooth geometries that act as null-bordisms for arbitrary 6-manifolds, but this is beyond the scope of the paper.

\subsection{Codimension-eight defects}
\label{subsec:Mpcodim8}

The defects in codimension-eight are associated with the generators of
\begin{align}
\Omega^{\text{Spin-Mp}(2,\Z)}_7 (\pt) = (\Z/4\Z) \oplus (\Z/9\Z) \,.
\end{align}
The asymptotic geometries are given by a seven-dimensional lens space $L^7_3$
and a manifold denoted $Q^7_4$ which can be described as a fibration of $L^5_4$ over $\mathbb{CP}^1$.

Consider $L^7_3$ first. As in the lower codimension cases, this
can be lifted to the boundary of a singular F-theory geometry given by
\begin{align}
(\mathbb{C}^4 \times T^2) / (\Z/3\Z) \,,
\end{align}
fixing the complex structure of the torus on the central fiber to $\tau = e^{2 \pi i / 3}$.
With the variant given by $L^7_3 (1,1,1,1)$, we retain some supersymmetry for the explicit group action:
\begin{equation}
(z_1,z_2,z_3,z_4,\lambda) \mapsto (\zeta z_1, \zeta z_2, \zeta z_3, \zeta z_4, \zeta^{-1} \lambda),
\end{equation}
in the obvious notation. In particular, the existence of a supersymmetric background requires a non-trivial elliptic duality bundle structure.
By contrast, for the asymptotic boundary $L^7_3 (1,-1,1,-1)$, the base of the F-theory model is already a Calabi-Yau fourfold so supersymmetry is compatible with a trivial duality bundle structure.\footnote{Observe that the base of the F-theory model is a terminal singularity in the base.}

We now ask whether these backgrounds provide generators for the corresponding bordism group factors. For that we determine $\eta^{\text{D}}_1 (L^7_3) - \eta^{\text{D}}_0 (L^7_3)$ in both cases. Note that we consider the difference of two $\eta$-invariants with different charges, since a single $\eta$-invariant is not a bordism invariant due to potential contributions of the index density on 8-manifolds.
\begin{align}
\begin{split}
& \eta^{\text{D}}_1 \big( L^7_3 (1,1,1,1) \big) - \eta^{\text{D}}_0 \big( L^7_3 (1,1,1,1) \big) = \tfrac{1}{9} \,,\\
& \eta^{\text{D}}_1 \big( L^7_3 (1,-1,1,-1) \big) - \eta^{\text{D}}_0 \big( L^7_3 (1,-1,1,-1) \big) = \tfrac{1}{9} \,.\\
\end{split}
\end{align}
Indeed, these two lens spaces are bordant. We therefore find new type IIB backgrounds that are described by lower-dimensional versions of the S-folds above, which we call S-strings. Focusing on the variant with $(j_1, j_2, j_3, j_4) = (1,1,1,1)$ we can explicitly determine the transformation properties of the supercharges, see also the analysis in \cite{Spin-Off}. This is compatible with an exotic $\mathcal{N} = (8,2)$ supersymmetry\footnote{As usual we indicate the supersymmetry charges in the smallest spinor representation, which in two dimensions is Majorana-Weyl.} in the two dimensional worldsheet theory of the S-string, but it is possible that in an actual string background with additional fluxes and brane sources switched on that this is reduced further. We will leave a full analysis, as well as potential physical applications for future work.

Consider next the second generator $Q^7_4$. This can be presented as the total space of a lens space bundle over $\mathbb{CP}^1$
\begin{align}
\begin{split}
L^5_4 \enspace \hookrightarrow \enspace & \, Q^7_4 \\
&\, \downarrow \\
& \, \mathbb{CP}^1
\end{split}
\end{align}
as was described in Section \ref{sec:summary}. We therefore see that it is not a genuinely new background but the
compactification of a $\Z/4\Z$ S-fold on a $\mathbb{CP}^1$.

We now  argue that there is a IIB background with asymptotic geometry $Q^7_4$ which
preserves some supersymmetry. In the description of the manifold in Section \ref{sec:summary} we have
focused on the embedding of the lens space in $H^{\pm 2} \oplus \underline{\mathbb{C}} \oplus
\underline{\mathbb{C}}$ over $\CP^1$. This means that the first complex coordinate $z_1$ of the ambient space of
the fiber gets twisted, i.e., transforms as a non-trivial line bundle over $\mathbb{CP}^1$. Since zero modes on a
sphere are only possible for fields transforming as scalars under rotations on the base $\mathbb{CP}^1$, preserved
supercharges need to respect that property. One can view this as specifying a topological twist \cite{Witten:1988ze},
which for the space above happens with respect to a $\Spin(2) \simeq \text{U}(1)$ subgroup of the full R-symmetry associated to rotations in
the internal space acting on $z_1$. Depending on the bundle $H^2$ or $H^{-2}$, as a section of which $z_1$
transforms, this leads to an (anti)-correlation of Spin-components in the respective two-dimensional subspace described by the non-trivial  line bundle of the fiber,
i.e., in the Spin component on $\CP^1$ and $\mathbb{C}_{z_1}$. We see that in general this preserves half of the components of supercharges of $L^5_4$. Since we have seen that these backgrounds preserve supersymmetry, so do the twisted compactification on $Q^7_4$.

It would be interesting to analyze the associated 2d supersymmetric quantum field theories (SQFTs) described by these backgrounds, which potentially realize exotic supersymmetries, see also \cite{Florakis:2017zep}. Thus, we discover new type IIB configurations which we called S-strings, as well as supersymmetry preserving compactifications of the original 4d S-folds.

\subsection{Fibration generators}
\label{subsec:fibgen}

In the previous section we observed that some of the asymptotic geometries are obtained via the process of ``compactification,'' in the sense that they descend from a bordism generator in a lower-dimensional asymptotic geometry.

Indeed, by inspection of the generators in Table \ref{Mp_gens_table} and \ref{gl_gens_table}, several of the higher-dimensional bordism group generators can be understood as fibrations, for which the fiber is related to generators in lower dimension. In general consider the fibration
\begin{align}
\begin{split}
\mathcal{F} \enspace \hookrightarrow \enspace & \, X \\
&\, \downarrow \\
& \, \mathcal{B}
\end{split}
\end{align}
where $X$ generates part of $\Omega^{\mathcal{G}}_k (\pt)$ and the fiber $\mathcal{F}$ is related to an element in $\Omega^{\mathcal{G}}_\ell (\pt)$, with $\ell = k - \text{dim}(\mathcal{B})$. We already know how to describe the fiber as the boundary of a space containing certain defects and we can generalize this to the full fibration by wrapping the corresponding defect on $\mathcal{B}$, see Figure \ref{fig:bord5}.
\begin{figure}
\centering
\includegraphics[width = 0.7 \textwidth]{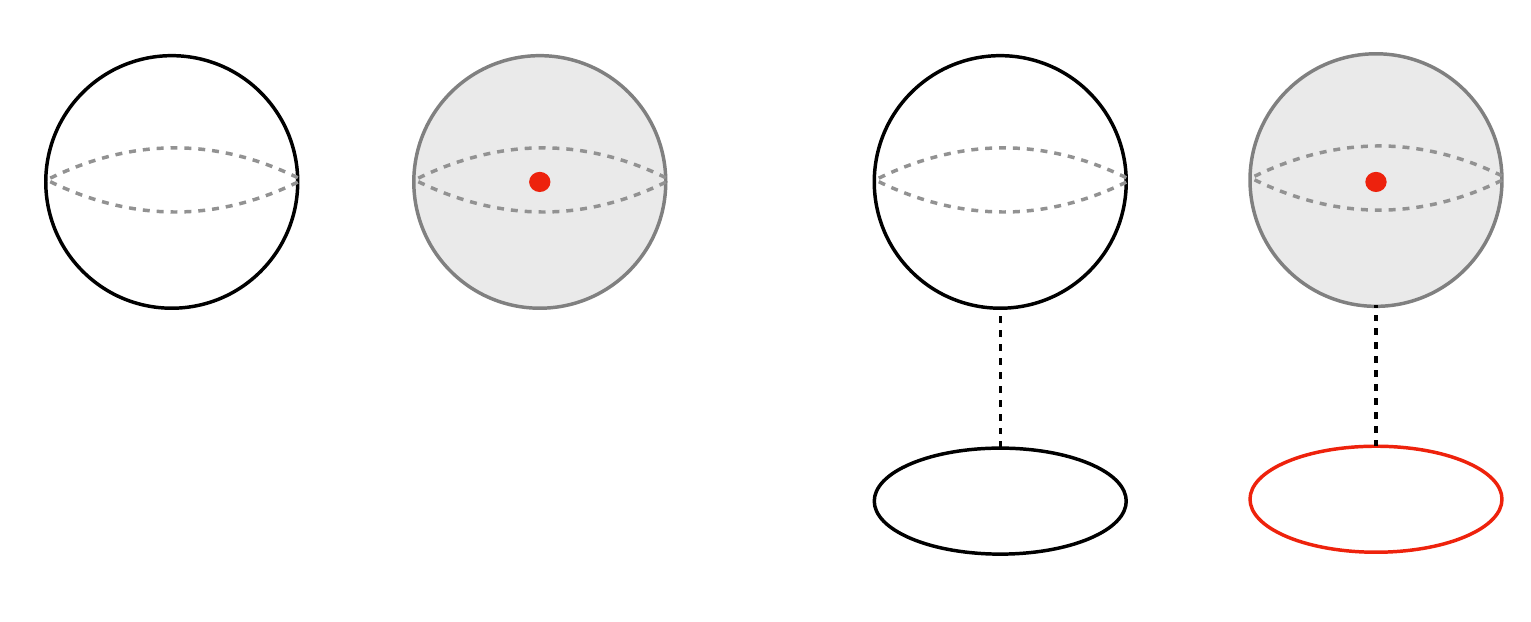}
\caption{Knowing how to bound a certain bordism class allows us to bound a higher-dimensional variant in which the class appears as the fiber of a fibration.}
\label{fig:bord5}
\end{figure}

In the context of an F-theory compactification, the appearance of a non-trivial fibration is actually required to retain supersymmetry in the lower-dimensional vacua. Indeed, given an elliptically fibered $n$-fold $X_{n} \rightarrow \mathcal{B}_{n-1}$ for some base $\mathcal{B}_{n-1}$,
we have demand a fixed duality bundle structure on the asymptotic geometry $\partial \mathcal{B}_{n-1}$. If we compactify further by fibering the original background over a space $\mathcal{C}_{m}$, the condition that the same bundle structure is retained upon further compactification means, in tandem with the Ricci flatness condition, that the resulting Calabi-Yau $(n+m)$-fold $X_{n+m} \rightarrow \mathcal{B}_{n+m-1}$ has a base $\mathcal{B}_{n+m-1}$ which is itself fibered over $\mathcal{C}_{m}$, namely we have $\mathcal{B}_{n-1} \rightarrow \mathcal{B}_{n+m-1} \rightarrow \mathcal{C}_{m}$.

In many cases, this can be interpreted in gauge theory terms as the implementation of a corresponding (partial) topological twist of the sort
considered in reference \cite{Witten:1988ze} (see also \cite{Vafa:1994tf}). To see why, recall that in many F-theory models, a singularity in the bulk can be interpreted as geometrically engineering a gauge theory, as obtained from branes wrapped on cycles of the base of the F-theory model. Now, the condition that we retain supersymmetry in such models amounts to the requirement that the brane worldvolume theory is topologically twisted \cite{Bershadsky:1995qy}. This partial twisting has been carried out for F-theory on Calabi-Yau threefolds (implicitly) in \cite{Bershadsky:1996nh}, for Calabi-Yau fourfolds in \cite{Beasley:2008dc, Donagi:2008ca}, and Calabi-Yau fivefolds in \cite{Apruzzi:2016iac, Schafer-Nameki:2016cfr}. See also \cite{Acharya:2000gb, Pantev:2009de, Heckman:2018mxl} for related analyses in the context of geometric engineering in M-theory backgrounds. From this perspective, it is natural to interpret some of our asymptotic geometries as obtained from a further compactification of a higher-dimensional vacuum.

Of course, we will also be interested in backgrounds where supersymmetry may not be preserved, but even here there is a correlation between the bundle structure in the fiber and base, so we will continue to refer to these as ``twisted compactifications''. In physical terms, these
cases are obtained by starting from a higher-dimensional defect and wrapping it over a cycle in the new base. In
general, the duality bundle will also involve the base of the fibration. This happens for example for the
Spin-GL$^+ (2,\Z)$ generators $\halfQseven$ and $\ninedimgen$, as well as for the Arcana. Moreover, in general the duality bundle will also involve reflections. To construct the defect for the fiber bundle, we would wrap the defect in the fiber along the base, where there is too a duality bundle. In order for these configurations to be well-defined, the worldvolume degrees of freedom of the defects wrapped around $\mathcal{B}$ need to be compatible with these generalized duality bundles, that is, reflections should be a global symmetry of the worldvolume theory of the fiber. Since the defects involve the reflection 7-brane (R7-brane) whose worldvolume theory is hard to identify (see however \cite{Dierigl:2022reg}) we assume that the duality bundle is compatible with the defects and postpone a detailed analysis for future work. An encouraging hint is that the R7-brane is a defect for a circle compactification with a monodromy in the reflection symmetry, a stringy generalization of a vortex, and in general, the symmetry that is protecting a vortex is unbroken at its core (as an example, consider vortex strings in the abelian Higgs model).

\subsection{Codimension-nine defects}
\label{subsec:Mpcodim9}

As in the case of codimension-five defects, these classes correspond to Spin-$\Z/8\Z$ generalizations of the
non-trivial Spin 8-manifolds specified by the value of $p_1^2$ and $p_2$. In the case of the quaternionic
projective space $\HP^2$, we can view this as the boundary geometry of a linear dilaton background for a
gauged WZW model. Indeed, since we have the coset construction $\HP^2  = \Sp(3 )/ \big(\Sp(2) \times \Sp(1)\big)$,
it is enough to begin with the WZW model for the Lie group $\Sp(n+1)$ and gauge the appropriate subgroup. Solving
the string theory equations of motion then requires a dilaton gradient. See e.g., \cite{Gepner:1986wi} for further
discussion. The corresponding geometry will have a singularity as the dilaton runs to strong coupling, but the point of view that we take here is that existence of the boundary (as demanded by the Cobordism Conjecture) in fact requires that this singularity be allowed. The Cobordism Conjecture has been used in the past to argue for the consistency of strong coupling singularities whose fate and consistency are hard to determine otherwise, see \cite{Buratti:2021yia}.

 The other bordism generator we have encountered is the Bott manifold $B$, which we can take to be a
manifold of $\Spin(7)$ metric holonomy. Unlike in the previous case, there is no simple way to describe the bounding geometry of this manifold in the worldsheet, although presumably it would also involve a singularity where the dilaton runs to strong coupling. It would be very interesting to elucidate the physics of the corresponding non-supersymmetric defect, but this lies beyond the scope of this paper.

\subsection{Codimension-ten defects}
\label{subsec:Mpcodim10}

Codimension-ten defects are objects localized in spacetime and can be associated to exotic instanton backgrounds. However, for
\begin{align}
\Omega^{\text{Spin-Mp}(2,\Z)}_9 (\pt) = (\Z/{4}\Z) \oplus (\Z/8\Z) \oplus (\Z/128\Z) \oplus (\Z/3\Z) \oplus (\Z/27\Z) \,,
\end{align}
only the first the three summands $(\Z/4\Z) \oplus (\Z/128\Z) \oplus (\Z/27\Z)$ are associated to localized
defects, which we call S-instantons, whereas $(\Z/8\Z) \oplus (\Z/3\Z)$ can be understood as the 7-brane stacks of
Section \ref{subsec:Mpcodim2} wrapped around $\HP^2$. We therefore focus on the three factors generated by $L^9_k$,
which can be described as the projection to the base of F-theory backgrounds of the form
\begin{align}
(\mathbb{C}^5 \times T^2) / (\Z/k\Z) \,.
\end{align}
For $k = 3$ the central singularity is canonical and terminal for all consistent values of the $j_i$. For $L^9_3 (1,1,1,1,1)$ there is the possibility of describing the geometry above in terms of a non-compact singular Calabi-Yau 6-fold, for which the elliptic fiber coordinate needs to transform under $\Z/3\Z$, $\lambda \mapsto e^{2 \pi i/3} \lambda$. It is tempting to speculate about preserved supersymmetries in the effective zero-dimensional system, `living' on the S-instanton defect.

For $k = 4$ the two generators differ once more by the choice of Spin-$\Z/8\Z$ structure and can be detected by the $\eta$-invariants
\begin{align}
\begin{tabular}{c | c | c }
 & $\eta^{\text{D}}_{1/2}$ & $\eta^{\text{RS}}_{1/2}$ \\ \hline
 $L^9_4 (1,1,1,1,1)$ & $\tfrac{9}{128}$ & $- \tfrac{19}{128}$ \\
 $\widetilde{L}^9_4 (1,1,1,1,1)$ & $\tfrac{7}{128}$ & $\tfrac{3}{128}$ \\
\end{tabular}
\label{tab:5detas3}
\end{align}
which suggests the two generators
\begin{align}
\Z/{128}\Z: \enspace \big[ L^9_4(1,1,1,1,1) \big] \,, \quad \Z/4\Z: \enspace 7 \, \big[ L^9_4 (1,1,1,1,1) \big] - 9 \big[ \widetilde{L}^9_4 (1,1,1,1,1) \big] \,.
\end{align}
The above geometries can be realized as boundaries of a local and singular elliptically fibered Calabi-Yau six-fold. All of the central singularities are canonical and terminal, both in the base and total space of the associated F-theory model. Again, the analysis of the bordism groups points towards interesting new backgrounds of type IIB string theory described by S-instantons.

This concludes our discussion of Spin-$\Mp(2,\Z)$ bordism defects.


\section{Spin-$\GL^{\boldsymbol{+}}$ defects}
\label{sec:GLdefects}

We now include the reflections and their Pin$^+$ lift in the duality group and analyze the
defects associated with the generators of the bordism groups $\Omega^{\text{Spin-GL}^+ (2,\Z)}_k (\pt)$. Comparing these bordism groups with the ones for Spin-$\Mp(2,\Z)$ manifolds in Table \ref{the_big_table}, we see that the groups in dimension $k = (4n+1)$ are reduced while the groups in dimension $k = (4n+3)$ are typically increased. The reason is based on two competing effects. First, since every Spin-$\Mp(2, \Z)$ manifold is also a Spin-$\GL^+ (2, \Z)$ manifold, one naively expects to find more generators. However, the Spin-$\GL^+ (2, \Z)$ manifolds also allow a larger class of deformations and thus tend to identify formerly distinct generators.

The fact that this identification predominantly happens in dimensions $k = (4n + 1)$, can be understood as follows. In many cases the Spin-$\Mp(2,\Z)$ generators are associated to characteristic classes of the duality bundle. These characteristic classes switch sign under the action of the reflection operator exactly in dimension $k = (4n +1)$ and thus the associated bundles can be deformed into each other on the level of Spin-$\GL^+ (2,\Z)$ manifolds. In dimension $k = (4n +3)$, however, the characteristic classes are even under reflections, and can not be deformed into each other. We will demonstrate this explicitly for $n = 0$, i.e., $k = 1$, below.

As for Spin-$\Mp(2,\Z)$ bordisms we encounter several interesting new classes and backgrounds for type IIB string theory as well as known objects. We summarize these for the odd-dimensional bordism groups in Table \ref{tab:GLdefects}.

\subsection{Codimension-two defects}
\label{subsec:GLcodim2}

The codimension-two defects predicted by
\begin{align}
\Omega^{\text{Spin-GL}^+(2,\Z)}_1 (\pt) = (\Z/2\Z) \oplus (\Z/2\Z) \,,
\label{eq:GL1bord}
\end{align}
seem to be more sparse than in the Spin-$\Mp(2,\Z)$ case discussed in Section \ref{subsec:Mpcodim2}. The reason, as pointed out above, is that the additional reflections allow for a deformation of several distinct Spin-$\Mp(2,\Z)$ structures into each other, thus, identifying the associated manifolds via Spin-$\GL^+(2,\Z)$ bordisms. We demonstrate this for the subgroup $\Z/3\Z$ of $\Omega^{\text{Spin-Mp}(2,\Z)}_1 (\pt)$.

The $\Z/3\Z$ factor is generated for example by a circle with a transition function associated to $g = \hat{U}^4$,
see Appendix \ref{app:groups}.\footnote{Technically this is twice the generator $L^1_3$ defined above, that also
generates the full $\Z/3\Z$ summand, but is more convenient here because of the amalgam structure.} The disjoint
union of three copies of this generator is null-bordant, see the left-hand side of Figure \ref{Fig:Killingbord}. However, one cannot glue two of the three boundaries because the transition functions do not match. Once we include reflections, however, one can change the transition function according to the group law
\begin{align}
\hat{R} \hat{U}^4 \hat{R}^{-1} = \hat{U}^{-4} = \hat{U}^8 \,.
\end{align}
Including such a reflection transition function in the two-dimensional bulk corresponds to switching on a non-trivial background for the reflection part of the symmetry. This allows us to glue two of the boundary components. This means that $2 \big[L^1_3\big]$ bounds as a Spin-$\GL^+ (2,\Z)$ manifold. Since on $\Z/3 \mathbb{Z}$, both non-trivial elements of the group are twice each other, all bordism classes in this factor are trivialized.
\begin{figure}
\centering
\includegraphics[width = \textwidth]{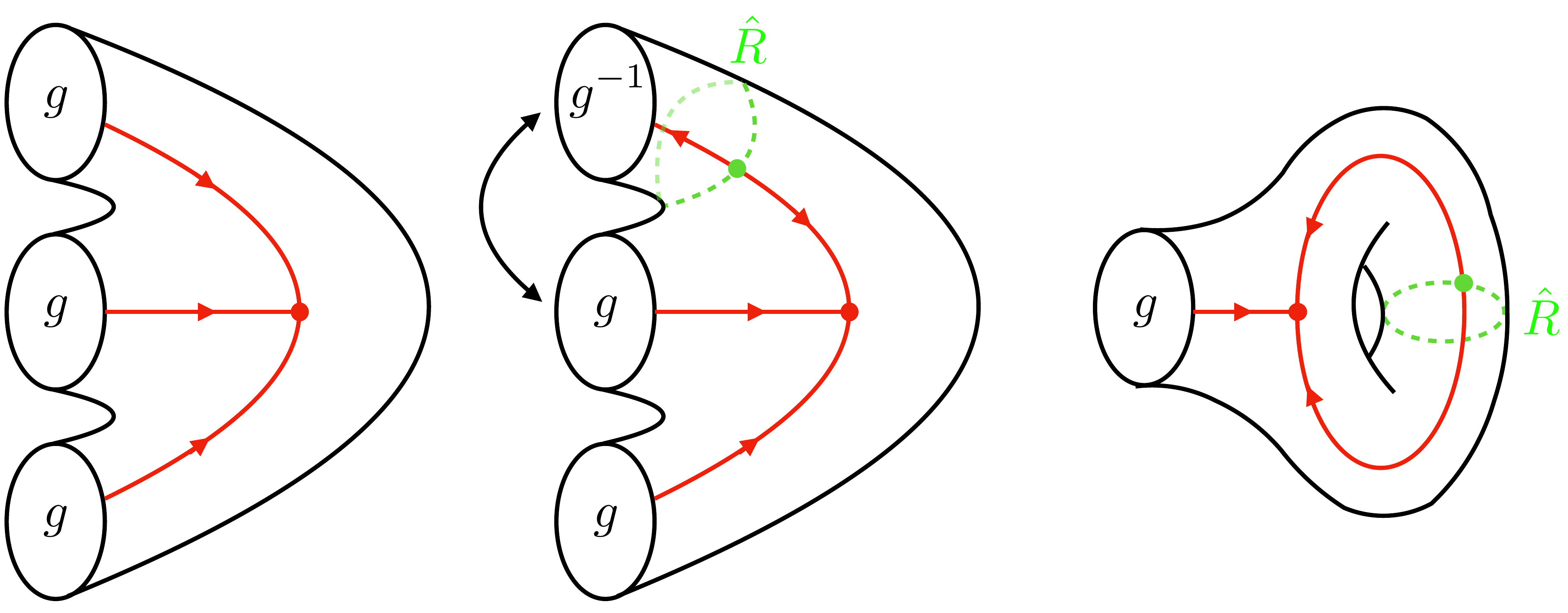}
\caption{Depiction of the null-bordism for $2 \big[ L^1_3 \big]$ in $\Omega^{\text{Spin-GL}^+ (2,\Z)}_1 (\pt)$. The red lines denote $\Z/3\Z$ transition functions with the arrow distinguishing between $g$ and $g^{-1}$, where $g$ generates $\Z/3\Z$. The green dotted line depicts a reflection transition function $\hat{R}$, which effectively flips the arrow.}
\label{Fig:Killingbord}
\end{figure}
This is summarized pictorially in Figure \ref{Fig:Killingbord}.

The same happens for the $\Z/8\Z$ factor generated by $L^1_4$. The reflection acts on the group element encoding the transition function as
\begin{align}
\hat{R} \hat{S} \hat{R} = \hat{S}^{-1} = \hat{S}^7 \,.
\end{align}
This demonstrates that each element in $\Z/8\Z \subset \Omega_1^{\text{Spin-Mp}(2,\Z)} (\pt)$ gets mapped to its inverse under reflection, and by the same argument as above, every class corresponding to an element of $\Z/8\Z$ which is a multiple of two becomes trivial in cobordism. This kills the elements of order 2, and 4, and  the remaining non-trivial class is a circle with a holonomy given by the $\hat{S}$ generator, whose boundary is given by a stack of 7-branes inducing an $\mathfrak{e}_7$ singularity in the fiber, i.e., fiber type $III^*$. This is the generator of one of the $\Z/2\Z$ factors, see also the arguments in \cite{McNamara:2021cuo}.

The other generator in \eqref{eq:GL1bord} is given by a circle with a transition function given by $\hat{R}$,
denoted as $S^1_R$. This requires the addition of a new type of 7-brane in type IIB string theory that induces a
monodromy of determinant $(-1)$, i.e., including reflections, which we refer to as reflection 7-brane or R7-brane. See reference \cite{Dierigl:2022reg} for further analysis on the physics of R7-branes. Two variants of this reflection brane will be particularly important, see also Appendix \ref{app:groups} and Figure \ref{fig:D8emb}. One, the $\Omega$-brane, induces a monodromy given by the worldsheet orientation reversal. The other, the $\mathsf{F}_L$-brane, instead generates fermion parity in the left-moving sector of the string worldsheet. They are S-dual to each other and can therefore be deformed into each other by brane moves involving the usual $[p,q]$-7-branes, similar to the transitions described in \cite{Gaberdiel:1997ud, Gaberdiel:1998mv, DeWolfe:1998zf}.

We want to emphasize that the mechanism that we have discussed here for constructing a geometric boundary of some bordism classes in Spin-$\Mp(2,\Z)$ means that we have geometrized the corresponding 7-branes. Ordinarily, $[p,q]$-7-branes are not described by a smooth geometry, and fields become singular at their core. The geometric defects described above allow us to construct completely smooth objects with the same duality monodromy as e.g., a type $IV^*$ singularity, but that are completely smooth -- just a torus glued to flat space with a particular monodromy. A similar situation arises e.g., in heterotic string theory, where NS5-brane charge can be carried by a singular brane or a K3 geometry (or a gauge bundle). Just as in that case, we should entertain the possibility that the smooth and singular supersymmetric configurations are two different ``phases'' or configurations of a single underlying object, specified by its monodromy charges. The topologically non-trivial, non-supersymmetric smooth 7-brane background we constructed should be viewed as a  highly excited state of the supersymmetric 7-brane, which, being BPS, is the lightest object with the specified monodromy.

Finally, we have seen that an inclusion of reflections allows deformations of many of the different Spin-$\Mp(2,\Z)$ classes into each other. This was possible because they reverse the transition functions, or alternatively the characteristic classes of the duality background. Hence we expect such reductions of the bordism group to happen in dimensions where the typical duality background transforms with a minus sign under reflection. This is the case in dimension $k = (4n +1)$ (see the characterization of the duality bundle in \cite{Debray:2021vob}) and precisely corresponds to the reductions of the bordism groups in Table \ref{the_big_table}. In dimensions $k = (4n +3)$ the duality backgrounds typically are invariant under reflections and one does not encounter this reduction. Instead, one often obtains new backgrounds involving transition functions that contain reflections. In even dimensions the duality structure is usually not what supports the non-trivial bordism classes and we also do not expect a reduction, verifying the observations in Table \ref{the_big_table}.

\subsection{Deforming F-theory on K3 to nothing}
\label{subsec:FonK3to}
There is an interesting application of the bordism groups and branes described so far, which we felt merits its
own subsection. It is related to one of the outstanding questions left open in \cite{McNamara:2019rup}, namely
constructing a boundary for F-theory on K3.  A K3 surface has a limit in which it can be described as an elliptic
fibration over $\CP^1$ with four $I^*_0$ fibers, which is related to the description as a $T^4/(\Z/2\Z)$ orbifold.
This is known as Sen's limit in the F-theory literature \cite{McNamara:2019rup}. F-theory on K3 in Sen's limit is a
perturbative IIB configuration, which can be described as a $T^2/(\Z/2\Z)$ compactification of 10-dimensional IIB where there are four singularities (of Kodaira type $I^*_0$) at the fixed points of the action, as depicted in the left panel of Figure \ref{fig:K3bordism}. Each of these singularities can be described as a brane stack composed of four D7-branes on top of an O7$^-$ orientifold plane.

\begin{figure}
\centering
\includegraphics[width = 0.6 \textwidth]{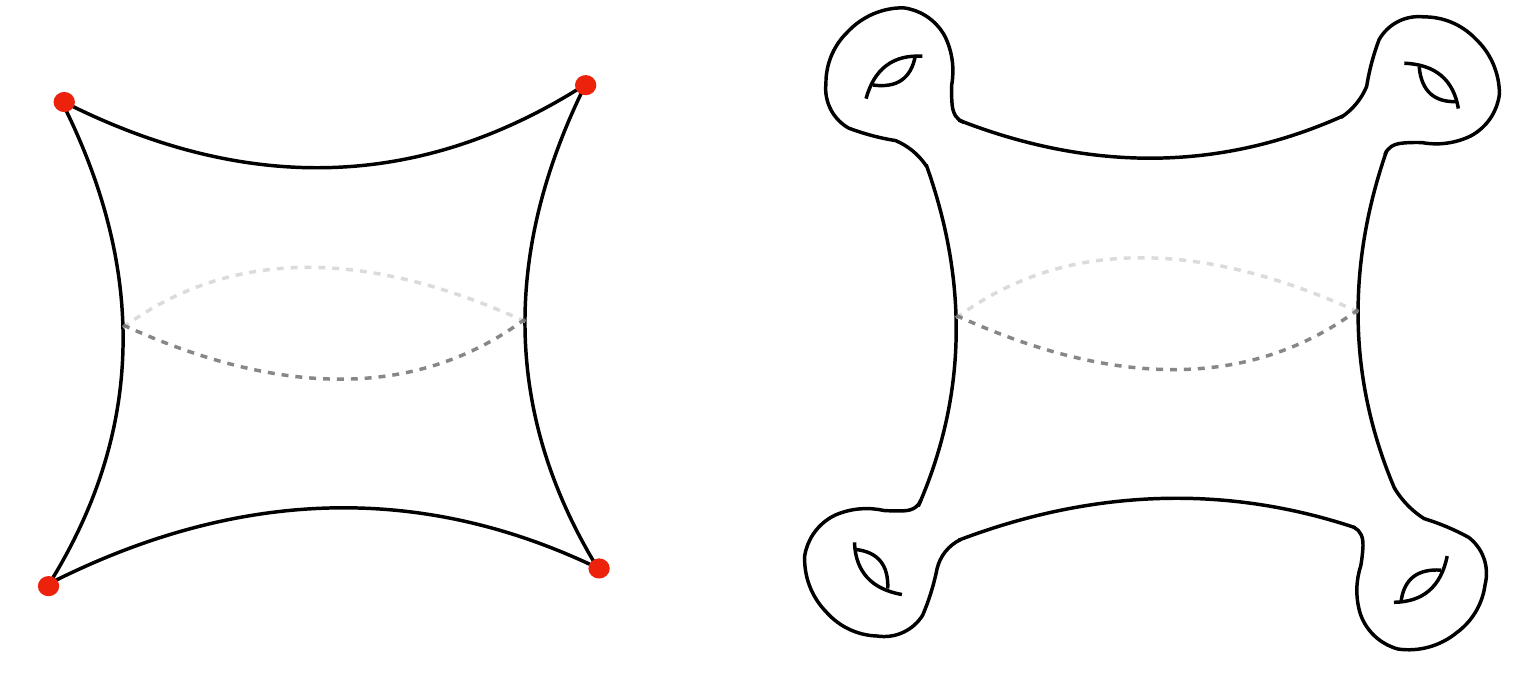}
\caption{Null-bordism for F-theory on K3 via a transition from a brane stack to a smooth Spin-$\GL^+(2,\Z)$ geometry. Here, small discs around the orientifold planes are replaced by 2-tori with a non-trivial duality bundle containing reflections.}
\label{fig:K3bordism}
\end{figure}

An $I^*_0$ singularity is specified by a linking circle with monodromy $\hat{S}^2$, which is one of the classes which we argued is trivial in Spin-$\GL^+(2,\Z)$ bordism in the previous subsection.  As a result, the brane stack that describes its boundary can be replaced, at the level of monodromies, by a completely smooth configuration without any branes, as depicted in Figure \ref{fig:S2bordism}.
\begin{figure}
\centering
\includegraphics[width = 0.6 \textwidth]{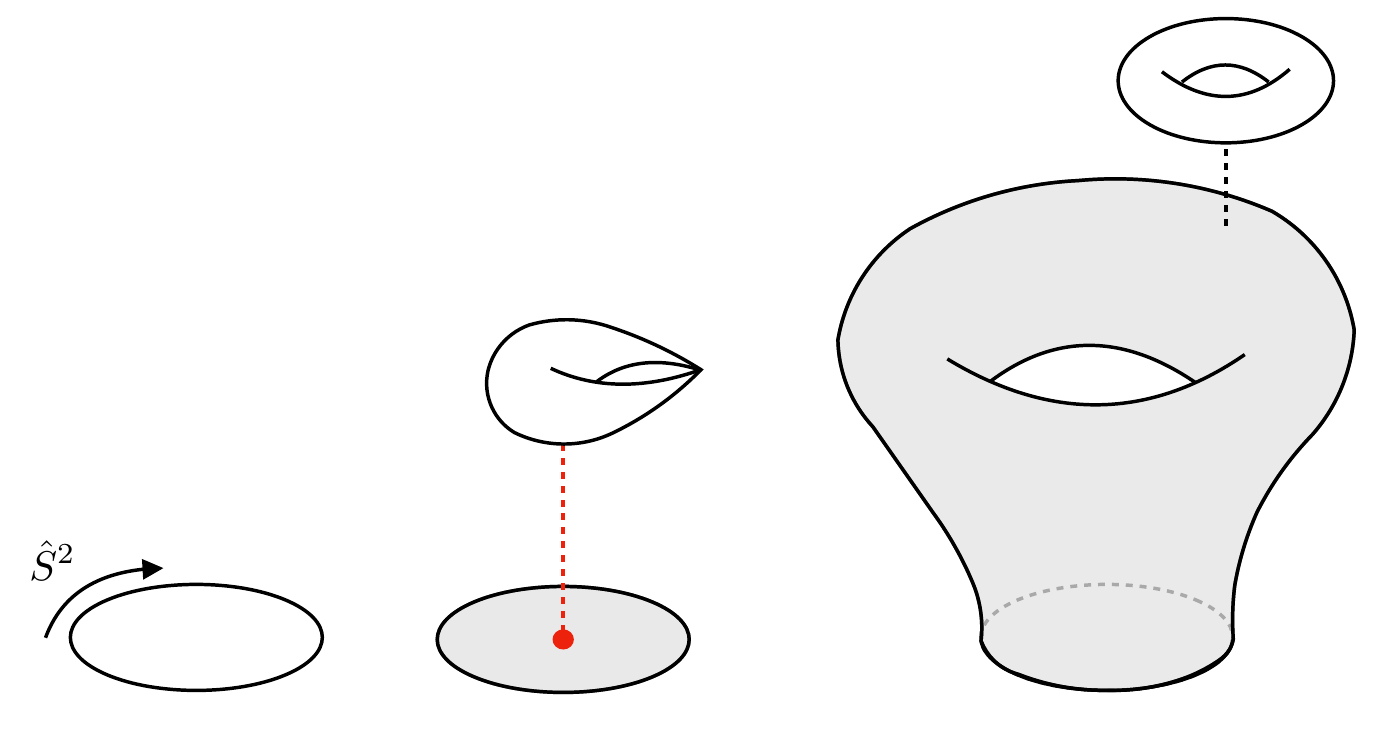}
\caption{Bordism for a circle with transition function $\hat{S}^2$, by singular configuration involving a brane stack or non-singular configuration containing reflections.}
\label{fig:S2bordism}
\end{figure}
As we emphasized near the end of the previous Subsection, we cannot be sure that the $I^*_0$ singularity is the same as the smooth configuration depicted to the right in Figure \ref{fig:S2bordism}. But they have the same asymptotic charge, so we will now assume that the two configurations can be deformed into each other and explore the consequences. What this means is that F-theory on K3 is cobordant to the configuration depicted on the right panel of Figure \ref{fig:K3bordism}, where each of the singularities has been replaced by a glued torus with the appropriate duality monodromy. We have now replaced F-theory on K3 by a completely smooth IIB configuration, which is however non-supersymmetric. The interesting point is that since $\Omega_2^{\text{Spin-}{\GL^+(2,\Z)}} (\pt) =0$, this configuration is trivial in bordism of smooth manifolds with duality bundle.

In other words, we have reduced the global question of finding the boundary of F-theory on K3 to the \emph{local} question of establishing the equivalence between the $I^*_0$ singularity and the smooth bordism we constructed.

\subsection{Codimension-three defects}
\label{subsec:GLcodim3}

There is a single generator of
\begin{align}
\Omega^{\text{Spin-GL}^+(2,\Z)}_2 (\pt) = \Z/2\Z \,,
\end{align}
associated to a codimension-three defect. However, since it is given by a product of the circle with periodic boundary conditions on the fermions without a duality bundle, i.e., the generator of $\Omega^{\Spin}_1 (\pt)$, and the generator of $\Omega_1^{\text{Spin-GL}^+ (2,\Z)} (\pt)$ involving the reflection brane, we do not need to introduce additional defects. Indeed, the inclusion of an R7-brane allows us to fill one of the circles. Thus the associated string theory background corresponds to an R7-brane wrapped on an $S^1$.

\subsection{Codimension-four defects}
\label{subsec:GLcodim4}

For codimension-four objects we have
\begin{align}
\Omega^{\text{Spin-GL}^+(2,\Z)}_3 (\pt) = (\Z/2\Z) \oplus (\Z/3\Z) \oplus (\Z/2\Z) \oplus (\Z/2\Z) \,.
\end{align}
The first two factors are identical to Spin-$\Mp(2,\Z)$ with the exact same generators and we need to include the same non-Higgsable clusters as in Section \ref{subsec:Mpcodim4}. The remaining two $\Z/2\Z$ factors both involve the action of reflections and have an asymptotic $\mathbb{RP}^3$ geometry. It is not difficult to identify these objects as O5-planes, or their S-duals \cite{Hanany:2000fq} if we are interested in the background with $\mathsf{F}_L$-branes. Once more we stress that since our analysis is not sensitive to RR- and NS-fluxes in the boundary we cannot further specify the discrete torsion allowed for these objects, nor can we match their $[p,q]$-5-brane charges. While the background with reflection branes likely breaks supersymmetry completely, the orientifold backgrounds can preserve part of the supersymmetry. It would be interesting to explore whether also the second generator can be bounded by a supersymmetric background.

We can also provide an alternative construction for a related defect, as a background of the form
\begin{align}
(\mathbb{C}^2 \times T^2)/ (\Z/2\Z) \,,
\end{align}
where the $\Z/2\Z$ action of the fiber now also involves a reflection $R$. The two different variants are associated to the two possible embeddings of $D_4$ or $D_8$ into $D_{16}$, see also \cite{Debray:2021vob} and Appendix \ref{app:embdihedral}, illustrated in Figure \ref{fig:D8emb}.
\begin{figure}
\centering
\includegraphics[width = 0.65 \textwidth]{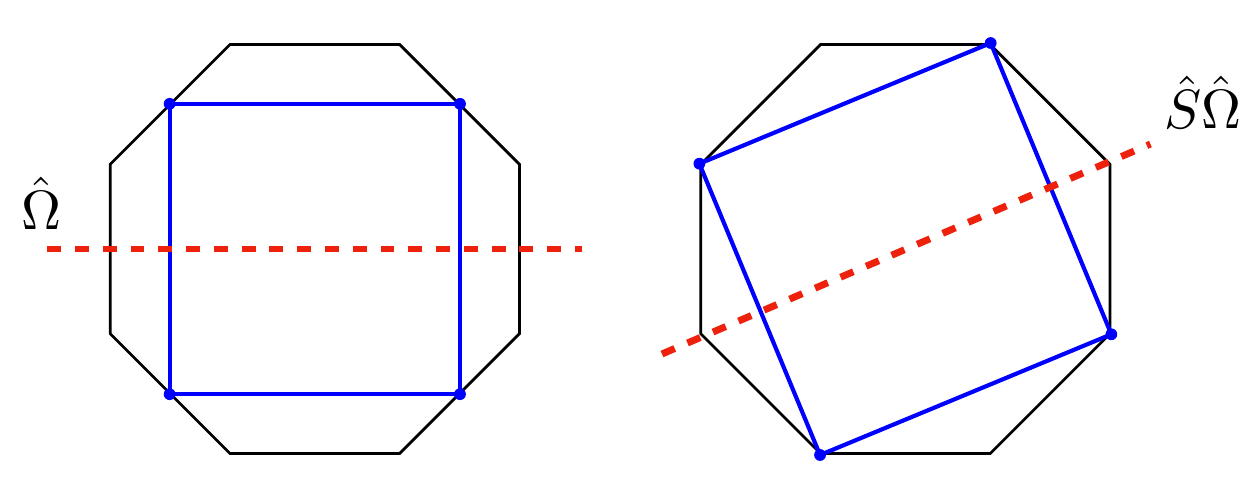}
\caption{Two embeddings of $D_4$ and $D_8$ into $D_{16}$ with the action of $\hat{\Omega}$ and $\hat{S} \hat{\Omega}$ indicated.}
\label{fig:D8emb}
\end{figure}
We see that for the two variants the reflections can be described by $\hat{\Omega}$ and $\hat{S} \hat{\Omega}$, respectively. Note that instead of $\hat{\Omega}$ one could also use its S-dual given by $(-1)^{\hat{F}_L}$. Therefore, the base geometry is given by the asymptotic boundary of
\begin{align}
\mathbb{C}^2 / (\Z/ 2 \mathbb{Z})_{\hat{\Omega}} \quad \text{or} \quad \mathbb{C}^2 / (\Z/ 2 \mathbb{Z})_{\hat{S} \hat{\Omega}} \,,
\label{eq:O5backg}
\end{align}
where the subscript on the $\Z/2\Z$ indicates which symmetries are acting on the background. Picturing the real projective spaces as asymptotic boundaries of the complex line bundle $\mathcal{O}(-2)$ over $\mathbb{CP}^1$, we see that this configuration can be obtained by wrapping an $\Omega$-brane and a combination of an $\Omega$-brane with a stack of $[p,q]$-7-branes generating $\hat{S}$ on the base $\mathbb{CP}^1$. This is parallel to the construction of the non-Higgsable clusters above, with the difference that the curve of self-intersection $(-2)$ is now additionally wrapped by an R7-brane. Thus, the defect killing this bordism class can be realized by either ordinary O-planes or by a wrapped R7-brane. This suggests that the compactification of an R7-brane on $\mathbb{RP}^2$ is a non-supersymmetric object with the same charges as an O5-plane.

\subsection{Codimension-five defects}
\label{subsec:GLcodim5}

As for Spin-$\Mp(2,\Z)$ the bordism group in dimension four is also a free $\Z$ factor generated by the Enriques surface $E$. As above we do not know how to properly bound this manifold in type IIB string theory. However, since it is the same generator as in Section \ref{subsec:Mpcodim5}, we do not expect that new ingredients or backgrounds involving R7-branes will be necessary.

\subsection{Codimension-six defects}
\label{subsec:GLcodim6}

Much as in the case of codimension-two defects considered in Section \ref{subsec:GLcodim2}, in codimension-six (an increase in dimension of four), we inherit bordism generators from the related Spin-$\Mp(2,\Z)$ backgrounds
\begin{align}
\Omega^{\text{Spin-GL}^+(2,\Z)}_5 (\pt) = (\Z/2\Z) \oplus (\Z/2\Z) \,.
\end{align}
The only surviving configuration is $L^5_4$ generating $\Z/2\Z$ which can be bounded by including the $\Z/4\Z$ S-fold background discussed in Section \ref{subsec:Mpcodim6}, with either Spin structure.

The other generator $X_5$, Arcanum V, can be described as an $\RP^3$-fibration over $\RP^2$
\begin{align}
\begin{split}
\RP^3 \enspace \hookrightarrow \enspace & \, X_5 \\
\enspace & \, \downarrow \\
 \enspace & \RP^2
\end{split}
\end{align}
The fiber $\RP^3$ can then be identified with the asymptotic boundary of $\mathbb{C}^2 / (\Z/2\Z)$, and can
be bounded by the configuration in Section \ref{subsec:GLcodim4} -- an O5-plane. Similar to the twisted
compactification involving $Q^7_4$ in Section \ref{subsec:Mpcodim8}, we can describe the boundary of the Arcanum~V
class by wrapping an O5-plane on the base $\RP^2$. Although the O5-plane does not have localized degrees of
freedom, one could in principle consider a O5-D5 brane stack, which would also be a boundary. For such
configurations, there is a potential question as to whether it is indeed consistent to compactify the theory on
$\RP^2$, which is not a Spin manifold, similar to other classes we discussed before. We will now see that
this is the case more explicitly. As described in Section \ref{arcana}, the Arcanum V manifold is a quotient of
$S^2\times S^3$ by two $\Z/2\Z$ actions, one of which is just the antipodal mapping on $S^3$ (with a certain
duality bundle). So the boundary of this is precisely an $\R^4/(\Z/2\Z)$ singularity with an action of
$\Omega$ (an O5-plane compactified on $S^2/(\Z/2\Z)=\RP^2$, where the $(\Z/2\Z)$ action is a
composition of the antipodal mapping on $S^2$ with an action of $(-1)^{F_L}$ and the reflection of three coordinates of
the normal bundle of the O5. These additional three reflections ensure that the action commutes with the one
defining the O5-plane, and that it is a symmetry of the worldvolume theory. In terms of worldvolume fields, one is
quotienting by a $\Z/2\Z$ involving the antipodal mapping together with a combination of $(-1)^{F_L}$ and a $\Z/2\Z$
on the $\SSO(4)$ normal bundle directions, which is a symmetry of the IIB background. Furthermore, because it involves only perturbative symmetries, this defect should be amenable to a IIB worldsheet description too, as a particular orbifold of an orientifold.

To sum up, once more, the defects necessary to generate the bordism groups can all be obtained by (non-trivially fibered)
compactifications involving the reflections of real coordinates in the ambient space of the fiber.

\subsection{Codimension-seven defects}
\label{subsec:GLcodim7}

Just as in Section \ref{subsec:Mpcodim7}, the bordism group with $k = 6$ vanishes, and all classes can be killed geometrically.

\subsection{Codimension-eight defects}
\label{subsec:GLcodim8}

Defects in codimension eight are associated to elements in
\begin{align}
\Omega^{\text{Spin-GL}^+(2,\Z)}_7 (\pt) = (\Z/2\Z)^{\oplus 3} \oplus (\Z/4\Z) \oplus (\Z/9\Z) \,.
\end{align}
As predicted in dimension $(4n + 3)$ we discover the full set of Spin-$\Mp(2,\Z)$ classes bounded by S-strings and
twisted compactifications of the S-folds discussed in Section \ref{subsec:Mpcodim8}. We further find two real projective spaces, for which the bordism defects are simply the O1-plane and its S-dual. These singularities can be described as $\mathbb{C}^4 / (\mathbb{Z} / 2 \mathbb{Z})$,
where the $\mathbb{Z} / 2 \mathbb{Z}$ group action is generated by an appropriate combination of duality generators:
\begin{align}
\mathbb{C}^4 / ( \Z/ 2 \mathbb{Z})_{\hat{\Omega}}  \quad \text{and} \quad \mathbb{C}^4 / ( \Z/ 2 \mathbb{Z})_{\hat{S} \hat{\Omega}} \,,
\end{align}
associated to the two $D_8$ embeddings depicted in Figure \ref{fig:D8emb}. The subscript on the $\mathbb{Z} / 2 \mathbb{Z}$ indicates the generator for this group. Much as in the case of the defects in Section \ref{subsec:GLcodim4}, objects with the same bordism charge can be constructed by wrapping reflection branes or a combination of reflection and conventional $[p,q]$-7-branes on the base $\mathbb{CP}^3$.  Once more it would be interesting to add fluxes and study the associated string charges as well as supersymmetry properties.

The defects for the two remaining generators are slightly more subtle. The defect for $W^7_1$ can be obtained from the Spin-$\Mp(2,\Z)$ generator by an additional $\Z/2\Z$ action involving reflections. As for $Q^7_4$, we can interpret its defect as a compactification (with non-trivial fibration) of the S-fold in codimension-six, with the subtle difference that the fibration does not only involve discrete rotations encoded in a Spin-$\Z/8\Z$ structure but also involves reflections captured by the more general Spin-$D_{16}$ structure of $W^7_1$. We therefore refer to these theory as double-twisted compactifications in Table \ref{tab:GLdefects} and once more point out the assumption discussed in Section \ref{subsec:fibgen} concerning the consistency of the duality action on the worldvolume fields.

The last generator, $\orangeseven$, has a slightly different interpretation. It originates from a manifold with
Spin-GL$^- (2,\Z)$ structure via a Smith homomorphism. It can be understood as a quotient of $S^3 / Q_{16} \times
T^2 \times S^2$, with $Q_{16}$ the binary dihedral group of order $16$. The discrete quotient introduces a
non-trivial fibration of $S^3 / Q_{16}$ over $\RP^2$, whose total space is in turn fibered over a Klein bottle, see
Section \ref{7d_orange} for further details. As discussed in \ref{subsec:fibgen} we can interpret the corresponding defect as the twisted compactification of an object with $S^3 / Q_{16}$ as an asymptotic boundary, i.e., a D-type Du Val singularity in type IIB string theory. Therefore, this defect does not require the introduction of any further string theory object, but once more points towards interesting spacetime configurations (here, twisted compactifications of ADE singularities). As in other cases above it would be very interesting to determine the behavior of the localized degrees of freedom of this background  under the twist introduced by the fibration structure.

We see that these backgrounds are natural generalizations of those found in \ref{subsec:Mpcodim8}, which induces a different class of string like defects and twisted compactifications.

\subsection{Codimension-nine defects}
\label{subsec:GLcodim9}

The codimension-nine defects are induced by
\begin{align}
\Omega_8^{\text{Spin-GL}^+(2,\Z)} (\pt) = \Z \oplus \Z \oplus (\Z/2\Z) \,.
\end{align}
Let us first focus on the $\Z/2\Z$ generator given by $W^7_1 \times S^1_p$. We see that a generator of the bordism group in dimension seven appears and one can interpret this as a further circle compactification of the twisted theory discussed in the previous subsection. Therefore, we also do not need to include more defects to break the symmetries associated to this class. Similarly, we also know how to describe $S^1_p$ as a boundary and one can complete the second factor to a disc with several 7-branes wrapping around $W^7_1$.

The remaining two generators associated to the free $\Z$ subgroups are closely related to the generators of Spin bordisms. We find the quaternionic projective space $\HP^2$ that also appeared for the Spin-$\Mp(2,\Z)$ groups, see Section \ref{subsec:Mpcodim9}. The second generator $W_{1,8}$ is a refinement of the Bott manifold $B$ to Spin-$D_{16}$ manifolds and due to the fact that its Pontrjagin numbers are half of those of the Bott manifold could also be called half-Bott, which might also be realized in terms of a linear dilaton background.

\subsection{Codimension-ten defects}
\label{subsec:GLcodim10}

Last but not least we also analyze the defects in codimension ten associated to
\begin{align}
\Omega_9^{\text{Spin-GL}^+(2,\Z)} (\pt) = (\Z/2\Z)^{\oplus 8} \,.
\end{align}
Once more many of the Spin-$\Mp(2,\Z)$ classes are trivial due to the action of reflections on the various backgrounds in dimension $k = (4n+1)$.

Four of the generators, namely
\begin{align}
W^7_1 \times S^1_p \times S^1_p \,, \enspace B \times L^1_4 \,, \enspace \HP^2 \times L^1_4 \,, \enspace \HP^2 \times S^1_R \,,
\end{align}
are related to the compactification of classes above and thus are already taken care of by the higher-dimensional defects. Another generator $L^9_4$ describes the surviving Spin-$\Mp(2,\Z)$ configuration associated to an S-instanton discussed in Section \ref{subsec:Mpcodim10}.

This leaves three genuinely new backgrounds. Two of them are described by nine-dimensional versions of the Arcana
$X_9$ and $\widetilde{X}_9$. These backgrounds have the same underlying manifold, which is an $\RP^3$-bundle over
$\RP^6$, with two different Spin-$D_{8}$ structures associated to the two distinct embeddings into $D_{16}$ in
Figure \ref{fig:D8emb}. Since the fibers are given by the generators in $k = 3$, we can generate the corresponding
bordism classes by wrapping R7-branes, without the need for any additional new objects. More concretely, since the defect of  $\mathbb{RP}^3$ is just a  $D_{4}$ ALE singularity in IIB string theory, the additional background can be obtained as a $\mathbb{Z}/2\Z$ quotient of the resulting worldvolume geometry.

This leaves the final generator $W^9_1$ described by a $L^5_4$ lens space bundle over $\RP^4$. Since we know the defect that has $L^5_4$ as an asymptotic boundary, the most direct way to realize the generator for this $\Z / 2 \Z$ factor is via  wrapping the corresponding S-fold defect over $\RP^4$ with a twist involving reflections. Again, we emphasize that we have  not checked the consistency of this wrapping; although reflections are indeed a symmetry of the S-fold worldvolume theory at the bosonic level (since $\tau$ is fixed to $i$, and reflections flip the real, but not the imaginary part of $\tau$), it could be that this compactification is inconsistent at the quantum level. To study this more systematically would also require a detailed analysis of anomalies in the worldvolume theory. We defer this issue to future work.

This concludes our analysis of Spin-$\GL^+ (2,\Z)$ bordism groups and their associated defects. By the application of the Cobordism Conjecture we indeed find a new string theory defect that was not discussed before, the R7-brane. The inclusion of this codimension-two object then also allows the higher-dimensional backgrounds to appear as boundaries by wrapping the R7-branes on compact curves in the bulk. While the presence of R7-branes is expected to break supersymmetry, it is interesting that in some situations there are supersymmetry preserving orientifold planes which also trivialize the corresponding bordism groups. Including the asymptotic Ramond-Ramond fluxes in the discussion is a promising future direction to distinguish these backgrounds.


\section{$\mathbf{\Omega}_{10}$ and discrete $\boldsymbol{\theta}$-angles}
\label{sec:thetaang}

So far we have discussed bordism groups in dimensions $k \leq 9$, which are associated to defects. The bordism groups
\begin{align}
\Omega^{\text{Spin-Mp}(2,\Z)}_{10} (\pt) = \Z/2\Z \,, \quad \Omega^{\text{Spin-GL}^+(2,\Z)}_{10} (\pt) = (\Z/2\Z)^{\oplus 4} \,,
\end{align}
describe potential discrete $\theta$-angles, topological couplings that can be added to the 10d action without changing their low-energy physics.

For example the generator in Spin-$\Mp(2,\Z)$ bordisms is inherited from a generator of $\Omega^{\Spin}_{10} (\pt)$ and is described by a manifold $X_{10}$ with non-trivial $w_4 w_6$ composed of Stiefel-Whitney classes. This opens up the possibility of adding a term of the form
\begin{equation}
- \Delta S = i \pi \int w_4 w_6 \,, \label{topterm}
\end{equation}
to the action. This implies that all spacetimes that are in the bordism class of $X_{10}$ are weighted with an additional minus sign in the path integral, whereas manifolds that are trivial in bordism enter with a plus sign. See \cite{Bergman:2013ala} for examples of discrete $\theta$-angles in field theory and \cite{FH21,Montero:2022vva} for examples in string and M-theory.

Once one can write down a topological term such as \eqref{topterm} to the supergravity action, two possibilities
arise: either the value of the $\theta$-angle is frozen (to either 0 or $\pi$) in the case of \eqref{topterm} by some
consistency condition, or it is dynamical. In the latter case, it leads to two physically distinct versions of type
IIB string theory, such as is the case in nine dimensions \cite{Montero:2022vva}, or the proposed alternative
anomaly cancellation mechanisms in \cite{Debray:2021vob}. As an example of the former, consider equation \eqref{topterm} in
type IIA string theory. Since IIA is secretly the small radius limit of M-theory on $S^1$, and $\Omega^{\Spin}_{11} (\pt) =0$, the discrete $\theta$-angle is absent. It is likely that this fact can be used to argue that \eqref{topterm} is also absent from the IIB supergravity action via T-duality, but this argument must be done carefully. We leave it to future work.

Moving on to Spin-$\GL^+ (2,\Z)$ $\theta$-angles, the manifold $X_{10}$ also generates one of the $\Z/2\Z$ factors of the ten-dimensional Spin-$\GL^+ (2,\Z)$ bordism group. The remaining generators are all of the form $M \times S^1$, with $M$ a generator of one of the nine-dimensional bordism classes discussed above, similarly to the M-theory discrete $\theta$-angle proposed in \cite{FH21}. Just like the above, these discrete $\theta$-angles are captured by $\eta$-invariants of fermions, and cannot be given local expressions in terms of cohomology classes.  In fact, the background  $B \times S^1_R \times S^1_p$ has the F-theory description\footnote{There $KB$ denotes the Klein bottle and $B$ denotes a Bott manifold.} of $B\times KB\times T^2$, which under M- / F-theory duality corresponds to $B\times KB \times S^1$, which is the background charged under the $\theta$-angle proposed in \cite{FH21}. Thus, we see the $\theta$-angle described in \cite{FH21}, plus additional contributions. It is likely that the M- / F-theory duality map can give additional consistency conditions that freeze some of these $\theta$-angles: for instance, another one of our $\theta$-angles detects $M=\HP^2\times S^1_R$. While this is a non-trivial bordism class,  $\HP^2\times KB \times S^1$ is trivial in $\mathfrak{m}^c$ bordism, and there is no corresponding $\theta$-angle. This probably means that there is a valid F-theory bordism, that we do not detect with our analysis, that kills the corresponding IIB $\theta$-angle. It would be interesting to study this in more detail.


\section{Duality anomalies}
\label{sec:anomalies}

We are left with the bordism groups in dimension eleven. They parametrize the potential non-perturbative anomalies of the duality as well as duality mixing with gravity of a ten-dimensional theory. In order to explore these anomalies one has to define an invertible topological field theory in eleven dimensions \cite{FH16}, the anomaly theory $\mathcal{A}$, that reproduces the phase of the partition function of the 10d theory when placed on a manifold with boundary.

For type IIB the anomaly theory $\mathcal{A}_{\text{IIB}}$ receives contributions from the dilatini, the gravitni,
as well as the chiral 4-form (with self-dual five-form field strength), and a detailed derivation thereof was presented in \cite{Debray:2021vob}. Quite surprisingly, the duality anomaly does not vanish. While the Spin-$\GL^+ (2,\Z)$ manifolds with reflections are not problematic from the anomaly viewpoint, certain backgrounds inherited from Spin-$\Mp(2,\Z)$ are. For example one has
\begin{align}
\text{exp} \big( 2 \pi i \mathcal{A}[L^{11}_3] \big) = e^{2 \pi i / 3} \,,
\end{align}
indicating a mod $3$ anomaly for the duality. In order to restore the duality one therefore is forced to make modifications to the theory.

One possibility is to cancel the anomaly using a topological version of the Green-Schwarz mechanism, \cite{Garcia-Etxebarria:2017crf}. This introduces new discrete fields of various form degrees whose higher-form gauge transformations depend on the duality background. While the discrete, topological nature of these fields does not introduce any new local degrees of freedom, the completeness hypothesis \cite{Polchinski:2003bq} suggests the presence of defects coupling electrically and magnetically to them. Also these defects will be discrete, in that they are able to decay to the vacuum once one stacks a sufficient number of them and therefore do not appear in a perturbative description. The implementation of this mechanism is not unique and in principle several versions are imaginable. This would lead to theories that differ in the spectrum of non-local objects and opens up the possibility of a discrete landscape in ten dimensions. It might however also be the case that unexplored quantum gravity constraints affect this anomaly cancellation mechanism.

Alternatively, there is the (less radical, and thus more plausible) possibility of cancelling the duality anomalies without the introduction of new topological sectors. For this to work one couples the chiral 4-form field to the duality and gravitational background in a certain way, leading to a new topological term of the type IIB action. Once more this term only depends on discrete/torsional information and will not affect the local low-energy dynamics. However, this coupling leads to an additional contribution to the anomaly theory, via the quadratic refinement $\widetilde{\mathcal{Q}}$ necessary for the description of the self-dual 4-form, that is capable of compensating the anomalous phase. It is very intriguing that the anomalies we find are precisely of the type that can be cancelled by this mechanism, see also \cite{Dierigl:2022zll} for a six-dimensional realization of such a discrete Green-Schwarz mechanism. Moreover, up to a small number of prefactors, that we believe to be fixed by the analysis of certain special type IIB backgrounds this cancellation is unique.

To illustrate this discrete anomaly cancellation ``miracle'' let us come back to the anomaly on $[L^{11}_3]$.
We need
\begin{align}
\mathcal{A}_{\text{IIB}} [L^{11}_3] - \widetilde{\mathcal{Q}}[\check{c}] = 0 \text{ mod } \Z \,,
\end{align}
where $\check{c}$ indicates the differential cohomology element describing the duality background. It turns out that on $L^{11}_3$ the only possibility is
\begin{align}
\widetilde{\mathcal{Q}}[\check{c}] = \tfrac{1}{3} \,,
\end{align}
which shows that the anomaly can only be cancelled in such a way if $\mathcal{A}_{\text{IIB}} [L^{11}_3] = 1/3$ which is precisely what we find. Since $[L^{11}_3]$ describes a $\Z / 27\Z$ factor of the bordism group, there is a $1$ in $26$ chance of being able to cancel the anomaly (with uniform prior). A similar discussion holds for the other anomalous generators.

We see that the investigation of discrete anomalies heavily uses the determination of bordism group and their cancellation demand the modification of the low energy theory in subtle but interesting ways.

This concludes the physical interpretation of the defects predicted by the Cobordism Conjecture. We now turn to the computation of IIB duality bordism groups.


\pagebreak
\part{Bordism classes for type IIB}
\label{p:maths}

In this part of the paper we compute $\Omega_k^{\Spin}\big(B\SL(2, \Z)\big)$, $\Omega_k^{\Spin\text{-}\Mp(2, \Z)} (\pt)$, and
$\Omega_k^{\Spin\text{-}\GL^+(2, \Z)} (\pt)$ for $k \le 11$. For $k \le 5$, $\Omega_k^\Spin\big(B\SL(2, \Z)\big)$ and
$\Omega_k^{\Spin\text{-}\Mp(2, \Z)} (\pt)$ were known before, but the rest of our computations are new.
The results of our computations are presented in \cref{the_big_table,,Mp_gens_table,,gl_gens_table}.

Our primary tool for making these computations is the Adams spectral sequence. Though we are far from the first in
the physics literature to use it, this method has a reputation for being difficult to understand. Therefore we
start in Section \ref{math_background} by introducing some definitions and concepts that will be helpful for
understanding the calculations we make in the next few sections. In Section \ref{the_Adams_SS} we summarize the essential
facts needed to follow our Adams spectral sequence arguments: how to compute the $\cA(1)$-module structure on
cohomology, how to determine the $E_2$-page from this data, and a few standard tricks for resolving differentials
and extension questions. Then we march through the three main computations.
\begin{itemize}
	\item In Section \ref{sl2_spin}, we determine $\Omega_k^{\Spin}\big(B\SL(2, \Z)\big)$. We show how to determine this in terms
	of $\Omega_k^\Spin(B\Z/4\Z)$ and $\Omega_k^\Spin(B\Z/3\Z)$, both of which can quickly be reduced to things already
	in the mathematics literature.
	\item In Section \ref{mp_spin}, we compute $\Omega_k^{\Spin\text{-}\Mp(2, \Z)} (\pt)$. The result is the same, except that
	instead of $\Omega_k^\Spin(B\Z/4\Z)$, we get $\Omega_k^{\Spin\text{-}\Z/8\Z} (\pt)$. This bordism theory has been studied
	in dimensions $5$ and below by Campbell~\cite{Cam17}, Hsieh~\cite{Hsi18}, and Davighi-Lohitsiri~\cite{DL20b};
	we extend their computations to dimension $11$.
	\item In Section \ref{gl_spin}, we address $\Omega_k^{\Spin\text{-}\GL^+(2, \Z)} (\pt)$. In this case,
	$\Omega_k^\Spin(B\Z/3\Z)$ is replaced with $\Omega_k^\Spin(BD_6)$, which we compute with the Atiyah-Hirzebruch
	spectral sequence; and $\Omega_k^{\Spin\text{-}\Z/8\Z} (\pt)$ is replaced with a bordism theory we call
	$\Omega_k^{\Spin\text{-}D_{16}} (\pt)$, which we compute with the Adams spectral sequence.
\end{itemize}

With this very brief overview of the upcoming sections we will now go into the details of the derivation and the
construction of the generating manifolds. Since this part contains many mathematical results that are of interest
beyond their applications in physics, we will also adopt a more mathematical style in this part. Wherever possible,
we try to provide a paragraph that is intended to add some physical intuition to the more technical procedures and
which is indicated by a heading ``\textbf{Physics intuition}''.\footnote{Of course this should neither dissuade mathematicians from reading
these paragraphs, nor physicists from delving into the full details of the construction.}

\section{Preliminaries on computations}
	\label{math_background}

The bordism computations we undertake require some tools and concepts which are standard in algebraic topology, but
are not as well-known in physics. In this subsection, we briefly review these ideas and point the
interested reader to more complete references.

\subsection{Classifying spaces}
\label{ss:taut_bundle}
In this subsection we go over some basics on classifying spaces, their tautological bundles, and Lashof's approach
to different kinds of bordism. We do this both to provide background to the reader and to standardize notation.

Let $G$ be a topological group. Then there is a space $BG$, called the \term{classifying space} of $G$, which has
the following properties.
\begin{itemize}
	\item $BG$ is the quotient of a contractible space $EG$ by a free $G$-action. Said differently, $EG\to BG$ is a
	principal $G$-bundle with $EG$ contractible. This principal $G$-bundle is called the \term{universal} or
	\term{tautological} principal $G$-bundle.
	\item For any space $X$, isomorphism classes of principal $G$-bundles $P\to X$ are equivalent to homotopy
	classes of maps $f\colon X\to BG$; in one direction, this equivalence is achieved by setting $P\coloneqq
	f^*(EG)$.
\end{itemize}
Cohomology classes for $BG$ define characteristic classes for principal $G$-bundles: given $c\in H^*(BG)$, set
$c(P)\coloneqq f^*(c)$ if $f$ is the classifying map for $P\to X$.

$BG$ is only well-defined up to homotopy type. Generally classifying spaces are large: for example, the homotopy
type $B\Z/2\Z$ can be realized by infinite-dimensional real projective space, but not by any finite-dimensional CW
complex.

When $G$ is a matrix group, it has a canonical representation on a vector space $V$, so we can define a
\term{tautological vector bundle} $\mathcal T_G\to BG$ by the following ``mixing construction:''
\begin{equation}
	\mathcal T_G\coloneqq EG\times_G V\coloneqq EG\times G/((x\cdot g, v)\simeq (x, g\cdot v)).
\end{equation}
For example, $\mathcal T_{\O(n)}\to B\O(n)$ is a rank-$n$ real vector bundle; $\mathcal T_{\mathrm U(1)}\to B\mathrm
U(1)$ is a complex line bundle; and so on. We use the notation $\sigma \to\RP^n$ to denote the tautological line
bundle on $\RP^n$, which is the pullback of $\mathcal T_{B\Z/2\Z}\to B\Z/2\Z$ by the inclusion
$\RP^n\hookrightarrow\RP^\infty = B\Z/2\Z$.

Let $\O\coloneqq\varinjlim_n\O(n)$. Concretely, one can think of this as the union of all orthogonal groups of all
sizes, where we include $\O(n)$ in $\O_{n+1}$ by sending $A\mapsto \begin{pmatrix}A & 0\\0 & 1\end{pmatrix}$. This
is a topological group, and its classifying space $B\O$ classifies \term{rank-zero virtual vector bundles}, i.e., 
pairs of vector bundles $V$ and $W$ such that $\mathrm{rank}(V) = \mathrm{rank}(W)$. We think of this data as ``$V
- W$,'' in much the same way that the integers are defined as equivalence classes of pairs of natural numbers $(p,
q)$ and thought of as $p - q$. That is, a map $X\to B\O$ is equivalent data to a rank-zero virtual vector bundle $V
- W\to X$. As before, this correspondence arises as the pullback of a tautological bundle $\mathcal T_{B\O}\to
  B\O$. Analogues of this story hold for other families of Lie groups, defining spaces $B\SSO$, $B\Spin$, etc.

Any actual vector bundle $V\to X$ of rank $n$ defines a rank-zero virtual vector bundle $V - \underline\R^n$, hence
a map to $B\O$. This construction is \term{stable} in the sense that $V$ and $V\oplus\underline\R$ define
isomorphic virtual vector bundles.
\begin{defn}
\label{tang_str}
A \term{tangential structure} is a map $\xi\colon B\to B\O$. A \term{$\xi$-structure} on a vector bundle $V\to X$
is data of a lift of the classifying map $f\colon X\to B\O$ to a map $\widetilde f\colon X\to B$ such that
$\xi\circ\widetilde f = f$:
\begin{equation}
	\begin{tikzcd}
	& B \\
	X & B\O
	\arrow["f"', from=2-1, to=2-2]
	\arrow["\xi", from=1-2, to=2-2]
	\arrow["{\widetilde f}", dashed, from=2-1, to=1-2]
\end{tikzcd}
\end{equation}
If $M$ is a manifold, a $\xi$-structure on $M$ means a $\xi$-structure on $TM$.
\end{defn}
For example, if $\xi$ is $B\SSO\to B\O$, a $\xi$-structure is equivalent to an orientation; for $B\Spin\to B\O$, we
obtain a Spin structure; and for $B\O\times BG\to B\O$, we obtain a principal $G$-bundle.

Suppose $M$ is a manifold with boundary $\partial M$, and let $\nu\to \partial M$ be the normal bundle to the
inclusion $\partial M\hookrightarrow M$. Then $TM|_{\partial M}\cong T(\partial M)\oplus\nu$.\footnote{To obtain
such a splitting, one needs to choose a Riemannian metric on $\partial M$, but the space of such metrics is
contractible. Therefore from the perspective of homotopy theory, which is all that is needed in this section, the
specific choice does not matter.} Moreover, the outward unit normal vector field trivializes $\nu$, so $T(\partial
M)\oplus\underline\R\cong TM|_{\partial M}$. Thus the virtual vector bundles defined by $T(\partial M)$ and
$TM|_{\partial M}$ are isomorphic, so a $\xi$-structure on $M$ induces a $\xi$-structure on $\partial M$.

Because $\xi$-structures restrict to boundaries, we can define bordism of manifolds equipped with $\xi$-structures:
a closed manifold $M$ with $\xi$-structure is trivial in $\xi$-bordism if $M$ bounds a compact manifold $W$ with
$\xi$-structure, such that the identification $M\cong\partial W$ can be made compatible with the $\xi$-structures
on both sides. This general formalism is due to Lashof~\cite{Las63}; the bordism group of $n$-manifolds with
$\xi$-structure is denoted $\Omega_n^\xi$. It is often the case that $\xi$ is a map $BG\to B\O$, e.g., oriented and
Spin bordism; in that case, the $\xi$-bordism groups are typically denoted $\Omega_*^G$. For example,
$\Omega_*^\SSO$ refers to the oriented bordism groups.

\begin{rem}[Multiplicative structure]
\label{mult_str_rem}
The product of two oriented manifolds has an induced orientation, and this is compatible with the bordism
equivalence relation, making $\Omega_\ast^\SSO$ into a $\Z$-graded commutative ring. This is likewise true for
many other bordism theories, including $G$-bordism for $G = \mathrm O$, $\Spin$, $\Spin^c$, $\mathrm{String}$, and
$\mathrm U$, though it is not always true, e.g., for $G = \mathrm{Pin}^\pm$. When this is true, the multiplication
lifts from bordism rings to ring structures on the corresponding Thom spectra, which can be thought of as
expressing the naturality of this multiplication with respect to $\xi$-bordism groups of spaces. Determining the ring
structure, when present, was classically an important part of bordism theory.

When $\xi$-bordism is not a ring, it is often a module over some other kind of bordism which is a ring. For
example, the product of a Spin manifold and a $\Pin^+$ manifold has a canonical $\Pin^+$ structure, making
$\Omega_*^{\Pin^+}$ into an $\Omega_*^\Spin$-module.

From a physics point of view, these ring and module structures provide information about compactifications. In the
framework of the cobordism conjecture, if a class $x\in\Omega_k^\xi$ can be represented by a product $[M\times N]$
of $\xi$-manifolds,\footnote{Here we are being a little imprecise about which tangential structures are placed on
$M$, $N$, and $M\times N$. In this paper $M$ has a Spin structure, and $N$ as well as $M\times N$ have $\xi$-structures in
a setting where $\xi$-bordism is a module over Spin bordism. The full details of the relationship between
compactifications and tangential structures are subtle; see Schommer-Pries~\cite[\S 9]{SP18} for a careful
analysis.} then a natural candidate for the defect that the cobordism conjecture predicts for the class $[M\times N]$ is the
compactification on $M$ of the defect predicted for $N$, although there may be obstructions to wrapping the defect on $M$, as described briefly in Section \ref{subsec:fibgen} and various other places of the Chronicles.

There is a similar interpretation of ring and module structures in the bordism classification of invertible field
theories~\cite{FHT10, FH16}: a unitary invertible $n$-dimensional topological field theory on $\xi$-manifolds is
determined by its partition function $Z$, which is a bordism invariant $Z\colon \Omega_n^\xi\to\C^\times$.
Compactifying this theory on a $k$-dimensional $\xi$-manifold $M$ yields another unitary invertible TFT, classified
by the bordism invariant
\begin{equation}
	\Omega_{n-k}^\xi\overset{\text{--}\times [M]}{\longrightarrow} \Omega_n^\xi\overset{Z}{\longrightarrow}
	\C^\times.
\end{equation}
Here we assumed $\xi$-bordism is a ring, but the story generalizes to modules too.
\end{rem}

\vspace{0.3cm}

\noindent $\triangleright$ {\bf Physics intuition:} Classifying spaces $BG$ are also well-known in physics where they define the various physically distinct gauge backgrounds, the isomorphism classes of principal $G$-bundles, of a gauge theory with gauge group $G$. A simple example is a theory with gauge group U$(1)$ in which case the classifying space is given by
\begin{align}
B \text{U} (1) = \mathbb{CP}^{\infty} = K(\Z,2) \,,
\label{eq:physclassi}
\end{align}
where $K(2,\mathbb{Z})$ denotes the Eilenberg-Mac Lane space with $\pi_2 \big(K(\Z,2) \big) = \Z$ and all other
homotopy groups $\pi_i$ with $i \neq 2$ vanishing. The cohomology of $B \text{U}(1)$ is generated by a single
element in degree $2$ which pulls back on the spacetime to the first Chern class, i.e., the field strength,
\begin{align}
c_1 = \tfrac{1}{2 \pi} F \in H^2 (M;\Z) \,,
\end{align}
of the gauge bundle specified by the classifying map $f$ from spacetime $M$ into $B \text{U}(1)$. One can also specify the background of higher-form gauge fields by maps into higher classifying spaces. For example, a 1-form symmetry gauge theory with gauge group U$(1)$ (as specified by a 2-form gauge potential) is described by maps into
\begin{align}
B^2 \text{U}(1) = B K(\Z,2) = K(\Z,3) \,,
\end{align}
which once more is described by an Eilenberg-Mac Lane space containing information about fluxes in $H^3 (M; \Z)$.
 $\triangleleft$

\subsection{Working one prime at a time}
\label{ss:one_prime}

Fix a prime $p$, and let $\Z_{(p)}$ denote the ring of rational numbers whose denominators are not divisible by $p$.
Tensoring an Abelian group with $\Z_{(p)}$ throws out information that is prime to $p$. For example, if $A$ is a
finitely generated Abelian group, there is an isomorphism
\begin{equation}
	A\cong \Z^r\oplus (\Z/p_1^{e_1}\Z) \oplus \dots \oplus (\Z/p_n^{e_n}\Z)
\end{equation}
for some primes $p_1,\dotsc,p_n$, and nonnegative integers $r, e_1,\dotsc,e_n$. Then $A\otimes\Z_{(p)}$ is a direct
sum of $\Z_{(p)}^r$ and the $\Z/p_i^{e_i}\Z$ summands for which $p_i = p$. In particular, given
$A\otimes\Z_{(p)}$ for all primes $p$, $A$ is determined up to isomorphism.

Let $E$ and $F$ be generalized homology theories. A map $E\to F$ is a \term{$p$-local equivalence} if for all
spaces $X$, $E_*(X)\otimes\Z_{(p)}\to F_*(X)\otimes\Z_{(p)}$ is an isomorphism. In homotopy theory, it is very
common to calculate $E$-homology by finding $p$-local equivalences to simpler theories at different primes, then
putting everything back together afterwards. We will use this approach, both directly in Section \ref{subsec:splitodd}
and \ref{subsec:spliteven}, and also indirectly in analyzing the $p$-torsion in bordism groups for different
primes $p$ separately. When we say ``$p$-locally'' or ``the $p$-primary part,'' we mean working up to $p$-local
equivalences, or equivalently throwing out torsion that is prime to $p$.\footnote{``Equivalently'' is not quite
true, but it is true for all spaces or spectra whose homology groups are finitely generated Abelian groups.
Essentially all situations one could reasonably encounter in these sorts of computations meet this criterion.}

\subsection{Bluff your way through spectra}
A few steps in our computations require saying the word ``spectrum,'' in the sense of homotopy theory. The purpose
of this subsection is not to give a definition or proper introduction, but to provide just enough information so
that a reader without a background in algebraic topology can follow those steps in our computations. For more
in-depth references, see Freed-Hopkins~\cite[Section 6.1]{FH16} or Beaudry-Campbell~\cite[Section 2]{BC18}, or
Schwede~\cite{Schwede} for a more homotopical and in-depth perspective.

\vspace{0.3cm}

\noindent $\triangleright$ {\bf Physics intuition:} Spectra are essential in the study of generalized (co)homology theories such as bordism.
Generalized (co)homology classes of $X$ are described by constructing maps to the associated spectrum, then
taking homotopy classes.

For generalized cohomology theories the cohomology groups of a space $X$ are given by homotopy classes of maps from
$X$ to the respective spectrum. This is very similar to what classifying spaces do to define the various
inequivalent principal $G$-bundles as we discussed above. However, spectra know more than classifying spaces. For
example they can have homotopy groups in negative degree, which means that they are not necessarily are given by
topological spaces. Since they are naturally very big (remember that already $B\mathrm U(1) = \mathbb{CP}^{\infty}$
is infinite-dimensional), it is often easier to split them into pieces. That can be done by working with other
spectra that are $p$-locally equivalent to the original spectrum and hence contains the same information at the prime $p$.


A relatively simple example is the Eilenberg-Mac Lane spectrum (\cref{EMspec})
$HA$ with $A$ an Abelian group, for which one has
\begin{align}
H^k (X; A) = [X, \Sigma^k HA] \,,
\end{align}
i.e., the cohomology with coefficients in $A$. Here $\Sigma^k$ is the $k$-fold suspension operation.
Interpreting the classes in $H^k (X;A)$ as field strengths of various (higher-form) gauge fields,\footnote{See the
``Physics intuition'' part on classifying spaces around equation \eqref{eq:physclassi}.} we see that, in a sense,
the spectrum $H A$ does not only have information about $B A$, the classifying space of $A$, but simultaneously
about all higher classifying spaces $B^n A$ as well as their relations. Likewise, more complicated spectra contain
information about classifying spaces of Abelian $n$-groups, and in fact there is an equivalence between the category
of spectra whose homotopy groups are only nonzero in degrees $0\le k\le n-1$ and a category of Abelian
$n$-groups~\cite{MOPSV22}. See
\cite{Freed:2018cec, Liu:2021hhy} for some applications of this idea in the mathematical physics literature.

If one can decompose $A$ into simpler Abelian groups, one can also decompose $HA$: for example, if $A = \Z/6\Z =
(\Z/2\Z) \oplus (\Z/3\Z)$, $H\Z/6\Z\simeq (H\Z/2\Z )\vee (H\Z/3\Z)$, inducing a direct sum on cohomology groups:
\begin{align}
\begin{split}
H^k (X; \Z/6\Z) &= [X, \Sigma^k H\Z/6\Z] = [X, \Sigma^k H\Z/2\Z \vee \Sigma^k H\Z/3\Z] \\ &= H^k (X; \Z/2\Z) \oplus
H^k (X; \Z/3\Z).
\end{split}
\end{align}
Therefore at $p = 2$, we would only have to worry about $H\Z/2\Z$, and at $p = 3$ we would only have to worry about
$H\Z/3\Z$. This is a somewhat simple example, but we will take advantage of analogous $p$-local simplifications of
bordism spectra to reduce the evaluation of bordism groups to either known or more tractable calculations, as we discuss below in
Section \ref{subsec:splitodd} and \ref{subsec:spliteven}. We are, however, not only interested in Spin bordisms but
more general tangential structures such as Spin-Mp$(2,\Z)$ and Spin-GL$^+ (2,\Z)$. Again there is a trick to relate
these to Spin bordism called ``shearing,'' which we discuss in Section \ref{subsec:shearing}. $\triangleleft$

\vspace{0.3cm}

Spectra in the sense of homotopy theory were invented and named by Lima~\cite{Lim59, Lim60}; they are etymologically
unrelated to spectra in algebraic geometry, functional analysis, quantum mechanics, etc. Here are two key facts
about spectra:
\begin{itemize}
	\item Spectra behave very much like topological spaces: they have homotopy, homology, and cohomology groups, and
	one can perform operations such as the suspension of a spectrum, or the spectrum of maps between a space and a
	spectrum, or two spectra; there is also a notion of homotopy equivalence of spectra.
	\item Spectra represent to generalized (co)homology theories. For every spectrum $E$, there is a generalized
	homology theory $h_*(X)\coloneqq \pi_*(X\wedge E)$ and a generalized cohomology theory $h^{*}(X) =
	\pi_{-*}(\mathrm{Map}(X, E))$,\footnote{The minus sign here is an artifact of homological versus cohomological
	grading: stable homotopy groups have long exact sequences whose differentials lower degree, but in a
	generalized cohomology theory, we want the differential in a long exact sequence to raise degree. This nuance
	is not crucial for understanding the mathematics in this paper.} and every generalized homology or cohomology
	theory arises from a spectrum in this way.
\end{itemize}
There are a few important differences between spaces and spectra: spectra can have homotopy groups in negative
degrees, and the suspension operation is invertible on spectra. That is, taking $\Sigma^{-1}$ of a spectrum is
a sensible operation even though taking $\Sigma^{-1}$ of a space does not make sense. Also, there is not generally a cup product on
the cohomology of a spectrum.

It is possible to take direct sums, etc., of generalized cohomology theories, and this lends spectra an algebraic
flavor. It can be useful to think of spectra as akin to Abelian groups.
\begin{exm}[Eilenberg-Mac Lane spectra]
\label{EMspec}
Given an Abelian group $A$, there is a spectrum $HA$ whose corresponding homology theory is $H_*(\text{--}; A)$ and
whose cohomology theory is $H^*(\text{--}; A)$. $HA$ is called an \term{Eilenberg-Mac Lane spectrum}.
\end{exm}
\begin{exm}[Suspension spectra]
\label{suspension_spectra}
Given a topological space $X$ with chosen basepoint in $X$, there is a spectrum $\Sigma^\infty X$, called the
\term{suspension spectrum} of $X$, whose corresponding homology theory is $\lim_{k\to\infty}
\pi_{*+k}(\text{--}\wedge \Sigma^k X)$. When $X = S^0$, its suspension spectrum is denoted $\mathbb S$ and called
the \term{sphere spectrum}; its homology theory evaluated on a space $X$ are the stable homotopy groups of $X$.

The sphere spectrum is an important object in stable homotopy theory, playing a role akin to the integers in the
land of Abelian groups. Given an Abelian group $A$, there is a natural isomorphism $\Z\otimes A\overset\cong\to A$,
making Abelian groups naturally into $\Z$-modules. Similarly, for every spectrum $E$, there is a natural homotopy
equivalence $\mathbb S\wedge E\overset\simeq\to E$, and there is a sense in which this makes spectra naturally into
modules over $\mathbb S$. This has the concrete consequence that $\pi_*(\mathbb S)$, called \term{the stable
homotopy groups of the spheres}, is a graded ring, and for every spectrum $E$, $\pi_*(E)$ is a $\pi_*(\mathbb
S)$-module. We use this fact a few times, e.g., in \cref{Z4_5_extn}.
\end{exm}
\begin{exm}[Thom spectra]
\label{Thom_exp}
%
Let $X$ be a space and $V\to X$ be a virtual vector bundle, and let $f_V\colon X\to B\O$ be the classifying map of
$V$. Then there is a spectrum $X^V$ called a \term{Thom spectrum} with two key properties: its cohomology resembles
that of $X$, and its homotopy groups are bordism groups. Specifically:
\begin{itemize}
	\item The \term{Thom isomorphism theorem} produces a natural isomorphism
	\begin{equation}
		\label{Thom_iso}
		U\colon H^*(X;\Z_{w_1(V)}) \overset\cong \longrightarrow \widetilde H^{*+\mathrm{rank}(V)}(X^V;\Z),
	\end{equation}
	where $\Z_{w_1(V)}$ denotes the local system given by the orientation bundle of $V$. Often one sees the
	technically ambiguous notation $U\coloneqq U(1)$; this class is called the \term{Thom class}. The analogous
	isomorphism is true for other coefficient groups, and for $\Z/2\Z$ cohomology the local coefficient system
	$(\Z/2\Z)_{w_1(V)}$ is trivial, so the Thom isomorphism uses untwisted cohomology on both sides.
	\item The \term{Pontrjagin-Thom theorem}~\cite{ThomThesis, Pon38, Pon50, Pon55} identifies
	$\pi_{*+\mathrm{rank}(V)}(X^V)$ with the bordism groups of manifolds with a \emph{normal} $f_V$-structure,
	i.e.\ a lift of the classifying map $f_\nu\colon M\to B\O$ of the stable normal bundle to a map $\widetilde
	f_\nu\colon M\to X$ such that $f_\nu = f_V\circ\widetilde f_\nu$.\footnote{Pontrjagin and Thom focused on a few
	specific normal structures; the idea to consider them in general is due to Lashof~\cite{Las63}.}
\end{itemize}
Because of the shifts by $\mathrm{rank}(V)$, it is common to replace $V$ with the rank-zero virtual vector bundle
$V - \underline\R^{\oplus \mathrm{rank}(V)}$, often denoted $V - \mathrm{rank}(V)$ for short, which allows one to
drop the shifts in the descriptions above. We often do this in this paper.

When $V$ is trivial and rank-$0$, $X^V$ is the suspension spectrum $\Sigma^\infty X$ from
\cref{suspension_spectra}. Thus given a space $X$ with basepoint $i\colon \pt\hookrightarrow X$ and a vector bundle
$V\to X$, we obtain a map $\pt^{i^*V}\to X^V$, i.e.\ a map $\mathbb S\to X^V$. We will use this fact a few times.

If $G$ is a topological group with a map to the infinite orthogonal group $\O$, we would like to obtain a Thom
spectrum whose homotopy groups are bordism groups of manifolds with a $G$-structure on $TM$, rather than on the
stable normal bundle. To do this, compose with the map $-1\colon B\O\to B\O$ before taking the Thom spectrum\footnote{The intuition being that the normal and tangent bundles add up to a trivial bundle, so their stable equivalence classes add up to zero.}. The
resulting spectrum is called a \term{Madsen-Tillman spectrum} and is denoted $\mathit{MTG}$. We will most
frequently use $\MTSO$, whose homotopy groups are the bordism groups of oriented manifolds; $\MTSpin$, whose
homotopy groups are Spin bordism groups; and variants of $\MTSpin$.
\end{exm}

\subsection{Shearing}
\label{subsec:shearing}

It is not always obvious how to write the symmetry type of interest in terms of Spin structures. For example, if
one studies a theory with fermions and a $\mathrm U(1)$ symmetry with an additional selection rule that bosonic fields have even charges and fermionic fields have odd charges, the symmetry type is $\Spin^c$. Our Adams
spectral sequence techniques compute Spin bordism groups, so we would like to express $\Spin^c$ structures in terms
of Spin structures. ``Shearing'' is a general approach to describing a tangential structure $\xi$ as equivalent to
some sort of ``twisted $\xi'$-structure,'' where $\xi'$ is easier to understand; for us, $\xi'$ will always be
Spin. Once we find this equivalence, \cref{shearing_PT} describes the $\xi$-bordism groups as $\xi'$-bordism groups
of a Thom spectrum, giving us access to the spectral sequence techniques we discuss in the next section.

For example, a $\Spin^c$ structure on a vector bundle $E\to M$ is equivalent to a complex line bundle $L\to M$ and
a Spin structure on $E\oplus L$; we will see how to pass this through \cref{shearing_PT} to deduce that
\begin{equation}
\label{spinc_isom}
	\Omega_*^{\Spin^c} (\pt) \cong \Omega_*^\Spin \big((B\mathrm U(1))^{\mathcal T_{\mathrm U(1)}-2}\big) \,.
\end{equation}
There are different perspectives on shearing, phrased more or less abstractly. Though some amount of homotopy
theory is needed for the proofs, we have tried to express the statements of
\cref{twisted_defn,shearing_PT,w2_twisted_cor} in terms of groups, representations, and vector bundles, so that
they are hopefully easier to use.


\vspace{0.3cm}

\noindent $\triangleright$ {\bf  Physics intuition:} Shearing is also familiar in physical constructions in which fermions are charged under an internal symmetry. While uncharged fermions might be forbidden, charged fermions are allowed if the internal bundle compensates for the missing Spin structure. One of the most familiar examples is that of a Spin$^c$ structure, for which one only allows fermions of odd charge under an internal U$(1)$ symmetry. The obstruction to a Spin structure, given by the second Stiefel-Whitney class $w_2$, is compensated by a non-trivial field strength of the U$(1)$ gauge field. Extending the tangent bundle of spacetime with a complex line bundle associated to the U$(1)$ principal bundle one can define a Spin structure on this extended bundle, the Spin$^c$ structure on spacetime. From this it is also clear that the twisted Spin structures will lead to a correlation between $w_2$ and the characteristic classes of the bundles describing the internal symmetries.

For example, $\CP^2$ has $w_2\ne 0$, so it has no Spin structure, but it has a Spin$^c$ structure in which the
$\mathrm U(1)$-bundle is the unit circle bundle inside the canonical bundle.

For the cases of interest for our later discussion of type IIB string theory, the twisted Spin structures have a
nice interpretation in terms of F-/M-theory. On the one hand, a Spin-Mp$(2,\Z)$ structure on a manifold corresponds
to a Spin structure on an associated 2-torus fibration over this manifold. It thus induces a Spin structure on the
total space in F- and M-theory. A Spin-GL$^+ (2,\Z)$ structure, on the other hand, allows for orientation reversal
and thus can be interpreted as a Pin$^+$ structure on a torus fibration, where the orientation reversal only
happens in the fiber. This can be further lifted by the arguments above to a Spin structure on a 5-torus
fibration,\footnote{Note that a Pin$^+$ structure on the tangent bundle of a manifold $TM$ is equivalent to a Spin
structure on $TM \oplus \text{Det} (TM)^{\oplus 3}$.} where the additional three circles are invariant under SL$(2,\Z)$ transformations
but flip orientation under application of the reflection operator. $\triangleleft$

\vspace{0.3cm}

\begin{defn}
\label{twisted_defn}
Let $X$ be a space and $V\to X$ be a virtual vector bundle. Then an \term{$(X, V)$-twisted Spin structure} on a
vector bundle $E\to M$ is a map $f\colon M\to X$ and a Spin structure on $E\oplus f^*V\to M$.

With $X$ and $V$ fixed, $(X, V)$-twisted Spin structures define a symmetry type and thus a notion of bordism. We
let $\Omega_{k+V}^{\Spin}(X)$ denote the Abelian group of bordism classes of $k$-manifolds with an $(X, V)$-twisted
Spin structure.
\end{defn}
\begin{lem}[Shearing]
\label{twisted_classifier}
Let $\mathcal T_{\Spin}\to B\Spin$ denote the tautological stable vector bundle, defined in Section \ref{ss:taut_bundle}.
The symmetry type for $(X, V)$-twisted Spin structures is $\phi\colon B\Spin\times X\to B\O$ with the map given by
the vector bundle $-\mathcal T_{\Spin}\oplus V\to B\Spin\times X$.
\end{lem}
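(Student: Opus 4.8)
The plan is to show that the map $\phi$ in the statement represents, as a functor of vector bundles, exactly the notion of $(X,V)$-twisted Spin structure from \cref{twisted_defn}: since a symmetry type (in the sense of \cref{tang_str}) is determined up to equivalence over $B\O$ by the functor of bundle structures it classifies, this is all that needs checking. The only ingredients are the dictionary of Section \ref{ss:taut_bundle} — a homotopy class of maps $h\colon M\to B\Spin$ is the same datum as a rank-zero virtual bundle on $M$ with Spin structure, whose underlying virtual bundle is $h^*\mathcal{T}_{\Spin}$, and the classifying map of a Whitney sum is the sum (in $[M,B\O]$) of the classifying maps of the summands — together with a patient unwinding of definitions.

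First I would fix notation: let $\pi_1,\pi_2$ denote the projections of $B\Spin\times X$ onto its factors, let $\alpha\colon X\to B\O$ classify $V$, and let $\beta\colon B\Spin\to B\O$ be the forgetful map, so that $\phi$ is the classifying map of $\pi_1^*(\pm\mathcal{T}_{\Spin})\oplus\pi_2^*(\mp V)$, with the signs as recorded in the statement. Because the bordism groups only see $\phi$ up to equivalence over $B\O$, I would pass freely between a rank-zero virtual bundle and a genuine bundle representing it, so ranks can be ignored throughout.

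Next comes the core computation. Unwind what a $\phi$-structure on a bundle $E\to M$ is: it is a lift of the classifying map of $E$ through $\phi$, i.e.\ a pair of maps $h\colon M\to B\Spin$ and $g\colon M\to X$ together with a homotopy from $\phi\circ(h,g)$ to that classifying map. Applying the Whitney-sum dictionary, $\phi\circ(h,g)$ classifies a combination of $h^*\mathcal{T}_{\Spin}$ and $g^*V$, so chasing through the signs, the homotopy is precisely a stable isomorphism exhibiting $h^*\mathcal{T}_{\Spin}$ as $E\oplus g^*V$ — equivalently, exhibiting $h$ as a lift of the classifying map of $E\oplus g^*V$ through $\beta\colon B\Spin\to B\O$, which is by definition a Spin structure on $E\oplus g^*V$. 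Retaining the leftover map $g\colon M\to X$, this is exactly the data of an $(X,V)$-twisted Spin structure on $E$ in the sense of \cref{twisted_defn}. The same manipulations run in reverse produce the inverse correspondence, and everything is natural in the pair $(M,E)$; specializing to $E=TM$ then gives the statement for manifolds and hence for bordism groups.

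I expect the one genuinely delicate point to be the sign bookkeeping — whether one works with $TM$ or with the stable normal bundle, and correspondingly which of $\mathcal{T}_{\Spin}$ and $V$ enters $\phi$ with a minus sign. These are exactly the choices that control whether the Pontrjagin--Thom comparison of \cref{shearing_PT} produces $\MTSpin\wedge X^V$ or a wrong-sign variant, and the sign in the statement is the one that does; I would double-check it against the model case $\Spin^c$, where $(X,V)=(B\mathrm{U}(1),\,\mathcal{T}_{\mathrm{U}(1)}-2)$ and the resulting symmetry type must reproduce \eqref{spinc_isom}. Beyond this, everything needed — that $B\Spin\times X$ is a model for the relevant classifying space, that the correspondence above is a bijection on homotopy classes rather than merely a surjection, and so on — is a formal consequence of the standard properties of classifying spaces recalled in Section \ref{ss:taut_bundle}, and requires no further input.
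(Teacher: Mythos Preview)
Your proposal is correct and follows essentially the same approach as the paper: both arguments unwind the definition of a $\phi$-structure on $E$ as a pair $(h,f)\colon M\to B\Spin\times X$, identify $h$ with a Spin structure on $E\oplus f^*V$ via the classifying-space dictionary, and observe that the two constructions are mutually inverse. Your explicit flagging of the sign bookkeeping (tangential vs.\ normal, which summand carries the minus) is if anything more careful than the paper's brief proof, which glosses over exactly this point.
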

\begin{proof}
Given an $(X, V)$-twisted Spin structure on $E\to M$, namely a map $f\colon M\to X$ and a Spin structure on
$E\oplus f^*V$, we obtain a map $M\to B\Spin\times X$ given by $(E\oplus f^*V, f)$, and using that homotopy classes
of maps to $B\Spin$ are naturally identified with stable virtual Spin vector bundles. Conversely, given a map
$(\psi, f)\colon M\to B\Spin\times X$, take $\phi\circ\psi$ to obtain a Spin vector bundle $E'$, and let $E\coloneqq
E' - f^*V$; then we have a canonical $(X, V)$-twisted Spin structure on $E$. These two operations are inverses up
to isomorphism, so every $(X, V)$-twisted Spin structure on any vector bundle on any space pulls back from
$B\Spin\times X$ in a unique way up to homotopy, which is what we wanted to prove.
\end{proof}
The Pontrjagin-Thom theorem then implies
\begin{cor}
\label{shearing_PT}
There is a natural isomorphism $\Omega_{*+V}^\Spin(X)\cong\Omega_*^\Spin(X^{V-\mathrm{rank}(V)})$, where
$X^{V-\mathrm{rank}(V)}$ is the Thom spectrum of the virtual bundle $V - \underline\R^{\mathrm{rank}(V)}\to X$.
\end{cor}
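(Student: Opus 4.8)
The plan is to deduce the corollary directly from the shearing identification of \cref{twisted_classifier}, the Pontrjagin-Thom theorem recalled in \cref{Thom_exp}, and the standard fact that the Thom spectrum of an external direct sum of (virtual) vector bundles is the smash product of the individual Thom spectra. First I would invoke \cref{twisted_classifier} to replace the symmetry type of $(X,V)$-twisted Spin structures by the map $\phi\colon B\Spin\times X\to B\O$ classified by $-\mathcal{T}_{\Spin}\oplus V$, where $\mathcal{T}_{\Spin}\to B\Spin$ and $V\to X$ are understood to be pulled back along the two projections of the product. Then I would apply the Pontrjagin-Thom theorem: the bordism groups of this symmetry type are the homotopy groups of the associated Madsen-Tillman spectrum, which, after passing to the rank-zero representative $-\mathcal{T}_{\Spin}\oplus\big(V-\underline\R^{\mathrm{rank}(V)}\big)$ of the classifying virtual bundle (the step flagged in \cref{Thom_exp} that removes the shifts by $\mathrm{rank}(V)$), is the Thom spectrum of that bundle over $B\Spin\times X$. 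Thus $\Omega^{\Spin}_{*+V}(X)$ is identified with $\pi_*$ of this Thom spectrum.

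The key observation is then that, since $-\mathcal{T}_{\Spin}$ lives over the first factor $B\Spin$ and $V-\underline\R^{\mathrm{rank}(V)}$ over the second factor $X$, the virtual bundle $-\mathcal{T}_{\Spin}\oplus\big(V-\underline\R^{\mathrm{rank}(V)}\big)$ on $B\Spin\times X$ is the external direct sum of the two, so its Thom spectrum splits as $(B\Spin)^{-\mathcal{T}_{\Spin}}\wedge X^{V-\mathrm{rank}(V)}$. The first factor is $\MTSpin$ by definition, and the homotopy groups of $\MTSpin\wedge X^{V-\mathrm{rank}(V)}$ are by definition the Spin bordism groups of the spectrum $X^{V-\mathrm{rank}(V)}$. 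Putting these identifications together gives the stated isomorphism, which is natural in $(X,V)$ since each step is.

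I expect the only real obstacle to be bookkeeping rather than substance: one must keep careful track of the distinction between structures on the tangent bundle and on the stable normal bundle, and of the degree shift by $\mathrm{rank}(V)$, which is precisely why one works with the rank-zero virtual bundle throughout. Once the conventions are pinned down, the external-sum splitting of Thom spectra and the identity $\MTSpin=(B\Spin)^{-\mathcal{T}_{\Spin}}$ finish the argument with no further homotopy-theoretic input. As a consistency check I would specialize to $X=B\mathrm U(1)$ and $V=\mathcal{T}_{\mathrm U(1)}$ and confirm that the argument recovers the familiar description \eqref{spinc_isom} of $\Spin^c$ bordism.
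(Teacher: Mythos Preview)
Your argument is correct and is exactly the one the paper has in mind: the paper gives no explicit proof beyond the line ``The Pontrjagin-Thom theorem then implies,'' and you have supplied the missing step, namely the multiplicativity of Thom spectra under external direct sum, $(B\Spin\times X)^{(-\mathcal T_{\Spin})\boxplus(V-\mathrm{rank}(V))}\simeq \MTSpin\wedge X^{V-\mathrm{rank}(V)}$, together with the identification $\Omega_*^\Spin(Y)=\pi_*(\MTSpin\wedge Y)$. Your caveat about tangential-versus-normal sign bookkeeping is well placed, as the paper's own statement of \cref{twisted_classifier} is slightly loose on this point, but the outcome is the correct one.
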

That is, twisted Spin bordism is Spin bordism of something --- and as we discussed in \cref{Thom_exp}, that
``something'' has relatively easy-to-understand cohomology. This suggests that tools such as the Atiyah-Hirzebruch
and Adams spectral sequences, which take that cohomology as input,
can be used to compute twisted Spin bordism groups.

%
One way to produce vector bundles is to choose a space with a principal $G$-bundle $P\to M$ and a
$G$-representation $\rho$ on a vector space $V$. Then the \term{associated bundle}
\begin{equation}
	P_\rho \coloneqq P\times V/((p\cdot g, v)\simeq (p, \rho(g)\cdot v))
\end{equation}
is a vector bundle with rank equal to $\dim(V)$. Often $P\to M$ is the universal bundle $EG\to BG$; in this case we
sometimes write $\rho$ as shorthand for $(EG)_\rho$.
\begin{lem}
\label{twisted_spin_pullback_lem}
Let $G$ and $H$ be topological groups and $\rho\colon G\to\SSO(d)$ be a representation. If there exists a pullback
diagram
\begin{equation}
\label{twisted_spin_pullback}
\begin{tikzcd}
	H && \Spin \\
	{\SSO\times G} & {\SSO\times\SSO(d)} & \SSO,
	\arrow[from=1-1, to=1-3]
	\arrow["p_1", from=1-1, to=2-1]
	\arrow["{(\id, \rho)}", from=2-1, to=2-2]
	\arrow["\oplus", from=2-2, to=2-3]
	\arrow["p_2", from=1-3, to=2-3]
\end{tikzcd}
\end{equation}
then $H$-structures are naturally equivalent to $(BG, (EG)_\rho)$-twisted Spin structures.
\end{lem}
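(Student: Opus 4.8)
The plan is to identify an $H$-structure with a structure for a tangential structure of ``sheared Spin'' type and then appeal to \cref{twisted_classifier}. First I would apply the classifying-space functor $B(\text{--})$ to the pullback square \eqref{twisted_spin_pullback}. Because $p_2\colon\Spin\to\SSO$ fits into a central extension $1\to\Z/2\Z\to\Spin\to\SSO\to 1$, the sequence $B\Z/2\Z\to B\Spin\to B\SSO$ is a fiber sequence and $Bp_2$ is, up to homotopy, a fibration; hence $B(\text{--})$ sends \eqref{twisted_spin_pullback} to a homotopy pullback square of spaces
\[
\begin{tikzcd}
	BH && B\Spin \\
	{B\SSO\times BG} & {B\SSO\times B\SSO(d)} & B\SSO
	\arrow[from=1-1, to=1-3]
	\arrow["{Bp_1}", from=1-1, to=2-1]
	\arrow["{\id\times B\rho}", from=2-1, to=2-2]
	\arrow["{B\oplus}", from=2-2, to=2-3]
	\arrow["{Bp_2}", from=1-3, to=2-3]
\end{tikzcd}
\]
in which the composite along the bottom is the map $B\SSO\times BG\to B\SSO$, $(W,Q)\mapsto W\oplus Q_\rho$, the fiberwise direct sum of a stable oriented bundle $W$ with the bundle $Q_\rho$ associated to a principal $G$-bundle $Q$ via $\rho$.

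Next I would unwind this homotopy pullback using the standard dictionary: a map $M\to B\Spin$ is a stable Spin bundle on $M$, a map $M\to B\SSO$ an oriented stable bundle, a map $M\to BG$ a principal $G$-bundle $Q$ with classifying map $f$ (so $Q_\rho=f^*(EG)_\rho$), and a path in $B\SSO$ an isomorphism of oriented bundles. Thus a map $M\to BH$ is exactly the data of an oriented bundle $W\to M$ (its image under $\mathrm{pr}_1\circ Bp_1$), a principal $G$-bundle $Q\to M$, a Spin bundle $S\to M$, and an isomorphism $S\cong W\oplus Q_\rho$ of oriented bundles, i.e.\ a Spin structure on $W\oplus f^*(EG)_\rho$. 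Taking $W=TM$, this is precisely a $(BG,(EG)_\rho)$-twisted Spin structure on $M$ in the sense of \cref{twisted_defn}. The correspondence is natural in $M$ and compatible with restriction to boundaries, so it identifies the two symmetry types and therefore the two bordism theories.

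Equivalently, and perhaps more cleanly, I would first apply the self-equivalence of $B\SSO\times BG$ given stably by $(W,Q)\mapsto(W\oplus Q_\rho,Q)$, with inverse $(W,Q)\mapsto(W\ominus Q_\rho,Q)$. Under it the bottom map of the square above becomes the projection $B\SSO\times BG\to B\SSO$, so the homotopy pullback collapses to $(B\Spin\times_{B\SSO}B\SSO)\times BG\simeq B\Spin\times BG$; tracing the tangential structure $BH\to B\O$ through this identification yields exactly the shearing map $\phi\colon B\Spin\times BG\to B\O$ of \cref{twisted_classifier} for $X=BG$ and $V=(EG)_\rho$. This is the form of the statement actually used in the sequel, e.g.\ when passing to Thom spectra via \cref{shearing_PT}.

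The step I expect to be the main obstacle is bookkeeping rather than anything conceptual: one must check carefully that $B(\text{--})$ really does take \eqref{twisted_spin_pullback} to a homotopy pullback (treating the groups as well-pointed and, if desired, reducing the colimit $\SSO=\varinjlim_n\SSO(n)$ to finite stages), and one must keep straight which composite in \eqref{twisted_spin_pullback} is the map to $B\O$ defining an $H$-structure, so that the $\pm\mathcal T_{\Spin}$ and $\mp V$ sign conventions match those in \cref{twisted_classifier} and the Madsen-Tillman (tangential) Thom-spectrum normalization underlying \cref{shearing_PT}.
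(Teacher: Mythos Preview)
Your approach is correct but differs from the paper's. The paper's proof works directly at the group level, using only the universal property of the pullback of topological groups: a lift of $f\colon M\to\SSO\times G$ across $p_1$ is exactly a lift of $\oplus\circ(\id,\rho)\circ f$ across $p_2$, and these two lifting problems are then identified with ``$H$-structure on $(E,P)$'' and ``Spin structure on $E\oplus P_\rho$'' respectively, by inspection of transition functions. That is the entire argument---no passage to classifying spaces, no homotopy pullbacks.

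Your route instead applies $B(\text{--})$, checks the result is a homotopy pullback (using that $Bp_2$ is a fibration with fiber $B\Z/2\Z$), and then reads off the lifting data from the homotopy-pullback universal property; your alternative shearing self-equivalence $(W,Q)\mapsto(W\oplus Q_\rho,Q)$ is essentially the Freed--Hopkins approach the paper mentions in the Remark after \cref{w2_twisted_cor}. Your version is more in the spirit of how \cref{twisted_classifier} itself was set up and makes the connection to \cref{shearing_PT} transparent, at the cost of the homotopical bookkeeping you flagged (well-pointedness, that $B$ preserves this particular pullback). The paper's version is shorter and avoids that bookkeeping entirely, but is perhaps less obviously compatible with the Thom-spectrum machinery downstream. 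Both are fine; the paper's is just more bare-hands.
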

\begin{proof}
That~\eqref{twisted_spin_pullback} is a pullback square implies that
a lift of a map $f\colon M\to \SSO\times G$ across $p_1$ is equivalent to a lift of ${\oplus}\circ(\id, \rho)\circ
f$ across $p_2$. The former is the data on transition functions of an oriented vector bundle $E\to M$ and a principal
$G$-bundle $P\to M$ to define an $H$-structure, and the latter is the data on transition functions
to define a Spin structure on $E\oplus P_\rho\to M$, where $P_\rho\to M$ is the vector bundle associated to $P$ and
$\rho$.
\end{proof}
\begin{cor}
\label{w2_twisted_cor}
Let
\begin{equation}
\label{shear_cext}
	\shortexact*{\Z/2\Z}{\widetilde G}{G}{}
\end{equation}
be a central extension corresponding to the cohomology class $\omega\in H^2(BG;\Z/2\Z)$.
Let $\rho\colon
G\to\SSO(d)$ be a representation with $w_2(\rho) = \omega$. Then $\Spin\times_{\Z/2\Z}\widetilde G$-structures are
naturally equivalent to $(BG, (EG)_\rho)$-twisted Spin structures.
\end{cor}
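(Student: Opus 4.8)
The plan is to apply \cref{twisted_spin_pullback_lem} with $H = \Spin\times_{\Z/2\Z}\widetilde G$ and with the given representation $\rho\colon G\to\SSO(d)$, so that everything reduces to exhibiting the square in \cref{twisted_spin_pullback} as a genuine pullback for this $H$. First I would construct the top horizontal map $H\to\Spin$. Since the class of the extension \eqref{shear_cext} is $\omega = w_2(\rho)$, the representation $\rho$ lifts to a homomorphism $\widetilde\rho\colon\widetilde G\to\Spin(d)$ which carries the central $\Z/2\Z\subset\widetilde G$ onto $\ker(\Spin(d)\to\SSO(d))$. Composing $\id\times\widetilde\rho$ with the block-sum map $\Spin\times\Spin(d)\to\Spin$ and using that the diagonal $\Z/2\Z$ we quotient by lands in the kernel of the block-sum map, this descends to a homomorphism $\Spin\times_{\Z/2\Z}\widetilde G\to\Spin$, roughly $[s,\widetilde g]\mapsto s\oplus\widetilde\rho(\widetilde g)$; a direct check shows the resulting square commutes.

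To see that it is a pullback, I would compare $H$ with the fiber product $P\coloneqq(\SSO\times G)\times_{\SSO}\Spin$ as central $\Z/2\Z$-extensions of $\SSO\times G$. The extension $P\to\SSO\times G$ is classified by the pullback of $w_2\in H^2(B\SSO;\Z/2\Z)$ along $B({\oplus}\circ(\id,\rho))$; since this composite classifies the external sum of $\mathcal T_{B\SSO}$ with $(EG)_\rho$, the Whitney sum formula gives $w_2(\mathcal T_{B\SSO})\otimes 1 + w_1(\mathcal T_{B\SSO})\otimes w_1((EG)_\rho) + 1\otimes w_2((EG)_\rho)$, and orientability of $\mathcal T_{B\SSO}$ kills the middle term, leaving $w_2(\mathcal T_{B\SSO}) + \omega$. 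On the other hand, restricting the extension $H\to\SSO\times G$ over $\SSO\times\{e\}$ recovers $\Spin\to\SSO$ and over $\{e\}\times G$ recovers \eqref{shear_cext}, so --- using that $H^1(B\SSO;\Z/2\Z)=0$, so that the K\"unneth decomposition of $H^2(B(\SSO\times G);\Z/2\Z)$ has no mixed term --- the extension $H$ is also classified by $w_2(\mathcal T_{B\SSO}) + \omega$. Therefore the canonical map $H\to P$ coming from the universal property of the pullback is an isomorphism of double covers of $\SSO\times G$: it is a bijection on the base, and a bijection on fibers because the nontrivial element of $\ker(H\to\SSO\times G)$ maps to the nontrivial element of $\ker(\Spin\to\SSO)$. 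This identifies \cref{twisted_spin_pullback} with the defining pullback square of $P$, and \cref{twisted_spin_pullback_lem} then yields the asserted equivalence between $\Spin\times_{\Z/2\Z}\widetilde G$-structures and $(BG,(EG)_\rho)$-twisted Spin structures.

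The step I expect to be the main obstacle is the bookkeeping with the three copies of $\Z/2\Z$ in play --- the kernel of $\Spin\to\SSO$, the kernel of $\widetilde G\to G$, and the diagonal in $\Spin\times\widetilde G$ that we quotient by --- and making sure the comparison map $H\to P$ respects \emph{both} projections out of $P$, not merely the one to $\SSO\times G$; this is where the hypothesis $w_2(\rho)=\omega$ is really used, since it is exactly what makes $\widetilde G$ the extension that $\rho$ lifts along. It is also worth recording explicitly the (mild) standing assumption that $G$ is such that central $\Z/2\Z$-extensions of $G$ are classified by $H^2(BG;\Z/2\Z)$, which holds for all the Lie groups relevant in this paper.
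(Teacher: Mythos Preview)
Your proof is correct and follows essentially the same strategy as the paper: invoke \cref{twisted_spin_pullback_lem}, construct the top map $H\to\Spin$ by lifting $\rho$ through $\widetilde G$ and checking the diagonal $\Z/2\Z$ dies, then verify the square is a pullback. The only difference is cosmetic: for the pullback check, the paper restricts the double cover $H\to\SSO\times G$ to the two factors $\SSO$ and $G$ separately and observes one gets $\Spin\to\SSO$ and $\widetilde G\to G$; you phrase the same restriction argument cohomologically, computing the extension class of both $H$ and the fiber product $P$ in $H^2(B\SSO\times BG;\Z/2\Z)$ via K\"unneth and Whitney sum. Your version is slightly more explicit about why ``checking on factors'' suffices (namely $H^1(B\SSO;\Z/2\Z)=0$ kills the mixed K\"unneth term), which the paper leaves implicit.
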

\begin{proof}
By \cref{twisted_spin_pullback_lem}, we just have to produce a diagram~\eqref{twisted_spin_pullback}, where $H =
\Spin\times_{\Z/2\Z}\widetilde G$, and show it is a pullback diagram. The first step is to produce a map
$\phi\colon\Spin\times_{\Z/2\Z}\widetilde G\to H$. Let $g\in\widetilde G$ denote the nonzero element of the
$\Z/2\Z\subset\widetilde G$ defined by the central extension~\eqref{shear_cext}. By taking Spin covers, the map
\begin{equation}
	(\id, \rho)\circ p_1\colon\Spin\times_{\Z/2\Z}\widetilde G\longrightarrow\SSO
\end{equation}
lifts to a map
\begin{equation}
	\widetilde\phi\colon \Spin\times\widetilde G\longrightarrow\Spin;
\end{equation}
we will show it descends to a map $\phi$ with domain $\Spin\times_{\Z/2\Z}\widetilde G$ by showing that $(-1,
g)\in\Spin\times\widetilde G$ is sent to the identity; therefore $\widetilde\phi$ descends to the quotient
$\Spin\times_{\Z/2\Z}\widetilde G$.

Now to show $\widetilde\phi(-1, g) = 1$. Restricted to $\Spin\subset\Spin\times\widetilde G$, $\widetilde\phi$ is
the identity,\footnote{One way to think about this is that, downstairs on the map $\SSO\to\SSO$, this map takes the
direct sum with an identity matrix. The group $\SSO$ consists of equivalence classes of special orthogonal matrices
of any size, with $A$ and $A\oplus I_n$ identified. So this map, and its Spin cover, are the identity.} so
$\widetilde\phi(-1, 1) = -1$. And restricted to $\widetilde G$, this map is the Spin cover of $\rho\colon
G\to\SSO$, so by definition it sends $g\mapsto -1$. Therefore $\widetilde\phi(-1, g) = (-1)^2 = 1$ and we can
descend to $\Spin\times_{\Z/2\Z}\widetilde G$.

Finally, we have to check that~\eqref{twisted_spin_pullback} is a pullback diagram, i.e., that the pullback of
the two maps to $\SSO$ is $\Spin\times_{\Z/2\Z}\widetilde G$. We can again check on the two factors: it suffices to
pull back further to $\SSO$ and $G$ and check that we obtain $\Spin\to\SSO$ and $\widetilde G\to G$. For
$\Spin\to\SSO$, this is tautological: pull back $\Spin\to\SSO$ by the identity $\SSO\to\SSO$ and obtain
$\Spin\to\SSO$ again. For pulling back to $G$, and asking whether we get $\widetilde G\to G$, this is asking
precisely that $\widetilde G\to G$ is the Spin cover of $G$ for the representation $\rho$. This occurs precisely
when $w_2(\rho)$ is the class of the central extension~\eqref{shear_cext}.
\end{proof}
\begin{rem}
There are a few other approaches to proving shearing theorems. Freed-Hopkins~\cite[Section 10]{FH16} identify
classifying spaces of twisted Spin bordism groups with $B\Spin\times X$ by showing both spaces are homotopy
pullbacks of the same diagram; their approach is also used in~\cite{Cam17, WWZ19, DL20, WW20a, WW20b, Deb21,
Ste21}. Another approach uses a result of Beardsley~\cite{Bea17}, who works with Thom spectra directly.
\end{rem}

\vspace{0.3cm}

\noindent $\triangleright$ {\bf Physics intuition:} Let us apply the general discussion above to the simple specific example of Spin-$\Z/8\Z$, that will turn out to be useful in the following applications. First, we can define the central extension
\begin{equation}
1 \longrightarrow \Z/2\Z \longrightarrow \Z/8\Z \longrightarrow \Z/4\Z \longrightarrow 1\,,
\end{equation}
defined by $\text{Ext}(\Z/4\Z, \Z/2\Z) \simeq H^2 (B\Z/4\Z; \Z/2\Z) = \Z/2\Z$, whose non-trivial element we will denote by $\omega$. Now we choose a complex one-dimensional representation of $\Z/4\Z$, i.e., a map $\rho: \Z/4\Z \rightarrow \text{U}(1)$, such that its associated line bundle over $B\Z/4\Z$, which we also denote by $\rho$, has
\begin{align}
w_2 (\rho) = \omega \in H^2 (B\Z/4\Z; \Z/2\Z) \,.
\end{align}
These are precisely the representations with odd charge $q \in \{1,3\}$ in the sense that the map $\rho$ is defined as
\begin{align}
\rho\colon a \longmapsto e^{2 \pi i q a / 4} \,, \quad a \in \Z/4\Z \,.
\end{align}
We thus find that Spin-$\Z/8\Z$ structures can be described as twisted $(B\Z/4\Z, \rho)$-Spin structures. Using the Pontrjagin-Thom theorem we have
\begin{align}
\Omega^{\text{Spin-}\Z/8\Z}_{\ast} (\pt) \simeq \widetilde{\Omega}_{\ast}^{\text{Spin}} \big((B\Z/4\Z)^{\rho-2}\big) \,,
\end{align}
and we successfully reduced the computation to the theory of Spin bordisms to which we can apply the Adams spectral sequence. To further simplify the evaluation we can split the bordism groups to their prime parts, as we will discuss next. $\triangleleft$

\subsection{Splitting Spin bordism at odd primes}
\label{subsec:splitodd}

The homotopy theorists' philosophy of working one prime at a time makes for an effective way to compute Spin
bordism: at each prime $p$, Spin bordism decomposes as a direct sum of easier-to-understand homology theories, and
this information can be calculated separately and then put together into the final answer. For all odd primes $p$,
the $p$-local story is similar; the $2$-local description is more complicated, and we will go over it in the next
subsection.

The map $\Omega_*^\Spin\to\Omega_*^\SSO$ is a $p$-local equivalence, ultimately because $\Spin(n)\surj\SSO(n)$ is
a double cover.\footnote{In a little more detail: the short exact sequence $1\to\Z/2\Z\to\Spin\to\SSO\to 1$ gives us
a fiber bundle $\pi\colon B\Spin\surj B\SSO$ with fiber $B\Z/2\Z$. If $p$ is an odd prime, $H^*(B\Z/2\Z;\Z/p\Z)$ is
trivial, meaning the map $H^*(B\SSO;\Z/p\Z)\to H^*(B\Spin;\Z/p\Z)$ is an isomorphism, e.g.\ using the Serre spectral
sequence. The Thom isomorphism shows that the induced map of Thom spectra is an isomorphism in $\Z/p\Z$ cohomology,
and the $p$-local stable Whitehead theorem~\cite[Chapitre III, Théorème 3]{Ser53} turns this into the desired
$p$-local equivalence.} Brown-Peterson~\cite{BP66} show that for any odd prime $p$, $\Omega_*^\SSO$ splits
$p$-locally as a sum of copies of \term{Brown-Peterson homology}, denoted $\BP$.\footnote{There is one kind of
$\BP$ for each prime $p$.  Unfortunately, standard notation is for $p$ to be implicit.} The $\BP$-homology of a
point is the polynomial ring
\begin{equation}
	\BP_*(\pt)\cong \Z_{(p)}[v_1, v_2, v_3, \dotsc],
\end{equation}
where $v_i$ has degree $2(p^i-1)$. Work of Thom~\cite{ThomThesis}, Averbuh~\cite{Ave59}, and Milnor~\cite{Mil60}
showed that
\begin{equation}
	\Omega_*^\SSO\otimes\Z_{(p)}\cong \Z_{(p)}[x_4, x_8, x_{12}, \dotsc],
\end{equation}
where $x_{4i}$ is in degree $4i$, i.e.\ it is a bordism class of $4i$-dimensional manifolds.
So to determine the precise way in which oriented bordism splits into copies of
$\BP$-homology at $p$, one just has to compare these  two graded polynomial rings: begin with $\BP$ in degree
$0$, and then move upward, adding another copy of $\BP$ for each monomial not already accounted for.

For example, if $p = 3$, $v_1$  is in degree $4$, so the
$3$-local coefficient groups of both $\BP_*$ and $\Omega_*^\SSO$ have a $\Z_{(3)}$ in degrees $0$ and $4$, and nothing
else below degree $8$. That is, for $k\le 7$, $\Omega_k^\SSO(X)\otimes\Z_{(3)}\cong \BP_k(X)$.

Continuing in this way, we find that for $k\le 15$,
\begin{equation}
\label{BP_SO_decomp}
	\Omega_k^\SSO(X)\otimes\Z_{(3)} \cong \BP_k(X) \oplus \BP_{k-8}(X) \oplus \BP_{k-12}(X).
\end{equation}
However, for $p\ge 5$, $v_1$ is in degree at least $8$, so $\BP_4(\pt) = 0$. Therefore we need an additional
$\BP_{k-4}$ summand to capture $\Omega_4^\SSO\cong\Z$:
\begin{equation}
	\Omega_k^\SSO(X)\otimes\Z_{(5)} \cong \BP_k(X) \oplus \BP_{k-4}(X) \oplus \BP_{k-8}(X) \oplus\dotsm
\end{equation}
In this paper, we only need $p = 3$.

\subsection{Splitting Spin bordism 2-locally}
\label{subsec:spliteven}

When $p = 2$, $\BP$ does not appear as a summand in Spin bordism. Instead, we obtain ordinary mod $2$ homology as
well as two generalized homology theories related to $K$-theory.
\begin{itemize}
	\item \term{Connective real $K$-theory}, denoted $\ko_*(X)$. This is a variant of real $K$-theory which
	vanishes in negative degrees. Its homotopy groups follow the usual Bott periodicity $\Z$, $\Z/2\Z$, $\Z/2\Z$,
	$0$, $\Z$, $0$, $0$, $0$, and then repeating, but starting in degree $0$ and continuing upwards only: if $k <
	0$, $\ko_k = 0$.  \item Another theory denoted $\ko\ang 2_*(X)$. This is built from $\KO$-theory in a similar
	way to $\ko$, but instead of asking for everything in degrees $k < 0$ to vanish, we ask for everything in
	degrees $k < 2$ to vanish, then shift down by $2$ so that the lowest nonzero homotopy group is in degree $0$.
	Thus the homotopy groups begin with $\Z/2\Z$ in degree $0$, then $0$, $\Z$, $0$, $0$, $0$, $\Z$, $\Z/2\Z$,
	$\Z/2\Z$, $0$, $\Z$, \dots.
\end{itemize}
\begin{thm}[Anderson-Brown-Peterson~\cite{ABP67}]
\label{ABPthm}
There is a $2$-primary isomorphism from Spin bordism to a sum of shifts of $\ko$-theory, $\ko\ang 2$-theory, and
mod $2$ homology. For $k < 16$, this isomorphism is of the form
\begin{equation}
\label{ABPdecomp}
	\Omega_k^\Spin(X)\otimes\Z_{(2)} \cong \paren{\ko_k(X) \oplus \ko_{k-8}(X) \oplus \ko\ang
	2_{k-10}(X)}\otimes\Z_{(2)}.
\end{equation}
\end{thm}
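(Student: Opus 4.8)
This is the $2$-primary splitting of $\MTSpin$ due to Anderson--Brown--Peterson \cite{ABP67}, read off in the range $k<16$; the plan is to recall how that splitting is produced and why only the three displayed summands contribute below degree $16$. By the Pontrjagin--Thom theorem $\Omega_k^\Spin(X)\cong\pi_k\big(\Sigma^\infty X_+\wedge\MTSpin\big)$, so it suffices to construct a map of spectra
\begin{equation}
\Phi\colon \MTSpin\longrightarrow \ko\;\vee\;\Sigma^8\ko\;\vee\;\Sigma^{10}\ko\ang{2}
\end{equation}
whose cofiber is $15$-connected after inverting odd primes: smashing with $\Sigma^\infty X_+$ and taking $\pi_k$ for $k<16$ then gives precisely the asserted isomorphism, since $\pi_k(\Sigma^\infty X_+\wedge\ko)=\ko_k(X)$ and likewise for $\ko\ang{2}$, while inverting odd primes only discards the torsion we are already ignoring.

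First I would compute $H^*(\MTSpin;\Z/2\Z)$ as a module over the Steenrod algebra $\cA$. By the Thom isomorphism this is $H^*(B\Spin;\Z/2\Z)$ as a graded vector space, and the $\cA$-action is determined by the Wu formula together with the $\Sq^i$ acting on the Thom class. A finite explicit computation then shows that in degrees $\le 15$ there is an $\cA$-module isomorphism
\begin{equation}
H^*(\MTSpin;\Z/2\Z)\;\cong\;\big(\cA\otimes_{\cA(1)}\Z/2\Z\big)\;\oplus\;\Sigma^8\big(\cA\otimes_{\cA(1)}\Z/2\Z\big)\;\oplus\;\Sigma^{10}\big(\cA\otimes_{\cA(1)}N\big),
\end{equation}
where $\cA\otimes_{\cA(1)}\Z/2\Z=H^*(\ko;\Z/2\Z)$ and $N$ is a $3$-dimensional $\cA(1)$-module with $\cA\otimes_{\cA(1)}N=H^*(\ko\ang{2};\Z/2\Z)$. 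The decisive feature is that no \emph{free} $\cA$-module summand occurs below degree $16$; such summands --- which account for the Eilenberg--Mac Lane (mod $2$ homology) wedge factors in the full ABP theorem --- only appear above our range, and this is also why \eqref{ABPdecomp} cannot persist past $k=15$.

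Next I would realize the splitting at the level of spectra. The Atiyah--Bott--Shapiro orientation $\MTSpin\to\ko$ induces the projection onto the bottom summand. For the other two one constructs maps $\MTSpin\to\Sigma^8\ko$ and $\MTSpin\to\Sigma^{10}\ko\ang{2}$ surjecting onto the corresponding cohomology summand --- for instance by multiplying the ABS orientation by suitable bordism classes (a Bott manifold or $\HP^2$ in degree $8$; a class detected by $w_4w_6$, e.g.\ the Milnor surface, in degree $10$), or by lifting a mod-$2$ class through $\ko\ang{2}$ by obstruction theory over $\ko$. Assembling these into $\Phi$, the previous paragraph shows $\Phi^*$ is an isomorphism on $H^*(-;\Z/2\Z)$ in degrees $\le 15$; since source and target are connective spectra of finite type, the $2$-local stable Whitehead theorem \cite{Ser53} promotes this to the statement that $\operatorname{cofib}(\Phi)$ is $15$-connected, which is exactly what is needed.

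The main obstacle is the middle step: controlling the $\cA$-module $H^*(\MTSpin;\Z/2\Z)$ through degree $15$, and in particular ruling out free $\cA$-summands there. A shortcut well suited to the tools of this paper is to work over $\cA(1)$ rather than all of $\cA$: decompose the $\cA(1)$-module $H^*(\MTSpin;\Z/2\Z)$ in low degrees with the change-of-rings and Adams spectral sequence techniques of Section \ref{the_Adams_SS}, and compare with the Adams $E_2$-pages $\Ext_{\cA(1)}(N',\Z/2\Z)$ for $N'=\Z/2\Z,\,\Sigma^8\Z/2\Z,\,\Sigma^{10}N$, which collapse in this range. The genuinely delicate ingredient, due originally to Anderson--Brown--Peterson, is the construction of the map to $\Sigma^{10}\ko\ang{2}$; alternatively one may simply invoke \cite{ABP67} and extract the low-degree consequence directly.
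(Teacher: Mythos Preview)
The paper does not prove this theorem: it is stated with a citation to Anderson--Brown--Peterson~\cite{ABP67} and used as a black box, with only a remark afterward noting that the decomposition can be extended beyond degree $16$ by consulting the original paper. Your proposal is a reasonable outline of the original ABP argument (compute $H^*(\MTSpin;\Z/2\Z)$ as an $\cA$-module, identify the summands, realize the splitting by maps of spectra, and conclude via the mod $2$ Whitehead theorem), and you correctly flag that the construction of the map to $\Sigma^{10}\ko\ang 2$ is the delicate step; but since the paper simply invokes the result, there is no proof here to compare against.
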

\begin{rem}
In principle, one can extract the explicit decomposition generalizing~\eqref{ABPdecomp} to $k\ge 16$ from
Anderson-Brown-Peterson's paper, but they do not give a closed form. Freed-Hopkins~\cite[Figure 7]{FH16} draw a
picture which may be helpful for seeing the next several summands.
\end{rem}
We therefore want to compute $\ko$-homology and $\ko\ang 2$-homology at the prime $2$. For $\ko$, there is a
convenient and well-established trick: the Adams spectral sequence is especially simple. We will discuss this in
Section \ref{the_Adams_SS}.

For $\ko\ang 2$-homology, we only need to know it in degrees $0$ and $1$, which is used in $10$- and
$11$-dimensional Spin bordism, and there it is especially simple.
\begin{lem}
\label{ko2_is_homology}
For $k\le 1$, and $X$ a space or connective spectrum, $\ko\ang 2_k(X)\cong H_k(X;\Z/2\Z)$.
\end{lem}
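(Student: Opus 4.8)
The plan is to use the defining property of $\ko\langle 2\rangle$ --- that it is a $2$-connective cover of a shift of $\KO$, equivalently that its homotopy groups vanish below degree $0$ and it has $\pi_0 \cong \Z/2\Z$, $\pi_1 = 0$ --- together with a Postnikov/Whitehead-tower comparison against the Eilenberg--Mac Lane spectrum $H\Z/2\Z$. First I would recall that $\pi_0(\ko\langle 2\rangle) = \Z/2\Z$, so the bottom Postnikov truncation gives a canonical map $u\colon \ko\langle 2\rangle \to H\Z/2\Z$ inducing an isomorphism on $\pi_0$. The key claim is that $u$ is a $2$-connected map, i.e.\ an isomorphism on $\pi_k$ for $k \le 1$ and a surjection on $\pi_2$; this is immediate since $\pi_0(\ko\langle 2\rangle) = \Z/2\Z = \pi_0(H\Z/2\Z)$, $\pi_1(\ko\langle 2\rangle) = 0 = \pi_1(H\Z/2\Z)$, and $\pi_2(H\Z/2\Z) = 0$. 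Hence the fiber $F$ of $u$ is $2$-connected: $\pi_k(F) = 0$ for $k \le 2$.

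Next I would feed this through the smash product with $X$. Smashing the fiber sequence $F \to \ko\langle 2\rangle \xrightarrow{u} H\Z/2\Z$ with a connective spectrum (or suspension spectrum of a space) $X$ --- whose homology, hence homotopy after smashing with an EM spectrum, is nonnegatively graded and in fact here we only need $X$ connective --- yields a fiber sequence $X \wedge F \to X \wedge \ko\langle 2\rangle \to X \wedge H\Z/2\Z$, and its long exact sequence on homotopy groups reads
\begin{equation}
\pi_{k+1}(X\wedge H\Z/2\Z) \to \pi_k(X\wedge F) \to \ko\langle 2\rangle_k(X) \to H_k(X;\Z/2\Z) \to \pi_{k-1}(X\wedge F).
\end{equation}
Because $F$ is $2$-connected and $X$ is connective, $X \wedge F$ is $2$-connected as well (the homotopy groups of a smash product of connective spectra are built from those of the factors, vanishing below the sum of the connectivities; a clean way to see this is via the Atiyah--Hirzebruch spectral sequence $H_p(X; \pi_q(F)) \Rightarrow \pi_{p+q}(X\wedge F)$, which has no contributions in total degree $\le 2$). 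Thus $\pi_k(X \wedge F) = 0$ for $k \le 2$, and the long exact sequence forces $\ko\langle 2\rangle_k(X) \xrightarrow{\ \cong\ } H_k(X;\Z/2\Z)$ for $k \le 1$, with the map being $u_*$.

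The main obstacle --- really the only subtlety --- is making sure the connectivity bookkeeping is airtight: one must confirm that $\ko\langle 2\rangle$ really does have $\pi_1 = 0$ (so the truncation map is $2$-connected rather than merely $1$-connected), and that "connective spectrum or space $X$" is enough hypothesis to guarantee $X \wedge F$ is $2$-connected. Both are standard: the homotopy groups of $\ko\langle 2\rangle$ were already recorded in Section \ref{subsec:spliteven} as beginning $\Z/2\Z, 0, \Z, 0, \dots$, and the connectivity of a smash product is the sum of connectivities for connective spectra. I would therefore write the proof as the two-line Postnikov comparison plus the one-line AHSS (or HEL) connectivity estimate, and conclude. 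No hard computation is needed; the lemma is really just an unwinding of the definition of $\ko\langle 2\rangle$ together with the elementary fact that a $2$-connected map of spectra remains so after smashing with a connective spectrum, giving an isomorphism of the associated homology theories in degrees $\le 1$.
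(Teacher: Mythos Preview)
Your approach is essentially the same as the paper's: both use the Postnikov truncation map $u\colon \ko\langle 2\rangle \to H\Z/2\Z$ and a connectivity argument (the paper compresses the smash/long-exact-sequence step into the phrase ``which implies the theorem statement''). One harmless off-by-one slip: since $\pi_2(\ko\langle 2\rangle)\cong\Z$ and $\pi_2(H\Z/2\Z)=0$, the fiber $F$ has $\pi_2(F)\cong\Z$, so $F$ is only $1$-connected ($\pi_k(F)=0$ for $k\le 1$), not $2$-connected as you state; but that weaker vanishing is exactly what you need to get $\pi_k(X\wedge F)=0$ for $k\le 1$ and hence the claimed isomorphism in degrees $\le 1$.
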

\begin{proof}
The Postnikov truncation map $\ko\ang 2\to \tau_{\le 1}(\ko\ang 2)$ is $1$-connected by definition,\footnote{A map
of spaces or spectra $X\to Y$ is \term{$n$-connected} if it induces an isomorphism on all homotopy groups in
degrees $n$ and below. This Postnikov truncation map is the universal example of such a $1$-connected map.}
and $\tau_{\le
1}(\ko\ang 2)$ has only one nonzero homotopy group, $\Z/2\Z$ in degree $0$, so it is equivalent to $H\Z/2\Z$. So
$\ko\ang 2$ has a $1$-connected map to $H\Z/2\Z$, which implies the theorem statement.
\end{proof}
In higher degrees, one can compute $\ko\ang 2$-homology similar to $\ko$-homology. See Freed-Hopkins~\cite[Section
D.1]{FH16} for how to do this with the Adams spectral sequence.
\begin{rem}
\label{ku_remark}
There is a complex analogue of $\ko$-theory, called $\ku$; it can be built from $\KU$ in an analogous way,
displaying Bott periodicity only in nonnegative degrees: the homotopy groups of $\ku$ are $\Z$, $0$, $\Z$, $0$,
$\Z$, \dots. Anderson-Brown-Peterson's decomposition of Spin bordism in terms of $\ko$-homology has an analogue:
Spin$^c$ bordism decomposes as a sum of shifts of $\ku$-homology and mod $2$ homology~\cite{ABP67, BG87a, BG87b}.

There are several useful maps involving $\ko$ and $\ku$, and the ways in which they interact occasionally help
address differentials or extension questions in spectral sequences computing $\ko$- or $\ku$-homology. We use four
maps.
\begin{itemize}
	\item The \term{complexification} homomorphism $c\colon\ko_n(X)\to\ku_n(X)$ is the connective cover of the map
	$\KO\to\KU$ which complexifies a real vector bundle.
	\item \term{Realification}, or forgetting from a complex vector bundle to a real one, induces a map $R\colon
	\ku_n(X)\to\ko_{n-2}(X)$. The degree shift is related to complex Bott periodicity.
	\item A map $\eta\colon \ko_n(X)\to\ko_{n+1}(X)$, given by multiplying by the nonzero class in
	$\ko_1(\pt)\cong\Z/2\Z$; this lifts in Spin bordism to taking the product with $S_p^1$.
	\item A map $b\colon\ku_n(X)\to\ku_{n+2}(X)$ which is the connective cover of the Bott periodicity isomorphism.
\end{itemize}
These maps interact quite nicely. We will use the following two facts.
\begin{itemize}
	\item The composition $R\circ b\circ c\colon\ko_n(X)\to\ko_n(X)$ is multiplication by $2$ (see~\cite[Theorem
	1]{Bru12}).
	\item The three maps $\eta$, $c$, and $R$ fit together into a long exact sequence~\cite[Section 12]{Bot69}\footnote{Bott only
considers the case of periodic $K$-theory, though his result implies the connective version we use. See
Bruner-Greenlees~\cite[Section 4.1.B]{bruner2010connective}.}
\begin{equation}
\label{etaCR}
\begin{tikzcd}
        \dotsb & {\ko_n(X)} & {\ko_{n+1}(X)} & {\ku_{n+1}(X)} & {\ko_{n-1}(X)} & \dotsb
        \arrow["R", from=1-1, to=1-2]
        \arrow["\eta", from=1-2, to=1-3]
        \arrow["c", from=1-3, to=1-4]
        \arrow["R", from=1-4, to=1-5]
        \arrow["\eta", from=1-5, to=1-6]
\end{tikzcd}
\end{equation}
\end{itemize}
\end{rem}
%
%

\section{What you need to know about the Adams spectral sequence}
	\label{the_Adams_SS}

Some of our computations use the Adams spectral sequence. This is a standard tool in homotopy theory, but not (yet) in
theoretical physics, so we summarize a few important facts here to allow the reader to follow along at a high
level. For a more complete introduction including some things we gloss over, see Beaudry-Campbell's excellent
paper~\cite{BC18}.

\vspace{0.3cm}

\noindent $\triangleright$ {\bf Physics intuition:}  Before we describe the Adams spectral sequence explicitly, we give a brief introduction to
spectral sequences themselves. Spectral sequences are used in order to determine certain generalized (co)homology
groups. They do this via increasingly precise approximations described by the pages of a spectral sequence. The
starting point is the second page, which usually can be obtained by ordinary (co)homology or something like it,
which often can be found in the literature or determined by other means (as we will describe for the Adams
spectral sequence for $ko$-homology below). Each page is bigraded, parametrized by two non-negative integers $s$
and $t$.  Moreover, on each page there are maps, the differentials $d_r$, that  map from the $(s,t)$
entry on the $r^{\text{th}}$ page to the $(s+r, t+r-1)$ entry.\footnote{Different spectral sequences have different
grading conventions; this is how differentials in the Adams spectral sequence work. The same is true for extension
questions.} These differentials fit together into chain
complexes, and the next page of the spectral sequence, i.e., the next better approximation,
is the homology of these chain complexes. Iterating this procedure one obtains the $\infty$ page of
the spectral sequence $E^{s,t}_{\infty}$. Now the desired (co)homology groups $h_n$ arise as an extension of the entries $E^{s,t}_{\infty}$ with $s + t = n$. In other words these entries form a filtration of the homology
groups of interest. We therefore see that the spectral sequences can be seen as an iterative procedure to
approximate and finally to determine generalized (co)homology groups. The specification of the action of
the differentials as well as the evaluation of the extension problems are highly non-trivial problems, for which
one can utilize several tricks, some of which will appear in the following. The Adams spectral sequence, for which
it is more challenging to determine the second page, has the advantage that many of the differentials and extension
questions can be
accessed more straightforwardly. For the Atiyah-Hirzebruch spectral sequence, for which the second page reduces to
ordinary (co)homology, it can be harder to formulate the differentials and extension questions. $\triangleleft$

\vspace{0.3cm}

Though there are more general versions of the Adams spectral sequence, we focus on the one for computing
$\ko$-homology, which takes the form
\begin{equation}
	E_2^{s,t} = \Ext_{\cA(1)}^{s,t}(H^*(X;\Z/2\Z), \Z/2\Z) \Longrightarrow \ko_*(X)_2^\wedge.
\end{equation} The notation $(\text{--})_2^\wedge$ means that this spectral sequence computes the $2$-completion of
the $\ko$-homology of $X$. ``$2$-completed'' and ``$2$-localized'' are not exactly the same thing, but for finitely
generated Abelian groups the $2$-completion and $2$-localization determine each other. All bordism groups we need
to worry about in our computations are finitely generated, so this distinction will not cause any problems for us.


The $E_2$-page is a little more complicated: for any space or spectrum $X$, $H^*(X;\Z/2\Z)$ is a module over the
\term{Steenrod algebra} $\cA$ of stable cohomology operations. $\cA(1)$ is the subalgebra generated by $\Sq^1$ and
$\Sq^2$. $\Ext$ is a functor classifying extensions of $\cA(1)$-modules. The purpose of this section is to go over
how to define and work with these algebraic objects. In the cases we consider in this paper, the $\cA(1)$-action on
$H^*(X;\Z/2\Z)$ and the corresponding Ext groups can more or less be looked up, but the reader who wants to learn
more on how to compute them should consult Beaudry-Campbell~\cite{BC18}.

\vspace{0.3cm}

\noindent $\triangleright$ {\bf Physics intuition:}  As discussed above, the Adams spectral sequence determines groups which appear in the full
bordism group at prime 2. The starting point of the Adams spectral sequence, its second page, relies on extensions
of the mod 2 cohomology of the space or spectrum under investigation. Moreover it is sensitive to an algebra
structure of this extension generated by the Steenrod operators Sq$^1$ and Sq$^2$. The big advantage of the Adams
spectral sequence compared to other approaches, such as for example the Atiyah-Hirzebruch spectral sequence, is
that the extra structure present determines a large number of differentials and extension questions for free.
Analogous computations are notoriously difficult in many of the alternative spectral sequences.

While some of the pieces in the Adams spectral sequence as well as other spectral sequences have a physical interpretations, see e.g., \cite{Diaconescu:2000wy, Maldacena:2001xj, McNamara:2022lrw}, at this point we do not have a good intuition concerning the full picture. In the following we thus mainly restrict to a mathematical description of the involved techniques and their application for the cases under consideration and leave a physical analysis of the Adams spectral sequence for future work. See however \cite{McNamara:2022lrw} for a physical interpretation of the $E_1$ page of the Adams spectral sequence. $\triangleleft$

\subsection{The Steenrod algebra}
\label{ss:steenrod}

The goal of this subsection is to explain what the Steenrod algebra $\cA$ and its subalgebra $\cA(1)$ are, and to
provide some techniques for computing $\cA(1)$-module structures on mod $2$ cohomology. See~\cite[Section 3]{BC18} for
more information.

A \term{stable cohomology operation} (in mod $2$ cohomology) is a natural transformation of Abelian groups
$H^*(\bl;\Z/2\Z)\to H^{*+k}(\bl;\Z/2\Z)$ which commutes with the suspension isomorphism. Under composition, the set of stable cohomology operations is a graded $\Z/2\Z$-algebra, denoted $\cA$ and called
the \term{Steenrod algebra}; its action on cohomology makes the mod $2$ cohomology of any space or spectrum into an
$\cA$-module. Steenrod showed that $\cA$ is generated by operators called \term{Steenrod squares} $\Sq^k\colon
H^*(\bl;\Z/2\Z)\to H^{*+k}(\bl;\Z/2\Z)$ subject to some relations. In this paper, we only need $\cA(1)$, the subalgebra
of $\cA$ generated by $\Sq^1$ and $\Sq^2$.

\vspace{0.3cm}

\noindent $\triangleright$ {\bf Physics intuition:} We offer a potential physical interpretation of $\Sq^1$ and $\Sq^2$. As discussed in \cite{Diaconescu:2000wy}, the
Wu formula characterizes many Steenrod squares in the cohomology of a manifold by describing certain information
associated to submanifolds and their normal bundles, with $\Sq^i$ roughly given by $w_i (N) \cup\text{--}$. Since
we are only concerned with Spin and related structures, we only need to worry about $i = 1,2$.
This suggests that the $\cA(1)$-module structure on the mod $2$ cohomology of a space knows about the realization
of fermions in the space $X$ as well as its subspaces, and perhaps there is even a direct interpretation in terms
of possible brane worldvolumes. $\triangleleft$

\vspace{0.3cm}


The definitions of Steenrod squares are fairly abstract, making it difficult to compute the $\cA$- or
$\cA(1)$-module structures on cohomology from the definition. Instead, one can generally determine these module
structures using a few key properties. Let $X$ be a space.
\begin{enumerate}
	\item\label{when_cup} If $x\in H^k(X;\Z/2\Z)$, then $\Sq^k(x) = x^2$.
	\item\label{only_spaces} If $x\in H^n(X;\Z/2)$ and $n < k$, then $\Sq^k(x) = 0$.
	\item (Naturality) given a map $f\colon X\to Y$ and $x\in H^n(Y;\Z/2\Z)$, $\Sq^k(f^*x) = f^*\Sq^k(x)$.
	\item (Cartan formula) If $X$ is a space and $x,y\in H^*(X;\Z/2\Z)$, then
	\begin{equation}
	\label{cartfor}
		\Sq^k(xy) = \sum_{i+j=k} \Sq^i(x)\Sq^j(y).
	\end{equation}
	Here we allow $i,j = 0$: $\Sq^0 = \id$.
\end{enumerate}
These properties actually uniquely characterize the Steenrod squares.
\begin{rem}
A few caveats: for spectra, most of these are false, in part because the cohomology groups of spectra do not carry
a cup product. Steenrod squares are still natural in the above sense, though.

Another unrelated caveat is that the Cartan formula shows that the $\cA$-module structure on the cohomology of a
space is related to the cup
product, but the cup product does not make cohomology into an $\cA$-algebra.
\end{rem}
Here are a few other useful facts about Steenrod squares. First, a formula for the Steenrod squares of Thom
spectra (recall the introduction of Thom spectra and the Thom isomorphism $U$ from \cref{Thom_exp}).
\begin{prop}
\label{Steenrod_Thom}
Let $V\to X$ be a vector bundle and $Ux\in \tH^*(X^V;\Z/2\Z)$. Then
\begin{equation}
	\Sq^k(Ux) = \sum_{i+j=k} Uw_i(V)\Sq^j(x).
\end{equation}
\end{prop}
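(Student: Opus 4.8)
The plan is to reduce the statement to the classical formula for Steenrod squares of a Thom class together with the Cartan formula. The key input is the Thom isomorphism $U\colon H^*(X;\Z/2\Z)\xrightarrow{\cong}\widetilde H^{*+\mathrm{rank}(V)}(X^V;\Z/2\Z)$ from \cref{Thom_exp}, under which $Ux$ is, up to the convention $U = U(1)$, the cup product of the Thom class $U$ with (the pullback along $X^V\to X$ of) $x$. The strategy is: first establish the Wu-type formula $\Sq^k(U) = U\,w_k(V)$ for the Thom class itself; then use (a version of) the Cartan formula to expand $\Sq^k(U\cdot x)$; then reassemble the terms using the Thom isomorphism.

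First I would recall the foundational fact, due to Thom, that the total Steenrod square of the Thom class equals the total Stiefel–Whitney class of $V$ times the Thom class, i.e.\ $\Sq(U) = U\cdot w(V)$, so that $\Sq^i(U) = U\,w_i(V)$ for each $i$; this is one of the standard characterizations of the Stiefel–Whitney classes, and since it is classical and used ubiquitously in bordism theory I would cite it rather than reprove it. Next, although \cref{cartfor} is stated for the cohomology ring of a \emph{space}, what is actually needed here is the statement that the Thom isomorphism intertwines the $\cA$-module structure on $\widetilde H^*(X^V;\Z/2\Z)$ with a twisted module structure on $H^*(X;\Z/2\Z)$; concretely, one has the relative Cartan formula for the action of $\Sq^k$ on a product of the Thom class with a pulled-back class. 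Writing $\pi\colon X^V\to X$ (more precisely working at the level of the pair or the Thom space) and using naturality of $\Sq^k$ under $\pi^*$ together with the Cartan formula applied to the product $U\cdot\pi^*x$ inside $\widetilde H^*(X^V;\Z/2\Z)$, one gets
\begin{equation}
	\Sq^k(U\cdot\pi^*x) = \sum_{i+j=k}\Sq^i(U)\cdot\pi^*\Sq^j(x) = \sum_{i+j=k} U\,w_i(V)\cdot\pi^*\Sq^j(x).
\end{equation}
Re-identifying $U\cdot\pi^*(w_i(V)\Sq^j(x))$ with $U\bigl(w_i(V)\Sq^j(x)\bigr)$ under the Thom isomorphism then yields exactly the claimed identity $\Sq^k(Ux) = \sum_{i+j=k} Uw_i(V)\Sq^j(x)$.

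The main obstacle is that, as the remark immediately preceding \cref{Steenrod_Thom} emphasizes, the cohomology of the spectrum $X^V$ does not carry a cup product, so neither the naive Cartan formula nor the interpretation of $Ux$ as a literal product applies verbatim at the spectrum level. I would handle this by doing the computation at the level of Thom \emph{spaces} of finite-rank bundles over finite-dimensional skeleta of $X$ (or over a finite approximation $X_n$ of $X$), where $\mathrm{Th}(V)$ is an honest pointed space, the relative cup product $H^*(X)\otimes\widetilde H^*(\mathrm{Th}(V))\to\widetilde H^*(\mathrm{Th}(V))$ exists, and the relative Cartan formula holds; the Steenrod operations and the Thom isomorphism are both compatible with the stabilization maps and with passing to the colimit, so the formula descends to $X^V$. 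Everything else—naturality of $\Sq^k$, the Thom isomorphism, and the formula $\Sq^i(U) = Uw_i(V)$—is standard and stable, so once the product structure issue is dispatched the proof is a short assembly.
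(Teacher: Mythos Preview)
Your proposal is correct and follows the same approach the paper takes: the paper does not give a formal proof but simply remarks after the statement that $\Sq^k(U) = Uw_k(V)$ and that one then computes Steenrod squares of $Ux$ using the Cartan formula. Your write-up is in fact more careful than the paper's, since you explicitly address the subtlety (flagged just before the proposition) that spectra lack cup products, by passing to Thom spaces of finite-rank bundles and then stabilizing.
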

That is, if $U$ denotes the Thom class, i.e.\ the image of $1\in H^0(X;\Z/2\Z)$ under the Thom isomorphism,
then $\Sq^k(U) = Uw_k(V)$, and so one can compute Steenrod squares
of $Ux$ using the Cartan formula. This property is specific to Thom spectra.

Second, the Wu formula computes Steenrod squares of Stiefel-Whitney classes of vector bundles:
\begin{equation}
\label{wu_formula}
	\Sq^k(w_j(V)) = \sum_{i=0}^k \binom{j-k + i-1}{i} w_{k-i}(V)w_{j+i}(V).
\end{equation}
So computing Steenrod squares is relatively routine:
\begin{itemize}
	\item \Cref{Steenrod_Thom} computes Steenrod squares for a Thom spectrum $X^V$ in terms of the Steenrod squares
	for $X$ and the Stiefel-Whitney classes of $V$.
	\item The Cartan formula~\eqref{cartfor} determines the Steenrod squares of all of $H^*(X;\Z/2\Z)$ in terms of
	the Steenrod squares of the generators of the cohomology ring.
	\item For a specific generator $x$, one can deduce some Steenrod squares automatically from
	items~\eqref{when_cup} and~\eqref{only_spaces} above. Otherwise, one can try to realize $x = w_n(V)$ for some
	vector bundle $V$ and use the Wu formula, or use naturality: find a space $Y$ that is easier to understand,
	together with a map $f\colon X\to Y$ such that $x = f^*(y)$, where we understand $\Sq^k(y)$.
\end{itemize}
Often one can gain a complete understanding of the $\cA(1)$-module structure on the mod $2$ cohomology of a space
or Thom spectrum by combining these techniques with computations already in the literature. For example, this
suffices for all computations we need in this paper.

\begin{exm}
\label{cyclic_steenrod}
Let $k > 1$; then $H^*(B\Z/2^k\Z;\Z/2\Z)\cong\Z/2\Z[x, y]/(x^2)$,
where the degrees of $x$ and $y$ are respectively $\abs x = 1$ and $\abs y = 2$~\cite[Proposition 4.5.1]{CTVZ03}. 
We will determine the $\cA(1)$-module 
structure on this cohomology ring. Because $x$ is degree $1$,~\eqref{only_spaces} implies $\Sq^i(x) = 0$ for $i\ge
2$. $\Sq^0$ is the identity, so $\Sq^0(x) = x$, and since $\abs x = 1$, $\Sq^1(x) = x^2 = 0$ by~\eqref{when_cup}.

For $y$, we know $\Sq^0(y) = y$, $\Sq^2(y) = y^2$ by~\eqref{when_cup}, and $\Sq^i(y) = 0$ for $i \ge 3$
by~\eqref{only_spaces}, so only $\Sq^1$ is left. In \cref{Z4_coh}, we prove
that if $\rho\colon \Z/2^k\Z\to\UU (1)$ denotes the
standard one-dimensional complex representation of $\Z/2^k\Z$ by rotations and $V_\rho\coloneqq
E\Z/2^k\Z \times_{\Z/2^k\Z} \C{} \rightarrow B\Z/2^k\Z$ is the associated complex line bundle, then $y = w_2(V_\rho)$.
Because $V_\rho$ is a complex line bundle, it is the pullback of the
tautological bundle $\mathcal T_{\UU(1)} \to B\UU (1)$ by some map $f\colon B\Z/2^k\Z \to B\UU (1)$.

Because both Steenrod squares and Stiefel-Whitney classes are natural under pullback, $\Sq^1(y) =
f^*\Sq^1 \big(w_2(\mathcal T_{\UU(1)})\big)$, and $\Sq^1\big(w_2(\mathcal T_{\UU(1)})\big)\in H^3(B\UU (1);\Z/2\Z) = 0$, so
$\Sq^1(y) = f^*(0) = 0$, finishing our calculation of Steenrod squares of $y$.

For a general element of $H^*(B\Z/2^k\Z;\Z/2\Z)$ we can use the Cartan formula. For example, $\Sq^1(xy) = \Sq^1(x)y +
x\Sq^1(y) = 0$, and
\begin{equation}
	\Sq^2(xy) = \Sq^2(x)y + \Sq^1(x)\Sq^1(y) + x\Sq^2(y) = 0 + 0 + xy^2.
\end{equation}
We will use this example in the proof of \cref{ko_spin_z8}.
\end{exm}
See Beaudry-Campbell~\cite[Section 3.4]{BC18} for some more worked examples.

It is traditional to describe $\cA(1)$-modules pictorially, rather than just algebraically: one uses a straight
vertical line from $x$ to $y$ to indicate $\Sq^1(x) = y$, and a curve from $z$ to $w$ to indicate $\Sq^2(z) = w$.
This convention goes back at least to Mahowald-Milgram~\cite[Section 4]{MM76} and is standard in the field. We give some
examples in \cref{cell_diagrams}.
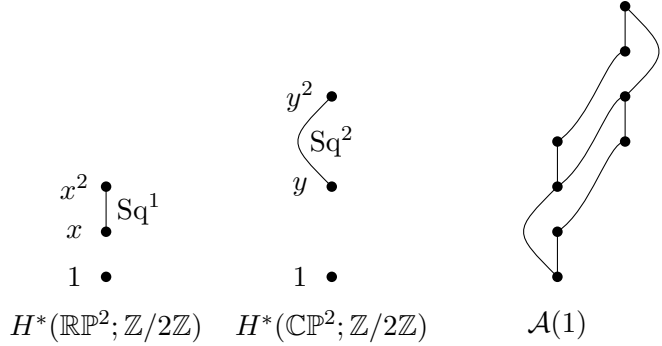
\begin{figure}[h!]
\centering
\begin{tikzpicture}[scale=0.6, every node/.style = {font=\small}]
\tikzpt{0}{0}{$1$}{};
\tikzpt{0}{1}{$x$}{};
\tikzpt{0}{2}{$x^2$}{};
\node[right] at (0, 1.5) {$\Sq^1$};
\sqone(0, 1);
\node[below] at (0, -0.5) {$H^*(\RP^2;\Z/2\Z)$};

\tikzpt{5}{0}{$1$}{};
\tikzpt{5}{2}{$y$}{};
\tikzpt{5}{4}{$y^2$}{};
\node at (5, 3) {$\Sq^2$};
\sqtwoL(5, 2);
\node[below] at (5, -0.5) {$H^*(\CP^2;\Z/2\Z)$};

\Aone{10}{0}{};
\node[below] at (10, -0.5) {$\cA(1)$};
\end{tikzpicture}
\caption{Drawing the $\cA(1)$-module structures on mod $2$ cohomology. Left:
$H^*(\RP^2;\Z/2\Z)\cong\Z/2\Z[x]/(x^3)$ with $\abs x = 1$; the vertical line indicates that $\Sq^1(x) = x^2$.
Center: $H^*(\CP^2;\Z/2\Z)\cong\Z/2\Z[y]/(y^3)$ with $\abs y = 2$; the curve indicates $\Sq^2(y) = y^2$.
Right: $\cA(1)$ as a module over itself. All straight lines denote $\Sq^1$-actions and all curves denote
$\Sq^2$-actions.
}
\label{cell_diagrams}
\end{figure}


\subsection{Determining Ext groups}
\label{ss:ext}

The input to the Adams spectral sequence is the Ext groups of the $\cA(1)$-module structure on cohomology.
If $M$ is an $\cA(1)$-module, $\Ext_{\cA(1)}^{s,t}(M, \Z/2\Z)$ has a lot of additional structure, and the most
important takeaway of this section is what that structure is and how we display it in Ext charts; it will be
important in determining differentials and extensions in the Adams spectral sequence.

These
Ext groups are Abelian groups of equivalence classes of certain extensions of $\cA(1)$-modules, but one rarely if
ever needs to work directly with the definition. In examples, the Ext groups needed for an Adams spectral
sequence calculation of twisted Spin bordism can usually be looked up or calculated using a computer program, and
if one needs to compute by hand, it is possible to use a long exact sequence to quickly reduce to computations in the
literature. As always, Beaudry-Campbell~\cite[Section 4]{BC18} is a good reference for additional information that we
do not provide here.

Since $\cA(1)$ is a graded algebra, we will work with graded $\cA(1)$-modules. If $M$ is an $\cA(1)$-module and
$k\in\Z$, we let $\Sigma^kM$ denote the $\cA(1)$-module which is the same underlying ungraded module, but with the
gradings of all elements increased by $k$. It is common to think of $\Sigma^k$ as $k$ applications of a ``shift
operator'' $\Sigma$, and as such one writes $\Sigma M\coloneqq\Sigma^1 M$.

If $M$ and $N$ are $\cA(1)$-modules, $\Ext_{\cA(1)}^{s,t}(M, N)$ is an Abelian group of equivalence classes of
extensions of the form
\begin{equation}
\label{yoneda_extension}
\begin{tikzcd}
	0 & {\Sigma^t N} & {P_1} & \dotsb & {P_s} & M & 0,
	\arrow[from=1-1, to=1-2]
	\arrow[from=1-2, to=1-3]
	\arrow[from=1-3, to=1-4]
	\arrow[from=1-4, to=1-5]
	\arrow[from=1-5, to=1-6]
	\arrow[from=1-6, to=1-7]
\end{tikzcd}
\end{equation}
where $P_1,\dotsc,P_s$ are $\cA(1)$-modules and all maps are $\cA(1)$-module maps. Two of these sequences $(P_i)$,
$(P_i')$ are equivalent if there is a third sequence $(Q_i)$ and $\cA(1)$-module maps $(P_i)\gets (Q_i)\to (P_i')$
commuting with the maps in $(P_i)$, $(P_i')$, and $(Q_i)$ such that the maps on $M$ and on $\Sigma^t N$ are the
identity (see~\cite[\href{https://stacks.math.columbia.edu/tag/06XP}{Tag 06XP}, Definition 06XT]{stacks-project}).
Defining the Abelian group structure on $\Ext_{\cA(1)}^{s,t}(M, N)$, called \term{Baer sum}, is a little involved,
and we will not need to know the definition in this paper; we point the interested reader to Mac Lane~\cite[\S
III.5]{ML63}. This perspective on Ext is due to Yoneda~\cite[Section 3.4]{Yon54}. For $s = 0$, there is an easier
description of Ext: $\Ext_{\cA(1)}^{0,t}(M, N) = \Hom_{\cA(1)}(M, \Sigma^t N)$.
\begin{rem}
Given an extension of the form~\eqref{yoneda_extension}, which represents a class in $\Ext^{s,t}(M, N)$, together
with a map $L\to M$, we can pull back~\eqref{yoneda_extension} by taking the fiber products of the maps $P_i\to M$
with $L\to M$, and thereby obtain an extension representing a class in $\Ext^{s,t}(L, N)$. That is, when the second
argument is held constant, Ext is a contravariant functor.
\label{ext_is_contravariant}
\end{rem}

There is a product on Ext groups called the \term{Yoneda product}~\cite[Section 4]{Yon54} with signature
\begin{equation}
\label{Yoneda_product}
	\Ext_{\cA(1)}^{s_1, t_1}(L, N)\times \Ext_{\cA(1)}^{s_2, t_2}(N, M)\longrightarrow \Ext_{\cA(1)}^{s_1+s_2,
	t_1+t_2}(L, M).
\end{equation}
The idea of this product is that, given two extensions of the form~\eqref{yoneda_extension} with matching
ends, one can glue them together. See~\cite[Section 4.2]{Yon54} or~\cite[Section 4.2]{BC18} for the full story. Letting $L =
N = M = \Z/2\Z$, the Yoneda product makes $\Ext_{\cA(1)}^{*,*}(\Z/2\Z, \Z/2\Z)$ into a $(\Z\times\Z)$-graded commutative
$\Z/2\Z$-algebra,
and letting $L = N = \Z/2\Z$, the Yoneda
product makes $\Ext_{\cA(1)}^{*,*}(M, \Z/2\Z)$ into a $(\Z\times\Z)$-graded $\Ext_{\cA(1)}^{*,*}(\Z/2\Z, \Z/2\Z)$-module.
We will refer to these as $\Ext(\Z/2\Z)$ and $\Ext(M)$ respectively. The $\Ext(\Z/2\Z)$-module structure on $\Ext(M)$
provides quite a bit of information in the Adams spectral sequence; in cases where the Adams spectral sequence is
more powerful than the Atiyah-Hirzebruch spectral sequence, this module structure is often the reason.

Here is what $\Ext(\Z/2\Z)$ looks like.
\begin{thm}[{Liulevicius~\cite[Theorem 3]{Liu62}}]
\label{ext_Z2}
\[\Ext(\Z/2\Z)\cong\Z/2\Z[h_0, h_1, v, w]/(h_0h_1, h_1^3, vh_1, h_0^2w - v^2),\]
where $\deg(h_0) = (1, 1)$, $\deg(h_1) = (1, 2)$,
$\deg(v) = (3, 7)$, and $\deg(w) = (4, 12)$.
\end{thm}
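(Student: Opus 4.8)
The plan is to reduce the statement to the purely algebraic computation of the cohomology of $\cA(1)$, i.e.\ of $\Ext_{\cA(1)}(\Z/2\Z, \Z/2\Z)$, and then to carry this out by building a minimal free resolution of the trivial $\cA(1)$-module $\Z/2\Z$. The key structural input is that $\cA(1)$ is an $8$-dimensional $\Z/2\Z$-algebra which is moreover a Poincar\'e duality (indeed self-injective) algebra with top class in internal degree $6$; this finiteness is what makes the computation tractable and ultimately forces periodicity. First I would record the standard bookkeeping: for a minimal resolution $\dotsb\to P_2\to P_1\to P_0\to\Z/2\Z\to 0$ by free graded $\cA(1)$-modules, minimality means the dual maps $\Hom_{\cA(1)}(P_s, \Z/2\Z)\to\Hom_{\cA(1)}(P_{s+1}, \Z/2\Z)$ all vanish, so $\Ext_{\cA(1)}^{s,t}(\Z/2\Z, \Z/2\Z)$ is canonically the $\Z/2\Z$-dual of the space of free generators of $P_s$ in internal degree $t$. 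Reading off the Ext chart is therefore the same as tracking generators of the $P_s$.

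Second, I would compute the first several stages by hand. $P_0 = \cA(1)$ on a degree-$0$ class; its kernel is the augmentation ideal, minimally generated by $\Sq^1$ (degree $1$) and $\Sq^2$ (degree $2$), so $P_1 = \Sigma\cA(1)\oplus\Sigma^2\cA(1)$, producing $h_0\in\Ext^{1,1}$ and $h_1\in\Ext^{1,2}$. Continuing, $\Sq^1\Sq^1 = 0$ forces a generator of $P_2$ in internal degree $2$ (this is $h_0^2$), and the Adem-type relations available inside $\cA(1)$ --- $\Sq^1\Sq^2\ne\Sq^2\Sq^1$, $\Sq^2\Sq^2 = \Sq^1\Sq^2\Sq^1$, and their consequences --- pin down the next syzygies. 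Pushing the resolution through roughly internal degree $14$ uncovers the two remaining multiplicative generators $v\in\Ext^{3,7}$ and $w\in\Ext^{4,12}$ and exhibits the familiar ``lightning flash together with an $h_0$-tower'' pattern beginning to repeat with shift $(s,t)\mapsto(s+4, t+12)$.

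Third, I would show the resolution becomes periodic: there is a comparison self-map of the minimal resolution of bidegree $(4,12)$, which on Ext is multiplication by $w$, and finite generation of the cohomology of the finite-dimensional Hopf algebra $\cA(1)$ (Venkov, Friedlander--Suslin) guarantees that outside a bounded band of bidegrees every class is $w^k$ times a class inside the band. This reduces the determination of all of $\Ext(\Z/2\Z)$ to the finite initial segment computed in step two. With the additive structure in hand, I would then pin down the ring structure: $h_0\leftrightarrow\Sq^1$ and $h_1\leftrightarrow\Sq^2$, while $v$ and $w$ come from the higher syzygies, and the relations $h_0h_1 = 0$ in bidegree $(2,3)$, $h_1^3 = 0$ in $(3,6)$, $vh_1 = 0$ in $(4,9)$, and $v^2 = h_0^2w$ in $(6,14)$ are verified by computing the relevant Yoneda products against the resolution (each listed product lands where the resolution has no generator, or is exhibited explicitly by lifting along a comparison map of resolutions). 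Finally a Poincar\'e-series comparison --- the candidate ring $\Z/2\Z[h_0, h_1, v, w]/(h_0h_1, h_1^3, vh_1, h_0^2w - v^2)$ has, bidegree by bidegree, the same $\Z/2\Z$-dimension as the Ext groups computed above --- shows the evident surjection from the candidate ring onto $\Ext(\Z/2\Z)$ is an isomorphism, so there are no hidden relations.

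I expect the main obstacle to be step three: isolating the precise band in which the hand computation must be done before $(4,12)$-periodicity takes over, and giving a clean argument for that periodicity rather than merely observing it. A cleaner substitute, which I might use instead, is the May spectral sequence: its $E_1$-page is the polynomial algebra $\Z/2\Z[h_{10}, h_{11}, h_{20}]$ on the May generators dual to the indecomposables $\xi_1, \xi_1^2, \xi_2$ of the dual algebra $\cA(1)^\vee = \Z/2\Z[\xi_1,\xi_2]/(\xi_1^4, \xi_2^2)$, with bidegrees $(1,1)$, $(1,2)$, $(1,3)$ and differentials generated by $d_1(h_{20}) = h_{10}h_{11}$ and $d_2(h_{20}^2) = h_{11}^3$; computing $E_\infty$ and resolving the few hidden extensions recovers the stated presentation with $h_0 = h_{10}$, $h_1 = h_{11}$, $v = h_{10}h_{20}^2$, and $w = h_{20}^4$ (so that $v^2 = h_{10}^2 w$ and $vh_1 = 0$ are visible on the nose). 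One could equally run the Cartan--Eilenberg change-of-rings spectral sequence for $\cA(0) = \Lambda(\Sq^1)\subset\cA(1)$, starting from $\Ext_{\cA(0)}(\Z/2\Z,\Z/2\Z) = \Z/2\Z[h_0]$; all three routes land on the same answer, and I would keep the minimal-resolution computation as the primary one since it makes the Ext chart, which is exactly the input to the Adams spectral sequences used later in the paper, completely explicit.
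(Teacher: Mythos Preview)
The paper does not prove this theorem: it is stated with a citation to Liulevicius and then used as a black box throughout the rest of Section~\ref{the_Adams_SS}. So there is no ``paper's own proof'' to compare against --- your proposal supplies what the paper omits.

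Your outline is essentially correct and follows one of the standard routes to this computation. The minimal-resolution approach works exactly as you describe, and the finiteness of $\cA(1)$ (eight-dimensional, Poincar\'e duality, top class in degree $6$) is indeed what drives the eventual $(4,12)$-periodicity. Your May spectral sequence alternative is also standard and perhaps the cleanest way to see the multiplicative structure; the identifications $h_0 = h_{10}$, $h_1 = h_{11}$, $v = h_{10}h_{20}^2$, $w = h_{20}^4$ are correct and make the relation $v^2 = h_0^2 w$ transparent. One small quibble: the differential on $h_{20}^2$ killing $h_1^3$ is usually indexed as a $d_1$ in the standard May filtration convention (both $h_{20}^2$ and $h_{11}^3$ live on $E_1$ after the first differential), though conventions vary and this does not affect the outcome. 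Your concern in step three --- bounding the region where the hand computation suffices before periodicity --- is legitimate but mild here: the minimal resolution of $\Z/2\Z$ over $\cA(1)$ becomes visibly $4$-periodic from about $s = 4$ onward, and one can simply exhibit the chain map of bidegree $(4,12)$ explicitly rather than appeal to general finite-generation theorems.
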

Ext groups are traditionally displayed in a diagram, and this will make the algebra structure in \cref{ext_Z2}
easier to visualize when we draw it in \cref{pic_ext_Z2}. The standard conventions for drawing Ext of $\cA(1)$-modules are as follows.  Each $\Z/2\Z$
summand in $\Ext^{s,t}$ is displayed as a dot at coordinates $(t-s, s)$; $t-s$ is called the \term{topological
degree} and $s$ the \term{filtration}. With this convention, the action of $h_0$ does not change the topological degree and
increases the filtration by $1$, so we use a vertical line connecting two dots to indicate that $h_0$ carries the
lower-filtration $\Z/2\Z$ isomorphically to the higher-filtration $\Z/2\Z$.  Likewise, the action of $h_1$ increases
both the topological degree and the filtration by $1$, so we depict it with a diagonal line. It is less common to
draw the actions of $v$ and $w$ in Ext charts. With these conventions, we can draw $\Ext(\Z/2\Z)$ in
\cref{pic_ext_Z2}.
\begin{figure}[h!]
\centering
\begin{sseqdata}[name=BottPer, classes={fill, show name={left=0.1em}}, xrange={0}{10}, yrange={0}{6}, scale=0.6,
x label = {$\displaystyle{s\uparrow \atop t-s\rightarrow}$},
y axis gap=2em,
x label style = {font = \small, xshift = -24.7ex, yshift=7ex}]
\class[name = 1](0, 0)
\class[name = h_0](0, 1)
\class[name= h_0^2](0, 2)\AdamsTower{}
\structline(0, 0)(0, 1)
\structline(0, 1)(0, 2)
\class[name=h_1, show name={right=0.1em}](1, 1)
	\structline(0, 0)(1, 1)
\class[name=h_1^2, show name={right=0.1em}](2, 2)
	\structline(1, 1)(2, 2)

\class[name=v](4, 3)
\class[name=h_0v](4, 4)
\class[name=h_0^2v](4, 5)\AdamsTower{}
\structline(4, 3)(4, 4)
\structline(4, 4)(4, 5)
\class[name=w](8, 4)
\class[name=h_0w](8, 5)\AdamsTower{}
\structline(8, 4)(8, 5)
\class[name=h_1w, show name={right=0.1em}](9, 5)
\structline(8, 4)(9, 5)
\class(10, 6)\structline(9, 5)(10, 6)
\end{sseqdata}
\printpage[name=BottPer, page=2]
\caption{$\Ext_{\cA(1)}^{*,*}(\Z/2\Z, \Z/2\Z)$, with some elements labeled. Multiplication by $w$ is injective, and
the diagram continues in a similar way to the right. One can see some of the relations in \cref{ext_Z2} using this
diagram: for example, $h_1^2\ne 0$ and $h_1^3 = 0$, because traveling two steps from $1$ on a diagonal line leads
to a nonzero element, but traveling one more step leads to an empty square, corresponding to the zero group. The
relations $h_0h_1 = 0$ and $vh_1 = 0$ follow from the absence of a diagonal line ($h_1$-action) out of $h_0$ and
$v$. The relation $h_0^2w = v^2$ is not depicted in this diagram.}
\label{pic_ext_Z2}
\end{figure}

\begin{rem}[Using preexisting calculations of Ext groups]
There are many calculations of Ext groups of $\cA(1)$-modules in the literature, and often they suffice for
whatever calculations one might need; this is the case for this paper. Beaudry-Campbell~\cite{BC18} provide
calculations of $\Ext$ of many of the most common $\cA(1)$-modules; some more calculations can be found
in~\cite{GMM68, Gia73, Dav74, AP76, Gia76, BG97, Cam17, WW19, Bak20, DL20, Deb21}. There are also computer programs
for computing Ext written by Bruner~\cite{Bru18} and Chatham-Chua~\cite{CC21}.
\end{rem}
The last thing we should mention here is that when it \emph{is} necessary to know the Ext groups of a module and
you cannot find them in the literature, one good trick is to use that a short exact sequence of $\cA(1)$-modules
$0\to L\to M\to N\to 0$ induces a long exact sequence in Ext of the form
\begin{equation}
\label{ExtLES}
\begin{tikzcd}
	\dotsb & {\Ext^{s,t}(N)} & {\Ext^{s,t}(M)} & {\Ext^{s,t}(L)} & {\Ext^{s+1,t}(N)} & \dotsb
	\arrow[from=1-1, to=1-2]
	\arrow[from=1-2, to=1-3]
	\arrow[from=1-3, to=1-4]
	\arrow["\delta", from=1-4, to=1-5]
	\arrow[from=1-5, to=1-6]
\end{tikzcd}
\end{equation}
The proof is purely formal, relying on a different but equivalent
characterization of Ext as the derived functors of Hom. It is common to compute with this long exact sequence by
drawing $\Ext(L)$ and $\Ext(N)$ both on the same chart; the boundary map increases the filtration by $1$ and lowers
the topological degree by $1$. All maps in~\eqref{ExtLES} commute with the $\Ext(\Z/2\Z)$-action, which typically
reduces the computation of the boundary map to a small number of calculations. We will give an example below;
there are many more examples in~\cite{BC18}, and a few more in~\cite{WW19, Deb21}.
\begin{exm}
\label{LES_ext_exm}
Let $C\eta\coloneqq\Sigma^{-2}\widetilde H^*(\CP^2;\Z/2\Z)$: it consists of two $\Z/2\Z$ summands in degrees $0$ and
$2$, joined by a nonzero $\Sq^2$-action. One could equivalently define $C\eta = \cA(1)/(\Sq^1, \Sq^3)$. There is a
short exact sequence of $\cA(1)$-modules
\begin{subequations}
\label{Ceta_ext}
\begin{equation}
\begin{tikzcd}
	0 & {\textcolor{BrickRed}{\Sigma^2\Z/2\Z}} & C\eta & {\textcolor{MidnightBlue}{\Z/2\Z}} & 0,
	\arrow[from=1-1, to=1-2]
	\arrow[from=1-2, to=1-3]
	\arrow[from=1-3, to=1-4]
	\arrow[from=1-4, to=1-5]
\end{tikzcd}
\end{equation}
which looks like this:
\begin{equation}
\begin{gathered}
\begin{tikzpicture}[scale=0.6]
\sqtwoR(0, 0);
\begin{scope}[BrickRed]
	\tikzpt{-4}{2}{}{};
	\tikzpt{0}{2}{}{};
	\draw[thick, ->] (-3.5, 2) -- (-0.5, 2);
\end{scope}
\begin{scope}[MidnightBlue]
	\tikzpt{0}{0}{}{rectangle, minimum size=3.5pt};
	\tikzpt{4}{0}{}{rectangle, minimum size=3.5pt};
	\draw[thick, ->] (0.5, 0) -- (3.5, 0);
\end{scope}
\node[below] at (-4, -0.3) {$\Sigma^2\Z/2\Z$};
\node[below] at (0, -0.3) {$C\eta$};
\node[below] at (4, -0.3) {$\Z/2\Z$};
\end{tikzpicture}
\end{gathered}
\end{equation}
\end{subequations}
We will use the induced long exact sequence in Ext to compute $\Ext(C\eta)$. This can be looked up (e.g.\
\cite[Figure 22]{BC18}), but in the proof of \cref{MM_ceta}, we will need to know the action of
$v\in\Ext(\Z/2\Z)$, which is not discussed in the cited reference.

To run the long exact sequence, we need to know $\Ext(\textcolor{BrickRed}{\Sigma^2\Z/2\Z})$ and
$\Ext(\textcolor{MidnightBlue}{\Z/2\Z})$. The latter is \cref{ext_Z2}; for the former, shifting an $\cA(1)$-module
simply shifts the $t$-grading on its Ext by the same amount. Therefore we can draw the long exact sequence in Ext
associated to~\eqref{Ceta_ext} in \cref{Ceta_LES_pic}, top.
\begin{figure}[h!]
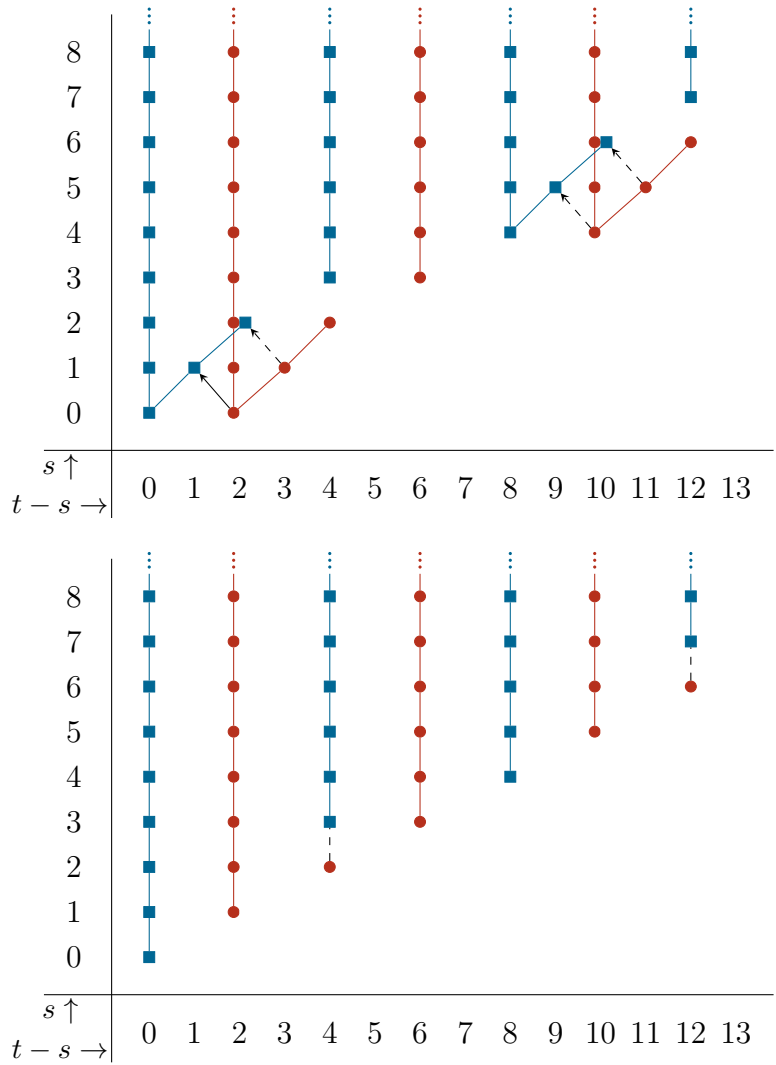

\centering
\begin{subfigure}[c]{0.6\textwidth}
\begin{sseqdata}[name=Cetaext, classes=fill, Adams grading, >=stealth, xrange={0}{13}, yrange={0}{8}, scale=0.6, x
label = {$\displaystyle{s\uparrow \atop t-s\rightarrow}$}, x label style = {font = \small, xshift = -28ex,
yshift=5.5ex}]
\begin{scope}[BrickRed]
	\class(2, 0)\AdamsTower{}
	\class(3, 1)\structline(2, 0)(3, 1)
	\class(4, 2)\structline(3, 1)(4, 2)
	\class(6, 3)\AdamsTower{}
	\class(10, 4)\AdamsTower{}
	\class(11, 5)\structline(10, 4)(11, 5)
	\class(12, 6)\structline(11, 5)(12, 6)
\end{scope}
\begin{scope}[white]
	\class(2, 0)
	\class(2, 1)
	\class(2, 3)\AdamsTower{}
	\class(10, 4)
	\class(10, 5)
	\class(10, 7)\AdamsTower{}
\end{scope}
\begin{scope}[MidnightBlue, rectangle]
	\class(0, 0)\AdamsTower{}
	\class(1, 1)\structline(0, 0)(1, 1)
	\class(2, 2)\structline(1, 1)(2, 2, -1)
	\class(4, 3)\AdamsTower{}
	\class(8, 4)\AdamsTower{}
	\class(9, 5)\structline(8, 4)(9, 5)
	\class(10, 6)\structline(9, 5)(10, 6, -1)
	\class(12, 7)\AdamsTower{}
\end{scope}
\d1(2, 0)
\d[dashed]1(3, 1)(2, 2, -1)
\d[dashed]1(10, 4)
\d[dashed]1(11, 5)(10, 6, -1)
\end{sseqdata}
\printpage[name=Cetaext, page=1]
\end{subfigure}\\\vspace{0.3cm}
\begin{subfigure}[c]{0.6\textwidth}
\begin{sseqdata}[name=Cetaext, update existing]
	\structline[dashed](4, 2)(4, 3)
	\structline[dashed](12, 6)(12, 7)
\end{sseqdata}
\printpage[name=Cetaext, page=2]
\end{subfigure}
\caption{Top: the long exact sequence in Ext associated to the short exact sequence of
$\cA(1)$-modules~\eqref{Ceta_ext}. The solid and dashed arrows are boundary maps.
Bottom: $\Ext(C\eta)$ as calculated by this long exact sequence. The dashed lines indicate
$h_0$-actions not visible to the long exact sequence, which must be calculated another way; see \cref{LES_ext_exm}
for more details.}
\label{Ceta_LES_pic}
\label{final_Ceta}
\end{figure}

For most of the boundary maps, either their source or their target vanishes; in the range visible in
\cref{Ceta_LES_pic}, top, only the four boundary maps pictured (the solid and dashed arrows) could be nonzero. Moreover,
because the long exact sequence commutes with the action of $\Ext(\Z/2\Z)$, the three dashed boundary
maps are determined by the solid boundary map, via the actions of $h_1$, $w$, and $h_1w$. We will show the solid
boundary map is an isomorphism, so that the dashed boundary maps are too.

To show the solid boundary map is an isomorphism, it suffices by exactness to show that $\Ext^{0,2}(C\eta) = 0$,
since we already know $\Ext^{1,2}(\Sigma^2\Z/2\Z) = 0$.
To show $\Ext^{0,2}(C\eta)$ vanishes, use that it is identified with $\Hom_{\cA(1)}(C\eta, \Sigma^2\Z/2\Z)$, which
vanishes: since $C\eta$ is a cyclic $\cA(1)$-module, maps out of $C\eta$ are determined by their values on the
generator, which is in degree $0$; since $\Sigma^2\Z/2\Z$ has no nonzero elements in degree $0$, a map $C\eta\to
\Sigma^2\Z/2\Z$ must vanish. This finishes the calculation; we draw the final answer in \cref{Ceta_LES_pic}, bottom.
There are two things we want to mention about the result of the computation.
\begin{itemize}
	\item For all $n$, the $v$-action on $\Ext(C\eta)$ carries the blue tower in topological degree $4n$
	injectively into the blue tower in topological degree $4n+4$, and likewise carries the red tower in topological
	degree $4n+2$ injectively into the red tower in topological degree $4n+6$. This is because the maps in the long
	exact sequence respect the $\Ext(\Z/2\Z)$-action, so it suffices to know this fact for
	$\Ext(\textcolor{BrickRed}{\Sigma^2\Z/2\Z})$ and $\Ext(\textcolor{MidnightBlue}{\Z/2\Z})$; here, the
	$\Ext(\Z/2\Z)$-action is multiplication.
	\item In a short exact sequence of $\cA(1)$-modules, the $\Ext(\Z/2\Z)$-action on Ext of the middle term cannot
	always be determined from the long exact sequence: there may be ``missing actions.'' For example, acting by
	$h_0$ is an isomorphism $\Ext^{2,6}(C\eta)\overset\cong\to \Ext^{3, 7}(C\eta)$ (and similarly in topological
	degree $12$), but looking at \cref{Ceta_LES_pic}, our method cannot see this. To check for missing
	actions, one has to compute another way, such as fitting the module into another long exact sequence.
	Beaudry-Campbell~\cite[Example 4.5.6]{BC18} calculate $\Ext(C\eta)$ in a different way, and can see these
	hidden actions.
\end{itemize}
\end{exm}
%

\subsection{Differentials}
\label{ss:differentials}

The differentials on the $E_r$-page of the Adams spectral sequence have signature $d_r\colon E_r^{s,t}\to E_r^{s+r,
t+r-1}$. Thus when the Adams $E_r$-page is displayed with $(t-s, s)$ coordinates as is standard, this differential
goes one tick to the left and $r$ ticks upwards.

In general, Adams spectral sequence differentials are very difficult: for example, in the Adams spectral sequence
computing the stable homotopy groups of the spheres, entire papers are dedicated to computing single differentials!
See~\cite{Bru84, IX15, WX17} for some examples. The ``Mahowald uncertainty principle'' is the belief that any single calculation to
compute Adams differentials will leave infinitely many unaddressed.

However, Adams differentials are also extremely constrained: there is a lot of structure on the $E_r$-page of the
Adams spectral sequence, and differentials are compatible with this structure in various ways. This has the
consequence that determining a differential in the Adams spectral sequence tends to either be very easy or very
hard.

In our situation, things are good: there is not enough room for differentials to be too difficult in degrees $12$
and below, and the simpler Adams spectral sequence for $\ko$-theory tends to have easier differentials as well. We
will go over a few standard tools for computing differentials; they will suffice to determine all Adams
differentials in this paper.

\subsubsection{Differentials are equivariant for the $\Ext(\Z/2\Z)$-action}

The first, and most useful, fact is that differentials are equivariant for the $\Ext(\Z/2\Z)$-action on the
$E_r$-page. That is, if $a\in\Ext(\Z/2\Z)$ and $x\in E_r^{s,t}$, $d_r(a\cdot x) = a\cdot d_r(x)$. For example, this
can be used to show that all differentials in the Adams spectral sequence for $\ko_*(\pt)$ vanish: the $E_2$-page
is $\Ext(\Z/2\Z)$. Looking at \cref{pic_ext_Z2}, it looks like there could be a $d_r$ differential from $h_1$ to
$h_0^{r+1}$, but $h_0h_1 = 0$ and $h_0(h_0^{r+1})\ne 0$. If $d_r(h_1)\ne 0$, then $d_r(h_0h_1) = h_0(h_0^{r+1})$,
so $d_r$ sends $0$ to something nonzero, which is a contradiction.

In summary, the only way for a differential to kill elements in an $h_0$-tower is with another $h_0$-tower. In a
similar way, if $h_1x = 0$ and $h_1y\ne 0$, then $d_r(x)$ cannot equal $y$. We will use this fact often.

\subsubsection{Margolis' theorem}
Margolis' theorem is another very useful tool.
\begin{thm}[Margolis~\cite{Mar74}]
\label{margolis}
A splitting $H^*(X;\Z/2\Z)\cong \Sigma^k\cA(1)\oplus M$ as $\cA(1)$-modules lifts to
a splitting $\ko\wedge X\simeq \Sigma^k H\Z/2\Z\vee X'$ of spectra.
\end{thm}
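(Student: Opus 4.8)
The plan is to realize $\Sigma^k H\Z/2$ as a wedge retract of $\ko\wedge X$: that is, to construct spectrum maps $i\colon \Sigma^k H\Z/2\to\ko\wedge X$ and $r\colon\ko\wedge X\to\Sigma^k H\Z/2$ with $r\circ i\simeq\id$. Granting this, $e\coloneqq i\circ r$ is an idempotent self-map of $\ko\wedge X$, and since the homotopy category of spectra is idempotent-complete, $e$ splits off its image, yielding $\ko\wedge X\simeq\Sigma^k H\Z/2\vee X'$ with $X'\coloneqq\mathrm{cofib}(i)$, which is the asserted splitting. This is essentially Margolis's argument~\cite{Mar74}; see also Beaudry--Campbell~\cite{BC18}.

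Building the retraction $r$ is the easy half. Write $H^*(X;\Z/2)\cong\Sigma^k\cA(1)\oplus M$, let $x\in H^k(X;\Z/2)$ generate the free summand, and let $\phi\colon X\to\Sigma^k H\Z/2$ classify $x$, so $\phi^*(\iota_k)=x$ for $\iota_k$ the fundamental class. Smashing with $\ko$ gives $1\wedge\phi\colon\ko\wedge X\to\Sigma^k(\ko\wedge H\Z/2)$. Because $H\Z/2$ is a commutative ring spectrum, $\ko\wedge H\Z/2$ is an $H\Z/2$-module, and — since $\Z/2$ is a field — every $H\Z/2$-module is a wedge of suspensions of $H\Z/2$. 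Its homotopy is $H_*(\ko;\Z/2)=(\cA/\!/\cA(1))_*$, which is $\Z/2$ in degree $0$ and vanishes in degrees $1,2,3$, so the bottom summand $\Sigma^0 H\Z/2$ (the image of the unit $H\Z/2\to\ko\wedge H\Z/2$) is a retract; choosing a retraction $q\colon\ko\wedge H\Z/2\to H\Z/2$ of it and setting $r\coloneqq(\Sigma^k q)\circ(1\wedge\phi)$, one checks directly that the composite $X\to\ko\wedge X\xrightarrow{r}\Sigma^k H\Z/2$ along the unit recovers $\phi$, so $r$ "sees" the class $x$.

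Constructing the section $i$ is the crux, and I expect it to be the main obstacle. The difficulty is that $\Sigma^k\cA(1)$ is a summand of $H^*(X;\Z/2)$ only as an $\cA(1)$-module, not as a module over the whole Steenrod algebra $\cA$, so one cannot directly realize the splitting by a map of spectra into $\Sigma^k H\Z/2$. The fix is to pass to an auxiliary finite spectrum $Y$ with $H^*(Y;\Z/2)\cong\cA(1)$ as $\cA(1)$-modules. One first checks $\ko\wedge Y\simeq H\Z/2$: the $\ko$-based Adams spectral sequence for $Y$ has $E_2=\Ext_{\cA(1)}(\cA(1),\Z/2)$, which is $\Z/2$ concentrated in bidegree $(0,0)$ and hence collapses, while the vanishing of the $\Sq^1$-Margolis homology of $\cA(1)$ together with the fact that $\cA(1)$ involves only $2$-primary Steenrod operations forces $\pi_*(\ko\wedge Y)$ to have no rational or odd-primary part, so $\pi_*(\ko\wedge Y)=\Z/2$ in degree $0$ and $H\Z/2$ follows. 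Next one produces a map $\psi\colon X\to\Sigma^k Y$ whose effect, after smashing with $\ko$, is the inclusion of the free summand: the point is that in the $\ko$-based Adams spectral sequence for $[X,\Sigma^k Y]$ the $E_2$-term is governed by $\cA(1)$-module maps, where the inclusion $\Sigma^k\cA(1)\hookrightarrow H^*(X;\Z/2)$ is an honest class sitting in Adams filtration $0$, hence a permanent cycle. Setting $i\coloneqq 1\wedge\psi\colon\Sigma^k H\Z/2=\ko\wedge\Sigma^k Y\to\ko\wedge X$, it remains to verify $r\circ i\simeq\id$: this is a self-map of $\Sigma^k H\Z/2$, and $[H\Z/2,H\Z/2]\cong H^0(H\Z/2;\Z/2)=\cA^0=\Z/2$, so $r\circ i$ is $0$ or $\id$; tracking the filtration-$0$ class through $\psi$ and through $r$ shows the composite is nonzero on $\pi_k$, so it equals $\id$, completing the plan.
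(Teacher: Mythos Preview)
The paper does not prove this theorem---it is cited from Margolis and used as a black box---so there is no paper's proof to compare against. Your construction of the retraction $r$ is fine, but the section $i$ has two genuine problems. First, a directional slip: with $\psi\colon X\to\Sigma^k Y$, the map $1\wedge\psi$ goes from $\ko\wedge X$ to $\ko\wedge\Sigma^k Y$, not the reverse, so you have built a second retraction rather than a section; you would need $\psi\colon\Sigma^k Y\to X$. Second, and more seriously, your justification that the relevant class is a permanent cycle (``sitting in Adams filtration $0$, hence a permanent cycle'') is false in general---filtration-zero classes routinely support Adams differentials. Even after fixing the direction, realizing the $\cA(1)$-module projection $H^*(X;\Z/2\Z)\to\Sigma^k\cA(1)$ by a spectrum map is precisely the hard point, since any $\psi^*$ induced by a map of spectra must be $\cA$-linear, whereas the hypothesis only provides an $\cA(1)$-linear splitting.

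The missing idea is that $\cA(1)$ is a Frobenius (Poincar\'e duality) algebra, hence self-injective as a module over itself: $\Ext_{\cA(1)}^{s}(\text{--},\cA(1))=0$ for all $s\ge 1$. One clean way to finish is to work in the category of $\ko$-modules, where the Adams spectral sequence for $[\Sigma^k H\Z/2\Z,\ko\wedge X]_{\ko}$ has $E_2^{s,t}=\Ext_{\cA(1)}^{s,t}\big(H^*(X;\Z/2\Z),\Sigma^k\cA(1)\big)$; self-injectivity collapses this onto the $s=0$ line, and the $\cA(1)$-projection then lifts directly to the desired section, rendering the auxiliary spectrum $Y$ unnecessary. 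An equivalent route is to observe that $H^*(\ko\wedge X;\Z/2\Z)\cong\cA\otimes_{\cA(1)}H^*(X;\Z/2\Z)$ acquires a free $\cA$-summand $\Sigma^k\cA$, reducing to Margolis's original theorem for the full Steenrod algebra.
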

This is a bit abstract, so we go over some more explicit consequences. $\Ext \big(\Sigma^k\cA(1)\big)$ consists of a
single $\Z/2\Z$ summand in bidegree $s = 0$, $t = k$. This is a direct summand of the $E_2$-page of the Adams
spectral sequence computing $\ko_*(X)$; Margolis' theorem says this splitting comes from a splitting of spectra, so
the entire Adams spectral sequence, including all differentials and extension problems, splits into the single
$\Z/2\Z$ coming from $\Sigma^k\cA(1)$ and everything else. That is:
\begin{cor}
\label{Margolis_kills_differentials}
Let $S\subset \Ext \big(H^*(X;\Z/2\Z)\big)$ be a $\Z/2\Z$ summand coming from a $\Sigma^k\cA(1)$ summand in $H^*(X;\Z/2\Z)$. Then
all differentials to and from $S$ vanish, and no element of $S$ participates in a nonsplit extension on the
$E_\infty$-page.
\end{cor}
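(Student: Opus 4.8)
The plan is to derive everything from Margolis' theorem (\cref{margolis}) together with the standard fact that the Adams spectral sequence is additive under wedge decompositions of spectra. The hypothesis supplies an $\cA(1)$-module splitting $H^*(X;\Z/2\Z)\cong \Sigma^k\cA(1)\oplus M$ in which $S$ is precisely the $\Z/2\Z$ coming from $\Ext_{\cA(1)}^{0,k}(\Sigma^k\cA(1),\Z/2\Z)$, since $\Sigma^k\cA(1)$ is free and hence has $\Ext$ concentrated in the single bidegree $(s,t)=(0,k)$. By \cref{margolis}, this splitting lifts to a homotopy equivalence of spectra $\ko\wedge X\simeq \Sigma^k H\Z/2\Z\vee X'$ for some spectrum $X'$ (with $H^*(X';\Z/2\Z)\cong M$).

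Next I would use that the Adams spectral sequence sends this wedge to a direct sum of spectral sequences: the $E_r$-page, every differential $d_r$, and every filtration jump in the abutment all respect the decomposition $E_r(\ko\wedge X)\cong E_r(\Sigma^k H\Z/2\Z)\oplus E_r(X')$. The Adams spectral sequence for $\Sigma^k H\Z/2\Z$ has $E_2$-page consisting of a single $\Z/2\Z$ in filtration $s=0$ (topological degree $k$) and nothing else, and the summand $S$ of the original $E_2$-page is exactly this class viewed inside $E_2(\ko\wedge X)$.

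The three claims then follow by bookkeeping. A differential $d_r$ out of a filtration-$0$ class lands in filtration $r\ge 2$; within the $\Sigma^k H\Z/2\Z$ summand there are no classes in positive filtration, and $d_r$ cannot carry $S$ into the $E_r(X')$ summand because the spectral sequence is a direct sum; hence every differential out of $S$ vanishes. A differential $d_r$ into $S$ would originate in filtration $-r<0$, which is impossible, or in $E_r(X')$, again impossible by the direct-sum structure; hence every differential into $S$ vanishes. Consequently $S$ survives to $E_\infty$, and since the whole filtered abutment splits as a direct sum, the class $S$ contributes a genuine $\Z/2\Z$ direct summand of $\ko_k(X)_2^\wedge$ sitting in filtration $0$, so it can neither be the ``top'' nor the ``bottom'' of a nonsplit extension on the $E_\infty$-page.

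I do not expect a serious obstacle here: the one point that merits care is justifying that a splitting of spectra splits the \emph{entire} Adams spectral sequence (pages, differentials, and extensions), but this is a routine naturality and additivity property of the Adams construction, and everything after that is a degree count. If one wanted to avoid invoking additivity as a black box, one could instead observe directly that $\Sigma^k H\Z/2\Z$ is a retract of $\ko\wedge X$, so $S$ is a retract of the $E_2$-page compatibly with all differentials, which already forces the conclusions.
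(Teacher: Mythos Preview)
Your proposal is correct and follows essentially the same approach as the paper: invoke Margolis' theorem to lift the $\cA(1)$-module splitting to a wedge decomposition of spectra, then use additivity of the Adams spectral sequence to conclude that the single $\Z/2\Z$ from $\Sigma^k H\Z/2\Z$ is isolated from all differentials and extensions. Your write-up is in fact more detailed than the paper's, which treats this as an immediate consequence and only sketches the reasoning in the paragraph preceding the corollary.
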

This will zero out a lot of differentials in the Adams spectral sequence for Spin-$\GL^+(2, \Z)$ bordism for us.

\subsubsection{Miscellaneous tricks}

The Adams spectral sequence is natural, which is abstract nonsense, but helps us deduce differentials. In a little
more detail, given a map of spectra $f\colon X\to Y$, there is a map from the $E_r$-page of the Adams spectral
sequence of $X$ to the Adams $E_r$-page of $Y$, and this map commutes with differentials. When $r = 2$, this map
has another description: cohomology is contravariantly functorial, giving us a map $f^*\colon H^*(Y;\Z/2\Z)\to
H^*(X;\Z/2\Z)$, and $\Ext(\text{--},\Z/2\Z)$ is contravariantly functorial (as we discussed in
\cref{ext_is_contravariant}), giving us a map $f_\sharp\colon
\Ext\big(H^*(X;\Z/2\Z)\big)\to\Ext\big(H^*(Y;\Z/2\Z)\big)$, and $f_\sharp$ is the map on Adams $E_2$-pages.

To apply this, one lets $X$ be a spectrum whose Adams spectral sequence is well-understood, and
$Y$ one whose Adams spectral sequence is not as understood --- or sometimes vice versa. In either case,
differentials that one already knows might map to differentials one wants to understand. This is particularly useful
for us for the map from Spin-$\Mp(2, \Z)$ bordism to Spin-$\GL^+(2, \Z)$ bordism: we determine all differentials in
the Adams spectral sequence for $\Omega_*^{\Spin\text{-}\Mp(2, \Z)} (\pt)$ using the May-Milgram theorem (see
below), then use the map $\Omega_*^{\Spin\text{-}\Mp(2, \Z)} (\pt) \to\Omega_*^{\Spin\text{-}\GL^+(2, \Z)} (\pt)$
to deduce many differentials in the Adams spectral sequence for Spin-$\GL^+(2, \Z)$ bordism. See
\cref{image_from_spin_Z8,d4_equals_zero}. This is a common technique when computing with spectral sequences: see,
e.g.,~\cite{Deb21, MT67, Mah68, DM81, LM98, Hil09, Fra11, WX17, Isa19, BR21a, BR21b, Cul21, FH21} for other Adams
examples.

The May-Milgram theorem~\cite{MM81} is another helpful tool for computing differentials: it expresses differentials
between towers of elements linked by $h_0$-actions in terms of Bockstein homomorphisms on cohomology, which are
typically much easier to compute. This is an essential piece of our calculation of Spin-$\Z/8\Z$ bordism in
Section \ref{ss:spin_Z8}, and we say more about the May-Milgram theorem there.

There are various other techniques for computing Adams differentials in relatively systematic ways. A lot of
information can be gained from secondary operations called Massey products and Toda brackets~\cite{Tod62, Mos70}.
Recently, Gheorghe, Isaksen, Wang, and Xu have used methods originating in motivic stable homotopy theory to make
great progress on the Adams spectral sequence for the sphere spectrum~\cite{Isa19, IWX20, GWX21}.

Of course, bordism groups have geometric meaning, and it is sometimes possible to deduce differentials in an Adams
spectral sequence calculation of a bordism group by finding a generator for that bordism group. This technique is
used in, e.g.,~\cite[Appendix F]{KPMT20}.

\vspace{0.3cm}

\noindent $\triangleright$ {\bf Physics intuition:} Some aspects of the Adams spectral sequence admit a physical interpretation.
The core object which enters in the $E_2$ page of the Adams spectral sequence involves objects
of the form $E_{2}^{s,t} = \text{Ext}^{s,t}_{\mathcal{A}(1)}(M,N)$, where $M$ and $N$ are $\mathcal{A}(1)$
modules. To unpack the definitions, the extension problem of interest
involves the long exact sequence:
\begin{equation}
\begin{tikzcd}
	0 & {\Sigma^t N} & {P_1} & \dotsb & {P_s} & M & 0,
	\arrow[from=1-1, to=1-2]
	\arrow[from=1-2, to=1-3]
	\arrow[from=1-3, to=1-4]
	\arrow[from=1-4, to=1-5]
	\arrow[from=1-5, to=1-6]
	\arrow[from=1-6, to=1-7]
\end{tikzcd}
\end{equation}
where $P_1,\dotsc,P_s$ are $\cA(1)$-modules and all maps are $\cA(1)$-module maps, namely we look for sequences involving $s$ terms in the resolution of $M$, and $\Sigma^{t} N $ is the (reduced) suspension of $N$. Interpreting $N$ as a CW complex, recall that the
reduced suspension $\Sigma N$ involves sweeping out $N$ from a ``starting point'' to a ``final point''. Said differently, one can introduce an
auxiliary timelike direction and track the evolution of $N$ along this auxiliary time. This is a rather natural operation in the context of a path integral for a topological quantum field theory where one specifies an initial and final time evolution step, and provides a physical starting point for understanding the $t = 0$ and $t = 1$ terms of the $E_2$ page. Treating $\Sigma N = \Sigma^{1} N$ itself as another CW complex, observe that we can repeat the entire process, arriving at $\Sigma^{2} N$. Consider next the grading by the integer $s$. These sorts of resolutions figure prominently in the study of B-branes on Calabi-Yau threefolds (see e.g., 
\cite{Kontsevich:1994dn, Douglas:2000gi, Aspinwall:2001pu} and \cite{Aspinwall:2004jr} for a review). In that context, B-branes are objects of the (bounded) derived category of coherent sheaves, and a long exact sequence can be interpreted as specifying a bound state of branes undergoing tachyon condensation (via morphisms between coherent sheaves). This suggests that in the term with $P_1,...,P_s$, each one of these objects should be viewed as specifying as an intermediate object appearing in the bound state fusing $\Sigma^{t}N$ to $M$. Lastly, let us turn to the differentials of the spectral sequence. The case which is somewhat ``easier'' to understand from a physical perspective just involves differentials $d_1$ which act on the bigrading as $(s,t) \rightarrow (s+1,t)$, namely we simply add on another term in the long exact sequence. In related contexts such as the derived category of coherent sheaves, such a differential can be interpreted as a supercharge action, and taking its cohomology returns the supersymmetric bound states (up to stability). The interpretation of $d_2$ and the higher differentials in principle follows from this more primitive $d_1$ object, but we leave the details of how this works and how to interpret the physical systems described above in the context of the Adams spectral sequence in detail for future investigation. $\triangleleft$

\subsection{Extensions and the $E_\infty$-page}
\label{ss:extensions}

Once we have computed all the differentials in the Adams spectral sequence, can we read off the $\ko$-homology
groups? Almost, but not quite; the purpose of this section is to finish the job.

Many spectral sequences arise from a \term{filtration} on the algebraic object $A$ we want to compute. This means
something like data of subobjects $A_1\subset A_2\subset\dotsb\subset A$ such that $\bigcup_i A_i =
A$.\footnote{In general, one uses a colimit rather than a union, but the idea is the same. Additionally, sometimes
one sees variants of this notion of filtration, e.g.\ we might encounter $A_i\supset A_{i+1}$, etc.} What the
spectral sequence computes is the \term{associated graded} of the filtration, i.e.\ the quotients $A_{i+1}/A_i$.
This is a good approximation of $A$, but it is not always enough: for example, $A = \Z/4\Z$ and $A= (\Z/2\Z)\oplus
(\Z/2\Z)$ both admit two-term filtrations with $A_1 = \Z/2\Z$ and $A_2 = A$, and in both cases the associated
gradeds are two copies of $\Z/2\Z$. Therefore if a spectral sequence outputs $(\Z/2\Z)\oplus (\Z/2\Z)$, one has to
determine whether $A$ is in fact $(\Z/2\Z)\oplus (\Z/2\Z)$, or whether $A\cong\Z/4\Z$. This is known as an
\term{extension question}.

Extensions in the Adams spectral sequence behave a lot like differentials: there is a lot of structure that handles
most of them, but when the structure does not help, extensions can be quite difficult, and we have a
variety of ad hoc techniques to address them.
\subsubsection{General structure of extension questions in the Adams spectral sequence}
The $(t-s, s)$-grading convention for the Adams spectral sequence has the advantage that extension questions are
vertical. Specifically, suppose that for topological degree $k$ fixed, the $E_\infty$-page for the Adams spectral
sequence computing $\ko_*(Y)$ is nonzero only at $s = \textcolor{BrickRed}{0}$, $\textcolor{Green}{1}$,
$\textcolor{MidnightBlue}{3}$, and $\textcolor{Fuchsia}{4}$; these are the groups
$\textcolor{BrickRed}{E_\infty^{0, k}}$, $\textcolor{Green}{E_\infty^{1, k+1}}$,
$\textcolor{MidnightBlue}{E_\infty^{3, k+3}}$, and $\textcolor{Fuchsia}{E_\infty^{4, k+4}}$. We draw a picture of
this in \cref{rainbow_xtn}.
\begin{figure}[h!]
\centering
\begin{sseqdata}[name=rainxtn, classes={draw=none}, xrange={0}{2}, yrange={0}{4}, scale=0.7, Adams grading,
>=stealth]
\class["\textcolor{BrickRed}{E_\infty^{0,k}}"](1, 0)
\class["\textcolor{Green}{E_\infty^{1,k+1}}"](1, 1)
\class["\textcolor{MidnightBlue}{E_\infty^{3,k+3}}"](1, 3)
\class["\textcolor{Fuchsia}{E_\infty^{4,k+4}}"](1, 4)
\class["0"](1, 2)
\end{sseqdata}
\printpage[name=rainxtn, page=2]
\caption{A piece of an $E_\infty$-page in fixed topological degree. This scenario leads to the three extension
questions in~\eqref{the_xtn_qns}.}
\label{rainbow_xtn}
\end{figure}

Given this data, there are abelian groups $A_1$ and $A_2$ and extensions
\begin{subequations}
\label{the_xtn_qns}
\begin{gather}
	\shortexact{\textcolor{Green}{E_\infty^{1, k+1}}}{A_1}{\textcolor{BrickRed}{E_\infty^{0, k}}}{}\\
	\shortexact{\textcolor{MidnightBlue}{E_\infty^{3, k+3}}}{A_2}{A_1}{}\\
	\shortexact{\textcolor{Fuchsia}{E_\infty^{4, k+4}}}{\ko_k(Y)}{A_2}.
\end{gather}
\end{subequations}
That is, we work upwards, extending what we have already computed by the next group in the $E_\infty$-page.

If there are only finitely many nonzero groups on the $E_\infty$-page in fixed topological degree, then this
process stops once we have incorporated the last one, but it is possible to have infinitely many nonzero groups,
and therefore infinitely many steps to take in this part of the computation. This is quite common: it occurs in
both the computations for Spin-$\Mp(2, \Z)$ bordism and Spin-$\GL^+(2, \Z)$ bordism, and in most of the examples
seen in~\cite{FH16, Cam17, BC18, Deb21}.

Fortunately, this infinite-seeming question has a finite answer --- the Atiyah-Hirzebruch spectral sequence can be
used to show that unless $X$ is overwhelmingly large,\footnote{Specifically, we require that $X$ is
homotopy equivalent to a CW complex with finitely many cells in each dimension. This is true not just for every
spectral sequence computation in this paper, but also all of the spectral sequence computations in mathematical
physics that we cite.} the $(X, V)$-twisted Spin bordism groups are finitely generated Abelian groups. Therefore in
an extension question with infinitely many steps, all but finitely many of the groups must combine into a free
Abelian group.\footnote{Under the hood, what is happening here is that in an infinite-steps extension problem, the
final answer is the limit of the extensions at each finite step. This is where $2$-completion assists us: a $\Z$
summand in $\ko_k(Y)$ becomes a $\Z_2$-summand in $\ko_k(Y))_2^\wedge$, where $\Z_2$ denotes the $2$-adic integers,
and $\ko_k(Y)_2^\wedge$ is what the Adams spectral sequence actually computes. Since $\Z_2 = \varprojlim_n
\Z/2^n\Z$, then infinite-step extension questions can produce copies of $\Z_2$; since we know $\ko_k(Y)$ is
finitely generated, any infinite-step extension question must combine all but finitely many pieces into
$\Z_2^{\oplus r}$, which arises from $\Z^r$ by $2$-completing.}
\begin{rem}
Similar extension questions arise in other spectral sequences; in particular, we have to confront extension
questions in an Atiyah-Hirzebruch spectral sequence in the proof of \cref{the_thm}. The broad shape of the question
is the same: the final answer we want to compute is an iterated extension of all elements on the $E_\infty$-page in
a given total degree. However, this time the grading is different: rather than working vertically, all elements of
total degree $k$ lie on the diagonal line $p+q=k$ on the $E^\infty$-page. Our proof of \cref{the_thm} does not need
to get into the details of this kind of extension question; see~\cite[Section 2.2.2, Section 3]{Garcia-Etxebarria:2018ajm}
for more information and more examples of extension problems in Atiyah-Hirzebruch spectral sequences.
\end{rem}
\subsubsection{What we can infer from the $\Ext(\Z/2\Z)$-action}
\label{sss:non_hidden}
Because the $\Ext(\Z/2\Z)$-action on the $E_2$-page of the Adams spectral sequence that we introduced in
Section \ref{ss:ext} commutes with differentials, it passes to an $\Ext(\Z/2\Z)$-action on each page of the spectral
sequence, including the $E_\infty$-page. This extra data automatically resolves a large quantity of extension
problems, especially when computing $\ko$-homology in physically relevant degrees; this fact is probably the
biggest competitive advantage of the Adams spectral sequence over the Atiyah-Hirzebruch spectral
sequence.\footnote{As an example, Pin$^-$ structures are equivalent to $(B\Z/2\Z, \sigma)$-twisted Spin structures,
where $\sigma\to B\Z/2\Z$ is the tautological line bundle. Therefore one can compute Pin$^-$ bordism by computing
$\Omega_*^\Spin \big( (B\Z/2\Z)^{\sigma-1} \big)$, which one can study with either the Adams or Atiyah-Hirzebruch spectral
sequences. In the Adams spectral sequence, the $\Ext(\Z/2\Z)$-action solves all extension problems~\cite[Example
6.3]{Cam17}, but in the Atiyah-Hirzebruch spectral sequence, we have almost no information about extensions without
additional work.}

The key fact is:
\begin{lem}
\label{h0_xtn_lemma}
In the Adams spectral sequence computing $\ko_*(Y)$,\footnote{This fact, along with most of the facts in this
subsection, generalize to Adams spectral sequences over any subalgebra of the Steenrod algebra computing other
generalized homology groups.} suppose there are classes $\overline x\in E_\infty^{s, t}$ and $\overline y\in
E_\infty^{s+1, t+1}$ such that $h_0\overline x = \overline y$. Then there are elements
$x,y\in\ko_{t-s}(Y)$ whose images in the $E_\infty$-page are $\overline x$, resp.\ $\overline y$, such that $2x =
y$.
\end{lem}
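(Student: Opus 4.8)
The plan is to deduce this from the multiplicative structure of the Adams spectral sequence, using that $h_0$ detects the integer $2$. First I would recall that the Adams spectral sequence computing $\ko_*(Y)$ (strictly, its $2$-completion) is a module over the Adams spectral sequence computing $\pi_*(\ko)$: there is a pairing of spectral sequences compatible with the $\pi_*(\ko)$-module structure on $\ko_*(Y)$, and on the $E_2$-page this pairing is precisely the $\Ext(\Z/2\Z)$-module structure on $\Ext\big(H^*(Y;\Z/2\Z)\big)$ introduced in \cref{ss:ext} — here one uses the change-of-rings identification $\Ext_{\cA}(H^*\ko,\Z/2\Z)\cong\Ext_{\cA(1)}(\Z/2\Z,\Z/2\Z)$. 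Since the $\Ext(\Z/2\Z)$-action commutes with the differentials, this module structure descends to every page, including the $E_\infty$-page, and that is exactly the action appearing in the hypothesis $h_0\overline x=\overline y$.

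Next I would record two standard facts. First, in the Adams spectral sequence for $\pi_*(\ko)$, whose $E_2$-page is $\Ext(\Z/2\Z)$ as drawn in \cref{pic_ext_Z2}, the class $h_0\in\Ext^{1,1}$ is a permanent cycle, and the $h_0$-tower in topological degree $0$ converges to $\pi_0(\ko)\cong\Z$ with $2\in\pi_0(\ko)$ of Adams filtration exactly $1$ and image $h_0$ in $E_\infty^{1,1}$. Second, the general product-detection principle for a multiplicative (module) spectral sequence: if $a$ is a permanent cycle of filtration $f$ whose $E_\infty$-image detects $\alpha\in\pi_\ast$, and $\overline x\in E_\infty^{s,\ast}$ detects $x$, then $\alpha x$ lies in filtration $\ge s+f$, and whenever $a\cdot\overline x\ne 0$ in $E_\infty^{s+f,\ast}$ the element $\alpha x$ has filtration exactly $s+f$ and is detected by $a\cdot\overline x$.

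The argument then is short. By convergence, pick any $x\in\ko_{t-s}(Y)$ lifting $\overline x$, i.e.\ $x$ lies in the filtration subgroup $F^s\ko_{t-s}(Y)$ with class $\overline x$ in $F^s/F^{s+1}=E_\infty^{s,t}$. Apply the product-detection principle with $\alpha=2$, $a=h_0$, $f=1$: since $h_0\overline x=\overline y\ne 0$ by hypothesis, the element $2x$ lies in $F^{s+1}\ko_{t-s}(Y)$ and its image in $E_\infty^{s+1,t+1}$ is $\overline y$. Setting $y\coloneqq 2x$ produces elements $x,y\in\ko_{t-s}(Y)$ with $E_\infty$-images $\overline x$, $\overline y$ and $2x=y$, as required; the argument in fact shows \emph{every} lift of $\overline x$ works.

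The only step needing genuine care is the first paragraph — checking that the $\Ext(\Z/2\Z)$-module structure used throughout this paper really is the structure coming from the $\pi_*(\ko)$-action and is tracked by a pairing of spectral sequences, so that the product-detection principle applies to it. This is standard (see e.g.\ Beaudry-Campbell~\cite{BC18}, or Bruner-Greenlees), and it is precisely what underlies the slogan used repeatedly elsewhere in this paper that $h_0$-towers encode $2$-divisibility; everything else in the proof is formal.
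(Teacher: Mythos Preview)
Your proof is correct and is the standard argument. The paper actually states this lemma without proof, presenting it as a well-known fact about the multiplicative Adams spectral sequence; the surrounding text (``That is, an $h_0$-action forces the extension\dots'') is explanatory commentary, not an argument. What you have written is exactly the justification one would give: the $\Ext(\Z/2\Z)$-module structure on the spectral sequence comes from a pairing of spectral sequences detecting the $\pi_*(\ko)$-action, $h_0$ detects $2\in\pi_0(\ko)$ in filtration $1$, and the standard product-detection principle then forces $2x$ to be detected by $h_0\overline x=\overline y$. Your caveat that the only non-formal point is verifying the module pairing really is the one encoding the $\pi_*(\ko)$-action is apt, and the reference to Beaudry--Campbell is appropriate.
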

That is, an $h_0$-action forces the extension $0\to \Z/2\Z\to A\to\Z/2\Z\to 0$ to have $A = \Z/4\Z$, not
$A = (\Z/2\Z)\oplus (\Z/2\Z)$! Likewise, longer sequences of elements on the $E_\infty$-page linked by
$h_0$-actions correspond to copies of $\Z/4\Z$, $\Z/8\Z$, and so on in $\ko$-homology.
See \cref{h0_xtn_example} for a picture.
The converse is false: not every non-trivial extension is detected by $h_0$, as we
discuss in Section \ref{sss:hidden_xtn}.
\begin{figure}[h!]
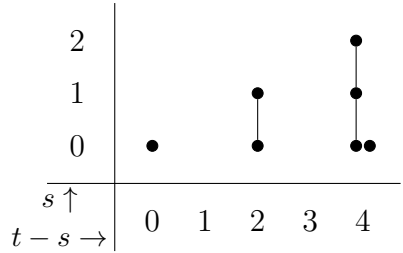

\centering
\begin{sseqdata}[name=pinc, classes=fill, xrange={0}{4}, yrange={0}{2}, scale=0.7, Adams grading, >=stealth,
x label = {$\displaystyle{s\uparrow \atop t-s\rightarrow}$},
x label style = {font = \small, xshift = -14.5ex, yshift=5.5ex}]
\class(0, 0)
\class(2, 0)
\class(2, 1)\structline
\class(4, 0)
\class(4, 1)\structline
\class(4, 2)\structline
\class(4, 0)
\begin{scope}[draw=none, fill=none]
	\class(4, 1)
	\class(4, 2)
\end{scope}
\end{sseqdata}
\printpage[name=pinc, page=2]
\caption{The $E_\infty$-page for the Adams spectral sequence computing $\Pin^c$ bordism. There are extension
questions in degrees $2$ and $4$; the action of $h_0$ in degree $2$ forces $\Omega_2^{\Pin^c}\cong\Z/4\Z$, rather
than $(\Z/2\Z)\oplus (\Z/2\Z)$, and the $h_0$-actions in degree $4$ mean both extensions are nonsplit, forcing
$\Omega_4^{\Pin^c}\cong (\Z/8\Z) \oplus (\Z/2\Z)$. Figure adapted from Beaudry-Campbell~\cite[Figure 42]{BC18}; see
there for more information.}
\label{h0_xtn_example}
\end{figure}

The other generators of $\Ext(\Z/2\Z)$ correspond to actions by other elements of $\ko_*$.
\begin{itemize}
	\item The analogue of \cref{h0_xtn_lemma} for $h_1$ is that if $\overline x\in E_\infty^{s,t}$, $\overline y\in
	E_\infty^{s+1, t+2}$, and $h_1x = y$, then there are classes $x\in \ko_{t-s}(X)$ and $y\in\ko_{t-s+1}(X)$
	corresponding to $\overline x$, resp.\ $\overline y$, such that $\eta\cdot x = y$, where $\eta$ is the nonzero
	element of $\ko_1$. When we pass to Spin bordism, $\eta$ is the class of $S_p^1$, so if $x$ is represented by a
	manifold $M$, $y = [S_p^1\times M]$. We use this fact quite a bit in our search for generators in
	Section \ref{ss:gl2_gens}.
	\item The action by $v\in\Ext_{\cA(1)}^{3,7}(\Z/2\Z)$ lifts in Spin bordism to taking the product with the K3
	surface.
	\item The action by $w\in\Ext_{\cA(1)}^{4,12}(\Z/2\Z)$ lifts in Spin bordism to taking the product with the
	Bott manifold.
\end{itemize}

\subsubsection{Hidden extensions and some tricks}
\label{sss:hidden_xtn}
Though the presence of an $h_0$-action on the $E_\infty$-page indicates multiplication by $2$ in $\ko$-homology,
the converse is not true: there are multiplications by $2$ in some $\ko$-homology groups that do not correspond to
$h_0$-actions on the $E_\infty$-page. The same thing can happen with other elements of $\ko_*$ and their images in
the Adams $E_\infty$-page, notably $h_1$ and $\eta$. Extensions that are not detected by the $\Ext(\Z/2\Z)$-action
are called \term{hidden extensions}.

Hidden extensions are relatively uncommon, especially in physically relevant dimensions: for example, there are
none present in the Adams spectral sequence computations done by Freed-Hopkins~\cite{FH16} and
Beaudry-Campbell~\cite{BC18}. However, hidden extensions are not so uncommon that we can ignore them, and in fact
we will discover non-trivial hidden extensions in Spin-$\Z/8\Z$ and Spin-$D_{16}$ bordism.

One nice technique to address extension questions is to map to or from some other group we already understand. From
the induced map on $E_\infty$-pages of Adams spectral sequences we can often deduce that the original map was
injective or surjective, which can suffice: for example, if we are trying to differentiate $(\Z/2\Z)\oplus
(\Z/2\Z)$ from $\Z/4\Z$, it suffices to produce an injective map from $\Z/4\Z$. We use this technique to resolve an
Adams hidden extension in \cref{Z4_5_extn} and some Atiyah-Hirzebruch extensions in \cref{the_thm}.

Another useful technique when applying the Adams spectral sequence to bordism questions is to compute subgroups of
bordism groups in different ways. For example, in Section \ref{ss:spin_Z8} we see that the Adams spectral sequence implies
$\Omega_5^{\Spin\text{-}\Z/8\Z} (\pt)$ is an Abelian group of order $64$, but does not fully determine which one we
obtain. In Appendix \ref{subsec:exteta}, we show that a combination of $\eta$-invariants is a bordism invariant
$\Omega_5^{\Spin\text{-}\Z/8\Z} (\pt) \to \R/\Z$ and on a lens space takes on the value $-5/32$, so
$\Omega_5^{\Spin\text{-}\Z/8\Z} (\pt)$ must be isomorphic to either $(\Z/32\Z)\oplus(\Z/2\Z)$ or $\Z/64\Z$; another
$\eta$-invariant computation shows we obtain the former.

In the Adams spectral sequence for Spin-$D_{16}$ bordism, we will make frequent use of the following result. Recall
that $\eta$ is the nonzero element of $\ko_1(\pt)$, and in twisted Spin bordism $\eta$ represents the
circle with non-bounding boundary conditions for fermions $S^1_p$.
\begin{lem}
\label{2eta_lemma}
Let $x\in\ko_n(X)$ be such that $\eta x\ne 0$. Then there is no $y\in \ko_n(X)$ such that $2y = x$.
\end{lem}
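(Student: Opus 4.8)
The plan is to reduce the statement to the single fact that $\eta$ is a $2$-torsion element of the coefficient ring of connective real $K$-theory. Recall from \cref{ku_remark} that the map $\eta\colon \ko_n(X)\to\ko_{n+1}(X)$ is multiplication by the nonzero class $\eta\in\ko_1(\pt)$, which makes sense because $\ko_*(X)$ is a module over the graded ring $\ko_*(\pt)$ (coming from the ring spectrum structure on $\ko$). Since $\ko_1(\pt)\cong\Z/2\Z$ with $\eta$ as its generator, we have $2\eta = 0$ in $\ko_1(\pt)$. Because the module action is bilinear --- in particular $\Z$-bilinear through $\ko_0(\pt)=\Z$ --- it follows that for \emph{any} $z\in\ko_*(X)$ the class $\eta z$ is annihilated by $2$, since $2(\eta z) = (2\eta)\cdot z = 0$.

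Granting this, the lemma becomes a one-line argument by contradiction. Suppose some $y\in\ko_n(X)$ satisfies $2y = x$. Then, using additivity of $\eta$-multiplication in the module variable,
\begin{equation}
	\eta x = \eta\,(2y) = 2\,(\eta y) = (2\eta)\cdot y = 0,
\end{equation}
which contradicts the hypothesis $\eta x\ne 0$. Hence no such $y$ exists.

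I do not expect a genuine obstacle here; the only point requiring care is the module-theoretic bookkeeping, i.e.\ confirming that the operation $\eta$ of \cref{ku_remark} is genuinely multiplication by the generator of $\ko_1(\pt)$ and that $2(\eta y)$ may be rewritten as $(2\eta)\cdot y$, both of which are standard consequences of $\ko$ being a ring spectrum and which I would simply cite. It is worth also recording the geometric translation in twisted Spin bordism, where $\eta$ is realized by $M\mapsto [S^1_p\times M]$: if $[M]=2[N]$ then $[S^1_p\times M]=2[S^1_p\times N]$, which vanishes because $2[S^1_p]=0$ in $\Omega_1^\Spin(\pt)\cong\Z/2\Z$ and the product is bilinear. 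This lemma will then be invoked repeatedly to exclude $h_0$-type extensions (multiplications by $2$) in the Adams spectral sequence computing $\Omega_*^{\Spin\text{-}D_{16}}(\pt)$.
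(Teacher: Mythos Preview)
Your proposal is correct and takes essentially the same approach as the paper: assume $x=2y$, then $\eta x = \eta(2y) = (2\eta)y = 0$ since $2\eta=0$ in $\ko_*(\pt)$, contradicting the hypothesis. The paper's proof is exactly this one-line contradiction; your additional remarks on the module-theoretic bookkeeping and the geometric translation via $S^1_p$ are accurate elaborations but not needed for the argument.
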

\begin{proof}
Suppose such a $y$ exists. Then $\eta(2y)\ne 0$, but this is also equal to $(2\eta)y$ and $2\eta = 0$ in
$\ko_*(\pt)$.
\end{proof}
Typically when we use this, such as in \cref{ko8_xtn,ko9_xtn}, we deduce $\eta x\ne 0$ from an $h_1$-action on the
$E_\infty$-page.
\begin{figure}[h!]
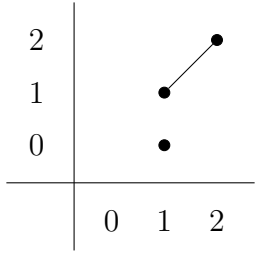

\centering
\begin{sseqdata}[name=twoeta, classes=fill, xrange={0}{2}, yrange={0}{2}, scale=0.7, Adams grading, >=stealth]
\class(1, 0)
\class(1, 1)
\class(2, 2)\structline
\end{sseqdata}
\printpage[name=twoeta, page=2]
\caption{A picture of the $E_\infty$-page in the scenario in \cref{2eta_lemma}. The $h_1$-action in this figure
implies that $\ko_1(X)\cong (\Z/2\Z)\oplus (\Z/2\Z)$, rather than $\Z/4\Z$.}
\end{figure}

The final tool we rely on heavily in this paper is Margolis' theorem (\cref{Margolis_kills_differentials}): we
previously used it to kill differentials on elements in the $E_r$-page coming from free $\cA(1)$-module summands in
cohomology, but this theorem also prevents those elements from participating in hidden extensions.
\begin{rem}
There are many other tricks people use to determine extensions in Adams spectral sequences. One that might be
useful in related questions is that the multiplication-by-$2$ map on $\ko$-theory factors as \begin{equation}
\ko\overset{c}{\longrightarrow} \ku \overset{b}{\longrightarrow} \Sigma^2 \ku \overset{R}{\longrightarrow} \ko,
\end{equation} where $c$ is the complexification map, $b$ is the Bott map,
and $R$ is a ``realification'' map obtained from forgetting the complex structure on a vector bundle (see Section
\ref{subsec:spliteven}). Occasionally
this is helpful for splitting extensions, e.g., if $b$ acts by $0$ on $\ku_n(X)$, one can conclude that
multiplication by $2$ is the zero map on $\ko_n(X)$. We use this in \cref{quater_hidden}.

Like with differentials, recently developed techniques involving synthetic spectra have been used on extension
questions in~\cite{IWX20, Bur21, Chu22, Mar22}.
\end{rem} 


\section{Computation of $\Omega_*^\Spin\big(B\SL(2, \Z)\big)$}
	\label{sl2_spin}
In this section we undertake the first of the three bordism computations in this paper: $\Omega_*^\Spin(B\SL(2,
\Z))$ in dimensions $11$ and below. This is the easiest of the three: we can assemble everything we need from
results already in the literature. The technique to do so is a generalization of a better-known technique to
calculate group cohomology: if a group $G$ factors as an amalgamated product $G\cong H_1 *_K H_2$, then there is a
Mayer-Vietoris sequence computing the cohomology of $G$ in terms of that of $H_1$, $H_2$, and $K$:
\begin{equation}
\label{coh_MV}
\begin{tikzcd}
\phantom{.} & {H^k(BK)} & {H^k(BH_1)\oplus H^k(BH_2)} & {H^k(BG)} & {H^{k+1}(BK)} & \phantom{.}
	\arrow[from=1-1, to=1-2]
	\arrow[from=1-2, to=1-3]
	\arrow[from=1-3, to=1-4]
	\arrow[from=1-4, to=1-5]
	\arrow[from=1-5, to=1-6]
\end{tikzcd}
\end{equation}
There is an isomorphism $\SL(2, \Z)\cong (\Z/4\Z) *_{\Z/2\Z} (\Z/6\Z)$, as we review in Appendix~\ref{app:groups},
and~\eqref{coh_MV} allows for a complete calculation of $H^*\big(B\SL(2, \Z)\big)$. Note also that the two factors in the amalgamated product correspond to the two finite subgroups, generated by $S$ and $U$, fixing the axio-dilaton $\tau$ to the special values indicated in Table \ref{tab:singfiber}.

An analogous Mayer-Vietoris sequence exists for any generalized homology or cohomology theory. This reduces the
computation of $\Omega_*^\Spin\big(B\SL(2, \Z)\big)$ to the computation of $\Omega_*^\Spin(B\Z/\ell \Z)$ for $\ell = 2, 4, 6$.
By localizing at $p = 2$ and $p = 3$, as described in Section \ref{ss:one_prime}, we will be able to simplify further, so
that we only need to know $\Omega_*^\Spin(B\Z/\ell \Z)$ for $\ell = 3$ and $\ell = 4$. We then compute these groups
using closely related computations that are already in the literature. In particular, we do not need any spectral
sequences in this section.

In Section \ref{ss:sl_2_at_2}, we compute the $2$-local spin bordism groups of $B\SL(2, \Z)$ in dimensions $11$ and
below; in Section \ref{sl2_odd_primes} we work at odd primes. Then, in Section \ref{sl2_gens}, we find a set of generators for
$\Omega_k^\Spin\big(B\SL(2, \Z)\big)$ for $k\le 11$.

\subsection{Working at $p = 2$}
\label{ss:sl_2_at_2}

First, we simplify $B\SL(2, \Z)$.
\begin{lem}
\label{sl2_at_2}
The inclusion $\Z/4\Z \inj \SL(2, \Z)$ induces a map
\begin{align}
\Omega_k^\Spin(B\Z/4\Z) \longrightarrow \Omega_k^\Spin \big(B\SL(2, \Z)\big) \,,
\end{align}
which is an
isomorphism after tensoring both sides with $\Z_{(2)}$.
\end{lem}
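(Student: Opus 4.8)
The plan is to use the amalgamated product decomposition $\SL(2,\Z)\cong(\Z/4\Z)*_{\Z/2\Z}(\Z/6\Z)$ (recalled in Appendix~\ref{app:groups}) together with the Mayer-Vietoris sequence in Spin bordism, and then localize at $p=2$. Concretely, the amalgam gives a homotopy pushout of classifying spaces
\begin{equation}
\begin{tikzcd}
B\Z/2\Z \& B\Z/4\Z \\
B\Z/6\Z \& B\SL(2,\Z)
\arrow[from=1-1, to=1-2]
\arrow[from=1-1, to=2-1]
\arrow[from=1-2, to=2-2]
\arrow[from=2-1, to=2-2]
\end{tikzcd}
\end{equation}
and hence, smashing with $\MTSpin$ and taking homotopy groups, a long exact Mayer-Vietoris sequence
\begin{equation}
\cdots \to \Omega_k^\Spin(B\Z/2\Z) \to \Omega_k^\Spin(B\Z/4\Z)\oplus\Omega_k^\Spin(B\Z/6\Z) \to \Omega_k^\Spin\big(B\SL(2,\Z)\big) \to \Omega_{k-1}^\Spin(B\Z/2\Z) \to \cdots
\end{equation}
(using that the pushout is along cofibrations, or equivalently working with reduced bordism of the pointed spaces and a suitable Mayer-Vietoris/Blakers-Massey argument).

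The key point is now a $2$-local simplification. After tensoring with $\Z_{(2)}$, the $3$-torsion is invisible: $\Z/6\Z\cong(\Z/2\Z)\times(\Z/3\Z)$, and since $H^*(B\Z/3\Z;\Z/2\Z)$ is trivial, the reduced Spin bordism $\widetilde\Omega_*^\Spin(B\Z/3\Z)$ vanishes $2$-locally, so the inclusion $\Z/2\Z\inj\Z/6\Z$ induces a $2$-local isomorphism $\Omega_*^\Spin(B\Z/2\Z)\to\Omega_*^\Spin(B\Z/6\Z)$. (Here I would invoke the reasoning already used in the excerpt: an odd prime sees none of the mod $2$ cohomology of $B\Z/2\Z$, hence the relevant map of Thom spectra is a $2$-local equivalence.) Feeding this into the Mayer-Vietoris sequence $2$-locally, the map $\Omega_k^\Spin(B\Z/2\Z)\otimes\Z_{(2)}\to\Omega_k^\Spin(B\Z/6\Z)\otimes\Z_{(2)}$ is an isomorphism, and a quick diagram chase (the boundary maps become zero, and the map from $\Omega_k^\Spin(B\Z/2\Z)$ into the direct sum is a split injection onto the $B\Z/6\Z$-summand up to isomorphism) collapses the sequence to show that
\begin{equation}
\Omega_k^\Spin(B\Z/4\Z)\otimes\Z_{(2)} \longrightarrow \Omega_k^\Spin\big(B\SL(2,\Z)\big)\otimes\Z_{(2)}
\end{equation}
is an isomorphism. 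Tracking through the identifications, this isomorphism is exactly the map induced by $\Z/4\Z\inj\SL(2,\Z)$, which is what we want.

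I would organize the write-up as: (i) recall the amalgam and the resulting homotopy pushout of classifying spaces; (ii) state the Mayer-Vietoris long exact sequence in Spin bordism; (iii) apply $\otimes\Z_{(2)}$ and use the vanishing of $\widetilde\Omega_*^\Spin(B\Z/3\Z)_{(2)}$ to show $\Z/2\Z\inj\Z/6\Z$ induces a $2$-local bordism isomorphism; (iv) conclude by exactness. The main obstacle is purely bookkeeping rather than conceptual: one has to be careful that the Mayer-Vietoris sequence is set up with the correct (reduced vs.\ unreduced) bordism groups and with basepoints handled consistently, so that the maps appearing are literally the ones induced by the subgroup inclusions, and one must check that the connecting homomorphisms vanish $2$-locally rather than merely that the outer terms are isomorphic. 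Since all of the ingredients ($p$-local equivalences, the amalgam structure, triviality of $H^*(B\Z/3\Z;\Z/2\Z)$) are available from the material already set up, no spectral sequence is needed here; the only genuine content is the Mayer-Vietoris diagram chase.
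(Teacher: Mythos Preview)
Your proposal is correct and follows essentially the same route as the paper: use the amalgam $\SL(2,\Z)\cong(\Z/4\Z)*_{\Z/2\Z}(\Z/6\Z)$, the associated Mayer-Vietoris sequence in Spin bordism, and the fact that $\Z/2\Z\inj\Z/6\Z$ is a $2$-local equivalence on bordism. The only organizational difference is that the paper writes $\Z/4\Z$ itself as the trivial amalgam $(\Z/4\Z)*_{\Z/2\Z}(\Z/2\Z)$, obtains a \emph{map} of Mayer-Vietoris sequences from this to the one for $\SL(2,\Z)$, and then invokes the five lemma; this packages your ``boundary maps vanish, identify the cokernel'' diagram chase into a single clean step and makes it automatic that the resulting isomorphism is literally the map induced by $\Z/4\Z\inj\SL(2,\Z)$.
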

The trick we use to prove this comes up a few more times later, so we extract it as a lemma. First, recall that for
an amalgamated product, $G = H_1 *_K H_2$, there is a Mayer-Vietoris sequence for any generalized homology theory
$E$:
\begin{equation}
\label{MV_bar}
\begin{tikzcd}[column sep=3ex]
	\dotsb & {E_n(BK)} & {E_n(BH_1)\oplus E_n(BH_2)} & {E_n(BG)} & {E_{n-1}(BK)} & \dotsb
	\arrow[from=1-1, to=1-2]
	\arrow[from=1-2, to=1-3]
	\arrow[from=1-3, to=1-4]
	\arrow[from=1-4, to=1-5]
	\arrow[from=1-5, to=1-6]
\end{tikzcd}
\end{equation}
\begin{lem}
\label{natural_MV}
The Mayer-Vietoris sequence~\eqref{MV_bar} is natural in amalgamated products of groups. That is, if $f\colon
H_1*_K H_2\to H_1' *_{K'} H_2'$ satisfies $f(H_i)\subset H_i'$ and $f(K)\subset K'$, then the maps it induces on
$E$-homology commute with~\eqref{MV_bar}.
\end{lem}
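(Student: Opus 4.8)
The plan is to derive this naturality from the homotopical description of the classifying space of an amalgamated product, which is also what underlies the Mayer--Vietoris sequence \eqref{MV_bar} in the first place. Recall that if $G = H_1 *_K H_2$ with the structure maps $K\hookrightarrow H_i$ injective --- which holds for every amalgamated product by the Bass--Serre normal form theorem --- then the canonical map from the double mapping cylinder
\[
M \coloneqq BH_1 \cup_{BK\times\{0\}} \bigl(BK\times[0,1]\bigr) \cup_{BK\times\{1\}} BH_2
\]
to $B(H_1*_K H_2)$ is a weak equivalence: $\pi_1(M)\cong H_1*_K H_2$ by van Kampen, and $M$ is aspherical since it is a homotopy pushout of aspherical spaces along $\pi_1$-injective maps. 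To keep this identification functorial, I would fix once and for all a strictly functorial model for $B(\bl)$ --- for instance the geometric realization of the nerve of the one-object groupoid with automorphism group $G$ --- so that every group homomorphism $\phi\colon G\to G'$ induces an honest map $B\phi\colon BG\to BG'$, compatibly with composition. With this model in hand, a morphism $f$ of amalgamated products satisfying $f(H_i)\subset H_i'$ and $f(K)\subset K'$ induces a commuting cube whose two opposite faces are the pushout diagrams $BH_1\leftarrow BK\to BH_2$ and $BH_1'\leftarrow BK'\to BH_2'$, hence a map of double mapping cylinders $M\to M'$ covering $Bf\colon B(H_1*_KH_2)\to B(H_1'*_{K'}H_2')$ under the equivalences above.

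The second ingredient is that the Mayer--Vietoris sequence \eqref{MV_bar} is a functorial invariant of a homotopy pushout square. Given such a square with pushout $M$, collapsing $BH_1$ in $M$ produces the mapping cone of $BK\to BH_2$, yielding a natural isomorphism $\widetilde E_*(M,BH_1)\cong\widetilde E_*(BH_2,BK)$; splicing this into the long exact sequences of the pairs $(M,BH_1)$ and $(BH_2,BK)$ assembles \eqref{MV_bar}. Each step --- the long exact sequence of a pair, the excision-type equivalence attached to a homotopy pushout, and the splicing --- is natural for maps of (homotopy) pushout squares. Applying this to the map of squares from the previous paragraph gives a commuting ladder between the Mayer--Vietoris sequence of $G=H_1*_KH_2$ and that of $G'=H_1'*_{K'}H_2'$, whose vertical arrows are precisely the maps on $E$-homology induced by $K\to K'$, by $H_i\to H_i'$, and by $f$ itself. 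That is the claim.

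The one genuinely delicate point is the naturality of the comparison $B(H_1*_KH_2)\simeq M$: if one works with classifying spaces only up to homotopy this is awkward to track. I would sidestep it by committing to the nerve/bar-construction model throughout, for which both $B(\bl)$ and the comparison map from the double mapping cylinder are strictly functorial, so no coherence problems arise. An alternative, avoiding point-set models, is to phrase everything in terms of homotopy colimits over the pushout diagram category $(\bullet\leftarrow\bullet\rightarrow\bullet)$: $B(H_1*_KH_2)$ is the homotopy colimit of the functor $(BH_1\leftarrow BK\to BH_2)$, a morphism of amalgamated products is a natural transformation of such functors, and both the homotopy colimit and the Mayer--Vietoris sequence attached to it are manifestly functorial in the diagram. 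Either way, once the homotopy-pushout description of $B(H_1*_KH_2)$ is granted, the proof is entirely formal.
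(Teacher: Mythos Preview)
Your proof is correct and follows essentially the same strategy as the paper's: fix the nerve model for $B(\bl)$ so that subgroup inclusions become subcomplex inclusions and group homomorphisms become honest maps of complexes, then appeal to naturality of Mayer--Vietoris for the resulting pushout square. The paper is terser --- it simply records that in this model $BH_i\subset BG$ with $BH_1\cap BH_2 = BK$ and that $Bf$ carries these subcomplexes into their primed counterparts --- whereas you spell out the homotopy-pushout description via the double mapping cylinder and the naturality of the associated Mayer--Vietoris sequence more carefully; your extra care in identifying $M\simeq B(H_1*_KH_2)$ is exactly the step the paper leaves implicit.
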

\begin{proof}
We use the following model for the classifying space $BG$ of a discrete group $G$: let $C_G$ denote the category
with a single object $\pt$ and with $\mathrm{End}_{C_G}(\pt) = G$. Then let $BG$ be the geometric realization of
the nerve of $C_G$. This is a simplicial complex whose $n$-simplices are indexed by $n$-tuples of elements of $G$.

This model has the advantage that if $G_0\subset G$, $BG_0\subset BG$ as the union of the simplices of $n$-tuples
of elements of $G_0$. In particular, for an amalgamated product $G\coloneqq H_1*_K H_2$, $BH_1\cap BH_2 = BK$ inside
$BG$, yielding~\eqref{MV_bar}. And with this model, the induced map $f\colon B(H_1 *_K H_2)\to B(H_1' *_{K'} H_2')$
satisfies $f(BH_i)\subset BH_i'$ and $f(BK)\subset BK'$, which is the condition needed to commute with a
Mayer-Vietoris sequence.
\end{proof}
\begin{proof}[Proof of \cref{sl2_at_2}]
We can write $\SL(2, \Z) = (\Z/4\Z)*_{\Z/2\Z}(\Z/6\Z)$ and $\Z/4\Z = (\Z/4\Z)*_{\Z/2\Z}(\Z/2\Z)$, and with these descriptions the
inclusion $f\colon \Z/4\Z \inj \SL(2, \Z)$ is compatible with these descriptions as amalgamated products: it is the
identity on the first factor, inclusion $\Z/2\Z \inj \Z/6\Z$ on the second factor, and the identity on the common $\Z/2\Z$
subgroup.

By \cref{natural_MV}, we obtain a commutative diagram of Mayer-Vietoris sequences for Spin bordism:
\begin{equation}
\label{spin_MV_commut}
\begin{gathered}
\begin{tikzcd}[column sep=3ex]
	\dotsb & {\Omega_n^\Spin(B\Z/2\Z)} & {\Omega_n^\Spin(B\Z/4\Z)\oplus \Omega_n^\Spin(B\Z/6\Z)} & {\Omega_n^\Spin(B\Z/4\Z)} & \dotsb \\
	\dotsb & {\Omega_n^\Spin(B\Z/2\Z)} & {\Omega_n^\Spin(B\Z/4\Z)\oplus\Omega_n^\Spin(B\Z/6\Z)} & {\Omega_n^\Spin\big(B\SL(2, \Z)\big)} & \dotsb
	\arrow["a", from=1-2, to=2-2]
	\arrow["{(b, c)}", from=1-3, to=2-3]
	\arrow["d", from=1-4, to=2-4]
	\arrow[from=1-1, to=1-2]
	\arrow[from=2-1, to=2-2]
	\arrow[from=1-4, to=1-5]
	\arrow[from=2-4, to=2-5]
	\arrow[from=2-2, to=2-3]
	\arrow[from=1-2, to=1-3]
	\arrow[from=1-3, to=1-4]
	\arrow[from=2-3, to=2-4]
\end{tikzcd}
\end{gathered}
\end{equation}
As $f$ is the identity on the amalgamating $\Z/2\Z$, the vertical arrow $a$ is the identity map; likewise for the
first factor $\Z/4\Z$ and the map $b$. For $c$, we use the fact that the map $H_*(B\Z/2\Z)\to H_*(B\Z/6\Z)$ is an
isomorphism at the prime $2$, which implies (e.g.\ by naturality of the Atiyah-Hirzebruch spectral sequence) that
$c\colon \Omega_n^\Spin(B\Z/2\Z)\to\Omega_n^\Spin(B\Z/6\Z)$ is also a $2$-local isomorphism. By the five lemma, the
remaining arrow, $d\colon \Omega_n^\Spin(B\Z/4\Z)\to\Omega_n^\Spin \big(B\SL(2, \Z)\big)$, is also a $2$-local
isomorphism.\footnote{In a little more detail, we want to prove that after tensoring everything
in~\eqref{spin_MV_commut} with $\Z_{(2)}$, the rightmost vertical arrow is an isomorphism. To use the five lemma,
the rows must be exact, which is true because $\Z_{(p)}$ is a flat $\Z$-module for any prime $p$.}
\end{proof}
So we need to determine $\Omega_k^\Spin(B\Z/4\Z)$ for $k\le 11$, and by \cref{ABPthm}, we need $\ko_k(B\Z/4\Z)$ for
$k\le 11$ and $\ko\ang 2_k(B\Z/4\Z)$ for $k\le 1$. The groups $\ko_*(B\Z/4\Z)$ are computed by
Bruner-Greenlees~\cite[Example 7.3.3]{bruner2010connective},\footnote{Beware that there is a typo in
Bruner-Greenlees' calculations in the table on p.\ 135 in the row $n = 10$: $\ko_{10}(B\Z/4\Z)\cong
(\Z/2\Z)^{\oplus 2}$, not $\Z/2\Z$.} and for $\ko\ang 2_*(B\Z/4\Z)$, use~\cref{ko2_is_homology}.

We summarize the results in Table \ref{tab:SLprime2data},
\begin{table}[h!]
\centering
\begin{tabular}{c c c}
\toprule
$k$ & $\ko_k (B\Z/4\Z)$ & $\ko\ang 2_k (B\Z/4\Z)$ \\
\midrule
$0$ & {\footnotesize $\Z$} & {\footnotesize $\Z/2\Z$} \\
$1$ & {\footnotesize $(\Z/2\Z) \oplus (\Z/4\Z)$} & {\footnotesize $\Z/2\Z$} \\
$2$ & {\footnotesize $(\Z/2\Z)^{\oplus 2}$} & \\
$3$ & {\footnotesize $(\Z/2\Z) \oplus(\Z/8\Z)$} & \\
$4$ & {\footnotesize $\Z$} & \\
$5$ & {\footnotesize $\Z/4\Z$} & \\
$6$ & {\footnotesize $0$} & \\
$7$  & {\footnotesize $(\Z/2\Z) \oplus (\Z/32\Z)$} & \\
$8$ & {\footnotesize $\Z$} & \\
$9$ & {\footnotesize $(\Z/2\Z)^{\oplus 2} \oplus (\Z/8\Z)$} & \\
$10$ & {\footnotesize $(\Z/2\Z)^{\oplus 2}$} & \\
$11$ & {\footnotesize $(\Z/8\Z) \oplus(\Z/128\Z)$} & \\
\bottomrule
\end{tabular}
\caption{The groups $\ko_k (B\Z/4\Z)$ for $k \leq 11$ and $\ko\ang 2_k (B\Z/4\Z)$ for $k\leq1$.}
\label{tab:SLprime2data}
\end{table}
which specifies $\ko_k \big( B \SL (2,\Z) \big)$ at the prime $2$, which can be assembled together with
$\ko_{k-8}(B\Z/4\Z)$ and $\ko\ang 2_{k-10}(B\Z/4\Z)$ into $\Omega_k\Spin(B\Z/4\Z)$ as in \cref{ABPthm}.

\subsection{Working at odd primes}
\label{sl2_odd_primes}

\begin{lem}
\label{sl2_at_3_lem}
The inclusion $\Z/3\Z \inj \SL(2, \Z)$ induces a map
\begin{align}
\Omega_*^\Spin(B\Z/3\Z) \longrightarrow \Omega_*^\Spin \big(B\SL(2, \Z)\big)\,,
\end{align}
which is an
isomorphism after tensoring with $\Z/p\Z$ for any odd prime $p$.
\end{lem}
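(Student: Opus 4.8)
The plan is to imitate the proof of \cref{sl2_at_2}, working at an odd prime $p$ instead of at $2$. First I would recall from Appendix~\ref{app:groups} the amalgamated-product decomposition $\SL(2, \Z) \cong (\Z/4\Z) *_{\Z/2\Z} (\Z/6\Z)$, where the $\Z/4\Z$ factor is generated by $S$ and the $\Z/6\Z$ factor by $U$. Since every order-$3$ element of $\SL(2, \Z)$ lies, up to conjugacy, inside this $\Z/6\Z = \langle U\rangle$, I may take the inclusion $\Z/3\Z \inj \SL(2, \Z)$ in the statement to be the one factoring through $\langle U^2\rangle \subset \langle U\rangle$. Writing $\Z/3\Z$ trivially as the amalgamated product $1 *_1 (\Z/3\Z)$, this inclusion becomes a map of amalgamated products compatible with the decompositions: the trivial map $1 \inj \Z/4\Z$ on the first factor, the inclusion $\Z/3\Z \inj \Z/6\Z$ on the second factor, and the trivial map $1 \inj \Z/2\Z$ on the common subgroup.

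Next I would invoke \cref{natural_MV} to produce a commutative ladder between the Mayer--Vietoris sequences in Spin bordism for these two amalgamated products, and tensor the whole ladder with $\Z_{(p)}$, which is flat over $\Z$, so the rows stay exact. On the left-hand factors, the basepoint of a classifying space splits off a copy of $\Omega_*^\Spin(\pt)$, and for $p$ odd the reduced Spin bordism $\widetilde\Omega_*^\Spin(B\Z/2^j\Z)$ vanishes after tensoring with $\Z_{(p)}$ (e.g.\ because $\widetilde H_*(B\Z/2^j\Z; \Z/p\Z) = 0$ and the Atiyah--Hirzebruch spectral sequence, or the $p$-local stable Whitehead theorem). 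Hence the comparison maps $\Omega_n^\Spin(B1)\otimes\Z_{(p)} \to \Omega_n^\Spin(B\Z/2\Z)\otimes\Z_{(p)}$ and $\Omega_n^\Spin(B1)\otimes\Z_{(p)} \to \Omega_n^\Spin(B\Z/4\Z)\otimes\Z_{(p)}$ are isomorphisms, and the comparison map $\Omega_n^\Spin(B\Z/3\Z)\otimes\Z_{(p)} \to \Omega_n^\Spin(B\Z/6\Z)\otimes\Z_{(p)}$ is an isomorphism by the same vanishing applied to $B\Z/6\Z \simeq B\Z/2\Z \times B\Z/3\Z$. The five lemma then forces the remaining vertical arrow $\Omega_n^\Spin(B\Z/3\Z)\otimes\Z_{(p)} \to \Omega_n^\Spin\big(B\SL(2,\Z)\big)\otimes\Z_{(p)}$ to be an isomorphism; since $\Z/p\Z$ is a module over $\Z_{(p)}$, the same holds after tensoring with $\Z/p\Z$, which is the claimed statement.

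There is essentially no serious obstacle here; the only points requiring a little care are (i) checking that the chosen $\Z/3\Z \subset \SL(2,\Z)$ really can be conjugated into the $\Z/6\Z$ amalgamand, so the map of amalgamated products is well defined (this is just the classification of finite-order elements of $\SL(2,\Z)$ reviewed in Appendix~\ref{app:groups}), and (ii) confirming that the $p$-local vanishing and five-lemma manipulations are compatible with the maps induced by inclusions of subgroups, which is exactly what \cref{natural_MV} guarantees. Alternatively, one can avoid the ladder entirely and argue directly: the inclusion factors as $\Z/3\Z \inj \Z/6\Z \inj \SL(2,\Z)$; the first map is a $p$-local equivalence on classifying spaces for $p$ odd, and the second induces a $\Z_{(p)}$-isomorphism on Spin bordism because in the Mayer--Vietoris sequence for $\SL(2,\Z)$ the contributions of $B\Z/2\Z$ and $B\Z/4\Z$ become $p$-locally those of a point and cancel, leaving $\Omega_*^\Spin(B\Z/6\Z)\otimes\Z_{(p)} \xrightarrow{\ \cong\ } \Omega_*^\Spin\big(B\SL(2,\Z)\big)\otimes\Z_{(p)}$.
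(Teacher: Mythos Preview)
Your proposal is correct and follows essentially the same approach as the paper: both use the amalgamated product description of $\SL(2,\Z)$, invoke \cref{natural_MV} to obtain a ladder of Mayer--Vietoris sequences, exploit the vanishing of $2$-primary classifying spaces at odd primes, and finish with the five lemma. The only cosmetic difference is that the paper presents the source as $\Z/6\Z = (\Z/2\Z)*_{\Z/2\Z}(\Z/6\Z)$ and then passes to $\Z/3\Z$ afterward, whereas you write $\Z/3\Z = 1 *_1 (\Z/3\Z)$ directly; your ``alternative'' paragraph is in fact exactly the paper's route.
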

\begin{proof}
The proof is essentially the same as for \cref{sl2_at_2}. The inclusion $\Z/6\Z \inj \SL(2, \Z)$ can be restated as
$(\Z/2\Z)*_{\Z/2\Z}(\Z/6\Z)\inj (\Z/4\Z) *_{\Z/2\Z}(\Z/6\Z)$, inducing maps commuting with Mayer-Vietoris
sequences. The map $H_*(B\Z/2\Z)\to H_*(B\Z/4\Z)$ is a $p$-local equivalence at any odd prime $p$, so the maps on
Spin bordism are also $p$-local equivalences, allowing us to conclude that
$\Omega_*^\Spin(B\Z/6\Z)\to\Omega_*^\Spin(B\SL(2, \Z))$ is a $p$-local equivalence. Finally, because $\Z/6\Z\cong
\Z/3\Z\times\Z/2\Z$, the map $B\Z/3\Z\to B\Z/6\Z$ induces a $p$-local isomorphism on homology, hence also on
spin bordism, so we can replace $B\Z/6\Z$ with $B\Z/3\Z$ to obtain the theorem statement.
\end{proof}
For $p > 3$, $H^\ell(B\Z/3\Z;\Z/p\Z)$ vanishes for $\ell > 0$~\cite[Corollary 10.2]{Bro82}, so
the map $B\Z/3\Z\to\pt$ induces an isomorphism on mod $p$ cohomology, hence by a version of Whitehead's
theorem~\cite[Chapitre III, Théorème 3]{Ser53}, is a stable $p$-local equivalence.
%
Thus it suffices to understand
the $3$-local story. Recall from~\eqref{BP_SO_decomp} the decomposition of $\Omega_*^\SSO\otimes\Z_{(3)}$ into
copies of $\BP$-homology: when $k\le 15$, this decomposition takes the form

\begin{equation}
	\Omega_k^\SSO(X)\otimes\Z_{(3)} \cong \BP_k(X) \oplus \BP_{k-8}(X) \oplus \BP_{k-12}(X) \,.
\end{equation}
Bahri-Bendersky-Davis-Gilkey~\cite[Theorem 1.2(a)]{BBDG89} compute $\BP_*(B\Z/3\Z)$ in terms of the $\ell$-homology
of $B\Z/3\Z$, where $\ell$ is a generalized homology theory called the $3$-local \term{Adams summand} of $\ku$ (so
that $\ell_*(X)$ refers to the $\ell$-homology groups of $X$).
\begin{thm}[{Bahri-Bendersky-Davis-Gilkey~\cite[Theorem 1.2(a)]{BBDG89}}]
\label{BBDG}
\begin{equation}
	\BP_*(B\Z/3\Z)\cong \ell_*(B\Z/3\Z)\otimes_{\Z_{(3)}} \Z_{(3)}[v_2, v_3, \dotsc] \,.
\end{equation}
\end{thm}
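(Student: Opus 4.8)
The plan is to deduce the theorem from three ingredients: the realization of the Adams summand $\ell$ as an iterated quotient of $\BP$, a flatness statement for $\BP_*(B\Z/3\Z)$ over the relevant subring of $\BP_*$, and the classical structure of the reduced Brown--Peterson homology $\widetilde{\BP}_*(B\Z/3\Z)$. Recall $\BP_*(\pt) = \Z_{(3)}[v_1, v_2, \dotsc]$ with $|v_i| = 2(3^i-1)$, while $\ell_*(\pt) = \Z_{(3)}[v_1]$, and the natural map $\BP\to\ell = \BP\langle 1\rangle$ kills $v_2, v_3, \dotsc$; thus on coefficients $\BP_*(\pt)\cong\ell_*(\pt)\otimes_{\Z_{(3)}}\Z_{(3)}[v_2, v_3, \dotsc]$, and the content of the theorem is that this persists after smashing with $(B\Z/3\Z)_+$. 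First I would use that $v_2, v_3, \dotsc$ form a regular sequence in $\BP_*$, so that $\ell$ can be built from $\BP$ as an iterated cofiber $\BP\to\BP/v_2\to(\BP/v_2)/v_3\to\cdots$, with the homotopy groups computed as expected at each stage.

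The second step is to check that smashing with $(B\Z/3\Z)_+$ introduces no Tor terms, i.e.\ that $v_2, v_3, \dotsc$ still act as a regular sequence on $\BP_*(B\Z/3\Z)$. Granting this, the iterated cofiber sequences give $\ell_*(B\Z/3\Z)\cong\BP_*(B\Z/3\Z)/(v_2, v_3, \dotsc)$. To invert this and obtain $\BP_*(B\Z/3\Z)\cong\ell_*(B\Z/3\Z)\otimes_{\Z_{(3)}}\Z_{(3)}[v_2, v_3, \dotsc]$, it then suffices to know that $\BP_*(B\Z/3\Z)$ is a free $\Z_{(3)}[v_2, v_3, \dotsc]$-module: a free module is determined by its quotient modulo the regular sequence, and the residual $\Z_{(3)}[v_1]$-module structure is carried along, giving exactly the claimed tensor decomposition. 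Both the regularity and the freeness are consequences of the structure of $\widetilde{\BP}_*(B\Z/3\Z)$ (Landweber, Johnson--Wilson): it is concentrated in odd degrees, every class is annihilated by a power of $3$, and multiplication by each $v_n$ with $n\geq 1$ is injective with cokernel of the expected rank, so $\widetilde{\BP}_*(B\Z/3\Z)$ is $v_n$-torsion-free for all $n$ and free over $\Z_{(3)}[v_2, v_3, \dotsc]$.

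The main obstacle is precisely this structural input; everything else is formal. The way I would establish it is through the Atiyah--Hirzebruch spectral sequence $E^2_{s,t} = H_s(B\Z/3\Z; \BP_t)\Rightarrow\BP_{s+t}(B\Z/3\Z)$. Since $H_*(B\Z/3\Z;\Z_{(3)})$ is $\Z_{(3)}$ in degree $0$ and $\Z/3\Z$ in each positive odd degree, the spectral sequence is entirely $3$-torsion off the bottom row, and its differentials are governed by the Milnor primitives $Q_n$ on $H^*(B\Z/3\Z;\Z/3\Z)$ --- equivalently by the coefficients of the $3$-series $[3](x)$ of the $p$-typical formal group law, whose coefficient of $x^{3^n}$ is $v_n$ modulo decomposables. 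Running this analysis shows that the first nontrivial differential records multiplication by $v_1$, the next family records $v_2$, and so on, which is what produces the $v_n$-torsion-freeness and the polynomial-module structure. Making the bookkeeping of these $Q_n$-differentials precise, in all degrees at once, is the genuinely hard part, and is the reason this is a cited theorem rather than a short lemma.

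Finally, I would note that only degrees $k\leq 15$ are needed for the application in Section~\ref{sl2_odd_primes}, and since $|v_2| = 2(3^2-1) = 16$, in that range $\BP_*$ and $\ell_*$ coincide; hence the Atiyah--Hirzebruch spectral sequences for $\BP_*(B\Z/3\Z)$ and $\ell_*(B\Z/3\Z)$ have identical $E^2$-pages and differentials through total degree $15$, so $\BP_k(B\Z/3\Z)\cong\ell_k(B\Z/3\Z)$ for $k\leq 15$. This weaker statement, which is all that enters the computation of $\Omega_*^\Spin\big(B\SL(2,\Z)\big)$, can thus be extracted directly by a degree count, without invoking the full strength of \cref{BBDG}.
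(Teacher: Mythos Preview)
The paper does not prove this theorem; it is cited from Bahri--Bendersky--Davis--Gilkey and used as a black box. Immediately after stating it, the paper extracts only the weak consequence you isolate in your last paragraph: since $|v_2|=16$, one has $\BP_k(B\Z/3\Z)\cong\ell_k(B\Z/3\Z)$ for $k\le 15$, and this degree count is all that feeds into the computation of $\Omega_*^\Spin\big(B\SL(2,\Z)\big)$. So for the purposes of the paper, your final paragraph alone would have sufficed.

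Your sketch of the full proof has a genuine gap. You assert that $\widetilde{\BP}_*(B\Z/3\Z)$ is free over $\Z_{(3)}[v_2,v_3,\dotsc]$, and you use this freeness to invert the quotient map and recover the tensor decomposition. But $\widetilde{\BP}_*(B\Z/3\Z)$ consists entirely of $3$-torsion, while any free module over a ring containing $\Z_{(3)}$ is $3$-torsion-free; so the freeness claim is simply false. The regularity of $v_2,v_3,\dotsc$ on $\widetilde{\BP}_*(B\Z/3\Z)$ is correct and does give $\ell_*(B\Z/3\Z)\cong\BP_*(B\Z/3\Z)/(v_2,v_3,\dotsc)$, but regularity by itself does not let you lift back to a tensor product. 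What you actually need from Johnson--Wilson is the explicit decomposition $\widetilde{\BP}_*(B\Z/3\Z)\cong\bigoplus_k \Sigma^{n_k}\BP_*/(3^{j_k})$ as $\BP_*$-modules: each summand is visibly $(\Z/3^{j_k}\Z)[v_1]\otimes_{\Z_{(3)}}\Z_{(3)}[v_2,v_3,\dotsc]$, and summing these gives the theorem directly, without any appeal to freeness.
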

So we must determine $\ell_*(B\Z/3\Z)$, which we do using $\ku$-homology. There is an isomorphism\footnote{The
$p$-local decomposition of $\ku$ into a sum of shifts of $\ell$ was a folk theorem. Adams~\cite[Lecture 4,
Corollary 8]{Ada69} proved the analogous decomposition for $\mathit{KU}$; for $\ku$ the earliest statement we could
find is~\cite[Proposition 2.7]{JW73}.}
\begin{equation}
\label{Adams_decomp}
	\ku_k(X)\otimes\Z_{(3)}\cong\ell_k(X)\oplus\ell_{k-2}(X) \,.
\end{equation}
Here $\ku$ is \term{connective complex $K$-theory} (see \cref{ku_remark}).
\begin{thm}[{Hashimoto~\cite[Theorem 3.1]{Has83}}]
There is an isomorphism
\begin{align}
\widetilde{\ku}_n(B\Z/3\Z) \cong (\Z/3^{j+1}\Z)^{\oplus s} \oplus (\Z/3^j\Z)^{\oplus 2-s} \,,
\end{align}
if $n = 4j +2s-1$ for $0 < s\le 2$. Otherwise, $\widetilde{\ku}_n(B\Z/3\Z) = 0$.
\end{thm}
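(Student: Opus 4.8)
The plan is to compute $\widetilde\ku_*(B\Z/3\Z)$ in two stages: first pin down the order of each group with the Atiyah--Hirzebruch spectral sequence, and then resolve the resulting (entirely hidden) extension problems using a Gysin cofiber sequence together with the multiplicative formal group law of $\ku$. An alternative, via Atiyah's completion theorem applied to $\KU^*(B\Z/3\Z)$ and then bootstrapped to the connective theory, would also work, but the Gysin approach is self-contained and gives the group structure directly.

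First I would run the Atiyah--Hirzebruch spectral sequence $E^2_{s,t} = \widetilde H_s(B\Z/3\Z; \ku_t(\pt)) \Rightarrow \widetilde\ku_{s+t}(B\Z/3\Z)$. Since $\ku_t(\pt) = \Z$ for $t$ even and $\ge 0$ and vanishes otherwise, while $\widetilde H_s(B\Z/3\Z;\Z) \cong \Z/3\Z$ for $s$ odd and $\ge 1$ and vanishes otherwise, the $E^2$-page has a $\Z/3\Z$ exactly at bidegrees $(s,t)$ with $s$ odd and $t$ even $\ge 0$. A parity count then kills every differential: $d^r\colon E^r_{s,t}\to E^r_{s-r,t+r-1}$ can only connect two nonzero entries if $s-r$ is odd (forcing $r$ even) and $t+r-1$ is even (forcing $r$ odd), which is impossible. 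Hence $E^2 = E^\infty$, so $\widetilde\ku_n(B\Z/3\Z) = 0$ for $n$ even, and for $n$ odd the group is an iterated extension of copies of $\Z/3\Z$, one for each $s\in\{1,3,\dots,n\}$; in particular $|\widetilde\ku_n(B\Z/3\Z)| = 3^{(n+1)/2}$, which matches the claimed orders for $n = 4j+2s-1$ with $s\in\{1,2\}$.

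To resolve the extensions I would use the identification of $B\Z/3\Z$ with the unit circle bundle $S(L)$ of the line bundle $L = \mathcal O(3)$ over $\CP^\infty = B\mathrm U(1)$. The cofiber sequence $\Sigma^\infty S(L)_+ \to \Sigma^\infty\CP^\infty_+ \to (\CP^\infty)^L$, after applying $\ku$-homology and the Thom isomorphism $\widetilde\ku_*((\CP^\infty)^L)\cong\ku_{*-2}(\CP^\infty)$ (here complex orientability of $\ku$ is essential), produces a long exact sequence in which the map $\ku_*(\CP^\infty)\to\ku_{*-2}(\CP^\infty)$ is cap product with the $\ku$-theoretic Euler class $e(L)$. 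Writing $\ku^*(\CP^\infty) = \ku^*[[x]]$ with $x$ the Euler class of the tautological bundle, the multiplicative formal group law gives $e(L) = [3]_F(x) = 3x + 3\beta x^2 + \beta^2 x^3$ with $\beta\in\ku_2(\pt)$ the Bott class (the sign conventions are immaterial for what follows). In the standard $\ku_*(\pt)$-basis $\{\beta_i\}$ of $\ku_*(\CP^\infty)$, normalized so that $x^j\cap\beta_i = \beta_{i-j}$, this map sends $\beta_i \mapsto 3\beta_{i-1} + 3\beta\beta_{i-2} + \beta^2\beta_{i-3}$. Its kernel is exactly the image of the unit map $\ku_*(\pt)\to\ku_*(\CP^\infty)$, so the long exact sequence collapses to $\widetilde\ku_{2k+1}(B\Z/3\Z)\cong\operatorname{coker}$ of this map in internal degree $2k$.

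Finally I would compute that cokernel. In the basis $\{\beta^{k-i}\beta_i\}_{i=0}^{k}$ of $\ku_{2k}(\CP^\infty)$ it is represented by a banded, triangular $(k+1)\times(k+1)$ integer matrix with $3$ on the main diagonal, $\pm 3$ on the next diagonal, and $1$ on the one after; its Smith normal form follows from computing the $3$-adic valuations of its minors. The invariant factors come out to be $3^j$ and $3^{j+1}$ in the proportions dictated by writing $n = 2k+1 = 4j+2s-1$, yielding $(\Z/3^{j+1}\Z)^{\oplus s}\oplus(\Z/3^j\Z)^{\oplus 2-s}$. I expect this last step --- controlling the Smith normal form, equivalently the $3$-adic behaviour of the $3$-series $[3]_F$ --- to be the main obstacle, precisely because the Atiyah--Hirzebruch spectral sequence only records the associated graded and all of the cyclic-group structure is hidden. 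A useful sanity check is to verify the cases $n = 1, 3, 5, 7$ by hand (giving $\Z/3\Z$, $(\Z/3\Z)^{\oplus 2}$, $\Z/3\Z\oplus\Z/9\Z$, $(\Z/9\Z)^{\oplus 2}$) and to confirm that the total order always agrees with the count $3^{(n+1)/2}$ from the spectral sequence.
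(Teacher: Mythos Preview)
The paper does not give its own proof of this statement: it is quoted as a black box from Hashimoto~\cite[Theorem 3.1]{Has83} and then fed into the decomposition~\eqref{Adams_decomp} and \cref{BBDG}. So there is nothing in the paper to compare your argument against.

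That said, your proposal is a standard and correct route to Hashimoto's result. The Atiyah--Hirzebruch step is fine and your parity argument for the collapse is exactly right; the identification of $B\Z/3\Z$ with the circle bundle $S(\mathcal O(3))$ over $\CP^\infty$ and the resulting Gysin sequence, with the connecting map given by cap product with the $\ku$-theoretic $3$-series $[3]_F(x)$, is also the right picture. The only step you flag as uncertain---the Smith normal form of the banded matrix representing cap with $[3]_F(x)$---is indeed the crux, but it is a finite linear-algebra computation once you localize at $3$: after inverting $\beta$ (i.e.\ passing to periodic $\KU$) the matrix becomes the $3$-adic valuation pattern you want, and connectivity of $\ku$ cuts this off degree by degree. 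Your sanity checks in low degrees ($\Z/3\Z$, $(\Z/3\Z)^{\oplus 2}$, $\Z/9\Z\oplus\Z/3\Z$, $(\Z/9\Z)^{\oplus 2}$) are the right anchor points. This is essentially the strategy Hashimoto and, in closely related settings, Bahri--Bendersky--Davis--Gilkey use.
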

Explicitly, in low degrees,
\begin{equation}
\begin{aligned}
	\widetilde{\ku}_1(B\Z/3\Z) &\cong \Z/3\Z \,,\\
	\widetilde{\ku}_3(B\Z/3\Z) &\cong (\Z/3\Z) \oplus (\Z/3\Z) \,,\\
	\widetilde{\ku}_5(B\Z/3\Z) &\cong (\Z/9\Z) \oplus(\Z/3\Z) \,, \\
	\widetilde{\ku}_7(B\Z/3\Z) &\cong (\Z/9\Z) \oplus (\Z/9\Z) \,,\\
	\widetilde{\ku}_9(B\Z/3\Z) &\cong (\Z/27\Z) \oplus (\Z/9\Z) \,, \\
	\widetilde{\ku}_{11}(B\Z/3\Z) &\cong (\Z/27\Z) \oplus (\Z/27\Z) \,,
\end{aligned}
\end{equation}
and the pattern continues. Feeding this to~\eqref{Adams_decomp},
\begin{equation}
\begin{aligned}
	\widetilde\ell_1(B\Z/3\Z) &\cong \Z/3\Z \,, \\
	\widetilde\ell_3(B\Z/3\Z) &\cong \Z/3\Z \,, \\
	\widetilde\ell_5(B\Z/3\Z) &\cong \Z/9\Z \,, \\
	\widetilde\ell_7(B\Z/3\Z) &\cong \Z/9\Z \,, \\
	\widetilde\ell_9(B\Z/3\Z) &\cong \Z/27\Z \,, \\
	\widetilde\ell_{11}(B\Z/3\Z) &\cong \Z/27\Z \,, \\
\end{aligned}
\end{equation}
and so on. Even-degree reduced $\ell$-homology of $B\Z/3\Z$ vanishes.

Now feed this to \cref{BBDG}. Because $\abs{v_2} = 2(3^2-1) = 16$, in degrees $15$ and below,
\cref{BBDG} tells us there is no difference between the $\BP$-homology and the $\ell$-homology of $B\Z/3\Z$.

Finally,~\eqref{BP_SO_decomp} gets us back to (reduced) oriented bordism.
\begin{equation}
\label{SO_BZ3}
\begin{aligned}
	\widetilde\Omega_1^\SSO(B\Z/3\Z)\otimes\Z_{(3)} &\cong \Z/3\Z \,, \\
	\widetilde\Omega_3^\SSO(B\Z/3\Z) \otimes\Z_{(3)}&\cong \Z/3\Z \,, \\
	\widetilde\Omega_5^\SSO(B\Z/3\Z) \otimes\Z_{(3)}&\cong \Z/9\Z \,, \\
	\widetilde\Omega_7^\SSO(B\Z/3\Z) \otimes\Z_{(3)}&\cong \Z/9\Z \,, \\
	\widetilde\Omega_9^\SSO(B\Z/3\Z) \otimes\Z_{(3)}&\cong (\Z/27\Z) \oplus (\Z/3\Z) \,, \\
	\widetilde\Omega_{11}^\SSO(B\Z/3\Z)\otimes\Z_{(3)} &\cong (\Z/27\Z) \oplus (\Z/3\Z) \,.
\end{aligned}
\end{equation}
With this we have all the ingredients we need to determine $\Omega^{\Spin}_{\ast} \big( B \SL(2,\Z) \big)$, which we summarize in Table \ref{tab:BSL_bordism}.

\renewcommand{\arraystretch}{1.5}
\begin{table}[h!]
\centering
\begin{tabular}{c c c}
\toprule
$k$ & $\Omega^{\Spin}_k \big(B\SL(2,\Z)\big)$ & Generators \\
\midrule
$0$ & {\footnotesize $\Z$} & {\footnotesize $\pt$} \\ 
$1$ & {\footnotesize $(\Z/2\Z) \oplus (\Z/4\Z) \oplus (\Z/3\Z)$} & {\footnotesize $(S^1_p \,, L^1_4 \,, L^1_3)$} \\ 
$2$ & {\footnotesize $(\Z/2\Z)^{\oplus 2}$} & {\footnotesize $(S^1_p \times S^1_p \,, S_p^1\times L_4^1)$} \\ 
$3$ & {\footnotesize $(\Z/2\Z) \oplus (\Z/8\Z) \oplus (\Z/3\Z)$} & {\footnotesize $(S_p^1\times S_p^1\times L_4^1 \,, L_4^3 \,, L^3_3)$} \\ 
$4$ & {\footnotesize $\Z$} & {\footnotesize K3} \\ 
$5$ & {\footnotesize $(\Z/4\Z) \oplus (\Z/9\Z)$} & {\footnotesize $(Q_4^5, L^5_3)$} \\ 
$6$ & {\footnotesize $0$} & \\ 
$7$ & {\footnotesize $(\Z/2\Z) \oplus (\Z/{32}\Z) \oplus (\Z/9\Z)$} & {\footnotesize $(\widetilde{L}^7_4 \,, L^7_4 \,, L^7_3)$} \\ 
$8$ & {\footnotesize $\Z \oplus \Z$} & {\footnotesize $(B \,, \HP^2)$} \\ 
$9$ & {\footnotesize $(\Z/2\Z)^{\oplus 3} \oplus (\Z/4\Z)$} &
{\footnotesize $(B\times L_4^1 \,, B\times S_p^1 \,, \HP^2\times S_p^1 \,, \HP^2\times L_4^1 \,,$} \\ 
 & {\footnotesize $\oplus (\Z/8\Z)  \oplus (\Z/3\Z) \oplus (\Z/27\Z)$} &
{\footnotesize $Q_4^9 \,, \HP^2 \times L^1_3 \,, L^9_3)$} \\
$10$ & {\footnotesize $(\Z/2\Z)^{\oplus 5}$} & {\footnotesize $(B\times S_p^1\times S_p^1 \,, \HP^2\times
S_p^1\times S_p^1 \,, B\times S^1_p\times L^1_4 \,, $} \\ 
 & & {\footnotesize $ \HP^2\times S^1_p\times L^1_4 \,, X_{10})$} \\ 
$11$ & {\footnotesize $(\Z/2\Z)^{\oplus 2}$} &
	{\footnotesize $(X_{10}\times L_4^1 \,,  \HP^2\times L_4^1\times S_p^1\times S_p^1
	 \,,
	 $} \\ 
& {\footnotesize $ \oplus (\Z/8\Z)^{\oplus 2} \oplus (\Z/128\Z)$} &
	{\footnotesize $
	 \HP^2\times L_4^3\,,
	 \widetilde{L}^{11}_4 \,,
	 L^{11}_4\,,
	$} \\ 
& {\footnotesize $ \oplus (\Z/3\Z) \oplus
(\Z/27\Z)$} &
	{\footnotesize $
	 \HP^2\times L_3^3\,,
	 L_3^{11})$} \\ 
\bottomrule
\end{tabular}
\caption{Bordism groups $\Omega^{\Spin}_k \big( B\SL(2,\Z)\big)$ and their generators (in the same order as the group summands or described by linear combinations of them) for $k \leq 11$.}
\label{tab:BSL_bordism}
\end{table}
%

\subsection{Finding generators}
\label{sl2_gens}

We focus on $\tOmega_*^\Spin \big(B\SL(2, \Z)\big)$, i.e., the subgroups of the bordism groups for which the
principal $\SL(2, \Z)$-bundle is non-trivial --- when the bundle is trivial, this is just Spin bordism, whose
low-degree generators are standard. We discuss these generators in Appendix~\ref{app:collection}.

Now the more interesting part. As a consequence of \cref{sl2_at_2,sl2_at_3_lem}, the inclusions $\Z/3\Z \to \SL(2,
\Z)$ and $\Z/4\Z \to \SL(2, \Z)$ induce an isomorphism
\begin{equation}
	\tOmega_k^\Spin (B\Z/4\Z)\oplus \tOmega_k^\Spin(B\Z/3\Z)\overset\cong\longrightarrow \tOmega_k^\Spin \big(B\SL(2, \Z) \big) \,.
\end{equation}
So we will describe the generators of $\tOmega_k^\Spin\big(B\SL(2, \Z)\big)$ in terms of Spin manifolds with
principal $\Z/3\Z$- and $\Z/4\Z$-bundles. We are not the first to study generators of $\tOmega_*^\Spin(B\Z/k\Z)$,
and previous work suggests that we
should try lens spaces and lens space bundles for generators, and use $\eta$-invariants to detect them.

We start with the $\Z/3\Z$ case: it is easier, because the lens spaces $L_3^{2k-1}$ all have unique Spin
structures. A theorem of Rosenberg~\cite[Proof of Theorem 2.12]{Ros86} shows that the group $\Omega_*^\Spin(B\Z/3\Z)$ is
generated by lens spaces and products of lens spaces with Spin manifolds. We use $\eta$-invariants to find specific
generators of specific groups, as formulas for $\eta$-invariants of twisted Dirac operators on lens spaces are
readily available. We go over these formulas in Appendix \ref{subapp:etalens}, and in \cref{tab:L3bordinv}, we compute
some $\eta$-invariants which are bordism invariants on $L_3^{2k-1}$ for $2\le k\le 6$. This gives us the following
generators; in all cases the duality bundle is induced from the principal $\Z/3\Z$-bundle $S^{2k-1}\to L_3^{2k-1}$
by the inclusion $\Z/3\Z\inj \SL(2, \Z)$.
\begin{itemize}
	\item $L_3^1\cong S^1$ generates the $\Z/3\Z$ summand in $\Omega_1^\Spin\big(B\SL(2, \Z)\big)$.
	\item $L_3^3$ generates the $\Z/3\Z$ summand in $\Omega_3^\Spin\big(B\SL(2, \Z)\big)$.
	\item $L_3^5$ generates the $\Z/9\Z$ summand in $\Omega_5^\Spin\big(B\SL(2, \Z)\big)$.
	\item $L_3^7$ generates the $\Z/9\Z$ summand in $\Omega_7^\Spin\big(B\SL(2, \Z)\big)$.
	\item $L_3^9$ generates the $\Z/27\Z$ summand in $\Omega_9^\Spin\big(B\SL(2, \Z)\big)$.
	\item $L_3^{11}$ generates the $\Z/27\Z$ summand in $\Omega_{11}^\Spin\big(B\SL(2, \Z)\big)$.
\end{itemize}
If $M$ is a Spin manifold of dimension $4m$, $\eta^D(M\times N) = \mathrm{Index}^D(M)\eta^D(N)$, where $D$ refers
to the Dirac operator; similar results hold for twisted Dirac operators. This gives us the last two $3$-torsion
generators we need.
\begin{itemize}
	\item $\HP^2\times L_3^1\cong \HP^2\times S^1$ generates the $\Z/3\Z$ summand in $\Omega_9^\Spin\big(B\SL(2 \Z)\big)$.
	\item $\HP^2\times L_3^3$ generates the $\Z/3\Z$ summand in $\Omega_{11}^\Spin\big(B\SL(2 \Z)\big)$.
\end{itemize}
In both cases, the duality bundle is trivial on $\HP^2$ and induced from $S^{2k-1}\to L_3^{2k-1}$, as above, on the
lens component of the product.

Now for the $2$-torsion. An analogue of Rosenberg's result above is still true, but now the generators are a little
more complicated: lens spaces and lens space bundles over $S^2$. Specifically, regard $S^2$ and $\CP^1$ and
consider a vector bundle which is a sum of complex line bundles $\mathcal L_1\oplus\dotsb\oplus \mathcal
L_{m-1}\to\CP^1$. Act by $\Z/2^\ell\Z$ as multiplication by the $2^\ell$th roots of unity on the unit sphere bundle of
this vector bundle; this is a free action and the quotient is an $L_{2^\ell}^{2m-3}$-bundle over $S^2$. We will
specifically let $Q_{2^\ell}^{2m-1}$ denote the case that $\mathcal L_1 = \mathcal O(2)$ and the remaining line
bundles are trivial.  See Appendix~\ref{ss:eta_Q} for more information.
\begin{thm}[{Botvinnik-Gilkey-Stolz~\cite[Section 5]{BGS97}}]
For values $\ell\ge 2$, the subgroup $\ko_{2m-1}(B\Z/2^\ell \Z)$ of $\Omega_{2m-1}^\Spin(B\Z/\ell\Z)$ is generated
by lens spaces and lens space bundles over $S^2$ of the form $Q_4^{2m-1}$ described above, and is detected by
$\eta$-invariants.
\end{thm}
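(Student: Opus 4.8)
Here is the skeleton of a proof. The plan is to combine three ingredients: a computation of the group $\widetilde{\ko}_{2m-1}(B\Z/2^\ell\Z)$, a supply of explicit manifolds representing classes in it, and a detection theorem for $\eta$-invariants, so that an order count forces the explicit classes to generate. First I would compute $\widetilde{\ko}_*(B\Z/2^\ell\Z)$ with the Adams spectral sequence over $\cA(1)$ of Section~\ref{the_Adams_SS}: the $\cA(1)$-module structure on $H^*(B\Z/2^\ell\Z;\Z/2\Z)$ is the one recorded in \cref{cyclic_steenrod}, and for $\ell\ge 2$ the resulting $E_2$-page is sparse enough that all differentials and extensions can be resolved (this is exactly how the tables in \cref{tab:SLprime2data}, due to Bruner--Greenlees, are obtained). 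The outcome is that $\widetilde{\ko}_{2m-1}(B\Z/2^\ell\Z)$ is a finite abelian $2$-group, concentrated in odd degrees; call its order $N_{m,\ell}$.

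For the detection step I would invoke the classical fact of Atiyah--Patodi--Singer (refined by Gilkey, and used by Botvinnik--Gilkey in this exact context): for a virtual representation $\beta$ of $\Z/2^\ell\Z$ of virtual dimension zero, the reduced $\eta$-invariant $\widetilde\eta_\beta$ of the twisted Dirac operator is a Spin-bordism invariant, and hence defines a homomorphism $\widetilde\eta_\beta\colon \tOmega^{\Spin}_{2m-1}(B\Z/2^\ell\Z)\to\R/\Z$. Assembling these over all $\beta$ gives a map into a finite product of copies of $\R/\Z$ which, restricted to the summand $\widetilde{\ko}_{2m-1}(B\Z/2^\ell\Z)$ of \cref{ABPthm}, is \emph{injective} — one proves this either via the APS index theorem together with the vanishing of the relevant index on a null-bordism carrying fibrewise positive scalar curvature, or by identifying the target with $\widetilde R(\Z/2^\ell\Z)$ modulo a representation-theoretic Euler class. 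Granting injectivity, it suffices to produce manifolds whose $\eta$-invariants span a subgroup of order $N_{m,\ell}$.

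The manifolds are the lens spaces $L^{2m-1}_{2^\ell}=S^{2m-1}/(\Z/2^\ell\Z)$ with their tautological principal bundle, and the lens space bundles $Q_4^{2m-1}$ over $\CP^1$ obtained by letting $\Z/4\Z$ act by roots of unity on the unit sphere bundle of $\mathcal O(2)\oplus\underline{\C}^{\oplus(m-2)}\to\CP^1$ (and the obvious variants built from $\mathcal O(2)$). One then checks two things: (i) these represent classes in the $\ko$-summand rather than the $\ko\langle 2\rangle$ or $H\Z/2\Z$ summands of \cref{ABPthm} — automatic in the degrees where those summands vanish, and in general handled by naturality under the collapse map onto the relevant wedge factor — and (ii) their $\eta$-invariants can be computed. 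For the lens spaces this is the explicit finite-sum formula recalled in Appendix~\ref{subapp:etalens}. For $Q_4^{2m-1}$ one computes the $\eta$-invariant via an equivariant index computation on the associated disc bundle over $\CP^1$: the $G$-index theorem expresses it as the fibre lens-space $\eta$-invariant corrected by a characteristic-number contribution of $\mathcal O(2)$, as set up in Appendix~\ref{ss:eta_Q}. Comparing the subgroup of $\prod_\beta\R/\Z$ so spanned with $N_{m,\ell}$ then finishes the proof: equality of orders together with injectivity of the $\eta$-invariant map forces the lens spaces and $Q_4^{2m-1}$-bundles to generate $\widetilde{\ko}_{2m-1}(B\Z/2^\ell\Z)$.

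The main obstacle is the final matching step — showing that the $Q_4^{2m-1}$-bundles genuinely supply the generators not already produced by lens spaces. Lens spaces alone span a proper subgroup in general, since $\widetilde{\ko}_{2m-1}(B\Z/2^\ell\Z)$ can have several cyclic summands (for instance $(\Z/2\Z)^{\oplus 2}\oplus(\Z/8\Z)$ in degree $9$ for $\ell=2$, as in \cref{tab:SLprime2data}), and pinning down the extra $\eta$-invariant values of the bundles, uniformly in $m$, requires careful bookkeeping of the $G$-signature / $G$-spin contributions of $\mathcal O(2)$ over $\CP^1$ and of the interaction between the $\eta$-invariant pairing and multiplication by $\eta\in\ko_1$. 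A secondary subtlety is the clean separation of the $\ko$-summand from the remaining Anderson--Brown--Peterson summands in high degrees, which is precisely why the statement is phrased for the subgroup $\ko_{2m-1}(B\Z/2^\ell\Z)$ rather than for all of $\Omega^{\Spin}_{2m-1}(B\Z/2^\ell\Z)$.
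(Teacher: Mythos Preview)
The paper does not prove this statement: it is quoted verbatim as a result of Botvinnik--Gilkey--Stolz~\cite[Section 5]{BGS97} and used as a black box to justify that lens spaces and the bundles $Q_4^{2m-1}$ generate the $\ko$-summand of $\Omega_*^\Spin(B\Z/4\Z)$. There is therefore no ``paper's own proof'' to compare against.

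Your sketch is a plausible reconstruction of the Botvinnik--Gilkey--Stolz argument, and the three ingredients you identify (Adams-over-$\cA(1)$ computation of the target group, injectivity of the $\eta$-invariant pairing on the $\ko$-summand, and explicit evaluation of $\eta$-invariants on lens spaces and lens space bundles) are indeed the ones used in that reference. You correctly flag the genuinely delicate point: verifying uniformly in $m$ that the $Q$-bundles fill out the cyclic summands not reached by lens spaces is the heart of the matter, and your proposal does not actually carry this out --- it only asserts that an order count would close the gap. That is an honest assessment of where the real work lies, but it means your write-up is a roadmap rather than a proof.
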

This will be good enough --- in \cref{tab:SLprime2data}, we saw that if $k$ is even and less than $11$,
$\widetilde{\ko}_k(B\Z/4\Z)$ vanishes except in degrees $2$ and $10$, where it is isomorphic to $\Z/2\Z$. These two
$\Z/2\Z$ summands are the products of lens spaces with Spin manifolds, specifically $L_1^4\times S_p^1$
and $L_1^4\times S_p^1\times\HP^2$. Thus we know how to find the entire $\ko_k$ summand of
$\tOmega_k^\Spin(B\Z/4\Z)$.\footnote{Alternatively, Botvinnik-Gilkey~\cite[Theorem 4.5]{BG97} tell us how to detect
all of $\ko_k(B\Z/2^\ell\Z)$.} The generators of the $\ko_{k-8}$ summands are the direct products of the generators
of the $\ko_k$ summands with $\HP^2$. And for $\ko\ang 2_{k-10}$, \cref{ko2_is_homology} (identifying $\ko\ang
2_k(X)\cong H_k(X;\Z/2\Z)$ for $k\le 1$ and $X$ connective) implies that we can detect generators using mod $2$
cohomology characteristic classes, giving us $X_{10}\times L_1^4$ generating $\ko\ang 2_1(B\Z/4\Z)\inj
\Omega_{11}^\Spin(B\Z/4\Z)$.

Once again we refer to Appendix~\ref{app:eta} for formulas and calculations of $\eta$-invariants of lens spaces and
lens space bundles. We learn from \cref{tab:SL2inv} the following generators; in all cases the duality bundles are
induced from $S^{2k-1}\to L_4^{2k-1}$ and the inclusion $\Z/4\Z\inj\SL(2, \Z)$ like in the $3$-torsion case. This
time there is the subtlety that $L_4^{2k-1}$ has two Spin structures for $k$ even and zero Spin structures for $k >
1$ and odd.
\begin{itemize}
	\item $L_4^1\cong S^1$ with either of its Spin structures generates the remaining $\Z/4\Z$ summand of
	$\Omega_1^\Spin\big(B\SL(2, \Z)\big)$.
	\item $L_4^3$ with either of its Spin structures generates the $\Z/8\Z$ summand of $\Omega_3^\Spin\big(B\SL(2,
	\Z)\big)$.
	\item The lens space bundle $Q_4^5$ with either of its Spin structures generates the $\Z/4\Z$ summand of
	$\Omega_5^\Spin \big(B\SL(2, \Z)\big)$.
	\item $L_4^7$ with its two Spin structures generates the remaining $(\Z/32\Z) \oplus (\Z/2\Z)$ summands of
	$\Omega_7^\Spin \big(B\SL(2, \Z)\big)$.
	\item The lens space bundle $Q_4^9$ with either of its Spin structures generates the $\Z/8\Z$ summand of
	$\Omega_9^\Spin\big(B\SL(2, \Z)\big)$. Similarly to the $3$-torsion case, $\HP^2\times L_4^1$ generates the $\Z/4\Z$
	summand in $\Omega_9^\Spin\big(B\SL(2, \Z)\big)$.
	\item $L_4^{11}$ with its two Spin structures generates the two summands $(\Z/128\Z) \oplus (\Z/8\Z)$ of
	$\Omega_{11}^\Spin(B\SL(2, \Z))$. Similarly to the $3$-torsion case, $\HP^2\times L_4^3$ generates the remaining $\Z/8\Z$ summand in $\Omega_{11}^\Spin\big(B\SL(2, \Z)\big)$.
\end{itemize}
%
%
%
%
%
%
%

\section{Computation of $\Omega_*^{\Spin\text{-}\Mp(2, \Z)} (\pt)$}
	\label{mp_spin}

The defining representation of $\SL(2, \Z)$ is not Spin; its Spin cover is called the \term{metaplectic group}
$\Mp(2, \Z)$. Taking the quotient by the central $\Z/2\Z$ subgroup of the Spin group defines a short exact sequence
\begin{equation}
\label{SLMp}
	\shortexact*{\Z/2\Z}{\Mp(2, \Z)}{\SL(2, \Z)} \,.
\end{equation}
The S-duality symmetry of type IIB string theory, a priori an $\SL(2, \Z)$ symmetry, mixes with fermion parity to
form the extension~\eqref{SLMp}, so to study duality defects in type IIB string theory, as we do in
Section \ref{sec:Mpdefects}, we should compute the bordism groups of manifolds with structure group
\begin{equation}
	\Spin\text{-}\Mp(2, \Z) \coloneqq \Spin\times_{\Z/2\Z}\Mp(2, \Z) \,.
\end{equation}
In this section, we compute Spin-$\Mp(2, \Z)$ bordism groups in dimensions $11$ and below. Our approach is similar
to how we determined $\Omega_*^\Spin \big(B\SL(2, \Z)\big)$ in Section \ref{sl2_spin}: we use the amalgamated product
decomposition~\eqref{mp_amalg}
\begin{equation}
	\Mp(2, \Z)\overset\cong\longrightarrow (\Z/12\Z) *_{(\Z/4\Z)} (\Z/8\Z)
\end{equation}
to express Spin-$\Mp(2, \Z)$ in simpler terms at $p = 2$ and at odd primes. We tackle odd primes in
Section \ref{ss:mp2_at_3}, where we find that for any odd prime $p$, Spin-$\Mp(2, \Z)$ bordism is $p$-locally isomorphic
to $\Omega_*^\Spin(B\Z/3\Z)$ (\cref{mp2_at_3}), so we can reuse our computations from Section \ref{sl2_odd_primes}. In
Section \ref{ss:spin_Z8}, we localize at $2$ and learn that Spin-$\Mp(2, \Z)$ bordism is $2$-locally isomorphic to
bordism for the group $\Spin\text{-}\Z/8\Z\coloneqq \Spin\times_{\Z/2\Z}\Z/8\Z$. For $k > 5$,
$\Omega_k^{\Spin\text{-}\Z/8\Z} (\pt)$ is not in the literature. We compute it up to dimension $11$ using the
Adams spectral sequence, following a strategy of Campbell~\cite[Section 7.9]{Cam17} and Davighi-Lohitsiri~\cite[Appendix
A.4]{DL20b}.

In Section \ref{mp2_gens}, we produce generators for the Spin-$\Mp(2, \Z)$ bordism groups we computed, following the same
strategy that we did in Section \ref{sl2_gens} for $\Omega_*^\Spin \big(B\SL(2, \Z)\big)$. Then in Section \ref{mult_Z4}, we discuss the
ring structure on $\Omega_*^{\Spin\text{-}\Z/8\Z} (\pt)$.

\subsection{Working at odd primes}
\label{ss:mp2_at_3}

\begin{lem}
\label{mp2_at_3}
Let $p$ be an odd prime.
The inclusion $\Z/12 \Z \inj \Mp(2, \Z)$ induces a $p$-local equivalence
\begin{align}
\Omega_k^\Spin(B\Z/3\Z)\longrightarrow \Omega_k^{\Spin\text{-}\Mp(2, \Z)} (\pt) \,.
\end{align}
\end{lem}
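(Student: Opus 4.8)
The plan is to mimic the proofs of \cref{sl2_at_2,sl2_at_3_lem}, replacing $\SL(2,\Z)$ by its metaplectic cover and the amalgamated product $(\Z/4\Z)*_{\Z/2\Z}(\Z/6\Z)$ by $\Mp(2,\Z)\cong(\Z/12\Z)*_{\Z/4\Z}(\Z/8\Z)$ from~\eqref{mp_amalg}. The central $\Z/2\Z\subset\Mp(2,\Z)$ lies inside the amalgamating $\Z/4\Z$, hence inside all three groups, so $\Spin\times_{\Z/2\Z}(-)$ carries~\eqref{mp_amalg} to a presentation of $\Spin\text{-}\Mp(2,\Z)$ as an amalgamated product of topological groups. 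Concretely I would phrase this via shearing: by \cref{shearing_PT,w2_twisted_cor}, applied to the central extension $\Z/2\Z\to\Mp(2,\Z)\to\SL(2,\Z)$ and the defining $\SO(2)$-representation $\rho$ of $\SL(2,\Z)$ (whose $w_2$ is the class of that extension, by definition of the metaplectic cover), one has $\Omega_*^{\Spin\text{-}\Mp(2,\Z)}(\pt)\cong\widetilde\Omega_*^\Spin\big((B\SL(2,\Z))^{\rho-2}\big)$; and the classifying-space model of \cref{natural_MV} exhibits $B\SL(2,\Z)$ as a homotopy pushout $B\Z/4\Z\cup_{B\Z/2\Z}B\Z/6\Z$ over which $\rho$ restricts compatibly (to the $\SO(2)$-reps of the finite subgroups, whose $w_2$'s are the classes of the extensions $\Z/2\Z\to\Z/8\Z\to\Z/4\Z$, $\Z/2\Z\to\Z/4\Z\to\Z/2\Z$, $\Z/2\Z\to\Z/12\Z\to\Z/6\Z$). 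Taking Thom spectra turns this into a pushout of spectra, hence a Mayer--Vietoris long exact sequence relating $\Omega_*^{\Spin\text{-}\Mp(2,\Z)}(\pt)$ to $\Omega_*^{\Spin\text{-}\Z/12\Z}(\pt)$, $\Omega_*^{\Spin\text{-}\Z/8\Z}(\pt)$ and $\Omega_*^{\Spin\text{-}\Z/4\Z}(\pt)$, and this sequence is natural in morphisms of amalgamated products exactly as in \cref{natural_MV}.

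Next I would set up the comparison. Write $\Z/12\Z\cong(\Z/12\Z)*_{\Z/4\Z}(\Z/4\Z)$ and $\Mp(2,\Z)\cong(\Z/12\Z)*_{\Z/4\Z}(\Z/8\Z)$, so that $\Z/12\Z\hookrightarrow\Mp(2,\Z)$ is the identity on the first and amalgamating factors and the inclusion $\Z/4\Z\hookrightarrow\Z/8\Z$, $1\mapsto 2$, on the remaining factor. Applying $\Spin\times_{\Z/2\Z}(-)$ and taking Mayer--Vietoris sequences produces a commuting ladder whose vertical maps are: the identity on $\Omega_*^{\Spin\text{-}\Z/12\Z}(\pt)$, the identity on $\Omega_*^{\Spin\text{-}\Z/4\Z}(\pt)$, and the map $\Omega_*^{\Spin\text{-}\Z/4\Z}(\pt)\to\Omega_*^{\Spin\text{-}\Z/8\Z}(\pt)$ induced by $\Z/4\Z\hookrightarrow\Z/8\Z$. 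The key claim is that this last map is a $p$-local equivalence for every odd prime $p$: from the extensions $1\to\Z/4\Z\to\Spin\text{-}\Z/4\Z\to\SO\to 1$ and $1\to\Z/8\Z\to\Spin\text{-}\Z/8\Z\to\SO\to 1$ one gets a map of fibrations over $B\SO$ with fibers $B\Z/4\Z$ and $B\Z/8\Z$; these fibers have trivial mod $p$ cohomology for odd $p$, so both total spaces, and the map between them, induce isomorphisms on mod $p$ cohomology, hence on the mod $p$ cohomology of the corresponding Madsen--Tillmann/Thom spectra, hence a $p$-local stable equivalence by the $p$-local Whitehead theorem~\cite{Ser53}. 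Tensoring the ladder with $\Z_{(p)}$ (which preserves exactness) and applying the five lemma then shows $\Omega_*^{\Spin\text{-}\Z/12\Z}(\pt)\otimes\Z_{(p)}\to\Omega_*^{\Spin\text{-}\Mp(2,\Z)}(\pt)\otimes\Z_{(p)}$ is an isomorphism.

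It remains to identify $\Omega_*^{\Spin\text{-}\Z/12\Z}(\pt)$ with $\Omega_*^\Spin(B\Z/3\Z)$ $p$-locally. Since $\Z/12\Z\cong(\Z/4\Z)\times(\Z/3\Z)$ with the central $\Z/2\Z$ inside the $\Z/4\Z$ factor, $\Spin\text{-}\Z/12\Z\cong(\Spin\text{-}\Z/4\Z)\times(\Z/3\Z)$, so $\Omega_*^{\Spin\text{-}\Z/12\Z}(\pt)\cong\Omega_*^{\Spin\text{-}\Z/4\Z}(B\Z/3\Z)$. For odd $p$, the fibration $B\Z/4\Z\to B(\Spin\text{-}\Z/4\Z)\to B\SO$ again shows $\mathit{MT}(\Spin\text{-}\Z/4\Z)$ is $p$-locally equivalent to $\MTSO$, which is in turn $p$-locally equivalent to $\MTSpin$ (Section \ref{subsec:splitodd}); hence $\Omega_*^{\Spin\text{-}\Z/4\Z}(B\Z/3\Z)\otimes\Z_{(p)}\cong\Omega_*^\Spin(B\Z/3\Z)\otimes\Z_{(p)}$, and composing with the isomorphism of the previous paragraph gives the claim. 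The resulting isomorphism is realized by the inclusion $\Z/3\Z\hookrightarrow\Z/12\Z\hookrightarrow\Mp(2,\Z)$, which is $\Z/12\Z\hookrightarrow\Mp(2,\Z)$ restricted appropriately; for $p>3$ one may also simply note that both sides then reduce to $\Omega_*^\Spin(\pt)\otimes\Z_{(p)}$, recovering the situation of Section \ref{sl2_odd_primes}.

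The main obstacle I expect is not any single computation but making the Mayer--Vietoris machinery precise for these twisted tangential structures: one must verify that pulling back the shearing bundle $\rho$ along the homotopy pushout $B\SL(2,\Z)=B\Z/4\Z\cup_{B\Z/2\Z}B\Z/6\Z$ (equivalently, applying $\Spin\times_{\Z/2\Z}(-)$ to~\eqref{mp_amalg}) really yields a homotopy pushout of the relevant Madsen--Tillmann spectra, and that the associated long exact sequence is natural in the morphism of amalgamated products used above --- the role that \cref{natural_MV} plays in the untwisted case. Once that infrastructure is in place, everything else is the same mod $p$ acyclicity bookkeeping for classifying spaces of cyclic $2$-groups (and of $B\Z/3\Z$ when $p>3$) already used in Section \ref{sl2_odd_primes}.
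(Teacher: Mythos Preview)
Your proposal is correct and follows essentially the same strategy as the paper: exploit the amalgamated product description~\eqref{mp_amalg}, compare $\Z/12\Z = (\Z/12\Z)*_{\Z/4\Z}(\Z/4\Z)$ with $\Mp(2,\Z) = (\Z/12\Z)*_{\Z/4\Z}(\Z/8\Z)$, use that $\Z/4\Z\hookrightarrow\Z/8\Z$ is a $p$-local equivalence on classifying spaces for odd $p$, and then untwist using $\Z/12\Z\cong\Z/4\Z\times\Z/3\Z$.

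The only real difference is packaging, and it is exactly the obstacle you flag. You set up a Mayer--Vietoris sequence for the twisted Spin bordism groups (equivalently, for the Thom spectra $(B\SL(2,\Z))^{\rho-2}$ over the homotopy pushout $B\Z/4\Z\cup_{B\Z/2\Z}B\Z/6\Z$) and then run the five lemma there. The paper sidesteps this entirely: it first argues, via the ordinary Mayer--Vietoris of \cref{natural_MV}, that $B\Z/12\Z\to B\Mp(2,\Z)$ is an isomorphism on $\Z_{(p)}$-homology; this immediately implies the same for $B(\Spin\times_{\Z/2\Z}\Z/12\Z)\to B(\Spin\times_{\Z/2\Z}\Mp(2,\Z))$ (e.g.\ via the fibration over $B\SO$ with fibers $B\Z/12\Z$ and $B\Mp(2,\Z)$), and then a single application of the Thom isomorphism plus the $p$-local stable Whitehead theorem gives the equivalence of Madsen--Tillmann spectra. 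No Mayer--Vietoris for Thom spectra is needed. The paper's untwisting step is also marginally slicker: it passes through $\Z/6\Z\hookrightarrow\Z/12\Z$ (again a $p$-local equivalence since $\Z/2\Z\hookrightarrow\Z/4\Z$ is) and uses $\Z/6\Z\cong\Z/2\Z\times\Z/3\Z$ to get $\Spin\times_{\Z/2\Z}\Z/6\Z\cong\Spin\times\Z/3\Z$ on the nose, rather than going through $\Spin\text{-}\Z/4\Z$ and then arguing $\mathit{MT}(\Spin\text{-}\Z/4\Z)\simeq_{(p)}\MTSpin$. Your route is fine; the paper's just avoids building the infrastructure you identified as the main cost.
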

\begin{proof}
Just as in the proofs of \cref{sl2_at_2,sl2_at_3_lem}, the inclusion $\Z/12\Z\hookrightarrow \Mp(2, \Z)$ can be written as
\begin{equation}
	\Z/12\Z = (\Z/4\Z) *_{\Z/4\Z}(\Z/12\Z) \longrightarrow (\Z/8\Z) *_{\Z/4\Z} (\Z/12\Z) = \Mp(2, \Z) \,,
\end{equation}
so the induced map on $\Z_{(p)}$-homology is an isomorphism. Therefore the map
\begin{equation}
	B(\Spin\times_{\Z/2\Z}\Z/12\Z)\longrightarrow B\big(\Spin\times_{\Z/2\Z}\Mp(2, \Z)\big)
\end{equation}
also induces an isomorphism on $\Z_{(p)}$-homology; this map also intertwines the maps down to $B\mathrm O$ given by
forgetting $\Z/12\Z$ or $\Mp(2, \Z)$, so it induces a map of Thom spectra
\begin{equation}
\label{first_MT_coh}
	\mathit{MT}(\Spin\times_{\Z/2\Z}\Z/12\Z)\longrightarrow \mathit{MT}\big(\Spin\times_{\Z/2\Z}\Mp(2, \Z)\big) \,,
\end{equation}
which by the Thom isomorphism is also an isomorphism on $\Z_{(p)}$-homology, hence by the stable Whitehead
theorem~\cite[Chapitre III, Théorème 3]{Ser53} is also an isomorphism on $p$-local homotopy groups.

Next, the map $i\colon \Z/6\Z\hookrightarrow\Z/12\Z$ is an isomorphism on $p$-local homology, because under the
isomorphisms $\Z/6\Z\cong\Z/2\Z\times\Z/3\Z$ and $\Z/12\Z\cong\Z/4\Z\times\Z/3\Z$, $i$ can be identified with the
inclusion $\Z/2\Z\hookrightarrow \Z/4\Z$ and the identity on $\Z/3\Z$. Therefore the induced map of Thom spectra
\begin{equation}
	\mathit{MT}(\Spin\times_{\Z/2\Z}\Z/6\Z)\longrightarrow \mathit{MT}\big(\Spin\times_{\Z/2\Z}\Z/12\Z\big) \,,
\end{equation}
induces an isomorphism on $\Z_{(p)}$-homology, hence also on $p$-local homotopy groups, just
like~\eqref{first_MT_coh}.

Lastly, because $\Z/6\Z\cong\Z/2\Z\times\Z/3\Z$, $\Spin\times_{\Z/2\Z}\Z/6\Z\cong\Spin\times\Z/3\Z \,$.
\end{proof}
%
So at odd primes we just need $\Omega_k^\Spin(B\Z/3\Z)$, which we determined in Section \ref{sl2_odd_primes}.
The generators are the same as the $3$-torsion generators we found in Section \ref{sl2_gens}.

\subsection{Working at at $p = 2$}
\label{ss:spin_Z8}

In this section, we run the Adams spectral sequence for the $2$-primary part of $\Omega_*^{\Spin\text{-}\Mp(2,
\Z)} (\pt)$. There are several hidden extensions we have to resolve, and we resolve them by finding generators for the
bordism groups. We find several non-split hidden extensions.

We let Spin-$\Z/8\Z$ denote the symmetry type $\Spin \times_{\Z/2\Z}\Z/8\Z$ with the map to $\mathrm O$ projection
onto the first factor. Spin-$\Z/8\Z$ bordism has been studied in~\cite{Cam17, Hsi18, Garcia-Etxebarria:2018ajm,
DL20b, Hsieh:2020jpj, Deb21}, but only in dimensions $5$ and below. Our computations in dimensions 6-11 are new.
\begin{lem}
\label{mp2_at_2}
The inclusion $\Z/8\Z \inj \Mp(2, \Z)$ induces a $2$-local equivalence
\begin{align}
\Omega_*^{\Spin\text{-}\Z/8\Z} (\pt) \longrightarrow \Omega_*^{\Spin\text{-}\Mp(2, \Z)} (\pt) \,.
\end{align}
\end{lem}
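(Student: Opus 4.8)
The plan is to mimic the proofs of \cref{sl2_at_2,sl2_at_3_lem,mp2_at_3}, replacing the prime $3$ by the prime $2$ and the relevant amalgamated-product decomposition. Recall from Appendix~\ref{app:groups} that $\Mp(2, \Z)\cong (\Z/12\Z) *_{\Z/4\Z} (\Z/8\Z)$. First I would rewrite the inclusion $\Z/8\Z\hookrightarrow\Mp(2, \Z)$ as a map of amalgamated products
\begin{equation}
	\Z/8\Z = (\Z/4\Z) *_{\Z/4\Z} (\Z/8\Z)\longrightarrow (\Z/12\Z) *_{\Z/4\Z} (\Z/8\Z) = \Mp(2, \Z),
\end{equation}
which is the identity on the $\Z/8\Z$ factor, the inclusion $\Z/4\Z\hookrightarrow\Z/12\Z$ on the other factor, and the identity on the amalgamating $\Z/4\Z$. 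This is exactly the shape needed for \cref{natural_MV}. Taking the quotient by the central $\Z/2\Z$ of $\Spin$ is functorial, so we also get a compatible map $B(\Spin\times_{\Z/2\Z}\Z/8\Z)\to B(\Spin\times_{\Z/2\Z}\Mp(2, \Z))$ over $B\mathrm O$, and hence a map of Madsen–Tillman spectra $\MTSpin\text{-}\Z/8\Z\to\MTSpin\text{-}\Mp(2, \Z)$.

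The key input is that the map $\Z/4\Z\hookrightarrow\Z/12\Z$ is a $2$-local equivalence on homology: under $\Z/12\Z\cong\Z/4\Z\times\Z/3\Z$ it is the identity on $\Z/4\Z$ and the inclusion $0\hookrightarrow\Z/3\Z$ on the second factor, and $H_*(B\Z/3\Z;\Z_{(2)})$ is trivial in positive degrees. Feeding this into \cref{natural_MV} as in the proof of \cref{sl2_at_2} — with the five lemma applied after tensoring the two Mayer–Vietoris sequences with $\Z_{(2)}$, using that $\Z_{(2)}$ is flat over $\Z$ so exactness is preserved — shows that the induced map on $\Z_{(2)}$-localized Spin bordism is an isomorphism after passing through the quotient-by-$\Z/2\Z$ construction. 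Equivalently, one can argue spectrum-level as in \cref{mp2_at_3}: the map of Thom spectra is a $\Z_{(2)}$-homology isomorphism by the Thom isomorphism, hence a $2$-local equivalence on homotopy groups by the $p$-local stable Whitehead theorem~\cite[Chapitre III, Théorème 3]{Ser53}. Either route gives $\Omega_*^{\Spin\text{-}\Z/8\Z}(\pt)\otimes\Z_{(2)}\cong\Omega_*^{\Spin\text{-}\Mp(2, \Z)}(\pt)\otimes\Z_{(2)}$, which is the statement of the lemma.

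The one point requiring a little care — and the main obstacle, such as it is — is that, unlike in \cref{mp2_at_3}, there is no further simplification of $\Spin\times_{\Z/2\Z}\Z/8\Z$ into a product $\Spin\times(\text{something})$, since the central extension is non-split; so the conclusion genuinely is about twisted Spin bordism rather than ordinary Spin bordism of a classifying space. This is fine: the argument above only uses the Mayer–Vietoris sequence (which holds for the generalized homology theory $\Omega_*^{\Spin}$ applied to the relevant classifying spaces or, equivalently, for the homotopy of the Thom spectra) together with the $\Z/2\Z$-quotient being functorial and compatible with the maps to $B\mathrm O$. One should also note that the amalgamating subgroup on both sides is the \emph{same} $\Z/4\Z$ and the map is the identity on it, so the vertical map on the $E_*(BK)$-terms in the commuting Mayer–Vietoris diagram is the identity, exactly as the map on the $\Z/8\Z$-factor terms is; only the $\Z/12\Z$ versus $\Z/4\Z$ factor contributes a nontrivial five-lemma input, and there the map is a $2$-local equivalence as explained. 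With these observations the proof is essentially identical in structure to that of \cref{sl2_at_2}, and I would present it in a couple of sentences citing \cref{natural_MV} and the proof of \cref{mp2_at_3}.
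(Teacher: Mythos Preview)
Your proposal is correct and follows essentially the same approach as the paper: the paper's proof is a one-sentence remark that ``the proof follows the same line of reasoning as \cref{mp2_at_3}, except that we cannot untwist like in the last line of that proof,'' and you have written out precisely that argument, including the explicit observation about the non-split extension preventing the untwisting step.
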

The proof follows the same line of reasoning as \cref{mp2_at_3}, except that we cannot untwist like in the last
line of that proof.

Next, we shear Spin-$\Z/8\Z$ bordism.
\begin{lem}
\label{Z4_coh}
$H^*(B\Z/4\Z;\Z/2\Z)\cong\Z/2\Z[x, y]/(x^2)$, where $\abs x = 1$ and $\abs y = 2$. The class $y$ can be
characterized as follows:
\begin{enumerate}
	\item Let $\rho\colon \Z/4\Z\to\SSO(2)$ denote the rotation representation and $(E\Z/4\Z)_\rho\to B\Z/4\Z$
	denote the associated rank-$2$ vector bundle; then $w_2 \big( (E\Z/4\Z)_\rho \big) = y$.
	\item The cohomology class of the central extension
	\begin{equation}
		\label{Z8_cext}
		\shortexact{\Z/2\Z}{\Z/8\Z}{\Z/4\Z}{}
	\end{equation}
	is equal to $y$.
\end{enumerate}
\end{lem}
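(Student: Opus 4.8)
The ring structure $H^*(B\Z/4\Z;\Z/2\Z)\cong\Z/2\Z[x,y]/(x^2)$ with $\abs x=1$ and $\abs y=2$ is classical, and I would simply cite it (e.g.\ \cite[Proposition 4.5.1]{CTVZ03}), as was already done in \cref{cyclic_steenrod}. The key point for the rest of the argument is that this presentation forces $H^2(B\Z/4\Z;\Z/2\Z)$ to be one-dimensional over $\Z/2\Z$, with $y$ its unique nonzero element. Consequently, both of the two characterizations of $y$ amount to exactly the same thing: checking that a specific degree-$2$ mod $2$ cohomology class does not vanish.

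For item (1), the plan is to recognize the rotation representation $\rho\colon\Z/4\Z\to\SSO(2)$ as the realification of the faithful complex one-dimensional representation $\rho_\C\colon\Z/4\Z\to\UU(1)$ sending a chosen generator to $i$. Then $(E\Z/4\Z)_\rho$ is the underlying real bundle of the complex line bundle $L\coloneqq(E\Z/4\Z)_{\rho_\C}$, and for any complex line bundle one has $w_1(L_\R)=0$ and $w_2(L_\R)=c_1(L)\bmod 2$ (the mod $2$ Euler class). Since $\rho_\C$ is faithful, $c_1(L)$ is a generator of $H^2(B\Z/4\Z;\Z)\cong\Z/4\Z$, and reduction mod $2$ sends generators of $\Z/4\Z$ to the nonzero element of $\Z/2\Z$; hence $w_2\big((E\Z/4\Z)_\rho\big)$ is nonzero, i.e.\ equal to $y$.

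For item (2), I would use that central extensions of $\Z/4\Z$ by $\Z/2\Z$ — all of which have trivial action since $\mathrm{Aut}(\Z/2\Z)$ is trivial — are classified by $H^2(B\Z/4\Z;\Z/2\Z)\cong\Z/2\Z$, with the classifying class vanishing precisely when the extension splits. It then remains to observe that \eqref{Z8_cext}, with $\Z/8\Z$ in the middle and order-$2$ subgroup $\langle 4\rangle\subset\Z/8\Z$, does not split: a section $\Z/4\Z\to\Z/8\Z$ would have to send a generator to some $a\in\{1,5\}\subset\Z/8\Z$ with $4a=0$, which is impossible. So the class of \eqref{Z8_cext} is the nonzero element $y$. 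I do not anticipate a genuine obstacle anywhere in this; the only points needing a little care are the nonvanishing checks — faithfulness of $\rho_\C$ in (1) and non-splitness of the extension in (2) — together with recalling the identity $w_2(L_\R)=c_1(L)\bmod 2$. As a bonus, items (1) and (2) together supply exactly the input needed to apply \cref{w2_twisted_cor} to Spin-$\Z/8\Z$ structures in the next step.
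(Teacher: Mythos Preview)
Your proposal is correct and follows essentially the same approach as the paper: cite the ring structure, observe that $H^2(B\Z/4\Z;\Z/2\Z)\cong\Z/2\Z$ so it suffices to check nonvanishing, and then verify that $w_2\big((E\Z/4\Z)_\rho\big)\ne 0$ and that the extension~\eqref{Z8_cext} is nonsplit. The only minor difference is that for item (1) the paper argues $w_2\ne 0$ by noting that $\rho$ does not lift to $\Spin(2)$, whereas you go through the complex structure and $c_1\bmod 2$; both are standard and equivalent.
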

\begin{proof}
The cohomology ring is standard; see, e.g.\ \cite[Proposition 4.5.1]{CTVZ03}. Since the group
$H^2(B\Z/4\Z;\Z/2\Z)\cong\Z/2\Z$, there is only one nonzero element, so for the rest of the lemma it suffices to
show that $w_2 \big( (E\Z/4\Z)_\rho \big)$ and~\eqref{Z8_cext} are non-trivial. For the former, this follows because $\rho$
does not lift to $\Spin (2)$, and for the latter, this follows because~\eqref{Z8_cext} is not split.
\end{proof}
Thus \cref{w2_twisted_cor} tells us that Spin-$\Z/8\Z$ structures are naturally equivalent to $(B\Z/4\Z,
(E\Z/4\Z)_\rho)$-twisted Spin structures. For ease of reading, we will use $\rho$ to denote both the
representation and the associated vector bundle that we have been calling $(E\Z/4\Z)_\rho$; the specific meaning
will be clear from context. Thus Spin-$\Z/8\Z$ structures are naturally equivalent to $(B\Z/4\Z, \rho)$-twisted
Spin structures, and \cref{twisted_classifier} implies
\begin{equation}
	\Omega_*^{\Spin\text{-}\Z/8\Z} (\pt) \cong \Omega_*^\Spin \big((B\Z/4\Z)^{\rho-2} \big) \,.
\end{equation}
%
By~\eqref{ABPdecomp}, we need to compute $\ko_* \big((B\Z/4\Z)^{\rho-2} \big)$ in degrees $11$ and below as well as $\ko\ang
2_* \big((B\Z/4\Z)^{\rho-2}\big)$ in degrees $1$ and below. \Cref{ko2_is_homology} and the mod $2$ Thom isomorphism take
care of the latter: we learn $\ko\ang 2_k((B\Z/4\Z)^{\rho-2})$ is $\Z/2\Z$ for $k = 0$ and $k = 1$.
\begin{thm}
\label{ko_spin_z8}
The first several $\ko$-homology groups of $(B\Z/4\Z)^{\rho-2}$ are:
%

\begin{equation}
\begin{alignedat}{2}
	\ko_0 \big((B\Z/4\Z)^{\rho-2}\big) &\cong \Z\qquad\qquad\qquad\quad \quad & \ko_6 \big((B\Z/4\Z)^{\rho-2}\big) &\cong 0\\
	\ko_1\big((B\Z/4\Z)^{\rho-2}\big) &\cong \Z/8\Z & \ko_7 \big((B\Z/4\Z)^{\rho-2}\big) &\cong \Z/4\Z\\
	\ko_2\big((B\Z/4\Z)^{\rho-2}\big) &\cong 0  & \ko_8\big((B\Z/4\Z)^{\rho-2}\big) &\cong \Z\\
	\ko_3\big((B\Z/4\Z)^{\rho-2}\big) &\cong \Z/2\Z & \ko_9\big((B\Z/4\Z)^{\rho-2}\big) &\cong (\Z/128\Z) \oplus (\Z/4\Z) \\
	\ko_4\big((B\Z/4\Z)^{\rho-2}\big) &\cong \Z  & \ko_{10}\big((B\Z/4\Z)^{\rho-2}\big) &\cong 0\\
	\ko_5\big((B\Z/4\Z)^{\rho-2}\big) &\cong (\Z/32\Z) \oplus (\Z/2\Z) & \ko_{11}\big((B\Z/4\Z)^{\rho-2}\big) &\cong \Z/8\Z \,.
\end{alignedat}
\end{equation}
\end{thm}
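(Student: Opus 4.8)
The plan is to compute $\ko_*\big((B\Z/4\Z)^{\rho-2}\big)$ in degrees $\le 11$ using the Adams spectral sequence $E_2^{s,t} = \Ext_{\cA(1)}^{s,t}\big(H^*((B\Z/4\Z)^{\rho-2};\Z/2\Z),\Z/2\Z\big)\Rightarrow \ko_*((B\Z/4\Z)^{\rho-2})_2^\wedge$, following the strategy of Campbell and Davighi--Lohitsiri referenced above. First I would compute the $\cA(1)$-module structure on $H^*\big((B\Z/4\Z)^{\rho-2};\Z/2\Z\big)$. By the Thom isomorphism this is a free rank-one module over $H^*(B\Z/4\Z;\Z/2\Z)\cong\Z/2\Z[x,y]/(x^2)$ (with $\abs x=1$, $\abs y=2$) generated by the Thom class $U$, shifted so that $\abs U = 0$ since we use the virtual bundle $\rho-2$. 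The Steenrod action is governed by \Cref{Steenrod_Thom}: $\Sq^1 U = Uw_1(\rho) = 0$ and $\Sq^2 U = Uw_2(\rho) = Uy$ (using \cref{Z4_coh}), together with the $\cA(1)$-action on $H^*(B\Z/4\Z;\Z/2\Z)$ computed in \cref{cyclic_steenrod} ($\Sq^1 x = 0$, $\Sq^1 y = 0$, $\Sq^2(xy)=xy^2$, etc.) and the Cartan formula. The result, in low degrees, is a module whose cell diagram should be recognizable as a sum of a few small $\cA(1)$-modules plus free $\cA(1)$-summands appearing once $x$ enters; the free summands are killed off by Margolis' theorem (\cref{Margolis_kills_differentials}) and contribute only $H\Z/2\Z$'s, which we then split off.

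Next I would determine the $\Ext$ groups. Here I would use the long exact sequence~\eqref{ExtLES} coming from the short exact sequence of $\cA(1)$-modules $0\to x\cdot M \to M \to M/x \to 0$, which separates the ``$x=0$'' part (a Thom spectrum over $\RP^\infty_{\text{even}}$, essentially the $C\eta$-type modules analyzed in \cref{LES_ext_exm}) from the rest; alternatively I would simply identify the relevant $\cA(1)$-modules with ones whose $\Ext$ is tabulated in Beaudry--Campbell~\cite{BC18}. Then I would run the Adams differentials. By the equivariance of differentials under the $\Ext(\Z/2\Z)$-action and the fact that there is very little room below topological degree $12$, I expect most differentials to vanish; any that survive would be detected by the $h_0$- and $h_1$-multiplication structure or pulled back from a simpler spectrum via naturality. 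Reading off the $E_\infty$-page then gives the associated graded of each $\ko_k$.

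The main obstacle will be the extension problems, exactly as the excerpt foreshadows (``several hidden extensions we have to resolve''). The $h_0$-towers on the $E_\infty$-page resolve the ``obvious'' extensions via \cref{h0_xtn_lemma} --- e.g.\ an $h_0$-tower of height $3$ in degree $1$ gives the $\Z/8\Z$, and a height-$5$ tower in degree $5$ gives the $\Z/32\Z$, a height-$7$ tower in degree $9$ gives $\Z/128\Z$ --- but I would need to rule out hidden extensions merging the extra $\Z/2\Z$ and $\Z/4\Z$ summands in degrees $5$, $7$, and $9$ into larger cyclic groups. To do this I would import geometric input: the lens spaces $L^5_4(1,1,1)$, $Q^7_4$, $L^9_4(1,1,1,1,1)$ and $\widetilde L^9_4$ realize explicit bordism classes, and the $\eta$-invariant and Rarita--Schwinger $\eta$-invariant computations of Appendix~\ref{app:eta} (cf.\ the values $-5/32$, $9/128$, etc., quoted in \cref{tab:5detas,tab:5detas3}) pin down the isomorphism type: a nonzero $\eta$-invariant of order $32$ forces $\Z/32\Z$ rather than $\Z/64\Z$ in degree $5$, and similarly $\Z/128\Z$ rather than $\Z/256\Z$ in degree $9$. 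For degree $7$, the relation $R\circ b\circ c = 2$ on $\ko$ together with the $\ku$-homology computation (\cref{ku_remark}) can be used as in \cref{quater_hidden} to split off the would-be extension, confirming $\Z/4\Z$. Finally I would reassemble via \cref{ABPthm}, but since the $\ko_{k-8}$ and $\ko\ang 2_{k-10}$ contributions only affect $k\ge 8$ and are small (a copy of $\Z$ in degree $8$ from $\ko_0$, and a $\Z/2\Z$ in degree $10$ and degree $11$), the $\ko$-homology groups stated are obtained directly.
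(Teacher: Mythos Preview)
Your broad architecture matches the paper's, but there are two genuine gaps.

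First, the $\cA(1)$-module structure: there are \emph{no} free $\cA(1)$-summands here. Since $x^2=0$ and $\Sq^1$ vanishes on everything, the entire module $H^*\big((B\Z/4\Z)^{\rho-2};\Z/2\Z\big)$ decomposes as a direct sum of shifted copies of $C\eta$, namely $C\eta\oplus\Sigma C\eta\oplus\Sigma^4 C\eta\oplus\Sigma^5 C\eta\oplus\Sigma^8 C\eta\oplus\dotsb$ (two summands every four degrees, starting at $0$ and $1$). So Margolis' theorem does nothing for you, and the $E_2$-page consists entirely of $h_0$-towers.

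Second, and more seriously: your expectation that ``most differentials vanish'' is the opposite of what happens. Every $C\eta$-summand contributes an $h_0$-tower in each even topological degree above its starting point, so the $E_2$-page is dense with $h_0$-towers, and there are \emph{many} nonzero $d_2$'s between them. These are not detectable by $h_0$- or $h_1$-equivariance alone; the paper computes them via the May--Milgram theorem, which identifies $d_r$'s between $h_0$-towers with higher Bocksteins $\beta_r$ on $\Z/2^r\Z$-cohomology. Concretely, the $\Z/4\Z$-cohomology Bocksteins $Uxy^k\mapsto Uy^{k+1}$ force $d_2$'s from the tower of $\Sigma^{4j}C\eta$ in degree $4j+4k+2$ to the tower of $\Sigma^{4j+1}C\eta$ in degree $4j+4k+1$ (and similarly with the roles of even/odd shifts exchanged). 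Without these differentials you would get nonsense, for instance infinite $\ko_2$ and $\ko_6$ instead of $0$.

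A smaller point: your description of the $E_\infty$-page (``height-$5$ tower in degree $5$'', ``height-$7$ tower in degree $9$'') is not what one finds. After the $d_2$'s, degree $5$ has two \emph{separate} height-$3$ pieces and the options are $(\Z/8\Z)^{\oplus 2}$, $(\Z/16\Z)\oplus(\Z/4\Z)$, or $(\Z/32\Z)\oplus(\Z/2\Z)$; the $\eta$-invariant $-5/32$ forces a $\Z/32\Z$ subgroup and hence the last option (the paper also gives a purely algebraic argument via the $\eta,c,R$ sequence). The option $\Z/64\Z$ you mention is already excluded by the $E_\infty$-page. Finally, the ABP reassembly in your last sentence is irrelevant: the theorem is about $\ko$-homology itself, not about recovering Spin bordism from it.
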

Barrera-Yanez~\cite[Theorem 3.1]{BY99} computes $\ko_{2k+1} \big((B\Z/4\Z)^{\rho-2} \big)$ using analytic methods. We
nonetheless work through the spectral sequence computation because we will need it when we study Spin-$\GL^+(2,
\Z)$ bordism in Section \ref{gl_spin}.
\begin{proof}
For ease of reading, let $X\coloneqq (B\Z/4\Z)^{\rho-2}$. We use the Adams spectral sequence, as we reviewed in
Section \ref{the_Adams_SS}. The steps of the problem are:
\begin{enumerate}
	\item Determine the $\cA(1)$-module structure on $H^*(X;\Z/2\Z)$ using the techniques we
	reviewed in Section \ref{ss:steenrod}.
	\item Use this to compute $\Ext_{\cA(1)}^{s,t} \big(H^*(X ;\Z/2\Z), \Z/2\Z \big)$, which is the
	$E_2$-page of the Adams spectral sequence. We discussed Ext and how to determine it in Section \ref{ss:ext}.
	\item Compute the differentials in the spectral sequence. We discussed differentials in general Adams spectral
	sequences in Section \ref{ss:differentials}; in this example, (a variant of) the May-Milgram theorem computes all
	the differentials for us. We discuss this tool in more detail in Appendix~\ref{s:MM_appendix}.
	\item Finally, we need to resolve some extension questions; we discussed the generalities of extension
	questions in Section \ref{ss:extensions}. To resolve the extension questions in the calculation of $\ko_*(X)$, we use
	calculations of $\eta$-invariants of lens spaces and lens space bundles from Appendix \ref{subsec:exteta}.
\end{enumerate}
First, the $\cA(1)$-module structure on cohomology. The Thom isomorphism~\eqref{Thom_iso} provides an isomorphism
$U\colon H^*(B\Z/4\Z;\Z/2\Z)\to H^*(X;\Z/2\Z)$; the class $U\coloneqq U(1)$ is called the \term{Thom class}.

\Cref{Steenrod_Thom} computes $\Sq(U\alpha)$ in terms of $\Sq(\alpha)$ and $w(\rho)$. We just need
$\Sq^1$ and $\Sq^2$, so we just need $w_1$ and $w_2$. Since $\rho$ has image contained in $\SSO(2)$, $w_1$
vanishes, and \cref{Z4_coh} tells us $w_2 = y$. We calculated the Steenrod squares of classes in
$H^*(B\Z/4\Z;\Z/2\Z)$ in \cref{cyclic_steenrod}. We learn $\Sq^1(x)= 0$, $\Sq^2(x) = 0$, $\Sq^1(y) = 0$, and
$\Sq^2(y) = y^2$. Using \Cref{Steenrod_Thom}, which tells us $\Sq^1(U\alpha) = \Sq^1(U)\alpha + U\Sq^1(\alpha)$ and
$\Sq^2(U\alpha) = \Sq^2(U)\alpha + \Sq^1(U)\Sq^1(\alpha) + U\Sq^2(\alpha)$, we learn that $\Sq^1$ vanishes on all
classes in $H^*(X;\Z/2\Z)$, and that
\begin{equation}
	\Sq^2(U\alpha) = U\alpha y + U\Sq^2(\alpha).
\end{equation}
Using this, we can completely describe the $\cA(1)$-module structure on $H^*(X;\Z/2\Z)$. This module splits as a
direct sum of modules which are two-dimensional vector spaces over $\Z/2\Z$. Specifically, if $C\eta$ denotes the
$\cA(1)$-module consisting of $\Z/2\Z$ summands in degrees $0$ and $2$ connected by a $\Sq^2$, then
\begin{equation}
\label{bunch_of_Cetas}
	H^*\big((B\Z/4\Z)^{\rho-2};\Z/2\Z\big) \cong
		\textcolor{BrickRed}{C\eta} \oplus
		\textcolor{RedOrange}{\Sigma C\eta} \oplus
		\textcolor{Goldenrod!67!black}{\Sigma^4 C\eta} \oplus
		\textcolor{Green}{\Sigma^5 C\eta} \oplus
		\textcolor{PineGreen}{\Sigma^8 C\eta} \oplus
		\textcolor{MidnightBlue}{\Sigma^9 C\eta} \oplus
		\textcolor{Fuchsia}{\Sigma^{12} C\eta} \oplus P,
\end{equation}
where $P$ is \term{$12$-connected}, i.e.\ has no nonzero classes in degrees $12$ or below. We only care about the
Adams spectral sequence in degrees $12$ and below, which means we can (and will) ignore $P$.
We draw the $\cA(1)$-module structure on $H^*(X;\Z/2\Z)$ in \cref{Z4Thomcoh}.

\begin{figure}[h!]
\centering
\begin{tikzpicture}[scale=0.6, every node/.style = {font=\tiny}]
\foreach \y in {0, ..., 14} {
	\node at (-2, \y) {$\y$};
}
\foreach \y in {1, 3, ..., 11} {
	\drawbock(0, \y);
}
\begin{scope}[BrickRed]
	\tikzptR{0}{0}{$U$}{};
	\tikzptR{0}{2}{$Uy$}{};
	\sqtwoR(0, 0);
\end{scope}
\begin{scope}[RedOrange]
	\tikzpt{0}{1}{$Ux$}{isosceles triangle};
	\tikzpt{0}{3}{$Uxy$}{isosceles triangle};
	\sqtwoL(0, 1);
\end{scope}
\begin{scope}[Goldenrod!67!black]
	\tikzptR{0}{4}{$Uy^2$}{regular polygon,regular polygon sides=5};
	\tikzptR{0}{6}{$Uy^3$}{regular polygon,regular polygon sides=5};
	\sqtwoR(0, 4);
\end{scope}
\begin{scope}[Green]
	\tikzpt{0}{5}{$Uxy^2$}{regular polygon,regular polygon sides=3};
	\tikzpt{0}{7}{$Uxy^3$}{regular polygon,regular polygon sides=3};
	\sqtwoL(0, 5);
\end{scope}
\begin{scope}[PineGreen]
	\tikzptR{0}{8}{$Uy^4$}{star};
	\tikzptR{0}{10}{$Uy^5$}{star};
	\sqtwoR(0, 8);
\end{scope}
\begin{scope}[MidnightBlue]
	\tikzpt{0}{9}{$Uxy^4$}{rectangle, minimum size=3.5pt};
	\tikzpt{0}{11}{$Uxy^5$}{rectangle, minimum size=3.5pt};
	\sqtwoL(0, 9);
\end{scope}
\begin{scope}[Fuchsia]
	\tikzptR{0}{12}{$Uy^6$}{diamond};
	\tikzptR{0}{14}{$Uy^7$}{diamond};
	\sqtwoR(0, 12);
\end{scope}
\end{tikzpicture}
\caption{The mod $2$ cohomology of $H^*\big((B\Z/4\Z)^{\rho-2};\Z/2\Z\big)$ in low degrees. Different colors/shapes
represent different summands, and the curved lines represent $\Sq^2$-actions as described in Section
\ref{ss:steenrod}. The dashed lines indicate the presence of a non-trivial Bockstein $\beta\colon H^k
\big((B\Z/4\Z)^{\rho-2}; \Z/4\Z \big)\to H^{k+1}\big((B\Z/4\Z)^{\rho-2};\Z/4\Z\big)$ from a preimage of the lower
class to a preimage of the higher class. This figure is complete in degrees $12$ and below.}
\label{Z4Thomcoh}
\end{figure}

Now to determine the Adams $E_2$-page.
The $E_2$-page of the Adams spectral sequence splits as the direct sum of the Ext groups of all of the summands we
identified in~\eqref{bunch_of_Cetas}. For a single $C\eta$ summand, one can look this up in, e.g.,
Beaudry-Campbell~\cite[Figure 22]{BC18}, so we can draw the $E_2$-page in \cref{the_E2_page}, top.
\begin{figure}
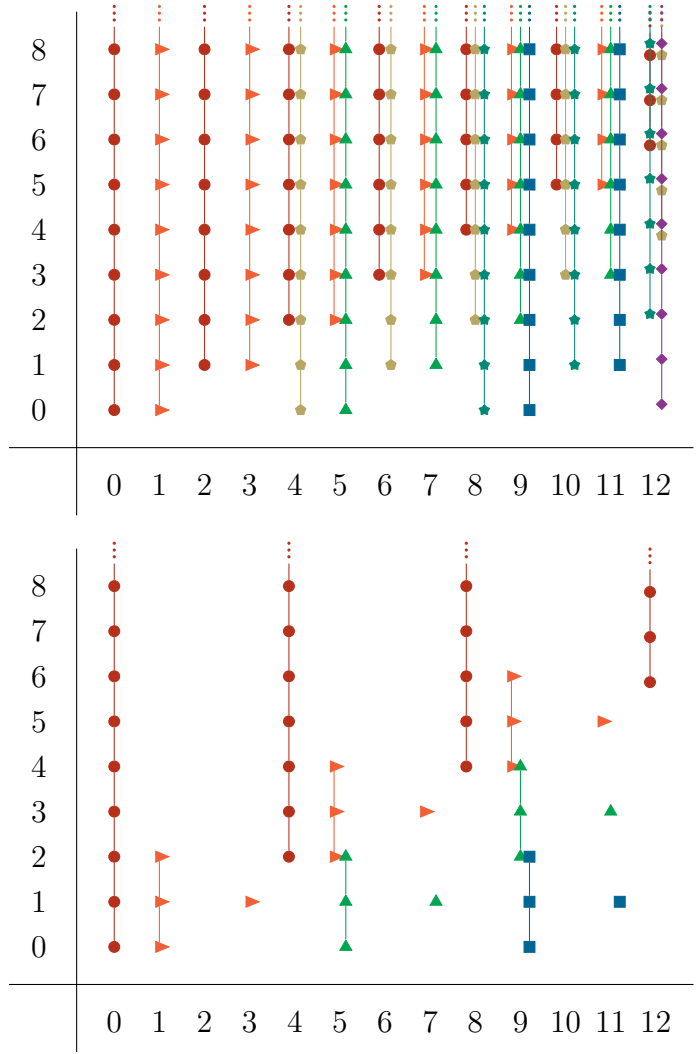

\centering
\begin{subfigure}[c]{0.6\textwidth}
\begin{sseqdata}[name=ThomZ4E2, classes=fill, xrange={0}{12}, yrange={0}{8}, scale=0.6, Adams grading, >=stealth,
class labels = {below = 0.05em, font=\tiny}]
\begin{scope}[BrickRed]
	\foreach \y in {0, ..., 6} {
		\class(2*\y, \y)\AdamsTower{}
	}
\end{scope}

\begin{scope}[RedOrange, isosceles triangle]
	\foreach \y in {0, ..., 5} {
		\class(2*\y+1, \y)\AdamsTower{}
	}
\end{scope}
\begin{scope}[draw=none, fill=none] 
	\class(4, 0)
	\class(4, 1)
	\class(5, 0)\class(5, 1)
	\class(6, 1)
	\class(6, 2)
	\class(8, 0)\class(8, 0)
	\class(8, 1)\class(8, 1)
	\class(8, 2)
	\class(8, 3)
	\class(9, 0)\class(9, 0)
	\class(9, 1)\class(9, 1)
	\class(7, 1)
	\class(7, 2)
	\class(9, 2)\class(9, 3)
	\class(10, 1)\class(10, 1)
	\class(10, 2)\class(10, 2)
	\class(10, 3)\class(10, 4)
	\class(11, 1)\class(11, 1)
	\class(11, 2)\class(11, 2)
	\class(11, 3)\class(11, 4)
	\class(12, 0)\class(12, 0)\class(12, 0)
	\class(12, 1)\class(12, 1)\class(12, 1)
	\class(12, 2)\class(12, 2)
	\class(12, 3)\class(12, 3)
	\class(12, 4)\class(12, 5)
\end{scope}

\begin{scope}[Goldenrod!67!black, regular polygon, regular polygon sides=5]
	\foreach \y in {0, ..., 4} {
		\class(2*\y+4, \y)\AdamsTower{}
	}
\end{scope}

\begin{scope}[Green, regular polygon, regular polygon sides=3, minimum width=1ex]
	\foreach \y in {0, ..., 3} {
		\class(2*\y+5, \y)\AdamsTower{}
	}
\end{scope}

\begin{scope}[PineGreen, star]
	\foreach \y in {0, ..., 2} {
		\class(2*\y+8, \y)\AdamsTower{}
	}
\end{scope}

\begin{scope}[MidnightBlue, rectangle]
	\foreach \y in {0, 1} {
		\class(2*\y+9, \y)\AdamsTower{}
	}
\end{scope}

\begin{scope}[Fuchsia, diamond]
	\class(12, 0)\AdamsTower{}
\end{scope}
\begin{scope}[draw=none, fill=none]
	\class(1, 10)
	\class(3, 10)
	\class(5, 10)\class(5, 10)
	\class(7, 10)\class(7, 10)
	\class(9, 10)\class(9, 10)\class(9, 10)
	\class(11, 10)\class(11, 10)\class(11, 10)
\end{scope}
\foreach \y in {1, ..., 8} {
	\d2(2, \y)
	\d2(6, \y, 2)(5, \y+2, 2)
	\d2(10, \y, 3)(9, \y+2, 3)
}
\foreach \y in {0, ..., 8} {
	\d2(4, \y, 2)
	\d2(8, \y, 3)(7, \y+2, 2)
	\d2(12, \y, 4)(11, \y+2, 3)
}
\foreach \y in {3, ..., 8} {
	\d2(6, \y, 1)(5, \y+2, 1)
	\d2(10, \y, 2)(9, \y+2, 2)
}
\foreach \y in {2, ..., 8} {
	\d2(8, \y, 2)(7, \y+2, 1)
	\d2(12, \y, 3)(11, \y+2, 2)
}

\foreach \y in {5, ..., 8} {
	\d2(10, \y, 1)(9, \y+2, 1)
}

\foreach \y in {4, ..., 8} {
	\d2(12, \y, 2)(11, \y+2, 1)
}

\end{sseqdata}
\printpage[name=ThomZ4E2, page=1]
\end{subfigure}
\\\vspace{0.3cm}
\begin{subfigure}[c]{0.6\textwidth}
	\printpage[name=ThomZ4E2, page=3]
\end{subfigure}
\caption{The Adams spectral sequence computing
$\ko_*\big((B\Z/4\Z)^{\rho-2};\Z/2\Z\big) \,$. Top: the $E_2$-page. Bottom: the $E_\infty$-page.}
\label{the_E2_page}
\label{ThomZ4Einfty}
\end{figure}

There are plenty of differentials that we cannot rule out using the $h_i$-action. One can quickly check that many
of them are nonzero, e.g.\ by computing $\ko_*(X)\otimes\Q$ using the fact that, after tensoring with $\Q$, the
Atiyah-Hirzebruch spectral sequence always collapses~\cite[Bemerkung 14.18]{Dol66}.

The good news is, the $E_2$-page has enough structure to allow us to determine the differentials without too much
work.
\begin{defn}
An \term{$h_0$-tower} in the $E_r$-page of the Adams spectral sequence is an infinite sequence $x_i\in E_r^{s+i,
t-s-i}$ such that $h_0x_i = x_{i+1}$.
\end{defn}
Looking at the $E_2$-page in \cref{the_E2_page}, top, we see lots of $h_0$-towers in our spectral sequence. Because differentials commute
with $h_0$-actions, if we know $d_r$ on one element of an $h_0$-tower, we know $d_r$ on the entire tower; for this
reason it is common to refer to differentials between $h_0$-towers, instead of just elements.

May-Milgram~\cite{MM81} relate differentials between $h_0$-towers to Bockstein homomorphisms in cohomology. The
latter are typically not so hard to compute, suggesting an approach to solving our Adams spectral sequence: compute
Bocksteins in $H^*(X)$ and use them to determine Adams differentials. We discuss the May-Milgram theorem in greater
detail in Appendix~\ref{s:MM_appendix}.

We will actually use a variant of the May-Milgram theorem, \cref{MM_ceta}, tailored for spaces/spectra whose
cohomology is a direct sum of copies of $C\eta$. Let $\beta_r\colon H^n(\bl;\Z/2^r\Z)\to H^{n+1}(\bl;\Z/2^r\Z)$ be the
Bockstein associated to the short exact sequence
\begin{equation}
	\shortexact{\Z/2^r\Z}{\Z/2^{2r}\Z}{\Z/2^r\Z} \,;
\end{equation}
then \Cref{MM_ceta} says that if $\alpha\in H^n(X;\Z/2^r\Z)$ and $\beta_r(\alpha)\in H^{n+1}(X;\Z/2^r\Z)$ both
generate $\Z/2^r$ summands and the mod $2$ reductions of $\alpha$ and $\beta_r(\alpha)$ belong to $C\eta$
$\cA(1)$-module summands $M_0$, resp.\ $M_1$ of $H^*(X;\Z/2\Z)$, then there is a $d_r$ differential between the
$h_0$-towers in $\Ext_{\cA(1)}(X)$ in degrees $n+4k$ and $n+4k+1$ coming from $M_0$ and $M_1$, and conversely
differentials imply such Bockstein relations. We will use \cref{MM_ceta} as a black box in this section and direct
the curious reader to read the proof and context in Appendix~\ref{s:MM_appendix}.

The Thom isomorphism commutes with the Bockstein $\beta_r$ when $r > 1$,\footnote{For $r = 1$, one must be careful
with twisted versus untwisted coefficients.} so we can compute the action of $\beta_r$ in $H^*(X;\Z/2^r\Z)$ using
the action in $H^*(B\Z/4\Z;\Z/2^r\Z)$. It is a standard fact\footnote{To prove this standard fact, recall from
Appendix~\ref{s:MM_appendix} (specifically, around~\eqref{LES_diag}) that $\beta_r$ acts non-trivially on
$H^{n-1}(X;\Z/2^r\Z)$ if and only if $H^n(X;\Z)$ has $2^r$-torsion; then compute using the identification
$H^*(B\Z/4\Z;\Z)\cong\Z[y]/4y$, with $\abs y = 2$.} that there is a class $\alpha$ in degree $n$ satisfying the
conditions of \cref{MM_ceta} outlined in the previous paragraph exactly when $r = 2$ and $n$ is odd, and that the
mod $2$ reductions of $\alpha$ and $\beta_r(\alpha)$ are $Uxy^k$ and $Uy^{k+1}$ where $k = (n-1)/2$; thus, in
\cref{Z4Thomcoh}, we represent these Bocksteins with dotted lines.

Invoking \cref{MM_ceta} on these classes, we learn the following differentials.
\begin{enumerate}
	\item Suppose $x\in E_2^{s,t}$ is an element of a red $h_0$-tower, i.e.\ coming from the copy of
	$\textcolor{BrickRed}{C\eta}\subset H^*(X;\Z/2\Z)$. If $t-s \equiv 0\bmod 4$, $d_r(x) = 0$ for all $r$. If
	$t-s\equiv 2\bmod 4$, $d_2(x)$ is nonzero, and is an element of the $h_0$-tower in degree $t-s-1$ coming from
	$\Ext(\textcolor{RedOrange}{\Sigma C\eta})$ (right-pointing triangles). These differentials follow from the absence of any Bockstein in
	degree $0$ and the presence of a Bockstein on $\Z/4\Z$-cohomology lifts of $Ux$ and $Uy$.
	\item All differentials emerging from the $h_0$-towers coming from $\Ext(\textcolor{RedOrange}{\Sigma C\eta})$
	(right-pointing triangles)
	vanish, because $\Z/4\Z$-cohomology preimages of nonzero elements of $\textcolor{RedOrange}{\Sigma C\eta}$ are
	not in the image of $\beta_2$. An analogous statement is true for $\textcolor{Green}{\Sigma^5 C\eta}$
	(upward-pointing triangles) and
	$\textcolor{MidnightBlue}{\Sigma^{9}C\eta}$ (squares).
	\item If $x\in E_2^{s,t}$ is an element of an $h_0$-tower coming from
	$\Ext(\textcolor{Goldenrod!67!black}{\Sigma^4C\eta})$ (pentagons), $d_2(x)\ne 0$. If $t-s\equiv 0\bmod 4$, $d_2(x)\in
	\Ext(\textcolor{RedOrange}{\Sigma C\eta})$ (right-pointing triangles), and if $t-s\equiv 2\bmod 4$,
	$d_2(x)\in\Ext(\textcolor{Green}{\Sigma^5 C\eta})$ (upward-pointing triangles). These differentials follow from Bocksteins in
	$\Z/4\Z$-valued cohomology whose
	mod $2$ reductions are $Uxy\mapsto Uy^2$, resp.\ $Uxy^2\mapsto Uy^3$.
	\item If $x\in E_2^{s,t}$ is an element of an $h_0$-tower coming from
	$\Ext(\textcolor{PineGreen}{\Sigma^8C\eta})$ (stars), $d_2(x)\ne 0$. If $t-s\equiv 0\bmod 4$, $d_2(x)\in
	\Ext(\textcolor{Green}{\Sigma^5 C\eta})$ (upward-pointing triangles), and if $t-s\equiv 2\bmod 4$,
	$d_2(x)\in\Ext(\textcolor{MidnightBlue}{\Sigma^9 C\eta})$ (squares). These differentials follow from Bocksteins in
	$\Z/4\Z$-valued cohomology whose
	mod $2$ reductions are $Uxy^3\mapsto Uy^4$, resp.\ $Uxy^4\mapsto Uy^5$.
	\item If $x$ is an element of the $h_0$-tower in topological degree $12$ coming from
	$\Ext(\textcolor{Fuchsia}{\Sigma^{12}C\eta})$ (diamonds), then $d_2(x)\in \Ext(\textcolor{MidnightBlue}{\Sigma^9C\eta})$
	(squares),
	which follows from a $\beta_2$ whose mod $2$ reduction sends $Uxy^5\mapsto Uy^6$.
\end{enumerate}
One could continue, but we do not need to.
This is all of the differentials in the range
we care about, and we display the $E_\infty$-page in \cref{ThomZ4Einfty}, bottom. From this we can immediately deduce
all of the $\ko$-homology groups in range except in degrees $5$, $7$, $9$, and $11$, where there could be extension
problems. Specifically:
%
\begin{itemize}
	\item In degree $5$, we could have $(\Z/8\Z) \oplus (\Z/8\Z)$, $(\Z/16\Z) \oplus (\Z/4\Z)$, or $(\Z/32\Z) \oplus (\Z/2\Z)$, depending on how
	the green pieces (upward-pointing triangles) interact with the orange pieces (rightward-pointing triangles) in the Adams filtration. 
	\item In degree $7$, we could have $(\Z/2\Z) \oplus (\Z/2\Z)$ or $\Z/4\Z$.
	\item In degree $9$, there are a lot of options. The answer must be an Abelian group of order $2^9$ and
	must have either two or three cyclic summands. No element has order greater than $2^7$. For example, we could
	have $(\Z/8\Z)^{\oplus 3}$, $(\Z/128\Z)\oplus (\Z/2\Z)^{\oplus2}$, $(\Z/128\Z) \oplus (\Z/4\Z)$, or other
	options. 
	\item In degree $11$, we could have $(\Z/2\Z)^{\oplus 3}$, $(\Z/4\Z) \oplus (\Z/2\Z)$, or $\Z/8\Z$.
\end{itemize}
These extension questions have been studied by Hsieh~\cite[Section 2.2]{Hsi18} and Hsieh-Tachikawa-Yonekura~\cite[Section
8.4]{Hsieh:2020jpj} in degree $5$ and by Barrera-Yanez~\cite[Theorem 3.1]{BY99} in all odd degrees. Their proofs
compute bordism invariants built from $\eta$-invariants. For example, they show that a particular $\eta$-invariant
evaluated on $L_4^5$ is equal to $1/32\bmod \Z$, and that the value of this $\eta$-invariant in $\R/\Z$ is a
bordism invariant of Spin-$\Z/8\Z$ manifolds, implying that $\Omega_5^{\Spin\text{-}\Z/8\Z} (\pt)$ has a $\Z/32\Z$
subgroup, hence must be $(\Z/32\Z) \oplus (\Z/2\Z)$. We resolve the extension problems in a similar way to
Barrera-Yanez, using formulas for $\eta$-invariants calculated by Atiyah-Patodi-Singer~\cite[Proposition
2.12]{APS2}, Donnelly~\cite[Section 4]{Don78}, Botvinnik-Gilkey-Stolz~\cite{BGS97}, and
Hsieh-Tachikawa-Yonekura~\cite[Appendix C]{Hsieh:2020jpj}. Specifically, in Section \ref{subsec:exteta}, if $L_4^{2k-1}$
denotes the lens space $S^{2k-1}/(\Z/4\Z)$ and $Q_4^{2k-1}$ is the $L_4^{2k-3}$-bundle over $S^2$ defined in
Section~\ref{sl2_gens},
then
\begin{equation}
\label{etas_spZ8}
\begin{alignedat}{2}
	\eta_{1/2}^{\text{D}} (L_4^5) &= -\tfrac{5}{32} \qquad\qquad & \eta_{1/2}^{\text{D}} (L_4^9) &= \phantom{-}\tfrac{9}{128}\\
	\eta_{3/2}^{\text{D}} (\widetilde L_4^5) &= -\tfrac{3}{32} & \eta_{3/2}^{\text{D}} (\widetilde L_4^9) &= -\tfrac{7}{128}\\
	(\eta_{3/2}^{\text{D}} - \eta_{1/2}^{\text{D}}) (Q_4^7) &= \phantom{-}\tfrac 14 & (\eta_{3/2}^{\text{D}} - \eta_{1/2}^{\text{D}}) (Q_4^{11}) &=
	\phantom{-}\tfrac 18 \,.
\end{alignedat}
\end{equation}
In order to say this we must specify the Spin-$\Z/8\Z$ structures on these lens space bundles; we do this in
Appendix~\ref{app:eta}, so see there for details.

Since the values of these $\eta$-invariants mod $\Z$ are bordism invariants, these imply that
$\ko_5(X)\cong (\Z/32\Z) \oplus (\Z/2\Z)$, $\ko_7(X)\cong\Z/4$, $\ko_9(X)\cong (\Z/128\Z) \oplus (\Z/4\Z)$, and
$\ko_{11}(X)\cong\Z/8\Z$.

Though all previous calculations of these groups used analytic methods, we can address the extension problem in
degree $5$ purely with algebra, providing a check on the $\eta$-invariant computations. It would be interesting to
try to generalize this to the extension questions in dimensions $7$, $9$, and $11$.
\begin{lem}
\label{Z4_5_extn}
$\ko_5 \big((B\Z/4\Z)^{\rho-2} \big)\cong (\Z/32\Z) \oplus (\Z/2\Z)$.
\end{lem}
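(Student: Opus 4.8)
The plan is to sidestep the hidden extension entirely by identifying $\ko_5\big((B\Z/4\Z)^{\rho-2}\big)$ with a $\ko$-homology group of $B\Z/4\Z$ that is already recorded in \cref{tab:SLprime2data}. This is necessary because, as one sees from the $E_\infty$-page, in topological degree $5$ the surviving classes split as a ``green'' $\Z/4\Z$ piece at low filtration, an ``orange'' $\Z/8\Z$ piece at higher filtration, and a ``gold'' $\Z/2\Z$ piece, and the $\Ext(\Z/2\Z)$-module structure alone cannot decide whether the green and orange pieces are linked by a hidden $h_0$-extension; one needs an external input. The tool is the Gysin cofiber sequence of $\rho$. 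Regarded as a rank-$2$ real bundle over $B\Z/4\Z$, $\rho$ is the bundle associated to the free rotation action of $\Z/4\Z$ on $\C$, so its unit sphere bundle $S(\rho)=E\Z/4\Z\times_{\Z/4\Z} S^1$ fibers over $S^1=S^1/(\Z/4\Z)$ with contractible fiber $E\Z/4\Z$; hence $S(\rho)\simeq S^1$ (this total space is the lens space $L^1_4$). Desuspending the Thom-collapse cofiber sequence $S(\rho)_+\to (B\Z/4\Z)_+\to (B\Z/4\Z)^{\rho}$ by $2$ gives
\[
	\Sigma^{-2}(S^1)_+ \longrightarrow \Sigma^{-2}(B\Z/4\Z)_+ \longrightarrow (B\Z/4\Z)^{\rho-2}.
\]

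First I would apply $\ko_*$ and extract the relevant segment of the long exact sequence, which relates $\ko_5\big((B\Z/4\Z)^{\rho-2}\big)$ to $\ko_7(S^1)$, $\ko_7(B\Z/4\Z)$, and $\ko_6(S^1)$ (using that $\Sigma^{-2}$ raises $\ko$-degree by $2$). The key observation is that $\ko_n(S^1)\cong\ko_n(\pt)\oplus\ko_{n-1}(\pt)$ vanishes for $n=6$ and $n=7$, because $\ko_5(\pt)=\ko_6(\pt)=\ko_7(\pt)=0$. The long exact sequence therefore degenerates to
\[
	0\longrightarrow \ko_7(B\Z/4\Z)\longrightarrow \ko_5\big((B\Z/4\Z)^{\rho-2}\big)\longrightarrow 0,
\]
so $\ko_5\big((B\Z/4\Z)^{\rho-2}\big)\cong\ko_7(B\Z/4\Z)$, which is $(\Z/32\Z)\oplus(\Z/2\Z)$ by Bruner--Greenlees (see \cref{tab:SLprime2data}). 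In particular this rules out the competitors $(\Z/8\Z)^{\oplus 2}$ and $(\Z/16\Z)\oplus(\Z/4\Z)$, neither of which has an element of order $32$, and combined with \cref{mp2_at_2} it yields $\Omega_5^{\Spin\text{-}\Z/8\Z} (\pt)\cong(\Z/32\Z)\oplus(\Z/2\Z)$.

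This is precisely the ``map to another group we already understand'' strategy from \cref{sss:hidden_xtn}: the desuspended Thom-collapse map induces $\ko_7(B\Z/4\Z)\to\ko_5\big((B\Z/4\Z)^{\rho-2}\big)$, and on Adams $E_\infty$-pages one checks it is surjective, after which a count of orders (both groups have order $64$) forces it to be an isomorphism. The only thing requiring care is the homotopy identification of $S(\rho)$ as a circle and the accompanying degree bookkeeping in the Gysin sequence (the $\rho-2$ normalization, and which $\ko$-degree of $B\Z/4\Z$ one lands in); once that is pinned down the argument is immediate. There is no substantial obstacle here, which is exactly why this particular Adams hidden extension can be resolved by pure algebra, without recourse to the $\eta$-invariant computations of Appendix~\ref{subsec:exteta}.
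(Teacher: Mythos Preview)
Your argument is correct and is genuinely different from the paper's. The paper attacks the extension via the $\eta$--$c$--$R$ long exact sequence of \cref{ku_remark}: it invokes the $\ku$-Thom isomorphism and Hashimoto's computation $\ku_5(B\Z/4\Z)\cong(\Z/16\Z)\oplus(\Z/2\Z)^{\oplus 2}$ to rule out $(\Z/8\Z)^{\oplus 2}$, and then a somewhat delicate Adams-filtration analysis of where $\eta$ times the degree-$4$ generator lands to rule out $(\Z/16\Z)\oplus(\Z/4\Z)$. Your route is cleaner: the Thom cofiber sequence $S(\rho)_+\to(B\Z/4\Z)_+\to(B\Z/4\Z)^\rho$ with $S(\rho)\simeq S^1$ gives the long exact sequence directly, and the vanishing of $\ko_6(S^1)$ and $\ko_7(S^1)$ collapses it to an isomorphism $\ko_7(B\Z/4\Z)\cong\ko_5\big((B\Z/4\Z)^{\rho-2}\big)$, after which Bruner--Greenlees finishes in one stroke. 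In fact the paper writes down exactly this long exact sequence later, in~\eqref{smith_Z8_LES} of Appendix~\ref{smith_Mp}, as a Smith long exact sequence; you have simply noticed that it already resolves this extension. What the paper's approach buys is the identification of the hidden $\eta$-action $\ko_4\to\ko_5$, which feeds into the multiplicative structure in \S\ref{mult_Z4}; your approach bypasses that but gets to the group faster.

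One cosmetic remark: your description of the $E_\infty$-page in degree $5$ is slightly off---there is no ``gold'' summand there; the surviving classes are three green and three orange, each forming a visible $\Z/8\Z$, and the question is whether they splice. This does not affect your proof, which never uses that description, but you may want to correct it.
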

\begin{proof}
Recall from \cref{ku_remark} the $\eta$, $c$, $R$ long exact sequence~\eqref{etaCR};
we will study this long exact sequence for the $\ko$- and $\ku$-homology of $X = (B\Z/4\Z)^{\rho-2}$. Since $V\to B\Z/4\Z$ is a complex vector bundle, it is in particular
Spin\textsuperscript{$c$}. The generalized cohomology theory $\ku$ is oriented for Spin\textsuperscript{$c$} vector
bundles~\cite{ABS, Joa04},\footnote{Alternatively, one can use the $\ku$-orientation for complex vector bundles
constructed by Conner-Floyd~\cite[Section 5]{CF66}.} so there is a Thom isomorphism
$\ku_*\big((B\Z/4\Z)^{\rho-2}\big)\cong\ku_*(B\Z/4\Z)$. Hashimoto~\cite{Has83} shows $\ku_5(B\Z/4\Z)\cong (\Z/16\Z) \oplus (\Z/2\Z)^{\oplus 2}$.

Our three options for $\ko_5\big((B\Z/4\Z)^{\rho-2}\big)$ are $(\Z/8\Z)^{\oplus 2}$, $(\Z/16\Z) \oplus (\Z/4\Z)$, and $(\Z/32\Z) \oplus (\Z/2\Z)$. None
of these Abelian groups can map injectively into $(\Z/16\Z) \oplus (\Z/2\Z)^{\oplus 2}$, so
$c\colon\ko_5\big((B\Z/4\Z)^{\rho-2}\big)\to\ku_5(B\Z/4\Z)$ is not injective. Exactness of~\eqref{etaCR} implies that the map
$\eta\colon\ko_4\big((B\Z/2\Z)^{\rho-2}\big)\to\ko_5((B\Z/2\Z)^{\rho-2})$ is nonzero;\footnote{One expects to see the action of
$\eta$ in the $E_\infty$-page of the Adams spectral sequence, as it is detected by $h_1\in\Ext(\Z/2\Z)$. This
$\eta$-action is not seen by the $E_\infty$-page, which makes it an example of a ``hidden multiplicative
extension.'' See Section \ref{mult_Z4}.} since $2\eta = 0$ in $\pi_1(\mathbb S)$, $2\eta(x) = 0$ for any
$x\in\ko_4\big((B\Z/4\Z)^{\rho-2}\big)$. Because $\ko_4\big((B\Z/4\Z)^{\rho-2}\big)$ has a single generator, the image of $\eta$ is a $\Z/2\Z$
subgroup of $\ko_5\big((B\Z/4\Z)^{\rho-2}\big)$, and exactness of~\eqref{etaCR} then implies
\begin{equation}
\label{injects}
	c/\ker(c)\colon \ko_5\big((B\Z/4\Z)^{\rho-2}\big)/(\Z/2\Z) \longrightarrow \ku_5\big((B\Z/4\Z)^{\rho-2}\big)\cong (\Z/16\Z) \oplus (\Z/2\Z)^{\oplus 2} \,,
\end{equation}
is injective. If $\ko_5\big((B\Z/4\Z)^{\rho-2}\big)$ were isomorphic to $(\Z/8\Z)^{\oplus 2}$, this could not
happen: after quotienting out by $\Z/2\Z$, there would be too many elements $x$ with $2x\ne 0$.

With a little more work, we can also eliminate the possibility $(\Z/16\Z) \oplus (\Z/4\Z)$: since $\eta$ has Adams
filtration $1$ in the Adams spectral sequence for the sphere spectrum, multiplication by $\eta$ must raise Adams
filtration by at least $1$. Looking at \cref{ThomZ4Einfty}, bottom, the generator $x$ of
$\ko_4\big((B\Z/4\Z)^{\rho-2}\big)$ has Adams filtration at least $2$, because its image in the $E_\infty$-page is
in filtration $2$. Thus $\eta x$ has filtration at least $3$ and $2\eta = 0$; looking at the $5$-line of the
$E_\infty$-page in \cref{ThomZ4Einfty}, bottom, this can only occur if the image of $\eta x$ in the $E_\infty$-page is the
generator of $E_\infty^{4,9}\cong\textcolor{RedOrange}{\Z/2\Z}$. In particular, the $h_0$-actions on the $5$-line
imply there is some class $y\in\ko_5\big((B\Z/4\Z)^{\rho-2}\big)$ such that $4y = \eta$. Thus if there is an
isomorphism $\ko_5\big((B\Z/4\Z)^{\rho-2}\big)\cong (\Z/16\Z) \oplus (\Z/4\Z)$, $\eta x$ is identified with either
$(8, 0)$ or $(8, 2)$. Thus the quotient by $\eta x$ is $(\Z/8\Z)\oplus (\Z/4\Z)$, which does not inject into
$(\Z/16\Z) \oplus (\Z/2\Z)^{\oplus 2}$.

The only remaining option is $\ko_5\big((B\Z/4\Z)^{\rho-2}\big)\cong (\Z/32\Z) \oplus (\Z/2\Z)$.
\end{proof}
This finishes the calculation of Spin-$\Z/8\Z$ bordism groups in dimensions $11$ and below.
\end{proof}

Collecting all the pieces we find the Spin-Mp$(2,\Z)$ bordism groups summarized in Table~\ref{tab:Mp_bordism}.
\renewcommand{\arraystretch}{1.5}
\begin{table}[h!]
\centering
\begin{tabular}{c c c}
\toprule
$k$ & $\Omega^{\text{Spin-Mp}(2,\Z)}_k (\pt)$ & Generators \\
\midrule
$0$ & {\footnotesize $\Z$} & {\footnotesize $\pt_+$} \\ 
$1$ & {\footnotesize $(\Z/8\Z) \oplus (\Z/3\Z)$} & {\footnotesize $(L_1^4 \,, L^1_3)$} \\ 
$2$ & {\footnotesize $0$} & \\ 
$3$ & {\footnotesize $(\Z/2\Z) \oplus (\Z/3\Z)$} & {\footnotesize $(L^3_4 \,, L^3_3)$} \\ 
$4$ & {\footnotesize $\Z$} & {\footnotesize $E$} \\ 
$5$ & {\footnotesize $(\Z/2\Z) \oplus (\Z/32\Z) \oplus (\Z/9\Z)$} & {\footnotesize $(\widetilde{L}^5_4 \,, L^5_4 \,, L^5_3)$} \\ 
$6$ & {\footnotesize $0$} & \\ 
$7$ & {\footnotesize $(\Z/4\Z) \oplus (\Z/9\Z)$} & {\footnotesize $(Q^7_4 \,, L^7_3)$} \\ 
$8$ & {\footnotesize $\Z \oplus \Z$} & {\footnotesize $(B, \HP^2)$} \\ 
$9$ & {\footnotesize $(\Z/4\Z) \oplus (\Z/8\Z) \oplus (\Z/128\Z) \oplus (\Z/3\Z) \oplus (\Z/27\Z)$} & {\footnotesize $(\widetilde{L}^9_4 \,, \HP^2 \times L^1_4 \,, L^9_4 \,, \HP^2 \times L^1_3 \,, L^9_3 )$} \\ 
$10$ & {\footnotesize $(\Z/2\Z)$} & {\footnotesize $X_{10}$} \\ 
$11$ & {\footnotesize $(\Z/2\Z)^{\oplus 2} \oplus (\Z/8\Z) \oplus (\Z/3\Z) \oplus (\Z/27\Z)$} & {\footnotesize $(\HP^2 \times L^3_4 \,, X_{10} \times L^1_4 \,, Q^{11}_4 \,, \HP^2 \times L^3_3 \,, L^{11}_3)$} \\ 
\bottomrule
\end{tabular}
\caption{Bordism groups $\Omega^{\text{Spin-Mp}(2,\Z)}_k (\pt)$ and their generators (in the same order as group summands) for $k \leq 11$.}
\label{tab:Mp_bordism}
\end{table}
For convenience we already include the various generating manifolds that we will explore next.

\subsection{Finding generators}
\label{mp2_gens}

At odd primes, there is no difference between $\Omega_*^\Spin\big(B\SL(2, \Z)\big)$ and
$\Omega_*^{\Spin\text{-}\Mp(2, \Z)} (\pt)$, so we can recycle the generators from Section \ref{sl2_gens}. And at prime
$2$ we do not have to modify our strategy much: Botvinnik-Gilkey~\cite[Theorem 4.5]{BG97} show that the $\hat
A$-genus and $\eta$-invariants completely detect the $\ko_*$ summand of Spin-$\Z/2^k\Z$ bordism when $k\ge 3$. We
already determined the specific lens spaces and $\eta$-invariants for this in and around~\eqref{etas_spZ8}, so we
have found the generators coming from the $2$-torsion in $\ko_*$. For $\ko_{*-8}$, much like we did for
$\Omega_*^\Spin(B\SL(2, \Z))$, take the product of $\HP^2$ with the corresponding generator of
$\ko_*\big((B\Z/4)^{\rho-2}\big)$; and for $\ko\ang 2_{*-10}$, \cref{ko2_is_homology} means we can detect generators with mod
$2$ cohomology classes in the degrees we need, giving us the generators $X_{10}$ in degree $10$ and $X_{10}\times
L_4^1$ in degree $11$.
%
%
\subsection{Multiplicative structure}
\label{mult_Z4}
Because of the use of products in studying compactifications, as we discussed in \cref{mult_str_rem}, we discuss
how one can compute the product structure on Spin-$\Z/8\Z$ bordism in this subsection.


The product of two Spin-$\Z/2k\Z$ manifolds $(M, P)$ and $(N, Q)$ (here $P\to M$ and $Q\to N$ are the associated
$\Z/k\Z$-bundles) has an induced Spin-$\Z/2k\Z$ structure, given in informal terms by adding the characteristic
classes of $P$ and $Q$. In a little more detail, let $\otimes$ denote the tensor product of $\Z/k\Z$-bundles, which
lifts the addition operation of their characteristic classes in $H^1(\text{--};\Z/k\Z)$. If $\pi_1\colon M\times
N\to M$ and $\pi_2\colon M\times N\to N$ are the projections onto the first, respectively second factor, then there
are isomorphisms
\begin{equation}
\begin{aligned}
	T(M\times N)\oplus (\pi_1^\ast P\otimes\pi_2^\ast Q)_\rho  &\overset\cong\longrightarrow \pi_1^\ast TM \oplus
	\pi_2^\ast TN\oplus \pi_1^\ast P_\rho \oplus \pi_2^\ast Q_\rho\\
	 &\cong \pi_1^\ast(TM\oplus P_\rho) \oplus \pi_2^\ast(TN\oplus Q_\rho),
\end{aligned}
\end{equation}
so Spin structures on $TM\oplus P_\rho$ and $TN\oplus Q_\rho$, i.e.\ Spin-$\Z/k\Z$ structures on $(M, P)$ and $(N,
Q)$, induce a Spin structure on $T(M\times N)\oplus (\pi_1^\ast P\otimes\pi_2^\ast Q)_\rho$, i.e.\ a Spin-$\Z/k\Z$
structure on $(M\times N, \pi_1^\ast P\otimes\pi_2^\ast Q)$. Allowing $P$ and/or $Q$ to be trivial shows this
product is compatible with the product on Spin bordism, implying that Spin-$\Z/2k\Z$ bordism is not just a ring, but
also an $\Omega_\ast^\Spin$-algebra.

We briefly describe how to make the ring structure on Spin-$\Z/8\Z$ bordism explicit in low dimensions. In high
dimensions, determining the complete ring structure on $\Omega_*^\Spin$ is still open (see, e.g.,~\cite{Lau03}), and the ring structure
on Spin-$\Z/8\Z$ bordism is likely to be yet more complicated, so we focus on dimensions $11$ and below. Most
products we might care about vanish for degree reasons, or because torsion cannot map non-trivially to free summands.
The only interesting products are of the form
\begin{subequations}
\begin{equation}
	\Omega_{4k}^{\Spin\text{-}\Z/8\Z}\times\Omega_{4\ell}^{\Spin\text{-}\Z/8\Z} \longrightarrow
	\Omega_{4(k+\ell)}^{\Spin\text{-}\Z/8\Z}
\end{equation}
and
\begin{equation}
\label{second_spinZe_prod}
\Omega_{4k}^{\Spin\text{-}\Z/8\Z}\times\Omega_{2\ell+1}^{\Spin\text{-}\Z/8\Z}\longrightarrow
\Omega_{4k+2\ell+1}^{\Spin\text{-}\Z/8\Z}.
\end{equation}
\end{subequations}
For products of the first type, both bordism groups are free Abelian, and so one can compute products by tensoring
with $\mathbb Q$ and using rational characteristic classes including Pontrjagin classes.\footnote{In high enough
dimensions, one has to also use analogues of Pontrjagin classes in $\mathit{KO}$-theory, which is a theorem of
Anderson-Brown-Peterson~\cite{ABP66}. See Freed-Hopkins~\cite[Appendix B]{FH21} for an example computation with
these characteristic classes.}

For products of type~\eqref{second_spinZe_prod}, Botvinnik-Gilkey~\cite{BG97} show that
$\Omega_{4k+2\ell+1}^{\Spin\text{-}\Z/8\Z}$ is detected by $\eta$-invariants, so one can compute the product
structure in Spin-$\Z/8\Z$ bordism by computing $\eta$-invariants of product manifolds, aided by decomposition
results of the form $\eta(M\times N) = \mathrm{Index}(M)\eta(N)$.

Most of the $\Omega_\ast^\Spin$-algebra structure on Spin-$\Z/8\Z$ bordism is determined by the action of
$\Ext(\Z/2\Z)$ on the $E_\infty$-page corresponding to Spin-$\Z/8\Z$ bordism, together with the extensions we found.
However, there can also be extensions involving not just multiplication by $2^m$, but also products with various
elements of $\Omega_\ast^\Spin$. For example, $\eta = [S^1_p]$ corresponds to
$h_1\in\Ext(\Z/2\Z)$, but there can be products of the form $y = \eta x$ such that, if $\overline x$ and $\overline
y$ are the images of $x$ and $y$ in the $E_\infty$-page, $h_1\overline x = 0$.

This happens in Spin-$\Z/8\Z$ bordism, and in the lowest degree possible! Looking at \cref{ThomZ4Einfty}, bottom, $h_1$ acts
trivially on the $0$-line, suggesting that $S^1_p$ bounds as a Spin-$\Z/8\Z$ manifold, but this is false!
Recall from the proof of \cref{Z4_5_extn} that
$\eta\colon\Omega_4^{\Spin\text{-}\Z/8\Z}\to\Omega_5^{\Spin\text{-}\Z/8\Z}$ is nonzero, and sends the generator to
$(16, 0)\in \Omega_5^{\Spin\text{-}\Z/8\Z}\cong (\Z/32\Z) \oplus (\Z/2\Z)$. The only way for this to be compatible with the
multiplication of classes in degrees $0$, $4$, and $8$ is for $\eta\colon\Omega_{4k}^{\Spin\text{-}\Z/8\Z}\to
\Omega_{4k+1}^{\Spin\text{-}\Z/8\Z}$ to always be nonzero. With $k = 0$, we learn that the Spin-$\Z/8\Z$ bordism class
of $S^1_p$ is non-trivial and is divisible by $4$.\footnote{There is also a more direct proof by studying
the image of $[S^1_p]\in\Omega_1^\Spin$ under the map of Atiyah-Hirzebruch spectral sequences induced
by the unit map $\mathbb S \to (B\Z/4\Z)^{\rho-2}$.} We display all of the hidden extensions that we have unearthed
in \cref{spin_Z8_with_hidden_extensions}.
\begin{figure}[h!]
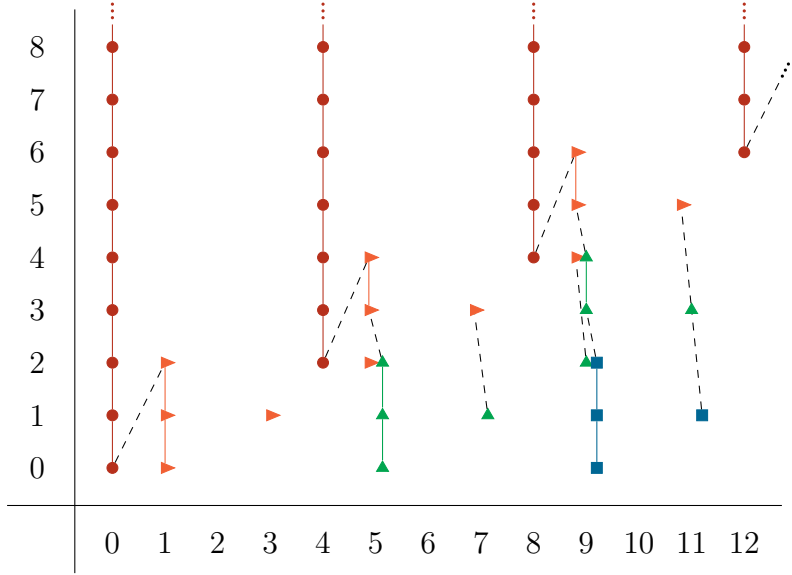

\centering
\begin{sseqdata}[name=hiddenXtns, classes=fill, xrange={0}{12}, yrange={0}{8}, scale=0.7, Adams grading]
\begin{scope}[BrickRed]
	\class(0, 0)\AdamsTower{}
	\class(4, 2)\AdamsTower{}
	\class(8, 4)\AdamsTower{}
	\class(12, 6)\AdamsTower{}
\end{scope}
\begin{scope}[RedOrange, isosceles triangle]
	\class(1, 0)
	\class(1, 1)\structline
	\class(1, 2)\structline
	\class(3, 1)
	\class(5, 2)
	\class(5, 3)
	\class(5, 4)\structline
	\class(7, 3)
	\class(9, 4)
	\class(9, 5)
	\class(9, 6)\structline
	\class(11, 5)
\end{scope}
\begin{scope}[fill=none, draw=none]
	\class(5, 0)
	\class(5, 1)
	\class(5, 3)
	\class(5, 4)
	\class(7, 1)
	\class(7, 3)
	\class(13, 8)
	\class(9, 0)
	\class(9, 0)
	\class(9, 1)
	\class(9, 1)
	\class(9, 2)
	\class(9, 3)
	\class(11, 1)
	\class(11, 1)
	\class(11, 3)
	\class(11, 5)
	\class(11, 5)
	\class(9, 5)
	\class(9, 5)
	\class(9, 6)
	\class(9, 6)
\end{scope}
\begin{scope}[Green, regular polygon,regular polygon sides=3, minimum width=1ex]
	\class(5, 0)
	\class(5, 1)\structline
	\class(5, 2)\structline
	\class(7, 1)
	\class(9, 2)
	\class(9, 3)
	\class(9, 4)\structline
	\class(11, 3)
\end{scope}
\begin{scope}[draw=none, fill=none]
	\class(9, 3)
	\class(9, 4)
	\class(11, 3)
\end{scope}
\begin{scope}[MidnightBlue, rectangle]
	\class(9, 0)
	\class(9, 1)\structline
	\class(9, 2)\structline
	\class(11, 1)
\end{scope}
\begin{scope}[dashed]
	\structline(0, 0)(1, 2)
	\structline(4, 2)(5, 4)
	\structline(5, 2, -1)(5, 3, 1)
	\structline(7, 1, -1)(7, 3, 1)
	\structline(12, 6)(13, 8)
	\structline(8, 4)(9, 6, 1)
	\structline(9, 2, 2)(9, 4, 1)
	\structline(9, 2, -1)(9, 3, 2)
	\structline(9, 4, 2)(9, 5, 1)
	\structline(11, 1, -1)(11, 3, 2)
	\structline(11, 3, 2)(11, 5, 1)
\end{scope}
\end{sseqdata}
\printpage[name=hiddenXtns, page=2]
\caption{Adding hidden extensions to the $E_\infty$-page for $\ko_*\big((B\Z/4\Z)^{\rho-2}\big)$; compare
\cref{ThomZ4Einfty}, bottom. In this figure, all lines that do not shift the topological degree ($x$-coordinate) represent
multiplication by $2$ in Spin-$\Z/8\Z$ bordism, and all lines that increase the topological degree by $1$ represent
multiplication by $[S_p^1]$. The dashed lines are hidden extensions that we cannot see from the Adams spectral
sequence.}
\label{spin_Z8_with_hidden_extensions}
\end{figure}

A similar story holds for Spin-$\Z/2k\Z$ bordism for all even $k$: the class of $S^1_p$ is nonzero, and is
divisible by $k$. When $k\equiv 0\bmod 4$, this is a hidden multiplicative extension; when $k\equiv 2\bmod 4$, this
extension comes from an $h_1$-action on the $E_\infty$-page.\footnote{This $h_1$-action appears in~\cite[Section
2]{GMM68} but was left out by~\cite[Figure 7.4]{Cam17} and \cite[Figures 25 and 26]{BC18}.  To see it, compute the
long exact sequence in Ext associated to the short exact sequence of $\cA(1)$-modules $0\to \Sigma R_0\to R_1\to
\Z/2\Z\to 0$, where $R_0$ and $R_1$ are as defined in~\cite[Section 4.6]{BC18}. Campbell~\cite[Section 7.8]{Cam17} shows that
the cohomology of the Thom spectrum corresponding to Spin-$\Z/4\Z$ bordism is isomorphic to $R_1$.}

\section{Computation of $\Omega_*^{\Spin\text{-}\GL^+(2, \Z)} (\pt)$}
	\label{gl_spin}

The full duality group of type IIB string theory is the Pin\textsuperscript{$+$} cover of $\GL(2, \Z)$, which we denote by $\GL^+(2, \Z)$. Thus
$\GL^+(2, \Z)$ contains a canonical $\Z/2\Z$ subgroup, the kernel of the quotient $\GL^+(2, \Z)\to\GL(2, \Z)$. The
nonzero element of this $\Z/2\Z$ subgroup acts by fermion parity, so the spacetime symmetry type is
$\Spin\text{-}\GL^+(2, \Z)\coloneqq \Spin\times_{\Z/2\Z}\GL^+(2, \Z)$. In this section, we determine the bordism
groups and their generators for this symmetry type in dimensions $11$ and below; just like for $\Spin\times \SL(2,
\Z)$ and Spin-$\Mp(2, \Z)$, we make heavy use of an amalgamated product decomposition for $\GL^+(2, \Z)$, which we
prove in Appendix~\ref{app:groups}:
\begin{equation}
\label{glp_amalg}
	\GL^+(2, \Z)\overset\cong\longrightarrow D_{16} *_{D_8} D_{24}.
\end{equation}
We compute at odd primes in Section \ref{odd_primary_glplus} using the Atiyah-Hirzebruch spectral sequence and at $p = 2$
in Section \ref{gl_p_at_2} using the Adams spectral sequence. In Section \ref{ss:gl2_gens} we determine generators of the
bordism groups we calculate; this is the lengthiest part of the calculation. We identify families of generators
coming from Spin-$\Mp(2, \Z)$ bordism, Spin bordism of $B\Z/2\Z$, and Spin\text{-}$D_{8}$ bordism in
Sections \ref{sss:mp2}, \ref{Z2_to_D16}, and \ref{arcana} respectively; after that we still have four generators left to
find, and we construct each one in turn in Sections \ref{8d_red}, \ref{9d_yellow}, \ref{7d_green}, and
\ref{7d_orange}.  Finally, in Section \ref{mult_str_D16}, we discuss multiplicative structures on Spin-$D_{16}$ bordism.

\subsection{Working at odd primes}
\label{odd_primary_glplus}

Because the map $\Spin\times_{\Z/2\Z}\GL^+(2, \Z)\to\SSO\times\GL(2, \Z)$ is a double cover, the map
$B\big(\Spin\text{-}\GL^+(2, \Z)\big)\to B\SSO\times B\GL(2, \Z)$ is a fibration with fiber $B\Z/2\Z$. This implies
that if $p$ is an odd prime, the forgetful map $\Omega_*^{\Spin\text{-}\GL^+(2,
\Z)}\to\Omega_*^\SSO\big(B\GL(2, \Z)\big)$ is a $p$-local isomorphism; the proof is analogous to the argument we
gave in Section \ref{subsec:splitodd} that $\Omega_*^\Spin\to\Omega_*^\SSO$ is an odd-primary equivalence.

We will therefore compute $\Omega_*^\SSO\big(B\GL(2,
\Z)\big)$ at the prime $3$; at other odd primes, the map $B\GL(2, \Z)\to\pt$ induces an isomorphism on $\Z_{(p)}$ cohomology,
hence also on $p$-local bordism.\footnote{To see this claim about cohomology, use the Mayer-Vietoris sequence
associated to the amalgamated product description of $\GL(2, \Z)$ given in~\eqref{GL_amalgamation}, together with
the fact that if $G$ is a finite group and $p$ does not divide the order of $G$, then $H^*(BG;\Z/p\Z)$ vanishes in
positive degrees~\cite[Corollary 10.2]{Bro82}. Once we know the map is an isomorphism on
$\Z_{(p)}$ cohomology, the Atiyah-Hirzebruch spectral sequence also allows us to conclude it for $p$-local oriented
bordism.}
In~\eqref{GL_amalgamation}, we factored $\GL(2, \Z)$ as an amalgamated product. Using this and~\eqref{MV_bar}, we
can show that the map
$\Omega_*^\SSO(BD_{12}) \to \Omega_*^\SSO\big(B\GL(2, \Z)\big)$ is a $3$-local isomorphism, with essentially the
same proof as \cref{sl2_at_2}. There is an isomorphism
$D_{12}\cong\Z/2\Z\times D_6$;\footnote{Inscribe the Star of David in a hexagon.}
since the mod $3$ cohomology of $B\Z/2\Z$ vanishes in positive degrees, $D_6\inj D_{12}$ induces an odd-primary equivalence on
classifying spaces. Thus we should compute $\Omega_*^\SSO(BD_6)$. Since $\abs{D_6}$ is only divisible by $2$
and $3$, this is trivial at primes $5$ and above, so we focus on $3$ as usual.

As we discussed in Section \ref{sl2_odd_primes} (see especially~\eqref{BP_SO_decomp}), for $k\le 15$ there is a $3$-local
isomorphism
\begin{equation}
	\Omega_k^\SSO(BD_6)\otimes\Z_{(3)} \overset\cong\longrightarrow \BP_k(BD_6)\oplus\BP_{k-8}(BD_6) \oplus
	\BP_{k-12}(BD_6) \,.
\end{equation}
\begin{thm}
\label{the_thm}
\begin{equation}
        \widetilde\BP_k(BD_6) \cong \begin{cases}
                \Z/3\Z, & k = 3\\
                \Z/9\Z, & k = 7\\
                \Z/27\Z, & k = 11\\
                0, & \text{\rm all other $k\le 11$.}
        \end{cases}\end{equation}
\end{thm}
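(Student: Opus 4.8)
The plan is to compute $\widetilde{\BP}_*(BD_6)$ at the prime $3$ via the Atiyah--Hirzebruch spectral sequence (AHSS)
\begin{equation}
	E^2_{p,q} = H_p\big(BD_6; \BP_q(\pt)\big) \Longrightarrow \BP_{p+q}(BD_6),
\end{equation}
working $3$-locally throughout (since $\widetilde{\BP}_*(BD_6)$ is $3$-torsion, this loses nothing). First I would record the input: $\BP_*(\pt) = \Z_{(3)}[v_1, v_2, \dots]$ with $\abs{v_1} = 4$, $\abs{v_2} = 16$, so in the range $* \le 11$ the only nonzero coefficient groups are $\BP_0 = \Z_{(3)}$ and $\BP_4 = \Z_{(3)}\langle v_1\rangle$. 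Next I would compute the mod-$3$ homology of $BD_6$. Since $D_6 \cong \Z/3\Z \rtimes \Z/2\Z$ with the $\Z/2\Z$ acting by inversion, a transfer/averaging argument identifies $H_*(BD_6;\Z/3\Z)$ with the $\Z/2\Z$-invariants of $H_*(B\Z/3\Z;\Z/3\Z)$; because the generator of $\Z/2\Z$ acts by $-1$ on the degree-$1$ generator, the invariants are concentrated in degrees $\equiv 0, 3 \pmod 4$, giving $\widetilde H_k(BD_6;\Z/3\Z) = \Z/3\Z$ for $k \in \{3, 4, 7, 8, 11, \dots\}$ and $0$ otherwise. (Integrally, $\widetilde H_k(BD_6;\Z) \cong \Z/3\Z$ for $k \equiv 3 \pmod 4$ and $0$ otherwise, consistent with $\widetilde\Omega_*^\SSO(BD_6)\otimes\Z_{(3)}$ from~\eqref{SO_BZ3} via the $D_6 \inj D_{12} \hookrightarrow \GL(2,\Z)$ route and Section~\ref{sl2_odd_primes}.)

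With this, the $E^2$-page of the AHSS in total degrees $\le 11$ has potentially nonzero entries only along $q = 0$ (from $\widetilde H_p(BD_6;\Z)$ in $p = 3, 7, 11$) and along $q = 4$ (from $\widetilde H_p(BD_6;\BP_4) \cong \Z/3\Z$ in $p = 3, 7$, contributing to total degrees $7$ and $11$). So I would then need to handle the differentials and extensions relating the $q = 0$ and $q = 4$ lines. The differentials out of the $q=0$ line that could hit the $q = 4$ line are $d_5$-type; I would argue these vanish, most cleanly by comparison with oriented bordism: the edge map $\BP_*(BD_6) \to H_*(BD_6;\Z_{(3)})$ and the known answer for $\widetilde\Omega_*^\SSO(BD_6)\otimes\Z_{(3)}$ in~\eqref{SO_BZ3} (which has a $\Z/3\Z$ in degree $9$ coming from the $\BP_{k-8}$ summand, and otherwise matches a single $\BP$ copy in degrees $\le 11$) force the AHSS for $\BP$ itself to collapse in this range with no differentials and only the extensions recorded below. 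Concretely: $\widetilde\Omega_3^\SSO = \widetilde\BP_3$, $\widetilde\Omega_7^\SSO = \widetilde\BP_7$, and $\widetilde\Omega_{11}^\SSO = \widetilde\BP_{11}$ in the relevant range, pinning down $\widetilde\BP_k$ as $\Z/3\Z, \Z/9\Z, \Z/27\Z$ respectively, provided the group extensions resolve non-split.

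The extension questions are the main obstacle. In total degree $7$ the AHSS has $E^\infty_{7,0} = \Z/3\Z$ (from $\widetilde H_7$) and $E^\infty_{3,4} = \Z/3\Z$ (from $v_1 \cdot \widetilde H_3$), and we must show $\widetilde\BP_7(BD_6) \cong \Z/9\Z$ rather than $(\Z/3\Z)^{\oplus 2}$; similarly in degree $11$ we have three $\Z/3\Z$'s ($E^\infty_{11,0}$, $E^\infty_{7,4}$ from $v_1$, and—waiting—no, $v_1^2$ would be degree $8 + 3 = 11$ only if $\widetilde H_3$ survives with $v_1^2$, which it does), so $\widetilde\BP_{11}$ is an order-$27$ $3$-group that we must show is cyclic. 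The cleanest way I know to settle this is to use \cref{BBDG}: Bahri--Bendersky--Davis--Gilkey's theorem gives $\BP_*(B\Z/3\Z) \cong \ell_*(B\Z/3\Z) \otimes_{\Z_{(3)}} \Z_{(3)}[v_2, v_3, \dots]$, and the transfer argument above (which is compatible with the $\BP$-module structure, hence with $v_1$-multiplication) realizes $\widetilde\BP_*(BD_6)$ as the $\Z/2\Z$-invariant summand of $\widetilde\BP_*(B\Z/3\Z)$. Then the cyclicity of $\widetilde{\ku}_*(B\Z/3\Z)$ (or rather $\widetilde\ell_*(B\Z/3\Z) \cong \Z/3^{\lceil k/4\rceil + (1\ \mathrm{or}\ 0)}\Z$, which is cyclic, from Hashimoto's theorem via~\eqref{Adams_decomp}) transports to show $\widetilde\BP_k(B\Z/3\Z)$ is cyclic in the range $k \le 15$ (below $\abs{v_2}$), and its invariant summand inherits cyclicity with the orders $\Z/3\Z, \Z/9\Z, \Z/27\Z$ in degrees $3, 7, 11$. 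This sidesteps the AHSS extension problem entirely, at the cost of invoking the BBDG/Hashimoto machinery already set up in Section~\ref{sl2_odd_primes}; I would present the AHSS computation for the $E^2$-page and vanishing of differentials, then close the extension questions by this comparison with $B\Z/3\Z$.
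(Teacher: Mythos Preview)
Your overall strategy is sound and arrives at the right answer, but two points deserve comment.

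First, your treatment of the differentials is over-engineered and slightly misfires. You invoke ``comparison with oriented bordism'' and cite~\eqref{SO_BZ3}, but~\eqref{SO_BZ3} records $\widetilde\Omega_*^\SSO(B\Z/3\Z)$, not $\widetilde\Omega_*^\SSO(BD_6)$; and the relation $\widetilde\Omega_{11}^\SSO = \widetilde\BP_{11}$ you write down is false (there is an extra $\widetilde\BP_3$ summand from the decomposition~\eqref{BP_SO_decomp}). The paper's argument is much simpler: all differentials in the AHSS for $BD_6$ vanish for pure degree reasons. Nonzero entries live only at $(p,q)$ with $q\equiv 0\pmod 4$ and $p\equiv 3\pmod 4$ (or $p=0$); a differential $d_r$ shifts $q$ by $r-1$ and $p$ by $-r$, and no choice of $r\ge 2$ lands you back on a nonzero entry. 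No comparison with any other theory is needed. (You also initially omit the $q=8$ row, though you recover it later when you count three $\Z/3\Z$'s in degree $11$.)

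Second, your resolution of the extension problems via transfer is a genuine alternative to the paper's route. The paper compares AHSSes along the inclusion $\Z/3\Z\inj D_6$: both spectral sequences collapse, and the induced map on $E^2$-pages is an isomorphism in total degrees $7$ and $11$ (since in those degrees the nonzero entries for $B\Z/3\Z$ sit only at $p\equiv 3\pmod 4$, where the map $H_p(B\Z/3\Z)\to H_p(BD_6)$ is an iso), hence the map $\widetilde\BP_k(B\Z/3\Z)\to\widetilde\BP_k(BD_6)$ is an isomorphism for $k=7,11$, and one reads off the answer from Section~\ref{sl2_odd_primes}. Your approach instead identifies $\widetilde\BP_*(BD_6)$ with the $\Z/2\Z$-invariants in $\widetilde\BP_*(B\Z/3\Z)$ and uses that the latter groups are cyclic (via BBDG/Hashimoto) together with the order count from the collapsed AHSS for $BD_6$. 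This works too, and the paper in fact mentions essentially this alternative (phrased as coinvariants via the Leray--Serre--Atiyah--Hirzebruch spectral sequence) in the Remark following the proof. The paper's route is marginally cleaner in that it avoids analyzing the $\Z/2\Z$-action at all; yours has the virtue of making the relation to $B\Z/3\Z$ more structural.
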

\begin{proof}
We compute $\widetilde{\BP}_*(BD_6)$ using the Atiyah-Hirzebruch spectral sequence (AHSS)\footnote{Compared with the Adams spectral sequence, we indicate the page by a superscript $E^{2}_{\ast,\ast}$ to emphasize that this sequence arises from homology.}
\begin{equation}
        E^2_{p,q} = H_p(BD_6; \BP_q) \Longrightarrow \BP_{p+q}(BD_6).
\end{equation}
Handel~\cite{Han93} computes $H^*(BD_{2n};\Z)$, and his result together with the universal coefficient theorem
tells us $H_*(BD_6;\Z_{(3)})$:
\begin{equation}
        H_k(BD_6;\Z_{(3)}) \cong \begin{cases}
                \Z_{(3)}, & k = 0\\
                \Z/3\Z, & k\equiv 3\bmod 4\\
                0, &\text{otherwise.}
        \end{cases}
\end{equation}
Thus we can draw the $E^2$-page of the Atiyah-Hirzebruch spectral sequence in \cref{AHSS_D6}. All differentials
vanish by degree reasons: in the homological AHSS, differentials go up and to the left, and decrease total degree
by $1$, so all such differentials are either to or from a zero group.

\begin{figure}[h!]
\centering
\begin{sseqdata}[name=justAHSS, page=2, homological Serre grading, xrange={0}{12}, yrange={0}{12},
classes={draw=none}, xscale=0.5, yscale=0.5,
x label = {$\displaystyle{p\uparrow \atop q\rightarrow}$},
x label style = {font = \small, xshift = -22.5ex, yshift=5.5ex}
]
        \foreach \y in {0, 4, 8, 12} {
                \class["\square"](0, \y)
                \foreach \x in {3, 7, 11} {
                        \class["\bullet"](\x, \y)
                }
        }
\end{sseqdata}
\printpage[name=justAHSS, page=2]
\caption{The $E^2$-page of the Atiyah-Hirzebruch spectral sequence computing $\Omega_*^\SSO(BD_6)\otimes\Z_{(3)}$.
Here $\square\coloneqq \Z_{(3)}$ and $\bullet\coloneqq\Z/3\Z \,$.}
\label{AHSS_D6}
\end{figure}

For reduced $\BP$-homology, we quotient out by the $\Z_{(3)}$s along the vertical line $p = 0$. So we have a
$\Z/3\Z$ in degree $3$ and two extension questions in degrees $7$ and $11$, and the remaining reduced
$\BP$-homology groups in the range we care about vanish.

The inclusion $\Z/3\Z\inj D_6$ induces a map of Atiyah-Hirzebruch spectral sequences for
$\BP_*(B\Z/3\Z)\to\BP_*(BD_6)$.  The homology of $B\Z/3\Z$ is $\Z$ in degree $0$, $\Z/3\Z$ in all odd degrees, and
$0$ in positive even degrees, and the map of spectral sequences is an isomorphism on the $E^2$-page in total
degrees $3$, $7$, and $11$.  For the same reason as for $BD_6$, this spectral sequence collapses for $B\Z/3\Z$;
therefore the map on the $E^\infty$-pages of the AHSSes is an isomorphism in degrees $p+q = 7$ and $p+q = 11$,
which implies the map $\BP_k(B\Z/3\Z)\to\BP_k(BD_6)$ is an isomorphism for $k = 7,11$.

In Section \ref{sl2_odd_primes}, we saw that $\BP_7(B\Z/3\Z)\cong\Z/9\Z$ and $\BP_{11}(B\Z/3\Z)\cong\Z/27\Z$, resolving
the extension question.
\end{proof}
\begin{rem}
Another way to make this calculation is to use the ``Leray-Serre-Atiyah-Hirzebruch spectral sequence''
\begin{equation}
\label{LSAH}
    E^2_{p,q} = H_p \big(B\Z/2\Z; \BP_q(B\Z/3\Z)\big) \Longrightarrow \BP_{p+q}(BD_6) \,.
\end{equation}
We found working through this variant to be a nice exercise --- the spectral sequence collapses on the $E^2$-page
and yields an isomorphism $\BP_*(B\Z/n\Z)_{\Z/2\Z}\cong\BP_*(BD_6)$.\footnote{There are at
least two additional, and quite different, proofs of this fact due to Kamata-Minami~\cite[Theorem 1.3]{KM73} and
Kamata~\cite[Theorem 1.3]{Kam74}.} Then we directly compute this action and see what the coinvariants are.
\end{rem}
Surjectivity of the map $\Omega_*^\SSO(B\Z/3\Z)\to\Omega_*^\SSO(BD_6)$ in the range we care about implies that the
generators of the $3$-torsion subgroup of $\Omega_*^{\Spin\text{-}\GL^+(2, \Z)}$ in this range can be
chosen to be the generators we found for $\Omega_k^\SSO(B\Z/3\Z)$ for $k = 3$, $7$, and $11$. We produced these
generators in Section \ref{sl2_gens}: $L_3^3$ in dimension $3$, $L_3^7$ in dimension $7$, and $L_3^{11}$ and $\HP^2\times
L_3^3$ in dimension $11$.

\subsection{Adams spectral sequence at $p = 2$}
\label{gl_p_at_2}
Using the amalgamated product descriptions of $\Mp(2, \Z)$ and $\GL^+(2, \Z)$ in~\eqref{mp_amalg}
and~\eqref{eq:GL+group}, there is a commutative diagram
\begin{equation}
\label{2_prim_GL_comm}
\begin{gathered}
\begin{tikzcd}
	{\Z/8\Z} & {\Mp(2, \Z)} \\
	{D_{16}} & {\GL^+(2, \Z),}
	\arrow[from=1-2, to=2-2]
	\arrow[from=1-1, to=2-1]
	\arrow["a", from=1-1, to=1-2]
	\arrow["b", from=2-1, to=2-2]
\end{tikzcd}
\end{gathered}
\end{equation}
where the vertical arrows are the standard inclusions, $a$ is the Spin cover of the standard inclusion
$\Z/4\Z\hookrightarrow \SL(2, \Z)$, and $b$ is the $\Pin^+$ cover of the standard inclusion
$D_8\hookrightarrow\GL(2, \Z)$. Now apply $B(\Spin\times_{\Z/2\Z}\bl)$ to~\eqref{2_prim_GL_comm}; then the
horizontal arrows induce equivalences on mod $2$ cohomology, which follows from a Mayer-Vietoris argument similar
to \cref{sl2_at_2,sl2_at_3_lem,mp2_at_3,mp2_at_2}, using the amalgamated product descriptions of $\Mp(2, \Z)$ and
$\GL^+(2, \Z)$ referenced above. Thus, in particular:
\begin{cor}
The inclusion $b$ in~\eqref{2_prim_GL_comm} induces a map $\Omega_*^{\Spin\times_{\Z/2\Z} D_{16}}\to
\Omega_*^{\Spin\text{-}\GL^+(2, \Z)}$, and this map is a $2$-local equivalence, where $\Z/2\Z\subset D_{16}$ is
generated by a half turn.
\end{cor}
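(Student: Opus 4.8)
\emph{Strategy.} The plan is to reduce the statement to the mod $2$ cohomology equivalence asserted just above and then run the standard Thom-spectrum argument, exactly as in the proofs of \cref{mp2_at_3,mp2_at_2}. Applying $B(\Spin\times_{\Z/2\Z}(\bl))$ to~\eqref{2_prim_GL_comm} and forming Madsen--Tillman spectra, $b$ induces a map $\mathit{MT}(\Spin\times_{\Z/2\Z}D_{16})\to \mathit{MT}(\Spin\times_{\Z/2\Z}\GL^+(2,\Z))$. By the Thom isomorphism~\eqref{Thom_iso} with $\Z/2\Z$ coefficients --- for which the orientation local system is trivial, so there is no twisted-versus-untwisted subtlety --- a mod $2$ cohomology equivalence of the classifying spaces $B(\Spin\times_{\Z/2\Z}(\bl))$ is equivalent to a mod $2$ cohomology equivalence of these Thom spectra. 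The $2$-local stable Whitehead theorem~\cite[Chapitre III, Théorème 3]{Ser53} then upgrades this to a $2$-local stable equivalence, hence to an isomorphism on $2$-local stable homotopy groups, which the Pontrjagin--Thom theorem identifies with the two bordism groups in the statement. So the whole statement comes down to the mod $2$ cohomology claim.

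\emph{The Mayer--Vietoris step.} For the mod $2$ cohomology claim I would argue as in \cref{sl2_at_2,sl2_at_3_lem}. Write $D_{16}\cong D_{16}*_{D_8}D_8$ trivially and recall $\GL^+(2,\Z)\cong D_{16}*_{D_8}D_{24}$ from~\eqref{glp_amalg}; under these descriptions $b$ is the identity on the $D_{16}$ factor, the standard inclusion $D_8\hookrightarrow D_{24}$ on the second factor, and the identity on the amalgamating $D_8$. Since the central $\Z/2\Z$ lies inside $D_8$, the functor $\Spin\times_{\Z/2\Z}(\bl)$ carries these amalgamated products of groups to amalgamated products of topological groups, so $B(\Spin\times_{\Z/2\Z}\GL^+(2,\Z))$ is the homotopy pushout of $B(\Spin\times_{\Z/2\Z}D_{16})$ and $B(\Spin\times_{\Z/2\Z}D_{24})$ along $B(\Spin\times_{\Z/2\Z}D_8)$, and likewise for the associated Thom spectra. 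The cohomological analogue of the naturality statement \cref{natural_MV} then gives a commuting ladder of Mayer--Vietoris sequences in mod $2$ cohomology, and the five lemma reduces the claim to showing that $D_8\hookrightarrow D_{24}$ induces an isomorphism on mod $2$ cohomology.

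\emph{The last reduction and the main obstacle.} The order-$3$ subgroup of the rotation subgroup $\Z/12\Z\subset D_{24}$ is normal, with quotient $D_{24}/(\Z/3\Z)\cong D_8$; since it has odd order, the quotient map induces an isomorphism $H^*(BD_8;\Z/2\Z)\overset{\cong}{\longrightarrow} H^*(BD_{24};\Z/2\Z)$, e.g.\ via the Serre spectral sequence of $B\Z/3\Z\to BD_{24}\to BD_8$. Because $\abs{\Z/3\Z}$ is coprime to $\abs{D_8}$, the composite $D_8\hookrightarrow D_{24}\twoheadrightarrow D_{24}/(\Z/3\Z)\cong D_8$ is an isomorphism of groups, so the corresponding composite on mod $2$ cohomology is an isomorphism; combined with the previous sentence, this forces $D_8\hookrightarrow D_{24}$ itself to be a mod $2$ cohomology isomorphism, completing the argument. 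The only point that genuinely requires care is verifying that $\Spin\times_{\Z/2\Z}(\bl)$, and then the passage to Thom spectra, really does preserve the homotopy-pushout structure so that the Mayer--Vietoris ladder and five-lemma argument apply; this uses crucially that the common $\Z/2\Z$ sits inside the amalgamating subgroup $D_8$, and it is the same mechanism that powered \cref{sl2_at_2,mp2_at_3,mp2_at_2}. Beyond that bookkeeping I expect no real difficulty.
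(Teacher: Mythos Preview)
Your proposal is correct and follows essentially the same approach as the paper: the paper states the Corollary as an immediate consequence of the preceding paragraph, which asserts that applying $B(\Spin\times_{\Z/2\Z}\bl)$ to~\eqref{2_prim_GL_comm} gives mod $2$ cohomology equivalences on the horizontal arrows ``from a Mayer-Vietoris argument similar to \cref{sl2_at_2,sl2_at_3_lem,mp2_at_3,mp2_at_2}.'' You have simply unpacked that reference in detail, including the reduction to $D_8\hookrightarrow D_{24}$ being a mod $2$ cohomology isomorphism and the use of the normal $\Z/3\Z$ to verify this last step.
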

Another consequence of~\eqref{2_prim_GL_comm} is that this equivalence is compatible with the
$2$-primary equivalences from Spin-$\Z/8\Z$ bordism
to Spin-$\Mp(2, \Z)$ bordism from \cref{mp2_at_2}. We will use this to compute the $2$-torsion subgroup of the
image of the map $\Omega_k^{\Spin\text{-}\Mp(2, \Z)}\to\Omega_k^{\Spin\text{-}\GL^+(2, \Z)}$ for $k\le 11$ in
Section \ref{sss:mp2}.

First, we need to  shear the Spin-$D_{16}$ symmetry type. To do so, we will need to know the mod $2$ cohomology of
$BD_8$.
\begin{lem}[{Handel~\cite[Theorem 5.5]{Han93}}]
\label{dih_mod_2_coh}
When $k\equiv 0\bmod 4$,
\begin{align}
H^*(BD_{2k};\Z/2\Z)\cong \Z/2\Z[x, y, w]/(xy+y^2)
\end{align} with $\abs x = \abs y = 1$ and $\abs w =
2$. If $V$ is the standard two-dimensional real representation of $D_{2k}$ and $\chi_y$ is the
one-dimensional representation which is $1$ on reflections and $-1$ on a generating rotation, then $x = w_1(V)$, $y
= w_1(\chi_y)$, and $w = w_2(V)$.
\end{lem}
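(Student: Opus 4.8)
The plan is to treat the ring structure and the identification of generators separately. For the ring structure, the statement (for $k\equiv 0\bmod 4$) is exactly the mod~$2$ specialization of Handel's computation of $H^*(BD_{2k};\Z)$ in~\cite[Theorem 5.5]{Han93}, so the first step is simply to quote it and record that, after the universal coefficient theorem, one obtains $\Z/2\Z[x,y,w]/(xy+y^2)$ with $\abs x=\abs y=1$, $\abs w=2$. If one wanted a self-contained derivation, the cleanest route is to write $k=2^ab$ with $b$ odd and $a\ge 2$; then $\langle r^{2^a}\rangle\cong\Z/b\Z$ is a normal subgroup of odd order, so the Lyndon--Hochschild--Serre spectral sequence for $1\to\Z/b\Z\to D_{2k}\to D_{2^{a+1}}\to 1$ with $\Z/2\Z$ coefficients is concentrated on the bottom row and collapses, giving $H^*(BD_{2k};\Z/2\Z)\cong H^*(BD_{2^{a+1}};\Z/2\Z)$ via the quotient map; and $H^*(BD_{2^{a+1}};\Z/2\Z)$ is the classical mod~$2$ cohomology ring of a dihedral $2$-group, which is $\Z/2\Z[x_1,x_2,w]/(x_1x_2)$ for every $a\ge 2$ (e.g.\ by induction on $a$ using the spectral sequence of $1\to\Z/2^a\Z\to D_{2^{a+1}}\to\Z/2\Z\to 1$, or by appeal to Adem--Milgram).

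For the degree-one generators, I would use $H^1(BD_{2k};\Z/2\Z)=\Hom(D_{2k},\Z/2\Z)\cong(\Z/2\Z)^{\oplus2}$ (which holds because $k$ is even), whose three nonzero elements are the first Stiefel--Whitney classes of the three nontrivial one-dimensional real representations. Since $w_1(V)=w_1(\det_\R V)$ is the character nontrivial exactly on reflections and $w_1(\chi_y)$ is the character nontrivial on the generating rotation, these two are distinct and hence form a basis; to see the relation $xy+y^2=0$ with $x=w_1(V)$, $y=w_1(\chi_y)$ --- equivalently that $w_1(\chi_y)\,w_1(\det_\R V\otimes\chi_y)=0$ while $w_1(V)^2\ne 0$ --- one restricts to the two maximal elementary abelian subgroups $\langle s,r^{k/2}\rangle$ and $\langle rs,r^{k/2}\rangle$ (both $\cong(\Z/2\Z)^{\oplus2}$), on which $\chi_y$ or $\det_\R V\otimes\chi_y$ becomes trivial; since these restrictions are jointly injective on $H^*(BD_{2k};\Z/2\Z)$ (Quillen's detection theorem, the cohomology ring being reduced), the vanishing of the product follows, and $w_1(V)^2$ is seen to be nonzero on one of the two subgroups.

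For $w=w_2(V)$, the plan is to show $w_2(V)$ is indecomposable and then fix it exactly. Restricting $V$ to $\langle s,r^{k/2}\rangle\cong(\Z/2\Z)^{\oplus2}$, it splits as a sum of two \emph{distinct} nontrivial characters (since $r^{k/2}$ acts by $-I$ and $s$ by a reflection), so $w_2(V)$ restricts to a product of two distinct degree-one generators of $\Z/2\Z[\alpha,\beta]$, which lies outside the image of the decomposable subspace $H^1(BD_{2k};\Z/2\Z)\cdot H^1(BD_{2k};\Z/2\Z)$. Hence $w_2(V)$ agrees with $w$ modulo decomposables, and a comparison of the restrictions of $w_2(V)$, $w$, $x^2$, $y^2$ to the remaining detecting subgroups (the other $(\Z/2\Z)^{\oplus2}$, and the rotation subgroup $\langle r\rangle$, on which all decomposables vanish) pins down $w_2(V)=w$ after the obvious normalization of which polynomial generator is called $w$.

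The main obstacle is precisely this last normalization and bookkeeping: matching the geometric classes to the choice of generators in Handel's presentation (the ring admits automorphisms, so one must check the classes land in the right places), together with the mild care needed to transfer the $(\Z/2\Z)^{\oplus2}$-restriction computations through the isomorphism $H^*(BD_{2k};\Z/2\Z)\cong H^*(BD_{2^{a+1}};\Z/2\Z)$. None of this is deep, and in practice it is most efficient to cite~\cite[Theorem 5.5]{Han93}, which already records the statement in the form used here.
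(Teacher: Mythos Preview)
The paper does not prove this lemma; it is stated with attribution to Handel and used as a black box for the subsequent calculations (the paper only records the Steenrod squares of the generators immediately afterwards). Your proposal therefore goes considerably beyond what the paper does.

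Your sketch is essentially correct and follows a standard route: pass to the Sylow $2$-subgroup $D_{2^{a+1}}$ via the odd-order normal subgroup, quote the classical mod~$2$ cohomology of a dihedral $2$-group, and then pin down the generators by restriction to the two conjugacy classes of maximal elementary abelian $2$-subgroups together with Quillen detection (legitimate here since the ring is visibly reduced once you have the presentation). One small wording issue: your claim that ``$\chi_y$ or $\det_\R V\otimes\chi_y$ becomes trivial'' on the two subgroups $\langle s,r^{k/2}\rangle$ and $\langle rs,r^{k/2}\rangle$ only works once you observe that $\chi_y$ is $+1$ on one conjugacy class of reflections and $-1$ on the other (since $\chi_y(r^is)=(-1)^i$), so that $\chi_y$ trivializes on the first subgroup while $\det V\otimes\chi_y$ trivializes on the second. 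The lemma's phrase ``$1$ on reflections'' is already a bit loose in this respect; your argument is fine once this is made precise. Your final remark that in practice one simply cites Handel is exactly what the paper does.
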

We also need the Steenrod squares: $\Sq(x) = x+x^2$ and $\Sq(y) = y+y^2$ are forced by the axioms, and $\Sq(w) = w
+ wx + w^2$ follows from the Wu formula~\eqref{wu_formula}.
\begin{lem}
\label{whichcoh}
The cohomology class of the central extension
\begin{equation}
\label{d16_d8_ext}
	\shortexact*{\Z/2\Z}{D_{16}}{D_8}{}
\end{equation}
is equal to $w_2$ of the standard representation $V\colon D_8\to\mathrm{O} (2)$.
\end{lem}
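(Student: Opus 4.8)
The plan is to exhibit $D_{16}$ as a pullback of the $\Pin^+(2)$-extension of $\O(2)$ along the standard representation $V\colon D_8\to\O(2)$, and then to use the classical fact that $w_2$ is the obstruction to $\Pin^+$ structures.

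First I would unwind the definition of $D_{16}$: it is the preimage of $D_8\subset\GL(2,\Z)$ under the $\Pin^+$ cover $\GL^+(2,\Z)\to\GL(2,\Z)$, and this cover is itself pulled back from $\Pin^+(2)\to\O(2)$ along a maximal-compact retraction $\GL(2,\Z)\hookrightarrow\GL(2,\R)\simeq\O(2)$, under which $D_8$ maps by its standard two-dimensional representation $V$. Therefore the central extension~\eqref{d16_d8_ext} is the pullback by $V$ of the central extension $1\to\Z/2\Z\to\Pin^+(2)\to\O(2)\to1$. To make sure this bookkeeping is correct --- that the preimage really is the dihedral group of order $16$ --- I would check on generators: a generating rotation of $D_8$ has order $4$, so it lies in $\SSO(2)$ and (via the double cover $\Spin(2)\to\SSO(2)$) its lifts have order $8$; a reflection has order $2$, and in the $\Pin^+$ (as opposed to $\Pin^-$) convention its lifts square to $+1$, hence remain of order $2$; and the dihedral relation lifts. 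This presents the preimage as $\langle\widetilde U,\widetilde R\mid \widetilde U^8=\widetilde R^2=1,\ \widetilde R\widetilde U\widetilde R^{-1}=\widetilde U^{-1}\rangle=D_{16}$, consistent with our convention that $D_{16}$ is the dihedral group of order $16$.

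Next I would identify the classifying class of $\Pin^+(2)\to\O(2)$ in $H^2(B\O(2);\Z/2\Z)$ with $w_2$ of the tautological bundle. One clean route uses the equivalence (cf.\ the footnote after \cref{w2_twisted_cor}) between a $\Pin^+$ structure on a rank-two bundle $V$ and a Spin structure on $V\oplus3\,\Det(V)$: the obstruction to the latter is $w_2(V\oplus 3\,\Det(V))$, and a Whitney-sum computation modulo $2$, namely $w(V\oplus3\,\Det V)=(1+w_1(V)+w_2(V))(1+w_1(V))^3$ whose degree-two part is $w_2(V)$, shows this equals $w_2(V)$. Applying this to the tautological $\O(2)$-bundle gives that $\Pin^+(2)\to\O(2)$ is classified by $w_2$; pulling back along $V\colon D_8\to\O(2)$ and using naturality of classifying classes of central extensions, the extension~\eqref{d16_d8_ext} is classified by $V^*w_2=w_2(V)$, which is exactly the class $w$ by \cref{dih_mod_2_coh}.

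I expect the main subtlety to be the first step: confirming that the pullback of $\Pin^+(2)$ --- with the $\Pin^+$, not $\Pin^-$, sign --- is the dihedral and not the semidihedral group of order $16$, i.e.\ that reflections lift to involutions. As a backup, should one want a more elementary argument, note that the extension class lies in the three-dimensional space $H^2(BD_8;\Z/2\Z)=\langle x^2,\ xy,\ w\rangle$ (using $y^2=xy$); restriction to the reflection subgroups $\langle R\rangle$ and $\langle RU\rangle$ kills the $x^2$-coefficient (these subgroups lift to copies of $\Z/2\Z$, so the restricted extension splits), and restriction to $\langle U^2\rangle$ detects the $w$-coefficient (the half-turn lifts to an element of order $4$). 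However, restriction to subgroups alone cannot separate $w$ from $w+xy$, because $xy=y^2$ restricts to zero on every proper subgroup of $D_8$; so this elementary approach must in any case be finished off by the bundle-theoretic identification above, which is why I would lead with it.
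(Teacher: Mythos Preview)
Your main argument is correct and takes a genuinely different route from the paper. You identify $D_{16}$ directly as the pullback of $\Pin^+(2)\to\O(2)$ along the standard representation $V$, then invoke the standard fact that this extension is classified by $w_2$; naturality finishes the job. This is clean and conceptual, and your Whitney-sum check that $w_2(V\oplus 3\Det V)=w_2(V)$ is right. The paper instead argues elementarily: it shows the extension splits over every reflection subgroup but not over the rotation subgroup $\Z/4\Z$, then checks by hand that $w=w_2(V)$ is the unique class in $H^2(BD_8;\Z/2\Z)$ with this pattern of restrictions. Your approach buys brevity and makes the Pin$^+$/Pin$^-$ dichotomy transparent (indeed, the paper later uses exactly your style of reasoning to identify the Pin$^-$ cover $Q_{16}$ with class $w+x^2$); the paper's approach is more self-contained and avoids appealing to the bundle-theoretic characterization of $\Pin^+$.

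However, your final paragraph contains a genuine error. You claim that restriction to proper subgroups cannot separate $w$ from $w+xy$ because ``$xy=y^2$ restricts to zero on every proper subgroup of $D_8$.'' This is false. The character $\chi_y$ sends the generating reflection $s$ to $1$ and the generating rotation $r$ to $-1$, so $\chi_y(rs)=-1$; hence on the reflection subgroup $\langle rs\rangle$ both $x$ and $y$ restrict to the generator $a\in H^1(B\Z/2\Z;\Z/2\Z)$, and $xy=y^2$ restricts to $a^2\ne 0$. Combined with your (correct) observation that the extension splits over $\langle rs\rangle$, this restriction \emph{does} kill the $xy$-coefficient. So the elementary subgroup-restriction route you dismissed is in fact complete --- and it is precisely the proof the paper gives.
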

Here and in the proof of the lemma, when we say ``$w_k$ of a representation,'' we mean $w_k$ of its associated
vector bundle.
\begin{proof}
First, observe that the extension~\eqref{d16_d8_ext} is split when pulled back by the inclusion of any reflection
$\Z/2\Z\hookrightarrow D_8$, but non-trivial when pulled back to the rotation subgroup $\Z/4\Z\hookrightarrow D_8$.
This is because the inclusion $\Z/2\Z\hookrightarrow D_8$ defining any reflection lifts to an inclusion
$\Z/2\Z\hookrightarrow D_{16}$ defining another reflection, but the preimage of $\Z/4\Z$ in $D_{16}$ is the
rotation subgroup $\Z/8\Z$, and the quotient $\Z/8\Z\to\Z/4\Z$ does not split.

We will finish by showing $w_2(V)$ is the unique element of $H^2(BD_8;\Z/2\Z)$ which is trivial when pulled back by
the inclusion of a reflection subgroup, but non-trivial when pulled back to $B\Z/4\Z$. Since the classification of
extensions by elements in $H^2$ is natural, this will force the class of~\eqref{d16_d8_ext} to be $w_2(V)$.

To compute pullback of $w_2(V)$ along the inclusion of a subgroup $H\inj D_8$, it suffices to compute $w_2$ of the
vector bundle induced by the restriction of the standard representation of $D_8$ to $H$.
\begin{enumerate}
	\item If $i\colon \Z/2\Z\inj D_8$ is a subgroup generated by a reflection, then the reflection leaves a line
	invariant, so $i^*V\cong\R\oplus \chi$ for some one-dimensional representation $\chi$. Stiefel-Whitney classes
	do not change when one adds a trivial summand, so
	\begin{equation}
		i^*(w_2(V)) = w_2(\R\oplus\chi) = w_2(\chi) = 0,
	\end{equation}
	because $w_2$ vanishes on one-dimensional representations.
	\item The restriction of $V$ to $\Z/4\Z$ is the standard rotation representation of $\Z/4\Z$, and as we
	discussed in \cref{Z4_coh}, the standard rotation representation is not Spin. Therefore $w_2(V)$ is
	nonzero when pulled back to $\Z/4\Z \,$.
\end{enumerate}
Using \cref{dih_mod_2_coh}, the other seven elements of $H^2(BD_8;\Z/2\Z)$ are either $0$ or of the form $aw + bx^2
+cy^2$ with $b\ne 0$ or $c\ne 0$. The zero class is trivial when pulled back to $\Z/4\Z$; if we can show that $x^2$
and $y^2$ are non-trivial when pulled back to some reflection subgroup, then we will have shown that $w$ is the only
class which is trivial when pulled back to reflection subgroups and non-trivial when pulled back to the rotation
subgroup, which is what we wanted to show.

The Whitney sum formula implies $x^2 = w_2 \big(2\Det(V)\big)$, and $\Det(V)$ can be identified with the
representation $D_8\overset{q}{\to}\Z/2\Z = \mathrm O(1)$ given by quotienting by the subgroup of rotations. Thus
$2\Det(V)$ and its characteristic classes pull back from $\Z/2\Z$ by $q$. For any reflection $T$ in $D_8$, the
composition $\Z/2\Z\cong \ang T\inj D_8\overset{q}{\to}\Z/2\Z$ is the identity, so induces the identity map on
characteristic classes, so $w_2(2\Det(V))$ must pull back to a nonzero class.

For $y^2$, the story is subtler. Recall from Appendix \ref{app:embdihedral} that there are two classes of
reflection subgroups in $D_8$; the two embeddings $i_4$ and $\tilde\imath_4$ of $\Z/2\Z$ into $D_8$ (see the
appendix for their definitions) represent the two classes, and do not induce the same map on cohomology.
Specifically, the one-dimensional representation $L$ from \cref{dih_mod_2_coh} is non-trivial when pulled back along
$\tilde\imath_4$, so $w_1$ of the associated line bundle must be nonzero, hence it is the nonzero element of
$H^1(B\Z/2\Z;\Z/2\Z)\cong\Z/2\Z$. Thus the square of that element, which is the pullback of $y^2$ along
$\tilde\imath_4$, is also nonzero.
%
%
%
%
\end{proof}
$V$ is not orientable, so we cannot invoke \cref{w2_twisted_cor} to shear just yet --- instead, consider the
representation $V\oplus 3\Det(V)$. One can check with the Whitney sum formula that $w_1\big(V\oplus 3\Det(V)\big) = 0$ and
$w_2\big(V\oplus 3\Det(V)\big) = w_2(V)$, so using \cref{w2_twisted_cor} we discover:
\begin{cor}
\label{spin_D16_shear}
Spin-$D_{16}$ structures are naturally
equivalent to $\big(BD_8, V\oplus 3\Det(V)\big)$-twisted Spin structures.
\end{cor}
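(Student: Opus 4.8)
The plan is to deduce \cref{spin_D16_shear} directly from \cref{w2_twisted_cor}, which says that $\Spin\times_{\Z/2\Z}\widetilde G$-structures are naturally equivalent to $(BG, (EG)_\rho)$-twisted Spin structures whenever $\rho\colon G\to\SSO(d)$ is a representation with $w_2(\rho)$ equal to the class of the central extension $\widetilde G$. Here I would take $G = D_8$, $\widetilde G = D_{16}$ (the central extension~\eqref{d16_d8_ext}), and I need to produce an \emph{orientation-preserving} representation $\rho$ of $D_8$ whose $w_2$ equals the class of this extension. The standard two-dimensional representation $V\colon D_8\to\O(2)$ has the right $w_2$ by \cref{whichcoh}, but it is not orientable, so it does not land in $\SSO(d)$ and \cref{w2_twisted_cor} does not apply to $V$ itself.

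The key step is therefore to correct the orientability defect of $V$ without disturbing its $w_2$. First I would set $\rho\coloneqq V\oplus 3\Det(V)$, a six-dimensional real representation of $D_8$, and compute its first two Stiefel-Whitney classes using the Whitney sum formula. Writing $t\coloneqq w_1(\Det V)$, one has $w(V) = 1 + t + w_2(V)$ (since $w_1(V) = w_1(\Det V) = t$ and $\Det V$ is a line bundle so $w_2(\Det V) = 0$) and $w(3\Det V) = (1+t)^3 = 1 + t + t^2 + \dots$ over $\Z/2\Z$. Multiplying, $w_1(\rho) = t + t = 0$, so $\rho$ is orientable, i.e.\ it factors through $\SSO(6)$; and the degree-two part gives $w_2(\rho) = w_2(V) + t\cdot t + \binom{3}{2}t^2 = w_2(V) + t^2 + t^2 = w_2(V)$ (since $\binom{3}{2} = 3 \equiv 1$ and $t^2 + t^2 = 0$). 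Thus $w_2(\rho) = w_2(V)$, which by \cref{whichcoh} is exactly the class of the central extension~\eqref{d16_d8_ext}.

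Now I would invoke \cref{w2_twisted_cor} with $G = D_8$, $\widetilde G = D_{16}$, and this $\rho$: the hypotheses are met because $\rho$ is a genuine $\SSO$-valued representation and $w_2(\rho)$ is the class of the extension, so $\Spin\times_{\Z/2\Z} D_{16}$-structures are naturally equivalent to $\big(BD_8, (ED_8)_\rho\big)$-twisted Spin structures, i.e.\ $\big(BD_8, V\oplus 3\Det(V)\big)$-twisted Spin structures, which is the claim. (One should be a little careful that ``$3\Det(V)$'' in the statement denotes $\Det(V)^{\oplus 3}$, the associated bundle of the three-fold direct sum of the determinant representation, consistent with the notation already used in the shearing discussion for $\Pin^+$; this matches $(ED_8)_\rho$ on the nose.)

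I do not expect any genuine obstacle here: this is a short formal consequence of the machinery already set up. The only thing requiring a moment's care is the mod-$2$ binomial bookkeeping in the Whitney sum computation — one must remember that $\binom{3}{1}$ and $\binom{3}{2}$ are both odd, and that one is adding, not just appending, the contributions of the two summands, so the $t$ and $t^2$ terms cancel in pairs. The slightly more conceptual point worth flagging in the writeup is \emph{why} one chooses $3$ copies of $\Det(V)$ rather than, say, $1$ copy: adding a single $\Det(V)$ would fix $w_1$ but is the wrong parity for matching the $\Pin^+$ picture of the preceding ``Physics intuition'' paragraph (a $\Pin^+$ structure on $TM$ is a Spin structure on $TM\oplus\Det(TM)^{\oplus 3}$), and it is this $3$ that makes the Spin-$\GL^+(2,\Z)$ story line up with a $\Pin^+$ structure on a torus fibration as advertised. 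The minimal even correction would also work for the bare shearing statement, but $3$ is the physically meaningful choice and costs nothing extra in the computation.
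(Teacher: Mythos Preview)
Your proof is correct and is exactly the paper's argument: observe that $V$ is not orientable, replace it by $V\oplus 3\Det(V)$, check via the Whitney sum formula that $w_1$ vanishes and $w_2$ is unchanged, then apply \cref{w2_twisted_cor}. One small correction to your closing commentary: a single copy of $\Det(V)$ does \emph{not} work for the bare shearing statement---it gives $w_2 = w + x^2$, which is the class of the $\Pin^-$ cover $Q_{16}$ rather than $D_{16}$, so $3$ is in fact the minimal choice (and there is no ``even correction'' that fixes $w_1$).
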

Therefore, by \cref{shearing_PT}, if $X\coloneqq (BD_8)^{V + 3\Det(V) - 5}$,
the Spin-$D_{16}$ bordism groups are
the Spin bordism groups of $X$.\footnote{The $-5$ is so that
there is no degree shift in the Thom isomorphism and Pontrjagin-Thom theorem; see \cref{Thom_exp}.} In degree $4$, this bordism group and its generator were studied by
Pedrotti~\cite[Theorem 9.0.11]{Ped17}, and bordism for a closely related structure, where $D_8$ is replaced with
$\O (2)$, is studied by Guillou-Marin~\cite{GM80}, Kirby-Taylor~\cite[Section 6]{KT90a}, and Stehouwer~\cite[Section
4]{Ste21}.\footnote{Spin-$\O(2)$ structures, but not their bordism groups, also appear in work of
Han-Huang-Liu-Zhang~\cite[Section 6]{HHLZ22} and Lazaroiu-Shahbazi~\cite{LS22}.}
\begin{lem}
$\ko\ang 2_0(X)\cong\Z/2\Z$ and $\ko\ang 2_1(X)\cong (\Z/2\Z)\oplus (\Z/2\Z)$.
\end{lem}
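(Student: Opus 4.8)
The plan is to compute $\ko\ang 2_k(X)$ for $k=0,1$ by invoking \cref{ko2_is_homology}, which identifies $\ko\ang 2_k(X)\cong H_k(X;\Z/2\Z)$ for $k\le 1$ when $X$ is a connective spectrum. Here $X = (BD_8)^{V + 3\Det(V) - 5}$ is the Thom spectrum of a virtual bundle of rank $0$ over the connective space $BD_8$, so it is connective and \cref{ko2_is_homology} applies. Thus the task reduces to computing $H_0(X;\Z/2\Z)$ and $H_1(X;\Z/2\Z)$.

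First I would use the mod $2$ Thom isomorphism~\eqref{Thom_iso}: since we are working with $\Z/2\Z$ coefficients, the orientation local system is trivial, so $\widetilde H^{*}(X;\Z/2\Z)\cong H^{*}(BD_8;\Z/2\Z)$ as graded vector spaces (the degree shift by $\mathrm{rank}$ is zero because we subtracted off the rank). Dualizing, $\widetilde H_{*}(X;\Z/2\Z)\cong H_{*}(BD_8;\Z/2\Z)$, and adding back the basepoint contribution gives $H_0(X;\Z/2\Z)\cong\Z/2\Z$. For degree $1$, I read off $H^1(BD_8;\Z/2\Z)$ from \cref{dih_mod_2_coh}: the cohomology ring is $\Z/2\Z[x,y,w]/(xy+y^2)$ with $\abs x=\abs y=1$, so $H^1(BD_8;\Z/2\Z)$ is two-dimensional, spanned by $x$ and $y$. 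Hence $H_1(BD_8;\Z/2\Z)\cong(\Z/2\Z)^{\oplus 2}$, and via the Thom isomorphism $H_1(X;\Z/2\Z)\cong(\Z/2\Z)^{\oplus 2}$ as well. Combining with \cref{ko2_is_homology} yields $\ko\ang 2_0(X)\cong\Z/2\Z$ and $\ko\ang 2_1(X)\cong(\Z/2\Z)\oplus(\Z/2\Z)$, as claimed.

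This argument has no real obstacle: the only things to be careful about are (i) confirming that $X$ is connective — which it is, being a Thom spectrum of a rank-$0$ virtual bundle over a connected space, so its lowest nonzero homology is in degree $0$ — and (ii) checking that the Thom isomorphism is applied with the correct degree normalization, which is exactly why the $-5$ (equivalently, subtracting the rank of $V\oplus 3\Det(V)$) was built into the definition of $X$ in \cref{spin_D16_shear} and \cref{Thom_exp}. Everything else is a direct consequence of Handel's computation of $H^*(BD_8;\Z/2\Z)$ recalled in \cref{dih_mod_2_coh}.
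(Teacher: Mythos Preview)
Your proposal is correct and follows essentially the same approach as the paper: invoke \cref{ko2_is_homology} to reduce to mod $2$ homology of $X$, then use the Thom isomorphism together with the description of $H^*(BD_8;\Z/2\Z)$ from \cref{dih_mod_2_coh} (and dualize over $\Z/2\Z$) to read off the answer. One minor quibble: the phrase ``adding back the basepoint contribution'' is unnecessary and slightly off --- for a Thom spectrum, homology is already reduced in the relevant sense, and the Thom isomorphism directly gives $H_*(X;\Z/2\Z)\cong H_*(BD_8;\Z/2\Z)$ --- but this does not affect the result.
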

\begin{proof}
\Cref{ko2_is_homology} reduces this to computing the mod $2$ homology of $X$; then the result follows by passing
\cref{dih_mod_2_coh} through the universal coefficient theorem and the Thom isomorphism.
\end{proof}
\begin{thm}
\label{the_ko_part}
The first several $\ko$-homology groups of $(BD_8)^{V + 3\Det(V)-5}$ are:
\begin{equation}
\begin{alignedat}{2}
	\ko_0\big((BD_8)^{V + 3\Det(V)-5}\big) &\cong \Z\qquad\qquad & \ko_6\big((BD_8)^{V + 3\Det(V)-5}\big) &\cong 0 \\
	\ko_1\big((BD_8)^{V + 3\Det(V)-5}\big) &\cong (\Z/2\Z)^{\oplus 2} & \ko_7\big((BD_8)^{V + 3\Det(V)-5}\big) &\cong
		(\Z/4\Z)\oplus (\Z/2\Z)^{\oplus 3} \\
	\ko_2\big((BD_8)^{V + 3\Det(V)-5}\big) &\cong \Z/2\Z & \ko_8\big((BD_8)^{V + 3\Det(V)-5}\big) &\cong \Z\oplus (\Z/2\Z) \\
	\ko_3\big((BD_8)^{V + 3\Det(V)-5}\big) &\cong (\Z/2\Z)^{\oplus 3} & \ko_9\big((BD_8)^{V + 3\Det(V)-5}\big) &\cong
	(\Z/2\Z)^{\oplus 6} \\
	\ko_4\big((BD_8)^{V + 3\Det(V)-5}\big) &\cong \Z & \ko_{10}\big((BD_8)^{V + 3\Det(V)-5}\big) &\cong (\Z/2\Z)^{\oplus 2} \\
	\ko_5\big((BD_8)^{V + 3\Det(V)-5}\big) &\cong (\Z/2\Z)^{\oplus 2} & \ko_{11}\big((BD_8)^{V + 3\Det(V)-5}\big) &\cong (\Z/8\Z)
	\oplus (\Z/2\Z)^{\oplus 4} \,.
\end{alignedat}
\end{equation}
\end{thm}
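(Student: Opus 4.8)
The plan is to run the Adams spectral sequence for $\ko$-homology of the Thom spectrum $X\coloneqq(BD_8)^{V+3\Det(V)-5}$, exactly as we did for $(B\Z/4\Z)^{\rho-2}$ in \cref{ko_spin_z8}. The first step is to compute the $\cA(1)$-module structure on $H^*(X;\Z/2\Z)$. By \cref{dih_mod_2_coh} and the Thom isomorphism, $H^*(X;\Z/2\Z)$ is a free module of rank one over $H^*(BD_8;\Z/2\Z)\cong\Z/2\Z[x,y,w]/(xy+y^2)$ on the Thom class $U$, and \cref{Steenrod_Thom} computes the Steenrod squares: I need $w_1$ and $w_2$ of $V\oplus 3\Det(V)$, which were recorded just before \cref{spin_D16_shear} to be $0$ and $w=w_2(V)$ respectively, so $\Sq^1(U)=0$ and $\Sq^2(U)=Uw$. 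Combining this with the Steenrod squares $\Sq(x)=x+x^2$, $\Sq(y)=y+y^2$, $\Sq(w)=w+wx+w^2$ on $H^*(BD_8;\Z/2\Z)$ and the Cartan formula, I can write $H^*(X;\Z/2\Z)$ as a direct sum of familiar small $\cA(1)$-modules (expecting mostly copies of $C\eta$, a free $\cA(1)$-summand or two, and perhaps a ``question-mark'' or Joker module) plus a piece concentrated in degrees $\ge 12$ that can be discarded. This decomposition is the key structural input and the main place where a careful bookkeeping of the ring $\Z/2\Z[x,y,w]/(xy+y^2)$ is required; I expect it is the most laborious but not conceptually hard step.

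Next I read off the $E_2$-page as the direct sum of $\Ext_{\cA(1)}$ of the summands found above, using the standard pictures from Beaudry-Campbell~\cite{BC18} for $C\eta$, $\cA(1)$, etc., together with \cref{margolis,Margolis_kills_differentials} to immediately kill all differentials and hidden extensions involving the free $\cA(1)$-summands. For the remaining differentials I expect to use two tools: the May-Milgram variant \cref{MM_ceta} applied to Bocksteins in $H^*(X;\Z/2^r\Z)$ (computed via $H^*(BD_8;\Z/2^r\Z)$ and the Thom isomorphism for $r>1$), exactly as in the Spin-$\Z/8\Z$ computation; and naturality of the Adams spectral sequence along the map $\Omega_*^{\Spin\text{-}\Z/8\Z}(\pt)\to\Omega_*^{\Spin\text{-}D_{16}}(\pt)$ induced by $b$ in~\eqref{2_prim_GL_comm}, transporting the differentials we already determined in \cref{ko_spin_z8} (this is the strategy flagged in Section \ref{ss:differentials}; cf.\ \cref{image_from_spin_Z8,d4_equals_zero}). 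Since we are in topological degrees $\le 12$, there is not much room and I expect these two inputs to pin down the $E_\infty$-page completely.

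Finally I resolve the extension problems. The $\Ext(\Z/2\Z)$-action (\cref{h0_xtn_lemma}) handles the $\Z/4\Z$ in degree $7$ and the $\Z/8\Z$ in degree $11$ via $h_0$-towers, and the $h_1$-actions plus \cref{2eta_lemma} rule out the remaining would-be $\Z/4\Z$'s (forcing the various $(\Z/2\Z)^{\oplus n}$ answers) — the $\eta=[S^1_p]$ classes must be $2$-divisible-free. Where the Adams machinery leaves genuine ambiguity I expect to fall back, as in \cref{Z4_5_extn} and the surrounding discussion, on either the complexification/Bott/realification factorization of multiplication by $2$ (\cref{quater_hidden}) or on $\eta$-invariant computations on the generating manifolds $\RP^{2k-1}$, $\widetilde\RP^{2k-1}$, $X_i$, $W^n_1$, $W^7_2$, $W_{1,8}$ listed in \cref{gl_gens_table}, whose $\eta$-invariants are assembled in Appendix \ref{app:eta}. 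The main obstacle I anticipate is the possibility of a \emph{hidden} extension in degree $7$, $9$, or $11$ — the paper already warns that Spin-$D_{16}$ bordism contains nontrivial hidden extensions — so the delicate part is to isolate exactly which multiplications-by-$2$ are invisible to the $E_\infty$-page and to nail them down by mapping to a $\ku$-homology computation or by a direct bordism-invariant argument; everything else is routine spectral-sequence bookkeeping once the $\cA(1)$-module structure is in hand.
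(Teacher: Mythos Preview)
Your overall strategy matches the paper's, but there are two genuine gaps.

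First, the $\cA(1)$-module structure on $H^*(X;\Z/2\Z)$ contains \emph{no} copies of $C\eta$: the decomposition (\cref{dih_thom_coh}) is built from $R_2$, the Joker $J$, the upside-down question mark $\uQ$, a lone $\Z/2\Z$, and many free $\cA(1)$-summands. Consequently \cref{MM_ceta} does not apply, and the $d_2$s are obtained in the paper purely by naturality from the Spin-$\Z/8\Z$ computation (\cref{dihedral_d2s}). This is recoverable, since you do list naturality as one of your tools.

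The more serious gap is the pair of $d_3$ differentials $E_3^{0,9}\to E_3^{3,11}$ and $E_3^{1,11}\to E_3^{4,13}$ (\cref{nonzero_d3}). Neither of your tools settles these: they are not between $h_0$-towers, so May--Milgram is silent; and in the Spin-$\Z/8\Z$ spectral sequence there is no $d_3$ in this range to transport, so naturality gives nothing. The paper's argument is a proof by contradiction: if the $d_3$ vanished, then the hidden extension already established in $\ko_9\big((B\Z/4\Z)^{\rho-2}\big)$ would push forward under $\Phi_*$ to give a class $a=2b$ in $\ko_9(X)$, but the surviving $h_1$-actions on the $E_\infty$-page force $b=\eta d$, whence $a=2\eta d=0$, contradicting $a\ne 0$. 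This step is the crux and is missing from your plan.

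A smaller point: the $\Z/8\Z$ in degree $11$ is \emph{not} visible from $h_0$-actions on the $E_\infty$-page; it is a hidden extension, inherited from the hidden extensions in Spin-$\Z/8\Z$ bordism via $\Phi_*$ (\cref{deg_11_xtn_lemma}). Your claim that \cref{h0_xtn_lemma} handles it is incorrect.
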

\begin{proof}
We will use the Adams spectral sequence. The first step is to compute the $E_2$-page, which is the Ext over
$\cA(1)$ of $H^*(X;\Z/2\Z)$. Therefore we need to determine how $\cA(1)$ acts on $H^*(X;\Z/2\Z)$.
\Cref{Steenrod_Thom} implies that for $a\in H^*(BD_8;\Z/2\Z)$,
\begin{subequations}
\begin{align}
	\Sq^1(Ua) &= U \big(w_1(V\oplus 3\Det(V))a + \Sq^1(a)\big)\\
	\Sq^2(Ua) &= U \big(w_2(V\oplus 3\Det(V))a + w_1(V\oplus 3\Det(V))\Sq^1(a) + \Sq^2(a)\big) \,.
\end{align}
\end{subequations}
In \cref{dih_mod_2_coh}, we learned $w_1(V) = x$ and $w_2(V) = w$, so using the Whitney sum formula, $w_1(V\oplus
3\Det(V)) = 0$ and $w_2(V\oplus 3\Det(V)) = w$. Thus we have determined the $\cA(1)$-action on
$H^*(X;\Z/2\Z)$ in terms of the $\cA(1)$-action on $H^*(BD_8;\Z/2\Z)$, at least in principle. To make it explicit,
first use the Cartan formula~\eqref{cartfor} to reduce to the computation of Steenrod squares of the generators
$x$, $y$, and $w$. We gave the Steenrod squares of these elements right before the statement of \cref{whichcoh}.
Now we have a completely explicit description of the $\cA(1)$-module structure on the cohomology of $X$; for
example, $\Sq^1(Ux) = Ux^2$ and $\Sq^2(U) = Uwx$. One can continue in this vein to determine the
$\cA(1)$-action for all classes in degrees $12$ and below; we used a computer program to verify the following
computation.
\begin{prop}
\label{dih_thom_coh}
There is an isomorphism of $\cA(1)$-modules
\begin{equation}
\begin{aligned}
	H^*\big((BD_8)^{V + 3\Det(V) - 5};\Z/2\Z\big) &\cong
		\textcolor{BrickRed}{R_2} \oplus
		\textcolor{RedOrange}{\Sigma J} \oplus
		\textcolor{Goldenrod!67!black}{\Sigma^4 \uQ} \oplus
		\textcolor{Green}{\Sigma^7\Z/2\Z} \oplus
		\textcolor{PineGreen}{\Sigma^8 R_2} \oplus
		\textcolor{MidnightBlue}{\Sigma^9 J} \oplus
		\textcolor{Fuchsia}{\Sigma^{12}\uQ}\\
	&\phantom{\cong} \oplus \Sigma^3\cA(1) \oplus\Sigma^3\cA(1) \oplus \Sigma^5\cA(1) \oplus \Sigma^5\cA(1) \oplus
	\Sigma^7\cA(1)\\
	&\phantom{\cong} \oplus\Sigma^7\cA(1) \oplus \Sigma^9\cA(1) \oplus \Sigma^9\cA(1)
	\oplus \Sigma^{11}\cA(1) \oplus\Sigma^{11}\cA(1)\\
	&\phantom{\cong} \oplus \Sigma^{11}\cA(1) \oplus \Sigma^{11}\cA(1) \oplus P
\end{aligned}
\end{equation}
where $P$ is $13$-connected. The free $\cA(1)$ summands are generated by the classes $Ux^3$, $Uy^3$, $Uw^2x$,
$Uw^2y$, $Ux^7$, $Uy^7$, $Uw^2x^5$, $Uw^2y^5$, $Ux^{11}$, $Uy^{11}$, $Uw^4x^3$, and $Uw^4y^3$. We draw this in
\cref{pin_plus_cover_drawn}.
\end{prop}
These component $\cA(1)$-modules are defined as follows.
\begin{itemize}
	\item $R_2$ is the kernel of the unique nonzero map $\Sigma^{-1}\cA(1)\to \Sigma^{-1}\Z/2\Z$.
	\item $J\coloneqq\cA(1)/\Sq^3$. This module is called the \term{Joker}.
	\item $\uQ\coloneqq\cA(1)/(\Sq^1, \Sq^2\Sq^3)$. This is called the \term{upside-down question mark}.
	\item $\Z/2\Z$ is simply a one-dimensional $\Z/2\Z$-vector space with a trivial $\cA(1)$-action.
\end{itemize}
We will not worry about $P$: its Ext is concentrated in degrees $13$ and above, and hence will not affect the
computation. For the rest of these $\cA(1)$-modules $M$, $\Ext_{\cA(1)}(M)$ is known: see \cref{ext_Z2}
for $\Z/2\Z$ and~\cite[Figure 29]{BC18} for $R_2$, $J$, and $\uQ$.

With this in hand, we draw the $E_2$-page of the Adams spectral sequence for $\ko_*(X)$ in
\cref{dihedral_E2_page}, top. The black dots are the portion coming from the $\Sigma^k\cA(1)$ summands in
$H^*(X;\Z/2\Z)$; Margolis' theorem (in the form of \cref{Margolis_kills_differentials}) implies they do not
participate in nonzero differentials or non-trivial extensions.
\begin{figure}[h!]
\centering
\begin{tikzpicture}[scale=0.6, every node/.style = {font=\tiny}]
	\foreach \y in {0, ..., 15} {
		\node at (-2, \y) {$\y$};
	}
	\begin{scope}[BrickRed]
		\Rtwo{0}{0}{$U$}{$Ux$}{};
	\end{scope}
	\begin{scope}[RedOrange]
		\Joker{3}{1}{$Uy$}{isosceles triangle};
	\end{scope}
	\begin{scope}[Goldenrod!67!black]
		\SpanishQnMark{5}{4}{$Uw^2$}{regular polygon,regular polygon sides=5};
	\end{scope}
	\tikzptB{6}{7}{$Uw^3y$}{Green, regular polygon,regular polygon sides=3};
	\begin{scope}[PineGreen]
		\Rtwo{7}{8}{$Uw^4$}{$Uw^4x$}{star};
	\end{scope}
	\begin{scope}[MidnightBlue]
		\Joker{10}{9}{$U w^4y$}{rectangle, minimum size=3.5pt};
	\end{scope}
	\begin{scope}[Fuchsia]
		\SpanishQnMark{12}{12}{$Uw^6$}{diamond};
	\end{scope}
\end{tikzpicture}
\caption{Part of the mod 2 cohomology of $(BD_8)^{V + 3\Det(V) - 5}$ in low degrees, as stated in
\cref{dih_thom_coh}. The pictured summand contains all elements in degrees $12$ and below except for those
contained in $12$ $\Sigma^k\cA(1)$ summands which are not drawn to avoid cluttering the diagram.
\label{pin_plus_cover_drawn}}
\end{figure}
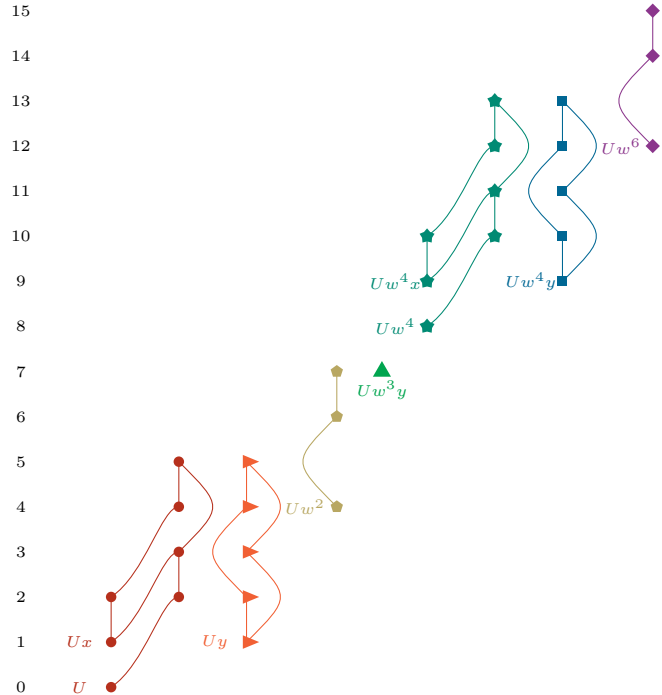

\begin{figure}
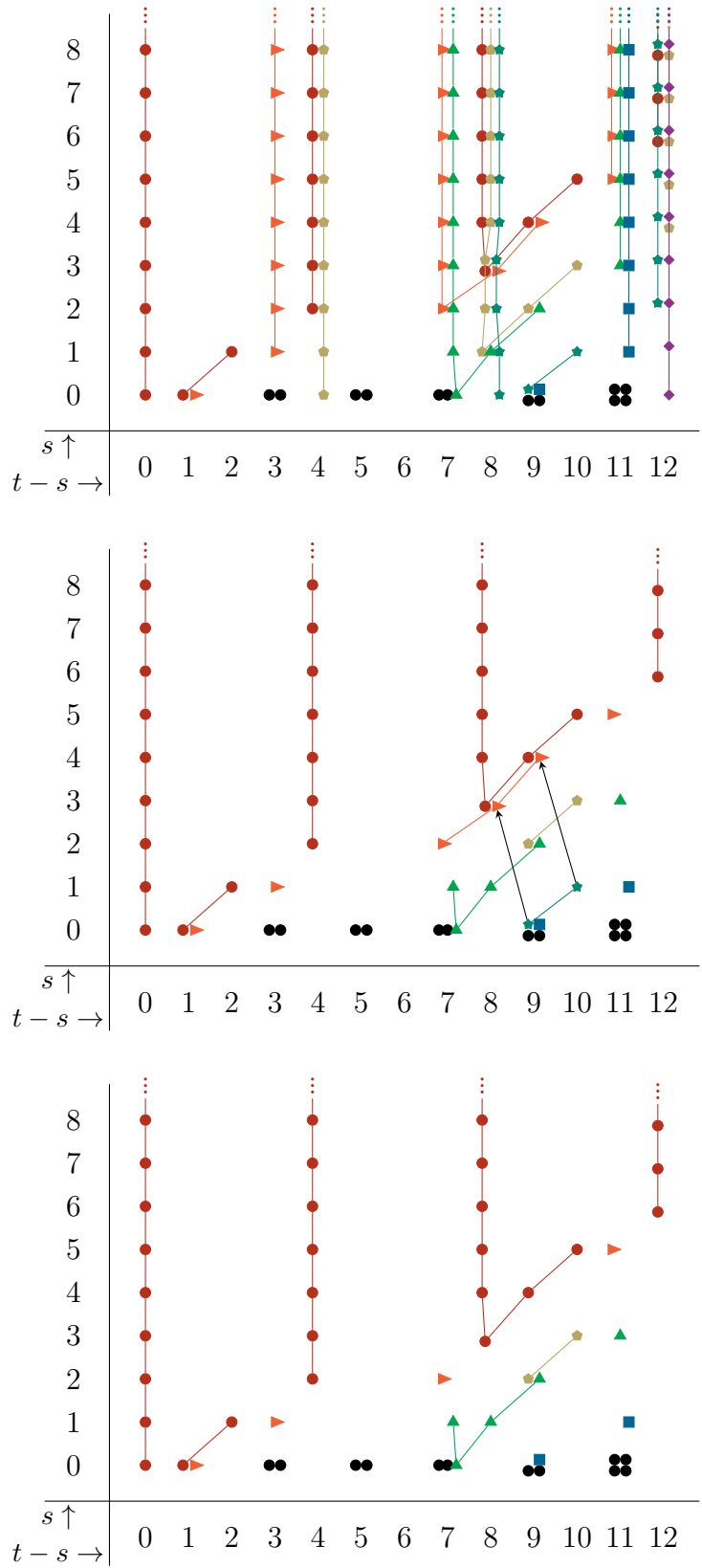

\centering
\begin{sseqdata}[name=dihsseq, classes=fill, xrange={0}{12}, yrange={0}{8}, scale=0.6, Adams grading, >=stealth,
x label = {$\displaystyle{s\uparrow \atop t-s\rightarrow}$},
x label style = {font = \small, xshift = -26.5ex, yshift=5.5ex}]
\class(3, 0)
\class(3, 0)
\class(5, 0)
\class(5, 0)
\class(7, 0)
\class(7, 0)
\class(9, 0)
\class(9, 0)
\class(11, 0)
\class(11, 0)
\class(11, 0)
\class(11, 0)
\begin{scope}[draw=none, fill=none]
	\class(4, 0)
	\class(4, 1)
	\class(7, 1)
	\class(8, 0)
	\class(8, 0)
	\class(11, 1)\class(11, 1)
	\class(11, 2)\class(11, 2)
	\class(11, 3)
	\class(11, 4)
	\class(12, 0)
	\class(12, 1)\class(12, 1)\class(12, 1)
	\class(12, 2)\class(12, 2)
	\class(12, 3)\class(12, 3)
	\class(12, 4)
	\class(12, 5)
\end{scope}

\begin{scope}[BrickRed]
	\class(0, 0)\AdamsTower{}
	\class(1, 0)
	\class(2, 1)\structline
	\class(4, 2)\AdamsTower{}
	\class(8, 3)\AdamsTower{}
	\class(9, 4)
		\structline(8, 3)(9, 4)
	\class(10, 5)\structline
	\class(12, 6)\AdamsTower{}
\end{scope}
\begin{scope}[RedOrange, isosceles triangle]
	\class(1, 0)
	\class(3, 1)\AdamsTower{}
	\class(7, 2)\AdamsTower{}
	\class(8, 3)
		\structline(7, 2)(8, 3, -1)
	\class(9, 4)\structline
	\class(11, 5)\AdamsTower{}
\end{scope}
\begin{scope}[Goldenrod!67!black, regular polygon,regular polygon sides=5]
	\class(4, 0)\AdamsTower{}
	\class(8, 1)\AdamsTower{}
	\class(9, 2)
		\structline(8, 1)(9, 2)
	\class(10, 3)\structline
	\class(12, 4)\AdamsTower{}
\end{scope}
\begin{scope}[Green, regular polygon, regular polygon sides=3, minimum width=1ex]
	\class(7, 0)\AdamsTower{}
	\class(8, 1)
		\structline(7, 0, -1)(8, 1, -1)
	\class(9, 2)\structline
	\class(11, 3)\AdamsTower{}
\end{scope}
\begin{scope}[PineGreen, star]
	\class(8, 0)\AdamsTower{}
	\class(9, 0)
	\class(10, 1)\structline
	\class(12, 2)\AdamsTower{}
\end{scope}
\begin{scope}[MidnightBlue, rectangle]
	\class(9, 0)
	\class(11, 1)\AdamsTower{}
\end{scope}
\begin{scope}[Fuchsia, diamond]
	\class(12, 0)\AdamsTower{}
\end{scope}
\begin{scope}[draw=none, fill=none]
	\class(3, 10)
	\class(7, 10)\class(7, 10)
	\class(11, 10)\class(11, 10)\class(11, 10)
\end{scope}
\foreach \x in {4, 8, 12} {
	\foreach \y in {0, ..., 8} {
		\d2(\x, \y, -1)(\x-1, \y+2, -1)
	}
}
\d2(8, 1, 1)(7, 3, 1)
\d2(8, 2, 1)(7, 4, 1)
\d2(8, 3, 3)(7, 5, 1)
\foreach \y in {4, ..., 8} {
	\d2(8, \y, 2)(7, \y+2, 1)
}
\foreach \y in {2, ..., 8} {
	\d2(12, \y, -2)(11, \y+2, -2)
}
\foreach \y in {4, ..., 8} {
	\d2(12, \y, -3)(11, \y+2, -3)
}
\d3(9, 0, 3)(8, 3, 2)
\d3(10, 1)(9, 4, 2)
\end{sseqdata}
\printpage[name=dihsseq, page=1]
\vspace{0.5cm}

\printpage[name=dihsseq, page=3]
\vspace{0.5cm}

\printpage[name=dihsseq, page=4]
\caption{The Adams spectral sequence computing $\ko_*\big((BD_8)^{V + 3\Det(V)-5}\big)$.
Top: the $E_2$-page. There are
many nonzero $d_2$s, as we will establish in \cref{dihedral_d2s}.
Middle: the $E_3$-page; we will resolve these differentials in \cref{nonzero_d3}.
Bottom: the $E_4 = E_\infty$-page.}
\label{dihedral_E2_page}
\label{D8_E3}
\label{D8_Einf}
\end{figure}

There is potential for many differentials. We will deduce some of them using our previous work on Spin-$\Mp(2, \Z)$
bordism: the map $\Z/4\Z\inj D_8$ induces a map of Thom spectra $\Phi\colon (B\Z/4\Z)^{\rho-2}\to (BD_8)^{V +
3\Det(V) - 5}$. Upon taking Spin bordism, this map has a geometric description: sending a Spin-$\Z/8\Z$ manifold to
the bordism class of its canonical Spin-$D_{16}$ structure.

%
For $2\le r\le\infty$, $\Phi$ induces a map $\Phi_*$ from the $E_r$-page of the Adams spectral sequence for
$\ko_*\big((B\Z/4\Z)^{\rho-2}\big)$ to the $E_r$-page of the Adams spectral sequence for $\ko_*(X)$,
and this map commutes with differentials. We will compute the image of $\Phi_*$ and see that it is large enough for
us to infer most of the differentials we need.

To compute $\Phi_*$, we can use a trick: we have decomposed the cohomology of the Thom spectra $(B\Z/4\Z)^{\rho-2}$
and $X$ into summands, and the map between them mostly respects this decomposition. For each summand, the map is
part of a short exact sequence of $\cA(1)$-modules, so we can compute using the long exact sequence in Ext.
See~\cite[Section 4.6]{BC18} for more information and examples of this technique. We run this computation one summand at
a time in \cref{image_from_spin_Z8}.
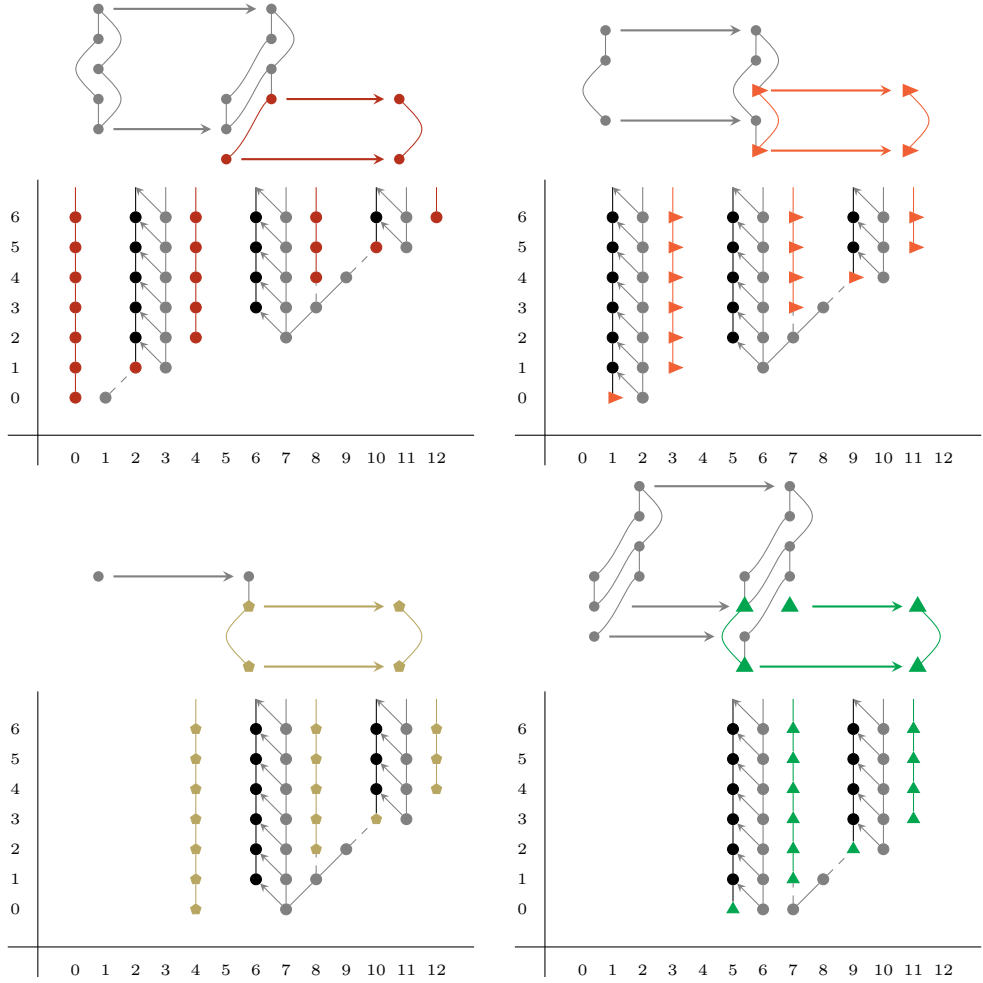
\begin{figure}
\centering
\begin{subfigure}[b]{0.4\textwidth}
\begin{subfigure}[c]{\textwidth}
\centering
\begin{tikzpicture}[scale=0.4, every node/.style = {font=\tiny}, >=stealth]
\begin{scope}[gray]
	\Joker{-5}{1}{}{}
	\tikzpt{-0.75}{1}{}{};
	\tikzpt{-0.75}{2}{}{};
	\tikzpt{0.75}{3}{}{};
	\tikzpt{0.75}{4}{}{};
	\tikzpt{0.75}{5}{}{};
	\sqone(-0.75, 1);
	\sqone(0.75, 2);
	\sqone(0.75, 4);
	\sqtwoCR(-0.75, 1);
	\sqtwoCR(-0.75, 2);
	\sqtwoR(0.75, 3);
	\draw[->, thick] (-4.5, 1) -- (-1.25, 1);
	\draw[->, thick] (-4.5, 5) -- (0.25, 5);
\end{scope}
\begin{scope}[BrickRed]
	\tikzpt{-0.75}{0}{}{};
	\tikzpt{0.75}{2}{}{};
	\sqtwoCR(-0.75, 0);
	\tikzpt{5}{0}{}{};
	\tikzpt{5}{2}{}{};
	\sqtwoR(5, 0);
	\draw[thick, ->] (-0.25, 0) -- (4.5, 0);
	\draw[thick, ->] (1.25, 2) -- (4.5, 2);
\end{scope}
\end{tikzpicture}
\end{subfigure}
\begin{sseqdata}[name=R2Image, classes=fill, xrange={0}{12}, yrange={0}{6}, scale=0.4, Adams grading, >=stealth,
tick style={font=\tiny}, x axis tail=0.4cm, y axis tail=0.4cm, x tick gap=0.3cm, y tick gap=0.3cm]
\class(2, 2)\AdamsTower{}
\class(6, 3)\AdamsTower{}
\class(10, 6)\AdamsTower{}
\begin{scope}[gray] 
	\class(1, 0)
	\class(3, 1)\AdamsTower{}
	\class(7, 2)\AdamsTower{}
	\class(8, 3)
		\structline(7, 2)(8, 3)
	\class(9, 4)\structline
	\class(11, 5)\AdamsTower{}
\end{scope}
\begin{scope}[BrickRed]
	\class(0, 0)\AdamsTower{}
	\class(2, 1)
	\class(4, 2)\AdamsTower{}
	\class(8, 4)\AdamsTower{}
	\class(10, 5)
	\class(12, 6)\AdamsTower{}
\end{scope}
\structline(2, 1)(2, 2)
\structline[dashed, gray](1, 0)(2, 1)
\structline[dashed, gray](8, 3)(8, 4)
\structline[dashed, gray](9, 4)(10, 5)
\structline(10, 5)(10, 6)
\foreach \y in {1, ..., 6} {
	\d[gray]1(3, \y)
}
\foreach \y in {2, ..., 6} {
	\d[gray]1(7, \y)
}
\foreach \y in {5, ..., 6} {
	\d[gray]1(11, \y)
}
\end{sseqdata}
\printpage[name=R2Image, page=1]
\end{subfigure}
\begin{subfigure}[b]{0.4\textwidth}
\begin{subfigure}[c]{\textwidth}
\centering
\begin{tikzpicture}[scale=0.4, every node/.style = {font=\tiny}, >=stealth]
\begin{scope}[gray]
	\SpanishQnMark{-5}{1}{}{}
	\SpanishQnMark{0}{1}{}{}
	\draw[->, thick] (-4.5, 1) -- (-0.5, 1);
	\draw[->, thick] (-4.5, 4) -- (-0.5, 4);
	\sqone(0, 0);
	\sqtwoR(0, 2);
\end{scope}
\begin{scope}[RedOrange]
	\foreach \x in {0, 5} {
		\tikzpt{\x}{0}{}{isosceles triangle};
		\tikzpt{\x}{2}{}{isosceles triangle};
		\sqtwoR(\x, 0);
	}
	\draw[thick, ->] (0.5, 0) -- (4.5, 0);
	\draw[thick, ->] (0.5, 2) -- (4.5, 2);
\end{scope}
\end{tikzpicture}
\end{subfigure}
\begin{sseqdata}[name=jokerImage, classes=fill, xrange={0}{12}, yrange={0}{6}, scale=0.4, Adams grading, >=stealth,
tick style={font=\tiny}, x axis tail=0.4cm, y axis tail=0.4cm, x tick gap=0.3cm, y tick gap=0.3cm]
\class(1, 1)\AdamsTower{}
\class(5, 2)\AdamsTower{}
\class(9, 5)\AdamsTower{}
\begin{scope}[gray] 
	\class(2, 0)\AdamsTower{}
	\class(6, 1)\AdamsTower{}
	\class(7, 2)
		\structline(6, 1)(7, 2)
	\class(8, 3)\structline
	\class(10, 4)\AdamsTower{}
\end{scope}
\begin{scope}[RedOrange, isosceles triangle]
	\class(1, 0)
	\class(3, 1)\AdamsTower{}
	\class(7, 3)\AdamsTower{}
	\class(9, 4)
	\class(11, 5)\AdamsTower{}
\end{scope}
\structline(1, 0)(1, 1)
\structline[dashed, gray](7, 2)(7, 3)
\structline[dashed, gray](8, 3)(9, 4)
\structline(9, 4)(9, 5)
\foreach \y in {0, ..., 6} {
	\d[gray]1(2, \y)
}
\foreach \y in {1, ..., 6} {
	\d[gray]1(6, \y)
}
\foreach \y in {4, ..., 6} {
	\d[gray]1(10, \y)
}
\end{sseqdata}
\printpage[name=jokerImage, page=1]
\end{subfigure}
\begin{subfigure}[b]{0.4\textwidth}
\begin{subfigure}[c]{\textwidth}
\centering
\begin{tikzpicture}[scale=0.4, every node/.style = {font=\tiny}, >=stealth]
\begin{scope}[gray]
	\tikzpt{-5}{3}{}{};
	\tikzpt{0}{3}{}{};
	\draw[->, thick] (-4.5, 3) -- (-0.5, 3);
	\sqone(0, 2);
\end{scope}
\begin{scope}[Goldenrod!67!black]
	\foreach \x in {0, 5} {
		\tikzpt{\x}{0}{}{regular polygon,regular polygon sides=5};
		\tikzpt{\x}{2}{}{regular polygon,regular polygon sides=5};
	}
	\sqtwoL(0, 0);
	\sqtwoR(5, 0);
	\draw[thick, ->] (0.5, 0) -- (4.5, 0);
	\draw[thick, ->] (0.5, 2) -- (4.5, 2);
\end{scope}
\end{tikzpicture}
\end{subfigure}
\begin{sseqdata}[name=qnImage, classes=fill, xrange={0}{12}, yrange={0}{6}, scale=0.4, Adams grading, >=stealth,
tick style={font=\tiny}, x axis tail=0.4cm, y axis tail=0.4cm, x tick gap=0.3cm, y tick gap=0.3cm]
\class(6, 1)\AdamsTower{}
\class(10, 4)\AdamsTower{}
\begin{scope}[gray] 
	\class(7, 0)\AdamsTower{}
	\class(8, 1)
		\structline(7, 0)(8, 1)
	\class(9, 2)\structline
	\class(11, 3)\AdamsTower{}
\end{scope}
\begin{scope}[Goldenrod!67!black, regular polygon,regular polygon sides=5]
	\class(4, 0)\AdamsTower{}
	\class(8, 2)\AdamsTower{}
	\class(10, 3)
	\class(12, 4)\AdamsTower{}
\end{scope}
\structline(10, 3)(10, 4)
\structline[dashed, gray](8, 1)(8, 2)
\structline[dashed, gray](9, 2)(10, 3)
\foreach \y in {0, ..., 6} {
	\d[gray]1(7, \y)
}
\foreach \y in {3, ..., 6} {
	\d[gray]1(11, \y)
}
\end{sseqdata}
\printpage[name=qnImage, page=1]
\end{subfigure}
\begin{subfigure}[b]{0.4\textwidth}
\begin{subfigure}[c]{\textwidth}
\centering
\begin{tikzpicture}[scale=0.4, every node/.style = {font=\tiny}, >=stealth]
\begin{scope}[gray]
	\Rtwo{-5.75}{1}{}{}{};
	\tikzpt{-0.75}{1}{}{};
	\tikzpt{-0.75}{3}{}{};
	\tikzpt{0.75}{3}{}{};
	\tikzpt{0.75}{4}{}{};
	\tikzpt{0.75}{5}{}{};
	\tikzpt{0.75}{6}{}{};
	\sqone(-0.75, 0);
	\sqone(-0.75, 2);
	\sqone(0.75, 3);
	\sqone(0.75, 5);
	\sqtwoCR(-0.75, 1);
	\sqtwoCR(-0.75, 2);
	\sqtwoCR(-0.75, 3);
	\sqtwoR(0.75, 4);
	\draw[->, thick] (-5.25, 1) -- (-1.75, 1);
	\draw[->, thick] (-4.5, 2) -- (-1.25, 2);
	\draw[->, thick] (-3.75, 6) -- (0.25, 6);
\end{scope}
\begin{scope}[Green] 
	\foreach \x in {-0.75, 5} {
		\tikzpt{\x}{0}{}{regular polygon,regular polygon sides=3};
		\tikzpt{\x}{2}{}{regular polygon,regular polygon sides=3};
	}
	\tikzpt{0.75}{2}{}{regular polygon,regular polygon sides=3};
	\sqtwoL(-0.75, 0);
	\sqtwoR(5, 0);
	\draw[thick, ->] (-0.25, 0) -- (4.5, 0);
	\draw[thick, ->] (1.5, 2) -- (4.5, 2);
\end{scope}
\end{tikzpicture}
\end{subfigure}
\begin{sseqdata}[name=FtwoImage, classes=fill, xrange={0}{12}, yrange={0}{6}, scale=0.4, Adams grading, >=stealth,
tick style={font=\tiny}, x axis tail=0.4cm, y axis tail=0.4cm, x tick gap=0.3cm, y tick gap=0.3cm]
\class(5, 1)\AdamsTower{}
\class(9, 3)\AdamsTower{}
\begin{scope}[gray] 
	\class(6, 0)\AdamsTower{}
	\class(7, 0)
	\class(8, 1)\structline
	\class(10, 2)\AdamsTower{}
\end{scope}
\begin{scope}[Green, regular polygon, regular polygon sides=3, minimum width=1ex]
	\class(5, 0)
	\class(7, 1)\AdamsTower{}
	\class(9, 2)
	\class(11, 3)\AdamsTower{}
\end{scope}
\structline(5, 0)(5, 1)
\structline(9, 2)(9, 3)
\structline[dashed, gray](7, 0)(7, 1)
\structline[dashed, gray](8, 1)(9, 2)
\foreach \y in {0, ..., 6} {
	\d[gray]1(6, \y)
}
\foreach \y in {2, ..., 6} {
	\d[gray]1(10, \y)
}
\end{sseqdata}
\printpage[name=FtwoImage, page=1]
\end{subfigure}
\caption{Part of the computation of the image of the map on the $E_2$-page induced from the map from Spin-$\Z/8\Z$
bordism to Spin-$D_{16}$ bordism. Color is the image; black is the kernel. In the lower-right (green triangles)
figure, both green summands in degree $2$ map non-trivially to the degree-$2$ summand in $\textcolor{Green}{C\eta}$;
the degree-two class in the kernel $\textcolor{gray}{\Sigma R_2}$ maps to their sum. The arguments for the teal
star, blue square, and purple diamond summands are analogous to the arguments for the red circle, orange triangle,
and yellow pentagon summands, respectively.}
\label{image_from_spin_Z8}
\end{figure}

\begin{defn}
We will say that a differential \term{vanishes for easy reasons} if it can be shown to vanish because either its
domain or codomain is zero; it must vanish in order to be equivariant for the $h_i$-action on the $E_2$-page; or it
is killed by Margolis' theorem because it comes from a $\Sigma^k\cA(1)$ summand in cohomology (see
\cref{Margolis_kills_differentials}).
\end{defn}
\begin{lem}
\label{dihedral_d2s}
In the range $t-s\le 12$, all $d_2$s either vanish for easy reasons or are in the image of $\Phi_*$.
\end{lem}
\begin{proof}
We computed $\Im(\Phi_*)$ in \cref{image_from_spin_Z8}; then one just checks the lemma directly. Specifically, one
observes that all $d_2$s that do not vanish for easy reasons are either between $h_0$-towers or from the $11$-line
to the $10$-line. All $h_0$-towers are in the image of $\Phi_*$,\footnote{$\Phi_*$ is not always surjective on
$h_0$-towers, e.g.\ the green $h_0$-tower in topological degree $7$. But it is surjective enough: for every
$h_0$-tower in range, the cokernel of $\Phi_*$ on that $h_0$-tower is finite-dimensional. Since $h_0$-towers are
generated by applying $h_0$ to some element of the $E_2$-page and differentials must be equivariant for
$h_0$-actions, this ``not quite surjectivity'' is good enough: it determines the values of differentials on all
elements of the $h_0$-tower.} as is the entire $10$-line, and the entire $11$-line except for the summands arising
from $\Sigma^{11}\cA(1)$ summands in $H^*\big((BD_8)^{V + 3\Det(V)-5};\Z/2\Z\big)$, and Margolis' theorem kills
differentials emerging from these latter summands. Thus $d_2$s from the $11$-line to the $10$-line either vanish
for easy reasons or are in the image of $\Phi_*$.
\end{proof}
Therefore we know all the $d_2$s in range: the differentials between $h_0$-towers are the same as they were for
$\ko_*\big((B\Z/4\Z)^{\rho-2}\big)$, and all other $d_2$s vanish. We draw the $E_3$-page in \cref{D8_E3}, middle.
Most $d_3$ differentials vanish for easy reasons, except for $d_3\colon E_3^{0,9}\to E_3^{3,11}$ and $d_3\colon
E_3^{1,11}\to E_3^{4,13}$. A combination of Margolis' theorem and $h_i$-equivariance means that the first $d_3$ is
either $0$ or carries the teal summand to the orange summand and vanishes on the remaining generators of
$E_3^{0,9}$; $h_1$-equivariance then forces the latter $d_3$ to either vanish or surject onto the orange summand in
$E_3^{4,13}$, and one vanishes iff the other does.

To address these $d_3$s, we need to show that they are the last possible differentials that can be nonzero. All
longer differentials vanish for easy reasons with the exception of $d_4\colon E_2^{1,12}\to E_2^{5,15}$.
\begin{lem}
\label{d4_vanishes}
This $d_4\colon E_4^{1,12}\to E_4^{5,15}$ vanishes.
\end{lem}
\begin{proof}
The generator $x$ of $E_4^{1,12}\cong\textcolor{MidnightBlue}{\Z/2\Z}$ is in $\Im(\Phi_*)$. Since $d_4$ commutes with
$\Phi_*$, $d_4(x)\in\Im(\Phi_*)$ too. However, in the proof of \cref{ko_spin_z8}, we saw that in the Adams spectral
sequence for $\ko_*\big((B\Z/4\Z)^{\rho-2}\big)$, $E_4^{5,15} = 0$, so $d_4(x) = 0$.
%
\end{proof}
See \cref{d4_equals_zero} for a picture of this proof.

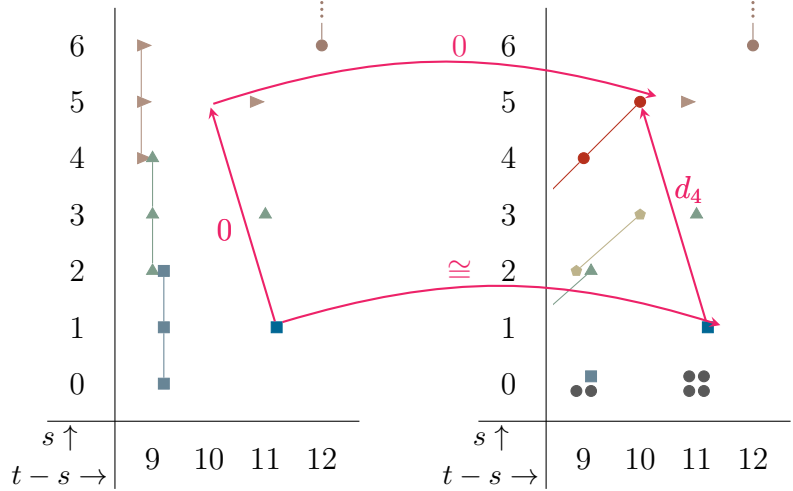
\begin{figure}[h!]
\centering
\begin{sseqdata}[name=d4MpPart, classes=fill, xrange={9}{12}, yrange={0}{6}, scale=0.75, Adams grading,
remember picture, >=stealth,
x label = {$\displaystyle{s\uparrow \atop t-s\rightarrow}$},
x label style = {font = \small, xshift = -13ex, yshift=5.5ex}]
\begin{scope}[BrickRed!30!gray]
	\class(12, 6)\AdamsTower{}
\end{scope}
\begin{scope}[RedOrange!30!gray, isosceles triangle]
	\class(9, 4)
	\class(9, 5)\structline
	\class(9, 6)\structline
	\class(11, 5)
\end{scope}
\begin{scope}[draw=none, fill=none]
	\class(9, 5)\class(9, 5)
	\class(9, 6)\class(9, 6)
	\class(9, 2)
	\class(9, 3)
	\class(11, 3)
\end{scope}
\begin{scope}[Green!30!gray, regular polygon, regular polygon sides=3, minimum width=1ex]
	\class(9, 2)
	\class(9, 3)\structline
	\class(9, 4)\structline
	\class(11, 3)
\end{scope}
\begin{scope}[draw=none, fill=none]
	\class(9, 4)
	\class(9, 3)
	\class(9, 0)\class(9, 0)
	\class(9, 1)\class(9, 1)
	\class(10, 5)
	\class(11, 1)\class(11, 1)
\end{scope}
\class[MidnightBlue, rectangle](11, 1)
\begin{scope}[MidnightBlue!30!gray, rectangle]
	\class(9, 0)
	\class(9, 1)\structline
	\class(9, 2)\structline
\end{scope}
\begin{scope}[draw=none, fill=none]
	\class(11, 3)
	\class(11, 5)\class(11, 5)
\end{scope}
\d["0" {xshift=-1.2em, yshift=-1.2em}, thick, WildStrawberry]4(11, 1, -1)
\tikzmarkinside{source1}{-2.6, -1.55}
\tikzmarkinside{source2}{-1.68, -4.45}
\end{sseqdata}
\begin{sseqdata}[name=Campbellsseq, classes=fill, xrange={9}{12}, yrange={0}{6}, scale=0.75, Adams grading,
remember picture, >=stealth,
x label = {$\displaystyle{s\uparrow \atop t-s\rightarrow}$},
x label style = {font = \small, xshift = -13ex, yshift=5.5ex}]
\begin{scope}[black!30!gray]
	\class(9, 0)\class(9, 0)
	\class(11, 0)\class(11, 0)\class(11, 0)\class(11, 0)
\end{scope}
\begin{scope}[BrickRed]
	\class(8, 3)
	\class(9, 4)\structline
	\class(10, 5)\structline
\end{scope}
\begin{scope}[BrickRed!30!gray]
	\class(12, 6)\AdamsTower{}
\end{scope}
\begin{scope}[RedOrange!30!gray, isosceles triangle]
	\class(11, 5)
\end{scope}
\begin{scope}[Goldenrod!40!gray, regular polygon,regular polygon sides=5]
	\class(9, 2)
	\class(10, 3)\structline
\end{scope}
\class[draw=none, fill=none](11, 3)
\begin{scope}[Green!30!gray, regular polygon, regular polygon sides=3, minimum width=1ex]
	\class(8, 1)
	\class(9, 2)\structline
	\class(11, 3)
\end{scope}
\class[draw=none, fill=none](11, 3)
\class[draw=none, fill=none](11, 5)\class[draw=none, fill=none](11, 5)
\class[draw=none, fill=none](11, 1)\class[draw=none, fill=none](11, 1)
\class[draw=none, fill=none](9, 0)
\class[MidnightBlue, rectangle](11, 1)
\class[MidnightBlue!30!gray, rectangle](9, 0)
\tikzmarkinside{targ1}{8.9, 5.2}
\tikzmarkinside{targ2}{9.75, 2.15}
\d["d_4" {xshift=-0.35em}, thick, WildStrawberry]4(11, 1, -1)
\end{sseqdata}
\printpage[name=d4MpPart, page=4]
\qquad
\printpage[name=Campbellsseq, page=4]
\begin{tikzpicture}[remember picture, overlay, >=stealth]
       \node (s1) at (source1) {};
       \node (t1) at (targ1) {};
       \node (s2) at (source2) {};
       \node (t2) at (targ2) {};
       \path[->, thick, WildStrawberry, bend left=17] (s1) edge (t1)
	       node at (-4.5, 3) {$\cong$};
       \path[->, thick, WildStrawberry, bend left=17] (s2) edge (t2)
	       node at (-4.5, 6) {$0$};
\end{tikzpicture}
\caption{Left: the $E_4$-page for $\ko_*\big((B\Z/4)^{V-2}\big)$ (the calculation for Spin-$\Mp(2, \Z)$
bordism). Right: the $E_4$-page for $\ko_*\big((BD_8)^{V + 3\Det(V) - 5}\big)$ (the calculation for
Spin-$\GL^+(2, \Z)$ bordism). The horizontal pink arrows are the map between these spectral sequences induced by
$\Phi$. This map commutes with differentials, so the pictured $d_4$ must vanish. This is a picture proof of
\cref{d4_vanishes}.}
\label{d4_equals_zero}
\end{figure}

Now we understand the $E_\infty$-page with the exception of the two indicated $d_3$ differentials.
\begin{lem}
\label{nonzero_d3}
The $d_3$s starting at $E_3^{0,9}$ and $E_3^{1,11}$ do not vanish.
\end{lem}
\begin{proof}
$E_3^{4,13}\cong\textcolor{BrickRed}{\Z/2\mathbb{Z}}\oplus \textcolor{RedOrange}{\Z/2 \mathbb{Z}}$, but because Adams differentials
commute with $h_1$, the image of $d_3\colon E_3^{1,11}\to E_3^{4,13}$ is contained in the orange triangle
$\textcolor{RedOrange}{\Z/2 \mathbb{Z}}$ summand, as opposed to the red circle. Therefore this differential vanishes if and
only if the nonzero element of that summand lives to the $E_\infty$-page. Call that nonzero element $\alpha$. Since
differentials commute with $h_1$, the destinies of the differentials starting in $E_3^{0,9}$ and $E_3^{1,11}$ are
bonded: one vanishes if and only if the other does. Thus, to prove this lemma, we will prove that $\alpha\in
E_3^{4,13}$ cannot survive to the $E_\infty$-page.

Suppose conversely that $\alpha$ \emph{does} survive. Looking at \cref{image_from_spin_Z8}, we see that $\alpha$ is
in the image of the map $\Phi_*$ of spectral sequences induced by $\Phi$; choose the preimage $\widetilde\alpha$
which is the nonzero element of the orange triangle $\textcolor{RedOrange}{\Z/2 \mathbb{Z}}$ summand of $E_3^{4,13}$ for
$\ko_*((B\Z/4\Z)^{\rho-2})$. Let $\widetilde\beta$ be the nonzero element of the green triangle
$\textcolor{Green}{\Z/2 \mathbb{Z}}$ summand of $E_3^{2,11}$ for $\ko_*((B\Z/4\Z)^{\rho-2})$ and $\beta$ be the nonzero
element of the green triangle $\textcolor{Green}{\Z/2 \mathbb{Z}}$ summand of $E_3^{2,11}$ for $\ko_*(X)$; then
$\Phi_*(\widetilde \beta) = \beta$. By our assumption, all four of these classes survive to their respective
$E_\infty$-pages, so define elements nonzero $a$, $b$, $\widetilde a$, and $\widetilde b$ in
$\ko_9((B\Z/4)^{\rho-2})$ and $\ko_*(X)$, where $a$ lifts $\alpha$, $b$ lifts $\beta$, $\widetilde a$ lifts
$\widetilde\alpha$, and $\widetilde b$ lifts $\widetilde\beta$, such that $\Phi(\widetilde a) = a$ and
$\Phi(\widetilde b) = b$. The hidden extension we deduced in the proof of \cref{ko_spin_z8} (see
\cref{spin_Z8_with_hidden_extensions}) implies we can choose $\widetilde a = 2\widetilde b$, and therefore $a = 2b$
as well.

Both $\alpha$ and $\beta$ are $h_1$ times some other classes $\gamma$ and $\delta$, respectively, and by our
assumption both $\gamma$ and $\delta$ survive to the $E_\infty$-page, hence lifting to define classes
$c,d\in\ko_8(X)$ with $\eta c = a$ and $\eta d = b$. That is, $a = 2\eta d$, but this is a contradiction, because
$2\eta = 0$ and $a\ne 0$.
\end{proof}


So $E_4 = E_\infty$ for $t-s < 12$; we draw this in \cref{D8_Einf}, bottom.

Now we have to address extension questions. By inspection, the only place a hidden extension could occur for $t-s <
6$ is in dimension $3$, and Margolis' theorem (\cref{Margolis_kills_differentials}) rules out a hidden extension
there. In topological degrees $7$ through $11$, however, we must address extension problems.
\begin{lem}
\label{deg_7_xtn_lemma}
The extension
\begin{equation}
\label{deg_7_xtn}
	\shortexact{\textcolor{RedOrange}{\Z/2\Z}}{\ko_7(X)}{\textcolor{Green}{(\Z/4\Z)}\oplus(\Z/2\Z)\oplus(\Z/2\Z)}{}
\end{equation}
splits, so $\ko_7(X)\cong \textcolor{Green}{(\Z/4\Z)}\oplus \textcolor{RedOrange}{(\Z/2\Z)}
\oplus(\Z/2\Z)\oplus(\Z/2\Z)$.
\end{lem}
\begin{proof}
Margolis' theorem (\cref{Margolis_kills_differentials}) splits off the black $\Z/2\Z$ summands. If~\eqref{deg_7_xtn}
does not split, then the orange rightward-pointing triangle $\textcolor{RedOrange}{\Z/2\Z}$ and green
upward-pointing triangles $\textcolor{Green}{\Z/4\Z}$ combine into a
$\Z/8\Z$. Let $x\in\ko_7(X)$ be an element whose image in the $E_\infty$-page is a generator of the green
upward-pointing triangle
$\textcolor{Green}{\Z/2\Z}\subset E_\infty^{0,7}$; then~\eqref{deg_7_xtn} splits iff $4x = 0$.

The computation in \cref{image_from_spin_Z8} shows that $2x\in\Im(\Phi_*)$. Let
$y\in\ko_7\big((B\Z/4\Z)^{\rho-2}\big)$ be a preimage of $2x$. Then $4x = \Phi_*(2y)$, and $2y = 0$,
so~\eqref{deg_7_xtn} splits.
\end{proof}
\begin{lem}
\label{ko8_xtn}
$\ko_8(X)\cong\textcolor{BrickRed}{\Z}\oplus\textcolor{Green}{(\Z/2\Z)}$.
\end{lem}
\begin{proof}
There are infinitely many extension questions to address here, but the isomorphism type of the final answer is
either $\textcolor{BrickRed}{\Z}\oplus (\textcolor{Green}{\Z/2\Z})$, or $\Z$, and the latter case can only occur if the
first extension
\begin{equation}
	\shortexact{\textcolor{BrickRed}{E_\infty^{3,11}}}{A}{\textcolor{Green}{E_\infty^{1,9}}}{}
\end{equation}
is non-trivial, but the non-trivial $h_1$-action
$\textcolor{BrickRed}{E_\infty^{3,11}}\to\textcolor{BrickRed}{E_\infty^{4,13}}$ lifts to a nonzero action by
$\eta$ in $\ko_*(X)$, which by \cref{2eta_lemma} splits this extension.
\end{proof}
\begin{lem}
\label{ko9_xtn}
$\ko_9(X)\cong \textcolor{BrickRed}{(\Z/2\Z)}\oplus \textcolor{Goldenrod!67!black}{(\Z/2\Z)} \oplus \textcolor{Green}{(\Z/2\Z)}
\oplus\textcolor{MidnightBlue}{(\Z/2\Z)} \oplus (\Z/2\Z) \oplus (\Z/2\Z)$.
\end{lem}
\begin{proof}
We need to determine the extension
\begin{equation}
\label{green_blue_extn}
	\shortexact{\textcolor{Green}{\Z/2\Z}}{A}{\textcolor{MidnightBlue}{\Z/2\Z}} \,.
\end{equation}
A priori, the extension could also involve the red circle and yellow pentagon summands, but \cref{2eta_lemma}
implies that restricted to those summands, the extension splits, so all we have to determine
is~\eqref{green_blue_extn}. Then, argue as in the proof of \cref{nonzero_d3} to conclude that if $x$ is a preimage
of the generator of the blue square summand, then $2x = 0$. This splits~\eqref{green_blue_extn}.
\end{proof}
\begin{lem}
$\ko_{10}(X)\cong \textcolor{BrickRed}{(\Z/2\Z)}\oplus \textcolor{Goldenrod!67!black}{(\Z/2\Z)}$.
\end{lem}
\begin{proof}
We must split the extension
\begin{equation}
\label{red_yellow_extn}
	\shortexact{\textcolor{BrickRed}{\Z/2\Z}}{A}{\textcolor{Goldenrod!67!black}{\Z/2\Z}}.
\end{equation}
On the $E_\infty$-page, the action of $h_1$ from the $9$-line to the $10$-line is surjective. This lifts to imply
that the action of $\eta\in\pi_1(\mathbb S)$, which defines a map $\ko_9(X)\to\ko_{10}(X)$, is surjective. This
means the orders of the elements in $\ko_{10}(X)$ are bounded above by the order of the largest element in
$\ko_9(X)$, and $2\ko_9(X) = 0$, so there can be no order-$4$ elements in $\ko_{10}(X)$.
\end{proof}
\begin{lem}
\label{deg_11_xtn_lemma}
$\ko_{11}(X)\cong (\Z/8\Z) \oplus (\Z/2\Z)^{\oplus 4}$.
\end{lem}
\begin{proof}
The four $\Z/2\Z$ summands in filtration $0$ split off by Margolis' theorem (\cref{margolis}), because they came from
$\Sigma^{11}\cA(1)$ summands in $H^*(X;\Z/2\Z)$. That leaves the orange, green, and blue summands in higher Adams
filtration; they are in the image of the map $\Phi$ from the $E_\infty$-page for Spin-$\Z/8\Z$ bordism to
Spin-$D_{16}$ bordism. For Spin-$\Z/8\Z$ bordism, we saw in the proof of \cref{ko_spin_z8} that these three
$\Z/2\Z$ summands combined to a $\Z/8\Z$, so that is also true here.
\end{proof}
\begin{rem}
Using this, we can give another proof of \cref{deg_7_xtn_lemma}. Recall from \cref{ext_Z2} the element
$v\in\Ext^{3,7}(\Z/2\Z)$, which acts on the pages of the Adams spectral sequence for $\ko_*(X)$ for a space or
spectrum $X$. In Spin bordism, this corresponds to taking the direct product with a K3 surface.

One can show that $v$ carries the green upward-pointing triangle $\textcolor{Green}{\Z/2\Z}\subset E_\infty^{0,7}$ isomorphically onto
$E_\infty^{3,14}\cong\textcolor{Green}{\Z/2\Z}$, and likewise carries
$E_\infty^{2,9}\cong\textcolor{RedOrange}{\Z/2\Z}$ isomorphically onto
$E_\infty^{5,16}\cong\textcolor{RedOrange}{\Z/2\Z}$.\footnote{This comes from a computation of the
$\Ext(\Z/2\Z)$-action on $\Ext(\textcolor{Green}{\Z/2\Z})$ and $\Ext(\textcolor{RedOrange}{J})$. For
$\Ext(\textcolor{Green}{\Z/2\Z})$ the claimed $v$-action is a consequence of the ring structure, which we gave in
\cref{ext_Z2}; for $\Ext(\textcolor{RedOrange}{J})$ one can produce the claimed $v$-action using the long exact
sequence in Ext groups associated to a short exact sequence of $\cA(1)$-modules, similarly to how we computed some
$v$-actions in \cref{LES_ext_exm}.} Let
$x\in\ko_7(X)$ be a class whose image in the $E_\infty$-page is the generator of the green upward-pointing triangle
$\textcolor{Green}{\Z/2\Z}\subset E_\infty^{0,7}$, and suppose that the extension in~\eqref{deg_7_xtn} does not
split. As we noted in the proof of \cref{deg_7_xtn_lemma}, Margolis' theorem implies the black $\Z/2\Z$ summands
split off, so the only way for~\eqref{deg_7_xtn} to not split is for the green and orange summands to combine,
which would imply that the image of $4x$ in the $E_\infty$-page is the nonzero element of the orange triangle $\textcolor{RedOrange}{\Z/2\Z}\subset
E_\infty^{2,9}$.

Now act by $v$. \Cref{deg_11_xtn_lemma} means the colored summands in the $11$-line assemble to a $\Z/8\Z$ with
generator $y$, and a generator of the green upward-pointing triangle $\textcolor{Green}{\Z/2\Z}\subset E_\infty^{3,14}$ lifts to $2y\in\Z/8\Z$.
The $v$-action from the $7$- to the $11$-line we noted above implies $\mathrm{K3}\cdot x = 2y$,\footnote{There is
more than one choice of generator $y$; a more careful way to say this is that we can choose a generator $y$ such
that $\mathrm{K3} \cdot x = 2y$.} so $4(\mathrm{K3}\cdot x) = 0$. But the non-trivial $v$-action on the orange
summands that we noted above implies $\mathrm{K3}\cdot(4x)\ne 0$, which is a contradiction.
\end{rem}
This finishes our proof of \cref{the_ko_part}.
\end{proof}
Collecting all the pieces we find the Spin-GL$^+(2,\Z)$ bordism groups summarized in Table~\ref{tab:GL_bordism}.
\renewcommand{\arraystretch}{1.5}
\begin{table}[h!]
\centering
\begin{tabular}{c c c}
\toprule
$k$ & $\Omega^{\Spin\text{-}\GL^+(2,\Z)}_k (\pt)$ & Generators \\
\midrule
$0$ & {\footnotesize $\Z$} & {\footnotesize $\pt_+$} \\ 
$1$ & {\footnotesize $(\Z/2\Z) \oplus (\Z/2\Z)$} & {\footnotesize $(L_1^4 \,, S^1_R)$} \\ 
$2$ & {\footnotesize $\Z/2\Z$} & {\footnotesize $S^1_p \times S^1_R$} \\ 
$3$ & {\footnotesize $(\Z/2\Z)^{\oplus 3} \oplus (\Z/3\Z)$} & {\footnotesize $(L^3_4 \,, \RP^3 \,, \widetilde{\RP}^3 \,, L^3_3)$} \\ 
$4$ & {\footnotesize $\Z$} & {\footnotesize $E$} \\ 
$5$ & {\footnotesize $(\Z/2\Z)^{\oplus 2}$} & {\footnotesize $(L^5_4 \,, X_5)$} \\ 
$6$ & {\footnotesize $0$} & \\ 
$7$ & {\footnotesize $(\Z/2\Z)^{\oplus 3} \oplus (\Z/4\Z) \oplus (\Z/9\Z)$} & {\footnotesize $(\RP^7 \,, \widetilde{\RP}^7 \,, \orangeseven \,, \halfQseven \,, L^7_3)$} \\ 
$8$ & {\footnotesize $\Z\oplus \Z\oplus (\Z/2\Z)$} & {\footnotesize $(
	\halfBott\,, \HP^2\,, \halfQseven \times S^1_p
)$} \\
$9$ & {\footnotesize $(\Z/2\Z)^{\oplus 8}$} & {\footnotesize $(W^7_1 \times S_p^1 \times S_p^1 \,, X_9 \,,
	\widetilde X_9 \,, L^9_4 \,, W^9_1 \,,$} \\
& &	 {\footnotesize $\phantom{(} B\times L^1_4 \,, \HP^2 \times L^1_4 \,, \HP^2\times S_R^1 )$} \\
$10$ & {\footnotesize $(\Z/2\Z)^{\oplus 4}$} & {\footnotesize $(B\times L^1_4 \times S_p^1 \,, W^9_1 \times S_p^1 \,, \HP^2 \times
	L^1_4 \times S_p^1\,, X_{10})$}\\
$11$ & {\footnotesize $(\Z/2\Z)^{\oplus 9} \oplus (\Z/8\Z) \oplus (\Z/3\Z) \oplus (\Z/27\Z)$} & {\footnotesize $(\RP^{11} \,, \widetilde{\RP}^{11} \,, X_{11} \,, \widetilde{X}_{11} \,, \HP^2 \times L^3_4 \,, \HP^2 \times \RP^3 \,, $} \\
& &	 {\footnotesize $\phantom{(} \HP^2\times \widetilde{\RP}^3 \,, X_{10} \times L_4^1 \,, X_{10}\times S^1_R \,,  Q_4^{11} \,,   \HP^2 \times L^3_3 \,, L^{11}_3 )$} \\
\bottomrule
\end{tabular}
\caption{Bordism groups $\Omega^{\text{Spin-GL}^+(2,\Z)}_k (\pt)$ and their generators (in the same order as group
summands) for $k \leq 11$.}
\label{tab:GL_bordism}
\end{table}
Again, we already include the generators which we will derive next.

\subsection{Determining the generators at $p = 2$}
\label{ss:gl2_gens}

The map from Spin-$\Mp(2, \Z)$ bordism to Spin-$\GL^+(2, \Z)$ bordism is surjective on $3$-torsion, as we saw in
Section \ref{odd_primary_glplus}. This map is not surjective on $2$-torsion, and we need to find generators for
everything in the (quite large) cokernel.

\subsubsection{Generators coming from Spin-$\Mp(2, \Z)$ bordism}
\label{sss:mp2}

We used the map from Spin-$\Mp(2, \Z)$ bordism to Spin-$\GL^+(2, \Z)$ bordism already to solve the Adams spectral
sequence for Spin-$\GL^+(2, \Z)$ bordism; we will use this map again to find some of our generators. We already saw
in Section \ref{odd_primary_glplus} that $3$-locally, this map is surjective, so for the rest of this subsection we focus
on the situation at the prime $2$, where this is the map from Spin-$\Z/8\Z$ bordism to Spin-$D_{16}$ bordism.
Looking at \cref{image_from_spin_Z8}, we can characterize the image of this map.
\begin{itemize}
	\item In dimensions $0$ and $4$, the map is surjective, so we learn that $\mathrm{pt}_+$ generates
	$\Omega_0^{\Spin\text{-}D_{16}} (\pt)$ and the Enriques surface generates $\Omega_4^{\Spin\text{-}D_{16}} (\pt)$.
	\item In dimensions $1$ and $3$, this map sees the orange triangle $\textcolor{RedOrange}{\Z/2\Z}$ summands, giving us
	lens spaces $L^k_4$ as generators of those summands in Spin-$\GL^+(2, \Z)$ bordism; however, there are other
	$\Z/2\Z$ summands that are not in the image.
	\item In dimensions $2$ and $10$, the image of the map is $0$.
	\item $\Omega_5^{\Spin\text{-}D_{16}} (\pt) \cong (\Z/2\Z) \oplus (\Z/2\Z)$; Margolis' theorem tells us these summands are
	detected by the mod $2$ characteristic classes $w^2x$ and $w^2y$. The $w^2y$ summand is in the image of the map
	from Spin-$\Z/8\Z$ bordism, and as such is generated by the lens space $L^5_4$.
	\item In dimension $7$, the image of the map contains none of the generators, though it does contain twice the
	generator of the green triangle $\textcolor{Green}{\Z/4\Z}$ summand.
	\item In dimension $8$, one of the generators is $\HP^2$, which is in the image of this map. The
	other generator $X$ is not in the image, but $2X$, which is bordant to the Bott manifold, is.
	\item In dimension $9$, $\Omega_9^{\Spin\text{-}D_{16}}(\pt)$ consists of eight $\Z/2\Z$ summands. Three are in the
	image of the map from Spin-$\Z/8\Z$ bordism: the green triangle $\textcolor{Green}{\Z/2\Z}$ and blue square
	$\textcolor{MidnightBlue}{\Z/2\Z}$ summands in $\ko_9$, and the orange triangle $\textcolor{RedOrange}{\Z/2\Z}$ in $\ko_1$,
	giving us three generators: $L^9_4$, $\widetilde{L}^9_4$, and $\HP^2\times L^1_4$, respectively.
	\item In dimension $11$, the image of the map contains the $\Z/8\Z$ summand, giving us $Q_4^{11}$ as a generator,
	but not the four $\Z/2\Z$ summands in $\ko_{11}$ which are detected by the mod $2$ cohomology classes $x^{11}$,
	$y^{11}$, $w^4x^3$, and $w^4y^3$. On $\ko_3$ it sees one of the three $\Z/2\Z$ summands like in dimension $3$,
	giving us the generator $\HP^2\times L_4^3$. On $\ko\ang 2_1$, it sees one of the generators, giving us
	$X_{10}\times L_4^1$.
\end{itemize}
See \cref{mp2_image_diagram} for a depiction of this map.

From the perspective of F-theory, classes in $\Omega_*^{\Spin\text{-}\Mp(2, \Z)} (\pt)$
correspond to defects that preserve supersymmetry, and classes in $\Omega_*^{\Spin\text{-}\GL^+(2, \Z)} (\pt)$ can break
supersymmetry. Thus the image of the map $\Omega_*^{\Spin\text{-}\Mp(2, \Z)}\to\Omega_*^{\Spin\text{-}\GL^+(2, \Z)}$
consists of the classes which, even though we do not care about preserving supersymmetry, can nevertheless be
matched to supersymmetry-preserving defects. See Section~\ref{subsec:GLcodim2} for more information. For example,
in dimension $1$, the map $\Omega_1^{\Spin\text{-}\Mp(2, \Z)}\to\Omega_1^{\Spin\text{-}\GL^+(2, \Z)}$ is a map
$\Z/24\Z\to (\Z/2\Z) \oplus (\Z/2\Z)$: each generator of $\Z/24\Z$ is sent to the generator of
one $\Z/2\Z$ summand, and the second $\Z/2\Z$ summand is new.
\begin{itemize}
	\item The $\Z/24\Z$ summand of $\Omega_1^{\Spin\text{-}\Mp(2, \Z)}(\pt)$ is generated by the lens space $L_{12}^1$,
	corresponding in string theory to a D7-brane (see Section~\ref{subsec:Mpcodim2}).
	Twice the image of this class in Spin-$\GL^+(2, \Z)$ bordism vanishes; this means that instead of needing a
	stack of $24$ D7-branes to be trivial, it suffices to take just two.
	\item The novel $\Z/2\Z$ summand in $\Omega_1^{\Spin\text{-}\GL^+(2, \Z)} (\pt)$, generated by $S_R^1$, corresponds
	to a new defect that we predict to not preserve supersymmetry. This is the reflection $7$-brane, as we discuss
	in Section~\ref{subsec:GLcodim2} (see also~\cite{Dierigl:2022reg}).
\end{itemize}


\begin{figure}[h!]
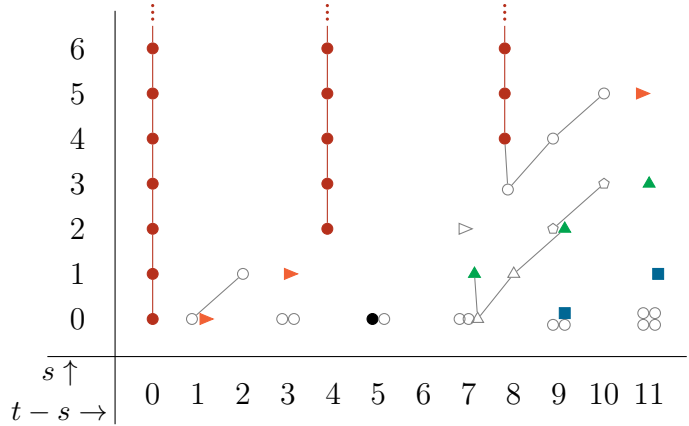

\centering
\begin{sseqdata}[name=mp2image, classes=fill, xrange={0}{11}, yrange={0}{6}, scale=0.6, Adams grading, >=stealth,
x label = {$\displaystyle{s\uparrow \atop t-s\rightarrow}$},
x label style = {font = \small, xshift = -25ex, yshift=5.5ex}
]
\class(5, 0)
\begin{scope}[fill=none, draw=gray]
	\class(3, 0)
	\class(3, 0)
	\class(5, 0)
	\class(7, 0)
	\class(7, 0)
	\class(9, 0)
	\class(9, 0)
	\class(11, 0)
	\class(11, 0)
	\class(11, 0)
	\class(11, 0)
\end{scope}
\begin{scope}[draw=none, fill=none]
	\class(4, 0)
	\class(4, 1)
	\class(7, 1)
	\class(8, 0)
	\class(8, 0)
	\class(11, 1)\class(11, 1)
	\class(11, 2)\class(11, 2)
	\class(11, 3)
	\class(11, 4)
	\class(12, 0)
	\class(12, 1)\class(12, 1)\class(12, 1)
	\class(12, 2)\class(12, 2)
	\class(12, 3)\class(12, 3)
	\class(12, 4)
	\class(12, 5)
\end{scope}

\begin{scope}[draw=gray, fill=none]
	\class(1, 0)
	\class(2, 1)\structline[gray]
	\class(8, 3)
	\class(9, 4)
	\class(10, 5)\structline[gray]
\end{scope}

\begin{scope}[BrickRed]
	\class(0, 0)\AdamsTower{}
	
	\class(4, 2)\AdamsTower{}
		\class(8, 4)\AdamsTower{}
		\structline[gray](8, 3)(8, 4)
		\structline[gray](8, 3)(9, 4)
		\class(12, 6)\AdamsTower{}
\end{scope}
\begin{scope}[RedOrange, isosceles triangle]
	\class(1, 0)
	\class(3, 1)\AdamsTower{}
	\class[draw=gray, fill=none](7, 2)
	\class(7, 3)\AdamsTower{}
	\structline[gray](7, 2)(7, 3)
	\class[draw=gray, fill=none](8, 3)
		\structline[gray](7, 2)(8, 3, -1)
	\class(9, 4)\structline[gray]
	\class(11, 5)\AdamsTower{}
\end{scope}
\begin{scope}[Goldenrod!67!black, regular polygon,regular polygon sides=5]
	\class(4, 0)\AdamsTower{}
	\class[draw=gray, fill=none](8, 1)
	\class(8, 2)\AdamsTower{}
	\structline[gray](8, 1)(8, 2)
	\class[draw=gray, gray, fill=none](9, 2)
		\structline[gray](8, 1)(9, 2)
	\class[draw=gray, fill=none](10, 3)\structline[gray]
	\class(12, 4)\AdamsTower{}
\end{scope}
\begin{scope}[Green, regular polygon, regular polygon sides=3, minimum width=1ex]
	\class[draw=gray, fill=none](7, 0)
	\class(7, 1)\AdamsTower{}
	\structline[gray](7, 0, -1)(7, 1, -1)
	\class[draw=gray, fill=none](8, 1)
		\structline[gray](7, 0, -1)(8, 1, -1)
	\class(9, 2)\structline[gray]
	\class(11, 3)\AdamsTower{}
\end{scope}
\begin{scope}[PineGreen, star]
	\class(8, 0)\AdamsTower{}
	\class(9, 0)
	\class(10, 1)\structline
	\class(12, 2)\AdamsTower{}
\end{scope}
\begin{scope}[MidnightBlue, rectangle]
	\class(9, 0)
	\class(11, 1)\AdamsTower{}
\end{scope}
\begin{scope}[Fuchsia, diamond]
	\class(12, 0)\class(12, 1)\class(12, 2)\class(12, 3)\class(12, 4)\class(12, 5)
	\class(12, 6)\class(12, 7)\class(12, 8)
\end{scope}
\begin{scope}[draw=none, fill=none]
	\class(3, 8)
	\class(7, 8)\class(7, 8)
	\class(11, 8)\class(11, 8)\class(11, 8)
	\class(12, 5)\class(12, 6)
	\class(12, 3)\class(12, 4)
	\class(12, 5)\class(12, 6)
\end{scope}
\foreach \x in {4, 8, 12} {
	\foreach \y in {0, ..., 6} {
		\d2(\x, \y, -1)(\x-1, \y+2, -1)
	}
}
\d2(8, 1, 1)(7, 3, 1)
\d2(8, 2, 1)(7, 4, 1)
\d2(8, 3, 3)(7, 5, 1)
\foreach \y in {4, ..., 6} {
	\d2(8, \y, 2)(7, \y+2, 1)
}
\foreach \y in {2, ..., 6} {
	\d2(12, \y, -2)(11, \y+2, -2)
}
\foreach \y in {4, ..., 6} {
	\d2(12, \y, -3)(11, \y+2, -3)
}
\d[gray]3(9, 0, 3)(8, 3, 2)
\d[gray]3(10, 1)(9, 4, 2)
\end{sseqdata}
\printpage[name=mp2image, page=4]
\caption{Part of the image of the map $\Omega_*^{\Spin\text{-}\Mp(2, \Z)}(\pt) \to \Omega_*^{\Spin\text{-}\GL^+(2, \Z)} (\pt)$:
specifically, the part coming from $\ko_*$. The $\ko_{*-8}$ piece is similar; the $\ko\ang 2_{*-10}$ piece is
straightforward, thanks to \cref{ko2_is_homology}, and the $3$-local part is surjective, as we show in
Section \ref{odd_primary_glplus}.}
\label{mp2_image_diagram}
\end{figure}

\subsubsection{Generators coming from $\Omega_k^\Spin(B\Z/2\Z)$}
\label{Z2_to_D16}

Let $i_4, \tilde\imath_4\colon D_4\inj D_{16}$ be the two inclusions defined in Appendix \ref{app:embdihedral}:
specifically, these maps send rotations to rotations and reflections to reflections, but $i_4$ sends the standard
generating reflection $s$ of $D_4$ to the standard generating reflection of $D_8$, and $\tilde\imath_4$ sends $s$
to a different reflection. The maps $i_4$ and $\tilde\imath_4$ induce two maps from Spin-$D_4$ manifolds to
Spin-$D_{16}$ mani\-folds. Since $D_4\cong(\Z/2\Z)\oplus(\Z/2\Z)$, $\Spin\times_{\Z/2\Z}D_4\cong\Spin\times\Z/2\Z$ as
symmetry types: a Spin-$D_4$ manifold is the same thing as a Spin manifold with a principal $\Z/2\Z$-bundle. The
generators of $\Omega_k^\Spin(B\Z/2\Z)$ are well understood, so we will use them to find more Spin-$D_8$ manifolds.
See \cref{Z2_einf} for a depiction of what this method can see.

\begin{figure}[h!]
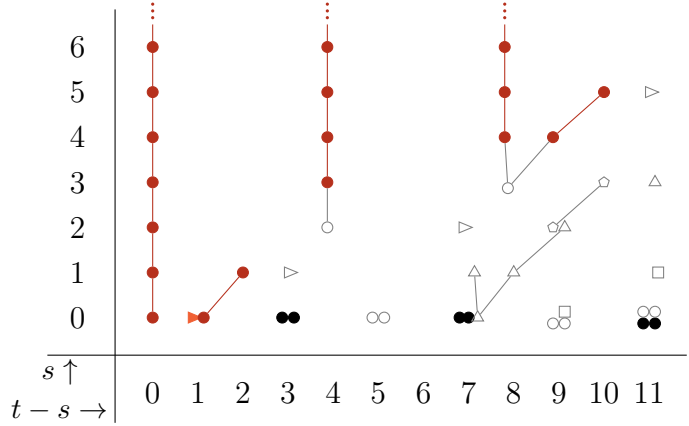

\centering
\begin{sseqdata}[name=dihsseqZ2, classes=fill, xrange={0}{11}, yrange={0}{6}, scale=0.6, Adams grading, >=stealth,
x label = {$\displaystyle{s\uparrow \atop t-s\rightarrow}$},
x label style = {font = \small, xshift = -25ex, yshift=5.5ex}
]
\class[RedOrange, isosceles triangle](1, 0)
\class(3, 0)
\class(3, 0)
\class[draw=gray, fill=none](5, 0)
\class[draw=gray, fill=none](5, 0)
\class(7, 0)
\class(7, 0)
\class[draw=gray, fill=none](9, 0)
\class[draw=gray, fill=none](9, 0)
\class(11, 0)
\class(11, 0)
\class[draw=gray, fill=none](11, 0)
\class[draw=gray, fill=none](11, 0)
\begin{scope}[BrickRed]
	\class(0, 0)\AdamsTower{}
	\class(1, 0)
	\class(2, 1)\structline
	\class(4, 3)\AdamsTower{}
	\class(8, 4)\AdamsTower{}
	\class(9, 4)
	\class(10, 5)\structline
	\class(12, 7)\AdamsTower{}
\end{scope}
\begin{scope}[draw=none, fill=none]
	\class(4, 0)
	\class(4, 1)
	\class(7, 1)
	\class(8, 0)
	\class(8, 0)
	\class(11, 1)\class(11, 1)
	\class(11, 2)\class(11, 2)
	\class(11, 3)
	\class(11, 4)
	\class(12, 0)
	\class(12, 1)\class(12, 1)\class(12, 1)
	\class(12, 2)\class(12, 2)
	\class(12, 3)\class(12, 3)
	\class(12, 4)
	\class(12, 5)
\end{scope}

\begin{scope}[draw=gray, fill=none]
\class(4, 2)\structline(4, 2)(4, 3)
\class(8, 3)\structline(8, 3)(8, 4)
\class(12, 6)\structline(12, 6)(12, 7)
	\structline(8, 3)(9, 4)
	\class[isosceles triangle](3, 1)\AdamsTower{}
	\class[isosceles triangle](7, 2)\AdamsTower{}
	\class(8, 3)
		\structline(7, 2)(8, 3, -1)
	\class(9, 4)\structline
	\class[isosceles triangle](11, 5)
	\class(4, 0)\AdamsTower{}
	\class(8, 1)\AdamsTower{}
	\class[regular polygon,regular polygon sides=5](9, 2)
		\structline(8, 1)(9, 2)
	\class[regular polygon,regular polygon sides=5](10, 3)\structline
	\begin{scope}[regular polygon, regular polygon sides=3, minimum width=1ex]
		\class(7, 0)\AdamsTower{}
		\class(8, 1)
			\structline(7, 0, -1)(8, 1, -1)
		\class(9, 2)\structline
		\class(11, 3)
	\end{scope}
	\class(8, 0)\AdamsTower{}
	\class(9, 0)
	\class(10, 1)\structline
	\class[rectangle](9, 0)
	\class[rectangle](11, 1)
\end{scope}
\begin{scope}[draw=none, fill=none]
	\class(3, 10)
	\class(7, 10)\class(7, 10)
	\class(11, 10)\class(11, 10)\class(11, 10)
\end{scope}
\class(3, 9)
\class(3, 8)
\class(7, 8)
\class(7, 9)
\foreach \x in {4, 8} {
	\foreach \y in {0, ..., 7} {
		\d2(\x, \y, -1)(\x-1, \y+2, -1)
	}
}
\d2(8, 1, 1)(7, 3, 1)
\d2(8, 2, 1)(7, 4, 1)
\d2(8, 3, 3)(7, 5, 1)
\foreach \y in {4, ..., 7} {
	\d2(8, \y, 2)(7, \y+2, 1)
}

\d[gray]3(9, 0, 3)(8, 3, 2)
\d[gray]3(10, 1)(9, 4, 2)
\end{sseqdata}
\printpage[name=dihsseqZ2, page=4]
\caption{The colored and black summands are the image of $\Omega_*^\Spin(B\Z/2\Z)\to\Omega_*^{\Spin\text{-}D_{16}}(\pt)$
under the two maps $i_4$, $\tilde\imath_4$. See Section \ref{Z2_to_D16} for more information. Like in
\cref{mp2_image_diagram}, this picture just encodes the summands coming from $\ko_*$; $\ko_{*-8}$ is analogous and
$\ko\ang 2_{*-10}$ is straightforward thanks to \cref{ko2_is_homology}.}
\label{Z2_einf}
\end{figure}

Given a Spin manifold $M$ with a principal $\Z/2\Z$-bundle, we will let $M$ denote its Spin-$D_{16}$ structure
induced by $i_4$ and $\widetilde M$ denote its Spin-$D_{16}$ structure induced by $\tilde\imath_4$. We will find
several examples where one generator of $\Omega_*^\Spin(B\Z/2\Z)$ gives us two different generators of
$\Omega_*^{\Spin\text{-}D_{16}} (\pt)$ under $i_4$ and $\tilde\imath_4$.

Recall that $H^*(B\Z/2\Z;\Z/2\Z) \cong \Z/2\Z[a]$, with $a$ in degree $1$.
\begin{lem}
\label{spin_D4_pullback}
The pullback map $i_4^*\colon H^*(BD_8;\Z/2\Z) \to H^*(B\Z/2\Z;\Z/2\Z)$ sends $x \mapsto a$, $y \mapsto 0$, and $w
\mapsto 0$. $\tilde\imath_4^*$ sends $x\mapsto a$, $y\mapsto a$, and $w\mapsto 0$.
\end{lem}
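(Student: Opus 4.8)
The plan is to reduce everything to restriction along reflection subgroups of $D_8$ and then reuse the representation-theoretic facts already established in the proof of \cref{whichcoh}. Since Spin-$D_{16}$ bordism has been sheared through $BD_8$, the relevant maps of spaces are $B\Z/2\Z\to BD_8$: as recorded in Appendix~\ref{app:embdihedral}, the maps $i_4$ and $\tilde\imath_4$ restrict on the non-central $\Z/2\Z\subset D_4$ to the two inclusions of the two non-conjugate reflection subgroups $\Z/2\Z\hookrightarrow D_8$, and $i_4^*$, $\tilde\imath_4^*$ are the induced maps $H^*(BD_8;\Z/2\Z)\to H^*(B\Z/2\Z;\Z/2\Z)=\Z/2\Z[a]$. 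Because $H^*(BD_8;\Z/2\Z)$ is generated as a ring by $x$, $y$, $w$ by \cref{dih_mod_2_coh}, it is enough to compute the images of these three classes under each map.

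First I would dispatch $x$ and $w$, whose images are forced and do not depend on which reflection subgroup is used. Since $x=w_1(V)=w_1(\Det V)$ and the proof of \cref{whichcoh} shows that $\Det V$ restricts to the nontrivial character on every reflection subgroup, we get $i_4^*(x)=\tilde\imath_4^*(x)=a$. Likewise $w=w_2(V)$, and that same proof shows that $V$ restricts to a reflection subgroup as $\underline\R\oplus L$ for a real line bundle $L$, so $i_4^*(w)=\tilde\imath_4^*(w)=w_2(\underline\R\oplus L)=0$. (Alternatively one can see both of these by restricting $V$ directly to a $\Z/2\Z$ generated by a reflection $T$, where it becomes $\underline\R\oplus\chi_{\mathrm{sgn}}$ with $\Det V=\chi_{\mathrm{sgn}}$.)

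The remaining case, $y=w_1(\chi_y)$, is where the two embeddings genuinely differ, and it is the main point to get right. By the argument in the proof of \cref{whichcoh} (which cites Appendix~\ref{app:embdihedral}), $\chi_y$ restricts to the nontrivial character along $\tilde\imath_4$, so $\tilde\imath_4^*(y)=a$. On the other hand, Appendix~\ref{app:embdihedral} records that $i_4$ and $\tilde\imath_4$ induce distinct maps on mod $2$ cohomology; since we have just seen that they agree on $x$ and $w$, and the cohomology ring is generated by $x$, $y$, $w$, they must disagree on $y$, and as $H^1(B\Z/2\Z;\Z/2\Z)=\{0,a\}$ this forces $i_4^*(y)=0$. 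The only real obstacle is bookkeeping: making sure the conventions are set up so that $i_4$ is precisely the embedding whose associated reflection subgroup lies in the kernel of $\chi_y$ — but once the facts assembled in the proof of \cref{whichcoh} are in hand, there is no new computation to do.
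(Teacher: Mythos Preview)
Your approach is essentially the same as the paper's: reduce to restricting the representations $V$ and $\chi_y$ to the reflection subgroups and read off Stiefel-Whitney classes. For $x$ and $w$ your argument is correct and matches the paper's, since the proof of \cref{whichcoh} already shows that along any reflection subgroup $V$ restricts to $\underline\R\oplus\chi$ and that $\Det(V)$ restricts nontrivially.

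The one soft spot is your argument for $i_4^*(y)=0$. You deduce it indirectly from the claim that $i_4^*$ and $\tilde\imath_4^*$ are distinct on cohomology, citing Appendix~\ref{app:embdihedral}. But the Appendix only defines the embeddings; it does not make any cohomological assertion. The statement that the two maps differ on cohomology appears in the proof of \cref{whichcoh}, and there it is asserted and then partially justified by computing $\tilde\imath_4^*(\chi_y)\ne 0$. That alone does not yet show the maps differ, so your indirect argument is circular unless you already know $i_4^*(y)=0$. The paper closes this directly: since by definition $\chi_y(s)=1$ on the standard generating reflection $s$, the restriction $i_4^*(\chi_y)$ is the trivial character and hence $i_4^*(y)=0$. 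This is exactly the ``bookkeeping'' you flag in your final sentence; it is the actual content of this step, not merely a convention check, and you should make it explicit rather than relying on the distinctness claim.
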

\begin{proof}
Recall that we defined $x$, $y$, and $w$ as Stiefel-Whitney classes of vector bundles associated to representations
of $D_8$. Therefore to understand $i_4^*$ and $\tilde\imath_4^*$, it suffices to restrict those representations to
$\Z/2\Z \subset D_8$ and compute the Stiefel-Whitney classes of the associated vector bundles over $B\Z/2\Z$.
\begin{itemize}
	\item For $x$ and $w$, we used the standard representation $V$ of $D_8$ on $\mathbb R^2$ by rotations and
	reflections. For both $i_4$ and $\tilde\imath_4$, the pullback of $V$ is $\mathbb R\oplus\sigma$, where
	$\sigma$ is the sign representation, so the pullback of $x$ is $w_1(\sigma) = a$ and the pullback of $w$ is
	$w_2(\sigma) = 0$.
	\item For $y$, we used the real character $\chi_y$ of $D_8$ in which a quarter-turn is sent to $-1$ and
	the reflection $s$ is sent to $1$. Therefore $i_4^*(\chi_y)$ is trivial, so $i_4^*(y) = 0$ --- and since
	$\chi_y(rs) = -1$, $\tilde\imath_4^*(y)$ is non-trivial! Hence it is the unique nonzero character of $\Z/2\Z$,
	which has $w_1 = a$, so $\tilde\imath_4^*(y) = a$.
	\qedhere
\end{itemize}
\end{proof}
In particular, we hypothesize that if $c$ is a mod $2$ cohomology class corresponding to some generator of
$\Omega_*^{\Spin\text{-}D_{16}} (\pt)$ in Adams filtration $0$, and $i_4^*(c)$ or $\tilde\imath_4^*(c)$ is nonzero, then
perhaps that generator can be realized by an element of $\Omega_*^\Spin(B\Z/2\Z)$. This guess applies to
$\textcolor{BrickRed}{x}$, $\textcolor{RedOrange}{y}$, $x^3$, $y^3$, $x^7$, $y^7$, $w_4^2x$, $w_4^2y$, $x^{11}$,
$y^{11}$, $w_4^2x^3$, $w_4^2y^3$, $w_4w_6x$, and $w_4w_6y$ --- and in all of these cases, we can find a generator.
\begin{itemize}
	\item For $\textcolor{BrickRed}{x}$, $\textcolor{RedOrange}{y}$, $x^3$, $y^3$, $x^7$, $y^7$, $x^{11}$, and
	$y^{11}$, the generator is the real projective space of the appropriate dimension, with principal $\Z/2\Z$-bundle
	$S^k\to\RP^k$. For $x^k$, use $\RP^k$, and for $y^k$, use $\widetilde{\RP}{}^k$. We use $S_R^1$ to denote
	$\RP^1$, as we often think of this generator in Part~\ref{p:physics} as a circle with duality bundle induced by
	the Möbius bundle and a map $\Z/2\Z\inj\GL^+(2, \Z)$ defined by a reflection.
	\item For $w_4^2x$ and $w_4^2y$, use $\HP^2\times S^1_R$, resp.\ $\HP^2\times\widetilde S^1_R$, with principal
	$\Z/2\Z$-bundle the Möbius bundle on the $S^1$ factor. Likewise, for $w_4^2x^3$ and $w_4^2y^3$, use $\HP^2\times
	\RP^3$ and $\HP^2\times \widetilde{\RP}{}^3$, with principal $\Z/2\Z$-bundle $\HP^2\times
	S^3\to\HP^2\times\RP^3$.
	\item For $w_4w_6x$, use $X_{10}\times S^1_R$, and for $w_4w_6y$, use $X_{10}\times\widetilde S^1_R$; in both cases
	the principal $\Z/2\Z$-bundle is the Möbius bundle on the $S^1_R$ factor.
\end{itemize}
We can also use $\Omega_*^\Spin(B\Z/2\Z)$ to see three classes in positive Adams filtration. The $h_1$-action
$E_\infty^{0,1}\to E_\infty^{1,3}$ lifts to the product with $S_p^1$, as we discussed in Section \ref{sss:non_hidden},
implying that $\Omega_2^{\Spin\text{-}D_{16}} (\pt)$ is
generated by $S_p^1\times S^1_R$, with the principal $\Z/2\Z$-bundle non-trivial on the second factor; it
does not matter whether we use $i_4$ or $\tilde\imath_4$ here. The action of the Bott element $w\in\Ext(\Z/2\Z)$
defines an isomorphism $E_\infty^{0,1}\to E_\infty^{4,13}$; as we mentioned in Section \ref{sss:non_hidden}, this
means the red circle summand $\textcolor{BrickRed}{\Z/2\Z}$ of
$\Omega_9^{\Spin\text{-}D_{16}} (\pt)$ (corresponding to $E_\infty^{4,13}$)
is generated by $B \times S^1_R$, with the principal $\Z/2\Z$-bundle non-trivial on
the second factor. Finally, the $h_1$-action $E_\infty^{4,13}\to E_\infty^{5,15}$ lifts to tell us that the red
circle
$\textcolor{BrickRed}{\Z/2\Z}$ summand of $\Omega_{10}^{\Spin\text{-}D_{16}}(\pt)$ is generated by $B\times
S_p^1\times S^1_R$, with the principal $\Z/2\Z$-bundle non-trivial on the last factor.
%
%

\subsubsection{Spin-$D_8$ generators and the Arcana}
\label{arcana}

We can do better by using $D_4$ instead of $\Z/2\Z$. Analogous to $i_4$ and $\tilde\imath_4$, there are two maps
$i_8,\tilde\imath_8\colon D_4\inj D_{8}$, which we also discuss in Appendix \ref{app:embdihedral}. Since $D_4 \cong
(\Z/2\Z) \oplus (\Z/2\Z)$, $H^*(BD_4;\Z/2\Z)\cong\Z/2\Z[a, b]$ with $\abs a = \abs b = 1$.
\begin{lem}
\label{D8_pullback}
$i_8^*(x) = \tilde\imath_8^*(x) = \tilde\imath_8^*(y) = a$, $i_8^*(y) = 0$, and
$i_8^*(w) = \tilde\imath_8^*(w) = ab + b^2$.
\end{lem}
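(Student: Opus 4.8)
The statement to prove is Lemma~\ref{D8_pullback}: for the two inclusions $i_8, \tilde\imath_8\colon D_4\inj D_8$, the pullback maps on mod $2$ cohomology satisfy $i_8^*(x) = \tilde\imath_8^*(x) = \tilde\imath_8^*(y) = a$, $i_8^*(y) = 0$, and $i_8^*(w) = \tilde\imath_8^*(w) = ab+b^2$.

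The plan is to argue exactly as in the proof of \cref{spin_D4_pullback}: the generators $x$, $y$, $w$ of $H^*(BD_8;\Z/2\Z)$ are defined (in \cref{dih_mod_2_coh}) as Stiefel--Whitney classes of vector bundles associated to representations of $D_8$ --- namely $x = w_1(V)$, $y = w_1(\chi_y)$, $w = w_2(V)$, where $V$ is the standard two-dimensional real representation and $\chi_y$ is the character sending a generating rotation to $-1$ and reflections to $+1$. Since Stiefel--Whitney classes are natural, it suffices to restrict each of $V$ and $\chi_y$ along $i_8$ and $\tilde\imath_8$ to $D_4\cong(\Z/2\Z)\oplus(\Z/2\Z)$ and compute the resulting characteristic classes in $H^*(BD_4;\Z/2\Z) = \Z/2\Z[a,b]$. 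Here I would fix the identification of generators: let $D_4 = \langle s_1\rangle\times\langle s_2\rangle$ with $a = w_1$ of the character detecting $s_1$ and $b = w_1$ of the character detecting $s_2$, and recall from Appendix~\ref{app:embdihedral} that $i_8$ sends $s_1,s_2$ to the two ``standard'' commuting reflections of $D_8$ (whose product is the half-turn), while $\tilde\imath_8$ differs by post-composing one of the reflection generators with the half-turn, i.e.\ it sends one generator to a reflection in the other conjugacy class. The key computational inputs are then: (i) the restriction of $V$ to the Klein four-subgroup $D_4\subset D_8$ decomposes as $\chi\oplus\chi'$ for two distinct nontrivial characters, so $w_1(V|_{D_4})$ and $w_2(V|_{D_4})$ are symmetric functions of the corresponding degree-one classes; (ii) the restriction of $\chi_y$ along $i_8$ is trivial (since $\chi_y$ is $+1$ on reflections and $i_8$ hits only reflections and their product, on which $\chi_y$ is still $+1$), but along $\tilde\imath_8$ it is nontrivial on exactly one generator (since $\chi_y(rs) = -1$ for a reflection $r$ not equal to $s$), giving $a$.

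More concretely: for $x = w_1(V)$, in both embeddings $V|_{D_4}\cong \sigma_1\oplus\sigma_2$ where $\sigma_1,\sigma_2$ are nontrivial characters with $w_1(\sigma_1) = a$ and $w_1(\sigma_2) = $ either $b$ or $a+b$ depending on how the half-turn sits; in either case $w_1(V|_{D_4}) = a$ because the total first Stiefel--Whitney class of $V$ equals $w_1(\Det V)$, and $\Det V$ is the sign-of-permutation character $q\colon D_8\to\Z/2\Z$ quotienting by rotations, whose pullback along both $i_8$ and $\tilde\imath_8$ detects precisely the reflection generator $s_1$, hence has $w_1 = a$. For $w = w_2(V)$: since $V|_{D_4} = \sigma_1\oplus\sigma_2$, the Whitney sum formula gives $w_2(V|_{D_4}) = w_1(\sigma_1)w_1(\sigma_2)$. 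With the standard generators one has $w_1(\sigma_1) = a$, $w_1(\sigma_2) = a+b$ (the two nontrivial characters appearing are those detecting $s_1$ and the half-turn $s_1 s_2$), so $w_2(V|_{D_4}) = a(a+b) = a^2+ab$; but $a^2 = ab + b^2$ is forced in the cohomology ring $\Z/2\Z[a,b]$ if... actually not --- here I need to be careful: $a^2 + ab$ versus $ab + b^2$ differ by $a^2 + b^2$. So the correct bookkeeping is that the pair of nontrivial characters in $V|_{D_4}$ is $\{$ character detecting $s_2$, character detecting $s_1 s_2\}$, i.e.\ with $w_1$'s equal to $b$ and $a+b$, giving $w_2 = b(a+b) = ab + b^2$; and this is the same for $i_8$ and $\tilde\imath_8$ since both embed $D_4$ as the same Klein four-subgroup up to an outer automorphism of $D_4$ fixing the decomposition data relevant to $V$ (which factors through $D_8/\langle\text{half-turn}\rangle$). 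I would verify this by an explicit $2\times 2$ matrix computation of $V$ on the two reflection generators and their product.

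The main obstacle --- and the only real subtlety --- is getting the identification of the generators $a,b$ of $H^*(BD_4;\Z/2\Z)$ consistent with the conventions in Appendix~\ref{app:embdihedral} for $i_8$ and $\tilde\imath_8$, so that the asymmetry between $i_8^*(y) = 0$ and $\tilde\imath_8^*(y) = a$ comes out with the labelled generator $a$ rather than $b$ or $a+b$, and simultaneously $i_8^*(w) = \tilde\imath_8^*(w) = ab+b^2$ is symmetric in the stated form. I expect this to be bookkeeping rather than genuinely hard: once one pins down that $i_8$ and $\tilde\imath_8$ agree on the rotation subgroup $\Z/4\Z$ of $D_8$ (hence both restrict $V$ to the rotation part the same way, pinning down the $b(a+b)$ factor of $w$) and differ only in which conjugacy class the ``$s_1$'' reflection generator lands in (hence only $\chi_y$ sees the difference), everything falls into place. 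I would close by explicitly tabulating $V|_{D_4}$ and $\chi_y|_{D_4}$ along both maps and reading off all six values, exactly paralleling the proof of \cref{spin_D4_pullback}.
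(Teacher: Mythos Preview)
Your approach is correct and is essentially the same as the paper's (which is a single short paragraph): restrict the defining representations $V$ and $\chi_y$ to $D_4$, split into one-dimensional characters since $D_4$ is Abelian, and read off the Stiefel--Whitney classes. For the bookkeeping you flag, the paper's convention (implicit from Appendix~\ref{app:embdihedral}) is that $D_4$ is generated by the half-turn $s_4$ and a reflection $r_4$ with $b$ dual to $s_4$ and $a$ dual to $r_4$; then $V|_{D_4}$ splits into characters with $w_1 = b$ and $w_1 = a+b$ (since both see $s_4\mapsto -1$), giving $w_1(V|_{D_4}) = a$ and $w_2(V|_{D_4}) = b(a+b) = ab+b^2$ directly, and your argument for $\chi_y$ goes through verbatim.
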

\begin{proof}
Recall that we defined $x$, $y$, and $w$ as Stiefel-Whitney classes of vector bundles associated to
$D_8$-representations. Therefore their pullbacks to $BD_4$ are the corresponding Stiefel-Whitney
classes of the bundles associated to the restrictions of those representations to $D_4$. Since $D_4$ is Abelian,
these representations split into sums of one-dimensional representations, allowing one to calculate their
Stiefel-Whitney classes in terms of $a$ and $b$.
\end{proof}
We say that a \term{Spin-$D_8$ structure} is analogous to a Spin-$D_{16}$ structure, but with $D_8$ in place of
$D_{16}$. Giambalvo~\cite{Gia76} and Pedrotti~\cite[Section 6]{Ped17} study Spin-$D_8$ bordism, and
Barkeshli-Chen-Hsin-Manjunath~\cite[Section VII.D]{BCHM21}, Ning-Qi-Gu-Wang~\cite[Section II]{NQGW21}, and
Manjunath-Calvera-Barkeshli~\cite{MCB22} apply this symmetry type to physics. \Cref{D8_pullback} says that a
Spin-$D_8$ structure is data of two principal $\Z/2\Z$-bundles (equivalent to a principal $(D_4 = (\Z/2\Z) \oplus
(\Z/2\Z)$-bundle) and a trivialization of $w_2(TM) - ab - b^2$, where $a$ and $b$ are the degree-$1$ characteristic
classes of the principal $\Z/2\Z$-bundles. Given a Spin-$D_8$ manifold $M$, $i_8$ and $\tilde\imath_8$ endow it
with two Spin-$D_{16}$ structures, which we denote $M$ and $\widetilde M$, respectively.

The shearing argument in \cref{spin_D16_shear} applies \textit{mutatis mutandis} with $D_4$ in place of $D_8$: if
$V$ is the (associated vector bundle to the) standard two-dimensional real representation of $D_4$, we obtain a
homotopy equivalence
\begin{equation}
	\mathit{MT}(\Spin\text{-}D_8)\overset\simeq\longrightarrow \MTSpin\wedge (BD_4)^{V + 3\Det(V) - 5}.
\end{equation}
So as usual, we need information about the $\cA(1)$-module structure on cohomology of $M_4\coloneqq (BD_4)^{V +
3\Det(V) - 5}$ to run the Adams spectral sequence for Spin-$D_8$ bordism. By pulling back from $H^*(BD_8;\Z/2\Z)$,
one learns that in $H^*(BD_4;\Z/2\Z)$, $w_1(V + 3\Det(V) - 5) = 0$ and $w_2(V + 3\Det(V) - 5) = ab + b^2$. Therefore
in $H^*(M_4;\Z/2\Z)$, $\Sq^1(U) = 0$ and $\Sq^2(U) = U(ab + b^2)$. The Steenrod squares of $a$ and $b$ can be
determined directly from the axioms to be $\Sq(a) = a + a^2$ and $\Sq(b) = b + b^2$. Therefore using the Cartan
formula we can compute Steenrod squares of arbitrary elements of $H^*(M_4;\Z/2\Z)$.
\begin{lem}
\label{triple_Sq2_D8}
Suppose $x\in H^k(M_4;\Z/2\Z)$ is such that $\Sq^2\Sq^2\Sq^2(x)\ne 0$. Then there is a $\Z/2\Z$ summand in
$\Omega_k^{\Spin\text{-}D_8} (\pt)$ detected by $x$.
\end{lem}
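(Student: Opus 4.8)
\textbf{Proof strategy for \cref{triple_Sq2_D8}.}
The plan is to reduce the claim to the computation of $\Ext_{\cA(1)}$ of a two-cell complex and then invoke Margolis' theorem together with a dimension count to kill all differentials and extension problems that could obstruct detection of the class. First I would observe that the hypothesis $\Sq^2\Sq^2\Sq^2(x)\ne 0$ is very restrictive: since $\Sq^2\Sq^2\Sq^2 = \Sq^2\Sq^4 + \Sq^7\Sq^1 \pmod{\text{higher}}$ in $\cA$ --- or, more to the point, since the only $\cA(1)$-module generated by a single element in which $\Sq^2$ applied three times in a row is nonzero is $\cA(1)$ itself --- the class $x$ must generate a free $\cA(1)$-summand $\Sigma^k\cA(1)$ inside $H^*(M_4;\Z/2\Z)$. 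I would make this precise by recalling the structure of $\cA(1)$ (drawn in \cref{cell_diagrams}): $\cA(1)$ is $8$-dimensional, and $\Sq^2\Sq^2\Sq^2$ is nonzero only on its bottom class. Any cyclic $\cA(1)$-module on which $(\Sq^2)^3\ne 0$ is therefore all of $\cA(1)$, so the submodule of $H^*(M_4;\Z/2\Z)$ generated by $x$ is free; since free modules over $\cA(1)$ split off as direct summands (they are injective as well as projective over this Poincar\'e duality algebra), we get a splitting $H^*(M_4;\Z/2\Z)\cong \Sigma^k\cA(1)\oplus M'$ as $\cA(1)$-modules with the generator of the $\Sigma^k\cA(1)$ summand being $x$.

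Next I would apply Margolis' theorem in the form of \cref{Margolis_kills_differentials}: the $\Sigma^k\cA(1)$ summand contributes a single $\Z/2\Z$ to the $E_2$-page of the Adams spectral sequence computing $\ko_*(M_4)$, sitting in bidegree $(s,t) = (0,k)$, i.e.\ topological degree $k$ and filtration $0$; and this $\Z/2\Z$ supports no nonzero differentials, is not hit by any differential, and participates in no nonsplit extension on the $E_\infty$-page. Therefore it survives to give a $\Z/2\Z$ direct summand of $\ko_k(M_4)$. By \cref{ABPthm}, $\ko_k(M_4)$ is a direct summand of $\Omega_k^\Spin(M_4)\otimes\Z_{(2)}$, and by the shearing identification $\mathit{MT}(\Spin\text{-}D_8)\simeq \MTSpin\wedge M_4$ established just above the lemma, $\Omega_k^\Spin(M_4) = \Omega_k^{\Spin\text{-}D_8}(\pt)$. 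Chasing through these identifications, the $\Z/2\Z$ summand we produced is a summand of $\Omega_k^{\Spin\text{-}D_8}(\pt)$, and it is detected by $x$ in the sense that the composite $\Omega_k^{\Spin\text{-}D_8}(\pt)\to H_k(M_4;\Z/2\Z) \to \Z/2\Z$ pairing against $x$ (equivalently, the mod $2$ Hurewicz / Thom class pairing, or the corresponding Stiefel--Whitney-type characteristic number on the $\Spin\text{-}D_8$ manifold) is nonzero on the generator.

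The main obstacle, and the step I would spend the most care on, is the algebraic input: verifying that $(\Sq^2)^3 x \ne 0$ genuinely forces a free $\cA(1)$-summand rather than merely a large cyclic quotient of $\cA(1)$. One has to rule out, for instance, a cyclic module isomorphic to $\cA(1)/(\Sq^1)$ or to the Joker $J = \cA(1)/\Sq^3$, on which one might worry $(\Sq^2)^3$ could still be nonzero --- but in $J$ the top class is in degree $5$ with $J$ only $5$-dimensional, so $(\Sq^2)^3$ lands above the top and vanishes, and similarly for $\cA(1)/(\Sq^1)$ which is $4$-dimensional. The cleanest way to organize this is to note that $(\Sq^2)^3$ has degree $6$ and connects the bottom of $\cA(1)$ to a nonzero class only when all four intermediate $\Sq^2$- and $\Sq^1$-paths of $\cA(1)$ are present; concretely, in $\cA(1)$ one has $\Sq^2\Sq^2\Sq^2 = \Sq^2\Sq^1\Sq^2\Sq^1$ acting as an isomorphism from degree $n$ to degree $n+6$ only on the free module. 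Once this is pinned down, the splitting-off of free summands is the standard fact that $\cA(1)$ is self-injective, and the rest is a formal consequence of \cref{Margolis_kills_differentials} and the shearing and Anderson--Brown--Peterson identifications already in hand. A secondary, minor point to check is that the mod $2$ reduction one uses to ``detect'' the class really is the restriction of a characteristic number on $\Spin\text{-}D_8$ manifolds; this follows from the compatibility of the Thom isomorphism with the Hurewicz map, exactly as in the parallel detection arguments for $\RP^k$ and $\widetilde{\RP}{}^k$ in \cref{Z2_to_D16}.
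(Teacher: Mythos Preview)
Your proposal is correct and follows essentially the same approach as the paper's proof: both argue that $\Sq^2\Sq^2\Sq^2(x)\ne 0$ forces $x$ to generate a free $\cA(1)$-summand, then invoke Margolis' theorem (\cref{Margolis_kills_differentials}) to conclude that the resulting $\Z/2\Z$ in filtration zero survives, splits off, and is detected by $x$. The paper's write-up is terser, citing \cite[Lemma D.8]{FH16} for the free-summand step and \cite[Section 8.4]{FH21} for the detection-by-cohomology step, whereas you supply the reasoning directly (cyclic $\cA(1)$-modules with $(\Sq^2)^3\ne 0$ must be free; $\cA(1)$ is self-injective so free summands split). One small correction: your aside that $\Sq^2\Sq^2\Sq^2 = \Sq^2\Sq^4 + \Sq^7\Sq^1$ in $\cA$ is off (it equals $\Sq^5\Sq^1$), but you immediately and correctly pivot to the $\cA(1)$-internal argument, which is what actually matters here.
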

\begin{proof}
The fact that $\Sq^2\Sq^2\Sq^2(x)\ne 0$ implies that $x$ generates a free rank-$1$ summand in
$H^*(M_4;\Z/2\Z)$~\cite[Lemma D.8]{FH16}. Margolis' theorem, in the form of \cref{Margolis_kills_differentials}, then
implies the element of $\Ext^{0,k}\big(H^*(M_4;\Z/2\Z)\big)$ corresponding to this summand survives to the $E_\infty$-page
and generates a $\Z/2\Z$ summand of $\Omega_k^{\Spin\text{-}D_8} (\pt)$. Finally, classes on the $E_\infty$-page in
filtration $0$ are detected by the corresponding classes in mod $2$ cohomology (see~\cite[Section 8.4]{FH21}),
so this summand is detected by $x$.
\end{proof}
We will use this several times below to show that various characteristic classes detect elements in Spin-$D_8$
bordism; in all cases, we verified $\Sq^2\Sq^2\Sq^2(x)\ne 0$ using a computer, though these computations are not
too tedious to do by hand.

Giambalvo~\cite[Lemma 4.2]{Gia76} shows that the Adams spectral sequence for Spin-$D_8$ bordism
collapses.\footnote{What Giambalvo computes is slightly different: the characteristic class condition is $w_2(M) =
ab$, rather than $w_2(M) = ab + b^2$. These two conditions are related by an automorphism of $D_4 =
(\Z/2\Z) \oplus (\Z/2\Z)$, so their notions of bordism are the same.}
Now we characterize the image of Spin-$D_8$ bordism in Spin-$D_{16}$ bordism. Giambalvo's computations \cite[Section 4]{Gia76} show that there is an $R_2$ summand in $H^*(M_4;\Z/2\Z)$, and the
$\textcolor{BrickRed}{R_2}$ summand in $H^*\big((BD_8)^{V + 3\Det(V)-5};\Z/2 \Z\big)$ pulls back isomorphically to this
$R_2$,
so the map on the $E_2$-pages between the Ext groups of these summands is an isomorphism. The rest of the
$E_2$-page of the Adams spectral sequence is in filtration $0$ \cite[Section 4]{Gia76}, which is canonically a
subspace of mod $2$ cohomology, so we can just compute how mod $2$ cohomology classes pull back along the two maps
$D_4\to D_8$.
\begin{itemize}
	\item Since $i_4$ and $\tilde\imath_4$ factor through $D_4 \inj D_8$ followed by $i_8,\tilde\imath_8\colon
	D_8 \to D_{16}$, the image of Spin-$D_8$ bordism inside Spin-$D_{16}$ bordism contains everything we found in
	\cref{Z2_to_D16}.
	\item In degree $5$, we want to detect the cohomology classes $w^2x$ and $w^2y$. We get $i_8^*(w^2x) =
	\tilde\imath_8^*(w^2x) = a^3b^2 +ab^4$, $i_8^*(w^2y) = 0$, and $\tilde\imath_8^*(w^2y) = a^3b^2 + ab^4$. One can
	calculate that $\Sq^2\Sq^2\Sq^2(U(a^3b^2 + ab^4)) \ne 0$, so by \cref{triple_Sq2_D8}, there is a closed
	Spin-$D_8$ $5$-manifold detected by $a^3b^2 + ab^4$. Applying $i_8^*$ and $\tilde\imath_8^*$, $X_5$ is the
	generator corresponding to $w^2x$ and $\widetilde X_5$ is the generator corresponding to $w^2y$.
	 \item In degree $7$, we want to detect the cohomology class $w^3y$ (we already hit $x^7$ and $y^7$ using
	$\Omega_7^{\Spin}(B\Z/2\Z)$). Even though $i_8^*(w^3y) = 0$, $\tilde\imath_8^*(w^3y)\ne 0$:
	\begin{equation}
	\label{imathw8}
		\tilde\imath_8^*(w^3y) = a^4b^3 + a^3b^4 + a^2b^5 + ab^6,
	\end{equation}
	but $\Sq^2\Sq^2\Sq^2$ vanishes on this class. In fact, the bordism invariant
	$\Omega_7^{\Spin\text{-}D_8} (\pt) \to\Z/2\Z$ given by integrating this class vanishes! One can deduce this a priori by noting:
	$2 \Omega_7^{\Spin\text{-}D_8} (\pt) = 0$, and if $M$ is a closed Spin-$D_{16}$ $7$-manifold with $w^3y\ne 0$ then
	$2 [M] \ne 0$, as $[M]$ generates the green triangle $\textcolor{Green}{\Z/4\Z}\subset\Omega_7^{\Spin\text{-}D_{16}}(\pt)$. So if
	$\tilde\imath_8^*(w^3y)$ were nonzero on some manifold $M$, then $2[M] = 0$ in $\Omega_7^{\Spin\text{-}D_8} (\pt)$
	and $2[\widetilde M]\ne 0$ in $\Omega_7^{\Spin\text{-}D_{16}} (\pt)$, which would be no good.
	\item In degree $9$, we want to detect $w^2x^5$ and $w^2y^5$. As usual, $i_8^*(w^2y^5) = 0$ and
	\begin{equation}
		i_8^*(w^2x^5) = \tilde\imath_8^*(w^2x^5) = \tilde\imath_8^*(w^2y^5) = a^7b^2 + a^5b^4.
	\end{equation}
	$\Sq^2\Sq^2\Sq^2(U(a^7b^2 + a^5b^4))\ne 0$, so by \cref{triple_Sq2_D8}, some closed Spin-$D_8$ manifold $X_9$ is
	detected by $a^7b^2 + a^5b^4$, and $X_9$, respectively\ $\widetilde X_9$ give us $w^2x^5$, respectively $w^2y^5$.

	There is another summand of $\Omega_9^{\Spin\text{-}\GL^+(2, \Z)} (\pt)$ detected by the mod $2$ cohomology
	class $w^4y$. One can use Wu's theorem to show that this class vanishes on all Spin-$D_8$ manifolds. This class
	is in the image of the map from Spin-$\Mp(2, \Z)$ bordism and we already know a representative for it, so we
	will not worry about it.
	\item Finally, degree $11$. We want to detect the classes $w^4x^3$ and $w^4y^3$, which pull back as
	$i_8^*(w^4y^3) = 0$ and
	\begin{equation}
		i_8^*(w^4x^3) = \tilde\imath_8^*(w_4x^3) = \tilde\imath_8^*(w_4y^3) = a^7b^4 + a^3b^8.
	\end{equation}
	As usual, check that $\Sq^2\Sq^2\Sq^2$ of this is nonzero, which by \cref{triple_Sq2_D8} implies there is a
	closed Spin-$D_8$ $11$-manifold such that $X_{11}$ and $\widetilde X_{11}$ are detected by $w^4x^3$ and $w^4y^3$.
	This fulfills a promise we made in~\cite[Appendix C]{Debray:2021vob}.
\end{itemize}
\begin{figure}[h!]
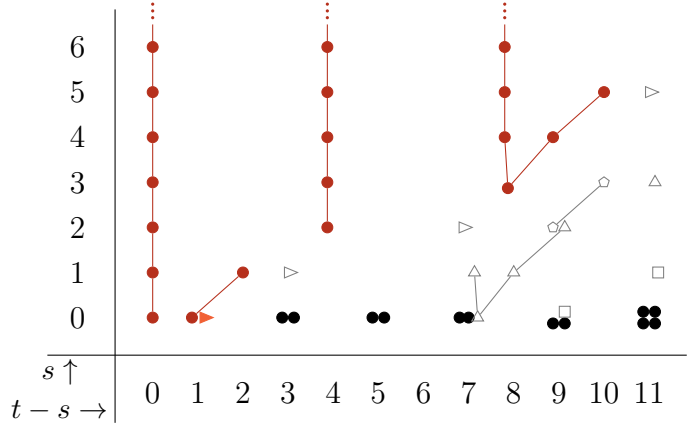

\centering
\begin{sseqdata}[name=spinD8image, classes=fill, xrange={0}{11}, yrange={0}{6}, scale=0.6, Adams grading,
>=stealth,
x label = {$\displaystyle{s\uparrow \atop t-s\rightarrow}$},
x label style = {font = \small, xshift = -25ex, yshift=5.5ex}
]
\class(3, 0)
\class(3, 0)
\class(5, 0)
\class(5, 0)
\class(7, 0)
\class(7, 0)
\class(9, 0)
\class(9, 0)
\class(11, 0)
\class(11, 0)
\class(11, 0)
\class(11, 0)
\begin{scope}[BrickRed]
	\class(0, 0)\AdamsTower{}
	\class(1, 0)
	\class(2, 1)\structline
	\class(4, 2)\AdamsTower{}
	\class(8, 3)\AdamsTower{}
	\class(9, 4)\structline(8, 3)(9, 4)
	\class(10, 5)\structline
	\class(12, 7)\AdamsTower{}
\end{scope}
\class[RedOrange, isosceles triangle](1, 0)
\begin{scope}[draw=none, fill=none]
	\class(4, 0)
	\class(4, 1)
	\class(7, 1)
	\class(8, 0)
	\class(8, 0)
	\class(11, 1)\class(11, 1)
	\class(11, 2)\class(11, 2)
	\class(11, 3)
	\class(11, 4)
	\class(12, 0)
	\class(12, 1)\class(12, 1)\class(12, 1)
	\class(12, 2)\class(12, 2)
	\class(12, 3)\class(12, 3)
	\class(12, 4)
	\class(12, 5)
\end{scope}

\begin{scope}[draw=gray, fill=none]
\class(12, 6)\structline(12, 6)(12, 7)
	\begin{scope}[isosceles triangle]
		\class(3, 1)\AdamsTower{}
		\class(7, 2)\AdamsTower{}
		\class(8, 3)
		\class(11, 5)
		\structline(7, 2)(8, 3, -1)
	\end{scope}
	\class(9, 4)\structline
	\class(4, 0)\AdamsTower{}
	\class(8, 1)\AdamsTower{}
	\begin{scope}[regular polygon,regular polygon sides=5]
		\class(9, 2)
			\structline(8, 1)(9, 2)
		\class(10, 3)\structline
	\end{scope}
	\begin{scope}[regular polygon, regular polygon sides=3, minimum width=1ex]
		\class(7, 0)\AdamsTower{}
		\class(8, 1)
			\structline(7, 0, -1)(8, 1, -1)
		\class(9, 2)\structline
		\class(11, 3)
	\end{scope}
	\class(8, 0)\AdamsTower{}
	\class(9, 0)
	\class(10, 1)\structline
	\begin{scope}[rectangle]
		\class(9, 0)
		\class(11, 1)
	\end{scope}
\end{scope}
\begin{scope}[draw=none, fill=none]
	\class(3, 10)
	\class(7, 10)\class(7, 10)
	\class(11, 10)\class(11, 10)\class(11, 10)
\end{scope}
\class(3, 9)
\class(3, 8)
\class(7, 8)
\class(7, 9)
\foreach \x in {4, 8} {
	\foreach \y in {0, ..., 7} {
		\d2(\x, \y, -1)(\x-1, \y+2, -1)
	}
}
\d2(8, 1, 1)(7, 3, 1)
\d2(8, 2, 1)(7, 4, 1)
\d2(8, 3, 3)(7, 5, 1)
\foreach \y in {4, ..., 7} {
	\d2(8, \y, 2)(7, \y+2, 1)
}

\d[gray]3(9, 0, 3)(8, 3, 2)
\d[gray]3(10, 1)(9, 4, 2)
\end{sseqdata}
\printpage[name=spinD8image, page=4]
%
\caption{
The image of Spin-$D_8$ bordism in Spin-$D_{16}$ bordism; see Section \ref{arcana} for more information. Like in
\cref{mp2_image_diagram}, this picture just encodes the summands coming from $\ko_*$; $\ko_{*-8}$ is analogous and
we already accounted for $\ko\ang 2_{*-10}$ in \S\ref{Z2_to_D16}.}
\label{spinD8_figs}
\end{figure}

Now we need to actually write down $X_5$, $X_9$, and $X_{11}$. We call these manifolds ``Arcana manifolds.'' We
wrote down $X_{11}$ in~\cite[Appendix C.2]{Debray:2021vob}, but the other two are new. The Arcana are generalized Dold
manifolds in the sense of Nath-Sankaran~\cite{NS19}, and are a special case of a construction studied by
Sarkar-Zvengrowski~\cite[Example 3.1]{SZ20}.
\begin{defn}
The ``Arcanum XI'' manifold $X_{11}$ is
$(S^6\times S^5)/\big( (\Z/2\Z) \oplus (\Z/2\Z) \big)$, where if we regard $S^6\times S^5\subset\mathbb
R^7\times\mathbb R^6$ with coordinates $(x_1,\dotsc,x_7, y_1,\dotsc,y_6)$, the generators $\alpha$ and $\beta$ of
$(\Z/2\Z) \oplus (\Z/2\Z)$ act by
\begin{equation}
\begin{aligned}
	\alpha(x_1, \dotsc, x_7, y_1, \dotsc, y_6) &= (-x_1, \dotsc, -x_7, -y_1, y_2, \dotsc, y_6)\\
	\beta(x_1, \dotsc, x_7, y_1, \dotsc, y_6)  &= (x_1, \dotsc, x_7, -y_1, \dotsc, -y_6).
\end{aligned}
\end{equation}
\end{defn}
In~\cite[Appendix C.2]{Debray:2021vob}, we show this is a Spin-$D_8$ manifold with $a^7b^4 + a^3b^8\ne 0$, so that it is the
manifold we were looking for in dimension $11$.

We define $X_5$ and $X_9$ similarly.
\begin{defn}
The ``Arcanum V'' manifold $X_5$ is a quotient $(S^2\times S^3)/\big( (\Z/2\Z) \oplus (\Z/2\Z) \big)$. As above, we embed $S^2\times
S^3\subset\R^3\times\R^4$ with coordinates $(x_1,x_2,x_3,y_1,y_2,y_3,y_4)$ and let the generators $\alpha$ and
$\beta$ of $(\Z/2\Z) \oplus (\Z/2\Z)$ act by
\begin{equation}
\begin{aligned}
	\alpha(x_1, x_2, x_3, y_1, y_2, y_3, y_4)  &= (x_1, x_2, x_3, -y_1, -y_2, -y_3, -y_4)\\
	\beta(x_1, x_2, x_3, y_1, y_2, y_3, y_4) &= (-x_1, -x_2, -x_3, y_1, -y_2, -y_3, -y_4).
\end{aligned}
\end{equation}
The ``Arcanum IX'' manifold $X_9$ is a quotient $(S^6\times S^3)/ \big( (\Z/2\Z) \oplus (\Z/2\Z) \big)$. As above, we embed $S^6\times
S^3\subset\R^7\times\R^4$ with coordinates $(x_1,\dots, x_7,y_1,y_2,y_3,y_4)$ and let the generators $\alpha$ and
$\beta$ of $(\Z/2\Z) \oplus (\Z/2\Z)$ act by
\begin{equation}
\begin{aligned}
	\alpha(x_1, \dots, x_7, y_1, y_2, y_3, y_4)  &= (x_1, \dots, x_7, -y_1, -y_2, -y_3, -y_4)\\
	\beta(x_1, \dots, x_7, y_1, y_2, y_3, y_4) &= (-x_1, \dots, -x_7, y_1, -y_2, -y_3, -y_4).
\end{aligned}
\end{equation}
\end{defn}
Projection onto the $x_i$ coordinates makes $X_5$ and $X_9$ into fiber bundles over $\RP^2$ and $\RP^6$,
respectively, with fiber $\RP^3$ in both cases.
\begin{rem}
All three of these manifolds are projectivizations of vector bundles: if $\sigma\to\RP^n$ denotes the tautological
line bundle, then there are diffeomorphisms $X_5 \cong \mathbb P(\underline\R\oplus\sigma^{\oplus 3})\to \RP^2$,
$X_9\cong \mathbb P(\underline\R\oplus \sigma^{\oplus 3})\to\RP^6$, and $X_{11}\cong \mathbb
P(\underline\R^5\oplus\sigma)\to\RP^6$.
%
\end{rem}
Now we want to show that $X_5$ and $X_9$ are Spin-$D_8$ manifolds which represent the bordism classes that we are
looking for.
To perform this computation, we will \emph{stably split} the tangent bundles of the Arcana manifolds: after adding
on two trivial summands, $TX_5$, $TX_9$, and $TX_{11}$ become isomorphic to direct sums of line bundles, allowing an
easier calculation of their Stiefel-Whitney classes. We will use this technique several times in the next few
sections, so we take a moment to go over how this technique works.

Recall that the normal bundle to $S^n\hookrightarrow \R^{n+1}$ is trivial, with trivialization specified by the
outward unit normal vector field, and this specifies a stable splitting
\begin{equation}
\label{Sn_stable_splitting}
	TS^n\oplus\underline\R \overset\cong\longrightarrow T\underline\R^{n+1}|_{S^n} = \underline\R^{n+1}.
\end{equation}
But better than that, the normal bundle is \emph{equivariantly} trivial for the $\O (n+1)$-action on $S^n$,
lifting~\eqref{Sn_stable_splitting} to an isomorphism of equivariant vector bundles. If $G\subset\O (n+1)$ acts
freely on $S^n$, we can therefore descend~\eqref{Sn_stable_splitting} to the quotient $S^n/G$.
\begin{exm}[Real projective spaces]
\label{RP_tangent}
Let $\Z/2\Z$ act as $\{\pm 1\}$ on $\R^{n+1}$, which restricts to the antipodal action on $S^n$. The outward unit
normal vector field is invariant under this action, so~\eqref{Sn_stable_splitting} upgrades to an equivalence of
$\Z/2\Z$-equivariant vector bundles, where $\Z/2\Z$ acts trivially on the normal bundle and non-trivially on
$\underline{\R}^{n+1}$: specifically, on each $\underline\R$ summand, $\Z/2\Z$ acts by $-1$.

Therefore this stable splitting descends to the quotient: if $\sigma\to\RP^n$ denotes the quotient by $\Z/2\Z$ of
one of the $\underline\R$ summands on the right of~\eqref{Sn_stable_splitting}, then we have obtained an
isomorphism of (non-equivariant) vector bundles
\begin{equation}
	T\RP^n\oplus\underline\R\overset\cong\longrightarrow \sigma^{\oplus(n+1)}.
\end{equation}
This is a common way to calculate the Stiefel-Whitney classes of $\RP^n$: since $\sigma$ is a non-trivial line
bundle, $w(\sigma) = 1 + x$, where $x\in H^1(\RP^2;\Z/2\Z)$ is the generator, and then the Whitney sum formula tells
us
\begin{equation}
	w(T\RP^n) = w(T\RP^n\oplus\underline\R) = w(\sigma^{\oplus(n+1)}) = (1+x)^{n+1}.
\end{equation}
Pretty much by definition, $\sigma\to\RP^n$ is the line bundle associated to the principal $\Z/2\Z$-bundle
$S^n \to \RP^n$ and the $\Z/2\Z$-representation we put on $\R$ to make $\underline\R$ equivariant. This will not be
the last time something like that happens.
\end{exm}
\begin{exm}[Lens spaces]
\label{lens_rotate}
For lens spaces, we want $S^{2n-1}\subset\C^n$, so~\eqref{Sn_stable_splitting} looks instead like
$TS^{2n-1}\oplus\underline\R\cong\underline\C^n$. Now choose a primitive $k^{\mathrm{th}}$ root of unity $\zeta$
and let $\Z/k\Z$ act on $\C^n$ by having the generator act as multiplication by $\zeta$. As above, this leaves
invariant the outward unit normal vector field, so the stable splitting is $\Z/k\Z$-equivariant, and therefore it
descends to the quotient~\cite{Szc64, Kam66}:
\begin{equation}
\label{lens_stable_splitting}
	TL_k^{2n-1}\oplus\underline\R \overset\cong\longrightarrow \mathcal{L}^{\oplus n},
\end{equation}
where $\mathcal{L}\to L_k^{2n-1}$ is the quotient of $\underline\C\to S^{2n-1}$ by the $\Z/k\Z$-action, i.e.\ the associated
complex line bundle bundle to the principal $\Z/k\Z$-bundle $S^{2n-1}\to L_k^{2n-1}$ and the standard representation
of $\Z/k\Z$ on $\C$.
\end{exm}
Now back to the show.
\begin{lem}
\label{arcana_decomp}
There is a stable splitting of the tangent bundle $\pi\colon \mathbb P(\underline\R^p\oplus\sigma^{\oplus
q})\to\RP^n$ as
\begin{equation}
\label{general_arcana_splitting}
	T\mathbb P(\underline\R^p\oplus\sigma^{\oplus q})\oplus\underline\R^2 \overset\cong\longrightarrow
	\pi^\ast\sigma^{\oplus(n+1)}\oplus \tau^{\oplus p} \oplus (\pi^\ast\sigma\otimes\tau)^{\oplus q},
\end{equation}
where $\tau\to\mathbb P(\underline\R^p\oplus\sigma^{\oplus q})$ is the tautological line bundle in the fiber
direction (see Section \ref{ss:taut_bundle}).
\end{lem}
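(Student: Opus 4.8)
The plan is to build the stable splitting equivariantly upstairs on a sphere bundle and descend. Write $E \coloneqq \underline\R^p\oplus\sigma^{\oplus q}\to\RP^n$, so that $\mathbb P(E)$ is the projectivization whose points are pairs $(\ell, L)$ with $\ell$ a line in $\RP^n$ (equivalently a point of $\RP^n$, via $S^n\to\RP^n$) and $L\subset E_\ell$ a line in the fiber. The tautological line bundle $\tau\to\mathbb P(E)$ has fiber $L$ at $(\ell, L)$, and there is the standard relative Euler sequence
\begin{equation}
\label{eq:euler_seq}
	0 \longrightarrow \underline\R \longrightarrow \mathrm{Hom}(\tau, \pi^*E) \longrightarrow T_{\mathbb P(E)/\RP^n} \longrightarrow 0
\end{equation}
for the fiberwise tangent bundle $T_{\mathbb P(E)/\RP^n}$ of the $\RP^{p+q-1}$-bundle $\pi\colon\mathbb P(E)\to\RP^n$; here $\underline\R\hookrightarrow\mathrm{Hom}(\tau,\pi^*E)$ is the inclusion of scalar multiples of $\tau\hookrightarrow\pi^*E$. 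This sequence splits after adding a trivial summand (all short exact sequences of vector bundles split), giving $T_{\mathbb P(E)/\RP^n}\oplus\underline\R\cong \mathrm{Hom}(\tau,\pi^*E) = \tau^\vee\otimes\pi^*E$. Since these are real line bundles, $\tau^\vee\cong\tau$, so
\begin{equation}
\label{eq:fiberwise}
	T_{\mathbb P(E)/\RP^n}\oplus\underline\R \;\cong\; \tau\otimes\pi^*E \;=\; \tau\otimes(\pi^*\underline\R^p\oplus\pi^*\sigma^{\oplus q}) \;\cong\; \tau^{\oplus p}\oplus(\pi^*\sigma\otimes\tau)^{\oplus q}.
\end{equation}

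Next I would combine \eqref{eq:fiberwise} with the standard splitting $TM\cong T_{M/B}\oplus\pi^*TB$ of the tangent bundle of a fiber bundle $\pi\colon M\to B$ (again only canonical up to a contractible choice of horizontal distribution, which is all homotopy theory needs), applied to $M = \mathbb P(E)$, $B = \RP^n$. This yields
\begin{equation}
\label{eq:combine}
	T\mathbb P(E)\oplus\underline\R \;\cong\; T_{\mathbb P(E)/\RP^n}\oplus\underline\R\oplus\pi^*T\RP^n \;\cong\; \tau^{\oplus p}\oplus(\pi^*\sigma\otimes\tau)^{\oplus q}\oplus\pi^*T\RP^n.
\end{equation}
Finally I would invoke \cref{RP_tangent}, which gives $T\RP^n\oplus\underline\R\cong\sigma^{\oplus(n+1)}$; pulling back along $\pi$ and substituting into \eqref{eq:combine} produces exactly
\begin{equation}
	T\mathbb P(\underline\R^p\oplus\sigma^{\oplus q})\oplus\underline\R^2 \;\cong\; \pi^*\sigma^{\oplus(n+1)}\oplus\tau^{\oplus p}\oplus(\pi^*\sigma\otimes\tau)^{\oplus q},
\end{equation}
which is \eqref{general_arcana_splitting}. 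One alternative, perhaps cleaner for readers who prefer to avoid the Euler sequence, is to realize $\mathbb P(E)$ as the quotient of the sphere bundle $S(E)\subset E$ by the fiberwise antipodal $\Z/2\Z$-action, note that the vertical normal bundle of $S(E)\hookrightarrow E$ is equivariantly trivial (trivialized by the fiberwise outward normal), exactly as in \cref{RP_tangent,lens_rotate}, and descend; this is the approach the paper seems to favor given its phrasing, and it shows $T_{S(E)/\RP^n}\oplus\underline\R\cong\pi_E^*E$ equivariantly, which descends to $\tau\otimes\pi^*E$ after twisting by $\tau$.

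The only genuinely delicate point is bookkeeping the line-bundle twist when descending: the $\Z/2\Z$ acts by $-1$ on each $\underline\R$ summand of the trivialized vertical normal bundle and on the tautological sub-line, so the quotient of $\pi_E^*E$ by the diagonal action is $\tau\otimes\pi^*E$ rather than $\pi^*E$ itself — this is precisely the source of the $\tau$ and $\pi^*\sigma\otimes\tau$ factors, and it is the analog of the phenomenon noted at the end of \cref{RP_tangent,lens_rotate}. Everything else — splitting short exact sequences, the fiber-bundle tangent splitting, $\tau^\vee\cong\tau$ for real line bundles — is routine, so I expect the write-up to be short. I would present it via the Euler sequence version \eqref{eq:euler_seq}–\eqref{eq:combine} since it makes the twist manifest and needs no equivariant language beyond what \cref{RP_tangent} already supplies.
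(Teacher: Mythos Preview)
Your proof is correct. Your primary argument via the relative Euler sequence for $\mathbb P(E)\to\RP^n$ is a genuinely different (and slightly slicker) route than the paper's: the paper works upstairs on the sphere bundle $S(E)$, stably splits $TS(E)$ using the fiberwise outward normal and \cref{RP_tangent}, then tracks the fiberwise antipodal $\Z/2\Z$-action through each summand and descends --- exactly the alternative you sketch at the end. Your Euler-sequence approach reaches \eqref{eq:fiberwise} in one algebraic step and never needs to name the $\Z/2\Z$-action explicitly, at the cost of invoking the Euler sequence as a black box; the paper's approach is more hands-on and keeps everything at the level of explicit group actions on trivial bundles, which fits its surrounding narrative (the \cref{RP_tangent,lens_rotate} examples) and makes the later Stiefel--Whitney computations for the Arcana slightly more transparent. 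Either presentation is fine.
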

Our argument follows that of~\cite[Section 5.1]{SZ20}.
\begin{proof}
First consider the sphere bundle $S(\underline\R^p\oplus\sigma^{\oplus q})\to\RP^n$; choosing a connection for this
fiber bundle splits $TS(\underline\R^p\oplus\sigma^{\oplus q})\cong V\oplus\pi^\ast(T\RP^n)$, where $V\to
S(\underline\R^p\oplus\sigma^{\oplus q})$ is the vertical tangent bundle. The fiberwise normal bundle is trivial,
as usual thanks to the outward unit normal vector field in the fiber direction, so we have a splitting
\begin{equation}
	TS(\underline\R^p\oplus\sigma^{\oplus q})\oplus\underline\R\overset\cong\longrightarrow
	\underline\R^p\oplus \pi^\ast\sigma^{\oplus q}\oplus\pi^\ast(T\RP^n).
\end{equation}
Use \cref{RP_tangent} to simplify:
\begin{equation}
	TS(\underline\R^p\oplus\sigma^{\oplus q})\oplus\underline\R^2\overset\cong\longrightarrow
	\underline\R^p\oplus\pi^\ast\sigma^{\oplus q} \oplus \pi^\ast\sigma^{\oplus(n+1)}.
\end{equation}
Now add in the $\Z/2\Z$ antipodal action on the fibers. This is trivial in the horizontal direction, in particular on
$\pi^\ast\sigma^{\oplus(n+1)}$, and the fiberwise normal bundle is equivariantly trivial like in the examples
above. This $\Z/2\Z$ acts by $-1$ on each summand of $\underline\R^p$ and $\pi^\ast\sigma^{\oplus q}$. Therefore when
we take the quotient, each of these bundles is tensored with $\tau$, producing the decomposition
in~\eqref{general_arcana_splitting}.
\end{proof}
\begin{cor}
$X_5$ and $X_9$ have Spin-$D_8$ structures whose $D_4$-bundles are their universal covers $S^p\times S^q\to X_n$.
\end{cor}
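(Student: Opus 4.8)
The plan is to build the required $\Spin$-$D_8$ structure directly from the shearing equivalence for $\Spin$-$D_8$ bordism established in Section~\ref{arcana} (the $D_4$-analogue of \cref{spin_D16_shear}, an instance of \cref{w2_twisted_cor}): a $\Spin$-$D_8$ structure on a manifold $M$ is a principal $D_4\cong(\Z/2\Z)\oplus(\Z/2\Z)$-bundle on $M$ together with a $\Spin$ structure on $TM\oplus\big(V\oplus 3\Det(V)\big)$, where $V$ is the standard two-dimensional $D_4$-representation; since $w_1\big(V\oplus 3\Det(V)\big) = 0$ and $w_2\big(V\oplus 3\Det(V)\big) = ab + b^2$ in $H^*(BD_4;\Z/2\Z)\cong\Z/2\Z[a,b]$, this amounts to a $D_4$-bundle with $w_1(TM) = 0$ and $w_2(TM) = ab + b^2$, with $a,b$ the degree-one classes of the two $\Z/2\Z$-factors. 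Taking the $D_4$-bundle to be the universal cover $S^p\times S^q\to X_n$ (with $n = 2$ for $X_5$, $n = 6$ for $X_9$), the problem reduces to identifying $a$ and $b$ and computing $w_1(TX_n)$, $w_2(TX_n)$.

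First I would check that the defining $D_4$-action on $S^p\times S^q$ is free — in each case at least one non-identity element acts antipodally on each sphere factor — so $S^p\times S^q\to X_n$ is genuinely a principal $D_4$-bundle, namely the universal cover. Then, using the diffeomorphisms $X_5\cong\mathbb P(\underline\R\oplus\sigma^{\oplus 3})\to\RP^2$ and $X_9\cong\mathbb P(\underline\R\oplus\sigma^{\oplus 3})\to\RP^6$ from the remark above, I would match the line bundle associated (via the sign representation) to one $\Z/2\Z$-factor with the pullback $\pi^\ast\sigma$ of the tautological bundle on the base $\RP^n$, and the line bundle of the other factor with the fiberwise tautological bundle $\tau$ of \cref{arcana_decomp}. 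Next I would feed into \cref{arcana_decomp} the case of one trivial summand and three copies of $\sigma$ (the lemma's $p = 1$, $q = 3$) to obtain the stable splitting
\[
  TX_n\oplus\underline\R^2 \;\cong\; \pi^\ast\sigma^{\oplus(n+1)}\oplus\tau\oplus(\pi^\ast\sigma\otimes\tau)^{\oplus 3},
\]
so by the Whitney sum formula $w(TX_n) = (1+u)^{n+1}(1+t)(1+u+t)^3$, where $u\coloneqq w_1(\pi^\ast\sigma)$ and $t\coloneqq w_1(\tau)$. Expanding the low-degree terms (a short binomial-coefficient computation, using $u^3 = 0$ when $n = 2$) gives $w_1(TX_n) = 0$ and $w_2(TX_n) = u(u+t)$; in particular $X_5$ and $X_9$ are orientable. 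Granting that the labels are matched so that $u(u+t) = ab + b^2$, we then have $w_1\big(TX_n\oplus V\oplus 3\Det(V)\big) = w_2\big(TX_n\oplus V\oplus 3\Det(V)\big) = 0$, so this sum of bundles admits a $\Spin$ structure, and any choice of one is a $\Spin$-$D_8$ structure on $X_n$ whose underlying $D_4$-bundle is the universal cover.

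The one non-routine point is precisely that last matching. I must confirm that, with the identification $H^*(BD_4;\Z/2\Z)\cong\Z/2\Z[a,b]$ fixed by the embeddings $i_8,\tilde\imath_8$ of Appendix~\ref{app:embdihedral} and used in \cref{D8_pullback}, the degree-one class $b$ that appears squared in $w_2\big(V\oplus 3\Det(V)\big) = ab + b^2$ corresponds to $u = w_1(\pi^\ast\sigma)$ — equivalently, that it is the $\Z/2\Z$-factor of the $D_4$-cover acting by the base antipode which pulls back to $b$. With the opposite labeling, $a = w_1(\pi^\ast\sigma)$ and $b = w_1(\tau)$, the same tangent bundle would read $w_2(TX_n) = ab + a^2$, which is not of the required form $ab + b^2$, so reconciling the conventions is exactly what makes the argument close; this is a finite comparison against the explicit $D_4$-representations and the appendix, not a computation of any depth. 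As an optional cross-check, one can verify via \cref{triple_Sq2_D8} that the mod-$2$ class realizing each generator is $\Sq^2\Sq^2\Sq^2$-nonzero, just as was done for $X_{11}$, which simultaneously confirms that these $\Spin$-$D_8$ structures carry the bordism classes singled out in Section~\ref{arcana}.
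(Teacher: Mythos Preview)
Your approach is essentially the same as the paper's: apply \cref{arcana_decomp} to stably split $TX_m$, read off $w_1=0$ and $w_2$ via the Whitney sum formula, and match against the Spin-$D_8$ condition $w_2 = ab+b^2$. The paper resolves the labeling issue you flag simply by \emph{defining} $a$ to be the class dual to $\alpha$ and $b$ the class dual to $\beta$; since $\beta$ is the generator acting antipodally on the base sphere and $\alpha$ is the fiberwise antipode, this immediately gives $w_1(\pi^*\sigma)=b$ and $w_1(\tau)=a$, so your $u(u+t)$ becomes $b(a+b)=ab+b^2$ on the nose. Your caution here is appropriate, but the resolution is exactly the one-line observation you anticipated.

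One small point in your favor: your exponent $(n+1)$ on $\pi^*\sigma$ is the one dictated by \cref{arcana_decomp}; the paper's printed exponent $(m-3)$ appears to be a typo for $(m-2)=n+1$, though its stated conclusion $w_2=ab+b^2$ is correct.
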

\begin{proof}
For $m = 5,9$, let $a\in H^1(X_m;\Z/2\Z)$ be the class dual to $\alpha\in (\Z/2\Z) \oplus (\Z/2\Z) = \pi_1(X_m)$ under the
identification $H^1(X_m;\Z/2\Z)\cong \Hom \big(\pi_1(X_m), \Z/2\Z \big)$, and likewise let $b$ be the class dual to $\beta$.
The way the $(\Z/2\Z) \oplus (\Z/2\Z)$-actions were defined on $X_m$ implies that taking the quotient by $\beta$ produces the
sphere bundle $S(\underline\R^p\oplus\sigma^{\oplus q})\to\RP^n$, and then taking the quotient by $\alpha$ is
the fiberwise antipodal map; therefore $w(\pi^\ast\sigma) = 1+b$ and $w(\tau) = 1+a$.

\Cref{arcana_decomp} tells us
\begin{equation}
	TX_m = \pi^\ast\sigma^{\oplus(m-3)} \oplus \tau \oplus (\pi^\ast\sigma\otimes\tau)^{\oplus 3},
\end{equation}
so $w(X_m) = (1+b)^{m-3}(1+a)(1+a+b)^3$; expanding out, $w_1(X_m) = 0$ and $w_2(X_m) = ab + b^2$, which is what we
needed.
\end{proof}
\begin{lem}
\label{arcana_suffice}
The class $a^3b^2 + ab^4$ is nonzero in $H^5(X_5;\Z/2\Z)$, and the class $a^7b^2 + a^5b^4$ is nonzero in
$H^9(X_9;\Z/2\Z)$.
\end{lem}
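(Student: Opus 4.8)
The plan is to extract just enough of the mod $2$ cohomology rings of $X_5$ and $X_9$ from the projective bundle descriptions $X_m \cong \mathbb P(\underline\R\oplus\sigma^{\oplus 3})\to\RP^n$ recorded in the preceding remark (with $n = 2$ for $X_5$ and $n = 6$ for $X_9$). In the notation of the preceding corollary, the class $b$ is pulled back from a generator of $H^1(\RP^n;\Z/2\Z)$ via $\pi$, and $a = w_1(\tau)$ is the fiberwise tautological class. The Leray--Hirsch theorem then presents $H^\ast(X_m;\Z/2\Z)$ as a free module over $H^\ast(\RP^n;\Z/2\Z) = \Z/2\Z[b]/(b^{n+1})$ on the basis $1, a, a^2, a^3$, subject to the single projective bundle relation $a^4 + w_1(E)a^3 + w_2(E)a^2 + w_3(E)a + w_4(E) = 0$, where $E = \underline\R\oplus\sigma^{\oplus 3}$ and $w(E) = (1+b)^3 \equiv 1 + b + b^2 + b^3 \pmod 2$.

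For $X_5$ I would not even invoke the relation. Since $n = 2$ we have $b^3 = 0$, hence $b^4 = 0$ and $ab^4 = 0$, so the claim reduces to $a^3 b^2 \neq 0$. But $a^3$ is a Leray--Hirsch basis element and $b^2$ generates $H^2(\RP^2;\Z/2\Z)$, so $a^3 b^2$ is a nonzero multiple of a basis element; indeed it generates $H^5(X_5;\Z/2\Z)$, which is a copy of $\Z/2\Z$ by Poincar\'e duality for the closed connected $5$-manifold $X_5$. Thus $a^3 b^2 + a b^4 = a^3 b^2 \neq 0$.

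For $X_9$ the relation is used: now $b^7 = 0$, and the relation reads $a^4 = b a^3 + b^2 a^2 + b^3 a$. I would first simplify $a^5 = a\cdot a^4 = b a^4 + b^2 a^3 + b^3 a^2$, substitute once more into the $b a^4$ term, and note that the resulting $b^2 a^3$ and $b^3 a^2$ contributions cancel mod $2$, leaving $a^5 = b^4 a$. Consequently $a^5 b^4 = b^8 a = 0$, while $a^7 = a^2 \cdot a^5 = b^4 a^3$ gives $a^7 b^2 = b^6 a^3$, which is the generator of $H^9(X_9;\Z/2\Z) \cong \Z/2\Z$. Therefore $a^7 b^2 + a^5 b^4 = b^6 a^3 \neq 0$.

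The computation is short, and the only place I would be careful is the bookkeeping: getting the total Stiefel--Whitney class $w(E) = (1+b)^3 \equiv 1 + b + b^2 + b^3$ correct over $\Z/2\Z$ (the $b^3$ term is inert for $X_5$ but essential for $X_9$), writing the projective bundle relation with the orientation compatible with $w(\tau) = 1+a$ and $w(\pi^\ast\sigma) = 1+b$ from the preceding corollary, and invoking $b^{n+1} = 0$ at the right moments. Everything else is routine polynomial arithmetic modulo $2$, and the same scheme reproduces the analogous statement for $X_{11}$ already recorded in~\cite[Appendix C]{Debray:2021vob}.
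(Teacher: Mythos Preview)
Your proof is correct and takes a genuinely different route from the paper's. The paper argues geometrically: it passes to the universal cover $S^p\times S^q$, identifies explicit codimension-one hypersurfaces $\{x_i=0\}$ and $\{y_j=0\}$ as Poincar\'e duals of the classes $a$, $b$, and $a+b$, and then counts intersection points of the appropriate hypersurfaces to verify that the target degree-$m$ class is nonzero. You instead use the projective bundle formula (Leray--Hirsch) to compute $H^*(X_m;\Z/2\Z)$ as an algebra and evaluate the class directly.

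Your approach is cleaner and shorter here: once the projective bundle relation $a^4 = ba^3 + b^2a^2 + b^3a$ is in hand, the verifications are pure polynomial arithmetic, and the $X_5$ case becomes essentially trivial since $b^3=0$ kills the second term outright. The paper's intersection-theory argument has the advantage of being more self-contained (it does not invoke the projective bundle formula) and of making the geometric meaning of each cohomology class explicit in terms of concrete submanifolds, which is useful elsewhere in the paper when similar Poincar\'e-dual computations arise for $D(\ninedimgen)$ and related manifolds. Both methods generalize equally well to $X_{11}$.
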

Therefore $X_5$ and $X_9$ represent the Spin-$D_8$ bordism classes that we have been looking for.
\begin{proof}
The proof resembles that of~\cite[Lemma C.12]{Debray:2021vob}: we want to compute cup products, so we pass through
Poincaré duality and compute the intersections of submanifolds of $X_m$ Poincaré dual to the cohomology
classes we want to multiply together. There is a transversality condition which is satisfied for generic choices of
submanifold representatives. The Arcana manifolds themselves are a little bit difficult to visualize, so we work
upstairs in $S^p\times S^q$, intersecting submanifolds which are sent to themselves by the $(\Z/2\Z) \oplus (\Z/2\Z)$-action
and then taking the quotient.

For example, consider $N\coloneqq \set{x_1 = 0}\subset S^p\times S^q$. This is a codimension-$1$ submanifold and
$(\Z/2\Z) \oplus (\Z/2\Z)$ sends $N$ to itself. Though $N$ is null-homologous in $S^p\times S^q$, there is no null-homology
that is equivariant for the $(\Z/2\Z) \oplus (\Z/2\Z)$-action, and therefore the quotient of $N$ in $X_m$ is not
null-homologous, and represents some nonzero class in $H^1(X_m;\Z/2\Z)$. Specifically, because $\alpha$ acts
trivially on $N$, but $\beta$ acts non-trivially, $N$ is Poincaré dual to $b$. Similarly, $\set{x_i = 0}$ is
Poincaré dual to $b$ for each $i$; $\set{y_1 = 0}$ is Poincaré dual to $a$, and $\set{y_i = 0}$ for $i > 1$ is
Poincaré dual to $a+b$.

On $S^2\times S^3$ consider the set of solutions $W$ to the equations $x_1 = x_2 = y_1 = y_2 = y_3 = 0$, which is
the intersection of the five codimension-$1$ submanifolds $\set{x_1 = 0}$, $\set{x_2 = 0}$, \dots, $\set{y_3 = 0}$.
The images of these submanifolds in $X_5$ are transverse, so $W$ is a zero-dimensional submanifold, and the
Poincaré dual of $W$ is the cup product of the classes these submanifolds define. Specifically, $W$ is Poincaré
dual to $b^2a(a+b)^2 = a^3b^2 + ab^4$, so we can check whether $a^3b^2 + ab^4$ is zero in $H^5(X_5;\Z/2\Z)$ by
checking whether $W$ is null-homologous. Since $W$ is zero-dimensional, this amounts to counting whether $W$ has an
odd or even number of points.

Upstairs, the set of solutions to $x_1 = x_2 = y_1 = y_2 = y_3 = 0$ has four solutions in $S^2\times S^3$:
$\set{x_3 = \pm 1, y_4 = \pm 1, \text{ all other } x_i,y_i = 0}$. Thus when we take the quotient by
$(\Z/2\Z) \oplus (\Z/2\Z)$, these four solutions are identified, so there is one solution in $X_5$, and we conclude that
$a^3b^2 + ab^4\ne 0$ in $H^5(X_5;\Z/2\Z)$.

For $X_9$, the argument is the same; the equations we want are $x_1 = x_2 = y_1 = y_2 = y_4 = 0$ and $x_3y_3 =
x_4y_3 = x_5y_3 = x_6y_3 = 0$.  Again, on $S^6\times S^3$, the solutions are $\{x_7 = \pm 1, y_3 = \pm 1\}$, so
there is one solution on $X_9$, and this corresponds to the cohomology class $b^2(a)(a+b)^2a^4 = a^7b^2 + a^5b^4$.
\end{proof}
The computation that $a^7b^4 + a^3b^8\ne 0$ on $X_{11}$, so that $X_{11}$ and $\widetilde X_{11}$ represent the
$11$-dimensional Spin-$D_{16}$ bordism classes we want to detect, is given in~\cite[Proposition
C.7]{Debray:2021vob}.

\label{remaining_generators}
At this point we have only four classes left to detect: generators of the green upward-pointing triangle $\textcolor{Green}{\Z/4\Z}$ and
orange rightward-pointing triangle $\textcolor{RedOrange}{\Z/2\Z}$ summands in dimension $7$; the generator of the $\textcolor{BrickRed}{\Z}$ in
dimension $8$; and the generator of the yellow pentagon $\textcolor{Goldenrod!67!black}{\Z/2\Z}$ in dimension $9$. All
remaining classes are products of these four classes and non-bounding circles or tori.
\subsubsection{The generator $\halfBott$ in dimension 8}
\label{8d_red}
%
%
First we tackle the generator of the red $\textcolor{BrickRed}{\Z}$ in dimension $8$. As we established in
Section \ref{arcana} (see especially \cref{spinD8_figs}), this bordism class is in the image of the map $i_8\colon
\Omega_8^{\Spin\text{-}D_8}\to\Omega_8^{\Spin\text{-}D_{16}}$, so we can find a representative whose Spin-$D_{16}$
structure comes from a Spin-$D_8$ structure. Moreover, looking at the inclusion of
Spin-$\Z/8\Z$ bordism into Spin-$D_{16}$ bordism, we learn that two copies of our desired generator must be bordant
to the Bott manifold; thus it suffices to construct a Spin-$D_8$ manifold whose Pontrjagin numbers are half those
of the Bott manifold. For example, a connected Calabi-Yau fourfold $X_4$ has $\hat A(M) = 2$, which is twice $\hat
A(\mathrm{Bott})$, so its Pontrjagin numbers are twice those of a Bott manifold, so any quotient of $X_4$ by an
orientation-preserving action of a group of order $4$ has the correct Pontrjagin numbers to be our generator. We
will construct such a quotient $\halfBott$ of a specific choice of $X_4$, and show $\halfBott$ is Spin-$D_8$, so
that it is the generator we want.
\begin{defn}
Let $X_4$ be the complete intersection on $\CP^7$ given by choosing projective coordinates
\begin{equation}
	(x_1:x_2:x_3:x_4:x_5:x_6:y_1:y_2)\text{ for } \CP^{7},
\end{equation}
and consider the complete intersection specified by the three equations
\begin{equation}
	\sum_{i=1}^6 x_i^2= x_1^2+P(y)=x_2^3+P'(y)=0,
\label{w23}
\end{equation}
where $P$ and $P'$ are cubic polynomials. For generic choices of $P$ and $P'$, the intersection is
complete, and we obtain a smooth Calabi-Yau fourfold $X_4$.\footnote{The criterion for a complete intersection to
be Calabi-Yau is that the sums of the degrees of the defining polynomials add up to one plus the dimension of the
ambient projective space. See, e.g.,~\cite{GHL13}.}
\end{defn}
Consider the following action of $(\Z/2\Z) \oplus (\Z/2\Z)$ on $\CP^7$:
\begin{equation}
\begin{aligned}
	\alpha(x_1:\dotsm :x_6: y_1:y_2) &= (-x_1:\dotsm: -x_6: y_1: y_2)\\
	\beta(x_1:\dotsm :x_6: y_1:y_2) &= (x_1^*:\dotsm: x_6^*: y_1: y_2).
\end{aligned}
\label{z2z2}
\end{equation}
This action leaves the equations~\eqref{w23} invariant, and so defines an action of $(\Z/2\Z) \oplus (\Z/2\Z)$ on
$X_4$. One can check that both $\alpha$ and $\beta$ are orientation-preserving and, restricted to $X_4$, lack fixed
points: in a fixed point for $\alpha$, the last two equations would not have a solution for $y$, while in a fixed
point for $\beta$, all the $x_i$ are real, and then the first equation implies they all vanish. Thus the quotient
$\halfBott \coloneqq X_4/ \big((\Z/2\Z) \oplus (\Z/2\Z) \big)$ is an oriented smooth manifold. $\halfBott$ cannot
be Spin: if it were, its $\hat A$-genus would be $1/2$, but the $\hat A$-genus is an integer on Spin manifolds. We
call this manifold the ``half Bott manifold'' because $\halfBott \sqcup \halfBott$ is bordant to a Bott manifold.
\begin{prop}
$\halfBott$ admits a Spin-$D_8$ structure.
\end{prop}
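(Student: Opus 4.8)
The plan is to verify directly that the relevant characteristic classes of $\halfBott$ assemble into the data defining a Spin-$D_8$ structure, using \Cref{w2_twisted_cor} (equivalently, \cref{D8_pullback} and the surrounding discussion): a Spin-$D_8$ structure on $M$ is the data of a principal $D_4 = (\Z/2\Z) \oplus (\Z/2\Z)$-bundle $P \to M$ together with a trivialization of $w_2(TM) - (ab + b^2)$, where $a, b \in H^1(M;\Z/2\Z)$ are the characteristic classes of the two $\Z/2\Z$-factors of $P$. The candidate principal $D_4$-bundle on $\halfBott$ is its universal cover $X_4 \to \halfBott$, whose two $\Z/2\Z$-quotients give classes $a$ (dual to $\alpha$) and $b$ (dual to $\beta$) under $H^1(\halfBott;\Z/2\Z) \cong \Hom\big(\pi_1(\halfBott), \Z/2\Z\big)$, exactly as for the Arcana manifolds. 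So the content of the proposition is the single cohomological identity $w_2(T\halfBott) = ab + b^2$ in $H^2(\halfBott;\Z/2\Z)$.

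To establish this I would first compute the total Stiefel-Whitney class of $X_4$ itself. Since $X_4$ is a complete intersection of three hypersurfaces of degrees $2$, $2$, $3$ in $\CP^7$, the Whitney sum formula applied to the standard exact sequence $0 \to TX_4 \to T\CP^7|_{X_4} \to \mathcal{O}(2) \oplus \mathcal{O}(2) \oplus \mathcal{O}(3)|_{X_4} \to 0$ gives $w(TX_4) = (1+h)^8 / \big((1+0)(1+0)(1+h)\big) \bmod 2 = (1+h)^7 \bmod 2$, where $h$ is the restriction of the mod $2$ hyperplane class (a degree-$3$ polynomial reduces to $(1+h)$ mod $2$; a degree-$2$ polynomial to $1$). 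In particular $w_2(TX_4) = \binom{7}{2} h^2 = h^2 \bmod 2$, and since this pulls back from an even class on $\CP^7$ — indeed $h = c_1(\mathcal O(1))$ and $w_2$ of a complex bundle is the mod $2$ reduction of $c_1$ — we should get $w_2(TX_4) = 0$, confirming that $X_4$ is Spin, as expected for a Calabi-Yau fourfold. Then $T\halfBott$ pulls back to $TX_4$ under the covering map, and the cohomology of $\halfBott$ injects (rationally, and we will check mod $2$) into that of $X_4$ on the part invariant under $D_4$; the classes $a, b$ die upstairs. So the identity $w_2(T\halfBott) = ab + b^2$ cannot be checked merely by pulling back to $X_4$ — it must be checked in $H^2(\halfBott;\Z/2\Z)$ directly, by analyzing how the quotient modifies the tangent bundle.

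The main obstacle, then, is the explicit computation of $w_2(T\halfBott)$. The cleanest route is to mimic the stable-splitting technique used for the Arcana (\cref{arcana_decomp} and \cref{RP_tangent}, \cref{lens_rotate}): realize $X_4$ as sitting inside a product of spheres or, better, analyze the two $\Z/2\Z$-actions in turn. Quotienting $X_4$ by $\beta$ (complex conjugation on the $x_i$) is a free antiholomorphic involution, and its mod $2$ cohomological effect on the tangent bundle is governed by the line bundle $\pi^*\sigma_\beta$ with $w_1 = b$; quotienting the result further by $\alpha$ (sign change on the $x_i$) introduces a factor associated to $\sigma_\alpha$ with $w_1 = a$. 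Tracking which summands of $TX_4$ — organized via the normal bundle sequence of the complete intersection — get tensored with $\sigma_\alpha$, $\sigma_\beta$, or $\sigma_\alpha \otimes \sigma_\beta$ under the two quotients, and then applying the Whitney sum formula to the resulting decomposition of $T\halfBott$, should yield $w(T\halfBott) = (1+b)^{?}(1+a)^{?}(1+a+b)^{?}$ for explicit exponents determined by the $(\Z/2\Z)^{\oplus 2}$-weights on the coordinates in \eqref{z2z2}; expanding in low degree gives $w_1 = 0$ (consistent with orientability, which we checked geometrically) and, I expect, $w_2 = ab + b^2$. A subtlety to watch is that the three defining equations \eqref{w23} are not all invariant in the naive way — the first is symmetric, but the coordinates $x_1, x_2$ play a distinguished role — so one must be careful about the $(\Z/2\Z)^{\oplus 2}$-equivariant structure of the normal bundle $\mathcal O(2) \oplus \mathcal O(2) \oplus \mathcal O(3)$, not just of $T\CP^7$; getting this bookkeeping right is the crux. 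Once the formula for $w(T\halfBott)$ is in hand, the identity $w_2 = ab + b^2$ is immediate and the proposition follows from \cref{w2_twisted_cor}.
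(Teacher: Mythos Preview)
Your strategy — compute $w_2(T\halfBott)$ directly and verify it equals $ab+b^2$ — is a genuinely different route from the paper's. The paper instead works with explicit spinors: starting from Dirac spinors on $\C^8$ restricted to $\CP^7$, it writes down the Spin lifts $M_1 = \prod_{i=1}^{12}\Gamma_i$ and $M_2 = \prod_{i=0}^{5}\Gamma_{2i+1}$ of the two involutions and computes their Clifford-algebra relations directly: $M_1^2 = \mathrm{id}$, $M_2^2 = -\mathrm{id}$, $M_1 M_2 M_1 = -M_2$. These are the relations of $D_8$, not $D_4 = (\Z/2\Z)^{\oplus 2}$, which simultaneously shows $\halfBott$ is not Spin and that tensoring the spinor bundle with a suitable $D_8$-representation repairs the defect. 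No cohomology of $\halfBott$ is ever computed.

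Your approach has a real gap at exactly the step you flag. The Arcana stable-splitting machinery (\cref{arcana_decomp}) rests on the manifold being assembled from spheres, so that the tangent bundle decomposes equivariantly into explicit line-bundle summands; a complete intersection in $\CP^7$ has no such structure. You could instead try to track the $D_4$-equivariant Euler and normal-bundle sequences, but the antiholomorphic involution $\beta$ (complex conjugation on the $x_i$) does not act $\C$-linearly on $\mathcal O(1)$ or on the normal bundle, so the complex-geometric bookkeeping you sketch does not go through as written — you would need to pass to underlying real bundles and track $\beta$ as a real-linear involution, which is substantially messier than anything in the Arcana computations. (Minor aside: in your computation of $w(TX_4)$ the class $h$ lives in degree $2$, so the contribution to $w_2$ is a multiple of $h$, not $h^2$; and for $X_4$ to be Calabi-Yau the defining degrees must sum to $8$, so the second equation is presumably cubic too.) The paper's gamma-matrix argument sidesteps all of this by never leaving $\CP^7$.
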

\begin{proof}
Consider the standard Spin structure on the ambient $\CP^{7}$. Equivariant spinors on $\CP^{7}$ invariant under the
action \eqref{z2z2} will descend to spinors on $\halfBott$. To construct spinors on $\CP^{n}$ explicitly, we can
start with Dirac spinors $\Psi$ on $\C^{n+1}=\R^{2n+2}$. For $z\in\C^{n+1}$, the transformation $z\rightarrow
\lambda z$ that we quotient by to get complex projective space induces an infinitesimal action on spinor fields
containing a rotation and a scaling transformation~\cite{Osb13}:
\begin{equation}
\delta_\lambda \Psi=(M - \text{id})\Psi,\quad M=\sum_{i=1}^n \Gamma_{i}\Gamma_{i+1} \,,
\label{45rt}
\end{equation}
where the first part of the above transformation implements the infinitesimal rotation involved in a complex
rescaling, and the part proportional to the identity matrix implements the rescaling. Spinors on $\CP^{n}$
correspond precisely to those spinors on $\C^{n+1}$ which satisfy $\delta_\lambda \Psi=0$. As is clear from
\eqref{45rt}, this will only be possible if the matrix $M$ has a $+1$ eigenvalue. According to the
analysis around equation (D.6) of~\cite{Hsieh:2020jpj}, this is only possible if $n+1$ is even. This is a very elementary way to see that we only have a Spin structure on $\CP^n$ when $n$ is odd.

Taking now $n=7$, spinors on $\halfBott$ correspond to spinors on $\C^{8}$ satisfying $\delta_\lambda \Psi=0$ and which are also invariant under the Spin lifts of \eqref{z2z2},
\begin{equation}
	\Psi= \Big(\prod_{i=1}^{12} \Gamma_i \Big)\Psi,\quad  \Psi=\Big( \prod_{i=0}^{5}\Gamma_{2i+1}\Big)\Psi \,.
\end{equation}
For this to work, the two matrices
\begin{equation}
M_1=\prod_{i=1}^{12} \Gamma_i \,, \quad M_2= \prod_{i=0}^{5}\Gamma_{2i+1}
\end{equation}
should generate the group $(\Z/2\Z) \oplus (\Z/2\Z)$ when restricted to spinors with $\delta_\lambda \Psi=0$. But one can explicitly compute that instead the commutation relations are
\begin{equation}
M_1^2=\text{id} \,,\quad M_2^2=-\text{id} \,,\quad M_1 M_2 M_1= -M_2 \,,
\end{equation}
i.e.\ they generate a $D_8$ group. This is why there is no Spin structure on $\halfBott$. However, it is also clear
that, by tensoring the above matrices with an appropriate representation of $D_8$, two wrongs can make a right, and
we can obtain $D_4$ representations. So $\halfBott$ admits a Spin-$D_8$ structure, as claimed.
\end{proof}

\subsubsection{The generator $\ninedimgen$ in dimension 9}
\label{9d_yellow}
The next generator we address corresponds to the yellow pentagon $\textcolor{Goldenrod!67!black}{\Z/2\Z}$ summand
in $E_\infty^{2,11}$. The manifold we find resembles the generators we give for the remaining classes in dimension
$7$, but this one is simpler, so we start in dimension $9$ as a warm-up.

The $\textcolor{Goldenrod!67!black}{\Z/2\Z}$ class under investigation is not in the images of the maps from
Spin-$\Z/8\Z$ or Spin-$D_8$ bordism, so we need its representative $\ninedimgen$ to have a fundamental group at least
as complicated as $D_8$. We will define $\ninedimgen$ momentarily; the key results in this section are
\cref{9d_is_spD16}, which shows that $\ninedimgen$ is Spin-$D_{16}$, and \cref{W9_is_good}, which shows that it is
linearly independent from the other generators we have found.
\begin{defn}
Let $\zeta$ be a primitive $k^{\mathrm{th}}$ root of unity and $L_k^{2n-1}$ denote the lens space which is the
quotient of $S^{2n-1}\subset\C^n$ by the $\Z/k\Z$-action on $\C^n$ which is multiplication by $\zeta$, which
preserves the unit sphere. Complex conjugation exchanges $\zeta$ with another primitive $k^{\mathrm{th}}$ root of
unity, and therefore the image of a $\Z/k\Z$-orbit of $S^{2n-1}$ under complex conjugation is another $\Z/k\Z$-orbit.
Therefore this involution descends to an involution on $L_k^{2n-1}$, which we also call \term{complex conjugation}.
\end{defn}
Since complex conjugation is orientation-reversing on $\C$, it is orientation-preserving on $\C^{2m}$ and
orientation-reversing on $\C^{2m+1}$. The $\Z/k\Z$-action on $S^{2n-1}$ is orientation-preserving, so we can
descend to $L_k^{2n-1}$ and infer that complex conjugation on $L_k^{2n-1}$ is orientation-preserving when $n$ is
even and is orientation-reversing when $n$ is odd.
\begin{defn}
Let $\Z/2\Z$ act on $L_4^5\times S^4$ by complex conjugation on the lens space and the antipodal map on $S^4$. This
is a free action; we define $\ninedimgen$ to be the quotient.
\end{defn}
As far as we can tell, this manifold was first studied by Kamata-Minami~\cite[Section 2]{KM73}, where it is called $D(5,
4)$.

The first thing to check is whether $\ninedimgen$ is orientable, or
equivalently, whether the involution we used to define it is orientation-preserving. This is true: the antipodal
map is orientation-reversing on $S^{2n}$, and we just saw that complex conjugation is orientation-reversing on
$L_k^5$. Therefore the combination of these two involutions is orientation-preserving. In fact, because $L_4^5$ and
$S^4$ come with canonical orientations, $\ninedimgen$ is also canonically oriented.

The complex conjugation involution passes to inversion on $\pi_1(L_4^5) = \Z/4\Z$, so $\pi_1(\ninedimgen) = D_8$;
concretely, $\ninedimgen$ is also the quotient of $S^5\times S^4$ by the $D_8$-action generated by two
diffeomorphisms $r$ and $s$: $r = (m, \mathrm{id})$ is multiplication by $i$ on $S^5$ and the identity on $S^4$ and
$s$ is reflection on $S^5$ and the antipodal map on $S^4$.
\begin{prop}
\label{9d_is_spD16}
$\ninedimgen$ with its $D_8$-bundle $S^5\times S^4\to \ninedimgen$ has a Spin-$D_{16}$ structure.
\end{prop}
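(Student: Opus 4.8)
The plan is to verify that the canonical $D_8$-bundle $S^5 \times S^4 \to \ninedimgen$ admits a Spin-$D_{16}$ structure, which by \cref{w2_twisted_cor} (or rather its $D_8 \to D_{16}$ analogue used to define the Spin-$D_{16}$ symmetry type via \cref{spin_D16_shear}) amounts to producing a trivialization of $w_2(T\ninedimgen) - w_2(V \oplus 3\Det(V))$, where $V$ is the standard two-dimensional representation of the $D_8 = \pi_1(\ninedimgen)$. Equivalently, writing $\phi\colon \ninedimgen \to BD_8$ for the classifying map of the fundamental-group bundle, I need $w_2(T\ninedimgen) = \phi^*(w_2(V)) = \phi^*(w)$ in $H^2(\ninedimgen;\Z/2\Z)$, using that $w_1(V \oplus 3\Det(V)) = 0$ and $w_2(V \oplus 3\Det(V)) = w$ as computed just before \cref{spin_D16_shear}. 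So the entire proposition reduces to a Stiefel-Whitney class computation on $\ninedimgen$.

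First I would compute $H^*(\ninedimgen;\Z/2\Z)$ in low degrees together with the images of the generators $x, y, w \in H^*(BD_8;\Z/2\Z)$ of \cref{dih_mod_2_coh} under $\phi^*$; since $\ninedimgen$ is a $D_8$-quotient of $S^5 \times S^4$, this is accessible via the Cartan-Leray spectral sequence for $S^5\times S^4 \to \ninedimgen$, or more cheaply by relating $\ninedimgen$ to the lens-space-bundle and real-projective-space structures already used in Part \ref{p:physics} (recall $\ninedimgen$ is described there as an $L^5_4$-bundle over $\RP^4$). Next I would stably split $T\ninedimgen$ using the technique of \cref{RP_tangent,lens_rotate,arcana_decomp}: the tangent bundle of $S^5 \times S^4$ is stably $\underline{\C}^3 \oplus \underline{\R}^5$ (equivariantly, for the $D_8$-action built from multiplication-by-$i$, reflection/complex conjugation on $S^5$ and the antipodal map on $S^4$), and descending this splitting to $\ninedimgen$ expresses $T\ninedimgen \oplus \underline{\R}^k$ as a direct sum of (real and complexified) line bundles pulled back from $BD_8$ via $\phi$, twisted by the reflection/rotation characters. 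Then the Whitney sum formula gives $w(T\ninedimgen)$, hence $w_2(T\ninedimgen)$, as an explicit polynomial in $\phi^*x$, $\phi^*y$, $\phi^*w$, which I would match against $\phi^* w$.

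The main obstacle I anticipate is the equivariant stable splitting itself: complex conjugation on $S^5 \subset \C^3$ is $\R$-linear but not $\C$-linear, so the three complex coordinates do not split as a sum of $D_8$-equivariant \emph{complex} line bundles; rather, the relevant $D_8$-representation on $\R^6$ must be decomposed into two-dimensional real pieces (the standard rotation-reflection representation $V$ and its twists) plus possibly a sign character, and I must track carefully how the antipodal $\Z/2\Z$ on $S^4$ interacts with the reflection in $D_8$ so that the descended bundle is genuinely a sum of line bundles with identifiable $w_1$'s. Once that bookkeeping is done correctly, computing $w_2(T\ninedimgen)$ and checking it equals $\phi^*w$ is routine — and indeed it must come out right, since we already know from the Adams spectral sequence in \cref{the_ko_part} that the yellow pentagon $\textcolor{Goldenrod!67!black}{\Z/2\Z}$ summand in $\Omega_9^{\Spin\text{-}\GL^+(2,\Z)}(\pt)$ is nonzero and detected by the cohomology class $w^2x^5$, so a Spin-$D_{16}$ representative exists; the content here is exhibiting $\ninedimgen$ as one. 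A useful consistency check along the way is to confirm $w_1(T\ninedimgen) = 0$ (orientability, which we argued directly above) and to verify that the resulting Spin-$D_{16}$ structure is compatible with the orientation-reversing nature of complex conjugation on $L^5_4$, i.e. that the reflection in $D_{16}$ really does act on the would-be torus fiber as expected.
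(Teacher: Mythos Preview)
Your proposal is correct and follows essentially the same approach as the paper: stably split $T\ninedimgen$ by descending the equivariant splitting $T(S^5\times S^4)\oplus\underline\R^2\cong\underline\C^3\oplus\underline\R^5$ through the $D_8$-action, obtaining $V^{\oplus 3}\oplus\sigma^{\oplus 5}$ (where $\sigma=\Det(V)$), and then read off $w_1=0$, $w_2=w_2(V)=\phi^*w$ from the Whitney sum formula. The paper presents this as a two-step descent (first by $\Z/4\Z$ to $L_4^5\times S^4$, then by the conjugation-plus-antipodal $\Z/2\Z$), citing Fujino for the resulting splitting, while you propose descending by $D_8$ in one go; both give the same answer and your anticipated ``obstacle'' --- that each $\C\subset\C^3$ becomes a copy of the real $2$-dimensional standard representation $V$ rather than a complex line --- is exactly the key observation. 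Your preliminary step of computing $H^*(\ninedimgen;\Z/2\Z)$ via Cartan--Leray is unnecessary for this proposition, since the stable splitting gives the Stiefel--Whitney classes directly without needing the ambient cohomology ring.
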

To do this, we will stably split $T\ninedimgen$ in order to compute its low-degree Stiefel-Whitney classes,
following Fujino~\cite[Proposition 3.2]{Fuj75}.
\begin{proof}
Using \cref{RP_tangent,lens_rotate}, there is an isomorphism of vector bundles
\begin{equation}
\label{lens_sphere_stable_split}
	T(L_4^5\times S^4)\oplus\underline\R^2\overset\cong\longrightarrow \mathcal{L}^{\oplus 3}\oplus\underline\R^5.
\end{equation}
Let $\Z/2\Z$ act on $L_4^5\times S^4$ by complex conjugation on $L_4^5$ and the antipodal map on $S^4$, so that the
quotient is $\ninedimgen$. Then, \cref{RP_tangent} together with the analogous fact about complex conjugation tells
us that~\eqref{lens_sphere_stable_split} upgrades to an isomorphism of $\Z/2\Z$-equivariant vector bundles, where:
\begin{itemize}
	\item $\Z/2\Z$ acts trivially on the $\underline\R^2$ on the left-hand side of~\eqref{lens_sphere_stable_split}.
	\item $\Z/2\Z$ acts by complex conjugation on each $\mathcal{L}$ summand on the right-hand side.
	\item $\Z/2\Z$ acts by $-1$ on each factor of $\underline\R$ on the right-hand side.
\end{itemize}
Therefore when we take the quotient, we reprove a theorem of Fujino~\cite[Proposition 3.2]{Fuj75}: that there is an
isomorphism of vector bundles
\begin{equation}
\label{9d_stable_splitting}
	T\ninedimgen \oplus\underline\R^2 \overset\cong\longrightarrow V^{\oplus 3}\oplus \sigma^{\oplus 5},
\end{equation}
where $V$ and $\sigma$ are as follows: if $P\to \ninedimgen$ denotes the quotient $S^5\times S^4\to \ninedimgen$,
which is a principal $D_8$-bundle, then $\sigma$ is associated to $P$ and the sign representation $D_8\to \O(1)$
sending rotations to $1$ and reflections to $-1$, and $V$ is associated to the standard representation $D_8\to
\O(2)$ as rotations and reflections on $\R^2$.\footnote{What we call $V$, Fujino \cite[Proposition 3.1]{Fuj75}
calls $\eta_1$.} Thus $\sigma = \mathrm{Det}(V)$, so $w_1(V) = w_1(\sigma)$.

Throwing the Whitney sum formula at~\eqref{9d_stable_splitting}, and using the fact that $w_1(V) = w_1(\sigma)$ and
$w_2(\sigma) = 0$, because $\sigma$ is a line bundle, we learn that $w_1(T\ninedimgen) = 0$ and that
$w_2(T\ninedimgen) = w_2(V)$. Since $V$ is associated to a principal $D_8$-bundle via the standard representation
$D_8 \to \O(2)$, this is precisely the condition for $\ninedimgen$ to have a Spin-$D_{16}$ structure.
\end{proof}
We have found all but two of the generators of $\Omega_9^{\Spin\text{-}D_{16}}(\pt)$. The remaining two each
generate $\Z/2\Z$ summands. We will show that $\ninedimgen$ generates the yellow
$\textcolor{Goldenrod!67!black}{\Z/2\Z}$ summand, but at present we also do not know a generator for the green
triangle $\textcolor{Green}{\Z/2\Z}$ summand. However, looking at the $E_\infty$-page (\cref{D8_Einf}, bottom), we
know that there is some Spin-$D_{16}$ $7$-manifold $\halfQseven$ such that this green $\textcolor{Green}{\Z/2\Z}$
is generated by $\halfQseven\times S^1_p \times S^1_p$. For now, choose such a manifold $\halfQseven$; we will not
need to know anything more about it right now. We will choose an explicit example of $\halfQseven$ later in Section
\ref{7d_green}.
\begin{prop}
\label{W9_is_good}
There is a bordism invariant $\mu\colon\Omega_9^{\Spin\text{-}D_{16}} (\pt) \to \Z/2\Z$ which is $1$ on $\ninedimgen$ and
vanishes on the other generators (which are $L_4^9$, $\halfQseven\times
S^1_p \times S^1_p$, $\mathrm{Bott}\times S^1_R$, $\HP^2\times S^1_R$, $\HP^2\times\widetilde S^1_R$,
$X_{9}$, and $\widetilde X_{9}$).
\end{prop}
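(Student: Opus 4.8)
The plan is to exhibit $\mu$ as a reduced $\eta$-invariant of a twisted Dirac operator. By \cref{spin_D16_shear} a Spin-$D_{16}$ structure on a closed $9$-manifold $M$ carries a principal $D_{16}$-bundle; for a complex $D_{16}$-representation $\rho$ write $E_\rho\to M$ for the associated flat bundle and $\eta^{\mathrm D}_\rho(M)\in\R/\Z$ for the APS $\eta$-invariant of the $\rho$-twisted Dirac operator. This is additive under disjoint union but need not be a bordism invariant in dimension $9$, since the relevant index density lives in degree $10$ and $\int_W\widehat A(W)\,\mathrm{ch}(E_\rho)$ can be non-integral on a closed Spin-$D_{16}$ $10$-manifold $W$. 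As in the treatment of $L^7_3$ in Section~\ref{subsec:Mpcodim8}, I would cure this by taking $\mu=\eta^{\mathrm D}_{\rho_1}-\eta^{\mathrm D}_{\rho_2}$ with $\dim\rho_1=\dim\rho_2$, so that $\mathrm{ch}(\rho_1)-\mathrm{ch}(\rho_2)$ has vanishing degree-$0$ part and $\mu$ descends to a group homomorphism $\Omega_9^{\Spin\text{-}D_{16}}(\pt)\to\R/\Z$; since that group is all $2$-torsion, $\mu$ automatically takes values in $\tfrac12\Z/\Z\cong\Z/2\Z$. Concretely I would take $\rho_1,\rho_2$ to be two of the $2$-dimensional irreducible representations of $D_{16}$, say the ones sending a generating rotation $r$ to $\mathrm{diag}(\zeta^{k},\zeta^{-k})$ with $\zeta=e^{2\pi i/8}$ for two different values of $k$; these have the same character on every reflection, which will matter below.

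\textbf{Step 1: $\mu(\ninedimgen)\neq 0$.} Here I would use the double cover $S^5\times S^4\to\ninedimgen$ with free $D_8$-action together with the stable splitting $T\ninedimgen\oplus\underline\R^2\cong V^{\oplus 3}\oplus\sigma^{\oplus 5}$ established in the proof of \cref{9d_is_spD16}. Writing $\ninedimgen$ as the $L^5_4$-bundle over $\RP^4$, the restriction of $\rho_k$ to the fibre is a representation of $\Z/8\Z\subset D_{16}$ and the corresponding twisted $\eta$-invariant of $L^5_4$ is one of the lens-space $\eta$-invariants already recorded in Appendix~\ref{app:eta} (the ones entering~\eqref{etas_spZ8}); combining this with the $\RP^{4}$-base contribution via the $G$-index theorem for the free $D_8$-action, a short residue computation should give $\mu(\ninedimgen)=\tfrac12\bmod\Z$, i.e.\ the nonzero class of $\Z/2\Z$.

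\textbf{Step 2: $\mu$ vanishes on the other seven generators.} The products with $\HP^2$ (namely $\HP^2\times S^1_R$ and $\HP^2\times\widetilde S^1_R$) are killed by multiplicativity $\eta^{\mathrm D}_\rho(\HP^2\times N)=\mathrm{Index}^{\mathrm D}(\HP^2)\,\eta^{\mathrm D}_{\rho|_N}(N)=0$, since $\HP^2$ has Dirac index $0$. For $B\times S^1_R$, multiplicativity gives $\eta^{\mathrm D}_{\rho_k}(B\times S^1_R)=\widehat A[B]\,\eta^{\mathrm D}_{\rho_k|_{S^1_R}}(S^1_R)$, and since the holonomy around $S^1_R$ is a reflection, on which $\rho_1$ and $\rho_2$ agree, the difference vanishes. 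For $\halfQseven\times S^1_p\times S^1_p$ the two $S^1_p$ factors again force $\mu=0$ by multiplicativity (equivalently, on the $E_\infty$-page this generator is $\eta^2$ times a lower class, which $\mu$ kills). Finally $X_9$, $\widetilde X_9$, and $L^9_4$ sit in Adams filtration $0$ — $X_9,\widetilde X_9$ are detected by the mod-$2$ classes $w^2x^5, w^2y^5$ coming from free $\cA(1)$ summands in $H^*(X;\Z/2\Z)$ (\cref{dih_thom_coh}), and $L^9_4$ comes from the image of Spin-$\Z/8\Z$ bordism — so one only has to check that the $\rho_1,\rho_2$-twisted $\eta$-invariants of these three manifolds are integral, which follows from the explicit lens-space and Arcana $\eta$-computations of Appendix~\ref{app:eta}.

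\textbf{Main obstacle.} The delicate point is Step 2 for $L^9_4$, $X_9$, $\widetilde X_9$, and (depending on presentation) $\halfQseven\times S^1_p\times S^1_p$: these are themselves lens-space and real-projective-space fibrations with dihedral or metacyclic fundamental groups, so there is no cheap dimension/index argument ruling them out, and one must carry through the $\eta$-invariant residue calculations and verify that the chosen pair $(\rho_1,\rho_2)$ makes $\mu$ integral on all seven other generators while $\mu(\ninedimgen)=\tfrac12$. If no single pair separates $\ninedimgen$ from all seven, one takes $\mu$ to be a $\Z/2\Z$-linear combination of finitely many reduced twisted $\eta$-invariants (harmless, since each is a bordism homomorphism); that such a combination exists is guaranteed once one knows $\eta$-invariants detect $\Omega_9^{\Spin\text{-}D_{16}}(\pt)$, which follows from the Botvinnik--Gilkey(-Stolz)-type detection results used elsewhere in this paper, or directly from the computations just indicated.
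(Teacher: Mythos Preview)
Your approach is genuinely different from the paper's and has a real gap.

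The paper does not use $\eta$-invariants at all. Instead it defines $\mu$ purely topologically: given a Spin-$D_{16}$ manifold $M$ with principal $D_8$-bundle $P\to M$, the quotient $\widehat M\coloneqq P/D_4$ is a double cover of $M$ carrying a Spin-$D_8$ structure (the $D_8$-action on spinors restricts along $D_4\subset D_8$ to a $D_8\subset D_{16}$-action). This gives a transfer $D\colon\Omega_9^{\Spin\text{-}D_{16}}\to\Omega_9^{\Spin\text{-}D_8}$, and the paper sets $\mu(M)=\int_{\widehat M}(a^5b^4+a^3b^6)$. The vanishing on the seven other generators is then almost automatic: the five generators in the image of $\Omega_*^\Spin(B\Z/2\Z)$ (namely $B\times S^1_R$, $\HP^2\times S^1_R$, $\HP^2\times\widetilde S^1_R$, $X_9$, $\widetilde X_9$) have $D\circ i_8$ or $D\circ\tilde\imath_8$ producing bundles with either a trivial cover or $a=0$; $D(L_4^9)=\RP^9$ has one of $a,b$ zero; and $D(\halfQseven\times T_p^2)=D(\halfQseven)\times T_p^2$ has the degree-$9$ class pulled back from a $7$-manifold. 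Non-vanishing on $\ninedimgen$ is checked by an explicit intersection count on $D(\ninedimgen)$, which is another projective-bundle-over-projective-space, exactly as for the Arcana.

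Your analytic route is not unreasonable in spirit, but the proof as written does not close. The computations you defer to Appendix~\ref{app:eta} are not there: that appendix contains $\eta$-invariants only for lens spaces and lens-space bundles with $\Z/k\Z$ monodromy, not for Arcana manifolds, not for $\ninedimgen$, and not for any dihedral twist. Your Step~2 argument for $X_9$, $\widetilde X_9$ via ``Adams filtration $0$'' is a non-sequitur --- being detected by a mod-$2$ characteristic class says nothing about integrality of a twisted $\eta$-invariant. Most seriously, the fallback in your ``Main obstacle'' paragraph is invalid: the Botvinnik--Gilkey--Stolz detection theorems you invoke are stated and proved for $\ko_*(B\Z/2^\ell\Z)$ and Spin-$\Z/2^k\Z$ bordism, not for twisted Spin bordism over dihedral groups, and even granting detection, the existence of a $\mu$ with the prescribed values would still require either carrying out the computations or already knowing $\ninedimgen$ is independent of the other seven --- which is precisely the content of the proposition. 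So the argument as it stands is either incomplete or circular.
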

\begin{proof}
Let $M$ be a Spin-$D_{16}$ manifold with associated $D_8$-bundle $P\to M$. The quotient of $P$ by $D_4\subset D_8$
is a Spin-$D_8$ manifold $\widehat M$ which is a double cover of $M$: it is Spin-$D_8$ because the $D_8$-action on
$P$ lifts to a $D_{16}$-action on its spinor bundle $\mathbb S$, and pulling back along $D_4\inj D_8$, we get
the subgroup $D_8\subset D_{16}$ acting on $\mathbb S$. The double cover $\widehat M\to M$ corresponds to the class
$y(P)\in H^1(M;\Z/2\Z)$. The assignment $M\mapsto\widehat M$ is compatible with taking boundaries, and therefore
defines a map
\begin{equation}
	D\colon \Omega_k^{\Spin\text{-}D_{16}} (\pt) \longrightarrow \Omega_k^{\Spin\text{-}D_8} (\pt) \,.
\end{equation}
We use notation for Spin-$D_8$ characteristic classes from Section \ref{arcana}: $a$ and $b$ are the two
generators of $H^*(BD_4;\Z/2\Z)$, both in degree $1$. The class $a$ is detected by a reflection and $b$ is detected
by a generating rotation. Our invariant $\mu$ is defined to be
\begin{equation}
	\mu\colon\Omega_9^{\Spin\text{-}D_{16}} (\pt) \overset D\longrightarrow \Omega_9^{\Spin\text{-}D_8} (\pt) \xrightarrow{\int
	(a^5b^4 + a^3b^6)} \Z/2\Z.
\end{equation}
Recall the maps $i_8,\widetilde\imath_8\colon D_8\to D_{16}$ that we used to define two Spin-$D_{16}$ structures on
a Spin-$D_8$ manifold $M$. The composition $D\circ i_8 = 0$, because $i_8^*(y) = 0$, and therefore $\mu\circ i_8 =
0$. For $\widetilde\imath_8$, $\widetilde\imath_8^*(y) = \widetilde\imath_8^*(x) = a$, so manifolds in the image
of the map $D\circ\widetilde\imath_8$ have $a(P) = 0$, so $\mu\circ\widetilde\imath_8 = 0$ too. Therefore $\mu(N) =
0$ when $N$ is one of the following generators: $\mathrm{Bott}\times S^1_R$, $\HP^2\times S^1_R$,
$\HP^2\times\widetilde S^1_R$, $X_{9}$, and $\widetilde X_{9}$.

Next $L_9^4$. $D(L_9^4) = \RP^9$, which is Spin-$D_8$ but not Spin. Therefore $ab + b^2\ne 0$ for $\RP^9$, so $a\ne
b$; since $H^1(\RP^9;\Z/2\Z)$ has only one nonzero element, one of $a$ and $b$ vanishes, and therefore $a^5b^4 +
a^6b^3 = 0$ and $\mu(L_4^9) = 0$.

Finally, $\halfQseven\times S^1_p \times S^1_p$. The associated $D_4$-bundle is trivial in the torus
directions, so its characteristic classes pull back across the projection $\mathrm{pr}_1\colon \halfQseven\times
S^1_p \times S^1_p \to \halfQseven$, and the same is true of $D(\halfQseven\times S_p^1\times S_p^1) \cong
D(\halfQseven)\times S_p^1\times S_p^1$. The degree-$9$ class $a^5b^4 + a^3b^6$ vanishes on the $7$-manifold
$D(\halfQseven)$, so $\mu(\halfQseven\times S^1_p \times S^1_p) = 0$.

That accounts for all of the previously discovered generators, so it suffices to show that $\mu(\ninedimgen) = 1$.
First, we should characterize $D(\ninedimgen)$. This is a projective bundle over a projective space, like the
Arcana; as in Section \ref{arcana}, it is an $\RP^5$-bundle over $\RP^4$ which we specify as a $(\Z/2\Z) \oplus
(\Z/2\Z)$ quotient of $S^5\times S^4$ inside $\R^6\times\R^5$ with coordinates $(x_1,\dotsc,x_6, y_1,\dotsc,y_5)$.
Specifically, letting $\alpha$ and $\beta$ be the generators of $(\Z/2\Z) \oplus (\Z/2\Z)$, act as follows:
\begin{equation}
\begin{aligned}
	\alpha(x_1,x_2,x_3,x_4,x_5,x_6,y_1,\dotsc,y_5) &= (-x_1,-x_2,-x_3,-x_4,-x_5, -x_6, y_1,\dotsc,y_5) \,, \\
	\beta(x_1,x_2,x_3,x_4,x_5,x_6,y_1,\dotsc,y_5)  &= (-x_1,x_2,-x_3,x_4,-x_5, x_6, -y_1,\dotsc,-y_5) \,.
\end{aligned}
\end{equation}
Why is this? $D(\ninedimgen)$ is the quotient of $\RP^5\times S^4$ by an involution; $\beta$ is this involution,
and $\alpha$ is the antipodal action on $S^5$ to get us to $\RP^5\times S^4$.

As in \cref{arcana_suffice}, one can check that $w\big(D(\ninedimgen)\big) = ab + b^2$ and that the set of solutions to
the equations $x_2 = x_3 = x_4 = x_5 = x_6 = y_1 = y_2 = y_3 = y_4 = 0$ has four solutions $\set{x_1 = \pm 1, y_5 =
\pm 1}$ on $S^5\times S^4$, hence one solution on $D(\ninedimgen)$, so the corresponding cohomology class is
nonzero. Tracking the $(\Z/2\Z) \oplus (\Z/2\Z)$-actions on these coordinates like in \cref{arcana_suffice}, the class we get
is $(a+b)^2a^3b^4 = a^5b^4 + a^3b^6$, so $\mu(\ninedimgen) = 1$.
\end{proof}
\begin{rem}
The manifolds $(\CP^m\times S^n)/ (\Z/2\Z)$, where $\Z/2\Z$ acts by complex conjugation on $\CP^m$ and by the
antipodal map on $S^n$, are called \term{Dold manifolds}. Dold studied them in~\cite{Dol56}, showing that they
generate the unoriented bordism ring. These manifolds are quotients of $S^{2m+1}\times S^n$ by a free $\mathrm O
(2)$-action, and in this regard $\ninedimgen$ is an analogue of the Dold manifolds with $\O (2)$ replaced with
$D_8$. From this perspective, it is not so surprising that a manifold like $\ninedimgen$ shows up in our list of
generators.  Kamata-Minami~\cite{KM73} and Kamata~\cite{Kam74} use a family of manifolds including $\ninedimgen$ to
study $\Omega_\ast^{\mathrm U}(BD_{2k})$, providing an additional hint that $\ninedimgen$, or something like it,
would be a nonzero element of $\Omega_9^{\Spin\text{-}D_{16}} (pt)$.
\end{rem}
%

\subsubsection{The generator $\halfQseven$ in dimension 7}
\label{7d_green}
Next, the $\textcolor{Green}{\Z/4\Z}$ in dimension $7$. Recall the definition of $Q_4^7$ from Appendix \ref{ss:eta_Q}: the
quotient of the unit sphere bundle in $V\coloneqq \mathcal{O}(2) \oplus \underline{\C}^3\to\CP^1$ by the
$\Z/4\Z$-action given by multiplication by $i$. This is a fiber bundle over $S^2 = \CP^1$ with fiber $L_4^5$.
\begin{defn}
Let $\tau$ denote the  the involution on $Q_4^7$ which is complex conjugation on the lens space fibers and the
antipodal map on $S^2$. This is a free involution; let $\halfQseven$ be the quotient.
\end{defn}
$\halfQseven$ is a fiber bundle over $\RP^2$ with fiber $L_4^5$. Reflection on $L_4^5$ is orientation-reversing,
and the antipodal map on $S^2$ is orientation-reversing, so $\halfQseven$ is orientable.

Complex conjugation acts on $\pi_1(Q_4^7)\cong\Z/4\Z$ by $-1$, so $\pi_1(\halfQseven)\cong D_8$. The universal cover
is $S(V)\to \halfQseven$, and this is a principal $D_8$-bundle.
\begin{prop}
\label{halfQseven_spD16}
$\halfQseven$, with principal $D_8$-bundle $S(V)\to\halfQseven$, has a Spin-$D_{16}$ structure.
\end{prop}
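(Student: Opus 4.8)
The plan is to follow the same strategy used for $\ninedimgen$ in \cref{9d_is_spD16}: stably split the tangent bundle of $\halfQseven$ in terms of bundles associated to the principal $D_8$-bundle $S(V)\to\halfQseven$, read off $w_1$ and $w_2$, and check that $w_1=0$ and $w_2$ equals $w_2$ of the standard two-dimensional representation of $D_8$, which by \cref{whichcoh} and the discussion preceding \cref{spin_D16_shear} is exactly the condition for a Spin-$D_{16}$ structure compatible with the given $D_8$-bundle.

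First I would stably split $TQ_4^7$. Using the fibration $L_4^5\hookrightarrow Q_4^7\to\CP^1$, pick a connection to split $TQ_4^7\cong V_{\mathrm{vert}}\oplus p^*T\CP^1$, where $V_{\mathrm{vert}}$ is the vertical tangent bundle. The vertical bundle is the tangent bundle along the lens-space fibers: just as in \cref{lens_rotate}, $V_{\mathrm{vert}}\oplus\underline\R\cong \mathcal L^{\oplus 3}$ fiberwise, where $\mathcal L$ is the complex line bundle pulled back from the $\Z/4\Z$-action on the fiber of $\mathcal O(2)\oplus\underline{\C}^3$. Collecting this with the standard $T\CP^1\oplus\underline\C\cong\mathcal O(1)^{\oplus 2}$, one obtains a stable splitting of $TQ_4^7$ into complex line bundles. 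Then I would add the complex-conjugation-times-antipodal $\Z/2\Z$-action: exactly as in the proof of \cref{9d_is_spD16}, conjugation acts on each $\mathcal L$ summand by complex conjugation and by $-1$ on the trivial real summands, while the antipodal map on $S^2$ lifts the $T\CP^1$ piece. Passing to the quotient converts each conjugation-equivariant complex line bundle into a real rank-$2$ bundle associated to the standard $D_8$-representation (or its determinant twist), and each sign-twisted trivial summand into the line bundle $\sigma=\Det(V)$. This gives $T\halfQseven\oplus\underline\R^{\,k}\cong V^{\oplus a}\oplus\sigma^{\oplus b}$ for appropriate $a,b$, where $V$ is the rank-$2$ bundle associated to $S(V)\to\halfQseven$ via the standard $D_8$-action.

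From there, the Whitney sum formula finishes the argument: since $\sigma$ is a line bundle, $w_2(\sigma)=0$ and $w_1(\sigma)=w_1(V)$, so $w_1(T\halfQseven)=0$ and $w_2(T\halfQseven)=w_2(V)$, which is the desired characteristic class. By \cref{w2_twisted_cor} (with the representation $V\oplus 3\Det(V)$ as in \cref{spin_D16_shear}), this produces the Spin-$D_{16}$ structure, completing the proof. I would also note that orientability was checked above (product of two orientation-reversing involutions), and that the $D_8$ fundamental group was identified, so the only real content is the bundle computation.

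The main obstacle I anticipate is bookkeeping rather than any conceptual difficulty: making sure the $\Z/2\Z$-equivariant structure on the vertical bundle is compatible with complex conjugation acting on the $\mathcal O(2)$ summand (which has a nontrivial clutching function over $\CP^1$), so that the quotient really is the associated bundle of $S(V)$ and not some twist of it. The key sanity check is that the resulting $w_2$ is $w_2(V)$ and not, say, $w_2(V)+w_1(V)^2$ or $0$; once the equivariant splitting is set up correctly, the Whitney sum computation is routine. This is essentially the same subtlety that appeared (and was handled) in the proof of \cref{9d_is_spD16} via Fujino's splitting, so I expect it to go through with the analogous care.
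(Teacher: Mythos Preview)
Your overall strategy matches the paper's: stably split $T\halfQseven$ by first splitting upstairs and then tracking the $\Z/2\Z$-action, then compare $w_2$ with the Stiefel-Whitney class of the bundle associated to the standard $D_8$-representation. However, the step where you assert that ``passing to the quotient converts each conjugation-equivariant complex line bundle into a real rank-$2$ bundle associated to the standard $D_8$-representation'' and hence obtain $T\halfQseven\oplus\underline\R^{\,k}\cong V^{\oplus a}\oplus\sigma^{\oplus b}$ is where the proposal goes wrong.

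What actually happens is this. The paper's splitting is
\[
T\halfQseven\oplus\underline\R^{\,2}\cong p^*\sigma^{\oplus 3}\oplus E_1^{\oplus 2}\oplus E_2,
\]
where $E_1$ comes from the trivial $\underline\C$ summands and $E_2$ from the $\mathcal O(2)$ summand. The bundle $E_2$ is \emph{not} isomorphic to the standard $D_8$-associated bundle $E$, because of the nontrivial clutching over $\CP^1$; in particular $w_2(E_2)\ne w_2(E)$. The Whitney sum formula then gives $w_2(T\halfQseven)=w_2(E_2)+x^2$, not $w_2(V)$ directly. To finish, the paper has to (i) compute $H^*(\halfQseven;\Z/2\Z)$ via the Serre spectral sequence for $L_4^5\to\halfQseven\to\RP^2$, obtaining generators $x,y$ in degree $1$ and $w$ in degree $2$; and (ii) determine the coefficients of $w_2(E_2)$ and $w_2(E)$ in this basis by restricting to three different subspaces (a $\Z/4\Z$-cover, a Klein bottle, and a lens-space fiber). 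Only after this does one see $w_2(E_2)+x^2=w_2(E)$.

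So the issue you flagged as ``bookkeeping'' with the $\mathcal O(2)$ twist is in fact the main content of the proof, and it does \emph{not} go through by the same mechanism as Fujino's splitting for $\ninedimgen$ (where all the rank-$2$ pieces really were copies of the standard bundle). Your sanity check---that the answer should be $w_2(V)$ and not $w_2(V)+w_1(V)^2$---is exactly the question, and resolving it requires the cohomology computation and the coefficient comparison, neither of which appears in your outline.
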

\begin{proof}
First, like in Section \ref{9d_yellow}, we split the stable tangent bundle of $\halfQseven$. The
splitting~\eqref{Sn_stable_splitting} behaves well in families; that is, if $\pi\colon E\to M$ is a Euclidean
vector bundle over a manifold $M$ and $i\colon S(E)\inj E$ denotes the unit sphere bundle, then the normal
bundle to the inclusion $i$ is trivializable using the outward unit normal vector field. That is,
\begin{equation}
\label{sphere_bundle_splitting}
	TS(E)\oplus\underline\R \overset\cong\longrightarrow TE|_{S(E)} \overset\cong\longrightarrow (\pi^*TM \oplus
	\pi^*E)|_{S(E)} \,,
\end{equation}
the latter isomorphism given by choosing a connection on $V$. We care about the vector bundle $V\coloneqq
\underline\C^2\oplus\mathcal O(2)\to S^2$; from~\eqref{sphere_bundle_splitting} we learn
\begin{equation}
	TS(V)\oplus\underline\R \overset\cong\longrightarrow \pi^*(TS^2) \oplus \underline\C^2\oplus \pi^*\mathcal
	O(2) \,.
\end{equation}
Add on the normal bundle to $S^2\subset\R^3$ to stably trivialize $TS^2$:
\begin{equation}
\label{T_S_7}
	TS(V)\oplus\underline\R^2 \overset\cong\longrightarrow \underline\R^3 \oplus \underline\C^2\oplus \pi^*\mathcal
	O(2) \,.
\end{equation}
Now we descend to $Q_4^7$, which means adding $\Z/4\Z$-actions to~\eqref{T_S_7}. $\Z/4\Z$ acts on $TS(V)$ by
differentiating the $\Z/4\Z$-action on $Q_4^7$. The $\Z/4\Z$-action on the normal bundle to $S(V)\subset V$ is
equivariantly trivial, just as in~\eqref{lens_rotate}, and the action on the normal bundle to $S^2\subset\R^3$ is
trivial, because $\Z/4\Z$ acts trivially on the base $S^2$. This takes care of the left-hand side of~\eqref{T_S_7};
for the right-hand side, $\Z/4\Z$ acts trivially on $\underline\R^3$, because this summand came from the base and
$\Z/4\Z$ acts fiberwise; and on each summand of $\underline\C^2\oplus \pi^*\mathcal O(2)$, $\Z/4\Z$ acts fiberwise by
the standard one-dimensional complex representation. Therefore when we take the quotient, we obtain a stable splitting
\begin{equation}
\label{T_Q47}
	TQ_4^7\oplus\underline\R^2 \overset\cong\longrightarrow \underline\R^3 \oplus \mathcal L_1^{\oplus 2} \oplus
	\mathcal L_2,
\end{equation}
where $\mathcal L_1,\mathcal L_2\to Q_4^7$ are complex line bundles. Explicitly, $\mathcal L_1$ is associated to
the principal $\Z/4\Z$-bundle $S(V)\to Q_4^7$ via the rotation representation $\rho\colon \Z/4\Z\to\mathrm U(1)$
and $\mathcal L_2\cong \mathcal L_1\otimes p^*\mathcal O(2)$, where $p\colon Q_4^7\to S^2$ is the bundle map. Using
these descriptions and the Serre spectral sequence for the fiber bundle $p\colon Q_4^7\to S^2$ one can compute the
Chern classes of $\mathcal L_1$ and $\mathcal L_2$, similarly to the calculation in the proof of
\cref{serre_for_green_7d} below.

Next we want to descend to $\halfQseven$, so we incorporate $\Z/2\Z$-actions into~\eqref{T_Q47}. Again, the normal
bundles on the left-hand side are equivariantly trivial. On the right-hand side, $\Z/2\Z$ acts on $\underline\R^3$
by the antipodal map, which is $-1$. On $\mathcal L_1^{\oplus 2}$ and $\mathcal L_2$, $\Z/2\Z$ acts by complex
conjugation. Thus we have a stable splitting analogous to~\eqref{9d_stable_splitting}:
\begin{equation}
\label{half_Qseven_stable_splitting}
	T\halfQseven \oplus\underline\R^2 \overset\cong\longrightarrow p^*\sigma^{\oplus 3} \oplus E_1^{\oplus
	2}\oplus E_2,
\end{equation}
where $p\colon\halfQseven\to\RP^2$ is induced from $\pi\colon Q_4^7\to S^2$, $E_1,E_2\to \halfQseven$ are real,
rank-two vector bundles, and $\sigma\to \RP^2$ is the tautological line bundle.
By keeping track of which elements
of $D_8$ act orientation-reversingly on each summand, we learn $w_1(p^*\sigma) = w_1(E_1) = w_1(E_2)$. Call this
element $x$. Using this, and the fact that $w_2(\sigma) = 0$, apply the Whitney sum formula
to~\eqref{half_Qseven_stable_splitting}. We learn $w_1(T\halfQseven) = 0$ (indeed, we already checked that
$\halfQseven$ is orientable) and that $w_2(T\halfQseven) = w_2(E_2) + x^2$.

To proceed we need to better understand the mod $2$ cohomology ring of $\halfQseven$.
\begin{lem}
\label{serre_for_green_7d}
The Serre spectral sequence for the fiber bundle $L_4^5\to \halfQseven\to\RP^2$ collapses, providing an isomorphism
\begin{equation}
	H^*(\halfQseven;\Z/2\Z) \cong \Z/2\Z[x, y, w]/(x^3, xy + y^2, w^3) \,,
\end{equation}
where $\abs x = \abs y = 1$ and $\abs w = 2$.
\end{lem}
The class $x\in H^1(\halfQseven;\Z/2\Z)$ is the same $x$ we met above.
\begin{proof}
Draw the Serre spectral sequence in \cref{half_Qseven_Serre} and see what happens! For degree reasons,
the only differentials that can be nonzero are of the form $d_2\colon E_2^{0,q}\to E_2^{2,q-1}$. Since $E_2^{0,*} =
H^*(L_4^5;\Z/2\Z)$ is generated as a ring by $y\in H^1(L_4^5;\Z/2\Z)$ and $w\in H^2(L_4^5;\Z/2\Z)$, the Leibniz rule
tells us that all $d_2$s out of the $p = 0$ line are determined by their values on $y$ and $w$. We will show
$d_2(y) = 0$ and $d_2(w) = 0$.
\begin{figure}[h!]
\centering
\begin{sseqdata}[name=halfQsevenSerre, cohomological Serre grading, xrange={0}{2}, yrange={0}{5},
classes={draw=none}, >=stealth, xscale=2,
x label = {$\displaystyle{q\uparrow \atop p\rightarrow}$},
x label style = {font = \small, xshift = -17ex, yshift=5.4ex}
]
	\class["1"](0, 0)
	\class["x"](1, 0)
	\class["x^2"](2, 0)

	\class["y"](0, 1)
	\class["xy"](1, 1)
	\class["x^2y"](2, 1)

	\class["w"](0, 2)
	\class["wx"](1, 2)
	\class["wx^2"](2, 2)

	\class["wy"](0, 3)
	\class["wxy"](1, 3)
	\class["wx^2y"](2, 3)

	\class["w^2"](0, 4)
	\class["w^2x"](1, 4)
	\class["w^2x^2"](2, 4)

	\class["w^2y"](0, 5)
	\class["w^2xy"](1, 5)
	\class["w^2x^2y"](2, 5)

\d2(0, 1)
\d2(0, 2)
\end{sseqdata}
\printpage[name=halfQsevenSerre, page=2]
\caption{The Serre spectral sequence computing $H^*(\halfQseven; \Z/2\Z)$. In \cref{serre_for_green_7d}, we show the
two pictured differentials vanish, and that this implies the entire spectral sequence collapses.}
\label{half_Qseven_Serre}
\end{figure}

Suppose $d_2(y)\ne 0$. Then on the $E_\infty$-page, there is a single $\Z/2\Z$ summand in total degree $1$, forcing
$H^1(\halfQseven;\Z/2\Z)$ to be $\Z/2\Z$. However, we already know
\begin{equation}
H^1(\halfQseven;\Z/2\Z) = \Hom(\pi_1(\halfQseven), \Z/2\Z) = \Hom(D_8, \Z/2\Z) \cong (\Z/2\Z) \oplus (\Z/2\Z) \,,
\end{equation}
so $d_2(y) = 0$.

Now suppose $d_2(w)\ne 0$. This, together with $d_2(y) = 0$, implies $d_2(wy) = wyx^2$, $d_2(w^2) = 0$, and
$d_2(w^2y) = 0$. But then on the $E_\infty$-page, there are two $\Z/2\Z$ summands in total degree $2$ and three
$\Z/2\Z$ summands in total degree $5$, which violates Poincaré duality for $\halfQseven$.

The multiplicative structure is clear except for the relation $xy + y^2 = 0$. This follows from the fact that $x$
and $y$ pull back from $BD_8$, and $xy + y^2 = 0$ in $H^*(BD_8;\Z/2\Z)$.
\end{proof}
Recall that $w_2(T\halfQseven) = w_2(E_2) + x^2$, and that we want to show this equals $w_2$ of the associated
bundle to the principal $D_8$-bundle $S(V)\to\halfQseven$ and the standard two-dimensional real representation of
$D_8$. Call this bundle $E$.

\Cref{serre_for_green_7d} tells us that $w_2(E_2) = a_1 x^2 + a_2xy + a_3z$ and $w_2(E) = b_1 x^2 + b_2xy + b_3z$.
\begin{itemize}
	\item To compute $a_1$ and $b_1$, let $\widetilde W$ be the quotient of $S(V)$ by $\Z/2\Z$ acting by the
	antipodal map on the base and complex conjugation on the fibers. Then $\pi\colon \widetilde W\to W$ is a
	principal $\Z/4\Z$-bundle, and $x^2$ pulls back to the nonzero class in $H^2(\widetilde W;\Z/2\Z)\cong\Z/2\Z$. One
	can check that $\pi^*E_2$ is Spin, but $\pi^*E$ is not, so $a_1 = 0$ and $b_1 = 1$.
	\item To compute $a_2$ and $b_2$, first pull back $\halfQseven\to\RP^2$ along $\RP^1\inj\RP^2$; then pull back
	along the inclusion $L_4^1\inj L_4^5$, which is compatible with the fiber bundle structure. The resulting space
	is the Klein bottle $KB$, as a $L_4^1 = S^1$-bundle over $\RP^1 = S^1$. For both $E_2$ and $E$,
	$\pi_1(KB)\cong\Z$ acts in a way that may be lifted to the spinor bundle, so $a_2 = b_2 = 0$.
	\item Finally, $a_3 = b_3 = 1$ by restricting $E_2$ and $E$ to the fiber; as bundles over lens spaces, neither
	is Spin.
\end{itemize}
Thus $w_2(E_2) + x^2 = w_2(E)$.
\end{proof}
Recall that $\Omega_7^{\Spin\text{-}D_{16}} (\pt) \cong\textcolor{Green}{(\Z/4\Z)}\oplus\textcolor{RedOrange}{(\Z/2\Z)}$. We are
trying to generate the green $\textcolor{Green}{\Z/4\Z}$ summand (corresponding to the upward-pointing triangle pieces in the
$E_\infty$-page).

If $M$ is a Spin-$D_{16}$ manifold with associated principal $D_8$-bundle $P\to M$, let $\widehat M\coloneqq
P/\Z/4\Z$; then, $\widehat M$ is a Spin-$\Z/8\Z$ manifold which is a double cover of $M$: the Spin-$\Z/8\Z$ structure
comes from $P\to\widehat M$, which is a $\Z/4\Z$ bundle, and the action on spinors lifts to the pullback of
$D_{16}\to D_8$ along $\Z/4\Z\inj D_8$, giving us $\Z/8\Z$ as claimed. The double cover $\widehat M\to M$ is classified
by $x(P)\in H^1(M;\Z/2\Z)$. Like the Spin-$D_8$ double cover we used in Section \ref{9d_yellow}, this assignment defines a
bordism invariant
\begin{equation}
	\varpi\colon \Omega_k^{\Spin\text{-}D_{16}} (\pt) \longrightarrow\Omega_k^{\Spin\text{-}\Z/8\Z} (\pt) \,.
\end{equation}
\begin{prop}
\label{half_Qseven_order_4}
$\varpi(\halfQseven)$ is a generator of $\Omega_7^{\Spin\text{-}\Z/8\Z} (\pt) \cong\Z/4\Z$.
\end{prop}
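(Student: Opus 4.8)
The plan is to identify the Spin-$\Z/8\Z$ manifold $\varpi(\halfQseven)$ with $Q_4^7$ carrying its standard structure, and then detect its bordism class with an $\eta$-invariant. By construction of $\varpi$, the manifold $\varpi(M)$ is the double cover $P/(\Z/4\Z)$ of $M$ classified by $x(P)\in H^1(M;\Z/2\Z)$, where $\Z/4\Z\subset D_8$ is the rotation subgroup. For $M=\halfQseven$ the universal cover is $P=S(V)$ and $\pi_1(\halfQseven)=D_8$; since $x=w_1(V)$ is the character of $D_8$ that is nontrivial precisely on reflections (\cref{dih_mod_2_coh}), its kernel is exactly the rotation subgroup $\Z/4\Z$. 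Hence the double cover classified by $x$ is $S(V)/(\Z/4\Z)=Q_4^7$, with its tautological principal $\Z/4\Z$-bundle $S(V)\to Q_4^7$. So $\varpi(\halfQseven)=Q_4^7$ as a manifold with $\Z/4\Z$-bundle, equipped with the Spin-$\Z/8\Z$ structure induced from the Spin-$D_{16}$ structure on $\halfQseven$ built in \cref{halfQseven_spD16}.

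Next I would check that this induced Spin-$\Z/8\Z$ structure on $Q_4^7$ is the one used for the $\eta$-invariant computations of Appendix~\ref{app:eta}; this is the technical heart of the argument. The stable splitting $TQ_4^7\oplus\underline\R^2\cong\underline\R^3\oplus\mathcal L_1^{\oplus 2}\oplus\mathcal L_2$ obtained while proving \cref{halfQseven_spD16}, together with the explicit identifications of $\mathcal L_1$ (associated to $S(V)\to Q_4^7$ via the rotation representation $\rho$ of $\Z/4\Z$) and $\mathcal L_2\cong\mathcal L_1\otimes p^*\mathcal O(2)$, already exhibits $Q_4^7$ as the $L_4^5$-bundle over $S^2$ with exactly the bundle data used in Appendix~\ref{subsec:exteta}. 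One then has to see that the trivialization of $w_2(TQ_4^7\oplus\rho)$ coming from the $\Z/8\Z$-lift of the spinor bundle --- obtained by restricting the $D_{16}$-lift over $\halfQseven$ along $\Z/4\Z\inj D_8$, as in the definition of $\varpi$ --- matches the reference choice, by tracking the lift through the quotients $S(V)\to S(V)/D_8=\halfQseven$ and $S(V)\to S(V)/(\Z/4\Z)=Q_4^7$.

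With the identification established, I would finish using that the relevant $\eta$-invariants are bordism invariants for Spin-$\Z/8\Z$ manifolds (Appendix~\ref{subsec:exteta}): the combination $\eta_{3/2}^{\text{D}}-\eta_{1/2}^{\text{D}}$ defines a homomorphism $\Omega_7^{\Spin\text{-}\Z/8\Z}(\pt)\to\R/\Z$, and by \eqref{etas_spZ8} it takes the value $\tfrac14$ on $Q_4^7$. Since $\tfrac14$ has order $4$ in $\R/\Z$, the class $[\varpi(\halfQseven)]=[Q_4^7]$ has order at least $4$ in $\Omega_7^{\Spin\text{-}\Z/8\Z}(\pt)\cong\Z/4\Z$, hence is a generator.

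The main obstacle is the middle step: a priori $Q_4^7$ could carry two inequivalent Spin-$\Z/8\Z$ structures (as happens for $L_4^5$, cf.\ \eqref{etas_spZ8}), one of which might even be null-bordant, so one must genuinely verify that $\varpi$ produces the structure with nonzero $\eta$-invariant rather than merely locate the underlying manifold. If the spinor bookkeeping becomes unwieldy, an equally valid route is to compute $\eta_{3/2}^{\text{D}}-\eta_{1/2}^{\text{D}}$ of $\varpi(\halfQseven)$ directly from the above stable splitting via the Atiyah--Patodi--Singer-type lens-space formulas of Appendix~\ref{app:eta} --- those formulas require only the $\Z/4\Z$-bundle and the Spin data, both visible in the splitting --- and read off the answer $\tfrac14$, which yields the same conclusion.
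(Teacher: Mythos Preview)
Your approach is correct and is essentially the paper's, but you are working harder than necessary at the step you flag as the ``technical heart.'' The paper's proof is two sentences: the double cover of $\halfQseven$ classified by $x$ is $Q_4^7$ by construction, and $Q_4^7$ with \emph{any} of its Spin-$\Z/8\Z$ structures has order $4$ in $\Omega_7^{\Spin\text{-}\Z/8\Z}(\pt)\cong\Z/4\Z$. So there is no need to match the induced structure against a reference one.

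Your worry that one of the two Spin-$\Z/8\Z$ structures on $Q_4^7$ might be null-bordant is reasonable by analogy with $L_4^5$, but it does not materialize here: the two structures differ by the nontrivial element of $H^1(Q_4^7;\Z/2\Z)$, which in the lens-space formula corresponds to a charge shift $q\mapsto q+2$. Plugging into~\eqref{eq:Qeta} gives $(\eta_{3/2}^{\text D}-\eta_{1/2}^{\text D})(Q_4^7)=\pm\tfrac14$ for one structure and $(\eta_{7/2}^{\text D}-\eta_{5/2}^{\text D})(Q_4^7)=\mp\tfrac14$ for the other, so both are generators. Your fallback plan of computing the $\eta$-invariant directly from the stable splitting would of course discover this, but recognizing it up front shortens the argument to the paper's two lines.
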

This implies that $\halfQseven$ represents either $(1, 0)$ or $(1, 1)$ in
$\Omega_7^{\Spin\text{-}D_{16}} (\pt) \cong\textcolor{Green}{(\Z/4\Z)} \oplus \textcolor{RedOrange}{(\Z/2\Z)}$, which is what we
wanted.
\begin{proof}
Essentially by definition, the double cover of $\halfQseven$ we get from $\varpi$ is $Q_4^7$, and $Q_4^7$ with any
of its Spin-$\Z/8\Z$ structures is order $4$ in $\Omega_7^{\Spin\text{-}\Z/8\Z} (\pt)$.
\end{proof}

\subsubsection{The orange generator in dimension 7}
\label{7d_orange}
\label{s_7dorange}
We are missing one generator in dimension $7$, corresponding in the $E_\infty$-page to the orange
(rightward-pointing triangle)
$\textcolor{RedOrange}{\Z/2\Z}$ summand in topological degree $7$. Let $Q_{16}$ be the Pin$^-$ cover of $D_8$ with
reference to the standard, two-dimensional representation; that is, $Q_{16}$ is the pullback of $D_8\inj \mathrm
O(2)$ by the double cover $\Pin^-(2)\to \O(2)$.\footnote{Another way $Q_{16}$ arises is as the spin cover of the
representation $V\oplus\Det(V)\colon D_8\to\SSO(3)$, i.e.\ the pullback of this representation by the double cover
$\mathrm{SU}(2)\to\SSO(3)$.} $Q_{16}$ is called the \term{generalized quaternion group} of
order $16$, or the \term{binary dihedral group} $\mathit{2D}_8$, or the \term{dicyclic group} $\mathrm{Dic}_{16}$.
$Q_{16}$ acts freely on $S^3$; the quotient is called a \term{prism manifold}.

Inside $Q_{16}$ there are classes $\hat r$ and $j$ such that $Q_{16}$ has the presentation
%
\begin{equation}
	Q_{16} = \langle \hat{r}\mid \hat{r}^{8} = \text{id} \,, \enspace j^2 = \hat{r}^4 \,, \enspace
	j^{-1} \, \hat{r} \, j = \hat{r}^{-1} \rangle.
\end{equation}
Under the double cover $Q_{16}\to D_8$, $\hat r$ is sent to a rotation and $j$ is sent to a reflection.

Let $\zeta$ be the involution on the quaternions $\mathbb H$ which is $1$ in the $1$ and $i$ directions and $-1$ in
the $j$ and $k$ directions. Then, up to a factor of $-1$, $\zeta$ commutes with the standard action of $Q_{16}$ on
$\mathbb H$: if we think of $\mathbb H$ as $\C^2$, $\zeta(z, w) = (z, -w)$. This commutes with multiplication by a
unit complex number on both factors (the action of $\hat r$), but $j\cdot (z, w) = (w, -z)$, so $\zeta\circ j =
-j\circ\zeta$.
Since $\zeta$ is an isometry, it restricts to an involution on
$S^3\subset\mathbb H$. We would like to descend it to an involution on $S^3/Q_{16}$; a priori, this is not
possible, because $\zeta$ does not commute with the $Q_{16}$-action on $S^3$, but the two actions commute up to a
sign, and since $S^3/Q_{16}$ is a quotient of $\RP^3$, sign discrepancies do not matter, allowing $\zeta$ to induce
a (very much non-free) $\Z/2\Z$-action $\overline\zeta$ on $S^3/Q_{16}$.

The final generator, which we call $\orangeseven$, is the quotient of
\begin{equation}
	(S^3/Q_{16})\times T_p^2 \times S^2 
\end{equation}
where $T^2_p = S^1_p \times S^1_p$, by the involution which is:
\begin{itemize}
	\item $\overline\zeta$ on $S^3/Q_{16}$,
	\item the involution on $T_p^2$ whose quotient is the Klein bottle $\mathit{KB}$, and
	\item the antipodal action on $S^2$.
\end{itemize}
The quotient is an $S^3/Q_{16}\times T^2$-bundle over $\RP^2$; it can also be realized as a fiber bundle over
$\mathit{KB}$
whose fiber is a $S^3/Q_{16}$-bundle over $\RP^2$. The quotient is oriented. Using the long exact sequence in
homotopy groups associated to a fibration, we learn
\begin{equation}
	\pi_1(\orangeseven) \cong (Q_{16}\times\Z^2)\rtimes\Z/2\Z,
\end{equation}
where $\Z/2\Z$ acts separately on $Q_{16}$ and on $\Z^2$; therefore $\Z^2$ is a normal subgroup of
$\pi_1(\orangeseven)$.
\begin{defn}
\label{orange_seven_bundle}
We define a principal $D_8$-bundle $P\to \orangeseven$ by specifying a map from the fundamental group $\lambda\colon \pi_1(\orangeseven) \to
D_8$; then $P$ is defined so that its monodromy around a loop $\gamma$ in $\orangeseven$ is $\lambda([-\gamma])$,
which uniquely specifies a principal $D_8$-bundle up to isomorphism.

First, let $q_1\colon \pi_1(\orangeseven)\to Q_{16}\rtimes\Z/2\Z$ be the quotient by the normal $\Z^2\subset
\pi_1(\orangeseven)$ noted above. Then let $q_2$ be the quotient by $-1\in Q_{16}$; because $\zeta$ is
multiplication by $-1$ in some coordinate directions, once we take this quotient, the semidirect product untwists,
so $q_2$ is a homomorphism
\begin{equation}
	Q_{16}\rtimes\Z/2\Z\to D_8\times\Z/2\Z.
\end{equation}
Finally, let $q_3\colon D_8\times\Z/2\Z\to D_8$ be the identity on the first factor and the inclusion of a
reflection subgroup on the second factor.\footnote{The choice of specific reflection subgroup does not matter here,
even though it does elsewhere in this paper.} Then define
\begin{equation}
	\lambda\coloneqq q_3\circ q_2\circ q_1.
\end{equation}
\end{defn}
To show $\orangeseven$ is the last generator, we must show it is Spin-$D_{16}$ with the principal $D_8$-bundle in
\cref{orange_seven_bundle}, and that the bordism class of $(\orangeseven, P)$ is linearly independent of that of
$\halfQseven$. The first fact, which we prove in \cref{seven_orange_is_spin_d16}, is similar to our proofs for
$\halfQseven$ and $\ninedimgen$ in \cref{9d_is_spD16,halfQseven_spD16}: stably split the tangent bundle and
compute. Our approach showing that $\orangeseven$ is linearly independent from $\halfQseven$, however, is different
from previous approaches: we use a \term{Smith homomorphism} $S_{x^2}\colon
\Omega_7^{\Spin\text{-}D_{16}}\to\Omega_5^{\Spin\text{-}Q_{16}}$ and show that the images of $\halfQseven$ and
$\orangeseven$ are linearly independent. For background information on the Smith homomorphism, see
Appendix \ref{Smith_general}.

Because we constructed the principal $D_8$-bundle $P\to \orangeseven$ using $Q_{16}$, it will be helpful in the
rest of this section to know $H^*(BQ_{16};\Z/2)$, so that we can assign characteristic classes to principal
$Q_{16}$-bundles.
\begin{prop}[{\cite[Lemma 4.1.1b]{Tei92} and~\cite[Theorem 4.40ii]{Sna13}}]
\label{Q16_coh}
There is an isomorphism $H^\ast(BQ_{16};\Z/2\Z)\cong\Z/2\Z[\hat x, \hat y, p]/(\hat x\hat y+\hat y^2, \hat x^3)$,
where $\hat x$ and $\hat y$ are the pullbacks of $x$, respectively $y$, by the map $Q_{16}\to D_8$, so
have degree $1$; and $\abs p = 4$.
\end{prop}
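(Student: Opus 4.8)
The plan is to compute $H^*(BQ_{16};\Z/2\Z)$ directly, using the Lyndon--Hochschild--Serre (Serre) spectral sequence of the central extension $1\to\Z/2\Z\to Q_{16}\to D_8\to1$, and to identify the resulting generators with $D_8$-data. Write $f\colon BQ_{16}\to BD_8$ for the map classifying this extension. Since $\hat x\coloneqq f^*x$ and $\hat y\coloneqq f^*y$ are pulled back from $H^1(BD_8;\Z/2\Z)$ (see \cref{dih_mod_2_coh}), the relation $\hat x\hat y+\hat y^2=0$ is automatic, and because the pullback of a central extension class along its own quotient map vanishes, $f^*w=\hat x^2$. The crucial input is the extension class itself: as $Q_{16}$ is by definition the $\Pin^-$ cover of $D_8$ with respect to the standard representation $V\colon D_8\to\O(2)$, the class of $1\to\Z/2\Z\to Q_{16}\to D_8\to1$ in $H^2(BD_8;\Z/2\Z)$ is $\omega\coloneqq w_2(V)+w_1(V)^2=w+x^2$; this follows from \cref{whichcoh}, which identifies the analogous class for the $\Pin^+$ cover $D_{16}$ as $w_2(V)=w$, together with the standard distinction between the two $\Pin^\pm$ covers.

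Now run the spectral sequence. Its $E_2$-page is $H^*(BD_8;\Z/2\Z)\otimes\Z/2\Z[c]$ with $\abs c=1$, and $d_2(c)=\omega=w+x^2$. Presenting $H^*(BD_8;\Z/2\Z)=\Z/2\Z[x,y,w]/(xy+y^2)$ as a free rank-two module over $\Z/2\Z[x,w]$ on the basis $\{1,y\}$ shows $\omega$ is a non-zero-divisor, so $E_3\cong\bigl(H^*(BD_8;\Z/2\Z)/\omega\bigr)[c^2]\cong\bigl(\Z/2\Z[x,y]/(xy+y^2)\bigr)[c^2]$, the isomorphism substituting $w\mapsto x^2$. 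Since $c^2=\Sq^1 c$ and $c$ transgresses to $\omega$, Kudo's transgression theorem gives $d_3(c^2)=\Sq^1\omega=\Sq^1 w=wx$ (using $\Sq^1 x=x^2$ and $\Sq^1 w=wx$ from \cref{dih_mod_2_coh}), and $wx\equiv x^3$ modulo $\omega$. As $x^3$ is again a non-zero-divisor in $\Z/2\Z[x,y]/(xy+y^2)$, passing to $E_4$ yields $E_4\cong\bigl(\Z/2\Z[x,y]/(xy+y^2,\,x^3)\bigr)[c^4]$. The point is that $\Z/2\Z[x,y]/(xy+y^2,x^3)$ vanishes above degree $3$, so every remaining differential has target $0$ (its target lies either in $s$-degree $>3$ or in a $c$-power not divisible by $4$); hence $E_4=E_\infty$, and for the same numerical reason $E_\infty$ is concentrated in a single filtration in each total degree, leaving no room for multiplicative extension problems.

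It remains to read off the ring. The $s=0$ row of $E_\infty$ is the image of $f^*$, namely $\Z/2\Z[\hat x,\hat y]/(\hat x\hat y+\hat y^2,\hat x^3)$; in particular $\hat x^3=f^*(x^3)=0$, which is the one non-obvious relation. The class $c^4\in E_\infty^{0,4}$ detects a new degree-$4$ generator $p$, and multiplicativity identifies $H^*(BQ_{16};\Z/2\Z)\cong\Z/2\Z[\hat x,\hat y,p]/(\hat x\hat y+\hat y^2,\hat x^3)$, as claimed. As a check, $p$ is the mod-$2$ reduction of the order-$16$ Euler class of the free $Q_{16}$-action on $S^3$, so cup product with $p$ realises the $4$-periodicity of $H^*(BQ_{16};\Z/2\Z)$ and the periodic dimensions $1,2,2,1$ agree; alternatively one can recover the low-degree groups from the universal coefficient theorem and the known integral homology $H_*(BQ_{16};\Z)=(\Z,(\Z/2\Z)^2,0,\Z/16\Z,\dots)$, which is itself $4$-periodic. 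The main obstacle is the transgression analysis of the second paragraph: establishing $d_2(c)=w+x^2$ and $d_3(c^2)=wx$ precisely (the latter via Kudo's theorem), and checking that the Steenrod-square machinery genuinely forces $\hat x^3=0$ while leaving the polynomial generator $p$ free. Since this cohomology ring is in any case recorded in \cite{Tei92, Sna13}, citing those references is also an option.
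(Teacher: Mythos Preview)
Your argument is correct. The paper does not prove this proposition; it simply cites \cite{Tei92} and \cite{Sna13} and then uses the result. Your Lyndon--Hochschild--Serre computation is the standard route to this cohomology ring, and each step checks out: the extension class is indeed $w+x^2$ (the paper itself records this, e.g.\ in the proof of \cref{prism_is_pinm_d}); $w+x^2$ is a non-zero-divisor in $\Z/2\Z[x,y,w]/(xy+y^2)$ because that ring is free of rank two over the domain $\Z/2\Z[x,w]$; Kudo's theorem gives $d_3(c^2)=\Sq^1(w+x^2)=wx\equiv x^3$; the quotient $\Z/2\Z[x,y]/(xy+y^2,x^3)$ has Poincar\'e series $1,2,2,1,0,\dots$, forcing collapse at $E_4$ and leaving exactly one nonzero filtration in each total degree, so there are no additive or multiplicative extension problems. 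The identification of $\hat x,\hat y$ with $f^*x,f^*y$ via the edge homomorphism is also correct, and in particular the relations $\hat x\hat y+\hat y^2=0$ and $\hat x^3=0$ hold on the nose in $H^*(BQ_{16};\Z/2\Z)$ rather than merely in the associated graded.

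One small remark: you assert $f^*w=\hat x^2$ early on, before running the spectral sequence. Strictly speaking this follows \emph{from} the spectral sequence (or from the general fact that the extension class pulls back to zero along the quotient map), so it is a consequence rather than an input; but since you do not actually use it in the computation, this is harmless.
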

\begin{prop}
\label{seven_orange_is_spin_d16}
$\orangeseven$ with its principal $D_8$-bundle $P\to\orangeseven$ from \cref{orange_seven_bundle} has a
Spin-$D_{16}$ structure.
\end{prop}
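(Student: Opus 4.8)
The plan is to mimic the strategy used for $\halfQseven$ and $\ninedimgen$: stably split the tangent bundle of $\orangeseven$ into a sum of line and plane bundles, compute $w_1$ and $w_2$ using the Whitney sum formula, and check that $w_1 = 0$ and that $w_2$ equals $w_2$ of the rank-two bundle associated to the principal $D_8$-bundle $P$ via the standard representation $V\colon D_8\to\O(2)$. By \cref{spin_D16_shear} (equivalently, \cref{w2_twisted_cor}), this is precisely the condition for a Spin-$D_{16}$ structure.

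First I would stably split $T(S^3/Q_{16})$. Since $S^3\subset\mathbb H = \C^2$ is acted on freely by $Q_{16}$ with the normal bundle equivariantly trivialized by the outward unit normal, as in \cref{lens_rotate} we get $T(S^3/Q_{16})\oplus\underline\R\cong \mathcal W^{\oplus 2}$, where $\mathcal W\to S^3/Q_{16}$ is the rank-two real bundle associated to the standard $Q_{16}$-representation on $\C$ coming from the inclusion $Q_{16}\subset\Sp(1)$ (i.e.\ $\hat r$ rotates, $j$ acts by the quaternionic structure). Next I would feed this through the two further quotients. The base directions $T^2_p$ contribute a stably trivial (framed, since they are Spin circles with non-bounding structure) summand; the $S^2$ factor contributes $TS^2\oplus\underline\R\cong\underline\R^3$; and after quotienting $(S^3/Q_{16})\times T^2_p\times S^2$ by the involution $\overline\zeta\times(\text{KB involution})\times(\text{antipodal})$, each of these summands gets tensored with the appropriate line bundle pulled back from the $\Z/2\Z$-quotient datum. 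Tracking which elements of $\pi_1(\orangeseven)$ act orientation-reversingly on each summand — using the description of $\lambda = q_3\circ q_2\circ q_1$ from \cref{orange_seven_bundle} and the cohomology rings $H^*(BQ_{16};\Z/2\Z)$ from \cref{Q16_coh} and $H^*(BD_8;\Z/2\Z)$ from \cref{dih_mod_2_coh} — I would arrive at a splitting of the form
\begin{equation}
	T\orangeseven\oplus\underline\R^{k} \overset\cong\longrightarrow \text{(line bundle)}^{\oplus 3}\oplus F^{\oplus 2}\oplus (\text{framed torus summand}),
\end{equation}
where $F$ is a rank-two bundle descending from $\mathcal W$ and the line bundles descend from $\sigma\to\RP^2$ on the antipodal $S^2$ and from the Klein bottle and $\zeta$ directions. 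The Whitney sum formula then forces $w_1(T\orangeseven) = 0$ (consistent with the orientability we already noted) and expresses $w_2(T\orangeseven)$ as a polynomial in the degree-one classes $x, y$ (pulled back from $BD_8$ via $\lambda$) plus $w_2(F)$.

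The main obstacle will be identifying $w_2(F)$ and the precise degree-one contributions so as to match $w_2(V)$ of the associated bundle of $P$ — in other words, carefully bookkeeping the three interacting $\Z/2\Z$ quotients and the $Q_{16}\to D_8$ double cover. As in the $\halfQseven$ computation (\cref{halfQseven_spD16}), I expect to need the cohomology ring of $\orangeseven$ itself, which I would extract from the Serre spectral sequence for the fiber bundle $(S^3/Q_{16})\times T^2\to\orangeseven\to\RP^2$ (or the iterated fibration over the Klein bottle); collapse should follow, as before, from a count of $\Z/2\Z$ summands versus Poincaré duality plus the knowledge of $H^1$ from $\pi_1(\orangeseven)$. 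With the ring in hand, I would pin down the unknown coefficients by restricting $F$ and the associated bundle $E$ of $P$ to well-chosen submanifolds where Spin-ness is easy to check: the $S^3/Q_{16}$ fiber (a prism manifold — a quotient of $\RP^3$, so $E_2$ restricted there is essentially $\mathcal W$ and is not Spin), a copy of $\RP^2$ or a Klein bottle in the base (where $\pi_1$ acts through a reflection and one can check whether the action lifts to spinors), and possibly a $\RP^3$ sub-lens-space, exactly parallel to the three bullet points in the proof of \cref{halfQseven_spD16}. Comparing the resulting coefficients shows $w_2(T\orangeseven) = w_2(E)$, completing the proof.
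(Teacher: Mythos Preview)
Your overall plan matches the paper's: stably split $T\orangeseven$, apply the Whitney sum formula, and verify $w_1=0$ and $w_2(T\orangeseven)=w_2(E)$ for the bundle $E$ associated to $P$ and the standard representation. However, two points deserve comment.

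First, a minor slip: the $Q_{16}$-representation on $\mathbb H\cong\C^2$ coming from $Q_{16}\subset\Sp(1)$ is the unique faithful complex irreducible of $Q_{16}$, so it does \emph{not} decompose as $\mathcal W^{\oplus 2}$ for a one-dimensional complex $\mathcal W$. The paper simply writes $T(S^3/Q_{16})\oplus\underline\R\cong E$ with $E$ a single rank-four real bundle.

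Second, and more substantively, your plan to compute the full ring $H^*(\orangeseven;\Z/2\Z)$ via the Serre spectral sequence and then pin down unknown coefficients by restricting to submanifolds (as in \cref{halfQseven_spD16}) is unnecessary here. The paper's argument is shorter and more direct: both sides are computed explicitly to equal $\hat x^2$, without needing the cohomology ring of $\orangeseven$. On the tangent side, the stable splitting reads $T\orangeseven\oplus\underline\R^2\cong E'\oplus\underline\R\oplus\sigma^{\oplus 4}$, where $\sigma$ is pulled back from $\RP^2$. The Whitney sum formula gives $w_2(T\orangeseven)=w_2(E')$; since $E$ upstairs is associated to the Spin representation $Q_{16}\hookrightarrow\mathrm{SU}(2)$, only the $\Z/2\Z$ action by $\zeta$ contributes, and in coordinates $\zeta$ is two trivial actions and two sign flips, giving $w_2(E')=\hat x^2$. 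On the $D_8$-bundle side, trace $w\in H^2(BD_8;\Z/2\Z)$ through $\lambda=q_3\circ q_2\circ q_1$: the key observation is that $Q_{16}$ is the Pin$^-$ cover of $D_8$, so $q_2^*(w)=q_2^*(x^2)=\hat x^2$, while $q_3^*(w)=w$ (since $w$ pulls back trivially to any reflection subgroup) and $q_1^*(\hat x)=\hat x$. Hence $w(P)=\hat x^2=w_2(T\orangeseven)$, and you are done.
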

\begin{proof}
As usual, we begin by analyzing the $\Z/2\Z$-action on the tangent bundle of $S^3/Q_{16}\times T_p^2\times S^2$. The
tangent bundle to $T_p^2$ is trivial, because $T^2$ admits a Lie group structure; the tangent bundle to $S^2$ is
trivial after appending on a single trivial summand, as we saw in~\eqref{Sn_stable_splitting}. The tangent bundle
of $S^3/Q_{16}$ is actually trivializable, like for any compact, oriented $3$-manifold, but for us it is more useful
to have an explicit isomorphism to a different vector bundle.

Like in \cref{lens_rotate}, consider the $Q_{16}$-action on $TS^3\oplus\underline\R\cong\underline\C^2$. This
action is the standard representation of $Q_{16}$ on $\C^2$, and the outward unit normal vector field is invariant,
so we learn that $T(S^3/Q_{16})\oplus\underline\R$ is a rank-two complex vector bundle $E\to S^3/Q_{16}$ induced
from the principal $Q_{16}$-bundle $S^3\to S^3/Q_{16}$.

Now the $\Z/2\Z$-actions. From \cref{RP_tangent} we know that $\Z/2\Z$ acts on $TS^2\oplus\underline\R$ by $-1$. For
$T_p^2$, use the fact that the tangent space of the Klein bottle splits as $\underline\R$ plus a non-trivial line
bundle to infer how $\Z/2\Z$ acts on $T(T_p^2)$. And for $T(S^3/Q_{16})\oplus\underline\R\cong E$, $\Z/2\Z$ acts by
$\zeta$. Therefore
\begin{equation}
\label{orange_7_stable_splitting}
	T\orangeseven\oplus\underline\R^2 \overset\cong\longrightarrow E'\oplus\underline\R\oplus
	\sigma^{\oplus 4},
\end{equation}
where $E'\to\orangeseven$ is a real, rank-$4$ vector bundle and $\sigma\to\orangeseven$ is the pullback of the
tautological bundle on $\RP^2$ along $\orangeseven\to\RP^2$. Applying the Whitney sum formula
to~\eqref{orange_7_stable_splitting}, we learn that $w_1(T\orangeseven) = w_1(E')$ and $w_2(T\orangeseven) =
w_2(E')$. The action of $\zeta$ on $\mathbb H$ is orientation-preserving, so $w_1(E') = 0$, so $\orangeseven$ is
orientable. To compute $w_2$, first observe that $E\to S^3/Q_{16}\times T_p^2\times S^2$ is Spin, because the action
of $Q_{16}$ on $\mathbb H$ is Spin; thus we just have to ask how $\Z/2\Z$ acts on $E$. In coordinates, we obtain
two trivial actions and two reflection actions on the four real coordinates, and this can be used to show $w_2(E')
= \hat x^2$. Thus $w_2(T\orangeseven) = \hat x^2$.

Next we need to compute $w(P)$, which we do by tracing $w$ through the pullbacks by $q_3$, $q_2$, and then $q_1$.
In \cref{spin_D4_pullback} we proved that for any inclusion $\Z/2\Z\inj D_8$ given by a reflection, $w$ pulls back
to $0$, so $q_3^*(w) = w$; a priori one has to verify there is no term such as $x\hat x$ or $y\hat x$ coming from
the cross term in the Künneth formula, but all such terms vanish in $q_3^*(w)$. Next, since $Q_{16}$ is the Pin$^-$
cover of $D_8$ with respect to the standard representation $V$, $w_2(V)$ and $w_1(V)^2$ are identified in
$H^2(BQ_{16};\Z/2\Z)$. Therefore $q_2^*(w) = q_2^*(x^2) = \hat x^2$. Finally, $q_1^*$ pulls back $\hat x\mapsto\hat
x$, so $w(P) = \lambda^*(w) = \hat x^2$.

We have shown that $\orangeseven$ is orientable and $w_2(T\orangeseven) = w(P)$, so $(\orangeseven, P)$ is
Spin-$D_{16}$.
\end{proof}
Now we show that $\halfQseven$ and $\orangeseven$ are linearly independent in
$\Omega_7^{\Spin\text{-}D_{16}} (\pt) \cong\textcolor{Green}{\Z/4\Z}\oplus\textcolor{RedOrange}{\Z/2\Z}$. Since we already
established that $\halfQseven$ has order $4$ in \cref{half_Qseven_order_4}, this will suffice to show that
$\halfQseven$ and $\orangeseven$ together generate $\Omega_7^{\Spin\text{-}D_{16}} (\pt)$.

Let $\xi\colon H\to\O$ be a tangential structure and $c\in H^k(BH;\Z/2\Z)$, so we may interpret $c$ as a
characteristic class for manifolds with $H$-structure. Let $M$ be an $H$-manifold and $N\subset M$ be a closed,
$(n-k)$-dimensional submanifold such that the image of the mod $2$ fundamental class of $N$ in $H_{n-k}(M;\Z/2\Z)$ is
Poincaré dual to $c(M)\in H^k(M;\Z/2\Z)$. In this case we say that $N$ is a \term{smooth representative of the
Poincaré dual to $c(M)$}. In some settings, $N$ acquires a canonical $H'$-structure, where $H'\to\O$ is some other
tangential structure. In this case, the class of $N$ in $\Omega_{n-k}^{H'}$ depends only on the class of $M$ in
$\Omega_n^H$, and this construction lifts to a homomorphism $S_c\colon \Omega_n^H\to\Omega_{n-k}^{H'}$. This is
called a \term{Smith homomorphism}.

We will say more about Smith homomorphisms in \cref{Smith_appendix}; what we need to know right now is that
there is a Smith homomorphism $S_{x^2}\colon \Omega_n^{\Spin\text{-}D_{16}}\to\Omega_{n-2}^{\Spin\text{-}Q_{16}}$
defined by taking a smooth representative of the Poincaré dual of $x(P)^2$, where $P\to M$ is the principal
$D_8$-bundle associated to a Spin-$D_{16}$ structure and $\Spin\text{-}Q_{16}\coloneqq \Spin\times_{\Z/2}Q_{16}$.
We prove this in Appendix \ref{smith_GL}.

The following three propositions jointly imply $\halfQseven$ and $\orangeseven$ are linearly independent in
$\Omega_7^{\Spin\text{-}D_{16}} (\pt)$.
\begin{prop}
\label{7d_lens_integral}
Let $\nu\colon\Omega_7^{\Spin\text{-}D_{16}}$ be the bordism invariant
\begin{equation}
	\nu\colon \Omega_7^{\Spin\text{-}D_{16}}\overset{S_{x^2}}{\longrightarrow}
	\Omega_5^{\Spin\text{-}Q_{16}}\overset{\int w^2 y}{\longrightarrow} \Z/2.
\end{equation}
Then $\nu(\halfQseven) = 1$ and $\nu(\orangeseven) = 0$.
\end{prop}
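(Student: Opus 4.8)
The invariant $\nu$ is the composite of two homomorphisms --- the Smith homomorphism $S_{x^2}\colon \Omega_7^{\Spin\text{-}D_{16}}(\pt)\to\Omega_5^{\Spin\text{-}Q_{16}}(\pt)$ from Appendix \ref{smith_GL} and the homomorphism $\int w^2 y\colon \Omega_5^{\Spin\text{-}Q_{16}}(\pt)\to\Z/2\Z$ --- so it is well-defined and can be computed on a single convenient smooth representative of the Poincar\'e dual of $x(P)^2$ in each of $\halfQseven$ and $\orangeseven$, where $P$ is the relevant principal $D_8$-bundle and $x = x(P) = w_1(\det V(P))\in H^1(\bl;\Z/2\Z)$. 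The plan is, for each manifold: (i) compute $x(P)$; (ii) exhibit a smooth representative $N$ of $\mathrm{PD}(x(P)^2)$ adapted to the evident fibration, and identify the underlying $D_8$-bundle of the induced $\Spin\text{-}Q_{16}$ structure; (iii) compute $\int_N w^2 y$ from the cohomology ring of $N$, where $w = w_2(V)$ and $y = w_1(\chi_y)$ are pulled back along $N\to BD_8$.

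For $\halfQseven$: by the proof of \cref{halfQseven_spD16}, $x(P) = p^*w_1(\sigma)$ is pulled back along the fibration $p\colon\halfQseven\to\RP^2$ with fiber $L_4^5$, so $x(P)^2$ is the pullback of the top class of $\RP^2$ and its Poincar\'e dual is represented by the fiber $N := p^{-1}(\mathrm{pt}) = L_4^5$. The normal bundle of this fiber is trivial, so no twisting enters the Smith construction, and $P|_N$ is the standard $\Z/4\Z$-bundle $S^5\to L_4^5$ viewed as a $D_8$-bundle through the rotation subgroup $\Z/4\Z\subset D_8$. Hence $w$ and $y$ pull back along $N\to B\Z/4\Z\to BD_8$ to, respectively, the degree-$2$ and degree-$1$ generators of $H^*(L_4^5;\Z/2\Z)\cong\Z/2\Z[x,y]/(x^2, y^3)$, using that $w_2$ of the rotation representation of $\Z/4\Z$ and $w_1$ of its sign character are nonzero. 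Therefore $w^2 y$ restricts to the fundamental class $xy^2$ of $L_4^5$, and $\nu(\halfQseven) = 1$.

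For $\orangeseven$: unwinding $\lambda = q_3\circ q_2\circ q_1$ from \cref{orange_seven_bundle}, and using that $w_1(\det V)$ is non-trivial on reflections but trivial on rotations, one finds $x(P) = u + v$, where $v = p^*w_1(\sigma)$ is pulled back along the fibration $p\colon\orangeseven\to\RP^2$ (with fiber $S^3/Q_{16}\times T^2$) and $u$ is carried by the $S^3/Q_{16}$-direction. A smooth representative $N$ of $\mathrm{PD}(x(P)^2) = \mathrm{PD}(u^2 + v^2)$ adapted to $p$ --- built from sections of line bundles with first Stiefel--Whitney classes $u$ and $v$, together with a further section over the resulting codimension-one submanifold --- is, with its $\Spin\text{-}Q_{16}$ structure, assembled from pieces on which the underlying $D_8$-bundle (hence $w$ and $y$) is pulled back from $S^3/Q_{16}$ or from $S^3/Q_{16}\times T^2$. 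Since $S^3/Q_{16}$ is three-dimensional, $H^5$ of every such piece vanishes, so $w^2 y$ vanishes on $N$ and $\nu(\orangeseven) = 0$. On the $v^2$-piece $S^3/Q_{16}\times T^2$ the underlying bundle moreover lifts to $Q_{16}$, so there one may also see this algebraically: $w = \hat x^2$ and $y = \hat y$ (as in the proof of \cref{seven_orange_is_spin_d16}), whence $w^2 y = \hat x^4\hat y = 0$ in $H^*(BQ_{16};\Z/2\Z)$ because $\hat x^3 = 0$ (\cref{Q16_coh}).

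The main obstacle is step (ii) for $\orangeseven$: one must carefully follow the $\Spin\text{-}Q_{16}$ structure through the corner-smoothing that produces the codimension-two Poincar\'e-dual submanifold of $x(P)^2$, and confirm that the outcome really is built from the low-dimensional pieces claimed, so that the degree-five class $w^2 y$ has no room to be nonzero. For $\halfQseven$ the analogous verification is immediate --- the representative is literally a lens-space fiber with trivial normal bundle --- so there the argument reduces to the lens-space cohomology computation. Together with \cref{half_Qseven_order_4}, this proposition forces $\halfQseven$ and $\orangeseven$ to be linearly independent in $\Omega_7^{\Spin\text{-}D_{16}}(\pt)\cong\textcolor{Green}{(\Z/4\Z)}\oplus\textcolor{RedOrange}{(\Z/2\Z)}$.
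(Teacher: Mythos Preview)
Your treatment of $\halfQseven$ is correct and essentially identical to the paper's: the class $x(P)$ is pulled back from $\RP^2$, so the Poincar\'e dual of $x(P)^2$ is a lens-space fiber $L_4^5$, and the restriction of $w^2y$ to that fiber is the generator of $H^5(L_4^5;\Z/2\Z)$.

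For $\orangeseven$, your geometric argument is incomplete --- and you say as much. The paper carries this out by hand: using \cref{x_of_orange_seven} it decomposes the Poincar\'e dual of $x(P)^2$ into three explicit pieces (a bundle $Y_1$ over $\RP^2$ with $S^1\times T^2$ fibers that bounds outright, the fiber $S^3/Q_{16}\times T_p^2$ itself, and a bundle $Y_2$ over $\RP^1$), and checks that $\int w^2y$ vanishes on each. Your two-term formula $(u+v)^2 = u^2+v^2$ is in fact the correct one in mod~$2$ cohomology; the paper's three-term expression contains a harmless extra term.

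Here is the point you nearly landed on but undersold: your ``algebraic'' remark at the end is not merely a check on the $v^2$-piece --- it already proves the whole thing. The proof of \cref{seven_orange_is_spin_d16} shows $w(P)=\hat x^2$ on \emph{all} of $\orangeseven$, with $\hat x$ pulled back from $BQ_{16}$; since $\hat x^3=0$ in $H^*(BQ_{16};\Z/2\Z)$ by \cref{Q16_coh}, we have $w(P)^2=\hat x^4=0$ in $H^*(\orangeseven;\Z/2\Z)$. The Smith representative $N\subset\orangeseven$ carries the restricted $D_8$-bundle, so $w^2y|_N$ is the restriction of $w(P)^2 y(P)=0$, and $\nu(\orangeseven)=0$ with no geometric decomposition needed. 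Equivalently, by the integration formula~\eqref{PD_integrate}, $\nu(\orangeseven)=\int_{\orangeseven} x(P)^2\, w(P)^2\, y(P)=0$ because $w(P)^2=0$. This is shorter than the paper's case-by-case analysis; you had the key input but applied it only locally when it works globally.
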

\begin{prop}
\label{smith_or_nontriv}
$S_{x^2}(\orangeseven)\ne 0$ in $\Omega_5^{\Spin\text{-}Q_{16}}$.
\end{prop}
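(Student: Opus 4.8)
The plan is to unwind $S_{x^2}(\orangeseven)$ into a concrete five-manifold with $\Spin\text{-}Q_{16}$ structure and then detect non-triviality by naturality of the Adams spectral sequence. Recall $S_{x^2}$ is induced by a map of Thom spectra: under the shearing isomorphism of \cref{spin_D16_shear} and its $Q_{16}$-analogue, $\Spin\text{-}D_{16}$ bordism is $\Spin$ bordism of $(BD_8)^{V+3\Det(V)-5}$, and since $V+3\Det(V)=(V+\Det(V))+2\Det(V)$ with the mod $2$ Euler class of $2\Det(V)=2\sigma$ equal to $x^2$, cutting down by $x^2$ gives a map $(BD_8)^{V+3\Det(V)-5}\to\Sigma^{-2}(BD_8)^{V+\Det(V)-3}$ whose target realizes $\Spin\text{-}Q_{16}$ bordism (the twist $V+\Det(V)$ has $w_1=0$ and $w_2=w+x^2$, which is exactly the extension class of $Q_{16}\to D_8$, since $Q_{16}$ is the $\Pin^-$ cover of the standard representation). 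Geometrically $S_{x^2}(\orangeseven)=[N]$ for a transverse smooth representative $N^5$ of the Poincaré dual of $x(P)^2$, with $\Spin\text{-}Q_{16}$ structure inherited from $\orangeseven$. First I would pin down $N$: using $\pi_1(\orangeseven)\cong(Q_{16}\times\Z^2)\rtimes\Z/2\Z$ and the homomorphism $\lambda$ of \cref{orange_seven_bundle}, compute $H^1(\orangeseven;\Z/2\Z)$, locate $x(P)$ relative to the two fibrations $\orangeseven\to\RP^2$ and $\orangeseven\to\mathit{KB}$, write $x(P)^2$ and its dual submanifold, and then—using the equivariant stable-splitting technique of \cref{RP_tangent,lens_rotate} exactly as in \cref{seven_orange_is_spin_d16}—identify the induced $\Spin\text{-}Q_{16}$ structure on $N$. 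I expect $N$ to contain the prism manifold $S^3/Q_{16}$ as an essential factor and, up to $\Spin\text{-}Q_{16}$ bordism, to be $\eta^2$ times a three-dimensional prism-manifold class carrying the tautological $Q_{16}$-bundle $S^3\to S^3/Q_{16}$.

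\textbf{Detecting non-triviality.} To show $[N]\ne 0$ in $\Omega_5^{\Spin\text{-}Q_{16}}(\pt)$ I would run the Adams spectral sequence for $\Spin\text{-}Q_{16}$ bordism in degrees $\le 5$, feeding in the $\cA(1)$-module structure on $H^*\big((BD_8)^{V+\Det(V)-3};\Z/2\Z\big)$ computed from \cref{Q16_coh} and the Steenrod squares of $x,y,w$ recorded before \cref{whichcoh}. Because $S_{x^2}$ comes from a map of Thom spectra it induces a map of Adams spectral sequences commuting with differentials; tracing the class of $\orangeseven$ (the orange summand in topological degree $7$, which lies in positive Adams filtration) along this map yields a class in positive filtration and degree $5$, and it suffices to show this class survives to $E_\infty$ and is nonzero there. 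After resolving the few differentials and extensions that arise, I would confirm non-triviality either directly from the surviving $E_\infty$ class, or—since the relevant class is not in filtration zero and hence invisible to ordinary characteristic numbers—by exhibiting a detecting $\eta$-invariant: a twisted Dirac $\eta$-invariant of $S^3/Q_{16}$ (in closed form from Atiyah–Patodi–Singer and Donnelly for spherical space forms) paired with the mod $2$ $\widehat A$-genus of the non-bounding torus factor, which realizes multiplication by $\eta$. This complements the vanishing $\int w^2y=0$ on $S_{x^2}(\orangeseven)$ recorded in \cref{7d_lens_integral}.

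\textbf{Main obstacle.} The hard part is twofold and largely bookkeeping-intensive. First, identifying $N$ and, crucially, its precise $\Spin\text{-}Q_{16}$ structure through the Smith construction—which Spin structures the circle factors inherit and which $Q_{16}$-bundle appears—is delicate, the same kind of computation that made \cref{seven_orange_is_spin_d16} nontrivial. Second, unlike $\Spin\text{-}\Z/8\Z$ and $\Spin\text{-}D_{16}$ bordism, $\Omega_*^{\Spin\text{-}Q_{16}}(\pt)$ is not computed elsewhere in this paper, so one must carry out its own low-degree Adams (or Atiyah–Hirzebruch) computation and pin down the invariant detecting $[N]$; coordinating the two halves so the chosen invariant is genuinely nonzero on $[N]$ is the real content. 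Once $S_{x^2}(\orangeseven)\ne 0$ is in hand, combining it with the values of $S_{x^2}$ and $\nu$ on the other seven-dimensional generators (notably $\nu(\halfQseven)=1$ from \cref{7d_lens_integral} and $\ord(\halfQseven)=4$ from \cref{half_Qseven_order_4}) separates $\orangeseven$ from every combination of the previously found generators, finishing the identification of the generators of $\Omega_7^{\Spin\text{-}D_{16}}(\pt)$.
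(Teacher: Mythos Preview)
Your strategy is essentially the paper's: compute a smooth representative $N$ of the Poincar\'e dual of $x(P)^2$, identify its $\Spin\text{-}Q_{16}$ structure, and detect it via the Adams spectral sequence for $\Omega_*^{\Spin\text{-}Q_{16}}$. Your expectation that the prism manifold enters as $\eta^2\cdot[S^3/Q_{16}]$ is exactly right, and that piece is what carries the day.

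Two points where the proposal is looser than what the paper actually needs. First, $N$ is not simply $S^3/Q_{16}\times T_p^2$. In \cref{x_of_orange_seven} one finds $x(P)^2=\hat x^2+\hat x\alpha+\alpha^2$, so the Poincar\'e dual naturally breaks into three pieces $Y_1\sqcup (S^3/Q_{16}\times T_p^2)\sqcup Y_2$, one per monomial (this decomposition is already carried out in the proof of \cref{7d_lens_integral}). The paper shows $Y_1$ bounds, invokes \cref{after_Smith} (which runs the Adams spectral sequence for $\Spin\text{-}Q_{16}$ bordism, just as you propose) to get $[S^3/Q_{16}\times T_p^2]\ne 0$, and then must argue that $Y_2$ does not cancel against $S^3/Q_{16}\times T_p^2$. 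Your outline skips this last step; without it the argument is incomplete.

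Second, your ``naturality of Adams'' route as phrased is circular: you describe tracing ``the class of $\orangeseven$ (the orange summand in topological degree $7$)'' through the map of spectral sequences, but identifying $\orangeseven$ with the orange summand is precisely the conclusion that \cref{7d_lens_integral,smith_or_nontriv,split_Q16_5} jointly establish. The geometric approach---compute $N$, then detect $[N]$ directly---avoids this, and is what the paper does.
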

\begin{prop}
\label{split_Q16_5}
$\Omega_5^{\Spin\text{-}Q_{16}}\cong (\Z/2\Z)^{\oplus k}$ for some $k$.
\end{prop}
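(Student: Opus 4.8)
The plan is to shear the $\Spin\text{-}Q_{16}$ symmetry type to a twisted Spin structure and then run the Adams spectral sequence through topological degree $5$. Recall that $Q_{16}$ is the $\Pin^-$ cover of $D_8$ with respect to the standard two-dimensional representation $V$, so the central extension $1\to\Z/2\Z\to Q_{16}\to D_8\to 1$ is classified by $\omega\coloneqq w_2(V)+w_1(V)^2 = w + x^2\in H^2(BD_8;\Z/2\Z)$, with notation as in \cref{dih_mod_2_coh} and \cref{whichcoh}. The three-dimensional representation $\rho\coloneqq V\oplus\Det(V)$ satisfies $w_1(\rho)=0$ and $w_2(\rho)=w+x^2=\omega$ by the Whitney sum formula, so \cref{w2_twisted_cor} identifies $\Spin\text{-}Q_{16}$ structures with $\big(BD_8,(ED_8)_\rho\big)$-twisted Spin structures, and \cref{shearing_PT} gives $\Omega_*^{\Spin\text{-}Q_{16}}(\pt)\cong\Omega_*^{\Spin}\big((BD_8)^{\rho-3}\big)$. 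Since $\lvert Q_{16}\rvert$ is a power of $2$ and $\Omega_5^{\Spin}(\pt)=0$, the odd-primary reasoning of Section~\ref{odd_primary_glplus} (the reduced mod-$p$ homology of $(BD_8)^{\rho-3}$ vanishing for $p$ odd, via the Thom isomorphism and $p\nmid\lvert D_8\rvert$) shows $\Omega_5^{\Spin\text{-}Q_{16}}(\pt)$ is a finite $2$-group; then by \cref{ABPthm}, with the $\ko_{-3}$ and $\ko\ang 2_{-5}$ summands vanishing for degree reasons, it equals $\ko_5\big((BD_8)^{\rho-3}\big)$.

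Next I would compute the $\cA(1)$-module structure on $H^{\le 5}\big((BD_8)^{\rho-3};\Z/2\Z\big)$. \Cref{Steenrod_Thom} gives $\Sq^1(U\alpha)=U\Sq^1(\alpha)$ and $\Sq^2(U\alpha)=U\big((w+x^2)\alpha+\Sq^2(\alpha)\big)$ for the Thom class $U$ (now in degree $0$), and the Steenrod squares on $H^*(BD_8;\Z/2\Z)$ are $\Sq(x)=x+x^2$, $\Sq(y)=y+y^2$, $\Sq(w)=w+wx+w^2$, as recorded before \cref{whichcoh}; combined with the Cartan formula these determine the module structure in low degrees, which should split (much as in the Spin-$D_{16}$ computation of Section~\ref{gl_p_at_2}) into a few named $\cA(1)$-modules --- suspensions of $R_2$, the Joker $J$, the upside-down question mark $\uQ$, and $\Z/2\Z$ --- together with free $\cA(1)$-summands. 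Their Ext groups are recorded in \cite[Figure 29]{BC18} and \cref{ext_Z2}, so I can read off the Adams $E_2$-page in topological degrees $\le 5$. Differentials are tightly constrained here: most vanish for easy reasons (empty source or target, $h_i$-equivariance, or Margolis' theorem \cref{Margolis_kills_differentials} on the free summands), and the remaining possibilities can be pinned down using the map of Adams spectral sequences induced by the inclusion $\Z/8\Z\hookrightarrow Q_{16}$ together with the computation of $\Omega_*^{\Spin\text{-}\Z/8\Z}(\pt)$ in Section~\ref{ss:spin_Z8}, exactly as in \cref{dihedral_d2s,nonzero_d3}.

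The final step, and the main obstacle, is to show that there is no nontrivial extension on the $E_\infty$-page in topological degree $5$. One first checks that the surviving classes there are not linked by $h_0$-actions, which already forces the associated graded in that degree to be a direct sum of $\Z/2\Z$'s; then one must rule out hidden $2$-extensions, for which I would use the toolkit of Section~\ref{sss:hidden_xtn}: Margolis' theorem eliminates contributions from the free $\cA(1)$-summands, \cref{2eta_lemma} eliminates $2$-divisibility for any class supporting an $\eta$-action visible as an $h_1$-action on the $E_\infty$-page, and for anything left over one can compute $\ku_5\big((BD_8)^{\rho-3}\big)$ with the Atiyah-Hirzebruch spectral sequence (whose input is the straightforward cohomology of a Thom spectrum) and apply the factorization of multiplication by $2$ through $\ko\xrightarrow{c}\ku\xrightarrow{b}\Sigma^2\ku\xrightarrow{R}\ko$. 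I expect the first two tools to already suffice in this range, so that the elementary-abelian conclusion falls out once the $E_\infty$-page is known; this is then enough, combined with \cref{7d_lens_integral} and \cref{smith_or_nontriv}, to conclude that $\orangeseven$ lies outside the cyclic subgroup generated by $\halfQseven$ and hence that the two generate $\Omega_7^{\Spin\text{-}D_{16}}(\pt)$ up to the summands already accounted for.
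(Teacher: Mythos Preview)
Your overall strategy---shear to $(BD_8, V\oplus\Det(V))$-twisted Spin bordism and run the Adams spectral sequence---is exactly the paper's, and your shearing and $\cA(1)$-module analysis are correct: in low degrees one finds $\uQ\oplus\Sigma^3\Z/2\Z\oplus\Sigma^4 R_2\oplus\Sigma^5 J$ together with free summands $\Sigma\cA(1)^{\oplus 2}\oplus\Sigma^5\cA(1)^{\oplus 2}$ (this is worked out in the proof of \cref{after_Smith}).

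The gap is in your expectation that Margolis' theorem and \cref{2eta_lemma} already resolve the extension problem. They do not, and this is the entire content of the proposition. On the $E_\infty$-page in topological degree $5$ there are six $\Z/2\Z$ summands. In filtration $0$: two from free $\cA(1)$-summands (Margolis handles these), one from $\Sigma^4 R_2$ (this carries an $h_1$-action to degree $6$, so \cref{2eta_lemma} applies), and one from $\Sigma^5 J$. In filtration $2$: one from $\uQ$ (which carries an outgoing $h_1$) and one from $\Sigma^3\Z/2\Z$. The Joker contribution at $(5,0)$ has \emph{no} outgoing $h_1$-action---the filtration-zero class in $\Ext(J)$ is isolated---and the $\Sigma^3\Z/2\Z$ contribution at $(5,2)$ also lacks an outgoing $h_1$ since $h_1^3=0$. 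Neither tool rules out a hidden extension between these two classes.

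The paper does use the $\ku$-theory route you mention, but the argument is considerably more than ``compute $\ku_5$ and apply $2=R\circ b\circ c$.'' One computes $\ku_*$ via the Adams spectral sequence over $\cE(1)$ (not the AHSS), identifies the four-dimensional subspace $\mathcal S\subset\ku_5$ coming from free $\cE(1)$-summands, proves via the $\eta,c,R$ long exact sequence together with a periodicity comparison between $\ko/\ku$ and $\KO/\KU$ that $\mathcal S=\ker(R)=\Im(c)$, and finally invokes the $\ku$-analogue of Margolis' theorem to show $b\cdot\mathcal S=0$, whence $b\circ c=0$ and multiplication by $2$ vanishes on $\ko_5$. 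The paper itself flags that this elaborate argument was the simplest they found; your proposal substantially underestimates this step.
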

We prove \cref{split_Q16_5} in Appendix \ref{smith_GL} in the form of \cref{quater_hidden}; we prove
\cref{7d_lens_integral,smith_or_nontriv} here. First, though, we compute $x(P)^2\in H^2(\orangeseven;\Z/2\Z)$.
\begin{lem}
\label{x_of_orange_seven}
Let $\alpha\in H^1(\orangeseven;\Z/2\Z)$ be the element of $H^1(\orangeseven;\Z/2\Z)$ corresponding to the
principal $\Z/2\Z$-bundle $S^3/Q_{16}\times T_p^2\times S^2\to \orangeseven$. Then $x(P)^2 = \hat x^2 + \hat x\alpha +
\alpha^2$.
\end{lem}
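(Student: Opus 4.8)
The plan is to compute $x(P)^2 = \lambda^*(x^2)$, where $\lambda = q_3\circ q_2\circ q_1\colon \pi_1(\orangeseven)\to D_8$ is the homomorphism of \cref{orange_seven_bundle}, by first identifying $x(P) = \lambda^*(x)\in H^1(\orangeseven;\Z/2\Z)$ explicitly in terms of the degree-one classes $\hat x$ and $\alpha$, and then using the cohomology ring of $\orangeseven$ in low degrees to put the square into the stated form. Since all the classes in sight have degree one --- hence are homomorphisms to $\Z/2\Z$ --- the first half is bookkeeping with $q_1$, $q_2$, $q_3$.

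First I would set up the relevant classes via covering-space data. The double cover $\widehat W\coloneqq (S^3/Q_{16})\times T_p^2\times S^2\to\orangeseven$ is classified by $\alpha$, which also coincides with the pullback of the generator of $H^1(\RP^2;\Z/2\Z)$ along the bundle projection $\orangeseven\to\RP^2$ (recall the base is $S^2/(\Z/2\Z)$). The further covering $S^3\times T_p^2\times S^2\to\orangeseven$ is a principal $(Q_{16}\rtimes\Z/2\Z)$-bundle whose classifying map induces $q_1$ on $\pi_1$, and $\hat x$, $\hat y$ on $\orangeseven$ are the pullbacks along it of the corresponding classes of \cref{Q16_coh}. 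Restricting $P$ to a fiber $F = S^3/Q_{16}\times T_p^2$ of $\orangeseven\to\RP^2$, the classifying map of $P|_F$ is $\pi_1(F) = Q_{16}\times\Z^2\to D_8$, namely $Q_{16}\to D_8$ on the first factor and trivial on $\Z^2$; this gives $x(P)|_F = \hat x|_F$. Since the restriction $H^1(\orangeseven;\Z/2\Z)\to H^1(F;\Z/2\Z)$ has kernel exactly $\langle\alpha\rangle$, it follows that $x(P) = \hat x + c\alpha$ with $c\in\{0,1\}$, and $c$ is pinned down by evaluating $\lambda$ on the generator $z$ of the $\Z/2\Z\subset\pi_1(\orangeseven)$ --- this is where $q_3$ enters, and I expect the result to be $x(P) = \hat x + \alpha$ (the explicit description of $q_3$ in \cref{orange_seven_bundle} settles it).

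Once $x(P)$ is known, squaring modulo $2$ annihilates the cross term, so $x(P)^2$ is a sum of squares of $\hat x$ and $\alpha$. To recognize it as $\hat x^2 + \hat x\alpha + \alpha^2$ I then need a relation among $\hat x^2$, $\hat x\alpha$, $\alpha^2$ in $H^2(\orangeseven;\Z/2\Z)$ that does not come from pulling back a group-cohomology relation (together with $\hat y^2 = \hat x\hat y$ from \cref{Q16_coh}, and $x(P)y(P) = y(P)^2$ coming from the relation $xy = y^2$ in $H^*(BD_8;\Z/2\Z)$ of \cref{dih_mod_2_coh}, should an auxiliary $\hat y$-term turn out to be present). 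I would obtain it either from the Serre spectral sequence of the fiber bundle $S^3/Q_{16}\times T_p^2\to\orangeseven\to\RP^2$, forcing the relevant $d_2$'s to vanish via Poincar\'e duality and the known $\pi_1(\orangeseven)$ exactly as in the proof of \cref{serre_for_green_7d}, or from an explicit intersection count of coordinate-defined submanifolds of $\widehat W$ descended to $\orangeseven$, in the style of \cref{arcana_suffice}. Either way the lemma follows.

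The main obstacle is precisely this last ingredient: controlling the multiplicative structure of $H^*(\orangeseven;\Z/2\Z)$ near the top degree so as to produce the $\hat x\alpha$ term. The preceding identifications --- matching $\alpha$, $\hat x$, $\hat y$ with covering-space classes and tracing $x$ through $q_1$, $q_2$, $q_3$ --- are routine if fiddly, and degree reasons leave little room for error; but the ring of $\orangeseven$ is not determined by formal nonsense alone, and that is where the work concentrates.
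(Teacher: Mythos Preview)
Your computation of $x(P)$ is exactly the paper's: trace $x$ through $q_3^*$, $q_2^*$, $q_1^*$ to obtain $x(P)=\hat x+\alpha$. The paper does this directly on the group level (observing that $q_3^*(x)=x+\alpha'$ because $x$ is nontrivial on reflections, then $q_2^*(x)=\hat x$, $q_1^*q_2^*(\alpha')=\alpha$); your packaging via restriction to the fiber plus evaluation on the extra $\Z/2\Z$ generator is equivalent.

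Where you diverge is in the last step, and here you have spotted a genuine problem---but your proposed fix will not work. You correctly observe that in mod~$2$ cohomology $(\hat x+\alpha)^2=\hat x^2+\alpha^2$, not $\hat x^2+\hat x\alpha+\alpha^2$, and you propose to reconcile the two by finding a relation among $\hat x^2,\hat x\alpha,\alpha^2$ in $H^2(\orangeseven;\Z/2\Z)$. That relation does not exist: the paper itself remarks, immediately after the lemma statement, that a glance at the Serre spectral sequence for $S^3/Q_{16}\times T_p^2\to\orangeseven\to\RP^2$ shows these three monomials are linearly independent. So neither a Serre-spectral-sequence argument nor an intersection count will produce the missing $\hat x\alpha$. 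In fact the paper's own proof ends with ``$x(P)=\hat x+\alpha$ and $x(P)^2=\hat x^2+\hat x\alpha+\alpha^2$,'' which is an arithmetic slip in characteristic~$2$; the internally consistent statement (given the proof and the linear-independence remark) is $x(P)^2=\hat x^2+\alpha^2$. Your instinct that something needed fixing was right---it is the lemma's formula, not your argument.
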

A quick look at the Serre spectral sequence for the fiber bundle
\begin{equation}
\label{orange_seven_Serre}
	S^3/Q_{16}\times T_p^2 \longrightarrow \orangeseven\longrightarrow \RP^2
\end{equation}
shows that the three monomials $\hat x^2$, $\hat x\alpha$, and $\alpha^2$ are linearly independent in
$H^2(\orangeseven;\Z/2\Z)$.
\begin{proof}[Proof of \cref{x_of_orange_seven}]
The characteristic class $x(P)$ is the pullback of the nonzero class in $H^1(B\Z/2\Z;\Z/2\Z)$ by the
composition $\orangeseven\to BD_8\to B\Z/2\Z$ in which the first map is the classifying map for $P$ and the second
map is induced from the quotient by the subgroup of rotations. So we need to track how $\lambda$ pulls back $x\in
H^1(BD_8;\Z/2\Z)$. Applying $q_3^*$, if $\alpha'$ is the generator of $H^1(B\Z/2\Z;\Z/2\Z)$ for the $\Z/2\Z$
summand in $BD_8\times B\Z/2\Z$, then $q_3^*(x) = x+\alpha'$, because $x$ is nontrivial on reflections (as we
discussed in \cref{spin_D4_pullback}). Then $q_2^*$ sends $x\mapsto\hat x$ and $q_1^*$ does not affect this class,
and we learn $q_1^*q_2^*(\alpha') = \alpha$. Therefore $x(P) = \hat x + \alpha$ and $x(P)^2 = \hat x^2 + \hat
x\alpha + \alpha^2$. \end{proof} Now the propositions we promised.
\begin{proof}[Proof of \cref{7d_lens_integral}]
First we show $\nu(\halfQseven) = 1$. For this, it suffices to show $S_{x^2}(\halfQseven) = L_4^5$ with the duality
bundle induced from $S^5\to L_4^5$, because for this duality bundle
$\int_{L_5^4} w^2y = 1$. To see this, use that a smooth submanifold $i\colon N\inj M$ is
a smooth representative of the Poincaré dual of a class $\alpha$ if for any $\beta\in H^*(M;\Z/2\Z)$,
\begin{equation}
\label{PD_integrate}
	\int_N i^*(\beta) = \int_M \alpha\beta.
\end{equation}
This equation is only interesting if $\abs\beta = \dim(N)$ and $\abs\alpha = \dim(M) - \dim(N)$; otherwise both
sides vanish. Let $N\subset\halfQseven$ be the $L_4^5$ fiber for $M = \halfQseven$ over some point in $\RP^2$; we
want to verify~\eqref{PD_integrate} in this example. This is taken care of by \cref{serre_for_green_7d}: the only
nonzero case is $\beta = w^2y$, and
\begin{equation}
	\int_{L_4^5} w^2y = \int_{\halfQseven} w^2x^2y = 1.
\end{equation}
The second half of this proposition is showing that $\nu(\orangeseven) = 0$. We first evaluate the Smith
homomorphism on $\orangeseven$; \cref{x_of_orange_seven} tells us to compute the Poincaré dual of $\hat x^2 + \hat
x\alpha + \alpha^2$. We may do this one monomial at a time and take the disjoint union.
\begin{itemize}
	\item Because $\hat x^2$ comes from the fiber in~\eqref{orange_seven_Serre}, the Poincaré dual of $\hat x^2$
	can also be taken to be a fiber bundle, specifically one of the form $S^1\times T^2\to Y_1\to \RP^2$. One finds
	that the duality bundle is trivial when restricted to any fiber $S^1$, and that these $S^1$s carry the
	nonbounding Spin structure; therefore $Y_1$ bounds by filling in the fiberwise $S^1$s with copies of $D^2$.
	\item Since $\alpha^2$ is a top-degree cohomology class for the base $\RP^2$, the Poincaré dual of $\alpha^2$
	is the fiber $S^3/Q_{16}\times T_p^2$; the argument is essentially the same as the way we calculated
	$S_{x^2}(\halfQseven)$ above. The duality bundle on $S^3/Q_{16}\times T_p^2$ pulls back across the quotient
	$S^3/Q_{16}\times T_p^2\to S^3/Q_{16}$, so its characteristic classes also pull back from $S^3/Q_{16}$ and
	therefore vanish in degrees above $3$. Thus $\int_{S^3/Q_{16}\times T^2}w^2y = 0$.
	\item Finally, one representative of the Poincaré dual of $x\alpha$ is a manifold which factors as a fiber
	bundle $\Sigma\times T_p^2\to Y_2\to \RP^1$, where $\Sigma$ is a closed, unorientable surface. The class $w^2$ on
	this manifold pulls back from a $3$-manifold obtained by quotienting $Y_2$ by $T_p^2$, so $w^2 = 0$ for $Y_2$,
	and therefore $\int_{Y_2}w^2y = 0$ too.
\end{itemize}
Therefore $\nu(\orangeseven) = 0$.
\end{proof}
\begin{proof}[Proof of \cref{smith_or_nontriv}]
In the proof of the previous proposition, we found a representative of $S_{x^2}(\orangeseven)$ that is a disjoint
union of three $5$-manifolds $S^3/Q_{16}\times T_p^2$, $Y_1$, and $Y_2$, and we saw that $Y_1$ bounds as a
Spin-$Q_{16}$ manifold. In \cref{after_Smith}, we prove that $[S^3/Q_{16}\times T_p^2]\ne 0$ in
$\Omega_5^{\Spin\text{-}Q_{16}}$; here is where it is crucial that we gave $T_p^2$ the nonbounding Spin structure
when we defined $\orangeseven$, so that we obtain $S^3/Q_{16}\times T_p^2$ after the Smith
homomorphism. Finally, $Y_2$ is not bordant to $S^3/Q_{16}\times T_p^2$, so we can conclude that
$S_{x^2}(\orangeseven)\ne 0$.
\end{proof}
This finishes the determination of the generators of Spin-$\GL^+(2, \Z)$ bordism in dimensions $11$ and below.
\subsection{Multiplicative structure}
\label{mult_str_D16}

We saw in Section \ref{mult_Z4} that Spin-$\Z/8\Z$ bordism is a ring. Spin-$D_{16}$ bordism is not a ring, ultimately
because there is no non-trivial way to tensor together two $D_8$-bundles into a single $D_8$-bundle. Spin-$D_{16}$
bordism is still an $\Omega_\ast^\Spin$-module, just as for any kind of twisted Spin bordism, and most of this
information can be read off of the $E_\infty$-page of the Adams spectral sequence.

One surprise is that even though Spin-$\Z/8\Z$ bordism is a ring, and we have a map from Spin-$\Z/8\Z$ bordism to
Spin-$D_{16}$ bordism, Spin-$D_{16}$ bordism is not a module over Spin-$\Z/8\Z$ bordism! This is because in
$\Omega_\ast^{\Spin\text{-}\Z/8\Z} (\pt)$, the class of $S^1_p$ is divisible by $4$. However, we saw that the
generator $x$ of $\Omega_2^{\Spin\text{-}D_{16}} (\pt) \cong\Z/2\Z$ is $S_p^1$ times another class, but $x$ is
not divisible by $4$. Therefore there is not a natural Spin-$D_{16}$ structure on the product of a Spin-$\Z/8\Z$
manifold and a Spin-$D_{16}$ manifold. This is similar to (and related to) the fact that the inclusion $\text{U}(1)\to\mathrm O(2)$ defines a map from Spin\textsuperscript{$c$} bordism to Spin-$\mathrm O(2)$ bordism, and
Spin$^c$ bordism is a ring, but Spin-$\mathrm O(2)$ bordism is not a module over this ring (in this case, because
taking the product with $S_p^1$ is not always zero, which cannot happen in any
$\Omega_*^{\Spin^c}$-module).

This concludes our computations. Additional details are presented in the Appendices.


\pagebreak
\begin{appendix}

\section{Collection of bordism groups}
\label{app:collection}

In this Appendix we collect some of the bordism groups calculated in this paper up to dimension eleven. We also list a set of generators.

\begin{table}[h!]
\centering
\begin{tabular}{c c c}
\toprule
$k$ & $\Omega_k^{\Spin} (\pt)$ & Generators\\
\midrule
$0$ & {\footnotesize $\Z$} & {\footnotesize $ \pt_+$} \\
$1$ & {\footnotesize $\Z/2\Z$} & {\footnotesize $S^1_p$}\\
$2$ &  {\footnotesize $\Z/2\Z$} & {\footnotesize $S^1_p \times S^1_p$}\\
$3$ & {\footnotesize $0$} & \\
$4$ & {\footnotesize $\Z$} & {\footnotesize K3}\\
$5$ & {\footnotesize $0$} & \\
$6$ & {\footnotesize $0$} & \\
$7$ & {\footnotesize $0$} & \\
$8$ & {\footnotesize $\Z \oplus \Z$} & {\footnotesize $(B \,, \HP^2)$}\\
$9$ & {\footnotesize $(\Z/2\Z) \oplus (\Z/2\Z)$} & {\footnotesize $(B \times S^1_p \,, \HP^2 \times S^1_p )$} \\
$10$ & {\footnotesize $(\Z/2\Z) \oplus (\Z/2\Z) \oplus (\Z/2\Z)$} & {\footnotesize $(B \times S^1_p \times S^1_p \,, \HP^2 \times S^1_p \times S^1_p \,, X_{10} )$}\\
$11$ & {\footnotesize $0$} & \\
\bottomrule
\end{tabular}
\caption{Bordism groups for Spin manifolds in dimension smaller than twelve and a full set of generators.}
\label{apptab:spin}
\end{table}

\begin{table}[h!]
\centering
\begin{tabular}{c c c}
\toprule
$k$ & $\Omega_k^{\Spin} (B \Z/3\Z)$ & Generators\\
\midrule
$0$ & {\footnotesize $\Z$} & {\footnotesize $ \pt_+$} \\
$1$ & {\footnotesize $(\Z/2\Z) \oplus (\Z/3\Z)$} & {\footnotesize $(S^1_p \,, L^1_3)$}\\
$2$ &  {\footnotesize $\Z/2\Z$} & {\footnotesize $S^1_p \times S^1_p$}\\
$3$ & {\footnotesize $\Z/3\Z$} & {\footnotesize $L^3_3$}\\
$4$ & {\footnotesize $\Z$} & {\footnotesize K3}\\
$5$ & {\footnotesize $\Z/9\Z$} & {\footnotesize $L^5_3$}\\
$6$ & {\footnotesize $0$} & \\
$7$ & {\footnotesize $\Z/9\Z$} & {\footnotesize $L^7_3$}\\
$8$ & {\footnotesize $\Z \oplus \Z$} & {\footnotesize $(B \,, \HP^2)$}\\
$9$ & {\footnotesize $(\Z/2\Z) \oplus (\Z/2\Z) \oplus (\Z/3\Z) \oplus (\Z/27\Z)$} & {\footnotesize $(B \times S^1_p \,, \HP^2 \times S^1_p \,, \HP^2 \times L^1_3 \,, L^9_3)$} \\
$10$ & {\footnotesize $(\Z/2\Z) \oplus (\Z/2\Z) \oplus (\Z/2\Z)$} & {\footnotesize $(B \times S^1_p \times S^1_p \,, \HP^2 \times S^1_p \times S^1_p \,, X_{10} )$}\\
$11$ & {\footnotesize $(\Z/3\Z) \oplus (\Z/27\Z)$} & {\footnotesize $(\HP^2 \times L^3_3 \,, L^{11}_3)$}\\
\bottomrule
\end{tabular}
\caption{Bordism groups for Spin manifolds with $\Z/3\Z$ bundle in dimension smaller than twelve and a full set of generators. This contains the Spin bordism groups above.}
\label{apptab:spinZ3}
\end{table}

\begin{table}[h!]
\centering
\begin{tabular}{c c c}
\toprule
$k$ & $\Omega_k^{\Spin} (B \Z /4\Z)$ & Generators\\
\midrule
$0$ & {\footnotesize $\Z$} & {\footnotesize $ \pt_+$} \\
$1$ & {\footnotesize $(\Z/2\Z) \oplus (\Z/4\Z)$} & {\footnotesize $(S^1_p \,, L^1_4)$}\\
$2$ &  {\footnotesize $(\Z/2\Z) \oplus (\Z/2\Z)$} & {\footnotesize $(S^1_p \times S^1_p \,, S^1_p \times L^1_4)$}\\
$3$ & {\footnotesize $(\Z/2\Z) \oplus (\Z/8\Z)$} & {\footnotesize $(S^1_p \times S^1_p \times L^1_4 \,, L^3_4)$}\\
$4$ & {\footnotesize $\Z$} & {\footnotesize K3}\\
$5$ & {\footnotesize $\Z/4\Z$} & {\footnotesize $Q^5_4$}\\
$6$ & {\footnotesize $0$} & \\
$7$ & {\footnotesize $(\Z/2\Z) \oplus (\Z/32\Z)$} & {\footnotesize $(\widetilde{L}^7_4 \,, L^7_4)$}\\
$8$ & {\footnotesize $\Z \oplus \Z$} & {\footnotesize $(B \,, \HP^2)$}\\
$9$ & {\footnotesize $(\Z/2\Z)^{\oplus 3} \oplus (\Z/4\Z) \oplus (\Z/8\Z)$} & {\footnotesize $(B \times S^1_p \,,
\HP^2 \times S^1_p \,, B\times L_4^1 \,, \HP^2 \times L^1_4 \,, Q^9_4)$} \\
$10$ & {\footnotesize $(\Z/2\Z)^{\oplus 5}$} & {\footnotesize $(B \times S^1_p \times S^1_p \,, \HP^2 \times S^1_p \times S^1_p \,, X_{10} \,,$}\\
 & & {\footnotesize $(B \times S^1_p \times L^1_4 \,, \HP^2 \times S^1_p \times L^1_4)$}\\
$11$ & {\footnotesize $(\Z/2\Z)^{\oplus 2}$} & {\footnotesize $(X_{10} \times L^1_4 \,, \HP^2 \times S^1_p \times S^1_p \times L^1_4 \,, $}\\
 & {\footnotesize $ \oplus (\Z/8\Z)^{\oplus 2} \oplus (\Z /128\Z)$} & {\footnotesize $\HP^2 \times L^3_4 \,, \widetilde{L}^{11}_4 \,, L^{11}_4)$}\\
\bottomrule
\end{tabular}
\caption{Bordism groups for Spin manifolds with $\Z/4\Z$ bundle in dimension smaller than twelve and a full set of
generators. This contains the Spin bordism groups above. In dimension $9$, to get a set of generators which produce
the given direct-sum decomposition, as opposed to merely a generating set, replace $B\times L_4^1$ with $(B\times
L_4^1) \mathbin{\#} (-2 Q_4^9)$.}
\label{apptab:spinZ4}
\end{table}

\begin{table}[h!]
\centering
\begin{tabular}{c c c}
\toprule
$k$ & $\Omega_k^{\Spin\text{-}(\Z/8\Z)} (\pt)$ & Generators\\
\midrule
$0$ & {\footnotesize $\Z$} & {\footnotesize $ \pt_+$} \\
$1$ & {\footnotesize $\Z/8\Z$} & {\footnotesize $L^1_4$}\\
$2$ &  {\footnotesize $0$} & \\
$3$ & {\footnotesize $\Z/2\Z$} & {\footnotesize $L^3_4$}\\
$4$ & {\footnotesize $\Z$} & {\footnotesize $E$}\\
$5$ & {\footnotesize $(\Z/2\Z) \oplus (\Z/32\Z)$} & {\footnotesize $(\widetilde{L}^5_4 \,, L^5_4)$}\\
$6$ & {\footnotesize $0$} & \\
$7$ & {\footnotesize $\Z/4\Z$} & {\footnotesize $Q^7_4$}\\
$8$ & {\footnotesize $\Z \oplus \Z$} & {\footnotesize $(B \,, \HP^2)$}\\
$9$ & {\footnotesize $(\Z/4\Z) \oplus (\Z/8\Z) \oplus (\Z/128\Z)$} & {\footnotesize $(\widetilde{L}^9_4 \,, \HP^2 \times L^1_4 \,, L^9_4)$} \\
$10$ & {\footnotesize $\Z/2\Z$} & {\footnotesize $X_{10}$}\\
$11$ & {\footnotesize $(\Z/2\Z) \oplus (\Z/2\Z) \oplus (\Z /8\Z)$} & {\footnotesize $(\HP^2 \times L^3_4 \,, X_{10} \times L^1_4 \,, Q^{11}_4)$}\\
\bottomrule
\end{tabular}
\caption{Bordism groups for Spin-$\Z/8\Z$ manifolds with dimension lower than twelve and a full set of generators.}
\label{apptab:spinZ8}
\end{table}

\begin{table}[h!]
\centering
\begin{tabular}{c c c}
\toprule
$k$ & $\Omega_k^{\Spin\text{-}D_{16}} (\pt)$ & Generators\\
\midrule
$0$ & {\footnotesize $\Z$} & {\footnotesize $ \pt_+$} \\
$1$ & {\footnotesize $(\Z/2\Z) \oplus (\Z/2\Z)$} & {\footnotesize $(L^1_4 \,, S^1_R)$}\\
$2$ &  {\footnotesize $\Z/2\Z$} & {\footnotesize $S^1_p \times S^1_R$} \\
$3$ & {\footnotesize $(\Z/2\Z)^{\oplus 3}$} & {\footnotesize $(L^3_4 \,, \RP^3 \,, \widetilde{\RP}^3 )$}\\
$4$ & {\footnotesize $\Z$} & {\footnotesize $E$}\\
$5$ & {\footnotesize $(\Z/2\Z) \oplus (\Z/2\Z)$} & {\footnotesize $(L^5_4 \,, X_5)$}\\
$6$ & {\footnotesize $0$} & \\
$7$ & {\footnotesize $(\Z/2\Z)^{\oplus 3} \oplus (\Z/4\Z)$} & {\footnotesize $(\RP^7 \,, \widetilde{\RP}^7 \,, \orangeseven  \,, \halfQseven)$}\\
$8$ & {\footnotesize $\Z\oplus \Z\oplus (\Z/2\Z)$} & {\footnotesize $(\halfBott \,, \HP^2 \,, \halfQseven \times S_p^1)$}\\
$9$ & {\footnotesize $(\Z/2\Z)^{\oplus 8}$} & {\footnotesize $(\halfQseven \times S_p^1 \times S_p^1 \,, X_9 \,,
	\widetilde X_9 \,, L^9_4 \,, W^9_1 \,,$} \\
& &	 {\footnotesize $\phantom{(} B\times L_4^1 \,, \HP^2 \times L^1_4 \,, \HP^2\times S_R^1 )$} \\
$10$ & {\footnotesize $(\Z/2\Z)^{\oplus 4}$} & {\footnotesize $(B\times L_4^1\times S_p^1 \,, W^9_1 \times S_p^1 \,, \HP^2 \times
	L_4^1 \times S_p^1\,, X_{10})$}\\
$11$ & {\footnotesize $(\Z/2\Z)^{\oplus 9} \oplus (\Z/8\Z)$} & {\footnotesize $(\RP^{11} \,, \widetilde{\RP}^{11} \,, X_{11} \,, \widetilde{X}_{11} \,, \HP^2 \times L^3_4 \,, \HP^2 \times \RP^3 \,, $} \\
& &	 {\footnotesize $\phantom{(} \HP^2\times \widetilde{\RP}^3 \,, X_{10} \times L_4^1 \,, X_{10}\times S_R^1 \,,  Q_4^{11})$} \\
\bottomrule
\end{tabular}
\caption{Bordism groups for Spin-$D_{16}$ manifolds with dimension lower than twelve and a full set of
generators.
}
\label{apptab:spinD16}
\end{table}

We list some of the generators and explain the notation.
\begin{itemize}
	\item  $S_p^1$ is a circle with Spin structure induced from the Lie group
	framing, i.e., periodic boundary condition for fermions.
	\item K3 denotes a K3 surface and $E$ denotes an Enriques surface.
	\item $B$ describes a Bott manifold and $\HP^2$ quaternionic
	projective space. The name ``Bott manifold'' does not fix $B$ up to diffeomorphism; we never need to pick a
	specific Bott manifold, but explicit examples are constructed in~\cite[Section 7]{Lau04} and~\cite[Section 5.3]{FH21}.
	\item $X_{10}$ is a Milnor hypersurface.
	\item $L^{2n-1}_k$ are lens spaces, see also Appendix \ref{subapp:etalens}. The discrete $\Z/k\Z$-bundle is induced by the fibration $S^{2n-1} \rightarrow L^{2n-1}_k$. The notation $\widetilde{L}^{2n-1}_k$ denotes the choice of a distinct Spin or Spin-$\Z/8\Z$ structure, whose realization can be extracted from the lift of the $\Z/k\Z$ action on fermions given in \eqref{eq:tangact}.
	\item The spaces $Q^{2n-1}_k$ are lens space bundles of $L^{2n-3}_k$ over $\CP^1$ and $\Z/k\Z$ bundle on the lens space fiber, see also Appendix \ref{ss:eta_Q}.
	\item $S^1_R$ is a circle with a duality bundle involving a reflection.
	\item $\RP^{3 + 4m}$ and $\widetilde{\RP}^{3 + 4m}$ are real projective spaces with $\Z/2\Z$ duality bundle associated to different embeddings of the reflection into $D_8$ (or its lift to $D_{16}$).
	\item $\halfQseven$ is a $\Z/2\Z$ quotient of $Q^7_4$.
	\item $\orangeseven$ is a $\Z/2\Z$ quotient, involving reflections, of $S^3/Q_{16} \times S^2\times T^2$, where
	the quotient acts via reflections in the $S^3/Q_{16}$ factor, as antipodal mapping on the $S^2$, and as the
	fixed-point-free isometry that would yield the Klein Bottle when acting on $T^2$.
	\item $\halfBott$ is a half Bott manifold that can be constructed as a $(\Z/2\Z) \oplus (\Z/2\Z)$ quotient of a Calabi-Yau 4-fold.
	\item $W^9_1$ is a $\Z/2\Z$ quotient of $L^5_4 \times S^4$.
	\item $X_5$, $X_9$, and $X_{11}$ and their tilded versions are fibrations of real projective spaces over real projective spaces with appropriate duality bundle distinguished by the embedding of $D_8$ into $D_{16}$, that we refer to as Arcana.
\end{itemize}

\newpage

\section{Duality groups}
\label{app:groups}

In this Appendix we summarize the various duality groups and their individual group structure, see also \cite{Debray:2021vob}.

The first group we discuss is $\SL(2,\Z)$ which can be understood as the bosonic duality group of type IIB. It is generated by two elements
\renewcommand{\arraystretch}{1}
\begin{align}
U = \begin{pmatrix} 0 & -1 \\ 1 & 1\end{pmatrix} \,, \quad S = \begin{pmatrix} 0 & -1 \\ 1 & 0 \end{pmatrix} \,,
\end{align}
of finite order
\begin{align}
S^4 = \text{id} \,, \quad U^6 = \text{id} \,.
\end{align}
It can be defined as
\begin{align}
\SL (2,\Z) = \langle U, S | \, S^4 = \text{id} \,, \enspace S^2 = U^3 \rangle = (\Z/6\Z) \ast_{(\Z/2\Z)} (\Z/4\Z) \,,
\end{align}
which makes the structure as an amalgamated free product evident, see also \cite{Seiberg:2018ntt,Hsieh:2019iba,Hsieh:2020jpj}. The usual generator $T$ of infinite order is obtained as follows
\begin{align}
T = S^{-1} U = \begin{pmatrix} 1 & 1 \\ 0 & 1 \end{pmatrix} \,.
\end{align}
The Abelianization of $\SL(2,\Z)$ is given by
\begin{align}
\text{Ab}\big( \SL (2,\Z) \big) = \Z/{12}\Z \,,
\end{align}
which is generated by the image of $T$ in $\text{Ab}\big( \SL (2,\Z) \big)$.

After we include fermions we need to lift the action of the duality group to the fermionic degrees of freedom. This leads to the second group $\Mp(2,\Z)$ the metaplectic group, see \cite{Pantev:2016nze, Hsieh:2020jpj}. Defining the generators as $\hat{U}$ (order $12$) and $\hat{S}$ (order $8$) it is given by
\begin{align}
\label{mp_amalg}
\Mp(2,\Z) = \langle \hat{U}, \hat{S} | \, \hat{S}^8 = \text{id} \,, \enspace \hat{S}^2 = \hat{U}^3 \rangle = (\Z/{12}\Z) \ast_{(\Z/4\Z)} (\Z/8\Z) \,.
\end{align}
The Spin lift is encoded in
\begin{align}
\hat{S}^4 = \hat{U}^6 = (-1)^F \,.
\end{align}
Its Abelianization is given by
\begin{align}
\text{Ab}\big( \Mp (2,\Z) \big) = \Z/{24}\Z \,,
\end{align}
generated by the image of $\hat{T} = \hat{S}^{-1} \hat{U}$.

Finally, we also include the reflection operator
\begin{align}
R = \begin{pmatrix} 0 & 1 \\ 1 & 0 \end{pmatrix}
\end{align}
to define the group
\begin{align}
\label{GL_amalgamation}
\GL(2,\Z) = \langle U, S, R \, | \, S^4 = \text{id} \,, \, R S R^{-1} = S^{-1} \,, \, R U R^{-1} = U^{-1} \,, \, S^2 = U^3 \rangle = D_{12} \ast_{D_4} D_{8} \,,
\end{align}
where $D_{2n}$ denotes the dihedral (not binary dihedral) group of $2n$ elements. For example, one has
\begin{align}
D_8 = \langle S, R \, | S^4 = \text{id} \,, \enspace R^2 = \text{id} \,, \enspace R S R^{-1} = S^{-1} \rangle \,.
\end{align}
The Pin$^+$ cover of $\GL(2,\Z)$ is defined as
\begin{align}
\begin{split}
\GL^+(2,\Z) &= \langle \hat{U}, \hat{S}, \hat{R} \, | \, \hat{S}^8 = \text{id}, \enspace \hat{R}^2 = \text{id} \,, \enspace \hat{R} \hat{S} \hat{R}^{-1} = \hat{S}^{-1} \,, \enspace \hat{R} \hat{U} \hat{R}^{-1} = \hat{U}^{-1} \,, \enspace \hat{S}^2 = \hat{U}^3 \rangle \\
&= D_{16} \ast_{D_{8}} D_{24} \,,
\end{split}
\label{eq:GL+group}
\end{align}
where $\hat{R}$ denotes the Pin$^+$ lift of $R$.
Note that the Pin$^+$ cover precisely means that the lift of the reflections $\hat{R}$ squares to $\text{id}$ and not $(-1)^F$ as would be the case for the Pin$^-$ cover. In both cases the Abelianization is given by
\begin{align}
\text{Ab} \big( \GL (2,\Z) \big) = \text{Ab} \big( \GL^+ (2,\Z) \big) = (\Z/2\Z) \oplus (\Z/2\Z) \,,
\end{align}
generated by the images of $R, S$ and $\hat{R}, \hat{S}$, respectively.

\subsection{Fermion parity and worldsheet orientation reversal}

One can also consider the action of worldsheet orientation reversal $\Omega$ given by
\begin{align}
\Omega = \begin{pmatrix} 1 & 0 \\ 0 & -1 \end{pmatrix} .
\end{align}
This can be deduced from the action on the 2-form fields of type IIB that transform as a $\GL(2,\Z)$ doublet $(C_2, B_2)^T$. Similarly, one finds the representation of the left-handed fermion parity $(-1)^{F_L}$ in terms of a $\GL(2,\Z)$ element
\begin{align}
(-1)^{F_L} = \begin{pmatrix} -1 & 0 \\ 0 & 1\end{pmatrix} \,.
\end{align}
With this one can see that
\begin{align}
S \, \Omega \, S^{-1} = (-1)^{F_L} \,,
\end{align}
implying that $\Omega$ and $(-1)^{F_L}$ are S-dual. The relation with the reflection $R$ is given by
\begin{align}
S \, R = (-1)^{F_L} \,, \quad R \, S = \Omega \,,
\end{align}
from which we can define the associated Spin-lifts to elements in $\GL^+ (2,\Z)$
\begin{align}
\hat{S} \, \hat{R} = (-1)^{\hat{F}_L} \,,\quad \hat{R} \, \hat{S} = \hat{\Omega} \,,
\end{align}
which both square to the identity because of \eqref{eq:GL+group}.

Both these elements leave the dilaton field invariant but produce a sign flip for the RR 0-form so that on the axio-dilaton,
we have the transformation rule
\begin{align}
\Omega \text{ or } (-1)^{F_L}: \quad \tau \rightarrow - \overline{\tau} \,.
\end{align}
The fact that this involves complex conjugation generates obstructions for the definition of local Calabi-Yau manifolds involving reflections.

Note also that while the chiral 4-form $C_4$ of type IIB string theory is invariant under $\SL(2,\Z)$ transformations, it changes sign under the reflection in $\GL(2,\Z)$. Moreover, as mentioned above the two 2-form field of type IIB transform as doublet under the $\GL(2,\Z)$ symmetry and therefore $(p,q)$-strings and their magnetic dual 5-branes will be affected by non-trivial backgrounds.

\subsection{Embeddings of dihedral groups}
\label{app:embdihedral}

Some of the generators for Spin-$\GL^+ (2,\Z)$ manifolds originate from Spin-$D_4$ or Spin-$D_8$ manifolds via embedding of these groups in $\GL^+(2,\mathbb{Z})$. There is more than one possible embedding of $D_4$ and $D_8$ into $D_{16}$; each will lead to different Spin-$D_{16}$ structures and consequently to different Spin-$\GL^+ (2, \Z)$ manifolds. This results in different bordism classes realized by the same underlying manifold with different bundles on top. In this Appendix we describe these embeddings and how to characterize them.

Let us first focus on the embedding of $D_4 \cong (\Z/2\Z) \oplus (\Z/2\Z)$ into $D_{16}$. The latter can be realized as the symmetry group of
the octagon. When embedding $(\Z/2\Z) \oplus (\Z/2\Z)$ here, one  $\Z/2\Z$ will always be associated with a $180^\circ$ rotation, which we will call $s_4$, while the other $\Z/2\Z$ describes reflections and is denoted by $r_4$. The first embedding $i_4$ is  given by
\begin{align}
i_4\colon \quad s_4 \mapsto \hat{S}^4 \,, \enspace r_4 \mapsto \hat{\Omega} \,.
\end{align}
One can easily show that this has the correct group structure by applying the defining identities in
\eqref{eq:GL+group}.  Geometrically, this is embedding the group as that generated by reflection along one of the
octagon sides and rotation by 180$^\circ$. The second embedding modifies the reflections as follows
\begin{align}
\tilde{\imath}_4\colon \quad {s}_4 \mapsto \hat{S}^4 \,, \enspace {r}_4 \mapsto \hat{S} \, \hat{\Omega} \,,
\end{align}
with the correct group structure
\begin{align}
r_4^2 = \hat{S} \, \hat{\Omega} \, \hat{S} \, \hat{\Omega} = \hat{S} \, \hat{S}^{-1} = \text{id} \,, \enspace s_4 r_4 = \hat{S}^5 \, \hat{\Omega} = \hat{S} \, \hat{\Omega} \, \hat{\Omega} \, \hat{S}^4 \, \hat{\Omega} = \hat{S} \, \hat{\Omega} \, \hat{S}^{-4} = \hat{S} \, \hat{\Omega} \, \hat{S}^{4} = r_4 s_4 \,,
\end{align}
where we use that
\begin{align}
\hat{\Omega} \, \hat{S} \, \hat{\Omega} = \hat{R} \, \hat{S}^2 \, \hat{R} \, \hat{S} = \hat{S}^{-1} \,,
\end{align}
and therefore can replace $\hat{R}$ in the definition of the dihedral group. This second embedding is similar, but
sends $r_4$ to a reflection along an octagonal diagonal, rather than a side.

In similar fashion, we find two different embeddings of $D_8$ into $D_{16}$, given by
\begin{align}
i_8\colon \quad s_8 \mapsto \hat{S}^2 \,, \enspace r_8 \mapsto \hat{\Omega} \,,
\end{align}
and
\begin{align}
\tilde{\imath}_8\colon \quad s_8 \mapsto \hat{S}^2 \,, \enspace r_8 \mapsto \hat{S} \, \hat{\Omega} \,,
\end{align}
with reflection $r_8$ and rotation $s_8$, respectively. For $\tilde{\imath}_8$ we verify the defining property
\begin{align}
\tilde{r}_8 \, \tilde{s}_8 \, \tilde{r}_8 = \hat{S} \, \hat{\Omega} \, \hat{S}^3 \, \hat{\Omega} = \hat{S}^{-2} = \tilde{s}_8^{-1} \,.
\end{align}
Of course we can also define intermediate maps of the form
\begin{align}
\begin{split}
& s_4 \mapsto s_8^2 \mapsto \hat{S}^4 \,, \enspace r_4 \mapsto r_8 \mapsto \hat{\Omega} \,, \\
& s_4 \mapsto s_8^2 \mapsto \hat{S}^4 \,, \enspace r_4 \mapsto r_8 \mapsto \hat{S} \, \hat{\Omega} \,,
\end{split}
\end{align}
to define embeddings of $D_4$ into $D_8$.

Interpreting $D_{2n}$ as the symmetries of an $n$-gon we can picture the different embeddings of the reflection as a reflection with respect to a line through two sides or two corners. This is depicted in Figure \ref{fig:D8emb}, which we reproduce here (see Figure \ref{fig:EMBERS}).
\begin{figure}\centering
\includegraphics[width = 0.65 \textwidth]{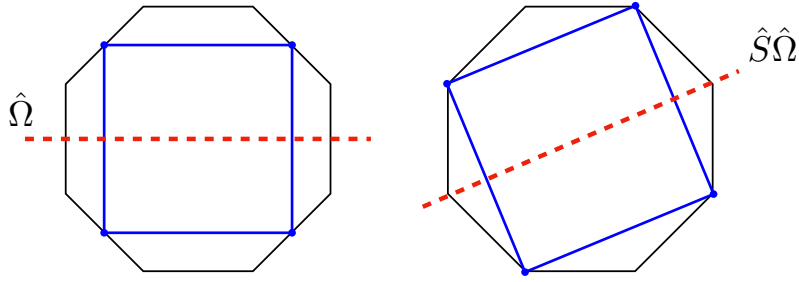}
\caption{Two embeddings of $D_8$ into $D_{16}$ with the action of $\hat{\Omega}$ and $\hat{S} \, \hat{\Omega}$ indicated.}
\label{fig:EMBERS}
\end{figure}
From this we also see that all other embeddings that map the reflection to an element $\hat{S}^k \, \hat{\Omega}$ can be derived from the embeddings above by conjugation with a rotation, i.e., shifting the axis of reflection by a rotation defined by a multiple of $\hat{S}$. As we have seen this is the case for $(-1)^{\hat{F}_L}$ as S-dual of $\hat{\Omega}$.

\section{$\boldsymbol{\eta}$-invariants}
\label{app:eta}

In this Appendix we summarize the calculation of $\eta$-invariants on lens spaces and lens space bundles. These are
often useful in order to guarantee that certain manifolds generate the non-trivial bordism classes we are looking
for and can resolve certain extension questions in spectral sequence computation. They can also be used in order to
demonstrate whether different lens spaces, i.e., lens spaces derived from the quotient of a different group action,
are bordant. For that we need the $\eta$-invariants to be bordism invariants, which is the case for dimension $(4n+1)$ but not in dimension $(4n-1)$, since the $(4n)$-dimensional index density has pure gravitational contributions. For that reason we use the difference of two $\eta$-invariants with different charges in dimension $(4n-3)$ as bordism invariants for which the pure gravitational contribution to the index density vanishes, see also the discussion in Section \ref{subsec:exteta} below.

\subsection[$\eta$-invariants for lens spaces]{$\boldsymbol{\eta}$-invariants for lens spaces}
\label{subapp:etalens}

We start with the discussion of $\eta$-invariants for lens spaces. We will closely follow the description in \cite{Hsieh:2020jpj} with generalizations to include the $\eta$-invariants for Rarita-Schwinger fields of spin $\tfrac{3}{2}$ as well as general $\Z / k \Z$ actions. In most of the discussion we are interested in the fractional part of the $\eta$-invariant only and the results are understood mod $\Z$.

In general we define the lens spaces as
\begin{align}
L^{2n-1}_k (j_1 \,, \dots \,, j_n) = S^{2n-1} / (\Z/k\Z) \,,
\end{align}
where $j_i$ are integers co-prime to $k$; these manifolds can be understood as the asymptotic boundary of $\mathbb{C}^n / (\Z/k\Z)$. The $\Z/k\Z$ acts on the complex coordinates $z_i$ of $\mathbb{C}^n$ as
\begin{align}
z_i \mapsto \exp (2 \pi i j_i / k) z_i \,.
\end{align}
This can be summarized in an action on the tangent bundle of $\R^{2n} \cong \mathbb{C}^n$ by a $(2n \times 2n)$-matrix $\tau$ with
\begin{align}
\det \big(\text{id} - \tau(\ell) \big) = \big|1 - e^{2 \pi i \ell j_1 / k}\big|^2 \dots \big| 1 - e^{2 \pi i \ell j_n / k} \big|^2
\label{eq:tangact}
\end{align}
Moreover, this naturally defines a $\Z/k\Z$ bundle given by the fibration structure
\begin{align}
\begin{split}
\Z/k\Z \hookrightarrow \, & S^{2n-1}  \\
& \downarrow \\
& L^{2n-1}_k(j_1\,,\dots\,, j_n)
\end{split}
\label{eq:natZkbundle}
\end{align}
which in many of our generators can be interpreted as a non-trivial duality bundle. Dirac fermions $\psi_q$ of charge $q$ under this $\Z/k\Z$ bundle transform as follows
\begin{align}
\Z/k\Z: \quad \psi_q \mapsto e^{- 2 \pi i q/ k} \exp \big(- \tfrac{\pi}{k} (j_1 \, \Gamma^1 \Gamma^2 + \dots + j_n \, \Gamma^{2n-1} \Gamma^{2n}) \big) \, \psi_q \,.
\label{eq:Dact}
\end{align}
defining a Spin lift of the group action. Here, $\Gamma^\ell$ are $\gamma$-matrices on $\mathbb{C}^n \cong \R^{2n}$ and we further define
\begin{align}
\overline{\Gamma} = i^{-n} \Gamma^1 \dots \Gamma^{2n} \,.
\end{align}
For this action to be well-defined one must have
\begin{align}
e^{- 2\pi i q} \exp \big( - \pi (j_1 \, \Gamma^1 \Gamma^2 + \dots + j_n \, \Gamma^{2n-1} \Gamma^{2n}) \big) = 1 \,,
\end{align}
which demands $q$ to potentially be half-integer. This can also be summarized in a matrix $\rho(\ell)$ acting on the Spin bundle.

\subsubsection*{Spin structures on lens spaces}

Before we describe the calculation of various $\eta$-invariants for lens spaces in detail we want to describe the different Spin structures on them. For that we look at the lens spaces of the form
\begin{align}
	L_4^{4n-1}\coloneqq L^{4n-1}_4 (1 \,, \dots \,, 1) \,.
\end{align}
These are Spin, but allow for two different Spin structures. These can be identified by considering the lift of the rotational action as discussed in \eqref{eq:Dact} to spinor fields. Switching off the duality bundle for now (i.e., setting $q = 0$), we see that there is more than one lift that defines a well-defined $\Z / 4 \Z$ action
\begin{align}
\psi \mapsto \alpha \, \exp \big( - \tfrac{\pi}{4} (\Gamma^1 \Gamma^2 + \dots + \Gamma^{4n-1} \Gamma^{4n}) \big) \, \psi \,,
\end{align}
where $\alpha$ takes into account an ambiguity that was absorbed in the duality piece $e^{- 2 \pi i q /k}$, in \eqref{eq:Dact}. In order to define a $\Z / 4 \Z$ action $\alpha$ has to satisfy
\begin{align}
\alpha^4 = 1 \quad \rightarrow \alpha  = e^{- 2 \pi i \kappa / 4} \,.
\end{align}
The choice of $\kappa$ defines the different possible lifts of the rotational action to the spinor bundle, i.e.,
the different Spin structures. Reinstating the duality bundle we further see that this choice can be incorporated
in the shift of the charge of the associated fermion. On a manifold with Spin structure specified by $\kappa$ a
fermion of charge $q$ thus transforms as if it had charge $\kappa + q$ on the manifold specified by Spin structure
$\kappa = 0$. Since the Spin structures are classified by $H^1 (L^{4n-1}_4; \Z/2\Z)$ one further has to restrict to $\alpha$'s that are $\pm 1$ ,i.e., only the two values $\kappa = \{0,2\}$ correspond to distinct Spin structures.

Very similar statements hold for lens space bundles as well as lens spaces that do not allow for a Spin structure but a Spin-$\Z / 8\Z$ structure, as for example $L^5_4$. In all of these cases the various structures can be probed by varying the effective charge of the spinors, which in the case of twisted Spin structures might be subject to consistency constraints.

\subsubsection*{Dirac fields}

Let us start the discussion with the usual Dirac fields with spin $\tfrac{1}{2}$, whose $\eta$-invariants we denote as $\eta^\text{D}$. The equivariant index theorem \cite{Don78,Hsieh:2020jpj} allows to deduce the fractional part of the $\eta$ invariants in terms of transformation property of the fermions at the singular point in $\mathbb{C}^n/(\Z/k\Z)$
\begin{align}
\eta^{\text{D}}_q \big(L^{2n-1}_k \big) = - \frac{1}{k} \bigg( \sum_{\ell = 1}^{k-1} \frac{\text{tr} \big( \overline{\Gamma} \rho(\ell) \big)}{\det (\text{id} - \tau(\ell))} \bigg) \,,
\end{align}
where the matrices $\tau$ and $\rho$ are the ones defined above. For the action \eqref{eq:Dact} one has
\begin{align}
\text{tr} \big( \overline{\Gamma} \rho (\ell) \big) = e^{- 2 \pi i \ell q/k} \big( e^{- \pi i \ell j_1/k} - e^{\pi i \ell j_1/k}\big) \dots \big( e^{- \pi i \ell j_n/k} - e^{\pi i \ell j_n/k}\big) \,.
\end{align}
Together with \eqref{eq:tangact} this implies
\begin{align}
\eta^{\text{D}}_q \big(L^{2n-1}_k \big) = - \frac{1}{(2i)^n k} \sum_{\ell = 1}^{k-1} \frac{e^{- 2 \pi i \ell q/k}}{\sin (\pi \ell j_1/k) \big) \dots \sin (\pi \ell j_n/k)} \,,
\label{eq:etaDlens}
\end{align}
which coincides with the formula in \cite{Hsieh:2020jpj} for all $j_i$ equal.

\subsubsection*{Rarita-Schwinger fields}

We present two ways to determine $\eta^{\text{RS}}_q$, the $\eta$-invariant for the Rarita-Schwinger operators of $\Z/k\Z$ charge $q$. One can deduce it from the fact that a Rarita-Schwinger field $\Psi_{\mu}$ transforms in the tensor product of the Spin and the tangent bundle. This modifies the matrix $\rho_{\text{RS}}(\ell)$ to also incorporate the rotation of the vector index, giving
\begin{align}
\begin{split}
\text{tr} \big( \overline{\Gamma} \rho_{\text{RS}} (\ell) \big) &= e^{- 2 \pi i \ell q / k} \big( e^{- \pi i \ell j_1/k} - e^{\pi i \ell j_1/k}\big) \dots \big( e^{- \pi i \ell j_n/k} - e^{\pi i \ell j_n/k}\big) \\
& \enspace \times \big( 2 \cos(2 \pi \ell j_1/k) + \dots + 2 \cos(2 \pi i \ell j_n/k) - 1\big) \,.
\end{split}
\label{rer44}\end{align}
The minus one in the last factor roughly accounts for the fact that the tangent space of $L^{2n-1}_k$ is one dimension smaller than that of $\mathbb{C}^n$, see also \eqref{eq:Ltangdecomp} below. More formally, calculations using the equivariant index theorem such as these essentially compute the $\eta$ invariant by finding a one-higher dimensional manifold with boundary in which the index can be computed via the APS index theorem. A Rarita-Schwinger operator in $(d+1)$ dimensions restricts to a $d$-dimensional Rarita-Schwinger + a Dirac contribution. The extra $-1$ in \eqref{rer44} is precisely removing the additional Dirac contribution. Putting everything together, the $\eta$-invariant for the Rarita-Schwinger operator is then given by
\begin{align}
\begin{split}
\eta^{\text{RS}}_q \big( L^{2n-1}_k \big) = - \frac{1}{(2i)^n k} \sum_{\ell = 1}^{k-1} \bigg( & \frac{e^{- 2 \pi i \ell q/k}}{\sin (\pi \ell j_1/k) \big) \dots \sin (\pi \ell j_n/k)}  \\ & \times \big( 2 \cos(2 \pi \ell j_1/k) + \dots + 2 \cos(2 \pi i \ell j_n/k) - 1\big) \bigg) \,.
\end{split}
\label{eq:etaRSlens}
\end{align}
Alternatively, one can derive the RS $\eta$-invariant by noting that the complexification of the tangent bundle of the lens space decomposes as
\begin{align}
\big( TL^{2n-1}_k \otimes \mathbb{C} \big) \oplus \mathbb{C} \simeq \bigoplus_{i = 1}^n \big( \mathcal{L}^{j_i} \oplus \mathcal{L}^{-j_i} \big) \,,
\label{eq:Ltangdecomp}
\end{align}
where $\mathcal{L}$ is the complex line bundle associated to a $\Z/k\Z$ bundle of the form defined in \eqref{eq:natZkbundle} with all $j_i$ set to $1$. Tensoring with $\mathcal{L}^q$, see also \cite{Debray:2021vob}, we have
\begin{align}
\big( TL^{2n-1}_k \otimes \mathcal{L}^q \big) \oplus \mathcal{L}^q \simeq \bigoplus_{i = 1}^n \big( \mathcal{L}^{q+j_i} \oplus \mathcal{L}^{q-j_i} \big) \,,
\end{align}
which implies
\begin{align}
\eta^{\text{RS}}_q \big( L^{2n-1}_k \big) = \sum_{i =1}^{n} \Big( \eta^{\text{D}}_{q+j_i} \big( L^{2n-1}_k \big) + \eta^{\text{D}}_{q-j_i} \big( L^{2n-1}_k \big) \Big) - \eta^{\text{D}}_{q} \big( L^{2n-1}_k \big) \,.
\end{align}
Plugging in \eqref{eq:etaDlens} we recover the result in \eqref{eq:etaRSlens}. Setting all $j_i$ to $1$ we once more find a perfect agreement with the formulae in \cite{Debray:2021vob}.

\subsubsection*{Signature operator}

For completeness we also state the formula for the $\eta$-invariant associated to the signature, which can be derived from the bi-spinor field, as discussed in \cite{Hsieh:2020jpj}.

The $\rho$ matrix for the signature with charge $q$ is again modified according to its action on bi-spinor fields and one has
\begin{align}
\text{tr} \big( \overline{\Gamma} \rho_{\text{sig}} (\ell) \big) =  e^{- 2 \pi i \ell q /k} \prod_{i = 1}^{n} \big( e^{- \pi i \ell j_1 / k} - e^{\pi i \ell j_1 / k} \big) \big( e^{- \pi i \ell j_1 / k} + e^{\pi i \ell j_1 / k} \big) \,,
\end{align}
with which one finds
\begin{align}
\eta^{\text{sig}}_{q} \big( L^{2n-1}_k \big) = - \frac{1}{(i)^n \, k} \sum_{\ell = 1}^{k-1} \frac{e^{2 \pi i \ell q / k}}{\tan (\pi \ell j_1/ k) \dots \tan (\pi \ell j_n/ k)} \,,
\end{align}
in agreement with results for $j_i = 1$.

\subsection[$\eta$-invariants for lens space bundles]{$\boldsymbol{\eta}$-invariants for lens space bundles}
\label{ss:eta_Q}

We can also describe the fractional part of $\eta$-invariants for lens space bundles described as manifolds $Q^{2n-1}_k$ in the main text. Let us briefly recall the construction of these lens space bundles.

Starting with the base manifold $B = \CP^1$ we define the direct sum of $(n-1)$ complex line bundles, which can be described as tensor product of the hyperplane bundle $H = \mathcal{O}(1)$
\begin{align}
\mathcal{L}_1 \oplus \dots \oplus \mathcal{L}_{n-1} = H^{m_1} \oplus \dots H^{m_{n-1}} \,.
\end{align}
At each point on $\CP^1$ the fiber is therefore given by $\C^{n-1}$. Next, we take the sphere bundle of this sum of line bundles and define a $\Z/k\Z$ action as usual, acting fiberwise as
\begin{align}
\Z/k\Z: \quad z_i \mapsto e^{2 \pi i j_i/k} z_i \,,
\end{align}
with $z_i$ describing the fiber coordinate of the $i^{\text{th}}$ line bundle. The $\eta$-invariant of such a space is given by \cite{BG87b,Hsi18}
\begin{align}
\begin{split}
\eta^{\text{D}}_q \big( Q^{2n-1}_k (j_1\,, \dots j_{n-1}; m_1 \,, \dots m_{n-1}) \big) = \, - \frac{1}{k (-2 i)^n} \sum_{\ell = 1}^{k-1} &  \frac{e^{2 \pi i q \ell / k}}{\text{sin} (\pi \ell j_1/k) \dots \text{sin} (\pi \ell j_{n-1} / k) } \\
& \times \sum_{r = 1}^{n-1} m_r \, \text{cot} (\pi \ell j_r/k) \,.
\end{split}
\end{align}
Applied to the case $(m_1, \dots, m_{n-1}) = (\pm 2, 0, \dots, 0)$ this yields
\begin{align}
\eta^{\text{D}}_q \big( Q^{2n-1}_k (j_1\,, \dots j_{n-1}) \big) = \mp \frac{2}{k (-2i)^n} \sum_{\ell = 1}^{k-1} \frac{e^{2 \pi i q \ell/k} \, \text{cos}(\pi \ell j_1/k)}{\text{sin} (\pi \ell j_1/k)^2 \, \text{sin} (\pi \ell j_2/k) \dots \text{sin} (\pi \ell j_{n-1}/k)} \,,
\label{eq:Qeta}
\end{align}
with which we can resolve some extension problems.

Alternative methods to evaluate these $\eta$-invariants include adiabatic limits of equivariant $\eta$-invariants, see e.g., \cite{5f7204f19b854838af285f506c7aad44, 2002math......3269G, +2000+181+236, 2010arXiv1011.4766G}.

\subsection[Evaluation of $\eta$-invariants to resolve extension questions]{Evaluation of $\boldsymbol{\eta}$-invariants to resolve extension questions}
\label{subsec:exteta}

Since the spectral sequences describe a filtration of the groups under investigation one is often left with extension problems. One way to resolve them is to find bordism invariants such as (the difference of) $\eta$-invariants that are sufficiently fine to account for the non-trivial extensions. For that we evaluate the bordism invariants on our candidate for the generator, if one obtains a result of the form $1/k$ one can be certain that this manifold generates at least a subgroup of order $k$, i.e., $\Z/k\Z$, which often fixes the group completely.

As mentioned above to apply this techniques it is crucial to use bordism invariants. From the APS index theorem \cite{Atiyah:1975jf, APS2, Atiyah:1976qjr} one sees that for $X = \partial Y$
\begin{align}
\eta^{\text{op}}_q (X) = \text{Index}^{\text{op}}_q (Y) - \int_Y I^{\text{op}} \,,
\end{align}
where $\text{op}$ can stand for D, RS, or sig, and $I^{\text{op}}$ denotes the usual index density of the
associated operator. This means that the fractional part of the $\eta$-invariants for two bordant manifolds, i.e.,
described as boundaries of a manifold in one higher dimension $Y$, can differ by contributions from the index
density. Since all background symmetries besides those for the tangent bundle are discrete, the index density is
given in terms of pure gravitational contributions. Since these gravitational contributions are non-vanishing only in dimension $4n$, $n \in \Z$, we see that $\eta$-invariants are bordism invariants in dimension other than $4n-1$. In dimension $4n-1$, one can consider the difference of two $\eta$-invariants with different charges for which the index density contribution cancels.

We can apply this to the generators of $\Omega^{\Spin}_k (B \Z / 3 \Z)$ generated by the lens spaces $L^{2n-1}_3$ described above. For even $n$ one has to consider the difference of two $\eta$-invariants, since a single one can have contributions from $\hat{A}$ in $4 m$ dimensions. For odd $n$ one further needs to use charges of the form $q = Q + \tfrac{1}{2}$, with $Q \in \Z$ to account for a properly defined $\Z / 3 \Z$ action on the fermions. The results are summarized in Table \ref{tab:L3bordinv}.

\renewcommand{\arraystretch}{1.5}
\begin{table}[h!]
\centering
\begin{tabular}{c c}
\toprule
Generator & Bordism invariant \\
\midrule
$L^3_3$  & $\eta^{\text{D}}_1 - \eta^{\text{D}}_0 = - \tfrac{1}{3}$ \\
$L^5_3$  & $\eta^{\text{D}}_{1/2} = - \tfrac{1}{9}$\\
$L^7_3$  & $\eta^{\text{D}}_1 - \eta^{\text{D}}_0 = \tfrac{1}{9}$\\
$L^9_3$  & $\eta^{\text{D}}_{1/2} = \tfrac{1}{27}$\\
$L^{11}_3$  & $\eta^{\text{D}}_1 - \eta^{\text{D}}_0 = - \tfrac{1}{27}$ \\
\bottomrule
\end{tabular}
\caption{Bordism invariants of some lens spaces of the form $L^{2n-1}_3$ which can be used to determine some extension problems in finding $\Omega_k^\Spin(B\Z/3 \mathbb{Z})$.}
\label{tab:L3bordinv}
\end{table}
With these we see that the lens spaces indeed generate the full summands in the associated bordism groups.

The same consideration can be applied for the lens spaces $L^{2n-1}_4$ that generate part of the bordism groups at prime $2$. The corresponding bordism invariants are summarized in Table \ref{tab:L4bordinv}.
\begin{table}[h!]
\centering
\begin{tabular}{c c}
\toprule
Generator & Bordism invariant \\
\midrule
$L^3_4$  & $\eta^{\text{D}}_1 - \eta^{\text{D}}_0 = - \tfrac{3}{8}$ \\
$L^5_4$  & $\eta^{\text{D}}_{1/2} = - \tfrac{5}{32} \,, \enspace \eta^{\text{RS}}_{1/2} = \tfrac{11}{32}$\\
$\widetilde{L}^5_4$  & $\eta^{\text{D}}_{3/2} = - \tfrac{3}{32} \,, \enspace \eta^{\text{RS}}_{3/2} = - \tfrac{3}{32}$\\
$L^9_4$  & $\eta^{\text{D}}_{1/2} = \tfrac{9}{128} \,, \enspace \eta^{\text{RS}}_{1/2} = - \tfrac{19}{128}$\\
$\widetilde{L}^9_4$  & $\eta^{\text{D}}_{3/2} = \tfrac{7}{128} \,, \enspace \eta^{\text{RS}}_{3/2} = \tfrac{3}{128}$ \\
\bottomrule
\end{tabular}
\caption{Bordism invariants of some lens spaces of the form $L^{2n-1}_4$ which can be used to determine some extension problems in finding $\Omega^{\text{Spin-Mp}(2,\Z)}_k (\pt)$ at prime $2$.}
\label{tab:L4bordinv}
\end{table}
Note that in five and nine dimensions we have two different Spin structures. In order to form two linearly
independent generators one needs a second bordism invariant, e.g., the Rarita-Schwinger invariant, as done in
Sections \ref{subsec:Mpcodim6} and \ref{subsec:Mpcodim10}. From the values of the $\eta$-invariants above one can already see that the bordism group contains a $\Z/32\Z$ and $\Z / 128 \Z$ summand, respectively, which already solves the extension problem in Section \ref{mp_spin} for dimension five. In dimension nine one can define a linearly independent bordism invariant that evaluates to zero on one of the Spin structures on the lens spaces and demonstrates that the other Spin structure generators a $\Z/4\Z$ as needed; this is done in the final paragraphs of Section \ref{subsec:Mpcodim10}.

We are left with the generators $Q^{2n-1}_4$ in dimension seven and eleven.
\begin{table}[h!]
\centering
\begin{tabular}{c c}
\toprule
Generator & Bordism invariant \\
\midrule
$Q^7_4$  & $\eta^{\text{D}}_{3/2} - \eta^{\text{D}}_{1/2} = \tfrac{1}{4}$ \\
$Q^{11}_4$  & $\eta^{\text{D}}_{3/2} - \eta^{\text{D}}_{1/2} = - \tfrac{1}{8}$\\
\bottomrule
\end{tabular}
\caption{Bordism invariants of some lens space bundles of the form $Q^{2n-1}_4$ which can be used to determine some extension problems in finding $\Omega^{\text{Spin-Mp}(2,\Z)}_k (\pt)$ at prime $2$.}
\label{tab:Q4bordinv}
\end{table}
Evaluating associated bordism invariants, see Table \ref{tab:Q4bordinv}, we are again able to fix the extension question.

For the determination of the bordism groups of Spin manifolds with SL$(2,\Z)$ bundle we also need some further bordism invariants which are summarized in Table \ref{tab:SL2inv} (the tilde on $\widetilde{L}^7_4$ denotes the alternative Spin structure for lens space as discussed above),
\begin{table}[h!]
\centering
\begin{tabular}{c c}
\toprule
Generator & Bordism invariant \\
\midrule
$Q^5_4$  & $\eta^{\text{D}}_{1} = - \tfrac{1}{4}$ \\
$L^7_4$  & $\eta^{\text{D}}_{1} - \eta^{\text{D}}_{0} = \tfrac{5}{32} \,, \enspace \eta^{\text{D}}_2 - \eta^{\text{D}}_0 = \tfrac{1}{4}$ \\
$\widetilde{L}^7_4$  & $\eta^{\text{D}}_{3} - \eta^{\text{D}}_{2} = - \tfrac{3}{32} \,, \enspace \eta^{\text{D}}_0 - \eta^{\text{D}}_2 = - \tfrac{1}{4}$ \\
$Q^9_4$  & $\eta^{\text{D}}_{1} = \tfrac{1}{8}$ \\
$L^{11}_4$  & $\eta^{\text{D}}_{1} - \eta^{\text{D}}_{0} = - \tfrac{9}{128} \,, \enspace \tfrac{1}{2} \big( \eta^{\text{D}}_2 - \eta^{\text{D}}_0 \big) = - \tfrac{1}{16}$ \\
$\widetilde{L}^{11}_4$  & $\eta^{\text{D}}_{3} - \eta^{\text{D}}_{2} = \tfrac{7}{128} \,, \enspace \tfrac{1}{2} \big( \eta^{\text{D}}_0 - \eta^{\text{D}}_2 \big) = \tfrac{1}{16}$ \\
\bottomrule
\end{tabular}
\caption{Further bordism invariants for the determination of $\Omega^{\Spin}_k \big( B \SL(2,\Z) \big)$.}
\label{tab:SL2inv}
\end{table}
which can be evaluated by the application of the same formulas as above.\footnote{Note that in twelve dimensions the index of a fermion in a real representation is even, see e.g.\ \cite{Hsieh:2020jpj}, and therefore we see that $\tfrac{1}{2} \big( \eta^{\text{D}}_2 - \eta^{\text{D}}_0 \big)$ mod $\Z$ is indeed a bordism invariant.}
\renewcommand{\arraystretch}{1}

\section{The May-Milgram theorem}
\label{s:MM_appendix}
In this Appendix, we discuss the May-Milgram theorem, a key tool for determining differentials in Adams spectral
sequences. We highlight an important nuance that occurs when applying this theorem to the Adams spectral sequence
over $\cA(1)$; this nuance has not always been treated carefully in recent such calculations, and we hope this
appendix provides a clear guide for how to work with the May-Milgram theorem in the setting of the Adams spectral
sequence over $\cA(1)$.\footnote{Though we focused on $\cA(1)$ in this paper, there are other subalgebras of the
Steenrod algebra such that the Adams spectral sequences over those subalgebras compute interesting generalized
homology theories, such as $2$-completed $\mathit{ku}$- or $\mathit{tmf}$-homology: see
Beaudry-Campbell~\cite[Remark 3.2.2]{BC18}. The nuance we highlight regarding the May-Milgram theorem and the
techniques we use to resolve it should apply for these more general examples as well.}

\begin{defn}
Choose an $r\ge 1$. The short exact sequence
\begin{equation}
\label{MM_Bock_SES}
	\shortexact[\cdot 2^r][\bmod 2^r]{\Z/2^r\Z}{\Z/2^{2r}\Z}{\Z/2^r\Z}{}
\end{equation}
induces a long exact sequence in cohomology:
\begin{equation}
\begin{tikzcd}
	\phantom{.} & {H^\ast(\bl;\Z/2^r\Z)} & {H^\ast(\bl;\Z/2^{2r}\Z)} & {H^\ast(\bl;\Z/2^r\Z)} & {H^{\ast+1}(\bl;\Z/2^r\Z)} & \phantom{.}
	\arrow[from=1-2, to=1-3]
	\arrow[from=1-3, to=1-4]
	\arrow["{\beta_r}", from=1-4, to=1-5]
	\arrow[from=1-1, to=1-2]
	\arrow[from=1-5, to=1-6]
\end{tikzcd}
\end{equation}
We let $\beta_r$ be the connecting homomorphism in this long exact sequence; it is called a \term{Bockstein
homomorphism}.
\end{defn}
\eqref{MM_Bock_SES} is compatible with the multiplication-by-$2^r$ short exact sequence
\begin{equation}
\label{Bock_sseq}
	\shortexact[\cdot 2^r][\bmod 2^r]{\Z}{\Z}{\Z/2^r\Z},
\end{equation}
so there is a commutative diagram of long exact sequences
\begin{equation}
\label{LES_diag}
\begin{gathered}
\begin{tikzcd}
	\dotsb & {H^k(\text{--};\Z)} & {H^k(\text{--};\Z)} & {H^k(\text{--};\Z/2^r\Z)} & {H^k(\text{--};\Z)} & \dotsb \\
	\dotsb & {H^k(\text{--};\Z/2^r\Z)} & {H^k(\text{--};\Z/2^{2r}\Z)} & {H^k(\text{--};\Z/2^r\Z)} & {H^k(\text{--};\Z/2^r\Z)} & \dotsb
	\arrow[from=1-1, to=1-2]
	\arrow[from=2-1, to=2-2]
	\arrow["{\cdot 2^r}", from=1-2, to=1-3]
	\arrow["{\cdot 2^r}", from=2-2, to=2-3]
	\arrow["{\bmod 2^r}", from=1-3, to=1-4]
	\arrow["{\bmod 2^r}", from=2-3, to=2-4]
	\arrow[from=1-4, to=1-5]
	\arrow["{\beta_r}", from=2-4, to=2-5]
	\arrow[from=1-5, to=1-6]
	\arrow[from=2-5, to=2-6]
	\arrow["{\bmod 2^r}", from=1-2, to=2-2]
	\arrow["{\bmod 2^{2r}}", from=1-3, to=2-3]
	\arrow["{\bmod 2^r}", from=1-4, to=2-4]
	\arrow["{\bmod 2^r}", from=1-5, to=2-5]
\end{tikzcd}
\end{gathered}
\end{equation}
Using this diagram, one can show that $\beta_r\ne 0$ acts on $H^{k-1}(X;\Z/2^r\Z)$ precisely when $H^k(X;\Z)$ has
$2^r$-torsion elements. We want to use this to detect direct summands.
\begin{defn}
A \term{key Bockstein class for $2^r$} is an element $x\in H^*(X;\Z/2^r\Z)$ in the image of $\beta_r$, and such that
for all $s < r$, $x\bmod s$ is not in the image of $\beta_s$.
\end{defn}
\begin{lem}
Degree-$k$ key Bockstein classes for $2^r$ are in bijective correspondence with $\Z/2^r\Z$ direct summands in
$H^{k+1}(X;\Z)$.
\end{lem}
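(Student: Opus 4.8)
The plan is to deduce the lemma directly from the commutative ladder of long exact sequences \eqref{LES_diag}, together with the universal coefficient theorem; no homotopy theory is needed. One preliminary clarification: here ``degree-$k$'' refers to the \emph{source} degree of $\beta_r$, so that, since $\beta_r\colon H^k(X;\Z/2^r\Z)\to H^{k+1}(X;\Z/2^r\Z)$, a key Bockstein class for $2^r$ in this sense is an element of $H^{k+1}(X;\Z/2^r\Z)$, which is why the relevant integral group is $H^{k+1}(X;\Z)$.

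First I would set up the algebra. Because $X$ has finitely generated cohomology, write $H^{k+1}(X;\Z)\cong \Z^a\oplus\bigoplus_i\Z/2^{e_i}\Z\oplus T'$ with $T'$ of odd order; only the $2$-power summands will matter. Exactness of the top row of \eqref{LES_diag} identifies the image of the integral Bockstein $\widetilde\beta_s\colon H^k(X;\Z/2^s\Z)\to H^{k+1}(X;\Z)$ with $\ker(\cdot 2^s)=H^{k+1}(X;\Z)[2^s]$, and commutativity of \eqref{LES_diag} shows that $\beta_s$ equals $\widetilde\beta_s$ followed by coefficient reduction. Hence $\operatorname{im}(\beta_s)$ is exactly the image of $H^{k+1}(X;\Z)[2^s]$ in $H^{k+1}(X;\Z)\otimes\Z/2^s\Z$. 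Under the splitting above this is computed one summand at a time: a factor $\Z/2^e\Z$ with $e\le s$ contributes all of $\Z/2^e\Z$, while one with $e>s$ contributes the (possibly trivial) subgroup $2^{e-s}\bigl(\Z/2^{\min(e,s)}\Z\bigr)$. This is the step that makes all the images explicit.

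The heart of the argument is then to read off which elements of $\operatorname{im}(\beta_r)$ stay outside $\operatorname{im}(\beta_s)$ after reduction for every $s<r$. Comparing the summand-by-summand descriptions, the distinguished class attached to a factor $\Z/2^{e_i}\Z$ is key exactly when $e_i=r$: if $e_i<r$ that class already lies in $\operatorname{im}(\beta_{e_i})$, and if $e_i>r$ the classes of that summand lying in $\operatorname{im}(\beta_r)$ are proper $2$-power multiples of a generator, whose mod-$2$ reduction vanishes and hence lies in $\operatorname{im}(\beta_1)$. Carrying this through summand by summand recovers the correspondence with the $\Z/2^r\Z$ direct summands of $H^{k+1}(X;\Z)$. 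The part I expect to be the real obstacle is precisely this last bookkeeping: the maps $\beta_1,\dots,\beta_r$ interact, factors $\Z/2^e\Z$ with $r<e<2r$ still have nonzero image under $\beta_r$, and sums of distinguished classes coming from several summands must be analyzed, so one has to verify carefully that the ``key'' condition isolates exactly the order-$2^r$ summands. This is a finite computation about the subgroup lattice of $\Z/2^m\Z$, and is where the one genuinely delicate point lies; one must also pin down a convention for which generator of a $\Z/2^r\Z$ summand is recorded by its key class (or pass to the associated graded of the $2$-power filtration on the torsion) so that the correspondence is literally a bijection rather than merely a numerical equality. Everything outside this step is a formal consequence of \eqref{LES_diag}, exactness, and the universal coefficient theorem.
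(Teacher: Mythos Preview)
Your approach is essentially the same as the paper's: both use the commutative ladder~\eqref{LES_diag} to identify $\operatorname{im}(\beta_r)$ with the image of $H^{k+1}(X;\Z)[2^r]$ under reduction mod~$2^r$, and then argue that the ``key'' condition singles out exactly those classes coming from $\Z/2^r\Z$ direct summands. The paper's proof is a two-sentence sketch (a class in $\operatorname{im}(\beta_r)$ lifts to a $2^r$-torsion integral class, and the mod~$s$ condition forces this lift to be indivisible, hence a direct summand generator), whereas you carry out the same computation explicitly summand-by-summand after splitting the $2$-primary part of $H^{k+1}(X;\Z)$.

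Your more careful treatment is in fact an improvement: you correctly flag that the word ``bijection'' is imprecise as stated, since a single $\Z/2^r\Z$ summand contributes $2^{r-1}$ key classes (its generators), not one. The paper's sketch does not address this, and the intended meaning is the looser correspondence you describe---key classes are exactly the mod-$2^r$ reductions of generators of $\Z/2^r\Z$ direct summands. Your observation that one must either pass to the associated graded or fix a choice of generator to make this literal is exactly right, and is the only point where either argument needs care.
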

The proof is a diagram chase, so we give a quick summary. If $x$ is in the image of $\beta_r$, the preimage of $x$
in $\Z$ cohomology generates a $\Z/2^r\Z$ subgroup. This subgroup might not be a direct summand, but the mod $s$
condition fixes that problem: if $x = 2y$, then $x\bmod 2 = 0$, and $0 = \beta_2(0)$. Thus key Bockstein classes
correspond to direct summands.
\begin{defn}
An \term{$h_0$-tower} in an Adams spectral sequence is an infinite sequence of nonzero elements $x_i\in
E_2^{s+i,t+i}$ such that $h_0 x_i = x_{i+1}$. We do not require this sequence to begin at $i = 0$.
\end{defn}
For example, looking at $\Ext_{\cA(1)}(\Z/2\Z, \Z/2\Z)$ in \cref{pic_ext_Z2}, there are $h_0$-towers in all degrees
$4k$.

Because differentials commute with the $h_0$-action on the $E_r$-page, the behavior of differentials on
$h_0$-towers is very restricted. If $T$ is an $h_0$-tower, one of two things occurs:
\begin{enumerate}
	\item every element of $T$ survives to the $E_\infty$-page, or
	\item there is a single number $r$ such that $d_r$ differentials kill all but finitely many elements of $T$.
\end{enumerate}
In the second situation, $d_r$ goes between $T$ and exactly one other $h_0$-tower $T'$. Because of these two cases,
we often refer to $d_r$s between $h_0$-towers or to entire $h_0$-towers surviving or not surviving to the
$E_\infty$-page. Now we can state the main theorem:
\begin{thm}[May-Milgram~\cite{MM81}]
\label{MM_appendix}
Consider the Adams spectral sequence computing the $2$-completed stable homotopy groups of a space or spectrum $X$.
There is a bijection between nonzero $d_r$ differentials out of $h_0$-towers in topological degree $k$ and key
Bockstein classes for $2^r$ in $H^k(X;\Z/2^r\Z)$.
\end{thm}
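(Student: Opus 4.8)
The plan is to translate the statement into a comparison of two spectral sequences: the Adams spectral sequence for $\pi_*(X)_2^\wedge$, viewed through its $h_0$-module structure, and the mod $2$ Bockstein spectral sequence of $X$. The first step is to pin down where the $h_0$-towers on the Adams $E_2$-page come from. Restrict the $\cA$-module $H^*(X;\Z/2\Z)$ to the exterior subalgebra $E(Q_0)=E(\Sq^1)\subset\cA$ and split it into free $E(Q_0)$-summands and one-dimensional trivial summands; since $\Ext_{E(Q_0)}(\Z/2\Z,\Z/2\Z)\cong\Z/2\Z[h_0]$ while $\Ext_{E(Q_0)}$ of a free module is a single class in filtration $0$ annihilated by $h_0$, the trivial summands are exactly what carries $h_0$-tower behaviour. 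A change-of-rings (or May-filtration) argument then identifies the $h_0$-towers of the Adams $E_2$-page over the full Steenrod algebra with the classes of the $Q_0$-Margolis homology $H(H^*X;Q_0)$, which is precisely the $E_2$-page of the Bockstein spectral sequence of $X$.

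\textbf{Key steps.} First, prove the refined comparison: an $h_0$-tower supports a nonzero Adams $d_r$ if and only if the corresponding class in $H(H^*X;Q_0)$ supports the $r$-th Bockstein differential $\beta_r$ in the Bockstein spectral sequence of $X$; this is the substance of May--Milgram. Second, invoke the (diagram-chase) lemma from the body of the appendix: classes in the image of $\beta_r$ on the $E_r$-page of the Bockstein spectral sequence are exactly the key Bockstein classes for $2^r$, and these are in bijection with $\Z/2^r\Z$ direct summands of the relevant integral cohomology group. Assembling these gives the claimed bijection. Throughout, one reduces to an ``indecomposable'' model by additivity: an $\cA$-module splitting of $H^*(X;\Z/2\Z)$ --- equivalently a $2$-complete wedge splitting of $X$ --- splits both the $h_0$-towers-with-differentials and the key Bockstein classes, so it suffices to treat the case where $H^*(X;\Z/2\Z)$ is $\cA$-generated by a single key Bockstein class and its $\Sq^1$-image.

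\textbf{Main obstacle.} The hard part is the filtration bookkeeping in the first step: one must show that the $r$-th integral Bockstein corresponds to a $d_r$ --- a differential raising Adams filtration by exactly $r$ --- and not to a $d_{r'}$ for some other $r'$. Pinning this down requires an explicit algebraic model. One option is the cobar complex computing $\Ext_\cA$, in which $h_0$ is represented by cocycles built from $\xi_1$ and one tracks, cell by cell, how applying the Bockstein to an integral cochain on $X$ produces precisely $r$ additional $h_0$-factors before closing up. A cleaner option is the May spectral sequence, whose $E_1$-term makes the $E(Q_0)$-decomposition manifest and whose own differentials organize the passage from ``Margolis homology with $h_0$-towers'' to the genuine Adams $E_2$-page, converting the Bockstein data into Adams differentials of the correct length. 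I would follow the latter route, which is essentially the strategy of~\cite{MM81}. (The case $r=1$ is degenerate: there it is the $h_0$-module structure of the $E_2$-page itself, i.e.\ the algebraic $\Sq^1$-Bockstein, that plays the role of ``$d_1$''.)

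\textbf{The $\cA(1)$-variant.} For \cref{MM_ceta}, the form actually used in \cref{ss:spin_Z8}, the same argument goes through with $\cA$ replaced by $\cA(1)$: the $\Sq^1$-Bockstein still generates $h_0$-towers in $\Ext_{\cA(1)}$, the role played over $\cA$ by the $E(Q_0)$-trivial summands is now played by the $C\eta$ summands of $H^*(X;\Z/2\Z)$, and a $\beta_r$ linking the bottom classes of two $C\eta$ summands translates, via the same Bockstein-spectral-sequence comparison, into a $d_r$ between the associated $h_0$-towers.
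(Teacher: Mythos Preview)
Your proposal is correct and follows essentially the same conceptual route as the paper and as May--Milgram themselves. Note, however, that the paper does not actually \emph{prove} this theorem: it is stated with a citation to \cite{MM81}, and the subsequent remark explains that the usual formulation (Adams $d_r$'s between $h_0$-towers $\leftrightarrow$ differentials in the Bockstein spectral sequence) is converted to the ``key Bockstein class'' formulation by solving the Bockstein spectral sequence as in \cite[\S 24.2]{MP12}. Your sketch is a more detailed unpacking of precisely this strategy, and your identification of the main obstacle (matching the length $r$ on both sides) is on point.

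One difference worth flagging concerns the $\cA(1)$-variant. You describe \cref{MM_ceta} as ``the same argument with $\cA$ replaced by $\cA(1)$,'' which is morally right but not how the paper proceeds. The paper instead reduces to the Adams spectral sequence over $\cA(0)$ (which, as noted in a footnote, \emph{is} the Bockstein spectral sequence), and then computes by hand that the forgetful map $\phi\colon\Ext_{\cA(1)}(C\eta)\to\Ext_{\cA(0)}(C\eta)$ is an isomorphism on $h_0$-towers in the relevant degrees. This buys a self-contained argument that avoids invoking the full May--Milgram machinery or any May-spectral-sequence bookkeeping; your approach is cleaner conceptually but relies on having already internalized the $\cA$-level result.
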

The slogan is ``Bocksteins tell us everything about differentials between $h_0$-towers.''
\begin{rem}
The May-Milgram theorem is usually stated differently, identifying Adams differentials between $h_0$-towers and
differentials in the \term{Bockstein spectral sequence} associated to the short exact sequence~\eqref{Bock_sseq}.
The pathway to the way we state it in \cref{MM_appendix} is to solve said Bockstein spectral sequence, as
in~\cite[Section 24.2]{MP12}.
\end{rem}
We would like to use the May-Milgram theorem in the computation of twisted Spin bordism, or, thanks to
Anderson-Brown-Peterson's theorem~\cite{ABP67}, $\ko$-homology. Thus we should take a look at Bocksteins in
$H^*(\ko\wedge X;\Z/2^r\Z)$. This is the important nuance we mentioned above: previous work in the mathematical
physics literature applying the May-Milgram theorem, such as~\cite{Deb21}, has not always been clear about passing
from Bocksteins in $H^*(X)$ to Bocksteins in $H^*(\ko\wedge X)$.

The Künneth formula says that $H^*(\ko\wedge X;\Z/2^r\Z)$ is at least as complicated as the combination $H^*(\ko;\Z/2^r\Z)\otimes
H^*(X;\Z/2^r\Z)$. Mahowald-Milgram~\cite[Corollary 1.3]{MM76} compute $H^*(\ko;\Z_{(2)})$, which is complicated
enough to make direct computation of Bocksteins in $H^*(\ko\wedge X;\Z/2^r\Z)$ look imposing. Bayen~\cite[Chapter
1]{Bay94} provides a general approach using the Hopf algebra structure of $\cA$, but we can get away with an ad hoc
approach.

Let $\cA(0)$ be the subalgebra of $\cA$ generated by $\Sq^1$; then $\cA(0)\cong\Z/2\Z[\Sq^1]/\big((\Sq^1)^2\big)$, i.e.\
this is an exterior algebra. There is an isomorphism\footnote{We do not know who originally proved this.
Ravenel~\cite[Lemma 3.1.11]{Rav86} remarks that it is ``standard.''}
\begin{equation}
	H^*(H\Z;\Z/2\Z)\overset\cong\longrightarrow \cA\otimes_{\cA(0)}\Z/2\Z,
\end{equation}
so just like the change-of-rings theorem provides an Adams spectral sequence beginning with Ext over $\cA(1)$ and
computing $\ko$-homology, there is an even simpler Adams spectral sequence of the form\footnote{In fact, this
Adams spectral sequence is isomorphic to a certain presentation of the Bockstein spectral sequence~\cite[Section
1.4]{BG03}, and May-Milgram's proof of their theorem can be interpreted from this point of view.}
\begin{equation}
	E_2^{s,t} = \Ext_{\cA(0)}^{s,t}(H^*(X;\Z/2\Z), \Z/2\Z) \Longrightarrow H_{t-s}(X;\Z)_2^\wedge.
\end{equation}
The idea behind our ad hoc approach to the May-Milgram theorem is that differentials between Adams towers are easy
to compute over $\cA(0)$, because we generally know the integral homology of $X$; then we pull them back by the map
of Adams spectral sequences induced by the quotient $\cA(1)\to\cA(0)$.

If $M$ is an $\cA(0)$-module which is a finite-dimensional $\Z/2\Z$-vector space in each degree, then $M$ is a
direct sum of shifts of the two $\cA(0)$-modules $\Z/2\Z$ and $\cA(0)$. The Ext groups of these modules are simple
to compute.
\begin{itemize}
	\item $\Ext_{\cA(0)}\big(\cA(0), \Z/2\Z\big)$ consists of a single $\Z/2\Z$ in bidegree $(0, 0)$, as is always the case for
	Ext of an algebra over itself.
	\item $\Ext_{\cA(0)}(\Z/2\Z, \Z/2\Z)$ consists of a single $h_0$-tower in topological degree $0$, which is a
	consequence of Koszul duality~\cite[Example 4.5.5]{BC18}.
\end{itemize}
Therefore a $d_r$ differential between $h_0$-towers in the $\cA(0)$-Adams spectral sequence for $X$ corresponds to
a $\Z/2^r\Z$ summand in the homology of $X$, as we draw in \cref{dr_A0}, and as we noted above such a summand is
equivalent data to a key Bockstein class.
\begin{figure}[h!]
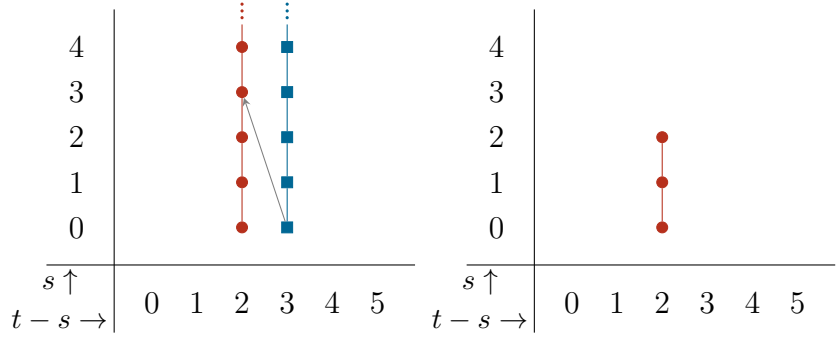

\centering
\begin{subfigure}[c]{0.33\textwidth}
\begin{sseqdata}[name=A0diff, classes=fill, xrange={0}{5}, yrange={0}{4}, scale=0.6, Adams grading, >=stealth,
x label = {$\displaystyle{s\uparrow \atop t-s\rightarrow}$},
x label style = {font = \small, xshift = -15ex, yshift=5.5ex}]
\begin{scope}[BrickRed]
	\class(2, 0)\AdamsTower{}
\end{scope}
\begin{scope}[MidnightBlue, rectangle]
	\class(3, 0)\AdamsTower{}
\end{scope}
\begin{scope}[draw=none, fill=none]
	\class(2, 6)
	\class(2, 7)
\end{scope}
\d[gray]3(3, 0)
\d[draw=none]3(3, 1)
\d[draw=none]3(3, 2)
\d[draw=none]3(3, 3)
\d[draw=none]3(3, 4)
\end{sseqdata}
	\printpage[name=A0diff, page=3]
\end{subfigure}
\begin{subfigure}[c]{0.33\textwidth}
	\printpage[name=A0diff, page=4]
\end{subfigure}
\caption{A differential in an Adams spectral sequence over $\cA(0)$ between two $h_0$-towers, arising from two
$\Z/2\Z$ summands in the $\cA(0)$-module structure on cohomology. The $d_r$ differential on the $E_r$-page (left)
leaves a $\Z/2^r\Z$ summand in the $E_\infty$-page (right). Here $r = 3$.}
\label{dr_A0}
\end{figure}

Recall that $C\eta$ denotes the $\cA$-module consisting of two $\Z/2\Z$ summands in degrees $0$ and $2$ linked by a
$\Sq^2$.
\begin{prop}[May-Milgram for copies of $C\eta$]
\label{MM_ceta}
Let $X$ be a space or spectrum and $\textcolor{BrickRed}{M_0}$ and $\textcolor{MidnightBlue}{M_1}$ be two
$\cA(1)$-module summands of $H^*(X;\Z/2\Z)$ isomorphic to shifts of $C\eta$. Suppose there exist $x\in
\textcolor{BrickRed}{M_0}$ and $y\in\textcolor{MidnightBlue}{M_1}$ of degrees $n$ and $n+1$, respectively, and
classes $\widetilde x, y\in H^*(X;\Z/2^r\Z)$ such that $\widetilde x\bmod 2 = x$, $\widetilde y\bmod 2 = y$,
$\beta_r(\widetilde x) = \widetilde y$, and $\widetilde y$ is a key Bockstein class. Then for all $k\ge 0$, there
is a $d_r$ differential from the $h_0$-tower in degree $n+4k+1$ coming from
$\Ext_{\cA(1)}(\textcolor{MidnightBlue}{M_1})$ to the $h_0$-tower in degree $n+4k$ coming from
$\Ext_{\cA(1)}(\textcolor{BrickRed}{M_0})$.
\end{prop}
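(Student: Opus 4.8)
The plan is to reduce the statement to a computation in the Adams spectral sequence over $\cA(0)=\Z/2\Z[\Sq^1]/((\Sq^1)^2)$, which computes the integral homology of $X$; over $\cA(0)$ the differentials between $h_0$-towers are governed by the Bockstein spectral sequence and hence directly by the hypotheses on $\widetilde x$ and $\widetilde y$. The resulting $d_r$ differential is then transported to the $\cA(1)$-Adams spectral sequence along the natural map of Adams spectral sequences induced by the inclusion $\cA(0)\subset\cA(1)$ (realized by the truncation $\ko\to H\Z$), after which a $v$-linearity argument upgrades the single differential to the claimed family indexed by $k$. At heart this is the general May-Milgram theorem (\cref{MM_appendix}) together with the $\cA(0)$-reduction trick flagged in the preceding discussion.

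First I would pin down the relevant $\cA(0)$- and $\cA(1)$-module data. Since $\Sq^1$ acts trivially on $C\eta$, each shift $\Sigma^m C\eta$ splits as an $\cA(0)$-module as $\Sigma^m\Z/2\Z\oplus\Sigma^{m+2}\Z/2\Z$. Because $M_0$ and $M_1$ are $\cA(1)$-module \emph{direct summands} of $H^*(X;\Z/2\Z)$, the elements $x\in M_0$ and $y\in M_1$ are therefore genuine generators of $\Z/2\Z$ direct summands of $H^*(X;\Z/2\Z)$ regarded as an $\cA(0)$-module, sitting in degrees $n$ and $n+1$. On the $\cA(1)$ side, \cref{LES_ext_exm} records that $\Ext_{\cA(1)}(C\eta)$ contains exactly one $h_0$-tower in each even topological degree, and that multiplication by $v\in\Ext_{\cA(1)}^{3,7}(\Z/2\Z)$ carries the $h_0$-tower in topological degree $j$ injectively onto the one in degree $j+4$. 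Shifting, $\Ext_{\cA(1)}(M_0)$ has an $h_0$-tower in topological degree $n$, which $v^k$ identifies injectively with the $h_0$-tower in degree $n+4k$; likewise $\Ext_{\cA(1)}(M_1)$ has an $h_0$-tower in degree $n+1$, identified by $v^k$ with the one in degree $n+4k+1$.

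Next I would establish the $k=0$ differential. The conditions $\widetilde x\bmod 2 = x$, $\widetilde y\bmod 2 = y$, $\beta_r(\widetilde x)=\widetilde y$, and $\widetilde y$ a key Bockstein class for $2^r$ are, after identifying the $\cA(0)$-Adams spectral sequence of $X$ with the Bockstein spectral sequence of $X$, precisely the input the May-Milgram theorem (\cref{MM_appendix}) needs: they force a nonzero $d_r$ in the $\cA(0)$-Adams spectral sequence from the $h_0$-tower in topological degree $n+1$ generated by $y$ to the $h_0$-tower in topological degree $n$ generated by $x$. Now I would invoke the natural map $\Phi$ of Adams spectral sequences from the one over $\cA(1)$ (computing $\ko_*(X)_2^\wedge$) to the one over $\cA(0)$ (computing $H_*(X;\Z)_2^\wedge$), which commutes with all differentials and with the $h_0$-action. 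Comparing the explicit description of $\Ext_{\cA(1)}(M_i)$ above with $\Ext_{\cA(0)}(\Sigma^m C\eta)=\Ext_{\cA(0)}(\Sigma^m\Z/2\Z)\oplus\Ext_{\cA(0)}(\Sigma^{m+2}\Z/2\Z)$, one checks that $\Phi$ restricts, in sufficiently high Adams filtration, to an isomorphism between the degree-$n$ tower of $\Ext_{\cA(1)}(M_0)$ and the corresponding tower over $\cA(0)$, and likewise for the degree-$(n+1)$ tower of $\Ext_{\cA(1)}(M_1)$. Since detecting a nonzero differential between $h_0$-towers only needs this kind of identification in high filtration, the $\cA(0)$-differential pulls back to a nonzero $d_r$ in the $\cA(1)$-Adams spectral sequence from the degree-$(n+1)$ tower of $M_1$ to the degree-$n$ tower of $M_0$.

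Finally, to get all $k\ge 0$ I would use that Adams differentials are equivariant for the $\Ext_{\cA(1)}(\Z/2\Z)$-action: applying $v^k$ to the $k=0$ differential produces a $d_r$ from $v^k$ times the degree-$(n+1)$ tower of $M_1$ to $v^k$ times the degree-$n$ tower of $M_0$, which by the $v$-periodicity recorded above are exactly the $h_0$-towers in degrees $n+4k+1$ and $n+4k$; since $v^k$ is injective on these towers, this $d_r$ is nonzero, which is the claim. I expect the main obstacle to be the bookkeeping in the previous paragraph: carefully identifying the $\cA(0)$-Adams spectral sequence with the Bockstein spectral sequence so that \cref{MM_appendix} may be applied, and then determining the image of $\Phi$ on the relevant $h_0$-towers, which rests on the explicit form of $\Ext_{\cA(1)}(C\eta)$ and the naturality of the change-of-algebra map. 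The remaining ingredients---the $\cA(0)$-module decomposition of $C\eta$ and the $v$-propagation---are routine once this is in place.
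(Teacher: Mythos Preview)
Your proposal is correct and takes essentially the same approach as the paper: reduce to the $\cA(0)$-Adams spectral sequence (equivalently, the Bockstein spectral sequence) to read off the $d_r$ from the key Bockstein class, then transport it along the comparison map $\Phi\colon\Ext_{\cA(1)}\to\Ext_{\cA(0)}$ coming from $\ko\to H\Z$, and propagate to all $k$ using $v$-equivariance of differentials together with the injectivity of the $v$-action on the $C\eta$ towers established in \cref{LES_ext_exm}. The only substantive difference is that the paper carries out explicitly the step you flag as the ``main obstacle'': it exhibits a concrete Yoneda extension representing the nonzero class in $\Ext_{\cA(1)}^{1,3}(C\eta,\Z/2\Z)$ and checks that forgetting the $\Sq^2$-action leaves it nonsplit, so $\Phi$ is nonzero (hence an isomorphism) on the relevant $h_0$-towers in filtration $\ge 1$; your sketch asserts this but does not verify it.
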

The existence of these $h_0$-towers in degrees $n+4k$ and $n+4k+1$ follows from the computation we did in
\cref{LES_ext_exm}.
\begin{proof}
Recall from \cref{LES_ext_exm} that the $h_0$-towers in $\Ext_{\cA(1)}(\Sigma^n C\eta)$ in topological degrees
$n+4k$ and $n+4k+4$ are linked by the action of $v\in\Ext_{\cA(1)}^{3,7}(\Z/2\Z, \Z/2\Z)$: all but finitely many
elements of the higher-degree tower are in the image of this $v$-action applied to the lower-degree tower. This
$v$-action commutes with differentials, so if we have proven the theorem for $k$ we can deduce the theorem for
$k+1$. Thus in the rest of the proof we assume $k= 0$.

As an $\cA(0)$-module, $C\eta\cong \Z/2\Z\oplus \Sigma^2\Z/2\Z$, so $\Ext_{\cA(0)}(C\eta, \Z/2\Z)$ consists of
$h_0$-towers in degrees $0$ and $2$. The map $\cA(1)\to\cA(0)$ induces a map $\phi\colon \Ext_{\cA(1)}(C\eta,
\Z/2\Z)\to \Ext_{\cA(0)}(C\eta, \Z/2\Z)$; thinking of elements of Ext as equivalence classes of extensions of
$\cA(1)$-modules, like we discussed in Section \ref{ss:ext}, $\phi$ takes an extension of $\cA(1)$-modules and forgets
the $\Sq^2$-action to obtain the same extension but of $\cA(0)$-modules. This model for $\phi$ lends itself to
computations: for example, in \cref{from_A1_to_A0}, we show that
\begin{equation}
	\phi^{1,3}\colon \Ext_{\cA(1)}^{1,3}(C\eta, \Z/2\Z)\longrightarrow \Ext_{\cA(0)}^{1,3}(C\eta, \Z/2\Z)
\end{equation}
is nonzero; since both of these Ext groups are isomorphic to $\Z/2\Z$, this means $\phi^{1,3}$ is an isomorphism.
From the Yoneda product description of the action by $h_0$~\eqref{Yoneda_product}, it is possible to show that
$\phi$ commutes with $h_0$-actions, so the calculation in \cref{from_A1_to_A0} implies that in degrees $s\ge 1$,
$\phi$ maps the $h_0$-tower in topological degree $2$ in $\Ext_{\cA(1)}(C\eta)$ isomorphically onto the $h_0$-tower
in topological degree $2$ in $\Ext_{\cA(0)}(C\eta)$.
\begin{figure}[h!]
\centering
\begin{subfigure}[c]{0.45\textwidth}
\begin{tikzpicture}[scale=0.6, every node/.style = {font=\small}]
\sqone(0, 2);
\begin{scope}[BrickRed]
	\foreach \x in {-4, 0} {
		\tikzpt{\x}{3}{}{};
	}
	\draw[thick, ->] (-3.5, 3) -- (-0.5, 3);
	\node[below] at (-4, -0.25) {$\Sigma^3 \Z/2\Z$};
\end{scope}
\node[below] at (0, -0.25) {$\uQ$};
\begin{scope}[MidnightBlue]
	\foreach \x in {0, 4} {
		\tikzpt{\x}{0}{}{rectangle, minimum size=3.5pt};
		\tikzpt{\x}{2}{}{rectangle, minimum size=3.5pt};
		\sqtwoR(\x, 0);
	}
	\draw[thick, ->] (1, 0) -- (3.5, 0);
	\draw[thick, ->] (1, 2) -- (3.5, 2);
	\node[below] at (4, -0.25) {$C\eta$};
\end{scope}
\end{tikzpicture}
\end{subfigure}
\begin{subfigure}[c]{0.45\textwidth}
\begin{tikzpicture}[scale=0.6, every node/.style = {font=\small}]
\sqone(0, 2);
\begin{scope}[BrickRed]
	\foreach \x in {-4, 0} {
		\tikzpt{\x}{3}{}{};
	}
	\draw[thick, ->] (-3.5, 3) -- (-0.5, 3);
	\node[below] at (-4, -0.25) {$\Sigma^3 \Z/2\Z$};
\end{scope}
\node[below] at (0, -0.25) {$\Z/2\Z\oplus \Sigma^2\cA(0)$};
\begin{scope}[MidnightBlue]
	\foreach \x in {0, 4} {
		\tikzpt{\x}{0}{}{rectangle, minimum size=3.5pt};
		\tikzpt{\x}{2}{}{rectangle, minimum size=3.5pt};
	}
	\draw[thick, ->] (0.5, 0) -- (3.5, 0);
	\draw[thick, ->] (0.5, 2) -- (3.5, 2);
	\node[below] at (4, -0.25) {$C\eta$};
\end{scope}
\end{tikzpicture}
\end{subfigure}
\caption{Left: an extension of $\cA(1)$-modules representing the nonzero element $x$ of $\Ext_{\cA(1)}^{1,3}(C\eta,
\Z/2\Z)\cong\Z/2\Z$. Right: if we only remember the action of $\cA(0)$, this extension is still nonsplit, so
$\phi(x)\ne 0$. This computation is part of the proof of \cref{MM_ceta}.}
\label{from_A1_to_A0}
\end{figure}
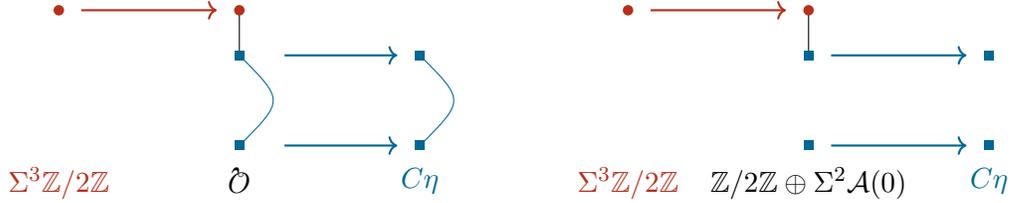
The same is true for the $h_0$-towers in topological degree $0$, which we
leave as an exercise for the reader.\footnote{Another way to compute $\phi$ in both degrees $0$ and $2$ is to
observe that $\Sigma^2 C\eta \cong \widetilde H^*(\CP^2;\Z/2\Z)$, so $\phi$ is the morphism of Adams spectral
sequences corresponding to the map $\psi\colon \ko_*(\CP^2)\to H_*(\CP^2)$, and $\psi$ can be computed with the
Atiyah-Hirzebruch spectral sequence.} Therefore there is a $d_r$ differential
from the $h_0$-tower in degree $n+1$ in $\Ext_{\cA(1)}(\textcolor{MidnightBlue}{M_1})$ to the $h_0$-tower in degree
$n$ in $\Ext_{\cA(1)}(\textcolor{BrickRed}{M_0})$ exactly when the same is true for Ext over $\cA(0)$, and the
latter is true when preimages of $x$ and $y$ in $\Z/2^r\Z$ cohomology are linked by a $\beta_r$ Bockstein and the
lift of $y$ is a key Bockstein class.
\end{proof}
See \cref{MM_ceta_picture} for a picture of the key idea in the proof of \cref{MM_ceta}.
\begin{figure}[h!]
\centering
\begin{subfigure}[c]{0.15\textwidth}
	\begin{tikzpicture}[scale=0.6, every node/.style = {font=\small}]
		\draw[thick, gray, dashed] (0, 2) -- (0, 3);
		\node[right, gray] at (0, 2.5) {$\beta_r$};
		\begin{scope}[BrickRed]
			\tikzpt{0}{0}{}{};
			\tikzpt{0}{2}{$x$}{};
			\sqtwoL(0, 0);
			\node[left] at (-1, 1) {$M_0$};
		\end{scope}
		\begin{scope}[MidnightBlue]
			\tikzpt{0}{3}{$y$}{rectangle, minimum size=3.5pt};
			\tikzpt{0}{5}{}{rectangle, minimum size=3.5pt};
			\sqtwoR(0, 3);
			\node[left] at (-1, 4) {$M_1$};
		\end{scope}
	\end{tikzpicture}
\end{subfigure}
\begin{subfigure}[c]{0.33\textwidth}
\begin{sseqdata}[name=MMA1, classes=fill, xrange={0}{5}, yrange={0}{4}, scale=0.6, Adams grading, >=stealth,
x label = {$\displaystyle{s\uparrow \atop t-s\rightarrow}$},
x label style = {font = \small, xshift = -15ex, yshift=5.5ex}
]
	\begin{scope}[BrickRed]
		\class(0, 0)\AdamsTower{}
		\class(2, 1)\AdamsTower{}
		\class(4, 2)\AdamsTower{}
	\end{scope}
	\begin{scope}[MidnightBlue, rectangle]
		\class(3, 0)\AdamsTower{}
		\class(5, 1)\AdamsTower{}
	\end{scope}
	\d[thick, gray]3(3, 0)
\end{sseqdata}
\printpage[name=MMA1, page=3]
\end{subfigure}
\begin{subfigure}[c]{0.15\textwidth}
	\begin{tikzpicture}[scale=0.6, every node/.style = {font=\small}]
		\draw[thick, gray, dashed] (0, 2) -- (0, 3);
		\node[right, gray] at (0, 2.5) {$\beta_r$};
		\begin{scope}[BrickRed]
			\tikzpt{0}{0}{}{};
			\tikzpt{0}{2}{$x$}{};
			\node[left] at (-1, 1) {$M_0$};
		\end{scope}
		\begin{scope}[MidnightBlue]
			\tikzpt{0}{3}{$y$}{rectangle, minimum size=3.5pt};
			\tikzpt{0}{5}{}{rectangle, minimum size=3.5pt};
			\node[left] at (-1, 4) {$M_1$};
		\end{scope}
	\end{tikzpicture}
\end{subfigure}
\begin{subfigure}[c]{0.33\textwidth}
\begin{sseqdata}[name=MMA0, classes=fill, xrange={0}{5}, yrange={0}{4}, scale=0.6, Adams grading, >=stealth,
x label = {$\displaystyle{s\uparrow \atop t-s\rightarrow}$},
x label style = {font = \small, xshift = -15ex, yshift=5.5ex}
]
	\begin{scope}[BrickRed]
		\class(0, 0)\AdamsTower{}
		\class(2, 0)\AdamsTower{}
	\end{scope}
	\begin{scope}[MidnightBlue, rectangle]
		\class(3, 0)\AdamsTower{}
		\class(5, 0)\AdamsTower{}
	\end{scope}
	\d[thick, gray]3(3, 0)
\end{sseqdata}
\printpage[name=MMA0, page=3]
\end{subfigure}
\caption{Part of the proof of \cref{MM_ceta}. The left-hand side is part of the module structure and Adams spectral
sequence over $\cA(1)$ for $\textcolor{BrickRed}{M_0}\oplus \textcolor{MidnightBlue}{M_1}\subset H^*(X;\Z/2\Z)$; the
right-hand side is the corresponding data over $\cA(0)$. The gray $d_r$ differential on the right can be deduced from the
existence of a $\beta_r$ Bockstein, and implies the gray $d_r$ differential on the left.}
\label{MM_ceta_picture}
\end{figure}
\begin{rem}
\Cref{MM_ceta} is quite narrow in scope, but simple to prove. We suggest a similar approach to other applications
of the May-Milgram theorem to twisted spin bordism questions: compute differentials between $h_0$-towers over
$\cA(0)$, where they correspond to Bocksteins in the cohomology of $X$, then pull them back to Ext over $\cA(1)$ to
learn about $\ko_*(X)$.
\end{rem}

%
%

\section{Some Smith homomorphisms}
\label{Smith_appendix}

Smith homomorphisms are certain maps between bordism groups in which both the dimension and the tangential
structure vary. Specialized to the cases of Spin-$\Mp(2, \Z)$ and Spin-$\GL^+(2, \Z)$ bordism, there are a few
Smith homomorphisms that are interesting and helpful, and the purpose of this appendix is to discuss them. In
Appendix \ref{Smith_general} we introduce Smith homomorphisms and generalities; in Appendix \ref{smith_Mp} we discuss a
sequence of Smith maps between $\Omega_k^\Spin \big(B\SL(2, \Z)\big)$ and $\Omega_k^{\Spin\text{-}\Mp(2, \Z)} (\pt)$; and in
Appendix \ref{smith_GL} we discuss a Smith map for Spin-$\GL^+(2, \Z)$ bordism. This last Smith map plays a key role in our
proof in Section \ref{s_7dorange} that $[\orangeseven] \in \Omega_7^{\Spin\text{-}\GL^+(2, \Z)} (\pt)$ is linearly
independent from $[\halfQseven]$.

\subsection{This charming man(ifold): generalities on the Smith(homomorphism)s}
\label{Smith_general}

In this subsection we review the general setup for Smith homomorphisms in twisted Spin bordism, so that we can
study examples involving Spin-$\Mp(2, \Z)$ and Spin-$\GL^+(2, \Z)$ bordism in the rest of this appendix.
See~\cite{COSY20, HKT20, DDKLPT22} for more detail with an eye towards physics applications.

Recall from \cref{twisted_defn} that for a space $X$ and vector bundle $V\to X$, an $(X, V)$-twisted Spin structure on a
vector bundle $E\to M$ is a map $f\colon M\to X$ and a Spin structure on $E\oplus f^\ast V$. Suppose $M$ is a
manifold and $TM$ has an $(X, V)$-twisted Spin structure. Choose another vector bundle $W\to X$ of rank $r$ and let
$N\subset M$ be a \term{smooth representative} for the Poincaré dual of $w_r(f^\ast W)$; that is, the image of the
mod $2$ fundamental class of $N$ in $H_\ast(M; \Z/2\Z)$ is Poincaré dual to $w_r(f^\ast W)\in H^\ast(M; \Z/2\Z)$.

The key fact underlying the Smith homomorphism is that in this situation, $N$ has a canonical $(X, V\oplus
W)$-twisted Spin structure, and that this construction factors through bordism classes to define a map
\begin{equation}
	S_W\colon \Omega_{n+V}^\Spin(X)\longrightarrow \Omega_{n-r+V\oplus W}^\Spin(X).
\end{equation}
Now the explanation: if $N$ is a smooth representative of the Poincaré dual of the top Stiefel-Whitney class of a
vector bundle $E\to M$, then the normal bundle $\nu_N$ of $N\inj M$ is isomorphic to $E|_N$. Using the isomorphism
$TN\oplus \nu_N \cong TM|_N$, the $(X, V)$-twisted Spin structure on $M$ induces a Spin structure on
\begin{equation}
	(TM\oplus f^\ast V)|_N \cong TN \oplus\nu_N\oplus f^\ast V|_N \cong TN \oplus E|_N \oplus f^\ast V|_N,
\end{equation}
and plugging in $E = f^\ast W$, we have found an $(X, V\oplus W)$-twisted Spin structure on $N$.

A theorem of~\cite{DDKLPT22} puts the Smith homomorphism into a long exact sequence, making calculating kernels
and images of Smith homomorphisms easier.
\begin{thm}[\cite{DDKLPT22}]
Let $p\colon S(V)\to X$ be the sphere bundle of the vector bundle $V\to X$. Then there is a long exact sequence
\begin{equation}
\label{smith_LES}
	\dotsb \longrightarrow \Omega_{k+ p^*W}^\Spin \big(S(V)\big) \overset{p_\ast}{\longrightarrow}
	\Omega_{k+V}^\Spin(X)\overset{S_W}{\longrightarrow} \Omega_{k-r+V\oplus W}^\Spin(X)\longrightarrow \Omega_{k-1
	+p^*W}^\Spin \big(S(V)\big) \longrightarrow\dotsb,
\end{equation}
\end{thm}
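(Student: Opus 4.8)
The plan is to exhibit the Smith homomorphism $S_W$ as the map on homotopy groups induced by a Thom collapse and then read off the long exact sequence from the Puppe sequence of the associated cofiber sequence; the sphere bundle that enters is the sphere bundle of the bundle $W$ defining $S_W$. First I would pass to Thom spectra via shearing (\cref{shearing_PT}): for any vector bundle $\xi\to X$ there is a natural isomorphism $\Omega_{\ast+\xi}^\Spin(X)\cong\pi_\ast\big(\MTSpin\wedge X^{\xi-\mathrm{rank}\,\xi}\big)$, and likewise with $X$ replaced by the total space of the sphere bundle $p\colon S(W)\to X$ and $\xi$ by $p^\ast\xi$. Under this dictionary the three groups in the asserted sequence become $\pi_k$ of $\MTSpin\wedge S(W)^{p^\ast V-\mathrm{rank}\,V}$, of $\MTSpin\wedge X^{V-\mathrm{rank}\,V}$, and of $\MTSpin\wedge X^{V\oplus W-\mathrm{rank}\,V}$; the last rewrites, using $\mathrm{rank}(V\oplus W)=\mathrm{rank}\,V+r$ and the desuspension identity $X^{V\oplus W-\mathrm{rank}\,V}\simeq\Sigma^{r}X^{V\oplus W-\mathrm{rank}(V\oplus W)}$, as $\pi_{k-r}\big(\MTSpin\wedge X^{V\oplus W-\mathrm{rank}(V\oplus W)}\big)=\Omega_{k-r+V\oplus W}^\Spin(X)$.

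Next I would construct the cofiber sequence. The projection $D(W)\to X$ of the disk bundle is a homotopy equivalence via the zero section, so the Thom space $X^{W}=D(W)/S(W)$ sits in a cofiber sequence of based spaces $S(W)_+\xrightarrow{\,p_+\,}X_+\xrightarrow{\;c\;}X^{W}$, with $c$ the collapse. Thomifying this sequence over $X$ by the virtual bundle $V-\mathrm{rank}\,V$ — an exact operation on $X$-spectra, using that Thomifying $X_+$ by $\xi$ gives $X^{\xi}$, Thomifying $X^{W}$ by $\xi$ gives $X^{W\oplus\xi}$, and Thomifying $S(W)_+$ by $\xi$ gives $S(W)^{p^\ast\xi}$ — yields a cofiber sequence of spectra
\[
S(W)^{p^\ast V-\mathrm{rank}\,V}\longrightarrow X^{V-\mathrm{rank}\,V}\xrightarrow{\;\tilde c\;}X^{V\oplus W-\mathrm{rank}\,V}.
\]
Smashing with $\MTSpin$, taking the long exact sequence in homotopy groups, and translating back into twisted Spin bordism notation produces a long exact sequence of exactly the displayed shape, with sphere-bundle term $\Omega_{k+p^\ast V}^\Spin\big(S(W)\big)$ and the trailing dots on either side supplied by the Puppe sequence.

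It then remains to identify the two named maps. The leftmost map is induced by $p_+$, and on bordism classes it is manifestly the pushforward $p_\ast$: a closed manifold mapping to $S(W)$ with an $(S(W),p^\ast V)$-twisted Spin structure goes to the same manifold mapping to $X$ via $p$, carrying its induced $(X,V)$-twisted Spin structure. The middle map $\tilde c$, smashed with $\MTSpin$ and read on $\pi_\ast$, must be matched with $S_W$. This is the Pontrjagin--Thom/transversality description of the Gysin (Umkehr) map as a zero-set construction: given $[M\xrightarrow{f}X]$ one pulls the zero section of $W$ back along $f$, perturbs it to be transverse, and lets $N\subset M$ be its preimage; then $N$ represents the mod $2$ Euler class $w_r(f^\ast W)$, its normal bundle in $M$ is $f^\ast W|_N$, and the splitting $TM|_N\cong TN\oplus f^\ast W|_N$ equips $N$ with precisely the $(X,V\oplus W)$-twisted Spin structure used to define $S_W$ earlier in this appendix. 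Since Spin structures are stable and the spaces of Riemannian metrics and of transverse perturbations are contractible, this transversality can be performed compatibly with all of the tangential-structure data.

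I expect this last identification — verifying that the abstract collapse map $\tilde c$ agrees on the nose with the geometrically defined Smith construction, with the twisted tangential structures tracked through the argument — to be the main obstacle. The reduction to Thom spectra via shearing and the long exact sequence of a cofiber sequence are formal, and the only other point needing care is setting up the Thomification over $X$ in the middle step precisely (e.g.\ in parametrized spectra over $X$), but none of these should present genuine difficulty.
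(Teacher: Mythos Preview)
The paper does not give its own proof of this theorem; it is quoted from \cite{DDKLPT22}. Your argument---Thom cofiber sequence $S(W)_+\to X_+\to X^W$, twist fiberwise by $V$, smash with $\MTSpin$, take the Puppe long exact sequence, and identify the collapse map with the geometric Smith construction via transversality---is the standard one and is correct.

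One point you should flag explicitly. The long exact sequence you derive has third term $\Omega_{k+p^\ast V}^{\Spin}\big(S(W)\big)$: the sphere bundle is that of the bundle $W$ you are Smithing along, carrying the pullback of the original twist $V$. The theorem as printed has the roles of $V$ and $W$ swapped in that term, writing $\Omega_{k+p^\ast W}^{\Spin}\big(S(V)\big)$. Your version is the correct one, and it is the version the paper actually uses in its applications: in \S\ref{smith_Mp} the sphere bundle computed is that of the rotation bundle $V$ being Smithed along, and in \S\ref{smith_GL} the sphere bundle computed is that of $2\Det(V)$, again the bundle defining the Smith map. So the discrepancy is a typo in the displayed statement, not a gap in your argument.

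The only place where your sketch is genuinely thin is exactly where you say it is: matching the abstract collapse map $\tilde c$ on homotopy with the manifold-level recipe ``take a transverse zero set of a section of $f^\ast W$ and read off the induced $(X,V\oplus W)$-twisted Spin structure.'' This is standard Pontrjagin--Thom, but if you want a self-contained write-up you should spell out that a generic section of $f^\ast W$ has zero locus Poincar\'e dual to $w_r(f^\ast W)$ with normal bundle $f^\ast W|_N$, and that the collapse of $M$ onto a tubular neighborhood of $N$ models $\tilde c$ after passing to Thom spectra.
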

Often, $S(V)$ is something like a relatively simple
classifying space and the entire long exact sequence can be worked out using techniques similar to the ones in this
paper. We discuss a few examples below.

\subsection{Smith homomorphisms for Spin-$\Mp(2, \Z)$ bordism}
\label{smith_Mp}
Recall from Section \ref{mp_spin} that Spin-$\Mp(2, \Z)$ structures are equivalent in a natural way to $\big( B\SL(2, \Z), V \big)$-twisted
Spin structures, where $V \to B\SL(2, \Z)$ is induced from the standard two-dimensional real representation of
$\SL(2, \Z)$.\footnote{Strictly speaking, we only proved this after localizing at $2$ and at $3$, but the same
argument building on \cref{w2_twisted_cor} holds before localizing.} Therefore there are codimension-$2$ Smith
homomorphisms exchanging Spin bordism of $B\SL(2, \Z)$ and Spin-$\Mp(2, \Z)$ bordism.

To compute these Smith homomorphisms, we would like to identify the third term in the long exact
sequence~\eqref{smith_LES}. The sphere bundle of $V\to \SL(2, \Z)$ is homotopy equivalent to $\SL(2, \R)/\SL(2,
\Z)$, which is a little complicated to describe, so instead we will work at the primes $2$ and $3$ separately,
where it is easier.

For $p = 2$, this Smith homomorphism pulls back to a Smith homomorphism exchanging Spin-$\Z/8\Z$ bordism and the Spin
bordism of $B\Z/4\Z$. The sphere bundle of $V\to B\Z/4\Z$ is homotopy equivalent to $S^1$~\cite{DDKLPT22}, and every
vector bundle trivializes when pulled back to its sphere bundle, so there are long exact sequences
\begin{subequations}
\label{smith_Z8_LES}
\begin{gather}
	\dotsb\longrightarrow \Omega_k^\Spin(S^1) \longrightarrow \Omega_k^{\Spin\text{-}\Z/8\Z} (\pt)
	\overset{S_V}{\longrightarrow} \Omega_{k-2}^\Spin(B\Z/4\Z)\longrightarrow \Omega_{k-1}^\Spin(S^1)
	\longrightarrow\dotsb\\
	\dotsb\longrightarrow \Omega_k^\Spin(S^1) \longrightarrow \Omega_k^{\Spin}(B\Z/4\Z)
	\overset{S_V}{\longrightarrow} \Omega_{k-2}^{\Spin\text{-}\Z/8\Z} (\pt) \longrightarrow \Omega_{k-1}^\Spin(S^1)
	\longrightarrow\dotsb
\end{gather}
\end{subequations}
\begin{lem}
\label{S1_split}
For any generalized homology theory $E$, $E_k(S^1)\cong E_k(\pt)\oplus E_{k-1}(\pt)$.
\end{lem}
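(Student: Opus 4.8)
The plan is to reduce the statement to the suspension axiom for the generalized homology theory $E$. First I would use that $S^1$ is a based space whose basepoint inclusion $i\colon \pt\hookrightarrow S^1$ admits a retraction $r\colon S^1\to\pt$ with $r\circ i=\id$. Writing $E_\ast(X)\coloneqq \widetilde E_\ast(X_+)$, the cofiber sequence $\pt_+\to S^1_+\to S^1$, in which the quotient $S^1_+/\pt_+$ is $S^1$ regarded as a pointed space, induces a long exact sequence in reduced $E$-homology. Functoriality applied to $r$ shows the map $\widetilde E_\ast(\pt_+)\to \widetilde E_\ast(S^1_+)$ is split-injective, so the connecting maps vanish and the long exact sequence breaks into short split exact sequences. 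This gives a natural isomorphism
\begin{equation}
	E_k(S^1)\cong E_k(\pt)\oplus \widetilde E_k(S^1).
\end{equation}

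Second, I would identify the reduced term. Since $S^1$ is homotopy equivalent to the reduced suspension $\Sigma S^0$, the suspension axiom provides a natural isomorphism $\widetilde E_k(S^1)\cong \widetilde E_{k-1}(S^0)$. Finally $S^0=\pt_+$, and reduced $E$-homology of a space with a disjoint basepoint recovers unreduced $E$-homology, so $\widetilde E_{k-1}(S^0)=E_{k-1}(\pt)$. Chaining the three identifications yields $E_k(S^1)\cong E_k(\pt)\oplus E_{k-1}(\pt)$, and each step is natural in $E$, so the splitting is as well.

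There is no real obstacle here: this is a standard consequence of the Eilenberg--Steenrod-type axioms, and the only thing needing care is the bookkeeping between reduced and unreduced theories together with the verification that the retraction splitting is compatible with the long exact sequence, both of which are routine. As an alternative one could run the Atiyah--Hirzebruch spectral sequence $E^2_{p,q}=H_p(S^1;E_q(\pt))\Rightarrow E_{p+q}(S^1)$: because $H_\ast(S^1;\Z)$ is free and concentrated in degrees $0$ and $1$, the $E^2$-page is concentrated in columns $p=0,1$, there is no room for differentials, and the resulting extension problem is split by the same retraction argument. I would present the suspension argument as the main proof.
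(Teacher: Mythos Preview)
Your proof is correct. Your primary argument via the cofiber sequence and the suspension axiom is more elementary than the paper's sketch, which phrases the result as the collapse of the Atiyah--Hirzebruch spectral sequence for $E_*(S^1)$ together with the absence of hidden extensions, using the maps $\pt\to S^1\to\pt$ to control both. That is precisely the alternative you outline at the end, so you have in fact covered the paper's approach as well. The suspension argument has the advantage of avoiding any spectral sequence machinery and working directly from the axioms; the AHSS phrasing has the (minor) advantage of fitting the surrounding narrative of the paper, where spectral sequences are the main computational tool.
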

\begin{proof}[Proof sketch]
This is equivalent to the collapse of the Atiyah-Hirzebruch spectral sequence computing the $E$-homology of $S^1$
and the absence of hidden extensions. One can show this by considering the maps $\pt\to S^1\to\pt$ and their effect
on differentials and extension questions.
\end{proof}
Using this lemma and the calculations of $\Omega_*^\Spin(B\Z/4\Z)$ and $\Omega_*^{\Spin\text{-}\Z/8\Z}(\pt)$ we
made in \cref{tab:SLprime2data} and Section \ref{ss:spin_Z8}, we can make the long exact sequences~\eqref{smith_Z8_LES}
explicit. For example:
\begin{prop}
The Smith homomorphism $S_V\colon\Omega_5^{\Spin\text{-}\Z/8\Z}(\pt)\to\Omega_3^\Spin(B\Z/4\Z)$, which is a map
$(\Z/32\Z) \oplus (\Z/2\Z) \to (\Z/8\Z) \oplus (\Z/2\Z)$, is surjective, sending $(x, y)\mapsto (x\bmod 8, y)$.
\end{prop}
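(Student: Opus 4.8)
The plan is to read the statement off the first long exact sequence in~\eqref{smith_Z8_LES}, specialized to $k=5$, together with the bordism groups we have already computed. Concretely, this sequence reads
\begin{equation}
\label{smith_Z8_deg5}
\Omega_5^\Spin(S^1) \longrightarrow \Omega_5^{\Spin\text{-}\Z/8\Z}(\pt) \overset{S_V}{\longrightarrow} \Omega_3^\Spin(B\Z/4\Z) \longrightarrow \Omega_4^\Spin(S^1) \longrightarrow \Omega_4^{\Spin\text{-}\Z/8\Z}(\pt).
\end{equation}
First I would use \cref{S1_split} to identify $\Omega_5^\Spin(S^1)\cong\Omega_5^\Spin(\pt)\oplus\Omega_4^\Spin(\pt)\cong\Z$ and $\Omega_4^\Spin(S^1)\cong\Omega_4^\Spin(\pt)\oplus\Omega_3^\Spin(\pt)\cong\Z$; we also already know $\Omega_4^{\Spin\text{-}\Z/8\Z}(\pt)\cong\Z$, $\Omega_5^{\Spin\text{-}\Z/8\Z}(\pt)\cong(\Z/32\Z)\oplus(\Z/2\Z)$, and $\Omega_3^\Spin(B\Z/4\Z)\cong(\Z/8\Z)\oplus(\Z/2\Z)$.

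Next I would extract surjectivity and the order of the kernel. The map $\Omega_3^\Spin(B\Z/4\Z)\to\Omega_4^\Spin(S^1)\cong\Z$ has torsion source and torsion-free target, hence is zero; by exactness of~\eqref{smith_Z8_deg5} at $\Omega_3^\Spin(B\Z/4\Z)$ this forces $S_V$ to be surjective. Counting orders then gives $\lvert\ker S_V\rvert = 64/16 = 4$. On the other hand, exactness at $\Omega_5^{\Spin\text{-}\Z/8\Z}(\pt)$ shows $\ker S_V$ equals the image of $\Omega_5^\Spin(S^1)\cong\Z$, hence is cyclic. So $\ker S_V$ is a cyclic subgroup of order $4$ of $(\Z/32\Z)\oplus(\Z/2\Z)$, and there are exactly two candidates: writing $e_1$ for a generator of the $\Z/32\Z$ summand and $e_2$ for a generator of the $\Z/2\Z$ summand, the cyclic order-$4$ subgroups are $\langle 8e_1\rangle$ and $\langle 8e_1+e_2\rangle$. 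The quotient of $(\Z/32\Z)\oplus(\Z/2\Z)$ by the latter is cyclic of order $16$ (the class of $e_1$ has order $16$ and the class of $e_2$ lies in the subgroup generated by that of $e_1$), so it cannot be isomorphic to $\mathrm{im}(S_V)=(\Z/8\Z)\oplus(\Z/2\Z)$. Hence $\ker S_V=\langle 8e_1\rangle$, and $S_V$ is the quotient map $(\Z/32\Z)\oplus(\Z/2\Z)\to (\Z/32\Z)/\langle 8\rangle\oplus(\Z/2\Z)=(\Z/8\Z)\oplus(\Z/2\Z)$, i.e.\ $(x,y)\mapsto(x\bmod 8,y)$.

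The remaining task is to confirm that this quotient map is the one asserted with respect to our chosen generators, i.e.\ that $S_V$ sends the $\Z/32\Z$-generator $[L^5_4(1,1,1)]$ to the $\Z/8\Z$-generator $[L^3_4]$ of $\Omega_3^\Spin(B\Z/4\Z)$, and the $\Z/2\Z$-generator $[\widetilde L^5_4(1,1,1)]+9[L^5_4(1,1,1)]$ to the $\Z/2\Z$-generator. The first is immediate from the geometric description of the Smith homomorphism in \cref{Smith_general}: $L^3_4\subset L^5_4$ is a smooth representative of the Poincaré dual of $w_2$ of the $2$-plane bundle associated to the defining $\Z/4\Z$-bundle, carrying the boundary Spin-$\Z/8\Z$ structure. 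The second is a parallel check, but one must track exactly which of the two Spin-$\Z/8\Z$ structures is induced on $L^3_4$; if needed this can be settled numerically using the $\eta$-invariant formulas of \cref{subsec:exteta}. I expect this bookkeeping of Spin-$\Z/8\Z$ structures under the Smith map to be the only delicate point — the surjectivity and the identification of the kernel are forced by the long exact sequence and elementary abelian-group theory.
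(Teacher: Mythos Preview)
Your proof is correct and uses the same long exact sequence argument as the paper: the map following $S_V$ in~\eqref{smith_Z8_LES} goes from the torsion group $\Omega_3^\Spin(B\Z/4\Z)$ to the free group $\Omega_4^\Spin(S^1)\cong\Z$, hence vanishes, forcing $S_V$ to be surjective. This is exactly the paper's argument.

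You go further than the paper does. The paper's proof is content to establish surjectivity and then asserts that ``its value on generators follows, possibly after an automorphism of $(\Z/8\Z)\oplus(\Z/2\Z)$'' --- i.e.\ the paper only claims the explicit form $(x,y)\mapsto(x\bmod 8,y)$ up to an automorphism of the target. Your additional step, identifying the kernel as $\langle 8e_1\rangle$ by combining cyclicity (it is the image of $\Z$) with the observation that the other cyclic order-$4$ subgroup $\langle 8e_1+e_2\rangle$ has quotient $\Z/16\Z$ rather than $(\Z/8\Z)\oplus(\Z/2\Z)$, is correct and sharpens the conclusion. Your closing remarks about verifying the map on explicit generators via the geometric Poincar\'e-dual description are also reasonable, though the paper does not pursue this and your caveat about tracking Spin-$\Z/8\Z$ structures is well placed.
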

\begin{proof}
Once we know this map is surjective, its value on generators follows, possibly after an automorphism of
$(\Z/4\Z) \oplus (\Z/2\Z)$. The next map after our Smith homomorphism $S_V$ in the long exact sequence is a map
$\Omega_3^\Spin(B\Z/4\Z)\to\Omega_4^\Spin(S^1)$; the domain of this map is torsion, as we computed in
Section \ref{ss:spin_Z8}, and the codomain is free, so this map is $0$, so by exactness $S_V$ is surjective.
\end{proof}
One could alternatively prove this by computing Poincaré duals of characteristic classes for generators of
$\Omega_5^{\Spin\text{-}\Z/8\Z}(\pt)$; we find it helpful to have both methods available.

For $p = 3$, the argument will have a similar feel: the sphere bundle of $V\to B\Z/3\Z$ is again homotopy
equivalent to $S^1$~\cite{DDKLPT22}, and $3$-locally, both $\Omega_*^\Spin \big(B\SL(2, \Z)\big)$ and
$\Omega_*^{\Spin\text{-}\Mp(2, \Z)} (\pt)$ are equivalent to $\Omega_*^\SSO(B\Z/3\Z)$, which we proved in
\cref{sl2_at_3_lem,mp2_at_3}, respectively. Therefore our Smith long exact sequence has the form
\begin{equation}
	\dotsb\longrightarrow \Omega_k^\SSO(S^1)\longrightarrow \Omega_k^\SSO(B\Z/3\Z)\overset{S_V}{\longrightarrow}
	\Omega_{k-2}^\SSO(B\Z/3\Z)\longrightarrow \Omega_{k-1}^\SSO(S^1)\longrightarrow\dotsb
\end{equation}
\Cref{S1_split} and~\eqref{SO_BZ3} allow us to populate this long exact sequence. We learn, for example, that the
Smith homomorphism $\Omega_3^\SSO(B\Z/3\Z)\to\Omega_1^\SSO(B\Z/3\Z)$ is an isomorphism $\Z/3\Z\to\Z/3\Z$, because
the two terms surrounding this map in the long exact sequence are $\Omega_k^\SSO(S^1)$ for $k = 2,3$, which both
vanish.  In a similar way one can show that $\tOmega_5^\SSO(B\Z/3\Z)\to\tOmega_3^\SSO(B\Z/3\Z)$ is a surjective map
$\Z/9\Z\to\Z/3\Z$, because the next map in the long exact sequence is from a torsion Abelian group to a free one,
and hence vanishes.

\subsection{Smith homomorphisms for Spin-$\GL^+(2, \Z)$ bordism}
\label{smith_GL}

As we noted above, the possible Smith homomorphisms out of $\Omega_k^{\Spin\text{-}\GL^+(2, \Z)}(\pt)$ are given by the
isomorphism classes of vector bundles over $B\GL(2, \Z)$, and this gives us a lot of options, even in codimension
$1$ and $2$. We focus on a particular example which we need in Section \ref{s_7dorange}: a codimension-$2$ map exchanging
the $\Pin^+$ and $\Pin^-$ covers of $\GL(2, \Z)$.

Let $\GL^-(2, \Z)$ denote the $\Pin^-$ cover of $\GL(2, \Z)$ and $\Spin\text{-}\GL^-(2, \Z)\coloneqq
\Spin\times_{\set{\pm 1}}\GL^-(2, \Z)$. Our arguments above studying $\Spin\text{-}\GL^+(2,
\Z)$ bordism also apply in this case, allowing one to show that:
\begin{itemize}
	\item The map $\Omega_k^{\Spin\text{-}\GL^-(2, \Z)} (\pt) \to \Omega_k^\SSO(BD_{12})$ is a $p$-local
	isomorphism for any odd prime $p$. The proof is the same as the argument given at the beginning of
	Section \ref{odd_primary_glplus}.
	\item The $\Pin^-$ cover of $D_8$ is the quaternion group $Q_{16}$, implying the central extension $0\to\Z/2\to
	Q_{16}\to D_8\to 0$ is classified by $w + x^2\in H^2(BD_8;\Z/2\Z)$. The inclusion $D_8 \inj \GL(2, \Z)$ lifts
	to an inclusion $Q_{16} \inj \GL^-(2, \Z)$ inducing a $2$-local isomorphism $\Omega_k^{\Spin\text{-}Q_{16}}
	(\pt) \to \Omega_k^{\Spin\text{-}\GL^-(2, \Z)} (\pt)$.
	\item A $\Spin\text{-}Q_{16}$ structure on a vector bundle $V\to M$ is equivalent data to a $\big(BD_8,
	V \oplus \Det(V) \big)$-twisted Spin structure, i.e.\ a principal $D_8$-bundle $P\to M$ and data of an identification
	$w_2(V) = w(P) + x(P)^2$. The proof is similar to that of \cref{spin_D16_shear} with $V\oplus\Det(V)$ in place
	of $V\oplus 3\Det(V)$ and $w+x^2$ in place of $w$.
\end{itemize}

We consider the Smith homomorphism given by the vector bundle $2\Det(V)$, which lowers the dimension by $2$. This
goes from $\big(B\GL(2, \Z), V\oplus \Det(V) \big)$-twisted Spin bordism to $\big(B\GL(2, \Z), V\oplus 3\Det(V) \big)$-twisted Spin
bordism, i.e.\ Spin-$\GL^-(2, \Z)$ bordism to Spin-$\GL^+(2, \Z)$ bordism. We can also use it to go from $\big(B\GL(2,
\Z), V\oplus 3\Det(V)\big)$-twisted Spin bordism to $\big(B\GL(2, \Z), V\oplus 5\Det(V)\big)$-twisted Spin bordism, but since
four copies of any vector bundle is Spin, this can be identified with $(B\GL(2, \Z), V\oplus\Det(V))$-twisted Spin
bordism, and the Smith homomorphism goes from Spin-$\GL^+(2, \Z)$ bordism to Spin-$\GL^-(2, \Z)$
bordism.\footnote{Kirby-Taylor~\cite[Lemma 7]{KT90} study a similar pair of Smith homomorphisms which
exchange Pin$^-$ and Pin$^+$ bordism.}

Like in the previous section, it is easier to work at $p = 2$ and $p = 3$ separately. After $3$-localizing, this
Smith homomorphism is a map $\Omega_k^\SSO(BD_{12})\to\Omega_{k-2}^\SSO(BD_{12})$, but in \cref{the_thm} we saw that
at least for all $k$ within the range we care about, at least one of $\tOmega_k^\SSO(BD_{12})$ and
$\tOmega_{k-2}^\SSO(BD_{12})$ vanishes, so this Smith homomorphism is trivial. Using unreduced bordism does not
make the map much more interesting.

At $p = 2$, though, this Smith homomorphism is more useful. The whole story with $\GL^\pm(2, \Z)$ pulls back: the
Smith homomorphism exchanges Spin-$D_{16}$ and Spin-$Q_{16}$ bordism.  However, the third term in these Smith long
exact sequences is a little complicated: twisted Spin bordism groups of $B(\Z \ltimes \Z/4\Z)$, where $\Z$ acts on
$\Z/4\Z$ by $\alpha\cdot \beta = (-1)^\alpha\beta$, and twisted by the pullback of $V\oplus\Det(V)$. Proving this
amounts to calculating the sphere bundle of $2\Det(V)\to BD_{8}$, which we found a fun exercise similar to some of
the sphere bundle calculations in~\cite{DDKLPT22}, and to~\cite[Lemma 7]{KT90}.

These twisted Spin bordism groups are not so hard to calculate using similar techniques to the ones
in this paper, and we invite the reader to try some of these computations.\footnote{Alternately, one could apply
the Smith homomorphism for just $\Det(V)$, and do it twice, e.g.\ getting one long exact sequence for $V\oplus
\Det(V)$ to $V\oplus 2\Det(V)$ and a second one for $V\oplus 2\Det(V)$ to $V\oplus 3\Det(V)$. In these cases the
third term in the long exact sequence is $\Omega_k^\Spin(B\Z/4\Z)$.}

We use the map $S_{x^2}\colon \Omega_7^{\Spin\text{-}D_{16}} (\pt) \to\Omega_5^{\Spin\text{-}Q_{16}} (\pt)$ in
Section \ref{s_7dorange}, so we take the opportunity here to investigate $\Omega_5^{\Spin\text{-}Q_{16}} (\pt)$.
Spin-$Q_{16}$ bordism has also been studied in dimension $4$ by Pedrotti~\cite[Theorem 9.0.14]{Ped17} using other
methods; he proves that $\Omega_4^{\Spin\text{-}Q_{16}}(\pt) \cong\Z \oplus (\Z/2\Z)$, which we will rediscover
during the proof of \cref{after_Smith}.

The group $Q_{16}$ acts freely on $S^3$, where $Q_{16}$ is considered as a subgroup of $\text{SU} (2)$ acting on the unit
sphere in $\C^2$. The quotient $S^3/Q_{16}$ is called a \term{prism manifold}, and the quotient $\RP^3\to
S^3/Q_{16}$ is a principal $D_8$-bundle.
\begin{lem}
\label{prism_is_pinm_d}
$S^3/Q_{16}$ with the $D_8$-bundle $P\coloneqq \RP^3\to S^3/Q_{16}$ admits a Spin-$Q_{16}$ structure, and
$\int_{S^3/Q_{16}} w(P)y(P) = 1$.
\end{lem}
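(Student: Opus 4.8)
The goal is to establish \cref{prism_is_pinm_d}: that the prism manifold $S^3/Q_{16}$, equipped with the principal $D_8$-bundle $P\colon \RP^3\to S^3/Q_{16}$, carries a Spin-$Q_{16}$ structure, and that $\int_{S^3/Q_{16}}w(P)y(P)=1$.

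The plan is to run the same stable-splitting argument we used for $\halfQseven$, $\ninedimgen$, and $\orangeseven$ in \cref{halfQseven_spD16,9d_is_spD16,seven_orange_is_spin_d16}, but now in the simplest possible setting. First I would recall from \cref{lens_rotate}-style reasoning (the equivariant version of~\eqref{Sn_stable_splitting}) that since $Q_{16}\subset \mathrm{SU}(2)$ acts on $S^3\subset\C^2$ preserving the outward unit normal, there is an isomorphism of vector bundles $T(S^3/Q_{16})\oplus\underline\R\cong E$, where $E\to S^3/Q_{16}$ is the rank-two complex vector bundle associated to the principal $Q_{16}$-bundle $S^3\to S^3/Q_{16}$ and the standard representation $Q_{16}\to\mathrm{SU}(2)\subset\mathrm U(2)$. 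Applying the Whitney sum formula, $w_1(T(S^3/Q_{16}))=w_1(E)$ and $w_2(T(S^3/Q_{16}))=w_2(E)$. Since $S^3/Q_{16}$ is an oriented $3$-manifold, $w_1=0$; and $w_2(E)$ can be computed by tracking how $Q_{16}$ acts on $\C^2\cong\R^4$. Because $Q_{16}$ is by construction the $\Pin^-$ cover of $D_8$ with respect to the standard representation $V\colon D_8\to\mathrm O(2)$ — equivalently the Spin cover of $V\oplus\Det(V)\colon D_8\to\SSO(3)$, as noted in \cref{s_7dorange} — the relation $w_2(V)=w(P)$ and $w_1(V)^2=x(P)^2$ translate into $w_2(E)=w(P)+x(P)^2$ in $H^2(S^3/Q_{16};\Z/2\Z)$ (this is exactly the condition defining a Spin-$Q_{16}$ structure recalled in \cref{smith_GL}: a $\big(BD_8, V\oplus\Det(V)\big)$-twisted Spin structure requires $w_2(TM)=w(P)+x(P)^2$). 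This gives the first half of the lemma.

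For the characteristic-number computation, I would first determine $H^*(S^3/Q_{16};\Z/2\Z)$. One way is to pull back along the map $S^3/Q_{16}\to BQ_{16}$ classifying the universal cover and use \cref{Q16_coh}: $H^*(BQ_{16};\Z/2\Z)\cong\Z/2\Z[\hat x,\hat y,p]/(\hat x\hat y+\hat y^2,\hat x^3)$ with $|\hat x|=|\hat y|=1$, $|p|=4$. Through dimension $3$ the classes are $1$ (degree $0$); $\hat x,\hat y$ (degree $1$); $\hat x^2,\hat x\hat y\ (=\hat y^2)$ (degree $2$); and $\hat x^2\hat y$ (degree $3$), and one checks these survive to $H^*(S^3/Q_{16};\Z/2\Z)$ since the next cohomology class of $BQ_{16}$ lives in degree $4$. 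The classes $w(P)$ and $y(P)$ are the pullbacks of $w_2(V)$ and $y$, hence $w(P)=\hat x^2$ (using $w_2(V)=w_1(V)^2$ in $BQ_{16}$) and $y(P)=\hat y$, so $w(P)y(P)=\hat x^2\hat y$, the generator of $H^3$. It remains to show $\int_{S^3/Q_{16}}\hat x^2\hat y = 1$, i.e.\ that this class is the nonzero element of $H^3(S^3/Q_{16};\Z/2\Z)\cong\Z/2\Z$. I would verify this by a Poincaré-duality / intersection argument as in \cref{arcana_suffice,7d_lens_integral}: represent $\hat x$ and $\hat y$ by codimension-one submanifolds pulled back from $BD_8$ (equivalently, level sets of the coordinates defining the $Q_{16}$-action on $S^3$) and count the transverse intersection points upstairs in $S^3$, then divide by $|Q_{16}|=16$.

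The main obstacle is the bookkeeping in the intersection-theoretic calculation $\int_{S^3/Q_{16}}\hat x^2\hat y=1$: one must carefully choose submanifold representatives of $\hat x$ and $\hat y$ that are invariant under the $Q_{16}$-action on $S^3$ (recalling that $\hat x$ detects a reflection and $\hat y$ detects a generating rotation, in the image of $H^1(BD_8)$), verify transversality of the images in the quotient, and correctly account for the order of the orbits of the intersection points. An alternative that sidesteps some of this would be to instead compute the linking form or use the known ring structure of $H^*(S^3/\Gamma;\Z/2\Z)$ for prism manifolds (e.g.\ via the Gysin sequence of $S^3\to S^3/Q_{16}$ combined with \cref{Q16_coh}), but either route reduces to confirming that $\hat x^2\hat y$ is nonzero in top degree, which is forced once one knows $w(P)y(P)\ne 0$ — and a cheap way to see that is that if it vanished, then $w_2(T(S^3/Q_{16}))=w(P)+x(P)^2 = w(P) + y(P)^2 = w(P)(1+\cdots)$ would impose constraints incompatible with $S^3/Q_{16}$ being parallelizable, giving a consistency check. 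I expect the proof to be short modulo this computation.
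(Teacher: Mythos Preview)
Your proposal is correct and takes essentially the same approach as the paper: identify $w(P)=\hat x^2$ via the extension class $w+x^2$ vanishing on $BQ_{16}$, then integrate $\hat x^2\hat y$. The paper shortcuts your ``main obstacle'' by citing \cref{prism_coh_lem} (Tomoda--Zvengrowski), which gives $H^*(S^3/Q_{16};\Z/2\Z)\cong H^*(BQ_{16};\Z/2\Z)/(p)$ directly---but note that your own observation that the classifying map $S^3/Q_{16}\to BQ_{16}$ is an isomorphism on $H^{\le 3}$ (since $BQ_{16}$ is obtained from $S^3/Q_{16}$ by attaching cells of dimension $\ge 4$) already forces $\hat x^2\hat y$ to be the nonzero top-degree class, so no intersection theory is needed.
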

Before we prove this, we need to better understand the mod 2 cohomology of $S^3/Q_{16}$. Recall from \cref{Q16_coh}
that $H^*(BQ_{16};\Z/2\Z)\cong\Z/2\Z[\hat x, \hat y, p]/(\hat x\hat y + \hat y^2, \hat x^3)$, where $\abs{\hat x} =
\abs{\hat y} = 1$ and $\abs p = 4$, and $\hat x$, resp.\ $\hat y$ are the pullbacks of $x,y\in H^1(BD_8;\Z/2\Z)$
under $BQ_{16}\to BD_8$.
\begin{lem}[{Tomoda-Zvengrowski~\cite[Section 2 and Theorem 2.2(1)]{TZ08}}]
\label{prism_coh_lem}
There is an isomorphism $H^\ast(S^3/Q_{16};\Z/2\Z)\cong\Z/2\Z[\hat x, \hat y]/(\hat x\hat y + \hat y^2, \hat x^3)$,
and if $f\colon S^3/Q_{16}\to BQ_{16}$ is the classifying map of the principal $Q_{16}$-bundle $S^3\to S^3/Q_{16}$,
then $f^\ast\colon H^\ast(BQ_{16};\Z/2\Z)\to H^\ast(S^3/Q_{16};\Z/2\Z)$ is the quotient by $(p)$, i.e.\ it sends
$\hat x\mapsto\hat x$, $\hat y\mapsto\hat y$, and $p\mapsto 0$.
\end{lem}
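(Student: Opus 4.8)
The plan is to run the Serre spectral sequence of the fiber bundle $S^3 \to S^3/Q_{16} \to BQ_{16}$ associated to the universal cover; here the bundle projection is exactly the classifying map $f$ in the statement. Since $H^*(S^3;\Z/2\Z)$ is $\Z/2\Z$ concentrated in degrees $0$ and $3$, the $E_2$-page has only two nonzero rows, $E_2^{*,0}=H^*(BQ_{16};\Z/2\Z)$ and $E_2^{*,3}\cong H^*(BQ_{16};\Z/2\Z)$, and for degree reasons the only possibly nonzero differential is $d_4\colon E_4^{p,3}\to E_4^{p+4,0}$. By multiplicativity of the spectral sequence, $d_4$ vanishes on the $q=0$ row and hence on $E_4^{*,3}\cong E_4^{*,0}\otimes\iota$ it is multiplication by $d_4(\iota)\in H^4(BQ_{16};\Z/2\Z)$, where $\iota\in E_4^{0,3}$ is the fiber fundamental class.

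First I would compute $H^4(BQ_{16};\Z/2\Z)$ using \cref{Q16_coh}: with the relations $\hat y^2=\hat x\hat y$ and $\hat x^3=0$ every degree-$4$ monomial in $\hat x,\hat y$ vanishes, so $H^4(BQ_{16};\Z/2\Z)\cong\Z/2\Z$, generated by $p$. Next I would argue that $d_4\ne 0$: if it were zero, then $E_\infty=E_2$ and $H^*(S^3/Q_{16};\Z/2\Z)$ would be nonzero in arbitrarily high degrees, contradicting the fact that $S^3/Q_{16}$ is a closed $3$-manifold. Therefore $d_4(\iota)=p$, so $d_4$ is multiplication by the polynomial generator $p$. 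This map is injective in every degree (since $p$ is not a zero divisor in $\Z/2\Z[\hat x,\hat y,p]/(\hat x\hat y+\hat y^2,\hat x^3)$) and surjective onto all degrees $\ge 4$ (because $H^*(BQ_{16};\Z/2\Z)/(p)\cong\Z/2\Z[\hat x,\hat y]/(\hat x\hat y+\hat y^2,\hat x^3)$ is zero above degree $3$). Hence the $E_\infty$-page is concentrated in the row $q=0$, equal to $H^*(BQ_{16};\Z/2\Z)/(p)$.

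From this the conclusion follows: the edge homomorphism identifies $f^*\colon H^*(BQ_{16};\Z/2\Z)\to H^*(S^3/Q_{16};\Z/2\Z)$ with the quotient map $H^*(BQ_{16};\Z/2\Z)\twoheadrightarrow H^*(BQ_{16};\Z/2\Z)/(p)$, which is a ring homomorphism sending $\hat x\mapsto\hat x$, $\hat y\mapsto\hat y$, $p\mapsto 0$, and which is therefore an isomorphism of rings $H^*(S^3/Q_{16};\Z/2\Z)\cong\Z/2\Z[\hat x,\hat y]/(\hat x\hat y+\hat y^2,\hat x^3)$. As a consistency check one can compare the resulting Betti numbers $1,2,2,1$ with Poincaré duality and with $H_1(S^3/Q_{16};\Z)\cong Q_{16}^{\mathrm{ab}}\cong(\Z/2\Z)^{\oplus 2}$. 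The only real subtlety is the non-vanishing of $d_4$, which is forced by the finiteness of $H^*(S^3/Q_{16};\Z/2\Z)$; alternatively the whole statement is exactly the computation of Tomoda-Zvengrowski and can simply be cited. I would also note that the same argument applies verbatim to any spherical space form $S^3/\Gamma$, yielding $H^*(S^3/\Gamma;\Z/2\Z)\cong H^*(B\Gamma;\Z/2\Z)/(p_\Gamma)$ for the relevant four-dimensional periodicity class.
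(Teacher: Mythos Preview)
Your argument is correct. The paper does not actually prove this lemma; it simply cites Tomoda--Zvengrowski, so there is no ``paper's own proof'' to compare against. Your Serre spectral sequence argument for the fibration $S^3\to S^3/Q_{16}\xrightarrow{f} BQ_{16}$ is the standard and cleanest way to derive the result: the two-row structure forces everything through the single transgression $d_4(\iota)$, the finiteness of $H^*(S^3/Q_{16};\Z/2\Z)$ pins $d_4(\iota)=p$ (the unique nonzero degree-$4$ class once the relations $\hat y^2=\hat x\hat y$ and $\hat x^3=0$ kill all degree-$4$ monomials in $\hat x,\hat y$), and since $p$ is a polynomial generator the $E_5=E_\infty$-page collapses to the bottom row, giving the edge homomorphism $f^*$ as the quotient by $(p)$ with no extension ambiguity. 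The only point one might add for completeness is that the coefficient system is untwisted, which is automatic here since $H^3(S^3;\Z/2\Z)\cong\Z/2\Z$ admits no nontrivial $\pi_1$-action; otherwise your proof is self-contained and more informative than the bare citation.
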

\begin{proof}[Proof of \cref{prism_is_pinm_d}]
The relations in the cohomology ring in \cref{prism_coh_lem} imply that $\hat x^2\hat y = \hat x\hat y^2 = \hat
y^3$ is the nonzero element in the top degree mod $2$ cohomology of $S^3/Q_{16}$.

Recall that the extension
\begin{equation}
	\shortexact*[\alpha][\beta]{\Z/2\Z}{Q_{16}}{D_8}{}
\end{equation}
is classified by $w + x^2\in H^2(BD_8;\Z/2\Z)$, so $\beta^*\colon BD_8\to BQ_{16}$ sends $w(ED_8) + x(ED_8)^2\mapsto
0$.  Thus for the $D_8$-bundle $\beta^*ED_8\to BQ_{16}$, $w = \hat x^2$. This bundle is the fiber $\alpha\colon B\Z/2\Z \to
BQ_{16}$ of $\beta$, and when we pull back further along $f\colon S^3/Q_{16}\to BQ_{16}$, $\beta^*ED_{16}$ pulls
back to $\RP^3\to Q_{16}$. Therefore $w(\RP^3) = \hat x^2$. Likewise, $y\in H^1(BD_{16};\Z/2\Z)$ pulls back to $\hat
y\in H^1(S^3/Q_{16};\Z/2\Z)$, so
\begin{equation}
	\int_{S^3/Q_{16}} w(P)y(P) = \int_{S^3/Q_{16}} \hat x^2\hat y = 1.
	\qedhere
\end{equation}
\end{proof}
We will use this result in Section \ref{s_7dorange}.
\begin{prop}
\label{after_Smith}
$[S^3/Q_{16}\times T^2]\ne 0$  in $\Omega_5^{\Spin\text{-}Q_{16}} (\pt)$.
\end{prop}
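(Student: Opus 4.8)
The plan is to shear the problem into connective real $K$-theory and then detect the class with the Adams spectral sequence. Recall from Appendix~\ref{smith_GL} (an application of \cref{w2_twisted_cor} to the central extension $\Z/2\Z\to Q_{16}\to D_8$, whose class is $w_2(V)=w+x^2$) that a $\Spin\text{-}Q_{16}$ structure is the same datum as a $(BD_8,\,V\oplus\Det(V))$-twisted Spin structure. Hence, writing $X\coloneqq (BD_8)^{V+\Det(V)-3}$, \cref{shearing_PT} gives $\Omega_*^{\Spin\text{-}Q_{16}}(\pt)\cong\Omega_*^\Spin(X)$, and by the Anderson--Brown--Peterson decomposition~\cite{ABP67} (\cref{ABPthm}) $\Omega_k^{\Spin\text{-}Q_{16}}(\pt)\cong\ko_k(X)$ for $k\le 7$, since $\ko_{k-8}(X)$ and $\ko\ang 2_{k-10}(X)$ vanish for $X$ connective in this range. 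So it suffices to show $[S^3/Q_{16}\times T^2_p]\ne 0$ in $\ko_5(X)$.

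First I would pin down $[S^3/Q_{16}]$ itself. Its principal $D_8$-bundle $P=(\RP^3\to S^3/Q_{16})$, coming from the Spin-$Q_{16}$ structure of \cref{prism_is_pinm_d}, satisfies $\int_{S^3/Q_{16}} w(P)y(P)=1$. Since $w(P)y(P)$ is pulled back from $BD_8$ and hence is a genuine characteristic class of $\Spin\text{-}Q_{16}$ manifolds, this integral is a $\Z/2\Z$-valued bordism invariant; therefore $[S^3/Q_{16}]\ne 0$ in $\ko_3(X)$, and being detected by a mod $2$ cohomology class it lifts a nonzero element of Adams filtration $0$ in the Adams spectral sequence computing $\ko_*(X)$ (cf.~\cite[\S 8.4]{FH21}). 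Next I would compute the low-degree $\cA(1)$-module structure on $H^*(X;\Z/2\Z)$: by the Thom isomorphism and \cref{Steenrod_Thom}, with $w_1(V\oplus\Det V)=0$ and $w_2(V\oplus\Det V)=w+x^2$, the $\Sq^1$- and $\Sq^2$-actions are determined by those on $H^*(BD_8;\Z/2\Z)$ (namely $\Sq x=x+x^2$, $\Sq y=y+y^2$, $\Sq w=w+wx+w^2$), just as in the Spin-$D_8$ and Spin-$D_{16}$ computations. In particular I would identify the $\cA(1)$-module summand carrying the Thom-twisted class $U(wy)\in H^3(X;\Z/2\Z)$ — the summand detecting $S^3/Q_{16}$, via the restriction to the Thom spectrum of the prism manifold read off from \cref{prism_coh_lem}. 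Running the (small) Adams spectral sequence in degrees $\le 5$ then also recovers $\Omega_4^{\Spin\text{-}Q_{16}}(\pt)\cong\Z\oplus\Z/2\Z$, a useful consistency check.

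With the module structure in hand, the conclusion is a chase of the $h_1$-action. Since $\eta=[S^1_p]$ acts on $\ko_*(X)$ as $h_1$ on the Adams pages and as multiplication by $[S^1_p]$ on Spin bordism, $[S^3/Q_{16}\times T^2_p]=\eta^2\,[S^3/Q_{16}]$. It then remains to check that the filtration-$0$ class detecting $S^3/Q_{16}$ supports a nonzero $h_1^2$-multiple surviving to $E_\infty^{2,7}$: if the relevant summand is a free $\cA(1)$-summand, Margolis' theorem (\cref{Margolis_kills_differentials}) forbids all differentials into or out of it and all hidden extensions on it; otherwise one reads off the $h_1$-tower from the known $\Ext$ computations in~\cite[Figure 29]{BC18}, together with \cref{2eta_lemma} to rule out that divisibility by $2$ kills it. Hence $[S^3/Q_{16}\times T^2_p]\ne 0$ in $\ko_5(X)\cong\Omega_5^{\Spin\text{-}Q_{16}}(\pt)$. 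I would emphasize that this is precisely where the nonbounding Spin structure on $T^2$ is essential: had $T^2$ carried the bounding structure, $[T^2]=0$ in $\Omega_2^\Spin(\pt)$ and the product would trivially null-bord.

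The main obstacle I anticipate is the Adams bookkeeping — getting the $\cA(1)$-module decomposition of $H^*(X;\Z/2\Z)$ right through degree $5$ and confirming that the $h_1$-tower on the summand detecting $S^3/Q_{16}$ really survives to $E_\infty$. A self-contained alternative, sidestepping the spectral sequence, would be to compute a twisted-Dirac $\eta$-invariant of $S^3/Q_{16}\times T^2_p$ directly via the equivariant index theorem — the prism-manifold analogue of the lens-space formulas of Appendix~\ref{subapp:etalens} — and show its value in $\R/\Z$ is nonzero; in dimension $5$ such an $\eta$-invariant is automatically a bordism invariant (the index density on a $6$-manifold twisted by a flat bundle has no top-degree component), but one then needs a product/reduction formula expressing it through the $\eta$-data of the prism manifold and the mod $2$ index of $T^2_p$, which is again where the nonbounding structure enters.
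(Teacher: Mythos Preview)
Your proposal is correct and follows essentially the same route as the paper: shear to $\ko_*\big((BD_8)^{V+\Det(V)-3}\big)$, detect $[S^3/Q_{16}]$ in Adams filtration $0$ via $\int wy$ using \cref{prism_is_pinm_d}, and then propagate by $h_1^2$ to degree $5$. The only point where the paper is more explicit than your outline is the identification of the relevant summand: the class $U(wy)$ generates a $\Sigma^3\Z/2\Z$ summand (not a free $\cA(1)$ summand), so Margolis does not apply there; instead one checks via May--Milgram that the single $d_2$ in range goes from the $\Sigma^4 R_2$ tower into the $h_0$-tower of $\Sigma^3\Z/2\Z$ above filtration $2$, leaving the $h_1$- and $h_1^2$-multiples at $(4,1)$ and $(5,2)$ untouched on the $E_\infty$-page.
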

\begin{proof}
We use the Adams spectral sequence. Since we remain below degree $8$, there is no difference between Spin
bordism and $\ko$-homology. Similarly to the case of Spin-$D_{16}$ bordism, the fact that a Spin-$Q_{16}$
structure is equivalent to a $\big(BD_8, V\oplus\Det(V)\big)$-twisted Spin structure allows us to identify
$\Omega_k^{\Spin\text{-}Q_{16}} (\pt) \cong\Omega_k^\Spin \big((BD_8)^{V + \Det(V)-3}\big)$. In the cohomology of
the Thom spectrum $(BD_8)^{V + \Det(V)-3}$, $\Sq^1(U) = 0$ and $\Sq^2(U) = U(w+x^2)$; using this, there is an
$\cA(1)$-module isomorphism
\begin{equation}
\begin{aligned}
	H^*\big((BD_8)^{V + \Det(V)-3};\Z/2\Z \big) &\cong \textcolor{BrickRed}{\uQ} \oplus
		\textcolor{Green}{\Sigma^3\Z/2\Z} \oplus
		\textcolor{MidnightBlue}{\Sigma^4 R_2} \oplus
		\textcolor{Fuchsia}{\Sigma^5 J} \mathop{\oplus}\\
		& \enspace \phantom{\cong}
		\Sigma\cA(1) \oplus \Sigma\cA(1) \oplus
		 \Sigma^5 \cA(1) \oplus \Sigma^5\cA(1) \oplus P,
\end{aligned}
\end{equation}
where $P$ is concentrated in degrees $7$ and above. We draw this in \cref{Adams_for_spin_Q8}, left. We already know
the Ext of each of these summands already (except for $P$, which as usual does not matter in this proof), so we can
draw the $E_2$-page of the Adams spectral sequence in \cref{Adams_for_spin_Q8}, right.
\begin{figure}[h!]
\begin{subfigure}[c]{0.55\textwidth}
\begin{tikzpicture}[scale=0.6, every node/.style = {font=\tiny}]
	\foreach \y in {0, ..., 11} {
		\node at (-2, \y) {$\y$};
	}
	\Aone{0}{1}{$Ux$};
	\Aone{2.25}{1}{$Uy$};
	\Aone{5.5}{5}{$Ux^5$};
	\Aone{7.5}{5}{$Uy^5$};

	\begin{scope}[BrickRed]
		\SpanishQnMark{4.75}{0}{$U$}{star};
	\end{scope}
	\tikzptR{6}{3}{$Uwy$}{Green, regular polygon, regular polygon sides=3, minimum width=1ex};
	\begin{scope}[MidnightBlue]
		\Rtwo{9}{4}{$Uw^2$}{$Uw^2x$}{rectangle, minimum size=3.5pt};
	\end{scope}
	\begin{scope}[Fuchsia]
		\Joker{12.1}{5}{$Uw^2y$}{diamond};
	\end{scope}
\end{tikzpicture}
\end{subfigure}
\begin{subfigure}[c]{0.4\textwidth}
\begin{sseqdata}[name=AdamsQ16, Adams grading, classes=fill, xrange={0}{6}, yrange={0}{3}, scale=0.6,
	x label = {$\displaystyle{s\uparrow \atop t-s\rightarrow}$},
	x label style = {font = \small, xshift = -16.5ex, yshift=5.75ex}, >=stealth]
\class(1, 0)
\class(1, 0)
\class(5, 0)
\class(5, 0)
\begin{scope}[BrickRed, star]
	\class(0, 0)\AdamsTower{}
	\class(4, 1)\AdamsTower{}
	\class(5, 2)\structline(4, 1)(5, 2)
	\class(6, 3)\structline
\end{scope}
\begin{scope}[Green, regular polygon, regular polygon sides=3]
	\class(3, 0)\AdamsTower{}
	\class(4, 1)\structline(3, 0)(4, 1, -1)
	\class(5, 2)\structline
\end{scope}
\begin{scope}[draw=none, fill=none]
	\class(4, 0)\class(4, 0)
	\class(4, 2)
	\class(4, 3)\class(4, 4)
\end{scope}
\begin{scope}[MidnightBlue, rectangle]
	\class(4, 0)\AdamsTower{}
	\class(5, 0)
	\class(6, 1)\structline
\end{scope}
\class[Fuchsia, diamond](5, 0)
\d[gray]2(4, 0, -1)
\end{sseqdata}
\printpage[name=AdamsQ16, page=2]
\end{subfigure}
\caption{Left: the $\cA(1)$-module structure on $H^*\big((BD_8)^{V\oplus\Det(V)-3}; \Z/2\Z \big)$ in low degrees. The pictured
submodule contains all elements in degrees $6$ and below. Right: the $E_2$-page of the corresponding Adams spectral
sequence computing $\ko_*\big((BD_8)^{V\oplus\Det(V)-3}\big)$; in \cref{after_Smith}, we use this to study
Spin-$Q_{16}$ bordism.}
\label{Adams_for_spin_Q8}
\end{figure}
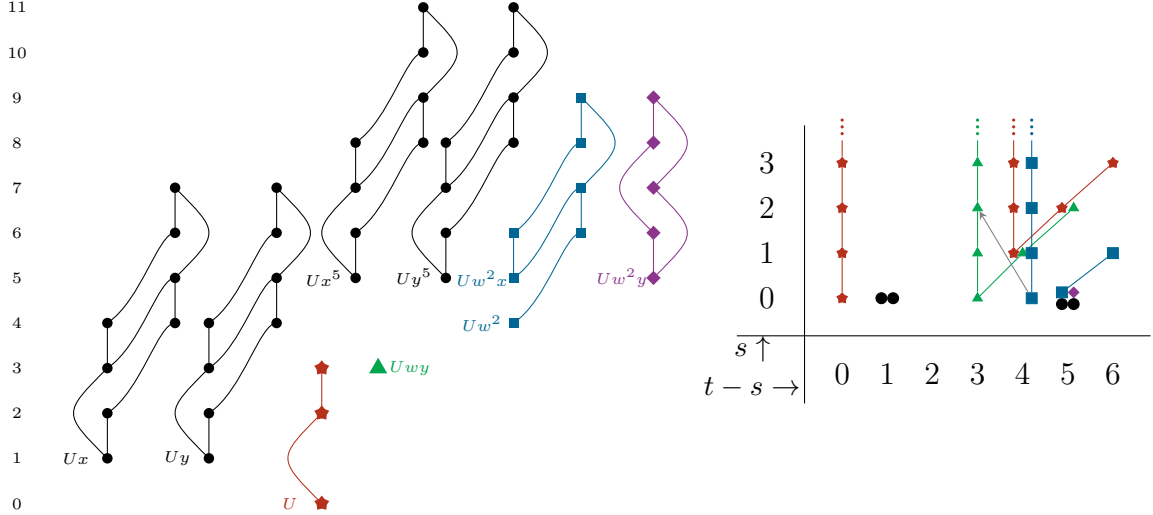

In the range depicted, most differentials vanish either for degree reasons or by $h_i$-equivariance. The exceptions
are the differentials from the red and blue towers in the $4$-line to the green tower in the $3$-line. Like in the
case of Spin-$\Z/8\Z$ bordism that we treated in Section \ref{ss:spin_Z8}, the May-Milgram theorem implies there is a $d_2$
from the blue tower to the green tower.

Therefore we learn that the green triangle $\textcolor{Green}{\Z/2\Z}$ summands in $E_2^{0,3}$, $E_2^{1,5}$, and
$E_2^{2,7}$ all survive to the $E_\infty$-page. They are linked by $h_1$-actions, so we have a diagram
\begin{equation}
\label{greens_and_etas}
\begin{tikzcd}
	{\textcolor{Green}{\Z/4\Z}} & {\textcolor{Green}{\Z/2\Z}} & {\textcolor{Green}{\Z/2\Z}} \\
	{\Omega_3^{\Spin\text{-}Q_{16}} (\pt)} & {\Omega_4^{\Spin\text{-}Q_{16}} (\pt)} & {\Omega_5^{\Spin\text{-}Q_{16}} (\pt) \,,}
	\arrow["\eta", two heads, from=1-2, to=1-3]
	\arrow[hook, from=1-2, to=2-2]
	\arrow["\eta", two heads, from=1-1, to=1-2]
	\arrow["\cong", from=1-1, to=2-1]
	\arrow[hook, from=1-3, to=2-3]
	\arrow["\eta", from=2-1, to=2-2]
	\arrow["\eta", from=2-2, to=2-3]
\end{tikzcd}\end{equation}
with the takeaway that $\eta^2\colon\Omega_3^{\Spin\text{-}Q_{16}}(\pt)\to\Omega_5^{\Spin\text{-}Q_{16}}(\pt)$ is nonzero
precisely on the odd elements of $\Omega_3^{\Spin\text{-}Q_{16}} (\pt)\cong\textcolor{Green}{\Z/4\Z}$. The image of a
generator in the $E_\infty$-page is in Adams filtration zero, where it corresponds to the mod $2$ cohomology class
$Uwy$, which has the convenient consequence that $\int wy$ is also nonzero exactly on the odd elements of
$\Omega_3^{\Spin\text{-}Q_{16}} (\pt)$. So if we can show that this integral is nonzero on $S^3/Q_{16}$, then we also
deduce that $\eta^2\cdot[S^3/Q_{16}]$, i.e.\ $[S^3/Q_{16} \times T_p^2]$, is nonzero, which is what we want to prove.
By \cref{prism_is_pinm_d}, $S^3/Q_{16}$ is an odd element of $\Omega_3^{\Spin\text{-}Q_{16}} (\pt)$, so
$[S^3/Q_{16} \times T_p^2]\ne 0$ in $\Omega_5^{\Spin\text{-}Q_{16}} (\pt)$, as we wanted.
\end{proof}
Our Adams computation in the proof of \cref{after_Smith} leaves open the possibility of a hidden extension in
degree $5$. We will show that there is no such hidden extension, so that $\Omega_5^{\Spin\text{-}Q_{16}}$ is a
direct sum of copies of $\Z/2\Z$; we stated but did not prove this in Section \ref{s_7dorange} as \cref{split_Q16_5}.
\begin{prop}
\label{quater_hidden}
$\Omega_5^{\Spin\text{-}Q_{16}}\cong (\Z/2\Z)^{\oplus 6}$.
\end{prop}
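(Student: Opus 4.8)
The starting point is the computation carried out in the proof of \cref{after_Smith}: under the shearing identification $\Omega_*^{\Spin\text{-}Q_{16}}(\pt)\cong\ko_*\big((BD_8)^{V\oplus\Det(V)-3}\big)$ and the Adams spectral sequence of \cref{Adams_for_spin_Q8}, the $E_\infty$-page in topological degree $5$ consists of six copies of $\Z/2\Z$: two in filtration $0$ coming from the two $\Sigma^5\cA(1)$ summands of $H^*\big((BD_8)^{V\oplus\Det(V)-3};\Z/2\Z\big)$, two more in filtration $0$ coming from the $\textcolor{MidnightBlue}{\Sigma^4 R_2}$ and $\textcolor{Fuchsia}{\Sigma^5 J}$ summands, and two in filtration $2$ coming from $\textcolor{BrickRed}{\uQ}$ and $\textcolor{Green}{\Sigma^3\Z/2\Z}$; there are no differentials into or out of this degree for degree reasons, as already noted. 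Hence $\Omega_5^{\Spin\text{-}Q_{16}}(\pt)$ is an iterated extension of six copies of $\Z/2\Z$, and the plan is to show every such extension splits. Equivalently, it suffices to show that multiplication by $2$ is the zero map on $\ko_5\big((BD_8)^{V\oplus\Det(V)-3}\big)$: a finite abelian group built from $\Z/2\Z$'s on which $2=0$ must be $(\Z/2\Z)^{\oplus 6}$.

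The cleanest route to $2=0$ is the factorization of multiplication by $2$ on $\ko$-theory through $\ku$: by \cref{ku_remark} the composite $\ko\xrightarrow{c}\ku\xrightarrow{b}\Sigma^2\ku\xrightarrow{R}\ko$ is multiplication by $2$, so it is enough to prove that the Bott map $b$ vanishes on $\ku_5\big((BD_8)^{V\oplus\Det(V)-3}\big)$. (For several of the six summands one can instead argue more cheaply — Margolis' theorem, in the form of \cref{Margolis_kills_differentials}, splits off the two $\Sigma^5\cA(1)$ pieces and bars them from hidden extensions, and the $\eta$-divisibility criterion \cref{2eta_lemma} handles those $E_\infty$-classes supporting a non-trivial $h_1$-action into degree $6$; the genuinely dangerous one is the $\textcolor{Green}{\Sigma^3\Z/2\Z}$ summand, whose degree-$5$ class lies in filtration $2$ and carries no $h_1$-action since $h_1^3=0$, so it could a priori be twice some class. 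But the $\ku$ argument settles this summand and all the others at once.)

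The main step, then, is to compute $\ku_5\big((BD_8)^{V\oplus\Det(V)-3}\big)$ and check that Bott acts trivially on it. Since $V\oplus\Det(V)$ has $w_1=0$, the integral Thom isomorphism identifies the homology of this Thom spectrum with that of $BD_8$, so the $\ku$-homology is accessible either from the Atiyah--Hirzebruch spectral sequence (input: the known integral homology of $BD_8$) or, more efficiently, from the Adams spectral sequence over $E(1)=\Lambda(Q_0,Q_1)$, whose input is the restriction to $E(1)$ of the $\cA(1)$-module structure on $H^*\big((BD_8)^{V\oplus\Det(V)-3};\Z/2\Z\big)$ already determined in the proof of \cref{after_Smith}. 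In the $\ku$ Adams spectral sequence the Bott map is multiplication by the periodicity class $v\in\Ext_{E(1)}^{1,3}(\Z/2\Z,\Z/2\Z)$, so one only needs to verify that $v$ annihilates the portion of the $E_2$-page (which collapses in this range) sitting in topological degree $5$, and that there is no hidden $v$-extension there — both routine once the $E(1)$-module is written out.

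I expect the main obstacle to be purely computational: assembling the $E(1)$-module structure and running the Ext computation far enough to see the degree-$5$ behavior of the Bott class. There is no conceptual difficulty beyond what has already appeared in the paper, but the bookkeeping for the $\ku$-homology of this Thom spectrum (equivalently, of $BD_8$ with the appropriate twist) is the step most likely to require care.
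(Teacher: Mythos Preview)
Your overall strategy matches the paper's: factor $2 = R\circ b\circ c$ through $\ku$ and use the Adams spectral sequence over $\cE(1)$ to study $\ku_*\big((BD_8)^{V\oplus\Det(V)-3}\big)$. The gap is in the specific claim you plan to prove. You assert it is enough to show that $b$ vanishes on all of $\ku_5(X_8')$, but this is false. The $\cE(1)$-module decomposition yields $\ku_5(X_8')\cong(\Z/2\Z)^{\oplus 5}$: four summands come from $\Sigma^5\cE(1)$ pieces (these form a subspace $\mathcal S$ on which $b$ does vanish, by the $\ku$-version of Margolis' theorem), but the fifth summand comes from the very $\textcolor{Green}{\Sigma^3\Z/2\Z}$ you flagged as dangerous. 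That green class is \emph{$b$-periodic}, not $b$-torsion: on $\Ext_{\cE(1)}(\Z/2\Z)\cong\Z/2\Z[h_0,v_1]$ the element $v_1$ acts freely, and one can verify (as the paper does) that its image in $\KU_5(X_8')$ is nonzero by chasing the commutative square relating $i_c,i_r$ and $R$, using that $R$ sends the green class to $2x\in\ko_3(X_8')\cong\Z/4\Z$ and that $2x$ is $w$-periodic. So your proposed verification that ``$v$ annihilates the portion of the $E_2$-page sitting in topological degree $5$'' will fail precisely on this class.

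The paper's fix is not to strengthen the vanishing but to restrict it: rather than $b=0$ on $\ku_5$, one shows $b\circ c=0$. The extra ingredient is the $\eta$--$c$--$R$ long exact sequence~\eqref{etaCR}, which identifies $\mathrm{Im}(c\colon\ko_5\to\ku_5)$ with $\ker(R\colon\ku_5\to\ko_3)$. The paper then proves $\ker(R)=\mathcal S$ exactly: the inclusion $\mathcal S\subset\ker(R)$ follows from the $b$-torsion/$w$-periodic contradiction sketched above, and equality follows by counting dimensions via the long exact sequence. Once $\mathrm{Im}(c)=\mathcal S$, Margolis gives $b|_{\mathcal S}=0$, hence $b\circ c=0$ and $2=0$ on $\ko_5(X_8')$. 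In short, your outline is missing the step that pins down the image of complexification; without it the argument does not close.
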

\begin{proof}
We let $X_8'\coloneqq (BD_8)^{V\oplus\Det(V)-3}$ for readability.
The multiplication-by-$2$ map $2\colon\ko_k(X)\to\ko_k(X)$ factors through $\ku$-theory; specifically, as we
discussed in \cref{ku_remark}, $2 = R\circ b\circ c$.
We will show that $b\circ
c\colon\ko_5(X_8')\to\ku_7(X_8')$ vanishes, which implies multiplication by $2$ vanishes on $\ko_5(X_8')$. To do
this, we need to study $\ku_*(X_8')$ with the Adams spectral sequence.

The change-of-rings trick that enables one to work over $\cA(1)$ for $\ko$-theory has an analogue for $\ku$-theory.
The catalyst is Adams' calculation~\cite{Ada61} that as an $\cA$-module,
$H^*(\ku;\Z/2\Z)\cong\cA\otimes_{\cE(1)}\Z/2\Z$, where $\cE(1)$ is the algebra generated by $Q_0 \coloneqq \Sq^1$
and $Q_1 \coloneqq \Sq^2\Sq^1 + \Sq^1\Sq^2$. Using the change-of-rings theorem~\cite[Section 4.5]{BC18}, the Adams
$E_2$-page for $\ku_*(X)$ simplifies:
\begin{equation}
\begin{aligned}
	E_2^{s,t} &= \Ext_\cA^{s,t}\big(H^*(\ku\wedge X;\Z/2\Z), \Z/2\Z \big)\\
		&\cong \Ext_{\cA}^{s,t}\big(H^*(\ku;\Z/2\Z)\otimes_{\Z/2} H^*(X;\Z/2\Z), \Z/2\Z \big)\\
		&\cong \Ext_\cA^{s,t} \big(\cA\otimes_{\cE(1)} \Z/2\otimes_{\Z/2} H^*(X;\Z/2\Z), \Z/2\Z \big)\\
		&\cong\Ext_{\cE(1)}^{s,t}(H^*\big(X;\Z/2\Z), \Z/2\Z\big) \,.
\end{aligned}
\end{equation}
As $\cE(1)$ is an exterior algebra on two generators, this reformulation of the $E_2$-page is much easier to
calculate. This technique also appears in~\cite{Bay94, Deb21}.
\begin{lem}
There are isomorphisms of $\cE(1)$-modules
\begin{enumerate}
	\item $\textcolor{MidnightBlue}{R_2}\cong \uQ\oplus \Sigma \cE(1)$,
	\item $\textcolor{Fuchsia}{J}\cong\cE(1)\oplus\Sigma^2\Z/2\Z$, and
	\item $\cA(1)\cong\cE(1)\oplus \Sigma^2\cE(1)$.
\end{enumerate}
\end{lem}
One can prove this by directly checking how $Q_0$ and $Q_1$ act on these $\cA(1)$-modules, though this also appears
in~\cite[Section 4.4.3]{Deb21} (specifically in the proof of Theorem 4.53). Using this, there is an isomorphism of
$\cE(1)$-modules
\begin{equation}
\label{E1_decomp}
\begin{aligned}
	H^*(X_8';\Z/2\Z) \cong\, &\textcolor{BrickRed}{\uQ} \oplus
		\textcolor{Green}{\Sigma^3\Z/2\Z} \oplus
		\textcolor{MidnightBlue}{\Sigma^4\uQ} \oplus
		\Sigma\cE(1)\oplus\Sigma\cE(1) \oplus \Sigma^3\cE(1)\oplus \Sigma^3\cE(1)\, \oplus\\
		&\Sigma^5\cE(1) \oplus\Sigma^5\cE(1) \oplus
		\textcolor{MidnightBlue}{\Sigma^5\cE(1)} \oplus \textcolor{Fuchsia}{\Sigma^5\cE(1)} \oplus P',
\end{aligned}
\end{equation}
where $P'$ is concentrated in degrees $7$ and above (hence is irrelevant for our computations). We draw this
decomposition in \cref{spin_Q8_E1}: straight lines denote $Q_0$-actions and dashed curved lines denote
$Q_1$-actions.
\begin{figure}[h!]
\centering
\begin{tikzpicture}[scale=0.6, every node/.style = {font=\tiny}]
	\foreach \y in {0, ..., 9} {
		\node at (-2, \y) {$\y$};
	}
	\begin{scope}[BrickRed]
		\EoneQnMark{0}{0}{$U$}{star};
	\end{scope}
	\begin{scope}[MidnightBlue]
		\EoneQnMark{0}{4}{$Uw^2$}{rectangle, minimum size=3.5pt};
		\Eone{11}{5}{$Uw^2x$}{rectangle, minimum size=3.5pt};
	\end{scope}
	\tikzptB{1}{3}{$Uwy$}{Green, regular polygon, regular polygon sides=3, minimum width=1ex};
	\Eone{3}{1}{$Ux$}{};
	\Eone{5}{1}{$Uy$}{};
	\Eone{6.875}{3}{$Ux^3$}{};
	\Eone{8.75}{3}{$Uy^3$}{};
	\Eone{2.3}{5}{$Ux^5$}{};
	\Eone{4.3}{5}{$Uy^5$}{};
	\begin{scope}[Fuchsia]
		\Eone{13}{5}{$Uw^2y$}{diamond};
	\end{scope}
\end{tikzpicture}
\caption{The $\cE(1)$-module structure on $H^*(X_8';\Z/2\Z)$ in low degrees. The pictured submodule contains all
elements in degrees $6$ and below. See the proof of \cref{quater_hidden} for more information.}
\label{spin_Q8_E1}
\end{figure}
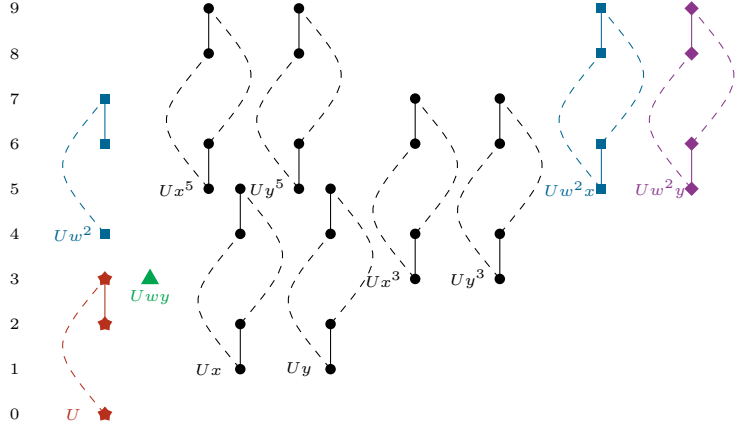

To determine the $E_2$-page of the Adams spectral sequence, we need to compute $\Ext_{\cE(1)}$ of the summands
in~\eqref{E1_decomp} as modules over $\Ext_{\cE(1)}(\Z/2\Z)$; the formal structure of the argument is the same as
over $\cA(1)$, but the specific computations are different, leading to $\ku$-homology rather than $\ko$-homology.

So, there is an isomorphism
\begin{equation}
        \Ext_{\cE(1)}^{s,t}(\Z/2\Z, \Z/2\Z)\cong\Z/2\Z[h_0, v_1]
\end{equation}
with $\abs{h_0} = (1, 1)$ and $\abs{v_1} = (1, 3)$~\cite[Example 4.5.6]{BC18}, and this algebra acts canonically on
$\Ext_{\cE(1)}^{s,t}\big(H^*(X;\Z/2\Z), \Z/2\Z \big)$, providing information about the $\ku_*$-action on $\ku_*(X)$.
Specifically, $h_0$-actions on the $E_\infty$-page lift to multiplication by $2$, and $v_1$-actions on the
$E_\infty$-page lift to an action by the Bott element $b\in\ku_2$. Just as in the $\cA(1)$ case, there can be
``hidden extensions,'' where $2$ or $b$ act non-trivially on $\ku_*(X)$, but in a way not detected by $h_0$ or $v_1$
acting on the $E_\infty$-page.

We want to know the Ext groups of $\uQ$ and $\cE(1)$ as modules over this algebra. These are known:
$\Ext_{\cE(1)}^{s,t}(\uQ, \Z/2\Z)$ is in~\cite[Section 3]{AP76}, and $\Ext_{\cE(1)}^{s,t}(\cE(1), \Z/2\Z)$ consists of a
single $\Z/2\Z$ in bidegree $(0, 0)$, since $\cE(1)$ is a free $\cE(1)$-module. Now we can draw the $E_2$-page of the
Adams spectral sequence for $\ku_*(X_8')$, which we do in \cref{the_ku_homology_sseq}, left. Like over $\cA(1)$,
$h_0$-actions are written with vertical lines; $v_1$-actions are drawn with lighter diagonal lines.
\begin{figure}[h!]
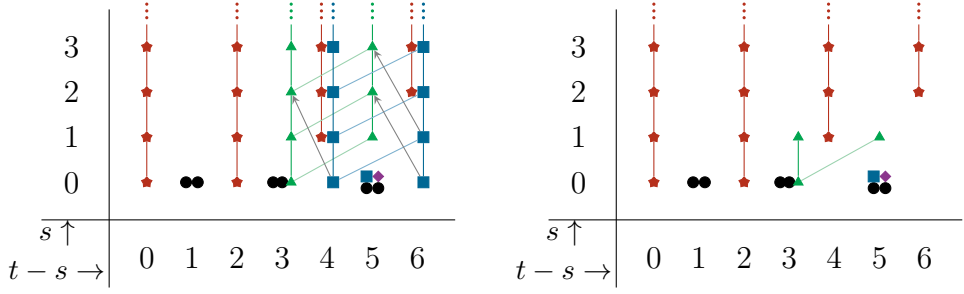

\centering
\begin{subfigure}[c]{0.4\textwidth}
\begin{sseqdata}[name=AdamskuQ16, Adams grading, classes=fill, xrange={0}{6}, yrange={0}{3}, scale=0.6,
	x label = {$\displaystyle{s\uparrow \atop t-s\rightarrow}$},
	x label style = {font = \small, xshift = -16.5ex, yshift=5.75ex}, >=stealth]
\class(1, 0)
\class(1, 0)
\class(3, 0)
\class(3, 0)
\class(5, 0)
\class(5, 0)
\begin{scope}[BrickRed, star]
	\class(0, 0)\AdamsTower{}
	\class(2, 0)\AdamsTower{}
	\class(4, 1)\AdamsTower{}
	\class(6, 2)\AdamsTower{}
\end{scope}
\begin{scope}[draw=none, fill=none]
	\class(3, 1)\AdamsTower{}
	\class(3, 1)\AdamsTower{}
	\class(4, 0)
	\class(6, 0)
	\class(6, 1)
	\class(3, 5)
	\class(5, 5)
\end{scope}
\begin{scope}[Green, regular polygon, regular polygon sides=3]
	\class(3, 0)\AdamsTower{}
	\class(5, 1)\AdamsTower{}
\end{scope}
\begin{scope}[Green!40!white]
	\structline(3, 0, -1)(5, 1, -1)
	\structline(3, 1, -1)(5, 2, -1)
	\structline(3, 2, -1)(5, 3, -1)
\end{scope}
\begin{scope}[MidnightBlue, rectangle]
	\class(4, 0)\AdamsTower{}
	\class(6, 0)\AdamsTower{}
	\class(5, 0)
\end{scope}
\begin{scope}[MidnightBlue!40!white]
	\structline(4, 0, -1)(6, 1, -1)
	\structline(4, 1, -1)(6, 2, -1)
	\structline(4, 2, -1)(6, 3, -1)
\end{scope}

\class[Fuchsia, diamond](5, 0)
\d[gray]2(4, 0, -1)(3, 2, -1)
\d[gray]2(6, 0, -1)(5, 2, -1)
\d[gray]2(6, 1, -1)(5, 3, -1)
\begin{scope}[draw=none]
	\d[gray]2(4, 1, -1)(3, 3, -1)
	\d[gray]2(4, 2, -1)(3, 4, -1)
	\d[gray]2(4, 3, -1)(3, 5, -1)
	\d[gray]2(6, 2, -1)(5, 4, -1)
	\d[gray]2(6, 3, -1)(5, 5, -1)
\end{scope}

\end{sseqdata}
	\printpage[name=AdamskuQ16, page=2]
\end{subfigure}
\begin{subfigure}[c]{0.4\textwidth}
	\printpage[name=AdamskuQ16, page=3]
\end{subfigure}
\caption{The Adams spectral sequence computing $\ku_*(X_8')$. Some $v_1$-actions are hidden for readability. Left:
the $E_2$-page. Right: the $E_3 = E_\infty$-page. Margolis' theorem lifts $E_\infty^{0,5}\cong (\Z/2\Z)^{\oplus 4}$
to a $(\Z/2\Z)^{\oplus 4}$ subspace $\mathcal S$ of $\ku_5(X_8')$.}
\label{the_ku_homology_sseq}
\end{figure}

There is potential for differentials, and indeed they are present. All of them are either zeroed out by Margolis'
theorem or are determined by the differentials from the $4$-line to the $3$-line using that the
$\Ext_{\cE(1)}(\Z/2\Z)$-action commutes with differentials. And there is a differential between the $h_0$-towers on
the $4$- and $3$-lines, which we deduce from the corresponding differential in the $\ko$-homology Adams
spectral sequence (\cref{Adams_for_spin_Q8}, right), or alternatively by comparing with the Atiyah-Hirzebruch
spectral sequence. Once we take this differential, there can be no others in range; we display the $E_\infty$-page
in \cref{the_ku_homology_sseq}, right. We see from this $E_\infty$-page that $\ku_5 \big((BD_8)^{V\oplus\Det(V)-3} \big)\cong
(\Z/2\Z)^{\oplus 5}$, with a four-dimensional subspace $\mathcal S$ coming from the $\big(\Sigma^5\cE(1)\big)^{\oplus 4}$
summand in $H^*\big((BD_8)^{V\oplus\Det(V)-3};\Z/2\Z \big)$ and Margolis' theorem.
\begin{lem}
$\mathcal S$ is precisely the kernel of
\begin{equation}
	R\colon\ku_5 \big((BD_8)^{V\oplus\Det(V)-3} \big)\longrightarrow\ko_3 \big((BD_8)^{V\oplus\Det(V)-3} \big) \,.
\end{equation}
\end{lem}
\begin{proof}
Recall $\ko_3(X_8')\cong \textcolor{Green}{\Z/4\Z}$, and let $x$ be a generator. Then $\eta(kx) \ne 0$ iff $k$ is
odd, which we discovered in the proof of \cref{after_Smith} (specifically see~\eqref{greens_and_etas}). Recall from
the proof of \cref{Z4_5_extn} that the maps $\eta$, $c$, and $R$ fit together into a long exact sequence; thus if
$R(z) \ne 0$, $R(z)\not\in\Im(\eta)$, so $R(z) = 2x$.

The connective cover map $\ku\to\KU$ induces a map $i_c\colon \ku_*(X)\to\KU_*(X)$ which is the
localization of the $\ku_*$-module $\ku_*(X)$ away from the ideal generated by the Bott element $b$. That is, for
every $x\in\ku_*(X)$, if $b^mx = 0$ for some $m$ , $i_c(x) = 0$; if $b^mx$ is never zero, then $i_c(x)\ne 0$, and
$i_c(x)$ is divisible by arbitrarily high powers of $b$. In the former case, $x$ is called \term{$b$-torsion}; in
the latter case, $x$ is called \term{$b$-periodic}. An analogous statement is true with
$i_r\colon\ko_*(X)\to\KO_*(X)$ in place of $i_c$; the Bott element $w$ in $\ko_*$ is the image of the
Bott manifold $B$ under the Atiyah-Bott-Shapiro map $\MTSpin\to\ko$.

The class $2x\in\ko_3(X_8')$ is $w$-periodic. One way to see this is to lift to its preimage in Spin bordism, which is
represented by $2S^3/Q_{16}$, as we showed in the proof of \cref{after_Smith}. This manifold is detected by some
kind of twisted $\eta$-invariant. $w$-periodicity of this class is equivalent to $B^m\times
(2S^3/Q_{16})$ being nonbounding for all $m$, which one can prove by computing the same $\eta$-invariant on
that class and using the factorization $\eta(M\times N) = \mathrm{Index}(M)\eta(N)$ when $m$ is
$4\ell$-dimensional. Thus $i_r(2x)\ne 0$. If $z\in\mathcal S$, $bz = 0$ by Margolis' theorem,\footnote{There is a
subtlety here that we want to be careful about.  Margolis proved that free $\cA$-module summands in $H^*(X;\Z/2\Z)$
lift to split copies of $H\Z/2\Z$ off of $X$ as a spectrum, which in particular prevents non-trivial differentials and
hidden extensions involving those $H\Z/2\Z$ factors in the Adams spectral sequence. In particular,
$\pi_*(\Sph)_2^\wedge$ acts trivially on the copies of $\Z/2\Z$ in $\pi_*(X)$ coming from these $H\Z/2\Z$ summands.
Using the change-of-rings theorem, one learns that free $\cE(1)$-module summands in $H^*(X;\Z/2\Z)$ correspond to
$H\Z/2\Z$ summands in $\ku\wedge X$, but a priori we do not know that this splitting is as $\ku$-modules, only as
spectra; in principle $b\in\ku_2$ could act non-trivially. \label{marg_ku_footnote}

Fortunately, though, this problem does not occur: $b$ acts trivially on the $\Z/2\Z$ summands in $\ku_*(X)$ coming
from free $\cE(1)$-submodules of $H^*(X;\Z/2\Z)$. A reference for this is Bruner-Greenlees~\cite[Section 2.1]{BG03}.} so
$i_c(z) = 0$.

Suppose $z\in\mathcal S$ is such that $R(z)\ne 0$; then $R(z) = 2x$. The map $R$ is the connective cover of the
realification map $R\colon\KU\simeq\Sigma^2\KU\to\KO$, implying there is a commutative diagram
\begin{equation}
\begin{gathered}
\begin{tikzcd}
	{\ku_5(X_8')} & {\ko_3(X_8')} \\
	{\KU_5(X_8')} & {\KO_3(X_8').}
	\arrow["R", from=2-1, to=2-2]
	\arrow["R", from=1-1, to=1-2]
	\arrow["{i_c}"', from=1-1, to=2-1]
	\arrow["{i_r}"', from=1-2, to=2-2]
\end{tikzcd}
\end{gathered}
\end{equation}
Traveling along the upper right, $i_r(R(z)) = i_r(2x)\ne 0$, but along the lower left, $R(i_c(z)) = 0$, which is a
contradiction, so $\mathcal S\subset\ker(R)$. To turn this into an equality, use the $\eta$, $c$, $R$ long exact
sequence to show $\ker(R)$ is a four-dimensional $\Z/2\Z$-vector space.
\end{proof}
By exactness of the $\eta$, $c$, $R$ sequence, we conclude $\Im(c\colon \ko_5(X_8')\to\ku_5(X_8')) = \mathcal S$.
Finally, Margolis' theorem (or a slight strengthening, as we discussed in Footnote~\ref{marg_ku_footnote}) implies
$b$ acts trivially on classes in $\ku_*(X_8')$ coming from free $\cE(1)$-module summands in cohomology. Therefore
$b\circ c = 0$ on $\ko_5(X_8')$ as promised, so multiplication by $2$ is the zero map on $\ko_5(X_8')$.
\end{proof}
%
%
%
%
This proof, elaborate as it was, is the simplest argument we found. We would be interested in learning of a simpler
way to address this extension question.

\end{appendix}

\newpage
\bibliographystyle{utphys}
\bibliography{References}

\end{document}